\numberwithin{equation}{section}
\def\beq{\begin{eqnarray}}\def\eeq{\end{eqnarray}}
\def\be{\begin{equation}}\def\ee{\end{equation}}
\def\x{\xi}
\def\p{\pi}
\def\g{\gamma}
\def\r{\rho}
\def\s{\sigma}
\def\m{\mu}
\def\n{\nu}
\def\a{\alpha}
\def\e{\epsilon}
\def\ve{\varepsilon}
\def\k{\kappa}
\def\b{\beta}
\def\d{\delta}
\def\f{\phi}
\def\vf{\varphi}
\def\th{\theta}
\def\t{\tau}
\def\D{\Delta}
\def\G{\Gamma}
\def\l{\lambda}
\def\pd{\partial}
\def\ma{{\mathcal{A}}}
\def\mb{{\mathcal{B}}}
\def\mc{{\mathcal{C}}}
\def\md{{\mathcal{D}}}
\def\me{{\mathcal{E}}}
\def\mf{{\mathcal{F}}}
\def\mg{{\mathcal{G}}}
\def\mh{{\mathcal{H}}}
\def\mi{{\mathcal{I}}}
\def\mj{{\mathcal{J}}}
\def\mk{{\mathcal{K}}}
\def\ml{{\mathcal{L}}}
\def\mm{{\mathcal{M}}}
\def\mn{{\mathcal{N}}}
\def\mo{{\mathcal{O}}}
\def\mmp{{\mathcal{P}}}
\def\mr{{\mathcal{R}}}
\def\ms{{\mathcal{S}}}
\def\mt{{\mathcal{T}}}
\def\mw{{\mathcal{W}}}
\def\fh{\mathfrak{H}}
\def\G{\Gamma}
\def\w{\omega}
\def\zt{\tilde{z}}
\def\immp{\mathcal{A}\left(s_{1}^{\prime} ; s_{2}^{(+)}\left(s_{1}^{\prime}, a\right)\right)}
\def\imm{\mathcal{A}\left(s_{1} ; s_{2}^{(+)}\left(s_{1}, a\right)\right)}
\def\imeas{ \int_{\frac{2 \mu}{3}}^{\infty} \frac{d s_{1}^{\prime}}{s_{1}^{\prime}}}
\newcommand{\Abs}[1]{\left| #1 \right|}
\newcommand{\floor}[1]{\lfloor #1 \rfloor}
\newcommand{\eq}[1]{eq.\eqref{#1}}
\def\spart{S_\ell}
\def\tpart{T_\ell}
\def\rb{\mathbb{R}}
\def\tell{\tilde{\ell}}
\def\mW{{\mathcal{W}}}
\def\nn{\nonumber}
\def\lan{\langle}
\def\ran{\rangle}
\def\Mack{\widehat{P}}
\def\dphi{\D_\f}
\def\lsumt{\sum_{\substack{\ell\\ \ell~\text{even}}}^L}
\def\tenp{\otimes}
\newcommand{\bea}{\begin{eqnarray}}
	\newcommand{\eea}{\end{eqnarray}}
\newcommand{\ba}{\begin{equation}\begin{aligned}}
		\newcommand{\ea}{\end{aligned}\end{equation}}
	\newcommand{\bes}{\begin{equation*}}
		\newcommand{\ees}{\end{equation*}}
\newcommand{\Z}{\mathbb{Z}}
\newcommand{\tyng}{\tiny\yng}
\newtheorem{theorem}{Theorem}[section]
\theoremstyle{definition}
\newtheorem{corollary}{Corollary}[theorem]
\newtheorem{lemma}[theorem]{Lemma}
\begin{document}
	\pagenumbering{gobble}	
	\pagestyle{empty}	
	\begin{center}
	
	\null 
	{\Huge\textbf{\textcolor{Coral4}{Explorations in the Space of S-Matrices}}}
\vspace{3.6cm}
	
	\Large{\textit{A thesis submitted for the degree of} \\ \textit{Doctor of Philosophy} \\\textit{in the Faculty of Sciences}}
	
	\vspace{2 cm}
	
	{\LARGE \textcolor{DodgerBlue3}{\bf Parthiv Haldar}}
	
	\vspace{3cm}
	\begin{figure}[h]
		\centering
		\includegraphics[scale=0.1]{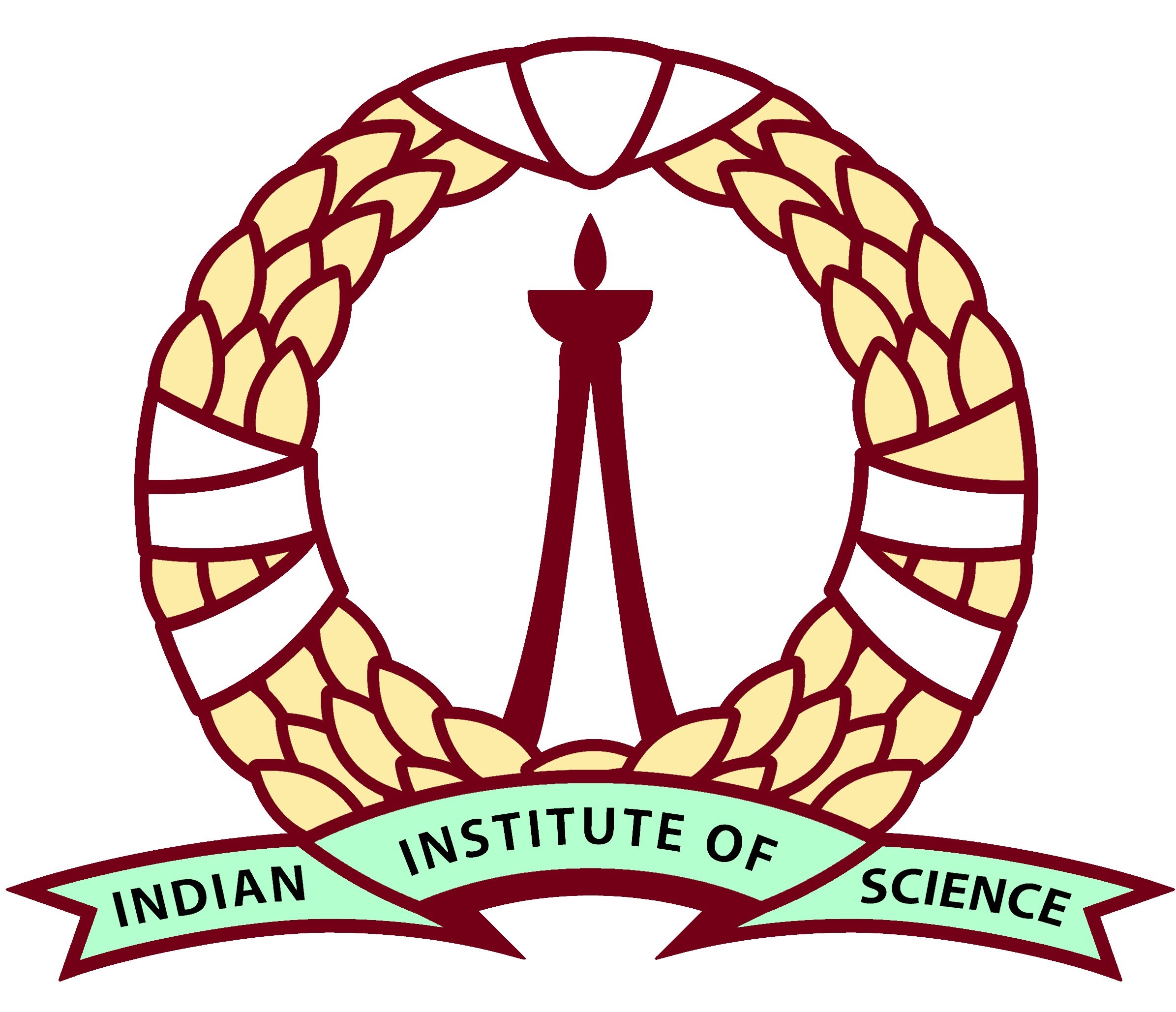}
	\end{figure}
	
	\vspace{3cm}
	{\Large \textit{Centre for High Energy Physics\\
			Indian Institute of Science \\
			\ Bangalore - 560012. India. \\}}

\end{center}

\pagebreak

\pagestyle{empty}

\begin{flushleft}
\textbf{{\LARGE Declaration}}
\end{flushleft}

\vskip 2 cm

{\large I hereby declare that this thesis ``Explorations in the Space of S-Matrices" is based on my own research work, that I carried out with my collaborators in the Centre for High Energy Physics, Indian Institute of Science, during my tenure as a PhD student under the supervision of Prof. Aninda Sinha. The work presented here does not feature as the work of someone else in obtaining a degree in this or any other institute. Any other work that may have been quoted or used in writing this thesis has been duly cited and acknowledged.

	\vskip 2cm

	\ \\
	Date:   \hspace{10.5cm} Parthiv Haldar
	
	\vskip 1.5cm

	\ \\
	Certified by:
	
	\vskip 1.5cm
	
	\ \\
	Prof. Aninda Sinha\\
	Centre for High Energy Physics\\
	Indian Institute of Science\\
	Bangalore: 560012\\
	India}

\pagebreak

\

\vskip 1cm

\begin{flushright}
	{\color{Sienna4}\textbf{\LARGE Acknowledgement}}	
\end{flushright}
\vspace{0.1 pt}
{\color{RoyalBlue3}\hrule}
\vspace{0.07 cm}
{\color{RoyalBlue3}\hrule} 
\vspace{0.5 cm}	
First and foremost, I would like to express my gratitude towards my advisor \emph{Prof. Aninda Sinha}. Without his constant encouragement and support, it would not have been possible to complete this journey. His encouragement has provided me with the courage and conviction to explore less traversed paths in my journey for knowledge. Most importantly, he kept his unwavering trust and belief in me when I myself lost them. I couldn't thank him more for being the emotional support during the most vulnerable moments of my life! I count myself luckiest because I went for thesis advisor  but at the end of $5$ years \emph{Aninda} became much more than just thesis advisor!\par

 I was fortunate to work with outstanding collaborators Subham Dutta Chowdhury, Kallol Sen, Kausik Ghosh, Parijat Dey, Prashanth Raman and Ahmadullah Zahed. Their zeal for explorations inspired me to push myself towards betterment.

 I am indebted to some of the excellent courses that I was part of during my Integrated PhD programme. In particular, the quantum mechanics courses taught by \emph{Prof. Diptiman Sen} and \emph{Prof. Rohini Godbole} had been my asset during my entire PhD journey. The long journey at IISc would not have been possible without the companionship and friendship of \emph{Sahel Dey}, \emph{Prerana Biswas}, \emph{Sayantan Ghosh}.  The many a precious moments I made with you kept me going even in the darkest hours! Thank you guys for bearing with me for so long!

 Last but not the least, without my parents' unrelenting support I would not have dared to walk the path! They stood beside me in my every decision, right and wrong! In their many sacrifices, they gave wings to my dream!

	\newpage
	\pagestyle{empty}
	\vspace{5 cm}
	\begin{tikzpicture}
		\node at (0,0)
		{\includegraphics[scale=0.25]{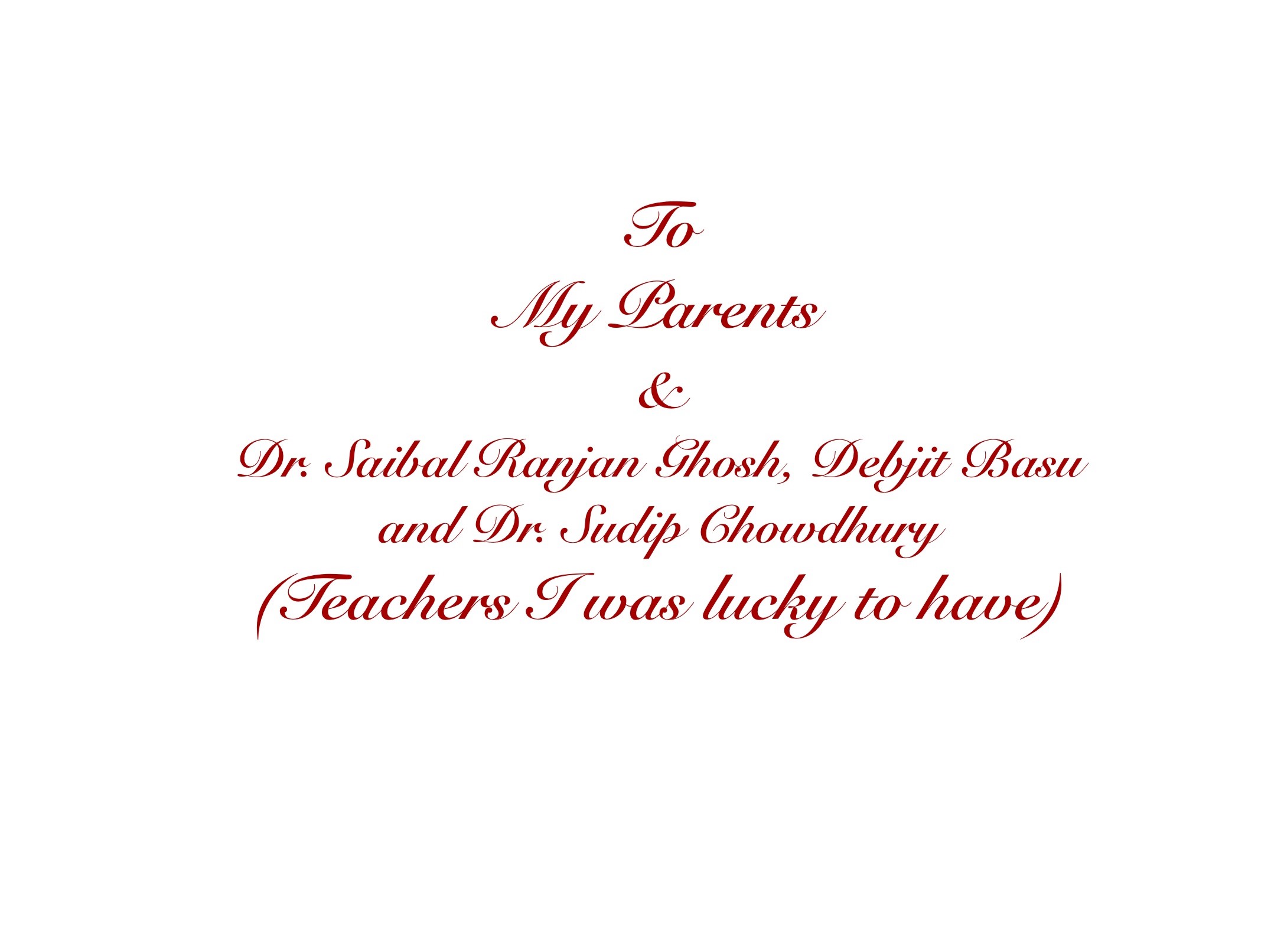}};
		\node at (0,12){};
		\node at (25,0){};
	\end{tikzpicture}
	
	
	\newpage
	\pagestyle{empty}
\chapter*{\bf Synopsis} 
The quantum theory of relativistic particles is centred on studying quantum amplitudes for the scattering of quantum particles. Such amplitudes can be viewed as elements of an abstract matrix called S-matrix. Thus S-matrix is one of the most important observables in the theory of quantum dynamics of relativistic particles. The traditional way of constructing S-matrix is via introducing non-observable entities called quantum fields with infinite degrees of freedom. Quantum field theory succeeds impressively in this endeavour. However, due to the infinite number of degrees of freedom of the quantum fields, one has to deal with various infinities along the way. While there are well developed mathematical techniques called renormalization to tackle such infinities and extract finite observable results, it would be impressive if we could find ways to work around such infinities. Heisenberg first proposed such a way of business built upon entirely in terms of S-matrix guided by fundamental physical principles of unitarity, relativistic causality and relativistic covariance and other symmetry principles as warranted. In a modern revival of that idea, S-matrix can be used to define an abstract theory space. Thus, we can explore the space of physically consistent theories by studying the consequences of the aforementioned physical principles on the S-matrix. This thesis is devoted to such explorations for S-matrix amplitudes for $2-2$ scattering of identical particles. 

\paragraph{} In the first part of
the thesis, we discuss a novel mathematical way of exploring the space of S-matrices
using the tools from the mathematical field of geometric function theory (GFT). GFT is the study of geometric properties of complex analytic functions regarded as mappings between complex planes. We use two particular types of functions that play prominent roles in GFT: univalent functions and typically real functions. Univalent functions are analytic functions which are also injective. Geometrically, univalent functions are conformal mappings. These functions are known to satisfy various bounding relations. The most famous of them is  \emph{de Branges’ theorem}, also known as 
\emph{Bieberbach conjecture}, which bounds the Taylor coefficient of univalent functions. Not only the Taylor coefficients but the functions themselves are also bounded described by \emph{Koebe growth theorem}. The other kind, namely the typically real functions, are functions with real Taylor coefficients. These functions have positive imaginary part in the upper half-plane and negative imaginary part in the lower half-plane. These functions also satisfy interesting bounds similar to those for the univalent functions. However, the connection between these functions and the scattering amplitudes is not obvious. A novel dispersive representation of scattering amplitudes enables us to unearth these connections with univalent and typically real functions. 

Dispersion relation and crossing symmetry are two important consequences of the causality principle. The usual fixed transfer dispersion relation is not manifestly crossing symmetric, and therefore, the crossing symmetry has to be imposed separately, which is quite a non-trivial task. It would be nice to have a \emph{manifestly crossing symmetric dispersive representation}. Auberson and Khuri came up with such a dispersive representation with a kernel which is manifestly crossing symmetric. In the process of deriving this dispersion relation, the Mandelstam variables are parametrized as rational functions of a complex variable  $\tilde{z}$ and another parameter $a$. When expressed as functions of the Mandelstam variables, $\tilde{z}$ and $a$ are crossing symmetric. The dispersion kernel turns out to be a rational function of $\tilde{z}$ and $a$. We establish that the kernel is univalent as well as typically real inside unit disc $|\tilde{z}|<1$ when $a$ takes value in a certain real interval. Further, using unitarity and typically realness of the crossing symmetric dispersion kernel, it turns out that the scattering amplitude itself is a typically real function inside the unit disc in complex $\tilde{z}$ plane for the same real values of $a$. We would like to emphasize that the crossing symmetric dispersion relation is crucial for establishing this connection. Now employing the bounding properties of univalent and typically real functions as mentioned above, we obtain bounds on Taylor coefficients of the expansion of the scattering amplitude about $\tilde{z}=0$ as well as upper and lower bounds on the amplitude itself. Physically,  expanding the amplitude about $\tilde{z}=0$ corresponds to having a low energy expansion in a basis of  crossing symmetric functions. These expansions can be interpreted as EFT amplitude, and the corresponding Taylor coefficients can be expressed as ratios of Wilson coefficients of the EFT. Thus we obtain \emph{$2-$sided bounds on ratios of Wilson coefficients}. Such $2-$sided bounds have been obtained before numerically. Our analysis sheds light on a \emph{concrete mathematical origin} of such bounds for the first time, which was not there before. We apply this analysis to scattering amplitudes of identical massive bosons, pions, effective string amplitudes, $4-$photon amplitude and $4-$graviton amplitudes. 

\paragraph{} In the second part of the thesis we turn our attention to holographic S-matrices. The conjectural $AdS/CFT$ holography provides a way to construct flat space scattering amplitudes from the Mellin amplitudes of a conformal field theory (CFT) by taking a large radius limit of the dual $AdS$ space. Various analytic properties of flat space scattering amplitudes  are encoded in corresponding properties of the CFT Mellin amplitude. Flat space $2-2$ scattering amplitudes are known to satisfy high energy bounds called the Froissart-Martin bound which follows from axiomatic analyticity and unitarity properties of the S-matrix.  Froissart-Martin bound is one of the robust consistency tests for a flat space scattering amplitude. Therefore if a holographic construction of the S-matrix is to work, one should be able to obtain a systematic derivation of the Froissart-Martin bound starting with $4-$point Mellin amplitude for a holographic CFT. We provide such a derivation in the second part of the thesis. We find that our holographic derivation gives the exact Froissart-Martin bound in $4$ spacetime dimensions, while in greater spacetime dimensions, we get weaker bounds. We attempt to argue the possible reason for this behaviour.

\newpage 
\pagestyle{empty}
\begin{flushright}
{\color{Firebrick4} \Large \emph{ Great things are not accomplished by those who yield to\\
 trends and fads and popular opinion.}\\}
\vspace{0.5 cm}
{\color{SlateBlue4} \Large \emph{Jack Kerouac}}
	
\end{flushright}

\newpage 
	\pagestyle{empty}
	\tableofcontents
	
	\chapter*{\textbf{{ Publications and Preprints}}}
	\addcontentsline{toc}{chapter}{{{ Publications and Preprints}}}

	\begin{flushleft}
		{\textbf{\Large The thesis is based on the following works}}
	\end{flushleft}
	\vspace{0.3 cm}
	\begin{enumerate}
		
		\item  
		({\color{Blue3}Chapter 3 and Chapter 4}) {\bf Quantum field theory and the Bieberbach conjecture},\\
		P.~Haldar, A.~Sinha and A.~Zahed,\\
		SciPost Phys. \textbf{11}, 002 (2021), [arXiv:2103.12108 [hep-th]].
		\vspace{0.5 cm}
		
		\item 
		({\color{Blue3}Chapter 5}) \textbf{Crossing Symmetric Spinning S-matrix Bootstrap: EFT bounds},\\
		S.~D.~Chowdhury, K.~Ghosh, P.~Haldar, P.~Raman and A.~Sinha,\\
		Submitted to SciPost Phys., [arXiv:2112.11755 [hep-th]].
		\vspace{0.5 cm}

		\item 
		({\color{Blue3}Chapter 6}) \textbf{	Froissart bound for/from CFT Mellin amplitudes},\\
		P.~Haldar and A.~Sinha,\\
		SciPost Phys. \textbf{8}, 095 (2020), [arXiv:1911.05974 [hep-th]].

	\end{enumerate}
	\vspace{1 cm}
	\begin{flushleft}
		{\bf \Large Other works that were completed during PhD, not included in the thesis}
	\end{flushleft}
	\vspace{0.3 cm}
	\begin{enumerate}
		\item {\bf Relative entropy in scattering and the S-matrix bootstrap,
		}\\
		A.~Bose, P.~Haldar, A.~Sinha, P.~Sinha and S.S.~Tiwari,\\
		SciPost Phys. \textbf{9}, 081 (2020), [arXiv:2006.12213 [hep-th]].
		\vspace{0.5 cm}
		\item {\bf Regge amplitudes in generalized fishnet and chiral fishnet theories,
		}\\
		S.~Dutta Chowdhury, P.~Haldar and K.~Sen,\\
		JHEP \textbf{12}, 117 (2020), [arXiv:2008.10201 [hep-th]].
		\vspace{0.5 cm}
		
		\item {\bf On the Regge limit of Fishnet correlators
			,
		}\\
		S.~Dutta Chowdhury, P.~Haldar and K.~Sen,\\
		JHEP \textbf{10}, 249 (2019), [arXiv:1908.01123 [hep-th]].
		\vspace{0.5 cm}
	\end{enumerate}
	
	\pagenumbering{arabic}

	\pagebreak
	\pagestyle{fancy}
	\chapter{ {Introduction}}\label{intro}
\begin{flushright}
	{\large\emph{\color{Blue4}{``I think the fact that one could deal with physical matrix elements in a \\
	largely model independent way was the most attractive aspect to me of\\
	S-matrix theory."}\\
\vspace{0.3 cm}
	Marvin Leonard Goldberger}}	
\end{flushright}

The central problem of quantum dynamics is to calculate quantum amplitude of scattering of particles because it is though scattering experiments that the properties of the particles like mass, charge, magnetic moment and interaction with other particles is studied. The tradition way of computing the scattering amplitudes is via  Lagrangian, or equivalently Hamiltonian, framework of quantum field theory. Here one associates one quantum field to each particles and writes a Lagrangian (Hamiltonian) for describing the particle interactions. Each Lagrangian (Hamiltonian) corresponds to a particular model of interaction. Then scattering amplitudes are calculated usually in perturbation theory. Due to the fact that the field variables are functions of continuous spacetime, they have infinite number of degrees freedom. Due to this one encounter various infinities in the perturbative evaluation of th amplitude. While mathematical formalism of  renormalization theory  is well suited to tackle these infinities, one wonders whether it is possible to build up a formalism entirely in terms of finite observables in a model independent way. S-Matrix theory was born as an answer to this curiosity. In its modern incarnation S-Matrix is considered as navigator in the space of qauntum field theories. In this introductory chapter we present a discussion of past and present aspects of S-Matrix theory and give an overview of where the work undertaken in this thesis stands in the larger context of S-Matrix theory.

\section{{S-Matrix theory: Historical developments} } 
It is impossible to understand the modern perspective of S-Matrix theory without an understanding of its historical development. Therefore we give a brief review of different phases of the historical development of S-Matrix theory. Our review closely follows the discussion in \cite{Cushing} which is an excellent exposition on how S-Matrix theory developed during 1950-1960s. 
\subsection*{Heisenberg's S-Matrix theory}
During the early days of perturbative quantum field theories through the 1930s, the appearance of infinities was extremely disturbing because it contradicted the experimental results, which were finite. In the background of a strong sense of doubt regarding the correctness of the then formulation of quantum field theory, Heisenberg proposed an alternative way to study the quantum dynamics of relativistic particles, where he wanted to avoid any reference to Hamiltonian to base his theory entirely on observable quantities which are known to be finite. It is to be emphasized that Heisenberg didn't doubt the status of Hamiltonian (in the sense of total energy) as a quantum-mechanical observable. Rather, the computational framework of calculating observables, like bound-state energy levels or scattering cross-sections, was mathematically ill-defined because they often produced infinities. To avoid this, Heisenberg, in a series of papers \cite{Heisen1, Heisen2, Heisen3, Heisen4} proposed to build the theory entirely based upon S-Matrix\footnote{ While S-Matrix was introduced in the context of nuclear physics by Wheeler in \cite{Wheeler}, Heisenberg introduced it independently in his work. At least, he did not acknowledge Wheeler's work. Whether Heisenberg was aware of Wheeler's work is a point of historical debate. See \cite{Cushing} for details}. In these works, his goal was to find as many general, model-independent properties of S-Matrix as possible. In particular, Heisenberg first provided a general proof of the unitarity of S-Matrix in his work. However,  what should be desired analytic structure of the S-Matrix was not clear to him. Also, he realized that one needs a rather complicated mathematical framework to tackle S-Matrix theory fruitfully. By the late 1940s, Heisenberg himself abandoned the S-Matrix theory because, by then, perturbative renormalization theory provided the means to tame the infinities of the quantum field theoretical calculations. While Heisenberg's S-Matrix theory had lost steam by the end of 1940, the ideas and questions raised in response to his work are of profound significance. In particular, the question about how the causality requirement, suggested in the mid-1940s by Kronig as a constraint on S-Matrix, was to be wired into S-Matrix theory led to later development of more fruitful endeavours of S-Matrix theory in the form of dispersion relations as we discuss next.

\subsection*{ Dispersion Relations and Axiomatic QFT}
The next phase of development in the  S-Matrix theory occurred during the 1950s post World War II. The inability of quantum field theory to explain the huge amount of experimental data regarding \emph{strong interactions} led to this development. The key player in this phase was \emph{dispersion relation} for scattering amplitude. The first relativistic dispersion relation was obtained by Godberger, Gell-Mann and Thirring  (GGT) \cite{GGT} perturbatively for forward scattering amplitudes of photon. Their motivation was to find a computational tool that involved only finite, experimentally measurable quantities. The key point in their derivation of dispersion relation was to utilize the principle of micro-causality, which states that (anti)commutators of spacelike separated quantum fields vanish. Again, we should emphasize that when we say that this work did not use QFT, we mean that any specific Hamiltonian or associated equation of motion was used in deriving the dispersion relations. A short time after GGT, Goldberger \cite{GoldB1} was able to obtain dispersion relation without using perturbation theory employing the notion.   

While these works were pragmatist in nature, they were not in particular rigorous. An attempt to derive results in a rigorous mathematical framework of quantum field theory was first undertaken by  Lehmann, Symanzik and Zimmermann (LSZ). Their attempt was probably most in the spirit of Heisenberg's original S-Matrix programme. LSZ \cite{LSZ2} explicitly cites Heisenberg. They assumed the existence of an S-Matrix which connected the causal asymptotic in and out fields and then sought to prove the existence of a causal field interpolating between the asymptotic fields. By causal fields are meant field operators whose commutators vanish for spacelike separated points, i.e. satisfies micro-causality. Many impressive and deep results were obtained within and as an outgrowth of LSZ programme. In particular, Lehmann used LSZ formalism and micro-causality to establish the analyticity of $2-2$ scattering amplitude in momentum transfer variables at fixed angle. It is worth mentioning here that a Soviet school led by Bogoliubov also took up a similar approach. In fact, Bogoliubov, Medvedev and Polivanov (BMP) \cite{BMP} first gave a mathematical proof of dispersion relation in a similar mathematical framework of quantum field theory, which was later established to be equivalent to the LSZ formalism.  

\subsection* {S-Matrix Bootstrap}
While the above dispersion relation-centric development was deeply rooted in a general framework of quantum field theory in spite of not using any explicit Hamiltonian, there was an independent development of S-Matrix programme which aimed to reject the quantum field theory construct altogether. In a way, this development was the direct successor of Heisenberg's old S-Matrix programme. There are three main schools of thought driving this development: the \emph{Soviet school} headed by Landau, the \emph{Cambridge school} leaded by Polkinghorne and the \emph{Berkeley school} headed by Chew. \cite{ELOP} is the definitive reference of the Cambridge school, and \cite{Chew1, Chew2} give an overview of Chew's analysis. Chew championed the term \emph{S-Matrix bootstrap} for this line of development. The word `bootstrap' means lifting oneself up by pulling one's bootstrap. This autonomous S-Matrix programme aimed to solve the entire theory, in particular, obtain  the entire particle content and their interaction properties from the internal consistency of the theory, which was taken to be unitarity! At the time, the idea was being pushed for strong interactions and Chew coined the term \emph{nuclear democracy} to describe this self-consistent picture of particles. The nuclear democracy proposed by Chew was an extreme utopia which was impossible to achieve because, to do so, even the basic principles themselves have to be generated by some notion of self-consistency! Nevertheless, the ideas were radical and profound, which would continue inspiring various future development. 

\subsection*{ QCD, String theory and a setback for S-Matrix theory}
As impressive as the S-Matrix theory developments during 1950-60s, the success of non-Abelian gauge field theory in the form of quantum chromodynamics (QCD) pushed back S-Matrix theory. On another front quantum string theory, which ironically emerged from S-Matrix bootstrap itself, emerged as a strong candidate of quantum gravity theory. But string theory found its impressive success as a gauge quantum field theory. Thus, quantum field theory came triumphant during 1970s onwards, and S-matrix theory laid dormant. However, dispersion relation theory technique continued to be used in various phenomenological analysis regarding strong interactions because of lack of complete understanding of QCD in reference to hadron physics. Nevertheless, by and large, the S-Matrix theory lay dormant in the rest of the 20th century, waiting to be reinvigorated in the 21st-century scenario of fundamental physics. 
 
\section{ S-Matrix theory: 21st century developments}
S-Matrix theory has seen an resurgence in interest in 21st century. However it is less radical than the 20th century scenario. S-Matrix theory is being conceived as a parallel way to look for interesting theories. Well this mood was already there in the previous century.
But the term S-Matrix bootstrap has been tamed to fit into this more moderate picture. The main impetus to look into S-Matrix bootstrap came from the impressive success of a similar programme in the realm of conformal field theory, the \emph{conformal bootstrap}. Conformal bootstrap \cite{Poland:2018epd, Bissi:2022mrs} is the progarmme of solving conformal field theory using the internal consistency of conformal symmetry, uniarity, crossing and operator product expansion (OPE). During the past decade, powerful analytical and numerical tools have been developed to extract fruitful results out of conformal bootstrap. The practitioners of this programme soon realized that these same tools might be useful to resurrect the long dormant S-Matrix programme, possibly with suitable modification. Apart from this, an independent exploration into effective field theories have brought the spotlight on S-Matrix theory. 

The central theme of modern S-Matrix theory, or S-Matrix bootstrap (we will use the terms interchangeably), can be thought of as exploration in the space of S-matrices. As opposed to finding the unique S-Matrix for a particular interaction as concieved in the original bootstrap programme, the modern development takes a comparatively moderate approach.    

\subsection*{Numerical S-Matrix bootstrap} Inspired by the impressive success of the numerical analysis based on semi-definite programming (SDP), a numerical scheme for S-Matrix bootstrap was proposed in \cite{SBoot3}. Maximal analyticity and unitarity in the form of positivity was taken as inputs to form a semi-definite linear programming. What do we look for through such numerical explorations? These numerical explorations are used to put constraints on the parameters of an associated quantum field theoretical model. For example, \cite{SBoot3} obtained constraints on cubic coupling in a quantum field theory of scalars by relating the cubic coupling to a particular element of the S-Matrix and constraining that parameter. Such numerical exploration was carried out for pion S-Matrix to explore the theory space of pion physics and constrain the parameters of chiral perturbation theory, the quantum field theoretical framework used to tackle pion physics. Similar explorations have been undertaken  to put bounds on other physically interesting parameters related to quantum field theories, like anamoly coefficients \cite{Karateev:2022jdb}.    

\subsection*{Explorations in the space of EFTs} 
Effective field theory (EFT) is another front where S-Matrix theory, particularly the dispersion relation formalism, is playing important role.  EFT is a framework to work with non-renormalizable quantum field theories. One of the key lesson of Wilson's theory of renormalization is that physics decouples at different scales. Thus if we are interested at a particular energy scale, we can as well work with a theory that is valid upto that energy scale. This introduces a cutoff $\Lambda_0$ into the theory and then we can even work with non-renormalizable interactions, the higher dimensional operators in the Lagrangian.  The efficiency of such a description crucially depends on the size of the dimensionfull coupling parameters multiplying these higher dimensional operators. S-Matrix theory can be used to bound \cite{nima1, Caron-Huot:2020cmc, Caron-Huot:2021enk, sasha} these coupling parameters . 

To understand the basic working principle, we first recall that an effective field theory can be obtained by integrating out the energy modes higher than $\Lambda_0$ of some renormalizable quantum field theory. This renormalizable quantum field theory is called the \emph{UV completion} of the EFT concerned.  In general, given an arbitrary EFT, it is not possible to know the UV completion. However, an S-Matrix of the UV-complete theory must satisfy the basic requirements of unitarity and analyticity. An EFT S-Matrix amplitude can be obtained as  a low energy expansion of the corresponding S-Matrix amplitude in the UV-complete theory. The constraints that follow from the requirements of uniatrity and causality then translates bounds, in particular $2-$sided bounds, on the coefficients of the low energy expansion of the scattering amplitude. These coefficients being related to the coefficients of the EFT Lagrangian, we ultimately constrain the space of physical effective field theories in this way.

\subsection*{Holographic S-Matrix} 
A new insight into flat space S-matrix theory comes from $AdS/CFT$ holographic duality. In $AdS/CFT$ holographic duality,  $AdS$ physics  is encoded into a CFT at the $AdS$ boundary. In particular, correlation functions in the boundary CFT encodes information about scattering in the bulk $AdS$. Thus, we can describe scattering in $AdS$ entirely in terms of boundary CFT. Now it is possible to take a zero curvature limit of $AdS$ scattering amplitudes to obtain flat space scattering amplitudes. Then via $AdS/CFT$ duality, the properties the flat space scattering amplitude  will be encoded in the boundary CFT data. We call the S-Matrix obtained this way \emph{holographic S-Matrix}. We can extract various properties of an holographic S-matrix using the basic CFT structures like the spectrum of the operators, the OPE coefficients etc. Since conformal bootstrap has been setup one firmer ground, one can set up a holographic S-Matrix   bootstrap in terms of CFT bootstrap

\section{Explorations in the thesis}
This thesis undertakes explorations into S-Matrices with two objectives. First objective is to develop novel mathematical tools for analyzing the S-Matrix properties and then use those tools to constrain low energy EFT data. In particular, we borrow the study of univalent and typically real functions from the mathematical field of geometric function theory. The key to this connection is an manifestly crossing symmetric dispersion relation (CSDR) for the scattering amplitude due to Auberson and Khuri \cite{AKi}. The dispersion kernel turns out to be univalent and typically real functions.  Now these functions are known to satisfy various bounds. In particular, the Taylor coefficients of these functions are bounded. We translate these bounds into $2$-sided bounds on EFT parameters. In recent times \cite{Caron-Huot:2020cmc}, such $2-$sided bounds have been explored into. However, most of those explorations are numerical. The analysis presented in this thesis establishes robust mathematical foundation for these bounds.  We check these bounds against numerical S-Matrices for pion scattering, low energy string amplitudes. 
Next we extend the analysis to spinning amplitudes. For spinning amplitudes, we consider the so-called helicity amplitudes. The advantage of working with helicity amplitudes are that they diagonalize uniatrity. However the price we pay is that they are not crossing-symmetric by themselves. We overcame this technical hurdle by constructing crossing-symmetric linear combinations of the helciity amplitudes using the representation theory of $S_3$, the group implementing crossing symmetry. Then these manifestly crossing symmetric amplitudes admit the CSDR, and we repeat the analysis that we did for the scalar amplitude. We consider $4-$photon and $4-$graviton scattering amplitudes and bound the parameters of low-energy electro-magnetic and gravitational EFTs.  

In a parallel objective, we take up exploration into holographic S-Matrices. In particular, we obtain a Froissart-Martin bound for holographic S-Matrix starting from Mellin amplitude of the CFT $4-$point correlator. The key point of this analysis is that we show how conformal field theory structure gives rise to properties of flat space scattering amplitudes. One of our curious findings is that in space-time dimensions higher than $6$, the holographic Froissart-Martin bound is weaker than the usual flat space bound.

\section{Organization of the thesis}
 The thesis is organized as follows. We start with a panoramic review of S-matrix theory. We review various results of regarding S-matrices in the framework of aaxiomatic quantum field theory in chapter \ref{smatrev}. Next we review the some essential tools from the mathematical field of geometric function theory (GFT) in chapter \ref{GFTRev}. Next we analyze scattering amplitudes of scalar particles with tools from GFT in chapter \ref{biberbach}. We extend this analysis to spinning amplitude in chapter \ref{spinboot}. Next we turn to holographic S-matrices in chapter \ref{mellinbound} where we derive a Froissart bound on holographic S-matrix amplitudes. We conclude discussing some possible future endeavours  in chapter \ref{conclusion}. 



	\chapter{ S-Matrix : A Bird's Eye Review}\label{smatrev}
\section{Introduction}
In this chapter, we review some basics of S-Matrix theory as analyzed in the axiomatic quantum field theory framework. We restrict ourselves to the discussion of $2-2$ scattering, where most of the results have been obtained. We won't be able to review all the results in gory detail. Instead, we have reviewed some key arguments and results. In particular, our focus has been on spelling out the basic argument behind how causality and unitarity give rise to various analyticity properties of the scattering amplitude. We also spell out a derivation of the famed Froissart-Martin bound. This bird's eye review aims to set the foundation upon which our works in subsequent chapters are built. We suggest the reader consult more extensive reviews like \cite{Sommer1}, \cite{MartinLec} for all the detailed arguments which are lacking here. 

Finally, in this exposition, we could not say anything about the rich non-quantum field-theoretic analysis of S-Matrix, which was developed during 1960s. See \cite{ELOP}, \cite{Iagol} for authoritative expositions into this abstract theory of S-Matrix. 

Let us give a heads up as to what we review in this chapter. We consider the $2-2$ elastic scattering of identical massive scalar bosons in Minkowski spacetime $\mathbb{M}^{3,1}$. We want to emphasize that all the properties we discuss generalize straightforwardly, often unaltered, to general spacetime dimensions. At places, we will quote the result in general spacetime dimension without any proof for later convenience, in particular chapter \ref{biberbach} where our analysis is valid in general spacetime dimensions. Further, the generalizations to scattering amplitudes of external spinning particles will be discussed in chapter \ref{spinboot} where we take up analysis of spinning amplitudes. We start with a short description of particles, states and the corresponding state spaces. After this, we have reviewed an abstract definition of the S-Matrix. Next, we have delineated the basic assumptions of the Lehmann-Symanzik-Zimmermann axiomatic quantum field theory of S-Matrix. Finally, we have reviewed various analyticity properties and bounds that follow as consequences of causality and unitarity through multiple sections.  
\section{Particles, states and Fock space}
S-Matrix describes scattering of quantum particles. Thus we need a proper idea of quantum states of relativistic particles. Wigner \cite{Wigner1} first gave concrete meaning to a relativistic quantum particle in terms of the symmetry group of Minkowski spacetime, the Poincar\'{e} group. Let us first discuss briefly Wigner's central result. The discussion below closely follows \cite{WeinQFT}.
\subsection*{Poincar\'{e} invariance and one particle states: Wigner classification}
We consider a relativistic theory in $4-$dimensional Minkowski spacetime $\mathbb{M}^{3,1}$. We choose mostly positive signature for the metric. The canonical metric tensor is given  
\be
\eta_{\mu \nu}\equiv\text{diag.}\{-1,+1,+1,+1\}.
\ee  A Minkowski vector $x\in \mathbb{M}^{3,1}$ is defined to be \emph{timelike} if $x^2:=\eta_{\mu \nu}x^\m x^\n<0$, \emph{spacelike} if $x^2:=\eta_{\mu \nu}x^\m x^\n>0$, and \emph{null} or \emph{lightlike} if $x^2=0$. 

The isometry group of $\mathbb{M}^{3,1}$ is the Poincar\'{e} group which is the semi-direct product of Lorentz group, $SO(3,1)$ and spacetime translation group, $\mathbb{R}^{3,1}$, 
\be 
\text{ISO}(3,1):=\mathbb{R}^{3,1}\rtimes SO(3,1).
\ee  The generators of the Poincar\'{e} groups are 
\be 
\{\mmp^\m,\, \mj^{\r\s}\},
\ee where $\mmp^\m$ generates spacetime translation and $\mj^{\r\s}=-\mj^{\s\r}$ generates Lorentzian spacetime rotations. These generators satisfy the Poincar\'{e} algebra $\mathfrak{iso}(3,1)$ 
\begin{align}
	\begin{split}
	&[\mmp^\m, \mmp^\n]=0,\qquad [\mmp^\m,\mj^{\r\s}]=i\, \left(\eta^{\m\s}\mmp^\r-\eta^{\m\r}\mmp^\s\right),\\
	&[\mj^{\m\n}, \mj^{\r\s}]=\,i\,\left(\eta^{\m\r}\mj^{\n\s}-\eta^{\n\r}\mj^{\m\s}+\eta^{\m\s}\mj^{\r\n}-\eta^{\s\n}\mj^{\r\m}\right).
	\end{split}
\end{align} 	
$\mathfrak{iso}(3,1)$ has two Casimirs $\mc_1,\, \mc_2$.
\begin{enumerate}
	\item  $\mc_1=\mmp^2:=\eta_{\m\n}\mmp^\m \mmp^\n$.  
	\item   $\mc_2=\mw^2:=\eta_{\m\n}\mw^\m\mw^\n,$ where $\mw^\m$ is the Pauli-Ljubanski pseudovector defined as 
	\be 
	\mw^\m:=\frac{1}{2}\, \e^{\m\n\r\s}\, \mmp_\n\, \mj_{\r\s},
	\ee $\e^{\m\n\r\s}$ being the totally anti-symmetric Levi-Civita tensor  with $\e^{0123}=-1$. 
\end{enumerate} 
Wigner defined the one-particle quantum relativistic states to be  transforming in non-negative energy irreducible unitary representations of the Poincar\'{e} group.   Unitary irreducible representations of Poincar\'{e} group are infinite-dimensional because the Poincar\'{e} group is non-compact. These representations can labelled by the eigenvalues of the two Casimirs. According to these labels, quantum states of relativistic particles are in the following representations. 
\begin{enumerate}
	\item [(a)] \be \mmp^2=-m^2,\qquad \mw^2=m^2\, s(s+1),\qquad m\in\mathbb{R}^+.
	\ee
	These representations describe  \emph{massive particles} with mass $m>0$. $s$ is the spin of the particle which labels the unitary irreducible representation of the spin group $SO(3)$, and thus, takes discrete positive half-integer values, $s\in \{p/2\,|\, p=0,\,1,\,2,\,\dots\}.$ Since the spin projection $s_3$ can take on values between $-s$ and $+s$, spin $s$ massive particles fall into $(2s+1)-$dimensional multiplets .

	\item [(b)] \be  \mmp^2=0,\,\qquad \mw^2=0.\ee 
	Since both $\mmp^\m$ and $\mw^\n$ are light-like, they are proportional
	\be 
	\mw^\m=\,h\,\mmp^\m.
	\ee These representations describe \emph{massless particles}. The proportionality factor $h$ is called the \emph{helicity} and equals to $\pm s$, where $s$ is the spin of the particles labeling the unitary irreps of the little group which is, for massless particles, $ISO(2)$ instead of $SO(3)$. $s$ takes on positive half-integer values as before. 
	
\end{enumerate} 	  
In this chapter, we will only work with massive scalar, $s=0$, particles.  
\subsection*{Fock space}  
In relativistic quantum theory, a generic scattering event is described in terms of a system of an arbitrary but finite number of \emph{non-interacting} relativistic particles. One natural way to tackle this is in terms of Fock space states. Formally a Fock space is a tensor algebra over a single-particle Hilbert space,$\mh_1$, i.e. 
\be 
\mf(\mh_1)=\bigoplus_{n=0}^\infty  \mh_1^{\tenp^n},
\ee where $\mh_1^{\tenp^n}$ is just $n-$fold tensor product space of $\mh_1$, also called $n$th tensor power of $\mh_1$. If we consider identical Bosons or Fermions then the tensor power will be either symmetric or anti-symmetric power respectively. More concretely, Fock space  corresponding to a single kind of Bosonic particle is simple symmetric algebra over $\mh_1$
\be 
\mf_B(\mh_1)=\bigoplus_{n=0}^\infty  \mh_1^{\vee^n},
\ee where $\mh_1^{\vee^n}$ is $n$th symmetric power of $\mh_1$ spanned by symmetric product states 
\be 
\Ket{\psi_{\s_1}}\vee\Ket{\psi_{\s_2}}\vee\dots\vee \Ket{\psi_{\s_n}}:= \frac{1}{n!}\sum_\p \Ket{\psi_{\s_{\p(1)}}}\tenp\Ket{\psi_{\s_{\p(2)}}}\tenp\dots\tenp \Ket{\psi_{\s_{\p(n)}}},\qquad \p\in S_n,
\ee   $S_n$ being the group of permutations of $\{1,2,\dots,n\}$. The $n=0$ subspace, $\mh_1^{\vee 0}\cong \mathbb{C}$ is called the \emph{vacuum sector} with the \emph{Bosonic Fock vacuum} $\ket{0_B}$ normalized to $ \braket{_B 0|0_B}=1$. Note that the vacuum state \emph{is not} the null vector.  Similarly, a  Fermionic Fock space is an exterior algebra over the single particle space $\mh_1$
\be 
\mf_F(\mh_1)=\bigoplus_{n=0}^\infty  \mh_1^{\wedge^n},
\ee  the $n-$particle Fermionic subspace being spanned by exterior product states of the form 

\be 
\Ket{\psi_{\s_1}}\wedge\Ket{\psi_{\s_2}}\wedge\dots\wedge \Ket{\psi_{\s_n}}:= \frac{1}{n!}\sum_\p\, \e(\p) \Ket{\psi_{\s_{\p(1)}}}\tenp\Ket{\psi_{\s_{\p(2)}}}\tenp\dots\tenp \Ket{\psi_{\s_{\p(n)}}},\qquad \p\in S_n,
\ee  $\e(\p)$ being the sign of the permutation, $+1$ for even permutation and $-1$ for odd permutations. The \emph{Fermionic Fock vacuum} is again defined by $\braket{_F 0|0_F}=1$.

The algebraic structure of Fock space allows one to introduce a pair of linear operators for each single-particle state called \emph{creation} and \emph{annihilation} operators. The creation operator $a^\dagger(\psi)$ creates a single particle in the state $\ket{\psi}$ and the annihilation operator $a(\psi)$ destroys a single particle in the state $\ket{\psi}$. These operators are hermitian conjugates of each other. 
These operators satisfy canonical commutation or anti-commutation algebra, respectively, for Bosons and Fermions 
\begin{align} 
	[a^\dagger(\psi), a^\dagger(\f) ]_{\mp}=0,&\qquad [a(\psi), a(\f) ]_{\mp}=0,\\
	[a(\psi), a^\dagger(\f) ]_{\mp}&=\Braket{\psi|\f}\mathbb{1}.
\end{align} The $``-''$ sign corresponds to commutation relation for Bosons, and the $``+''$ sign correspond to anti-commutation relation for Fermions. We will call this algebra of operators as \emph{Fock algebra}.
One can construct an $n-$particle state with repeated action of creation operators on the vacuum state. In particular, the single particle states are given by 
\be 
\ket{\psi}=a^\dagger (\psi)\ket{0}.
\ee Further, a Bosonic $n-$particle state can be obtained as 
\be 
\Ket{\Psi_n^B(\s_1,\s_2,\dots,\s_n)}:=a_B^\dagger(\psi_{\s_1})a_B^\dagger(\psi_{\s_2})\dots a_B^\dagger(\psi_{\s_n})\ket{0_B}=\sqrt{n!}\,\Ket{\psi_{\s_1}}\vee\Ket{\psi_{\s_2}}\vee\dots\vee \Ket{\psi_{\s_n}}\,
\ee and, similarly for Fermionic states 
\be 
\Ket{\Psi_n^F(\s_1,\s_2,\dots,\s_n)}:=a_F^\dagger(\psi_{\s_1})a_F^\dagger(\psi_{\s_2})\dots a_F^\dagger(\psi_{\s_n})\ket{0_F}=\sqrt{n!}\Ket{\psi_{\s_1}}\wedge\Ket{\psi_{\s_2}}\wedge\dots\wedge \Ket{\psi_{\s_n}}.
\ee 

In scattering problem one is interested in transition among states of particles of definite momenta, spin and other possible charges. 
Thus a generic single particle state is of the form $\ket{k_\a,\,s_\a\,,\,\l_\a}$, where $k$ is the Lorentzian on-shell $4-$momentum normalized by $k_\a^2=m_\a^2$, $s$ is the suitable spin degree of freedom, and $\l$ labels all other degrees of freedom collectively. The subscript $\a$ runs over different kinds of particles. The total Fock space now will be tensor product of Fock space corresponding to each kind of particles
\be\label{FreeFock} 
\mf=\bigotimes_{\a}\mf\left(\mh^{(\a)}_1\right),
\ee $\mh_1^{(\a)}$ being the single particle space for $\a$th particle spanned by the states $\{\ket{k_\a,\,s_\a,\,\l_\a\}}$. The Fock vacuum is the unique Poincar\'{e} invariant state 
\be 
U(\Lambda,a)\Ket{0}=\Ket{0}.
\ee The Fock vacuum can be written as tensor product of the Fock vacuua of all the \emph{different} species of particles 
\be 
\Ket{0}=\bigotimes_\a \Ket{0_\a}.
\ee 
The single-particle states are normalized by 
\be \label{mominner}
\Braket{k_\a,\,s_\a,\,\l_\a|k_\b,\,s_\b,\,\l_\b}=\sqrt{2E_{k_\a}2E_{k_\b}}\,\d_{\a\b}\, \d^{3}(\mathbf{k}_\a-\mathbf{k}_\b).
\ee Here $\d_{ij}$ takes care of orthogonalization in all the other degrees of freedom. The Fock algebra for these states are then given by 
\begin{align} 
	[a^\dagger(k_\a,\,s_\a,\,\l_\a), a^\dagger(k_\b,\,s_\b,\,\l_\b) ]_{\mp}=0,&\qquad [a(k_\a,\,s_\a,\,\l_\a), a(k_\b,\,s_\b,\,\l_\b) ]_{\mp}=0,\\
	[a(k_\a,\,s_\a,\,\l_\a), a^\dagger(k_\b\,s_\b,\,\l_\b) ]_{\mp}&=\sqrt{2E_{k_\a}2E_{k_\b}}\,\d_{\a\b}\, \d^{3}(\mathbf{k}_\a-\mathbf{k}_\b).
\end{align} 
The completeness relation for the single-particle states are given by 
\be 
\sum_\b\int d\m_1(k_\a)\,\Ket{k_\a,\,s_\a,\,\l_\a^\b}\Bra{k_\a,\,s_\a,\,\l_{\a}^\b}=\mathbb{1}_{\a 1},\qquad d\m_1(k_\a):=d^4k_\a\,\th(k_\a^0)\d(k_\a^2-m_\a^2).
\ee Here the sum over $\b$ means sum and integrals overall degrees of freedom of the $\a$th particle. $\mathbb{1}_{\a 1}$ is the identity operator over the single-particle subspace corresponding to $\a$th particle type. 

The completeness relation on the full Fock space $\mf$ is given by 
\begin{align}\label{genmeas}
	\begin{split}
		\mathbb{1}=\ket{0}\bra{0}+\sum_{N\ge 1} \int \,& d\m_N(p) \Ket{\{p_{\a_i},\,s_{\a_i},\,\l_{\a_i}\}_{\text{card}(A)=N}}\Bra{\{p_{\a_i},\,s_{\a_i},\,\l_{\a_i}\}_{\text{card}(A)=N}},\quad A=\{\a_i\}\\
		&d\m_N(p)=\left(\prod_{\sum_\a n_\a=N}\,\frac{1}{n_{\a}!}\right)\prod_{A}\, d^4p_{\a_i}\,\th(p_{\a_i}^0)\d\left(p_{\a_i}^2-m_{\a_i}^2\right)\,\w(s_{\a_i},\,\l_{\a_i}).
	\end{split}
\end{align}  Let us explain the notation a bit. $\a_i$ corresponds to $i$th particle of type $\a$.      $\text{card}(A)$ is the cardinality of the set $A$ which index the individual particles. The label $\text{card}(A)$ in the states signify that we the \emph{total} no. of particles in the corresponding state is $N$. Further, $n_\a$ is total number of particles of type $\a$, so that 
\bes 
\sum_\a\,n_\a=N.
\ees 
Finally, $\mathbb{1}$ is the identity operator on the total Fock space $\mf$. The integral stands for summation over all the quantum numbers corresponding to fixed $N$, comprising of integrals over continuous quantum degrees of freedom including the momenta as well as the sum over all other discrete quantum numbers. $\w(s,\l)$ stands for the measure for all other qauntum numbers.

 We will work with a much simpler system comprising of single neutral scalar massive Bosons with mass $m>0$. The measure $d\m_N(p)$ then becomes 
 \be\label{idBos} 
 d\m(p_1,\dots, p_N)=\frac{1}{N!}\,\prod_{i=1}^{N}\, d^4p_{i}\,\th(p_{i}^0)\d\left(p_{i}^2-m^2\right)
 \ee  
\subsubsection*{A mathematical pointer} 
We observe that the space of single particle states, $\mh_1^{\a}$ is \emph{not} a Hilbert space. This is because norm of the states $\{\ket{k_\a,\, s_\a,\, \l_\a}\}$ do not exists as evident from the inner product expression \eqref{mominner}. Thus the state of definite momentum is not a physically realizable state. However, we can construct a wavepacket state with a small spread around some momentum value by taking a superposition of the states $\{\ket{k_\a,\, s_\a,\, \l_\a}\}$. Define a state 
\be 
\ket{\psi(k_\a, \, \d k_\a)}=\int d^4 p_\a\,\, \th(p_\a^0)\, \d(\p_\a^2-m_\a^2)\, f(p_\a)\, \ket{p_\a,\, s_\a,\, \l_\a},
\ee   where $f(p_\a)$ is a function of momentum which is peaked at $p_\a=k_\a$ and has a finite support of measure $\d k_\a$ such that  $ \braket{\psi(k_\a, \, \d k_\a)|\psi(k_\a, \, \d k_\a)}<\infty$. $\ket{\psi(k_\a, \, \d k_\a)}$ describes a physical state. In principle, we should work with such wavepacket states. However, we will work with the states $\ket{k_\a, \, s_\a,\, \l_\a}$ with impunity. While we will not delve into mathematical details, such states can be accommodated within a more general structure called the \emph{rigged Hilbert space}. Thus the precise mathematical identity of the space $\mh_1^\a$ is that of a rigged Hilbert space.  


\section{S-Matrix: Basic construction }\label{Smatbasic}

The physics of relativistic particle interaction is probed with the help of scattering experiments. In a typical scattering experiment, one starts with particles at \emph{asymptotic past infinity}, $t\to-\infty$,  far apart so that there is no interaction among them, and ends with particles at \emph{asymptotic future infinity}, $t\to+\infty$, so far apart that interactions among them have died off. Following \cite{BogoQFT}, we start by giving a formal definition of this physical picture. Then we will delineate the practical implementation of the formal structure. 
\subsection{ Mathematical definition}
In order to analyze this situation, we introduce two sets of Hilbert states: \emph{`in' states, $\ket{\Psi^+_\a}$} and \emph{`out' states, $\ket{\Psi_\a^-}$}. Let the corresponding state spaces be $\mh^+$ and $\mh^-$, respectively. These states are \emph{defined} by that \emph{if observations are made at $t\to-\infty$ or $t\to+\infty$, then states corresponding to the free particles described by the label $\a$ will be $\ket{\Psi^+_\a}$ or $\ket{\Psi_\a^-}$, respectively.} It is to be emphasized that states $\ket{\Psi^+_\a}$ and $\ket{\Psi_\a^-}$ are defined as \emph{Heisenberg states}, i.e. they don't evolve with time. In particular, these states are supposed to encode the entire spacetime history of the system of particles. Most importantly, they are \emph{not} to be considered as asymptotic limits of some time-dependent Schrodinger state $\ket{\Psi(t)}$ at $t\to\mp\infty$. The use of Heisenberg states is crucial to maintaining manifest Lorentz invariance. The \emph{explicit} time evolution of Schrodinger states treats time on a different footing than the space variables. This destroys the manifest Lorentz invariance. Since a Heisenberg state encapsulates the entire spacetime history of the system, one can keep Lorentz invariance of the system manifest with such states. 

Next, we introduce a state-space of non-interacting particles, $\fh$. In a scattering experiment, one is usually interested in the transition between states with an arbitrary number of particles. Therefore, it is best to use the second-quantization picture. For \emph{non-interacting} particles, we can use the Fock states to describe such states. Thus $\mf$ is the Fock space of relativistic non-interacting particles. It is to be noted that the Fock states are again Heinseberg states. How is this Fock space related to the spaces $\mh^+$ and $\mh^-$? The relation is expressed by a pair of linear operators $\Omega^{\pm}$. These operators define \emph{isometric embeddings}, they preserve inner products. For a given $\ket{\Psi}\in \fh$ one has 
\be \label{MollerOp} 
\Omega^{\pm}\ket{\Psi_\a}:=\ket{\Psi_\a^\pm}\in\mh^\pm.
\ee These operators are called \emph{M\o ller operators}. We want to emphasize that there are (in general) \emph{two distinct} vectors corresponding to the one and the same vector $\ket{\Psi_\a}$ of the Fock space $\fh$. Physically, the operators $\Omega^{+}$ and $\Omega^-$ can be thought to define two \emph{different} choices of frames of references from which the observer views the system; different observers observe \emph{equivalent} states, not \emph{same} states. At this point, we would like to stress that the spaces $\mh^+$ and $\mh^-$ are spaces of interacting particles, \emph{not} free particles. We will make this clearer in the following subsection.

Finally, we require the \emph{asymptotic completeness condition}
\be \label{asympcompl}
\mh^+=\mh^-.
\ee  This expresses the physical picture that we are considering the evolution of the same system of particles; nothing is being created or destroyed.  

With this formal construct, the central problem of scattering becomes the evaluation of the transition amplitude 
\be 
\Braket{\Psi_\a^-|\Psi_\b^+}.
\ee Using \eqref{MollerOp} this amplitude becomes equal to $\Braket{\Psi_\a|\left(\Omega^-\right)^\dagger\Omega^+|\Psi_\b}$, which prompts defining a linear operator the Fock space $\fh$, the S-Matrix
\begin{align}
	\begin{split} 
		\widehat{S}&:\fh\to\fh,\\
		\widehat{S}&:=\left(\Omega^-\right)^\dagger\Omega^+.
	\end{split}
\end{align} The elements of the S-Matrix are then clearly
\be 
S_{\a\b}=\Braket{\Psi_\a^-|\Psi_\b^+}.
\ee 

\subsection{A formal construction}
We gave a rather abstract mathematical definition of scattering in the previous section. How can one realize this? We give a formal construction in this section following \cite{WeinQFT}. We start with dividing the time-translation generator $H$ for the entire system into two parts, a free-particle Hamiltonian $H_0$ and an interaction potential $V$,
\be 
H=H_0+V.
\ee $H_0$ generates the time-translation of the Fock space $\mf$ in \eqref{FreeFock}. Now we assume that we can make this division in such a fashion that the asymptotic states $\{\Ket{\Psi}^{\pm}_\a\}$ corresponding to the eigenstates $\{\Ket{\Psi}_\a\}$ of the free Hamiltonian $H_0$ are eigenstates of the full Hamiltonian with \emph{same} eigenvalues, i.e. given 
\begin{align}
	H_0\Ket{\Psi\a}&=E_\a\Ket{\Psi_\a},\\
	\Braket{\Psi_\a|\Psi_\b}&=\d(\a-\b),
\end{align} the asymptotic states  $\Ket{\Psi_\a^\pm}=\Omega^\pm\Ket{\Psi}$ satisfy
\be \label{Heigen}
H\Ket{\Psi_\a^\pm}=E_\a\Ket{\Psi_\a^\pm}.
\ee This can be more physically thought as the following limiting behavior 
\be \label{asymplim}
\int d\a \,e^{-iE_\a \t}g(\a)\Ket{\Psi_\a^\pm} \rightarrow \int d\a\, e^{-iE_\a \t}g(\a)\Ket{\Psi_\a},\quad \text{as}\,\,\, \t\to \mp\infty.
\ee Here $g(\a)$ is some ampltiude which is non-zero and smoothly varying over some finite range $\D E_\a$ of energies, thus defining a wavepacket.  

\eqref{asymplim} actually defines the embeddings of \eqref{MollerOp}. To see this, we rewrite the equation as the requirement that  
\be \label{asympcond}
e^{-i H \t}\int d\a\, g(\a)\Ket{\Psi_\a^\pm} \rightarrow e^{-iH_0 \t}\int d\a\, g(\a)\Ket{\Psi_\a},\quad \text{as}\, \t\to \mp\infty.
\ee From this we can now obtain concrete expressions for the  M\o ller operators as 
\be 
\Omega^{\pm}=\lim_{\t\to\mp\infty}e^{+i H\t}e^{-iH_0\t}.
\ee 
It should be kept in mind that these M\o ller operators give only meaningful results only when acting on smooth superposition of energy eigenstates. 

One of the crucial consequences of the requirement \eqref{asymplim} is that M\o ller operators define isometric embeddings, i.e. the asymptotic states are normalized just like the free-particle states.

\be\label{Asympcomp} 
\Braket{\Psi_\a^\pm|\Psi_\b^\pm}=\d(\a-\b).
\ee

\subsection{Kinematical structures}\label{kinem}
With formal aspects of the scattering introduced, let us turn to the more specific problem. We will be interested in $2\to2$ elastic scattering in this thesis. For simplicity of the discussion, we will restrict ourselves to scattering of identical neutral scalar Bosons of mass $m>0$. The discussion can be generalized straightforwardly to incorporate spin and charge. Thus the only degrees of freedom considered are those of the momenta. Let us now describe the kinematics of $2\to2$ scattering. 
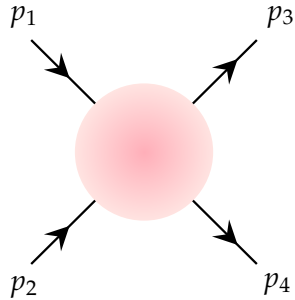
\begin{figure}[H]
	\centering
	\begin{tikzpicture}[scale=0.7]
		\shade[shading=radial, outer color=MistyRose1, inner color=LightPink1] (5,0) circle (1.3 cm);
		\draw (2.738,2.562) node{$p_1$};
		\draw (2.738,-2.562) node{$p_2$};
		\draw (7.562,2.562) node{${p}_3$};
		\draw (7.5,-2.5) node{${p}_4$};
		\draw
		[
		line width=0.3mm,
		decoration={markings, mark=at position 0.70 with {\arrow[line width=1 mm]{stealth}}},
		postaction={decorate}
		] (5.919,0.919)--(7.121,2.121);
		\draw
		[
		line width=0.3mm,
		decoration={markings, mark=at position 0.70 with {\arrow[line width=1 mm]{stealth}}},
		postaction={decorate}
		] (5.919,-0.919)--(7.121,-2.121);
		\draw
		[
		line width=0.3mm,
		decoration={markings, mark=at position 0.40 with {\arrowreversed[line width=1 mm]{stealth}}},
		postaction={decorate}
		] (4.081,0.919)--(2.879,2.121);
		\draw
		[
		line width=0.3mm,
		decoration={markings, mark=at position 0.40 with {\arrowreversed[line width=1 mm]{stealth}}},
		postaction={decorate}
		] (4.081,-0.919)--(2.879,-2.121);
	\end{tikzpicture}
	\caption{Scattering configuration}\label{Sconfig1}
\end{figure}
The S-Matrix can be decomposed as 
\be 
\widehat{S}=\mathbb{1}+i\widehat{\mt}.
\ee The non-trivial scattering is described by the matrix $\widehat{T}$. Thus, 
referred to the scattering configuration above, the matrix elements of interests are given by 
\be 
\Braket{{p}_3\,;\,{p}_4|\widehat{\mt}|{p}_1\,;\,{p}_2}
\ee
Next, using the spacetime translation invariance the momentum conserving delta function can be factored  out, and, further making use of the Lorentz invariance to write
\be \label{momconsv}
\Braket{p_3\,;\,p_4|\widehat{\mt}|p_1\,;\,p_2}=(2\p)^4\, \d^{4}\left({p}_1+{p}_2-{p}_3-{p}_4\right)\mt(s,t,u),
\ee where $(s,t,u)$ are the Mandelstam invariants defined by 
\begin{align}
	\begin{split}\label{Mandeldef}
		s&=(p_1+p_2)^2,\\
		t&=(p_1-p_3)^2,\\
		u&=(p_1-p_4)^2.
	\end{split}
\end{align} 	The on-shellness condition now translates to 
\be\label{onshell} 
s+t+u\,=\,4m^2,
\ee which implies that the scattering amplitude $\mt$ is really a function of two variables, $\mt(s,t)$ when restricted to the mass-shell. We can now choose to be in Center of Mass (CoM) frame, defined by 
\be 
\mathbf{p}_1+\mathbf{p}_2=0. 
\ee In this frame, the Mandelstam invariants have simple algebraic expressions 
\begin{align}
	s=4(m^2+\mathbf{k}^2),\qquad t=2\mathbf{k}^2\,(\cos^2\th-1),\qquad u=4m^2-s-t=-2\mathbf{k}^2(1+\cos\th);
\end{align}	with 
$
\mathbf{k}^2:=\mathbf{p}_i^2,\,\forall\,\, 1\le i\le 4,$ and $\th$ being th angle between $\mathbf{p}_1$ and $\mathbf{p}_3$, the \emph{scattering angle}. Clearly
\be \label{scatangl}
\cos\th=1+\frac{2t}{s-4m^2}\,.
\ee 

Based on the kinematical configuration above, we can talk about three channels:
\begin{eqnarray}
	s-\text{channel:}\hspace{1 cm}& 1,2\,\to\, 3,4,\hspace{1 cm}& s\ge 4m^2,\qquad 4m^2-s\le t\le0,\\
	t-\text{channel:}\hspace{1 cm}& 1,3\,\to\, 2,4,\hspace{1 cm}& t\ge 4m^2,\qquad 4m^2-t\le s\le0,\\
	u-\text{channel:}\hspace{1 cm}& 1,4\,\to\, 2,3,\hspace{1 cm}& u\ge 4m^2,\qquad s\,,t\le0,
\end{eqnarray}  with $s,\,t,\,u\in\mathbb{R}$. 

The three channels define three \emph{physical amplitudes} $\mt^s(s,t),\,\mt^t(s,t),\,\mt^u(s,t)$ whose domains of definition, $\md^s,\, \md^t,\, \md^u$, are respectively given by the physical domains for the channels as mentioned above, i.e.
\begin{align}
	\begin{split}\label{physamp}
		\md^s&=\{(s,t)\in\mathbb{R}^2\,|\,\,s\ge 4m^2,\,\, 4m^2-s\le t\le0\},\\
		 \md^t&=\{(s,t)\in\mathbb{R}^2\,|\,\,t\ge 4m^2,\, \,4m^2-t\le s\le0\},\\
		\md^u&=\{(s,t)\in\mathbb{R}^2\,|\,\,u\ge 4m^2,\,\, s\,,t\le0\}.
	\end{split}
\end{align}

\subsection*{Partial wave expansion}
The $2-$particle states do not transform irreducibly under the Poincar\'{e} group. However, they can be decomposed into Poincar\'{e} irreps which are states of definite \emph{total} $4-$momenta and definite {\emph{total}} angular momenta. Thus these states are of the generic form (we are considering massive scalar particles only) 
\be 
\ket{E, \vec{P}\,;\, \ell, \m\,;\,\a},\qquad \m\in\{-\ell,\,-\ell+1,\,\dots\,,\ell-1,\,\ell\},\,\quad \ell\in \mathbb{Z}^{\ge}.
\ee Further, the physical, or on-shell, states satisfy the constraints 
$E>0,\,E^2>\vec{P}^2$. 
We consider these states to be normalized by 
\be 
\braket{E', \vec{P}'\,;\, \ell', \m'\,|E, \vec{P}\,;\, \ell, \m}=(2\p)^4\, \d(E-E')\, \d^{(3)}(\vec{P}-\vec{P}')\, \d_{\ell\ell'}\d_{\m\m'},
\ee so that the $2-$particle completness relation in terms  of these states read
\be \label{2completeness}
\sum_{\ell}\sum_{\m=-\ell}^{\ell}\,\int\,\frac{dE\, d^3\vec{P}}{(2\p)^4}\,\th(E)\,\th(E^2-\vec{P}^2)\, \Ket{E, \vec{P}\,;\, \ell, \m\,}\Bra{E, \vec{P}\,;\, \ell, \m\,}=\mathbb{1}_2,
\ee $\mathbb{1}_2$ being the $2-$particle identity. Now one can decompose the $2-$particle state $\ket{p_1,\, p_2}$ in terms of these states as following 
\be\label{CGdcm} 
\Ket{p_1,\, p_2}=\int\frac{d^4 P}{(2\p)^4}\, \th(P^0)\, \th(P^2)\, \sum_{\ell, \m}\, \ket{E, \vec{P}\,;\, \ell, \m}\braket{E, \vec{P}\,;\, \ell, \m|p_1, p_2},
\ee  where $P\equiv (E,\vec{P})$ is the total $4-$momentum corresponding to the state $\ket{E, \vec{P}\,;\, \ell, \m}$, i.e.
\begin{align} 
	\begin{split}
		\mathcal{P}^0\,\Ket{E, \vec{P}\,;\, \ell, \m}&=E\, \Ket{E, \vec{P}\,;\, \ell, \m},\\
		\mathcal{P}^i\,\Ket{E, \vec{P}\,;\, \ell, \m}&=P^i\, \Ket{E, \vec{P}\,;\, \ell, \m}.
	\end{split} 
\end{align} 

Due to translation invariance, the Clebsch-Gordon (CG) coefficients of this decomposition obey 
\be \label{CG1}
\braket{E, \vec{P}\,;\, \ell, \m|p_1, p_2}\propto \d(E-p_1^0-p_2^0)\,\d^{(3)}(\vec{P}-\vec{p}_1-\vec{p}_2).
\ee We won't need the most general expression. In a while, we will work with CG coefficient in the centre of mass frame where the result is considerably simpler.  

It  follows from Wigner-Eckart theorem that, due to the Poincar\'{e} invariance, the S-Matrix will be diagonal in the basis of these states 
\begin{align}\label{SWigEck} 
	\begin{split}
		\braket{E', \vec{P}'\,;\, \ell', \m'|\widehat{S}|E, \vec{P}\,;\, \ell, \m}&=\d(E-E')\d^{(3)}(\vec{P}-\vec{P}')\,d_{\ell \ell'}\d_{\m \m'}\, S_\ell(E),\\
		\braket{E', \vec{P}'\,;\, \ell', \m'|\widehat{\mt}|E, \vec{P}\,;\, \ell, \m}&=\d(E-E')\d^{(3)}(\vec{P}-\vec{P}')\,d_{\ell \ell'}\d_{\m \m'}\, \mt_\ell(E)
	\end{split}
\end{align} Making use of this diagonalization we obtain what is called the \emph{partial wave expansion} of the scattering amplitude $\mt(s,t)$. Let us delineate below how one can obtain this expansion. We will consider to work in the centre of mass frame. Thus we consider the amplitude 
\be \label{CoMamp}
\braket{p_3,\,p_4|\widehat{\mt}|p_1,\,p_2}=(2\p)^4\,\d(E-E')\d^{(3)}(0)\,\mt(s,t).
\ee In writing this, we have used  $\vec{p}_1+\vec{p}_2=\vec{p}_3+\vec{p}_4=0$, and $E=p_1^0+p_2^0,\, E'=p_3^0+ p_4^0$. 

Next we will use spherical polar coordinates for the momenta. Next, we will work with the momentum configuration as following 
\begin{align}
	\begin{split}
		\vec{p}_1=-\vec{p}_2= (0, 0,\mathbf{p}),\qquad \vec{p}_3=-\vec{p}_4=(\mathbf{p}\sin\th\cos\f,\mathbf{p}\sin\th\sin\f,\mathbf{p}\cos\th),
	\end{split}
\end{align} with 	$\th\in [0,\p]$ and $\f\in(0,2\p),\, \f=0\sim\f=2\p$. The configuration is chosen so that the polar angle $\th$ coincides with the scattering angle introduced in  \eqref{scatangl}. We will denote the $2-$particle states as 
$ 
\ket{p_1,p_2}\equiv \ket{\mathbf{p},0,0},\,\ket{p_3,p_4}\equiv \ket{\mathbf{p},\th,\f}.
$ In this momentum configuration, one can evaluate the CG coefficient \eqref{CG1} to obtain for the corresponding CG decomposition \eqref{CGdcm} (see \cite{Joaospin} for a recent review and derivation of these results)
\be 
\Ket{\mathbf{p},\th,\f}=\sum_{\ell, \m}\, C_\ell(E)\,e^{-i\m\f}\, d_{\m0}^{(\ell)}(\th)\,\ket{E,0\,;\,\ell, \m},\quad E=2\sqrt{\mathbf{p}^2+m^2},
\ee where $d_{\m_1 \m_2}^{(\ell)}(\th)$ are the Wigner small $d-$matrices, and 
\be 
C_\ell(E)^2=8\p\,(2\ell+1)\, \frac{E}{\mathbf{p}}.
\ee 
In particular
\be 
d_{\m 0}^{(\ell)}(\th)=\d_{\m0}\, P_\ell(\cos\th),
\ee $P_\ell(x)$ being the ususal Legendre polynomial.  We can use this to obtain the simple expression 
\be 
\Ket{\mathbf{p},\th,\f}=\sum_{\ell}\, C_\ell(E)\,\, P_\ell(\cos\th)\,\ket{E,0\,;\,\ell},\qquad \ket{E,0\,;\,\ell}\equiv \ket{E,0\,;\,\ell, \m=0}.
\ee 
Inserting this decomposition into the centre of mass amplitude of  \eqref{CoMamp}, we obtain 
\begin{align}
	\braket{p_3,\,p_4|\widehat{\mt}|p_1,\,p_2}&= \sum_{\ell,\ell'}C_{\ell'}^*(E')\,C_\ell(E)\, P_{\ell'}(\cos\th)\, \braket{E',0\,;\,\ell'|\widehat{\mathcal{T}}|E,0\,;\,\ell}\nn\\
	&=\d(E-E')\,\d^{(3)}(0)\, \sum_{\ell,\ell'}C_{\ell'}^*(E')\,C_\ell(E)\, P_{\ell'}(\cos\th),\d_{\ell\ell'}\mt_\ell(E),\quad [\text{using} \eqref{SWigEck}]\nn\\
	&=(2\p)^4\,\d(E-E')\,\d^{(3)}(0)\,\left[16\p\,\frac{\sqrt{s}}{\sqrt{s-4m^2}}\,\sum_{\ell\,  \text{even}}\,(2\ell+1)\,f_\ell(s)\,P_\ell\left(\cos\th\right)\right],
\end{align} where $E=\sqrt{s}$ and 
\be 
f_\ell(s):=\frac{\mt_\ell(\sqrt{s})}{(2\p)^4}.
\ee The restriction of the sum being over only even values of $\ell$ is a consequence of identity of particles which require invariance of the S-Matrix under $\th\to\p-\th$. Thus we have the desired partial wave expansion  

\be \label{partial wave}
\mt(s,t)=16\p\sqrt{\frac{s}{s-4m^2}}\mathlarger{\mathlarger{\sum}}_{\substack{\ell=0\\ \ell\,\text{even}}}(2\ell+1)\,f_\ell(s)\,P_\ell(\cos\th).
\ee

\section{Causality and Axiomatic Analyticity}
\emph{Causality} is another fundamental principle of physics. Therefore, arises a natural question: how is causality encoded in the structure of the S-Matrix? This question does not have a complete and clear answer yet. Nevertheless, the clear consensus is that causality manifests itself into \emph{analyticity} properties of the S-Matrix. However, a complete understanding of S-Matrix analyticity is still lacking. In this section, we will describe some of the analyticity results that have been proven over time. Before getting into that, we would like to delineate the consequence of causality in a simple signal model, which can be regarded as a $0+1$ dimensional S-Matrix theory. The analysis of this connection between causality and analyticity in such simple system has been  is quite well known. See \cite{TollCaus} for an older discussion and  \cite{CEMZ} for a recent exposition. 
\subsection{Causality in signal model}\label{signal} We consider an initial signal which is a function of of time $F_{i}(t)$ and an outgoing signal $F_{o}(t)$. We assume that these two are related via a  linear response relation of the form
\be \label{response}
F_o(t)=\int_{-\infty}^{\infty}dt'\,S(t-t')F_i(t'),
\ee where $S(t-t')$ is the response function. Using Fourier integral representations 
\be
g(t)=\int_{-\infty}^{\infty}\frac{d\w}{2\p}\,e^{-i\w t} \,\widetilde{g}(\w),
\ee the response equation \eqref{response}  becomes in Fourier space 
\begin{align}
	\int_{-\infty}^{\infty}\frac{d\w}{2\p}\,e^{-i\w t} \widetilde{F}_o(\w)&=\int_{-\infty}^{\infty}dt'\int_{-\infty}^{\infty}\frac{d\w}{2\p}\,e^{-i\w t}\int_{-\infty}^{\infty}\frac{d\w'}{2\p}\,e^{-i\w' (t-t')}\widetilde{S}(\w')\widetilde{F}_i(\w)\nn\\
	&=\int_{-\infty}^{\infty}\frac{d\w}{2\p}\int_{-\infty}^{\infty}\frac{d\w'}{2\p}\int_{-\infty}^{\infty}dt' e^{-i (\w-\w')t'}\,e^{-i\w't'}\,\widetilde{S}(\w')\widetilde{F}_i(\w)\nn\\
	&=\int_{-\infty}^{\infty}\frac{d\w}{2\p}\int_{-\infty}^{\infty}d\w'\, \d(\w-\w') \,e^{-i\w't'}\,\widetilde{S}(\w')\widetilde{F}_i(\w)\nn\\
	&=\int_{-\infty}^{\infty}\frac{d\w}{2\p} \,\widetilde{S}(\w)\widetilde{F}_i(\w),
\end{align} thus yielding the scattering equation
\be 
\widetilde{F}_o(\w)=\widetilde{S}(\w)\widetilde{F}_i(\w).
\ee $\widetilde{S}(\w)$ can be considered the associated S-Matrix for the system.

If $F_o(t)$ is \emph{causally related} to $F_o(t')$ then $F_o(t)$ depends upon $F_i(t')$ only for $t'\le t$. Quantitatively, this can be encoded by: \be F_i(t')=0,\,t'<0\implies F_o(t)=0,\,t<0. \ee The response equation \eqref{response} then implies that $S(t)=0$ for $t<0$. Let's now analyze what this means for $\widetilde{S}(\w)$. Start with the inverse Fouirer representation
\begin{align} 
	\widetilde{S}(\w)&=\int_{-\infty}^{\infty} dt\,e^{i\w t}\, S(t)\nn\\
	&=\int_{0}^{\infty} dt\,e^{i\w t}\, S(t)
\end{align} 
We see that the integral converges for $\text{Im.}[\w]>0$. For $\text{Im.}[\w]>0$, we have an exponentially damping factor which improves the convergence. Thus $\widetilde{S}(\w)$ is \emph{analytic in the upper half plane}. The physical S-Matrix can then be \emph{defined} by considering it to be \emph{boundary value} of the analytic function $\widetilde{S}(\w)$:
\be 
\widetilde{S}_{\text{phys}}(\w)=\lim_{\e\to o}\widetilde{S}(\w+i\e).
\ee 

This shows that causality manifests itself in the analyticity properties of the S-Matrix. However, we would like to emphasize that analyticity in the upper half-plane is \emph{only necessary} for causality, \emph{not suffcient}. To understand the sufficiency condition, let us start with  
\be 
S(t)=\int_{-\infty}^{\infty}\frac{d\w}{2\p}\,e^{-i\w t} \,\widetilde{S}(\w).
\ee For causality, one needs to show that $S(t)=0$ for $t<0$. We can do the integral by closing the contour in the upper half-plane.  Then
\be 
S(t)=\oint_\mc \frac{d\w}{2\p}\,e^{-i\w t} \,\widetilde{S}(\w)-\int_{\mc_\infty^+} \frac{d\w}{2\p}\,e^{-i\w t} \,\widetilde{S}(\w),
\ee where $\mc_\infty^+$ is a semicircular arc in the at infinity centered at origin in the upper half-plane, and $\mc$ is the closed contour formed by union of this arc and the real line. The integral is analytic in the first integral because of analyticity of $\widetilde{S}(\w)$ in the upper-half plane. Therefore by Cauchy's theorem, the integral vanishes. Let us now look at the second integral.
\begin{align} 
	\mi_\infty:=&\int_{\mc_\infty^+} \frac{d\w}{2\p}\,e^{-i\w t} \,\widetilde{S}(\w)\nn\\
	=&\lim_{R\to\infty}\int_{\mc_R^+} \frac{d\w}{2\p}\,e^{-i\w t} \,\widetilde{S}(\w)
\end{align} 
Let us now look at the modulus of the integrand
\be 
\left|e^{-i\w t} \,\widetilde{S}(\w)\right|_{\mc_R^+}=\left[e^{\text{Im.}[\w]t}\left|\widetilde{S}(\w)\right|\right]_{\mc_R^+}\le e^{Rt}\,\left|\widetilde{S}(\w)\right|_{\mc_R^+}
\ee Now if we assume that $\left|\widetilde{S}(\w)\right|$ grows slower than exponential as $\text{Im}.[\w]\to\infty$, then $\mi_\infty$ vanishes for $t<0$. Thus we reach the conclusion that $\widetilde{S}(\w)$ \emph{describes a causal evolution of signals if and only if it is analytic in upper half-plane and has at most subexponential growth for in the upper half plane.} 

Thus we see that causality requirements put strong constraints on the analytic structure of the S-Matrix. One can reverse engineer this to diagnose possible non-causal behaviour in a particular model by examining the analyticity properties of the corresponding S-Matrix. We now proceed to a similar analysis for the relativistic S-Matrix. 
\subsection{Lehmann-Symanzik-Zimmermann (LSZ) axiomatics }
In order to explore the possible connection between causality and analytic structure of the scattering amplitude, we will follow the axiomatic framework for S-Matrix due to Lehmann-Symanzik-Zimmermann (LSZ) \cite{LSZ1, LSZ2}. The LSZ axiomatic framework associates a quantum field operator (below we will use the terms field to mean a quantum field operator )  $\vf(x)$ corresponding to each \emph{stable} particles in the theory. These quantum field operators transform in some irreducible representation of the Poinc{a}r\'e group with the transformation law 
\be 
\mathcal{U}(\Lambda,a)\vf(x)\mathcal{U}(\Lambda,a)^{-1}=\sum (\Lambda^{-1})\, \vf(\Lambda x+a),
\ee  where $\mathcal{U}(\Lambda,a)$ is continuous unitary representation of the Poincar\'{e} group and $\sum (\Lambda^{-1})$ is some linear operator acting only on the internal Poincar\'{e} indices which label the representation $\vf(x)$. For simplicity of description, we will restrict the present discussion to a single type of massive scalar particle with mass $m>0$. Thus the corresponding fields Poincar\'{e} scalars. Now we list out the axioms of the LSZ framework.
\begin{itemize} 
\item  {\bf Asymptotic and interpolating fields:} In order to accommodate the particle picture, LSZ framework assumes existence of asymptotic free fields:  $\vf^{+}(x)$, the \emph{in} field, and $\vf^-(x)$, the out field. These fields satisfy Klein-Gordon equations 
\be 
(\Box_x-m^2)\vf^{\pm}(x)=0.
\ee

An intuitive expectation about the relation between the general field $\vf(x)$ and the asymptotic fields is that $\vf(x)$ tends to $\vf^+(x)$ as $x^0\to-\infty$ and to $\vf^-(x)$ as $x^0\to+\infty$. LSZ axiomatics gives a precise meaning of this intuition as follows. Since the asymptotic fields satisfy the free wave equation, we can write them in terms of Fourier modes:
\be
\vf^{{\pm}}(x)=\frac{1}{(2\p)^{3/2}}\, \int\frac{d^3\mathbf{k}}{2|k^0|}\, \left[e^{ik\cdot x}\, a_{{\pm}}(\mathbf{k})+e^{-ik\cdot x}\, a_{{\pm}}^\dagger(\mathbf{k})\right].
\ee The \emph{interpolating field} $\vf(x)$ (which interpolates between $\vf^+$ and $\vf^-$) is an interacting field and does not satisfy the Klein-Gordon equations, rather it satisfy an equation of generic form 
\be \label{currentdef}
(\Box_x-m^2)\vf(x)=j(x).
\ee This can be taken as a formal definition of the current operator  $j(x)$. A  Fourier decomposition for the interpolating field $\vf(x)$ can be written as  
\be 
\vf(x)=\frac{1}{(2\p)^{3/2}}\, \int\frac{d^3\mathbf{k}}{2|k^0|}\, \left[e^{ik\cdot x}\, a(\mathbf{k}, x^0)+e^{-ik\cdot x}\, a^\dagger(\mathbf{k}, x^0)\right].
\ee The \emph{asymptotic condition} is then assumed in the form 
\be 
\lim_{x^0\to\mp\infty}\Braket{\psi_1|a(\mathbf{k}, x^0)|\psi_2}=\Braket{\psi_1|a_{\pm}(\mathbf{k})|\psi_2}
\ee 

for all states $\ket{\psi_1}, \, \ket{\psi_2}$ belonging to some \emph{dense} domain of the underlying Hilbert space. This is often paraphrased by the statement that $a(\mathbf{k}, x^0)$ \emph{converges weakly} to $a_{\pm}(\mathbf{k})$ as $x^0\to\mp\infty$. Similar statements apply for $a^\dagger$. It is to be emphasized that within LSZ axiomatics this is an \emph{assumption}.

\item {\bf State space: }Next there are assumptions regarding the state space. 
Let us recall that according to the general discussion of quantum scattering as presented in section \ref{Smatbasic} the asymptotic state spaces are $\mh_m^+,\,\mh_m^-$, and the associated Fock state is $\mf_m$. The subscript $m$ expresses the fact that we are considering state spaces corresponding to a single kind of relativistic particle with mass $m>0$. The corresponding Fock vacuum $\ket{0}$ is the unique Poincar\'{e} invariant state $\ket{0}$ defines the in and out vacuua $\ket{0^+}$ and $\ket{0^-}$ respectively. Asymptotic completeness, \eqref{asympcompl}, implies that 
\be 
\ket{0}=\ket{0^+}=\ket{0^-}.
\ee 
The vacuum satisfies 
\begin{align}
	\begin{split}
		\mathcal{U}(\Lambda, a)\ket{0}&=0,\\
		a_{\pm}\ket{0}&=0. 	\end{split}
\end{align}The incoming and the outgoing states are constructed by 
\begin{align} 
	\ket{k_1,\dots, k_n\,;\,+}&=a_+^\dagger(\mathbf{k}_1)\dots a_+^\dagger(\mathbf{k}_n)\,\ket{0},\\
	\ket{k_1,\dots, k_n\,;\,-}&=a_-^\dagger(\mathbf{k}_1)\dots a_-^\dagger(\mathbf{k}_n)\,\ket{0},\qquad -(k_i^0)^2+\mathbf{k}_i^2=-m^2
\end{align} 

\item {\bf Spectral condition:} Finally we need the  \emph{spectral condition} 
\be 
\text{Spec.}[\mmp^\m]\subset \overline{V}^+:=\{p\in\mathbb{R}^{3,1}\,|\,p^2\ge 0, p^0\ge 0\},
\ee where $\mmp^\m$ is the \emph{total} $4-$momentum operator. $\overline{V}^+$ is known as \emph{forward causal cone}. The eigenvalue $p_0^2$ of $\mmp^\m$ satisfying $p^2=0$ should only correspond to the vacuum state, i.e.
\be 
\mmp^\m\ket{0}=0.
\ee The eigenvalues $p=k$ belonging to the $1-$particle states $\ket{k}$ must satisfy $k^2=m^2>0$. 
\end{itemize}
\subsubsection*{Some mathematical pointers}
We would like provide some mathematical pointers regarding LSZ axiomatics. LSZ axiomatics can be accommodated in more general axiomatic treatment of quantum field theory. In particularly, the asymptotic structures that have been taken as assumptions in LSZ framework can be proved rigorously in Wightman's axiomatic treatment of quantum field theory \cite{LSZrig1, LSZrig2}. 

While we will not go into details, we would like to state one particular aspect of this connection with Wightman field theory. In Wightman's framework, quantum fields are actually operator valued distributions on the Minkowski space $\mathbb{M}^{3,1}$. Thus if $\mo_W(x)$ is a Wightman field then 
\be 
\mo(f):=\int d^4 x\, f(x)\, \mo(x),
\ee  $f(x)$ being a \emph{Schwartz function},   is an operator on the Hilbert space, $\mh_W$ of the theory. Let us describe briefly what a Schwartz function is. A Schwartz function is an infinitely differentiable complex-valued function such that it along with all of its derivatives falls off to zero at infinity sufficiently fast. The rate of fall can be encoded as follows.A function $f:\mathbb{M}^{3,1}\to\mathbb{C}$ is a Schwartz function if the derivative of the function $D_x^{|\a|}f(x):=\pd_{x^0}^{\a_0}\dots \pd_{x^3}^{\a_3}\, f(x),\, |\a|=\sum_{i=0}^3 \a_i$ satisfies ( This requirement can be made more mathematically precise in terms of semi-norm. However we will not need them for demonstrating the basic idea of Schwartz function.) 
\be 
\sup_{x\in \mathbb{M}^{3,1}}\, \left|D_x^{|\a|}f(x)\, x^{|\b|} \right|<\infty, \quad x^{|\b|}:=\prod_{i=0}^3\, (x^i)^{\b_i} ,\,|\b|:=\sum_{i=0}^3\b_i,\quad \forall\, |\a|,\,|\b|\in \mathbb{Z}^{\ge}. 
\ee 
Schwartz functions form a normed space, often denoted by $\ms$. The elements of the dual space, denoted by $\ms^{\vee}$, are  called \emph{tempered distributions}. In Wightman formalism, the expectation value $\braket{\psi|\mo_W(f)|\phi},\, \ket{\psi},\,\ket{\phi}\in \mh_W$, is a tempered distributions regarded as \emph{linear functional} of $f\in \ms$.  An important property of tempered distributions is that they are \emph{polynomially bounded}. A tempered distribution $\tilde{f}$ considered as a function on $\mathbb{M}^{3,1}$ can grow at most as fast as some polynomial. We can write this requirement heuristically as 
\be \label{polybound}
|\tilde{f}(x)|<||x||^n,
\ee  where $||x||:=+\sqrt{|x^2|}$ for $x$ spacelike and timelike, $||x||=|x^0|$ for $x$ lightlike, and  $n$ is some positive integer.

\subsection{Microcausality} 
The setup of quantum field theory  we can now introduce the notion of causality. Relativity requires that no signal can propagate at a speed more than that of light, which translates into the fact that space-like separated events cannot be causally connected. In quantum theory, this means that space-like separated microscopic events like field measurements should not influence each other. This can be expressed by commutativity (we are considering Bosonic fields) of space-like separated field operators:
\be \label{microcaus}
\left[\vf(x),\vf(y)\right]=0\quad \text{for}\, (x-y)^2>0.
\ee This condition is often dubbed in literature as \emph{microcausality}.
\subsection{Reduction formulae} 
Now we come to the principle result of LSZ formalism, the reduction formulae. These formulae express scattering amplitudes in terms as Fourier transforms of vacuum expectation values of field operators, and, in general, other local operators constructed of the field operators. The central ingredient of these reduction formula is the \emph{retarded product} (or $R-$product) of field operators as defined by 
\begin{align}\label{Rdef}
	\begin{split}
		&R\vf_0(x_0)\vf_1(x_1)\dots\vf_n(x_n)=(-1)^n\sum_{\p\in S_n}\th(x_0^0-x_{i_{\p(1)}}^0)\th(x_{i_{\p(1)}}^0-x_{i_{\p(2)}}^0)\dots \th(x_{i_{\p(n-1)}}^0-x_{i_{\p(n)}}^0)\\
		&\hspace{5 cm}\times \left[\left[\dots\left[\left[\vf_0(x_0),\vf_{i_{\p(1)}}(x_{i_{\p(1)}})\right],\vf_{i_{\p(2)}}(x_{i_{\p(2)}})\right],\dots\right],\vf_{i_{\p(n)}}(x_{i_{\p(n)}})\right],\\
		&R\vf_0(x_0)=\vf_0(x_0).
	\end{split}
\end{align} Here $S_n$ is the group of permutations of the set $(1,2,\dots,n)$. The $R-$product is hermitian if the fields are hermitian, and is symmetric under exchange of the fields $\vf_1(x_1),\dots,\vf_n(x_n)$ ( the field $\vf_0(x_0)$ kept fixed in its position.)  The Poincar\'{e} transformation properties of the $R-$product dictates that the vacuum expectation value $\Braket{0|R\vf_0(x_0)\vf_1(x_1)\dots\vf_n(x_n)|0}$ depends only on $\{\x_i:=x_i-x_{i-1},\,1\le i\le n\}$. The $R-$product has an extremely important property. From the definition \eqref{Rdef}, it is clear that
\be 
R\vf_0(x_0)\vf_1(x_1)\dots\vf_n(x_n)\ne 0\qquad \text{only if}\quad x_0^0>\text{max.}\{x_{1}^0,\dots, x_n^0\}.
\ee   

Now we will quote the two reduction formulae for $2\to2$ scattering, with the kinematical configuration as described in section \ref{kinem}, which are used for investigating analyticity properties of the scattering amplitude:
\begin{align}\label{reduction1}
	\mt(\{p_i\})&=-\frac{1}{2\p}\int d^4z\, e^{-i\left(\frac{p_2+p_3}{2}\right)\cdot z_\a}\,\Braket{p_4|R j_3\left(\frac{z}{2}\right)j_2^\dagger\left(\frac{z}{2}\right)|p_1},\\
	\label{reduction2}\mt(\{p_i\})&=-\frac{1}{2\p}\int d^4z\, e^{-i\left(\frac{p_3-p_4}{2}\right)\cdot z}\,\Braket{0|R j_3\left(\frac{z}{2}\right)j_4\left(-\frac{z}{2}\right)|p_1,p_2\,;\,+},
\end{align} where $j_\a(x)$ is the \emph{source current} operator, introduced in \eqref{currentdef}, corresponding to $\a^{\rm th}$ particle.  Both expressions are special cases of what are called \emph{retarded function} (upto some trivial numerical factors)
\be 
F_R(q):=\int d^Dz \, e^{-iq\cdot z}\,\th(z_0)\,\Braket{Q_f|\left[ j_\a\left(\frac{z}{2}\right),j_\b\left(-\frac{z}{2}\right)\right]|Q_i},
\ee with $\Ket{Q_i},\,\ket{Q_f}$ being any states of total $4-$momentum $Q_i, \,Q_f$, which are considered as fixed parameters and are not written explicitly as parameters of $F_R$. Similarly, one can define an \emph{advanced function} $F_A(q)$ and \emph{causal function} $F_C(q)$ 
\begin{align}
	\label{advdef} F_A(q)&:=\int d^4z \, e^{-iq\cdot z}\,\th(-z_0)\,\Braket{Q_f|\left[ j_\a\left(\frac{z}{2}\right),j_\b\left(-\frac{z}{2}\right)\right]|Q_i},\\
	\label{causdef} F_C(q)&:=\int d^4z \, e^{-iq\cdot z}\,\Braket{Q_f|\left[ j_\a\left(\frac{z}{2}\right),j_\b\left(-\frac{z}{2}\right)\right]|Q_i}.
\end{align} These functions are related by 
\be \label{causdef2}
F_C(q)=F_R(q)-F_A(q).
\ee 
Let us discuss some analytic properties of the causal function $F_c(q)$ which will be crucial subsequent discussions. Inserting two complete system of \emph{physical} states $\ket{Q_n,\,\a_n},\, \ket{Q_n',\,\a_n'},\,\sum _n \ket{Q_n,\,\a_n}\bra{Q_n,\,\a_n}$ $=\mathbb{1_2},\,\sum _n \ket{Q_n',\,\a_n^{'}}\bra{Q_n',\,\a_n'}=\mathbb{1_2} $ ($\a_n,\,\a_n'$ collectively represent any other quantum numbers.), $F_C(q)$ can be written as 
\begin{align}
	\begin{split} 
		F_C(q)=\sum_n \int d^4\, Q_n\int d^4\xi e^{-iq\cdot\xi}\,&\left[\Braket{Q_f|j_\a\left(\frac{\x}{2}\right)|Q_n,\,\a_n}\Braket{\a_n,\, Q_n|j_\b\left(-\frac{\x}{2}\right)|Q_i}\right.\\
		&\hspace{1 cm}- \left.\Braket{Q_f|j_\a\left(\frac{\x}{2}\right)|Q_n',\,\a_n'}\Braket{\a_n',\, Q_n'|j_\b\left(-\frac{\x}{2}\right)|Q_i}	\right].
	\end{split}
\end{align}
After space-time translation, this becomes 
\begin{align}
	\begin{split} 
		F_C(q)=\sum_n \,&\left[\Braket{Q_f|j_\a\left(0\right)|Q_n=\frac{Q_i+Q_f}{2}-q,\,\a_n}\Braket{\a_n,\, Q_n=\frac{Q_i+Q_f}{2}-q|j_\b\left(0\right)|Q_i}\right.\\
		&\hspace{0.5 cm}- \left.\Braket{Q_f|j_\a\left(0\right)|Q_n'=\frac{Q_i+Q_f}{2}+q,\,\a_n'}\Braket{\a_n',\, Q_n'=\frac{Q_i+Q_f}{2}0q|j_\b\left(0\right)|Q_i}	\right].
	\end{split}
\end{align}
This implies that $F_C(q)$ vanishes whenever simultaneously 
\begin{align}
	\label{causcond1}	&\sum_n \Braket{Q_f|j_\a\left(0\right)|Q_n=\frac{Q_i+Q_f}{2}-q,\,\a_n}\Braket{\a_n,\, Q_n=\frac{Q_i+Q_f}{2}-q|j_\b\left(0\right)|Q_i}=0,\\
	\label{causcond2}	&\sum_n \Braket{Q_f|j_\a\left(0\right)|Q_n'=\frac{Q_i+Q_f}{2}+q,\,\a_n'}\Braket{\a_n',\, Q_n'=\frac{Q_i+Q_f}{2}+q|j_\b\left(0\right)|Q_i}=0.
\end{align} The states $\ket{(Q_i+Q_f)/2\pm q,\a_n}$ are supposed to be physical states, i.e. $[(Q_i+Q_f)/2\pm q]^2\ge 0,\,(Q_i^0+Q_f^0)/2\pm q^0\ge 0$. Then, from the spectral condition , it follows that there will be minimal masses $\mm_-,\,\mm_+\,>0$ such that \eqref{causcond1} is \emph{not} satisfied for 
\be \label{Mplus}
\left(\frac{Q_i+Q_f}{2}- q\right)^2\le -\mm_{-}^2,
\ee and \eqref{causcond2} is \emph{not} satisfied for 
\be \label{Mminus}
-\left(\frac{Q_i+Q_f}{2}+ q\right)^2\le -\mm_{+}^2.
\ee Thus we have that 
\begin{align}\label{suppFc}
	\text{Supp.}[F_C(q)]= &\left\{q\in\mathbb{R}^{1, \,D-1}\,\Bigg|\,\left(\frac{Q_i+Q_f}{2}+ q\right)^2\le -\mm_{+}^2,\quad\left(\frac{Q_i^0+Q_f^0}{2}+ q^0\right)\ge 0\right.\nn\\
	&\left.\hspace{3 cm}\bigcup\,\quad\left(\frac{Q_i+Q_f}{2}- q\right)^2\le -\mm_{-}^2,\,\,\left(\frac{Q_i^0+Q_f^0}{2}- q^0\right)\ge 0\right\}.
\end{align} 	

We have now all the ingredients needed to deduce the analyticity properties of the scattering amplitudes, namely the retardedness property of $F_R(q)$, the advanced character of $F_A(q)$ and the support of $F_C(q)$. This last piece of information is crucial for a complete understanding of the analyticity properties of the scattering amplitude. 
\subsection{Primitive Analyticity Domain}\label{prim}
We will now work on obtaining the analyticity domains of the scattering amplitude. We will see that micro-causality, \eqref{microcaus}, controls the analytic structure of $\mt(p_1,\,p_2,\,p_3,\,p_4)$ following arguments similar to those used for the simple signal model considered in section \ref{signal}. The discussion in this section largely follows \cite{Sommer1}. 

Let us start with the retarded function 
\be 
F_R(q)=\int d^D\xi\, e^{-i q\cdot\x}\,\th(\x^0)\, \Braket{Q_f|\left[j_\a\left(\frac{\x}{2}\right),j_\b\left(-\frac{\x}{2}\right)\right]|Q_i}.
\ee 
We will find the domain of analyticity of this function in terms of complex $D-$momentum $q$. First, we observe that due to micro-causality, the matrix element of the commutator vanishes for space-like $\x$, i.e. $\x^2>0$. Because of the $\th(\x^0)$ factor, the integrand is supported on the open forward causal cone $V^+$ in $\x$ space 
\be 
V^+=\{\x\in\mathbb{R}^{1,3}\,|\x^2<0,\,\x^0>0\}.
\ee Now, let us consider complex $4-$momentum $q\in \mathbb{R}^{1,3}+i \mathbb{R}^{1,3}$. With this, the exponential factor in the integral becomes 
\be 
F_R(q)=\int d^D\xi\, e^{-i \text{Re.}[q]\cdot\x+\text{Im.}[q]\cdot\x}\,\th(\x_0)\, \Braket{Q_f|\left[j_\a\left(\frac{\x}{2}\right),j_\b\left(-\frac{\x}{2}\right)\right]|Q_i}.
\ee Due to temperedness of the field operators, the matrix element of the commutator is polynomially bounded at $|\x|\to\infty$. Then the integral will converge, and therefore $F_R(q)$ will be an analytic function if $\x\cdot\text{Im.}[q]<0$ on the support of the integrand. The integrand is supported on $\x\in V^+$. Now using the result that $\a\cdot\b<0$ for \emph{arbitrary} $\a\in V^+$  only if $\b\in V^+$, one finds that $F_R(q)$ is analytic for all $q\in \mathbb{R}^{1,3}+i\mathbb{R}^{1,3}$ with $\text{Im.}[q]\in V^+$. We define 
\be\label{ftube} 
T^+:=\{ q\in \mathbb{R}^{1,3}+i\mathbb{R}^{1,3}\,:\, (\text{Im.}[q])^2<0,\, \text{Im.}[q^0]>0,\, \text{Re.}[q]\in\md^+\subset\mathbb{R}^{1,3}\},
\ee $\md^+$ being some domain in the Minkowski spacetime $\mathbb{R}^{1,3}$ Following the terminology of many-variable complex analysis, this domain will be called \emph{forward tube}.  

Next, we investigate the advanced function $F_A(q)$ by its integral representation
\be 
F_A(q)=\int d^D\xi\, e^{-i q\cdot\x}\,\th(-\x^0)\, \Braket{Q_f|\left[j_\a\left(\frac{\x}{2}\right),j_\b\left(-\frac{\x}{2}\right)\right]|Q_i}.
\ee We observe that, because of the $\th(-\x^0)$ factor and micro-causality, the inetgrand is supported on the \emph{open backward lightcone}
\be 
V^-:=-V^+=\{\x\in\mathbb{R}^{1,3}\,|\x^2<0,\,\x^0<0\}.
\ee Following the same logic as for $F_R(q)$, we infer that 
$F_A(q)$ is analytic in the \emph{backward tube}, \be\label{btube}  T^-:=\{q\in \mathbb{R}^{1,3}+i\mathbb{R}^{1,3}\,:\,(\text{Im.}[q])^2<0,\, \text{Im.}[q^0]<0,\, \text{Re.}[q]\in\md^-\subset\mathbb{R}^{1,3}\}.\ee
Further, according to \eqref{causdef2} and \eqref{suppFc} , both the functions coincide  for \emph{certain} real vectors 
\be\label{edge} 
q\in \me\subset (\text{Supp.}[F_C(q)])^c\subset\mathbb{R}^{1,3}.
\ee 

We can use the above information to extract the analyticity domain of the scattering amplitude. To illustrate the basic logic, let us analyze a simple case. Suppose we replace $\mathbb{R}^{1,3}\to \mathbb{R}$ in above, i.e. $q\in\mathbb{C}\equiv \mathbb{R}+i\mathbb{R}$. Then we have that $F_R(q)$ is analytic in the upper half-plane, $F_A(q)$ is analytic in the lower half-plane, and they coincide on some domain $ \mg$ over the real axis. Then according to Schwarz' reflection theorem $F_R(q)$ and $F_A(q)$ can be analytically continued to a \emph{single} function $F(q)$ holomorphic on the entire complex plane. There is a generalization of this result that we can apply to our case. The corresponding theorem is known as \emph{edge of the wedge theorem} and was first discovered by Bogoliubov \cite{BogoQFT, BoGoShir}. We will state the theorem in its simplest form.  
\begin{theorem}[{\bf Bogoliubov's Edge of the wedge theorem}]
	Let $C$ be an open subset of $\mathbb{R}^{n}$ satisfying the property that $y\in C\,\implies\, \l y\in C$ for all positive $\l$. Such a set is called an open cone with the vertex at the origin. Let $E$ be an open subset of $\mathbb{R}^{n}$ called the edge. Define an wedge $W^+:=E\times iC$, and the opposite wedge $W^-:=E\times -iC$ in the complex vector space $\mathbb{C}^n\equiv \mathbb{R}^n+i\mathbb{R}^n$. The product domains are topological products. Then the two wedges $W^+$ and $W^-$ meet at the edge $E$, where we identify $E$ with the product of $E$ with the tip of the cone. 
	
	Now, suppose $f_+$ and $f_-$ are holomorphic respectively in the wedges $W^+$ and $W^-$, and they coincide on $E$, i.e. have the same boundary values on $E$
	\be 
	\lim_{\substack{y\to 0\\ y\in C }}f_+(x+iy)=\lim_{\substack{y\to 0\\ y\in -C }}f_-(x+iy).
	\ee Then $f_+$ and $f_-$ can be analytically continued to a function $f$ holomorphic in an open neighborhood of $W^+\cup E\cup W^-$. 
\end{theorem}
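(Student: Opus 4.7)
The plan is to reduce the global statement to a local one: for each point $x_0 \in E$ it suffices to produce a single holomorphic function on an open complex neighborhood $U \ni x_0$ that agrees with $f_\pm$ on $U \cap W^\pm$. Since both wedges are connected and $E$ has nonempty interior in $\mathbb{R}^n$, the identity principle for holomorphic functions then patches these local extensions into a globally defined holomorphic $f$ on a neighborhood of $W^+ \cup E \cup W^-$, and $f|_{W^\pm} = f_\pm$ by construction.

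First I would dispose of the one-variable case as a warm-up. There $W^+$ and $W^-$ are an upper and a lower half-disc meeting along a real interval $E$, and the hypothesis says that $f_+$ and $f_-$ have the same continuous boundary values on $E$. Either a direct application of Morera's theorem to contours straddling $E$, or equivalently the classical Schwarz/Painlev\'e continuation principle, then gives a single holomorphic extension to the full disc. This one-variable statement is the workhorse for the general case.

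For general $n$, I would slice by complex lines in a direction interior to $C$. Fix $y_0$ in the interior of $C$ and, for a small polydisc $U$ around $x_0 \in E$, consider the family of complex lines $\zeta \mapsto z + \zeta y_0$ parametrized by $z \in U$. For each fixed $z$ near $x_0$, the restrictions of $f_\pm$ to this line are holomorphic on upper and lower half-neighborhoods of $\zeta = 0$ with coinciding boundary values on a real segment (because $y_0 \in C$ and $-y_0 \in -C$), so the one-variable case furnishes a function $F(z, \zeta)$ holomorphic in $\zeta$ near $0$. I would realize $F$ concretely as a Cauchy integral
\begin{equation*}
F(z,\zeta) \;=\; \frac{1}{2\pi i}\oint_{\gamma} \frac{\widetilde{F}(z,\xi)}{\xi - \zeta}\, d\xi,
\end{equation*}
where $\gamma$ is a small loop around $\zeta = 0$ whose upper arc lies in $\{\operatorname{Im}\xi > 0\}$ with integrand $f_+(z + \xi y_0)$ and whose lower arc lies in $\{\operatorname{Im}\xi < 0\}$ with integrand $f_-(z + \xi y_0)$, these fitting together continuously on the real axis by hypothesis. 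Setting $f(z) := F(z,0)$ gives the candidate extension at $z$.

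The main obstacle is twofold. Analytically, I must verify that $f$ so defined is holomorphic jointly in all $n$ complex variables; for this I would differentiate under the integral in $z$ (using holomorphy of $f_\pm$ on each arc) to obtain separate holomorphy in each coordinate, and then appeal to Hartogs' theorem on separate analyticity to conclude joint holomorphy on $U$. Geometrically, I must verify that this $f$ is genuinely an \emph{extension}, i.e.\ that $f(z) = f_+(z)$ for $z \in U \cap W^+$ (and symmetrically for $W^-$). This requires a contour-deformation argument: for $z \in U \cap W^+$, one deforms $\gamma$ inside the holomorphy domain of $f_+$, using that $y_0$ lies in the open cone $C$ so that small perturbations of the contour remain within $W^+$, and collapses it to a standard small circle around $\zeta = 0$ on which Cauchy's formula recovers $f_+(z)$. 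The openness of $C$ and the polynomial/tempered boundedness implicit in the hypotheses make this deformation legitimate. With these two technical points in hand, the local extensions exist and glue, completing the proof.
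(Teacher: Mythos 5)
The paper does not actually prove this theorem: it states it as a known result and cites Bogoliubov's texts, so there is no in-paper argument to compare yours against. Judged on its own, your outline follows the classical strategy (localize at a point of $E$, settle the one-variable case by Morera/Painlev\'e, then bootstrap to $n$ variables); the one-variable step and the final gluing via the identity theorem are fine.

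The gap is in the central construction. You define
\begin{equation*}
F(z,\zeta)=\frac{1}{2\pi i}\oint_{\gamma}\frac{\widetilde{F}(z,\xi)}{\xi-\zeta}\,d\xi,\qquad \widetilde{F}(z,\xi)=f_{\pm}(z+\xi y_0),
\end{equation*}
for $z$ in a \emph{complex} polydisc $U$ around $x_0$. But $f_{\pm}$ are only defined on $W^{\pm}=E\times(\pm iC)$, and for $z=x+i\eta$ with $\eta\neq 0$ and $\eta\notin C\cup(-C)$, the point $z+\xi y_0$ has imaginary part $\eta+(\operatorname{Im}\xi)\,y_0$; near the two ends of the arcs of $\gamma$, where $\operatorname{Im}\xi\to 0$, this tends to $\eta$, which lies in neither cone. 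So the integrand is undefined for all but those $z$ whose imaginary parts already lie in $\pm C$ or vanish, and your candidate $f(z)=F(z,0)$ is only produced on $W^{+}\cup E\cup W^{-}$ itself, not on an open neighborhood of $E$ in $\mathbb{C}^n$. Restricting $z$ to real values does give a well-defined $F$, but then its domain $\{x+\zeta y_0\}$ is an $(n+2)$-real-dimensional slab rather than an open subset of $\mathbb{C}^n$, so Hartogs' theorem on separate analyticity has nothing to act on. Gaining holomorphy in the imaginary directions transverse to $\pm C$ is precisely the hard content of the theorem; the standard remedies are a nonlinear family of analytic discs attached to $E$ (Rudin's polynomial map of a polydisc whose image covers a full neighborhood of $x_0$ while its distinguished pieces land in $W^{\pm}$ and $E$), iterated one-variable continuation in $n$ independent cone directions combined with a tube-theorem argument for convex $C$, or Bogoliubov's original Fourier--Laplace/distributional proof. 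Your sketch, as written, contains none of these ingredients. A minor additional point: you invoke ``polynomial/tempered boundedness implicit in the hypotheses,'' but the statement assumes no such bound; only the coincidence of boundary values is given.
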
  
Now we can directly apply this theorem to our case at hand. Although the theorem is formulated for Euclidean space $\mathbb{R}^n$, the result also holds in the present case  where we have Minkwoski space. The respective wedges are the forward tube $T^+$, \eqref{ftube} and the backward tube $T^-$, \eqref{btube} and the edge is the real domain $\me$, \eqref{edge}.  Applying edge of the wedge theorem, we can infer that the retarded function $F_R(q)$ and the advanced function $F_A(q)$ can be analytically continued to a single function $F(q)$ analytic in a domain $\mo\subset \mathbb{R}^{1,3}+i\mathbb{R}^{1,3}$ such that $T^+\cup\me\cup T^-\subset \mo$. 

\paragraph{} We are not done yet! In order to determine the analyticity domain of the scattering amplitude in the way delineated above, one needs to find a complete set of retarded  ( and corresponding advanced) functions, and one needs to find the set of corresponding tubes of holomorphy.    This analysis gives rise to the following $32$ tubes of holomorphy.

\begin{eqnarray}\label{primitive}
	T_j^+\,=\,&\{p\,|\,\text{Im.}[p_k]\in V^+,\qquad& \text{Im.}[p_m]\in V^+,\,\,\text{Im.}[p_n]\in V^+\}\nn\\
	T_j^-\,=\,&\{p\,|\,\text{Im.}[p_k]\in V^-,\qquad& \text{Im.}[p_m]\in V^-,\,\,\text{Im.}[p_n]\in V^-\}\nn\\
	T_{jk}^+\,=\,&\{p\,|\,\text{Im.}[p_k]\in V^-,\qquad& \text{Im.}[p_k+p_m]\in V^+,\,\,\text{Im.}[p_k+p_n]\in V^+\}\nn\\
	T_{jk}^-\,=\,&\{p\,|\,\text{Im.}[p_k]\in V^+,\qquad& \text{Im.}[p_k+p_m]\in V^-,\,\,\text{Im.}[p_k+p_n]\in V^-\}
\end{eqnarray}  where $(jkmn)$ is any permutation of $(1234)$. Observe that we have not specified the restriction on $\text{Im.}[p_j]$ in the definition of $T_j^{\pm},\,T_{jk}^{\pm}$ because that is automatically obtained by the others due to the momentum conservation condition $\sum_i p_i =0$, $\text{Im.}[p_j]\in V_j^-$ in $T_j^{+},\,T_{jk}^+$ and $\text{Im.}[p_j]\in V_j^+$ in $T_j^{-},\,T_{jk}^-$. The union of the $32$ tubes is called \enquote{primitive analyticity domain}. 

The boundary values of $\mt(p)$ for $p$ in the tubes $T_j^+, \, T_j^-,\, T_{jk}^+,\, T_{jk}^-$ and $\text{Im.}[p]\to 0$ are usually called $r_j(p),\, a_j(p),\, r_{jk}(p),\, a_{jk}(p),$ respectively. These boundary values coincide in pairs on certain real domains of the faces of the  tubes. The coincidence relations have the form 
\begin{eqnarray}\label{coincidence}
	&r_j(p)=r_{jk}(p)\qquad\qquad &\text{for}\qquad p\in \{\text{Im.}[p]=0,\,p_k^2<\mathcal{M}_k^2\}\nn\\
	&a_j(p)=a_{jk}(p)\qquad\qquad &\text{for}\qquad p\in \{\text{Im.}[p]=0,\,p_k^2<\mathcal{M}_k^2\}\nn\\
	&r_{jk}(p)=a_{mn}(p)\qquad\qquad &\text{for}\qquad p\in \{\text{Im.}[p]=0,\,(p_j+p_m)^2<\mathcal{M}_{jm}^2\},\nn
\end{eqnarray}
where $\mathcal{M}_k$ and $\mathcal{M}_{jm}$ are certain threshold masses analogous to the thresholds $\mathcal{M}_+,\, \mathcal{M}_-$ of the function $F_C(q)$ defined in \eqref{Mplus} and \eqref{Mminus} respectively.

Finally,  by the edge-of-the-wedge theorem, the analytic functions in any pair of adjacent tubes and thus in all the $32$ tubes can be analytically continued to a single analytic function $\mt(p)$. Equivalently, we have defined a global analytic function $\mt(p_1,\,p_2,\,p_3,\,p_4)$ on the primitive domain of analyticity. 
\subsection{Polynomial boundedness} 
The scattering amplitude is assumed to be polynomially bounded in all the channels. In $s-$channel, this implies that for \emph{fixed physical} $t$,
\begin{align}\label{polybnd}
	|\mt(s,t)|<|s|^n,\qquad n\in \mathbb{Z}^{+}.
\end{align} It was proved by Jin and Martin \cite{JinM} that $n=2$.  Similar bounds exist in other channels. Polynomial boundedness is taken as an \emph{assumption} in LSZ axiomatic framework. However, if one considers LSZ axiomatics within the framework of Wightman field theory, then polynomial boundedness follows from the fact that correlation functions are tempered distributions, which then implies that the  scattering amplitudes are tempered distributions because they are ultimately Fourier transform of correlation functions, and Fourier transform of a tempered distribution itself is a tempered distribution.   
\subsection{Fixed-transfer Dispersion Relation }
Perhaps the most important result that follows from the causality is dispersion relation. Dispersion relation in S-Matrix theory has a long history. \cite{GGT} obtained the first dispersion relation in energy for forward scattering amplitude in a perturbative setting. \cite{GoldB1} then provided a non-perturbative proof. However, the mathematically rigorous proof was first given by \cite{BMP}. Such a proof is obtained using the primitive domain of analyticity discussed above. Before discussing such a proof, let us give a short account of what underlies a dispersion relation.

  In a fixed transfer dispersion relation, one aims to write a dispersion relation in one of the two \emph{independent} Mandelstam variables, \eqref{Mandeldef}, \eqref{onshell}, while keeping the other fixed.  For example, the most common dispersion relation is the fixed $t$ dispersion relation in $s$. Proving such a dispersion relation is \emph{equivalent to analyzing the analyticity property of $\mt(s,t)$ in complex $s$ for fixed $t$}. Let us explain briefly how that is so. The first step towards writing the fixed $t$ dispersion relation is the Cauchy integral formula
\be 
\mt(s,t)=\frac{1}{2\p i}\oint_\mc \frac{\mt(s',t)}{s-s'},
\ee where $\mc$ is some Jordan curve, oriented counter-clockwise, around $s'=s$ such that $\mt(s't)$ is analytic in the interior of $\mc$. Now, we can deform the the contour $\mc$ in accordance with the analyticity properties of $\mt(s',t)$ in complex $s'$. Now, let's assume that $\mt(s,t)$ has a pole at $s=s_0$ and has a branch point at $s=\bar{s}\in \mathbb{R}^+$ with a branch cut along $[\bar{s},\infty)$. Also assume that $\mt(s,t)$ falls to zero as $|s|\to \infty$. Then with these analyticity properties, we can deform the contour $\mc$ suitably to reach 
\be 
\mt(s,t)=-\frac{\l}{s-s_0}+\frac{1}{\p}\int_{\bar{s}}^\infty \frac{\ma_s(s',t)}{s-s'},
\ee where $ \l\equiv \text{Res.}[\mt(s,t)]_{s=s_0}$ and $\ma_s(s,t)$ is the $s-$channel absorptive part as defined in \eqref{absdef}. We have reached the simplest fixed-$t$ dispersion relation. Usually, such a dispersion relation will be valid for some \emph{physical range of} $t$, for some $t\in[-t_M,0],\, t_M>0$. 

We will give a brief account of the key rationale behind obtaining the  fixed transfer dispersion relation exploiting primitive domain of analyticity. See \cite{Sommer1} for a detailed exposition. The primitive analyticity domain is a domain in the space of four complex $4-$momenta, $\mathbb{C}^{12}$. We introduce the mass variables $\zeta_i$ defined as $k_i^2:=-\zeta_i$. $\zeta_i$ are complex in general. Only when $k_i$ are on-shell, $\zeta_i=m^2$ with $m>0$. The main argument is to first derive a dispersion relation for fixed real negative value of $t$ (the momentum transfer variable) and sufficiently large negative values of some of the mass variables $\zeta_i$. Next one shows that the amplitude can be analytically continued to mass-shell $\zeta_i=m^2$ maintaining the analyticity properties in $s$. This works for a real range $t\in I:=[-t_M,0]$. 

As mentioned in the previous section, we have $|\mt(s,t)|<|s|^2,\, |s|\to\infty$.  This  implies that we can write a twice subtracted dispersion relation in order to drop the contribution from arc at infinity. Thus the final fixed-$t$ dispersion relation  is given by 
\begin{align} 
\begin{split}
	\mt(s,t)=\mt(s_0,t)+(s-s_0)\,\frac{\pd\mt}{\pd s}(s_0, t)&+(s-s_0)^2\, \left[ \frac{1}{\p}\,\int_{4m^2}^\infty \,\frac{\ma_s(s',t)}{(s'-s_0)^2\,(s'-s)}\right.\\
	&\hspace{2 cm}\left. +\frac{1}{\p}\,\int_{4m^2}^\infty\, \frac{\ma_u(u',t)}{(4m^2-t-s_0-u')^2\,(u'-u)}
	\right].
\end{split}
\end{align} 
 Here $\ma_u$ is the $u-$channel absorptive part across the $u-$channel cut, and $s_0$ is a point on the cut complex $s-$plane.

\subsection{Lehmann analyticity} 
The reduction formula \eqref{reduction2} allows determination of analyticity properties of the amplitude in complex $t$ plane for fixed $s$. Lehmann\cite{Lehmann1} showed that $\mt(s,t)$ is analytic inside an open elliptical disc in complex $t-$plane for \emph{physical} $s>4m^2$. 
Can one  anticipate that the domain of analyticity will be elliptical without going into details of Lehmann's proof? The answer is yes. A rather simple observation about partial wave expansion provides the answer.  Recall the partial wave expansion from \eqref{partial wave}
\begin{align}
	\mt(s,t)=16\p \sqrt{\frac{s}{s-4m^2}}\mathlarger{\mathlarger{\sum}}_{\substack{\ell=0\\ \ell\,\text{even}}}(2\ell+1)\,f_\ell(s)\,P_\ell(z),
\end{align}
with \be 
z=+1+\frac{2 t}{s-\m}.
\ee and $P_\ell$ being the usual Legendre polynomials. For fixed physical $s\ge \m$, the analyticity domain in $t$ can be obtained by obtaining the analyticity domain in the complex $z$ plane. Now an argument due to Neumann \cite{WhitaWat} shows that a Legendre series, i.e. a series of the form $\sum_n \a_n\, P_n(w)$, converges in some open elliptical disc  with the boundary ellipse having foci at $z=\pm 1$. The exact shape of the ellipse, i.e. the major and minor axes, depends upon the magnitude of coefficients $\{\a_n\}$. In fact, one has the following theorem due to  a generalization of Abel's theorem for power series to Legendre series.

\begin{theorem}\label{LegThm1}
	The series 
	$ 
	\sum_n \a_n\, P_n(w)
	$ is absolutely convergent in the interior of an ellipse with foci at $\pm 1$ and semimajor axis $w_0>1$ iff 
	\be 
	{\limsup_{n\to \infty}} \,|\a_n|^{\frac{1}{n}}= \frac{1}{w_0+\sqrt{w_0^2-1}}.
	\ee 
\end{theorem}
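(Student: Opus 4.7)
The plan is to mimic the classical Cauchy--Hadamard theorem for power series, using the appropriate large-order asymptotics of Legendre polynomials $P_n(w)$ in place of the monomials $w^n$. The essential analytic input is the following classical estimate (of Laplace type): for every $w \in \mathbb{C} \setminus [-1,1]$,
\begin{equation*}
\lim_{n \to \infty} |P_n(w)|^{1/n} \;=\; |w + \sqrt{w^2 - 1}|,
\end{equation*}
where the branch of the square root is fixed by the requirement $|w + \sqrt{w^2-1}| \geq 1$, with the limit uniform on compact subsets of $\mathbb{C} \setminus [-1,1]$. I would derive this either from the Laplace integral representation $P_n(w) = \frac{1}{\pi}\int_0^\pi (w + \sqrt{w^2-1}\cos\theta)^n\, d\theta$ via the standard saddle-point argument, or more quickly from the generating function $\sum_n P_n(w) t^n = (1 - 2wt + t^2)^{-1/2}$, whose radius of convergence in $t$ is the distance from $t=0$ to the nearest singularity $t = (w \pm \sqrt{w^2-1})^{-1}$; the reciprocal of this radius gives exactly $|w + \sqrt{w^2-1}|$.

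Next I would introduce the Joukowski substitution $w = \tfrac{1}{2}(\zeta + \zeta^{-1})$, which maps the exterior of the unit disc in $\zeta$ conformally onto $\mathbb{C} \setminus [-1,1]$ and sends the circle $|\zeta|=R$ (for $R>1$) to the ellipse with foci at $\pm 1$ and semi-major axis $\tfrac{1}{2}(R + R^{-1})$. Setting $\tfrac{1}{2}(R + R^{-1}) = w_0$ and solving gives $R = w_0 + \sqrt{w_0^2-1}$. A direct check shows that under this map $w + \sqrt{w^2-1} = \zeta$ (with $|\zeta|\geq 1$), so the asymptotic becomes $|P_n(w)|^{1/n} \to |\zeta|$, and the statement ``$w$ lies in the interior of the ellipse with semi-major axis $w_0$'' is equivalent to $|\zeta| < R$.

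With both ingredients in place I would apply the root test to $\sum_n \alpha_n P_n(w)$:
\begin{equation*}
\limsup_{n \to \infty} |\alpha_n P_n(w)|^{1/n} \;=\; |\zeta| \cdot \limsup_{n \to \infty} |\alpha_n|^{1/n}.
\end{equation*}
Absolute convergence at a point $w$ inside the ellipse is therefore equivalent to $\limsup_n |\alpha_n|^{1/n} < 1/|\zeta|$. Demanding absolute convergence throughout the open ellipse $|\zeta|<R$, while requiring divergence outside it (so that $R$ is truly the ``radius'', and not some larger ellipse), forces
\begin{equation*}
\limsup_{n\to\infty} |\alpha_n|^{1/n} \;=\; \frac{1}{R} \;=\; \frac{1}{w_0 + \sqrt{w_0^2-1}},
\end{equation*}
which is the claimed identity. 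The converse direction is identical: given this value of the limsup, the root test yields absolute convergence on every compact subset of $|\zeta| < R$ and divergence for $|\zeta| > R$.

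The main obstacle is really the asymptotic step: one needs not only the pointwise limit $|P_n(w)|^{1/n} \to |w + \sqrt{w^2-1}|$, but its uniformity on compact subsets of $\mathbb{C} \setminus [-1,1]$, because we wish to conclude convergence on the \emph{whole} interior of the ellipse from a Cauchy--Hadamard-type criterion. Care must also be taken with the branch of $\sqrt{w^2-1}$ and with the degenerate segment $[-1,1]$, where $|P_n|$ is only polynomially bounded so that the asymptotic formula above breaks down; fortunately this segment is the common interior chord of all confocal ellipses considered and causes no trouble once one works through the Joukowski coordinate $\zeta$.
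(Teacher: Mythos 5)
The paper does not actually prove this theorem: it quotes it as a classical result of Neumann, citing Whittaker and Watson, so there is no in-paper argument to compare against. Your proof is the standard one and is correct --- the Laplace integral gives the upper bound $|P_n(w)|\le |\zeta|^n$ needed for absolute convergence inside the ellipse, the generating-function (or saddle-point) asymptotic supplies the matching lower bound $|P_n(w)|^{1/n}\to|\zeta|$ needed for divergence outside, and the Joukowski map converts confocal ellipses into circles exactly as you describe; note also that your identity $\limsup_n|\alpha_n P_n(w)|^{1/n}=|\zeta|\,\limsup_n|\alpha_n|^{1/n}$ legitimately uses that $|P_n(w)|^{1/n}$ has a genuine limit off $[-1,1]$, not merely a limsup. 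The one point worth flagging is that, read literally, the ``only if'' direction of the statement is false (a smaller $\limsup$ still yields absolute convergence in the interior of the ellipse, just also in a larger one); your reading of $w_0$ as the semimajor axis of the \emph{maximal} ellipse of absolute convergence is the intended one, and it is what your argument actually establishes.
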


Lehmann, starting from the reduction formula \eqref{reduction2}, showed  that  for physical  $s\ge 4m^2$, the scattering amplitude $\mm(s,z)$ is analytic in complex $z-$plane inside  an ellipse with foci at $\pm 1$ and semi-major axis 
\be \label{LehmannS}
z_s=\left[1+\frac{36m^4}{s(s-4m^2)}\right]^{\frac{1}{2}}.
\ee 
This ellipse is called \textit{small Lehmann ellipse}, $\me_\ml^{(S)}(s)$.  It is called small ellipse because Lehmann also showed the existence of a larger confocal ellipse inside which the absorptive part $\ma_s$ is analytic. This latter ellipse is called \textit{large Lehmann ellipse}, $\me_\ml^{(L)}(s)$. The semi-major axis of this larger ellipse is related to that of the small Lehmann ellipse and is given by 
\be \label{LehmannL}
z_l=2z_s^2-1=1+\frac{72 m^4 }{s(s-4m^2)}.
\ee 


\section{Hermitian analyticity}
Another important property of the scattering amplitude $\mt(s,t)$ is the \emph{hermitian analyticity}  property, which reads
\be \label{hermitian}
\mt(s,t)^{*}=\mt(s^*,\, t^*).
\ee This property follows as a consequence of TCP theorem \cite{Olive}. The basic idea is that TCP theorem effectively converts advanced functions into retarded functions. \section{Crossing properties}\label{crossingdef}
An extremely important property of scattering amplitude is its crossing property. Crossing will be central to the analysis presented in the main body of this thesis. Crossing refers to the fact that if we analytically continue the scattering amplitude in one channel to the physical domain of another channel, we will get the scattering amplitude in that channel. Let us make this concrete. 

We recall from \eqref{physamp} that corresponding to three different domains of Mandelstam variables, there are three different \emph{physical amplitudes} with disjoints support in Mandelstam variables,  $\mt^s(s,t),\, \mt^t(s,t),\, \mt^u(s,t)$. Crossing then posits existence of a \emph{global} analytic function $\mt:\mathbb{C}^2\to\mathbb{C}^2$  with a domain of analyticity $\widetilde{\md}\subset \mathbb{C}^2$ satisfying 
\begin{align}
	\widetilde{\md}\cap \text{Supp.}\mt^i\neq \emptyset, \qquad i=s,\,t,\,u.
\end{align}

This is also paraphrased by the statement that the \emph{physical amplitudes} $\{\mt^i(s,t)\}$ for scattering in different channels are actaully different \emph{boundary values} of a global analytic function, the mathematical scattering amplitude $\mt(s,t)$. Crossing of $2\to 2$ scattering amplitudes was proved rigorously in \cite{BEGcrossing}. 

For elastic scattering of identical neutral scalar particles (the case we are considering), crossing translates to stronger constraint on  $\mt(s,t)$,
\be 
\mt(\p(s),\, \p(t),\, \p(u))=\mt(s,\,t,\,u),\qquad \forall\,\, \p\in S_3,\quad s+t+u=4m^2.
\ee Here $S_3$ is the group of permutations of three objects. This invariance under the permutation of the Mandelstam variables is dubbed as \emph{crossing symmetry}. We want to emphasize that this straightforward symmetry structure is only there for elastic scattering of identical neutral scalar particles. In general, crossing does not result in such strong constraints.   
\section{Unitarity}
The S-Matrix is unitary. To see it in details recall the definition of the S-Matrix
\be 
\ms_{\a\b}=\Braket{\Psi_\a^-|\Psi_\b^+}.
\ee Now applying completeness property of the in states \eqref{Asympcomp}
\be 
\int d\g\, \ms_{\a\g}\,\ms_{\b\g}^*=\int d\g \Braket{\Psi_\a^-|\Psi_\g^+}\Braket{\Psi_\g^+|\Psi_\b^-}=\Braket{\Psi_\a^-|\Psi_\b^-}=\d(\a-\b)
\ee In terms of the operator $\widehat{\ms}$, this translates to $\widehat{\ms}\widehat{\ms}^\dagger=\widehat{\mathbb{1}}$. Similarly, using the completeness property of in states we can prove 
\be 
\int d\g\, \ms_{\a\g}*\,\ms_{\b\g}=\d(\a-\b),
\ee equivalently $\widehat{S}^\dagger \widehat{\ms}=\widehat{\mathbb{1}}$.
\be 
\hat{\ms}\widehat{\ms}^\dagger=\widehat{\ms}^\dagger \widehat{\ms}=\widehat{\mathbb{1}}.
\ee 

Using the decomposition 
\be 
\widehat{\ms}=\widehat{\mathbb{1}}+i\widehat{\mt}, 
\ee the unitarity becomes
\be 
\frac{1}{i}\left[\widehat{\mt}-\widehat{\mt}^\dagger\right]=\widehat{\mt}\widehat{\mt}^\dagger.
\ee For $2\to 2$ scattering this translates to 
\be \label{unitexp1}
\frac{1}{i}\Braket{\Psi|\widehat{\mt}-\widehat{\mt}^\dagger|\Psi}=\Braket{\Psi|\widehat{\mt}\widehat{\mt}^\dagger|\Psi},
\ee where $\ket{\Psi}$ is an arbitrary two-particle state, \emph{not} necessarily a state of definite individual momenta.  Next, due to the fact that  $\widehat{\mt}\widehat{\mt}^\dagger$ is a positive semi-definite operator, we have for any physical states $\ket{\Psi},\, \ket{\Phi}$
\be \label{Unineq1}
\Braket{\Psi\Big|\,\widehat{\mt}\,\Big|\Phi}\Braket{\Phi|\widehat{\mt}^\dagger|\Psi}=\left|\Braket{\Psi\Big|\,\widehat{\mt}\,\Big|\Phi}\right|^2\ge 0.
\ee Now can write for the right hand side of  \eqref{unitexp1}
\begin{align}
	\Braket{\Psi|\widehat{\mt}\widehat{\mt}^\dagger|\Psi}=\sum_{n=2}^{\infty}\sum_{{\Phi_n}} \left|\Braket{\Psi|\widehat{\mt}|\Phi_n}\right|^2
	\ge  \sum_{{\Phi_2}} \left|\Braket{\Psi|\widehat{\mt}|\Phi_2}\right|^2,
\end{align}	where we have used a resolution of identity on the entire state space ($\ket{\Phi_n}$ is $n-$particle state) in writing the first equality, and the sum in second inequality is over any \emph{complete} set of $2-$particle states. We have finally then 
\be 
\frac{1}{i}\Braket{\Psi|\widehat{\mt}-\widehat{\mt}^\dagger|\Psi}=\sum_{n=2}^{\infty}\sum_{{\Phi_n}} \left|\Braket{\Psi|\widehat{\mt}|\Phi_n}\right|^2
\ge  \sum_{{\Phi_2}} \left|\Braket{\Psi|\widehat{\mt}|\Phi_2}\right|^2
\ee  

We will discuss two important consequence of unitarity that will fare prominently in our subsequent analysis.
\begin{itemize}
	\item {\bf  Optical theorem}\\
	Optical theorem is the relation between the total scattering cross-section $\s_{\text{tot}}$ and absorptive part of forward scattering amplitude, $\ma_s(s,t=0)$. To obtain this relation we put $\ket{\Psi}=\ket{p,q}$ into the unitarity equation \eqref{unitexp1}. The matrix element 
	\be 
	\braket{p,q|\widehat{\mt}|p,q}
	\ee  then describes forward scattering amplitude for $A+A\to A+A$ 
	\be 
	\braket{p,q|\widehat{\mt}|p,q}=(2\p)^4\d^{(4)}(0)\, \mt(s, t=0)
	\ee so that the left hand side of the unitarity equation becomes
	\be 
	\frac{1}{i}\Braket{p,q|\widehat{\mt}-\widehat{\mt}^\dagger|p,q}=(2\p)^4\, \d^{(4)}(0)\, \ma_s(s, t=0).
	\ee 
	The right hand side of the unitarity equation becomes
	\begin{align}
		\Braket{p, q|\widehat{\mt}\widehat{\mt}^\dagger|p, q}&=\sum_{n}^{\infty}\sum_{{\Phi_n}} \left|\Braket{p, q|\widehat{\mt}|\Phi_n}\right|^2\nn \\
		=&(2\p)^8\, \d^{(4)}(0)\,\sum_n\sum_{\Phi_n}\,\d^{(4)}(p+q-P_n)\, |\mt_{2\to n}|^2.
	\end{align} Here $P_n$ is the total $4-$momentum for the $n$ particle state $\ket{\Phi_n}$ and $\mt_{2\to n}$ is the amplitude for $2\to n$ scattering. Now the cross-section for $2\to n$ process is given by 
	\be 
	\s_{2\to n}=\frac{1}{2\sqrt{s(s-4m^2)}}\, \left[(2\p)^4\sum_{\Phi_n}\,\d^{(4)}(p+q-P_n)\, |\mt_{2\to n}|^2\right]
	\ee and the total scattering cross-section is given by 
	\be 
	\s_{\text{tot}}^{A+A\to A+A}=\sum_n\, \s_{2\to n}.
	\ee Collecting everything together , we get the optical theorem 
	
	\be 
	\s_{\text{tot}}(s)=\frac{1}{2\sqrt{s(s-4m^2)}}\ma_s(s,t=0).
	\ee Using the partial wave expansion of the absorptive part $\ma_s$ we obtain 
	\be\label{opthm} 
	\s_{\text{tot}}(s)=\frac{8\p}{s-4m^2}\mathlarger{\mathlarger{\sum}}_{\substack{\ell=0\\ \ell\,\text{even}}}(2\ell+1)\,a_\ell(s).
	\ee 
	
	\item {\bf Partial wave unitarity bound} \\ 
	We start with the inequality 
	\be \label{Unineq2}
	\frac{1}{i}\Braket{\Psi|\widehat{\mt}-\widehat{\mt}^\dagger|\Psi}
	\ge  \sum_{{\Phi_2}} \left|\Braket{\Psi|\widehat{\mt}|\Phi_2}\right|^2.
	\ee Next choosing $\ket{\Psi}= \ket{E,0\,;\,\ell,0},\, \ket{\Phi_2}=\ket{\varepsilon,\vec{\rho}\,;\,\l,\m}$, and using the  completeness relation \eqref{2completeness} one gets 
	\begin{align} 
		\sum_{{\Phi_2}} \left|\Braket{\Psi|\widehat{\mt}|\Phi_2}\right|^2&=\sum_{\l}\sum_{\m=-\l}^\l\int \frac{d\varepsilon\,d^3\vec{\rho}}{(2\p)^4}\,\left|\Braket{E,0\,;\,\ell,0|\widehat{\mt}|\varepsilon,\vec{\rho}\,;\,\l,\m}\right|^2,\nn\\
		&=(2\p)^4\,\d^{(4)}(0)\, \left|f_\ell(s)\right|^2
	\end{align} 
	Next we look at the left hand side of the inequality \eqref{Unineq2}. Using 
	\be 
	\frac{1}{i}\Braket{E,0\,;\,\ell,0|\widehat{\mt}-\widehat{\mt}^\dagger|E,0\,;\,\ell,0}=\frac{(2\p)^4}{i}\, \d^{(4)}(0)\,\left[f_\ell(s)-f_\ell(s)^*\right].
	\ee Next Hermitian analyticity of scattering amplitude \eqref{hermitian} implies 
	\be 
	f_\ell(s)^*=f_\ell(s^*),
	\ee using which we infer that 
	\be 
	\frac{1}{i}\Braket{E,0\,;\,\ell,0|\widehat{\mt}-\widehat{\mt}^\dagger|E,0\,;\,\ell,0}=(2\p)^4\, \d^{(4)}(0)\, \left[2 a_\ell(s)\right].
	\ee Finally we obtain the partial wave unitarity bound 
	\be 
	0\le |f_\ell(s)|^2\le 2\, a_\ell(s), \qquad \forall\,\ell\ge 0, \,\,\, s\ge 4m^2.
	\ee An immediate consequence of this unitarity bound is 
	\be \label{apos}
	0\le a_\ell(s)\le 2, \qquad \forall\,\ell\ge 0, \,\,\, s\ge 4m^2.
	\ee 
	When $4m^2<s<s_2$, $s_2$ being the second threshold which equals occurs at $16m^2$ if the theory is $\mathbb{Z}_2$ invariant, else at $9m^2$, we have equality: 
	\be \label{eluni}
	|f_\ell(s)|^2\le 2\, a_\ell(s).
	\ee This is known as \emph{elastic unitarity}.

	In all of our subsequent analysis we will repeatedly use the unitarity in the form of \emph{positivity} of the partial wave coefficients $a_\ell(s)$. The upper bound plays central role in deriving high energy bounds on total scattering cross-section, like the Froissart-Martin bound, which we will discuss soon. 
\end{itemize}
\section{More Analyticity: Martin's Extension of Analyticity Domain}
The key ingredient in Martin's \cite{MartinExt1} extension of analyticity domains for $\mm$ is \emph{unitarity}. It is to be stressed that so far unitarity has not been used in deriving the  Lehmann ellipses.  Lehmann only used  analytic consequences of micro-causality.  Let us first state Martin's theorem on the extension of domain, and then we will discuss the important consequences of this theorem.
\begin{theorem}[{\bf Martin's extension theorem}]\label{MartinTh}
	Consider an \textbf{unitary}  elastic scattering amplitude $\mt(s,t)$, and the corresponding absorptive part $\ma_s(s,t)$, for the process $A+A\to A+A$ of massive particles $A$ satisfying analyticity properties:
	\begin{itemize}
		\item [I.] {\bf Dispersive representation: }$\mt(s,t)$ satisfies fixed $t$ dispertion relations for $-t_{M}<t\le0,\,t_M>0$ where the amplitude is \textbf{analytic in complex $s$ plane} outside of the unitarity cuts;
		\item[II.] {\bf Lehmann analyticity: }For \textbf{fixed physical} $s$, the amplitude $\mt(s,t)$ and the absorptive part $\ma(s,t)$ are analytic in the corresponding Lehmann ellipses;
		\item[III.] {\bf Bros-Epstein-Lehmann-Glaser  (BELG) analyticity:} From the results of Bros, Epstein and Glaser, and Lehmann \cite{BEG1, Lehmann2}, in the neighbourhood of any point $s_0, t_0$, $-t_M<t\le0$, $s_0$ outside the cuts, there is analyticity in both $s$ and $t$ in 
		\be 
		|s-s_0|<\eta(s_0,t_0),\quad |t-t_0|<\eta(s_0,t_0).
		\ee  
		A priori the size of the neighbourhood can vary with $s_0,t_0$.
	\end{itemize} 
	Then the amplitude $\mt(s,t)$ is analytic (except for possible fixed-$t$ poles) in the topological product domain 
	\be 
	\mathcal{D}_M:=\{(s,t):s\not\in \left[[4m^2,\infty)\cup(-\infty,-t]\right]\,\times\, |t|<R\}, 
	\ee $R$ being independent of $(s,t)$. 
\end{theorem}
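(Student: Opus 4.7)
The plan is to upgrade the fixed-$t$ dispersion relation from the narrow strip $-t_M<t\le 0$ to a full disc $|t|<R$ by using unitarity positivity as a booster. The crux is that, although Lehmann's ellipses (Hypothesis II) shrink to the segment $[-1,1]$ as $s\to\infty$ in the variable $z=1+2t/(s-4m^2)$, unitarity will let us convert analyticity on these shrinking ellipses into a uniform-in-$s$ bound on the absorptive part $\mathcal{A}_s(s,t)$ on a \emph{fixed} real interval $[0,t_0]$, $t_0>0$. Once that is in hand, the dispersion relation of Hypothesis I and the joint local analyticity of Hypothesis III conspire to yield the desired product domain $\mathcal{D}_M$.

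First I would record the positivity consequence of unitarity already quoted in \eqref{apos}: the partial wave absorptive coefficients satisfy $a_\ell(s)\ge 0$ for real $s\ge 4m^2$. Combined with the elementary inequality $|P_\ell(z)|\le P_\ell(x_0)$ valid for every $z$ inside the ellipse with foci $\pm 1$ and semi-major axis $x_0\ge 1$, this promotes the partial wave expansion of $\mathcal{A}_s(s,t)$ to
\begin{equation*}
|\mathcal{A}_s(s,t)|\ \le\ \mathcal{A}_s\!\left(s,t_+(s,x_0)\right),\qquad t_+(s,x_0):=\tfrac{1}{2}(s-4m^2)(x_0-1),
\end{equation*}
for every complex $t$ lying inside the ellipse of semi-axis $x_0$, provided $x_0$ lies inside the large Lehmann ellipse of semi-axis $z_l(s)$ from \eqref{LehmannL}. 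Thus analyticity on a (complex) ellipse is traded for evaluation at one real positive point.

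Second, I would use Hypothesis I plus the Jin--Martin polynomial bound $|\mathcal{M}(s,t)|\le C|s|^2$ to control $\mathcal{A}_s(s,t)$ for $t$ in the original interval $[-t_M,0]$. The fixed-$t$ dispersion relation, regarded as a function of $s$ in the cut plane, is analytic and polynomially bounded there uniformly in $t\in[-t_M,0]$. Together with Lehmann (Hypothesis II) this means $\mathcal{M}(s,t)$ is jointly analytic near any point $(s_0,t_0)$ with $s_0$ off the cuts and $t_0$ real negative small; BELG (Hypothesis III) then gives a small bi-disc of joint analyticity. The iterative step is now to feed the previous paragraph's bound into the dispersion integral to show $\mathcal{A}_s(s,t_+)$ is again dominated by $C' |s|^2$ for some real $t_+>0$ independent of $s$. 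Concretely, fix $0<t_*<t_+(s_{\mathrm{low}},z_l(s_{\mathrm{low}}))$ at some low reference energy and extend to all $s$ by bootstrapping: at each step the new, slightly wider disc in $t$ yields via the dispersion relation a new polynomial bound, which via positivity again widens the disc; one controls the iteration by verifying that the ratio of successive semi-axes is bounded below by a universal constant, so infinitely many iterations accumulate to a fixed radius $R>0$.

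The main obstacle will be exactly this iteration: showing that the bootstrap does not stall, i.e.\ that the enlarged disc at step $n+1$ is \emph{strictly larger} than the one at step $n$ by a factor not depending on $s$, and that the constants $C_n$ in the polynomial bound $|\mathcal{A}_s(s,t)|\le C_n|s|^2$ do not blow up. The control comes from using the dispersion relation to write
\begin{equation*}
\mathcal{A}_s(s,t_+)\ \le\ \frac{1}{\pi}\int_{4m^2}^\infty \frac{\mathcal{A}_s(s',t_+)}{(s'-s_0)^2(s'-s)}(s-s_0)^2\, ds'+\cdots,
\end{equation*}
and using Hypothesis III to gain a factor that shrinks $C_n$. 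The rest is the well-known union-of-discs argument that promotes analyticity on $\{s\notin\text{cuts}\}\times(-t_M,0]$ together with the enlarged fixed-$t$ discs into the advertised topological product $\mathcal{D}_M$, with the possibility of isolated fixed-$t$ poles surviving from bound-state contributions to the dispersive representation.
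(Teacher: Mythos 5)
Your opening move---using unitarity to dominate $|\mathcal{A}_s(s,t)|$ on a Lehmann-type ellipse by its value at the right extremity---is a correct and standard ingredient, but the engine you build around it (the ``bootstrap'' of successively widening discs) is not the mechanism of the proof and, as written, has a genuine gap. The actual argument obtains the uniform radius $R$ in a single step, not by iteration: fix \emph{one} real point $s_0<4m^2$ off the cuts; BELG analyticity gives a bidisc of joint analyticity of radius $R(s_0)$ there, so Cauchy's inequalities bound $|\partial_t^n\mathcal{M}(s_0,0)|\le M\,n!/R(s_0)^n$. One then justifies (via the positivity $\partial_t^n\mathcal{A}_s(s',0)\ge 0$, the mean-value theorem and dominated convergence) that the $t$-derivatives commute with the dispersion integral, so that $\partial_t^n\mathcal{M}(s_0,0)=\frac{1}{\pi}\int_{4m^2}^\infty ds'\,\partial_t^n\mathcal{A}_s(s',0)/(s'-s_0)$ is an integral of a \emph{positive} integrand bounded by $M\,n!/R(s_0)^n$. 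That single bound then controls $|\partial_t^n\mathcal{M}(s,0)|\le\mu(s,s_0)\,M\,n!/R(s_0)^n$ for \emph{every} $s$ in the cut plane, with $\mu(s,s_0)=\sup_{s'>4m^2}|(s'-s_0)/(s'-s)|$, so the Taylor series in $t$ converges in $|t|<R(s_0)$ uniformly on compact subsets of the cut $s$-plane; Hartog's fundamental theorem then upgrades separate analyticity to joint analyticity on the product domain. Your proposal never makes this passage from separate to joint analyticity, and Hartog's theorem (or an equivalent device) is indispensable here.

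The bootstrap itself does not close. First, a polynomial bound $|\mathcal{A}_s(s,t_+)|\le C'|s|^2$ at a fixed real $t_+>0$ is a statement about growth, not about analyticity: boundedness at or near a point does not enlarge the domain of holomorphy in $t$, so no mechanism is actually supplied for why the disc at step $n+1$ is larger than the disc at step $n$. Second, even granting a widening step, the claim that the ratio of successive semi-axes is bounded below by a universal constant---so that the radii accumulate to a fixed $R>0$ rather than stalling---is precisely the hard part, and you flag it yourself without proving it. Note also that the reach in $t$ supplied by the large Lehmann ellipse, $t_+(s,z_l(s))=36m^4/s$, shrinks to zero as $s\to\infty$, so there is no fixed positive $t_*$ valid at all energies with which to seed the iteration. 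An iterative enlargement does appear in Martin's work, but only \emph{after} a uniform $R>0$ has been established, as a way of pushing $R$ up to $4m^2$ using elastic unitarity; it cannot substitute for the existence step.
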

This dispersive representation is valid in the complex $s-$plane barring the $s-$channel threshold cut starting at $s=4m^2$ and the $u-$channel threshold cut starting at $u=4m^2$, or equivalently, at $s=-t$, for fixed $t\le0$. Actually what we have is the analyticity of $\mm(s,t)$ in the topological product domain $\md_i\subset\mathbb{C}^2$
\be 
\md_i:=\{(s,t)\in \mathbb{C}^2|s\in\mathbb{C}\,\backslash\, \left[[4m^2,\infty)\cup(-\infty,-t]\right],\, t\in [-t_M,0] \}.
\ee 
Martin's theorem proves the existence of a domain of the form 
\be 
\md_e:=\{(s,t)\in \mathbb{C}^2|s\in\mathbb{C}\,\backslash\,\left[[4m^2,\infty)\cup(-\infty,-t]\right],\, t\in \D \},
\ee  where $\D\subset\mathbb{C}$ is a complex domain.  The real interval $[-t_M,0]$ need not lie \emph{entirely} within $\D$ . However it is (naturally) \emph{expected} that $[-t_M,0]\cap\D\ne \emptyset$.

We will now present a proof of the theorem following \cite{martin-II}.

The greatest significance of Martin's extension theorem is that it shows existence of analyticity domain in $t$ for both $\mm$ and $\ma_s$ which \emph{does not} vary with $s$. In particular, as evident from \eqref{LehmannS}, \eqref{LehmannL}, the Lehmann domains in $t-$plane shrink to zero as $|s|\to\infty$.  Martin's theorem establishes that even when $|s|\to \infty$ there exist finite, \emph{non-vanishing} domains of analyticity in $t$ for \emph{both} $\mm(s,t)$ and $\ma_s(s,t)$.  As a consequence of this theorem we can find out an enlarged domain in $t-$plane where the fixed-$t$ dispersion relations hold. For the rest of the discussion we will consider elastic scattering of pions. In this case Martin proved that $R=4m^2$. Then we have that $\ma_s(s,t)$ is analytic in the open disc $|t|<4m^2$. However, $\ma_s$ is also expendable in terms of Legendre polynomials with \emph{positive coefficients} (due to unitarity). Now a Legendre polynomial expansion with positive coefficients \emph{must} have a singularity at the extreme right of the largest ellipse of convergence. Hence $\ma_s(s,t)$ having no singularities for $0\le t<R$ must be analytic in an open elliptical disc bounded by an ellipse  with foci at $t=0, \,4m^2-s$ and right extremity at $\m$. We will call this ellipse  \emph{the Martin ellipse}, $\me_M(s)$. 

But, $\ma_s(s,t)$ is also analytic in the large Lehmann ellipse $\me_L(s)$ with  foci at $t=0, 4m^2-s$ and right extremity at $256m^2s^{-1}$. 
Then the required enlarged domain is given by 
\be
\md=\bigcap_{\m\le s<\infty} \left(\me_L(s)\cup\me_U(s)\right).
\ee
From the dimensions of the ellipse, it is quite evident that $\me_U(s)\subset\me_L(s)$ for $s< 64m^2$, and $\me_L(s)\subset\me_U(s)$ for $s> 64m^2$. Also $\me_U(64m^2)\subset\me_U(s)$ for $s\ge 64m^2$, and $\me_L(64m^2)\subset \me_L(s)$ for $s\in [4m^2,16m^2]$. Thus finally
\be 
\md=\bigcap_{16m^2\le s\le64m^2}\me_L(s).
\ee 

This is not the whole story. The analyticity domain can further be extended using elastic unitarity, \eqref{eluni}, see \cite{martin-II} for more details. All these works are aimed at establishing the conjectured \emph{Mandelstam analyticity}, also known as \emph{maximal analyticity} (to be discussed below). 

\section{Froissart-Martin Bound}\label{Yndurain}
One of the important applications of Martin's extension theorem is  a derivation of the Froissart-Martin bound for the total scattering cross-section in usual 3+1 dimensional Minkowski spacetime. It is straightforward to generalize this derivation to any spacetime dimension. We consider $2\to2$ scattering of identical massive (mass being $m$) scalar spinless particles. Following \cite{Yndurain}, we present with the derivation of the Froissart-Martin bound for the averaged total scattering cross-section given by, 
\be 
\overline{\s}(s)=\frac{1}{s-4m^2}\int_{4m^2}^{s}ds'(s'-4m^2)\s_{\text{tot}}(s').
\ee  
\paragraph{}
The principal ingredients for obtaining the bound are as following
\begin{itemize}
	\item {\bf Unitarity:} Unitarity will be employed in two avatars. One is the optical theorem which enables to write a partial wave expansion $\overline{\s}(s)$: 
	\be 
	\overline{\s}(s)=\frac{8\p}{s-4m^2}\int_{4m^2}^{s}ds'\,\left[\sum_{\substack{\ell=0\\ \ell \text{even}}}^{\infty} (2\ell+1)\,a_\ell(s')\right].
	\ee 
	The most important bit is the partial wave unitarity bound, which ultimately bounds the scattering cross-section, 
	$$ 
	0\le a_\ell(s)\le 2,\qquad,\forall\,\ell\ge 0,\,s\ge 4m^2. 
	$$ 
	\item {\bf Martin analyticity:} The absorptive part $\ma_s (s,t)$ is analytic for $|t|<4m^2$ when $s$ lies in the cut plane. 
	\item {\bf Polynomial boundedness:} There exists a positive integer $n$ such that the integral quantity, 
	\be 
	a_{n}:=\int_{4m^2}^\infty \frac{d\overline{s}}{\overline{s}^{n+1}}\ma_s(\overline{s},t=t_0),\qquad 0<t_0<4m^2.
	\ee 
	The $t_0$ is chosen to lie within the Martin ellipse $\me_M$. This ensures that the partial wave expansion of $\ma_s$ converges.
\end{itemize}

\paragraph{} Next, using the positivity of the Legendre polynomials $P_\ell(x)$ for $x>1$ and the property that $P_\ell(x)$ is a strictly monotonically increasing fucntion of $\ell$ for $x>1$, we reach the following inequality, after some straightforward algebra,
\be \label{fineq1} 
a_n\ge\sqrt{\frac{s}{s-4m^2}}\,s^{-n-1}\, P_{L+2}\left(1+\frac{2t_0}{s-4m^2}\right)\sum_{\ell=L+2}^{\infty}(2\ell+1)\,\int_{4m^2}^{s}d\overline{s}\,a_{\ell}(\overline{s})
\ee for some $L>0$. Next we shift our attention to $\overline{\s}(s)$. We write the the same as 
\be 
\overline{\s}(s)
=
\frac{8\p}{s-4m^2}\,\sum_{\substack{\ell=0\\\ell \text{even}}}^L(2\ell+1)\int_{4m^2}^{s}d\overline{s}\,a_\ell(\overline{s})
+
\frac{8\p}{s-4m^2}\,\sum_{\ell=L+2}^{\infty}(2\ell+1)\int_{4m^2}^{s}d\overline{s}\,a_\ell(\overline{s}).
\ee Now using the partial wave unitarity bound  and \eqref{fineq1} one obtains quite straightforwardly
\be \label{sbarineq1}
\overline{\s}(s)\le \frac{16\p}{s-4m^2}\left[(s-4m^2)\frac{1}{2} (L+1) (L+2)+\sqrt{\frac{s-4m^2}{s}}\,\frac{s^{n+1}}{P_{L+2}\left(1+\frac{2t_0}{s-4m^2}\right)}\,a_n\right].
\ee 
\paragraph{} Let us now focus on the second term above. Using the following inequality satisfied by the Legendre polynomial
\be 
P_\ell(x)\ge \frac{\varphi_0}{\p}\,\left(x+\cos\varphi_0\sqrt{x^2-1}\right)^{\ell}\,\,\,\, x>1,\,\,0<\varphi_0<\p,
\ee 
we have, in the limit $L\gg1$ and $s\gg 4m^2$,
\be 
\sqrt{\frac{s-4m^2}{s}}\,\frac{s^{n+1}}{P_{L+2}\left(1+\frac{2t_0}{s-4m^2}\right)}\,a_n\approx \frac{s^{n+1}}{P_{L}\left(1+\frac{2t_0}{s}\right)}\,a_n
\lesssim \frac{\p a_n}{\varphi_0}\,s^{n+1}
\left[
1+2\cos\varphi_0\sqrt{\frac{{t_0}}{{s}}}
\right]^{-L}.
\ee 
Using the above inequality into \eqref{sbarineq1}, we obtain the following inequality,
\be 
\overline{\s}(s)\lesssim 16\p 
\left[
\frac{L^2}{2}
+
\frac{\p a_n}{\varphi_0}\,s^{n} \left(
1+2\cos\varphi_0\sqrt{\frac{{t_0}}{{s}}}
\right)^{-L}
\right] =
8\p L^2
\left[
1
+
\frac{2\p a_n}{\varphi_0}\,\frac{s^{n}}{L^2} \left(
1+2\cos\varphi_0\sqrt{\frac{{t_0}}{{s}}}
\right)^{-L}
\right].
\ee The $\ell$ sum gets truncated effectively if the second term inside the parenthesis above is very small compared to $1$. Let us consider the marginal case when this is equal to 1. The $L$ determined by this marginal case is the one where we will decide to truncate the $\ell$-sum effectively. 
\paragraph{} Considering an ansatz for the $L$ of the form 
\be 
L=A_0~s^a\ln^bs\ee one readily obtains, in the limit $S\gg 4m^2$, that in the leading order in $S$, the marginal $L$ is given by, 
\be 
a=1/2,\,\,b=1\,\,,
A_0=\frac{n-1}{2\sqrt{t_0}\cos\varphi_0}.
\ee Thus the marginal $L-$cutoff is given by,
\be 
L=\frac{(n-1)}{\cos\varphi_0}\sqrt{\frac{s}{4 t_0}}\ln s.
\ee Now, truncating the $\ell$-sum contributing to $\overline{\s}(S)$ to $L$ we obtain 
\be 
\overline{\s}(s)\le \frac{2\p}{t_0}(n-1)^2\, s\ln^2\frac{s}{s_0},
\ee where $s_0$ is some scale to make the argument of the $\ln$ diemsnionless. Now we can take $t_0=4m^2$ to get the strongest possible bound, so that we get 
\be 
\overline{\s}(s)\le \frac{\p}{2m^2}(n-1)^2\, s\ln^2\frac{s}{s_0},\qquad |s|\to \infty.
\ee A similar bound also follows for $s\to u$, i.e.
\be 
\overline{\s}(u)\le \frac{\p}{2m^2}(n-1)^2\, u\ln^2\frac{u}{u_0}, \qquad  |u|\to \infty.
\ee 
Further, $n$ can be fixed to $2$ using \emph{Phragm\'{e}n-Lindel\"{o}f}  theorem giving finally 
\be 
\overline{\s}(s)\le \frac{\p}{2m^2}\, s\ln^2\frac{s}{s_0}.
\ee 
\section{Mandelstam analyticity}
So far we have discussed analyticity properties of $\mt(s,t)$ in both complex $s$ and $t$ to varied degrees. These various explorations of analyticity are ultimately directed towards proving what is known as \emph{Mandelstam analyticity} or \emph{maximal analyticity}. Mandelstam conjectured \cite{Mandel} that the scattering amplitude $\mt(s,t)$ is analytic in both complex $s$ and complex $t$ except for the cuts due to the physical  thresholds, i.e. $\mt(s,t)$ is analytic in 
\begin{align}
	\begin{split}
&\mathbb{C}^2\backslash \left(D_s\cup D_t\cup\widetilde{D}_{s+t}\right),\\
    D_s:=\{s\in [4m^2,\infty)\},\, &D_t:=\{t\in [4m^2,\infty)\},\,\widetilde{D}_{s+t}:=\{s+t\le 0\}.
    \end{split} 
\end{align} Mandelstam analyticity enables one to write a double dispersion relation for the scattering amplitude. If we assume that there is no poles due to bound states and the amplitude falls to zero at infinity, so that there is no subtraction, then a scattering amplitude which is Mandelstam analytic admits the following dispersive representation 
\begin{align}\label{Mandel2} 
	\begin{split}
\mt(s,t)=\frac{1}{\p^2}\int_{4m^2}^\infty ds'\int_{4m^2}^\infty dt' \frac{\ma_{st}(s,t)}{(s-s')(t-t')}\,
&+\, \frac{1}{\p^2}\int_{4m^2}^\infty dt'\int_{4m^2}^\infty du' \frac{\ma_{tu}(t,u)}{(t-t')(u-u')}\,\\
&+\,\frac{1}{\p^2}\int_{4m^2}^\infty ds'\int_{4m^2}^\infty du' \frac{\ma_{su}(s,u)}{(s-s')(u-u')};\quad s+t+u=4m^2. 
\end{split}
\end{align} Here $\ma_{st}(s,t)$ is known as  \emph{double spectral density}  in $s$ and $t$ defined as 
\begin{align}
	\begin{split} 
\ma_{st}(s,t)&:=\text{Im.}_s\text{Im.}_t\,\,\mt(s,t)\,=\, \text{Im.}_t\text{Im.}_s\,\,\mt(s,t)\\
&=\lim_{\e\to 0}\frac{1}{4}\left[\mt(s+i\e,\,t+i\e)-\mt(s-i\e,\, t+i\e)-\mt(si\e,\,t-i\e)+\mt(s-i\e,\, t-i\e)\right].
\end{split}
\end{align} Observe that $\ma_{st}(s,t)=\ma_{ts}(t,s)$. The other double spectral densities are also defined similarly. Also observe that the MAndelstam double dispersive representation \eqref{Mandel2} is manifestly crossing symmetric.  We have written the dispersion relation without any subtraction. However, since $\mt(s,t)$ is in general polynomially bounded,  subtractions will be required in order to get rid of the contribution from infinity. 

Proving Mandelstam analyticity is still an open problem. Even a perturbative proof will be extremely important.

\begin{subappendices}
\section*{Appendix}

\section{Properties of Legendre polynomials}
In this appendix, we collect some properties of Legendre polynomials, $P_n(x)$ which we have used at various places. The Legendre polynomials $P_n(x)$ are solutions of the differential equation 
\be 
(1-x^2)\,\frac{d^2 P_n(x)}{dx^2}\,-\,2x\,\frac{d P_n(x)}{dx}\,+\,n(n+1)\,P_n(x)=0.
\ee  with the boundary condition 
\be 
P_n(\pm 1)=(\pm 1)^n.
\ee These polynomials satisfy definite parity property
\be 
P_n(-x)=(-1)^n\,P_n(x).
\ee 
\begin{enumerate}
	\item[1.] {\bf Laplace integral representation:} For integer $n\ge 0$, $P_n(x)$ admits the following Laplace integral representations
	\begin{align}
		\label{Laplace1}P_n(x)&=\frac{1}{\p}\,\int_{0}^{\p}d\f\,\left(x+i\cos\f \,\sqrt{1-x^2}\right)^n,\quad\,-1\le x\le 1\\
		\label{Laplace2}P_n(x)&=\frac{1}{\p}\,\int_{0}^{\p}d\f\,\left(x+\cos\f\,\sqrt{x^2-1}\right)^n,\quad\,x\ge 1
	\end{align}
	  We will prove two useful results using these integral representations. 
	  \begin{enumerate}
	  	\item \be P_n(1)\ge |P_n(x)|, \quad -1<x<1.\ee
	  	\begin{proof}
	  		Using \eqref{Laplace1} we have 
	  		\begin{align*}
	  			|P_n(x)|&=\frac{1}{\p}\left|\int_{0}^{\p}d\f\,\left(x+i\cos\f\,\sqrt{1-x^2}\right)^n\right|\nn\\
	  			&\le\frac{1}{\p}\,\int_{0}^{\p}d\f\,\left|\left(x+i\cos\f\,\sqrt{1-x^2}\right)\right|^n,\nn\\
	  			&=\frac{1}{\p}\,\int_{0}^{\p}d\f\,\left(x^2+\cos^2\f\,(1-x^2)\right)^{\frac{n}{2}}\nn\\
	  			&\le\frac{1}{\p}\,\int_{0}^{\p}d\f\,\left(x^2+\,(1-x^2)\right)^{\frac{n}{2}}\qquad \left[\cos^2\f<1,\,\f\in\mathbb{R}\right]\\
	  			&=P_n(1).
	  		\end{align*}
	  	\end{proof}
  	\item \be 
  	P_n(x)\ge \frac{\f_0}{\p} \,\left[x+\sqrt{x^2-1}\cos\f_0\right]^n,\qquad x>1,\quad 0<\f_0<\p.
  	\ee 
  	\begin{proof}
  		First we define  
  		\be 
  		y:=\frac{\sqrt{x^2-1}}{x},\quad x>1.
  		\ee Next we define the functions 
  		\begin{align} 
  		H_n(y,\f_0,\f)&:=\frac{(1+y\cos\f)^n}{(1+y\cos\f_0)^n},\\
  		G_\n(y,\f_0)&:=\int_{0}^\p H_n(y,\f_0,\f)\,d\f.
  		\end{align} 
  	Since $H_n(y,\phi_0,\phi)$ is an increasing function of $y$ for $0<\phi<\phi_0<\p$ and decreasing function of $y$ for $0<\phi_0<\phi<\p$, we can obtain the following inequality quite easily, 
  	\end{proof}
	  \end{enumerate}
\end{enumerate}
\section{A proof of simplified Martin's extension theorem}
In this appendix, we review the proof of Martin's extension theorem \ref{MartinTh} for  a simpler case when the amplitude $\mm(s,t)$ satisfies fixed-$t$ dispersion relation with no subtraction and without the left hand cut . First we state the theorem for this simple case.  
\begin{theorem}[{\bf Simplified Martin's extension theorem}]
	Consider an \textbf{unitary}  elastic scattering amplitude $\mm(s,t)$, and the corresponding absorptive part $\ma_s(s,t)$, for the process $A+A\to A+A$ of massive particle $A$ with mass $m>0$ satisfying analyticity properties:
	\begin{itemize}
		\item [I.] {\bf Dispersive representation: }$\mm(s,t)$ satisfies fixed $t$ dispertion relations for $-t_{M}<t\le0,\,t_M>0$, where the amplitude is \textbf{analytic in complex $s$ plane} outside of the unitarity cut, $s\in\mathbb{C}\,\backslash\, [4m^2,\infty)$,
		\be \label{dispsimp}
		\mm(s,t)=\frac{1}{\p}\int_{4m^2}^\infty ds'\,\frac{\ma_s(s',t)}{s'-s};
		\ee
		\item[II.] {\bf Lehmann analyticity: }For \textbf{fixed physical} $s$, the amplitude $\mm(s,t)$ and the absorptive part $\ma(s,t)$ are analytic in the corresponding Lehmann ellipses;
		\item[III.] {\bf Bros-Epstein-Lehmann-Glaser  (BELG) analyticity:} From the results of Bros, Epstein and Glaser, and Lehmann \cite{BEG1, Lehmann2}, in the neighbourhood of any point $s_0, t_0$, $-t_M<t_0\le0$, $s_0$ outside the cuts, there is analyticity in both $s$ and $t$ in 
		\be 
		|s-s_0|<\eta(s_0,t_0),\quad |t-t_0|<\eta(s_0,t_0).
		\ee  
		A priori the size of the neighbourhood can vary with $s_0,t_0$.
	\end{itemize} 
	Then the amplitude $\mm(s,t)$ is analytic (except for possible fixed-$t$ poles) in the topological product domain 
	\be 
	\mathcal{D}_M:=\{(s,t)\in\mathbb{C}^2\,:\, s\in \mathbb{C}\,\backslash\,[4m^2,\infty),\,|t|<R\}, 
	\ee where $R$ is independent of $(s,t)$. The absorptive part $\ma_{s}(s,t)$ is also analytic in $|t|<R$ for all $s$. \end{theorem}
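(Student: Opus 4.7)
The plan is to establish uniform $t$-analyticity of the absorptive part $\ma_s(s,t)$ in a fixed disc $|t|<R$ for every $s\in[4m^2,\infty)$, and then to transfer this to $\mm(s,t)$ via the fixed-$t$ dispersion relation \eqref{dispsimp}. The non-trivial output is the uniformity of $R>0$ in $s$: Lehmann analyticity alone supplies only a disc $|t|<r(s)$ that degenerates as $s\to\infty$, so unitarity must do the decisive work.

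First I would exploit positivity to pin down the possible locations of $t$-singularities of $\ma_s(s,\cdot)$. For each physical $s\ge 4m^2$, the partial-wave decomposition of the absorptive part together with \eqref{apos} gives
\be
\ma_s(s,t)=16\pi\sqrt{\tfrac{s}{s-4m^2}}\sum_{\ell\text{ even}}(2\ell+1)\,a_\ell(s)\,P_\ell\!\left(1+\tfrac{2t}{s-4m^2}\right),\qquad a_\ell(s)\ge 0.
\ee
Since $P_\ell(1+x)$ has non-negative Taylor coefficients at $x=0$, the Taylor expansion of $\ma_s(s,\cdot)$ about $t=0$ has non-negative coefficients as well, and hence by Pringsheim's theorem any singularity on the boundary of its disc of convergence must lie on the positive real $t$-axis. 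Large Lehmann analyticity \eqref{LehmannL} already provides a non-empty starting disc, so positive real $t$ is the only direction in which an obstruction can appear.

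Next I would invoke BELG joint analyticity at a subthreshold point $s_0<0$, patched along $t_0\in(-t_M,0]$ using the fixed-$t$ dispersion relation, to continue $\mm(s_0,t)$ into some $t$-disc $|t|<R_0$, where $R_0>0$ depends only on the BELG extent and the distance of $s_0$ from the cut. The dispersion integrand $\ma_s(s',t)/(s'-s_0)$ must then admit an analytic continuation in $t$, and the positivity of the previous step promotes this averaged statement to pointwise-in-$s'$ analyticity: a Stieltjes-type integral against a non-negative density can develop a $t$-singularity only where the density itself does. This yields a common disc $|t|<R$, with $R\in(0,R_0]$, in which $\ma_s(s',t)$ is analytic uniformly in $s'\ge 4m^2$. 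Substituting back into \eqref{dispsimp} and invoking uniform convergence of the dispersion integral, $\mm(s,t)$ is analytic on $\{s\in\mathbb{C}\setminus[4m^2,\infty)\}\times\{|t|<R\}=\md_M$.

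The hard part will be the third step: converting joint analyticity of $\mm$ at a single subthreshold $s_0$ into pointwise-in-$s'$ analyticity of $\ma_s(s',\cdot)$. Naively the dispersion representation delivers only analyticity of the Cauchy convolution of $\ma_s$, and without positivity the $s'$-integration could conceivably cancel singularities of the individual $\ma_s(s',\cdot)$ against one another. The essential point, and the indispensable unitarity input, is that non-negativity of the spectral density forbids such cancellations, so that the absence of a $t$-singularity in the integral forces the absence of one in each $\ma_s(s',\cdot)$ separately; making this implication quantitative is where the genuine technical effort of Martin's argument resides.
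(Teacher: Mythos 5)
Your outline assembles the right ingredients --- unitarity positivity, BELG analyticity at a single subthreshold point $s_0$, and the fixed-$t$ dispersion relation as the transfer mechanism --- and in that sense it is faithful to Martin's strategy. But the step you yourself flag as ``the hard part'' is precisely where the proof lives, and the form in which you state it is both stronger than what is needed and not what the positivity actually delivers. You propose to extract \emph{pointwise}-in-$s'$ analyticity of $\ma_s(s',\cdot)$ in a common disc $|t|<R$ from the analyticity of the dispersive integral, on the grounds that a non-negative density cannot hide a singularity inside an integral. What the positivity and the Cauchy inequalities actually control, however, are the integrated quantities $\int_{4m^2}^\infty ds'\,\pd_t^n\ma_s(s',0)/(s'-s_0)$: BELG gives $|\pd_t^n\mm(s_0,0)|\le M\,n!/R(s_0)^n$, and positivity of the integrand (no cancellation) bounds the integral over any subinterval by the whole. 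That yields analyticity in $|t|<R(s_0)$ of local $s'$-averages of $\ma_s$, which is exactly enough to resum the $t$-Taylor series of $\mm$, but it does not by itself give analyticity of $\ma_s(s',\cdot)$ at each fixed $s'$ without an additional regularity argument in $s'$ (or a detour through the boundary values of the already-constructed analytic $\mm$). Your ordering --- $\ma_s$ first, then $\mm$ --- therefore front-loads the hardest and least accessible statement.

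The paper's proof avoids this by working directly with $\mm$. It Taylor-expands $\mm(s,t)$ about $t=0$, uses the mean-value theorem and dominated convergence together with the Lehmann-ellipse positivity bounds \eqref{pos1}--\eqref{pos2} to justify the interchange $\pd_t^n\mm(s_0,0)=\frac{1}{\p}\int_{4m^2}^\infty \frac{ds'}{s'-s_0}\,\pd_t^n\ma_s(s',0)$ (the possibility that the derivatives might require \emph{additional subtractions} is the genuinely delicate point, and your sketch passes over it), then combines the resulting uniform bounds with the ratio estimate $|(s'-s_0)/(s'-s)|\le\m(s,s_0)$ to prove absolute convergence of the $t$-series for $|t|<R(s_0)$, uniformly on compact subsets of the cut $s$-plane, and closes with Hartogs' fundamental theorem. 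Your Pringsheim observation (singularities of $\ma_s(s,\cdot)$ must sit on the positive real $t$-axis) is correct and is indeed how the paper, in the main text, converts the disc $|t|<R$ into the Martin ellipse for $\ma_s$; but it is not a substitute for the interchange-of-limits argument. To repair your proposal, either supply that argument and adopt the paper's ordering, or, if you insist on establishing $t$-analyticity of $\ma_s(s',\cdot)$ first, add the continuity-and-averaging step that upgrades the integral bounds to pointwise ones.
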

	\begin{proof}
		The central mathematical result that is used in the proof is what is known as Hartog's fundamental theorem \cite{Hor}.
		\begin{theorem}[{\bf Hartog's fundamental theorem}]
			Let $f(z_1,\dots,z_n)$ be a complex valued function defined in the complex domain $\Omega\subset \mathbb{C}^n$. If $f$is analytic in each variable $z_j$ when the other variables are given arbitrary fixed values , then $f$ is analytic in $\Omega$.  
		\end{theorem} 
		To use this theorem, we consider a Taylor series of $\mm(s,t)$ about $t=0$
		\be \label{mtaylor}
		\mm(s,t)=\sum_{n=0}^\infty \frac{t^n}{n!}\, \left(\frac{\partial}{\partial t}\right)^n \mm(s,0).
		\ee Now $\mm(s,t)$ will be analytic in the product domain $\{(s,t)\,|\, s\in\mathcal{K},\,t\in\D\}$ if all $\left(\frac{\partial}{\pd t}\right)^n\mm(s,0)$ are analytic in $\mathcal{K}$ and if the series is \emph{absolutely convergent} for $t\in \D$ uniformly for $s\in \mathcal{K}$.
		
		Now, unitarity manifests itself in the form of the following positivity properties for the absorptive part $\ma_s$
		\begin{align}
			\label{pos1}	\frac{\pd^n}{\pd t^n}\,\ma_s(s,t=0)&\ge 0\\
			\label{pos2}	\left|\frac{\pd^n}{\pd t^n}\,\ma_s(s,t)\right|&\le \frac{d^n}{dz^n}\,\ma_s(s,t=0),\qquad t\in[-t_0,0].
		\end{align} One important assumption that goes into proving these positivity properties is the convergence of the partial wave expansion in the physical region of $t$ which is guaranteed by the existence of the large Lehmann ellipse. In particular, we will always consider the interval $[-t_0,0]$  to be lying inside the large Lehmann ellipse. In order to apply these properties to the dispersion relation , we need to interchange the integral and the derivatives in the dispersion relation \eqref{dispsimp} above. In particular, we would like to examine whether we can write 
		\be 
		\left(\frac{\pd}{\pd t}\right)^n\mm(s,0)=\frac{1}{\p}\int_{4m^2}^\infty ds'\,\frac{1}{s'-s}\, \left(\frac{\pd}{\pd t}\right)^n \ma_s(s',0).
		\ee \emph{A priori}, this is not guaranteed. In particular, this change of order may require \emph{additional subtractions}, which is equivalent to the possibility that $\ma_s(s,0)$ and $\pd^n\ma_s(s,0)/\pd t^n$ are bounded by the different powers of $s$ as $|s|\to \infty$ . What we will show below is that this is not the case, unitarity makes it sure that both $\ma_s(s,0)$ and $\pd^n\ma_s(s,0)/\pd t^n$ are bounded by the same power of $s$ as $|s|\to \infty$. 
		
		First consider in the $s$ cut plane a fixed point $s_0\in\mathbb{R}$ with $s_0<4m^2$. Then by BELG analyticity, there exists $R(s_0)$ such that $\mm(s_0,t)$ is analytic in the neighborhood $|t|<R(s_0)$.  Therefore the derivatives $\pd^n\mm(s,0)/\pd t^n$ exists and are bounded by the Cauchy inequalities 
		\be \label{cauchyineq}
		\left|\left(\frac{\pd}{\pd t}\right)^n \mm(s_0,0)\right|\le \frac{Mn!}{R(s_0)^n},
		\ee where 
		\be 
		M:=\text{max.}_{|t|=R}\,|\mm(s_0,t)|
		\ee is \emph{finite}. If $M$ is not finite on $|t|=R(s_0)$ then we can choose a smaller $R$. 
		Since the first derivative $\pd\mm(s,t)/\pd t$ exists, we are free to define it as 
		\be 
		\lim_{\substack{\t\to0\\ \t>0}}\,\frac{\mm(s_0,0)-\mm(s_0,-\t)}{\t}=\frac{1}{\p}\lim_{\substack{\t\to0\\ \t>0}}\int_{4m^2}^\infty \,  ds'\,\frac{[\ma_s(s',0)-\ma_s(s',-\t)]/\t}{s'-s_0}.
		\ee It is to be emphasized that $\t>0$ \emph{can be} chosen sufficiently small such that the interval $[-\t,0]$ always lies entirely within the large Lehmann ellipse so that we can freely work with convergent partial series expansions of $\ma_s$ . Next we observe that for $s'>4m^2$ the integrand above is positive as a consequence of \eqref{pos2}. Therefore 
		\be 
		\lim_{\substack{\t\to0\\ \t>0}}\,\frac{\mm(s_0,0)-\mm(s_0,-\t)}{\t}\,\ge\,\frac{1}{\p}\lim_{\substack{\t\to0\\ \t>0}}\int_{4m^2}^X \,  ds'\,\frac{[\ma_s(s',0)-\ma_s(s',-\t)]/\t}{s'-s_0}
		\ee for some \emph{finite} $X>4m^2$. Now the right-hand integral is an integral over a compact interval. If the integral is taken in Lebesgue sense then by \emph{dominated convergence theroem} the limit of the integral for $\t\to 0$ exists if the the same limit exists for the integrand and the integrand is bounded  by some continuous function. Since $\t>0$ can be chosen sufficiently small so that for all $s'\in[4m^2,X]$ the interval $[-\t,0]$ lies within the large Lehmann ellipse, we have that the limit of the integrand exists because
		\be 
		\lim_{\substack{\t\to0\\ \t>0}}\frac{\ma_s(s',0)-\ma_s(s',-\t)}{\t}=\frac{\pd }{\pd t}\,\ma_s(s',0)
		\ee exists  and is bounded by some continuous  function  $\ml(s')$ inside the large Lehmann ellipse. Thus we have 
		\be 
		\frac{\pd}{\pd t}\mm(s_0,0)\ge \frac{1}{\p}\int_{4m^2}^X \frac{ds'}{s'-s_0}\,\frac{\pd\ma_s(s',0)}{\pd t}.
		\ee Since by \eqref{cauchyineq} $\frac{\pd}{\pd t}\mm(s_0,0)$ is bounded by $M/R(s_0)$ which is independent of $X$ and $\frac{\pd\ma_s(s',0)}{\pd t}\ge0$ by \eqref{pos1}, the integral on the right hand side converges for $X\to\infty$. Therefore we have 
		\be 
		\frac{\pd}{\pd t}\mm(s_0,0)\ge \frac{1}{\p}\int_{4m^2}^\infty \frac{ds'}{s'-s_0}\,\frac{\pd\ma_s(s',0)}{\pd t}.
		\ee 
		
		On the other hand, applying the \emph{mean-value theorem}  we can write 
		\be 
		\frac{\ma_s(s',0)-\ma_s(s',-\t)}{\t}=\frac{\pd \ma_s(s',-\tilde{\t}(s'))}{\pd t},\qquad -\t<-\tilde{\t}(s')<0,\,\,s'\in[4m^2,\infty).
		\ee  Once again choosing $\t>0$ to be sufficiently small so that it lies within the large Lehmann ellipse, we can use \eqref{pos2} to write 
		\begin{align} 
			\frac{\pd}{\pd t}\mm(s_0,0)&=\frac{1}{\p}\lim_{\substack{\t\to0\\ \t>0}}\int_{4m^2}^\infty \,  ds'\,\frac{[\ma_s(s',0)-\ma_s(s',-\t)]/\t}{s'-s_0}\nn\\
			&=\frac{1}{\p}\lim_{\substack{\t\to0\\ \t>0}}\int_{4m^2}^\infty\,\frac{ds'}{s'-s_0}\,\frac{\pd \ma_s(s',-\tilde{\t}(s'))}{\pd t} \nn\\
			&\le \frac{1}{\p}\int_{4m^2}^\infty \frac{ds'}{s'-s_0}\,\frac{\pd\ma_s(s',0)}{\pd t}
		\end{align} Therefore, we have finally
		\be 
		\frac{\pd}{\pd t}\mm(s_0,0)= \frac{1}{\p}\int_{4m^2}^\infty \frac{ds'}{s'-s_0}\,\frac{\pd\ma_s(s',0)}{\pd t}.
		\ee 
		Now this argument can be generalized to any $n$th order $t-$derivative by induction principle so that we have 
		\be \label{mder}
		\frac{\pd^n}{\pd t^n}\mm(s_0,0)= \frac{1}{\p}\int_{4m^2}^\infty \frac{ds'}{s'-s_0}\,\frac{\pd^n}{\pd t^n}\ma_s(s',0)\,.
		\ee 
		
		Now we will show that the above representation can be extended to an arbitrary compact subdomain  of the cut $s$ plane. First we observe that the function defined as 
		\be 
		\frac{\pd^n}{\pd t^n}\mm(s,0)= \frac{1}{\p}\int_{4m^2}^\infty \frac{ds'}{s'-s}\,\frac{\pd^n}{\pd t^n}\ma_s(s',0)
		\ee exists for some \emph{finite} interval of the real $s-$axis below the cut due to BELG analyticity, and due to the fact that we can run the same   argument for some shifted $s_0<4m^2$. $\frac{\pd^n}{\pd t^n}\mm(s,0)$ can be continued to complex $s$ because of the existence of analyticity neighborhood. The integral on the right hand-side can be continued to complex $s$ so long it converges uniformly. Both will coincide whenever the integral converges. Therefore we will investigate the integral
		
		We observe that, using \eqref{pos1}, 
		\be \label{intineq}
		\left| \frac{\pd_t^n\ma_s(s',0)}{s'-s}\right|< \m(s,s_0)\, \frac{\pd_t^n\ma_s(s',0)}{s'-s_0},
		\ee  with 
		\be 
		\m(s,s_0):=\sup_{4m^2<s'<\infty}\left|\frac{s'-s_0}{s'-s}\right|.
		\ee Now $\m(s,s_0)$ is \emph{finite} so long $s$ is outside the cut. Let us examine this point a bit more. We start by observing that 
		\be 
		\left|\frac{s'-s_0}{s'-s}\right|=\frac{|s'-s_0|}{\left[(s'-\text{Re.}[s])^2+(\text{Im.}[s])^2\right]^{1/2}}.
		\ee Now, let us write $s'=4m^2+\D,\, \text{Re.}[s]=4m^2+\tilde{\D}$. Also recall that $s_0=4m^2-\ve$ with $\ve$ is sufficiently small positive number. We can consider without l.o.g that $\D>\ve$. With these substitutions we have 
		\be 
		\left|\frac{s'-s_0}{s'-s}\right|=\frac{\D-\ve}{\left[(\D-\tilde{\D})^2+(\text{Im.}[s])^2\right]^{1/2}}.
		\ee Now the maximum occurs at 
		\be \label{max}
		\D=\tilde{\D}+\frac{(\text{Im.}[s])^2}{\tilde{\D}-\ve},
		\ee which shows that in the limit $\text{Im.}[s]\to 0$, the maximal contribution comes from $\D\to\tilde{\D}$. Thus we see that 
		\be 
		\m(s,s_0)\sim\frac{\text{Re.}[s]-s_0}{\Big|\text{Im.}[s]\Big|},\quad \text{Im.}[s]\to 0.
		\ee   
		
		Using \eqref{intineq}, we have 
		\begin{align}
			\int_{4m^2}^\infty ds'\,\left|\frac{\pd_t^n\,\ma_s(s',0)}{s'-s}\right|\,<\,\m(s,s_0)\,\int_{4m^2}^\infty ds'\,\frac{\pd_t^n\,\ma_s(s',0)}{s'-s_0}\,\le\,\m(s,s_0)\, \frac{M n!}{R(s_0)^n},
		\end{align} where we have used \eqref{mder} and \eqref{cauchyineq} to write the last inequality. Thus so long $\m(s,s_0)$ is finite, the integral over $\left|\frac{\pd_t^n\,\ma_s(s',0)}{s'-s}\right|$ is convergent, and therefore, the integral over $\frac{\pd_t^n\,\ma_s(s',0)}{s'-s}$ converges in Lebesgue sense. This then implies that $\pd_t^n\mm(s,0)$ is analytic in the cut $s$-plane. Further
		\be 
		\left| \frac{\pd^n}{\pd t^n}\mm(s,0)\right|=\left|\int_{4m^2}^\infty ds'\,\frac{\pd_t^n\,\ma_s(s',0)}{s'-s}\right|\,\le\, \int_{4m^2}^\infty ds'\,\left|\frac{\pd_t^n\,\ma_s(s',0)}{s'-s}\right|\,<\,\m(s,s_0)\, \frac{M n!}{R(s_0)^n}.
		\ee 
		Now consider the  series 
		\be 
		\sum_{n=0}^\infty \frac{|t|^n}{n!}\,\left| \frac{\pd^n}{\pd t^n}\mm(s,0)\right|\,<\,\m(s,s_0)\,M\,\sum_{n=0}^\infty\, \frac{|t|^n}{R(s_0)^n}.
		\ee This implies that for fixed $s$ away from the cut and for $|t|<R(s_0)$, the Taylor series representation of $\mm(s,t)$, \eqref{mtaylor}, 
		\begin{equation*}
			\mm(s,t)=\sum_{n=0}^\infty \,\frac{t^n}{n!}\,\pd_t^n \mm(s,0) \tag{\ref{mtaylor}}
		\end{equation*} 
		converges absolutely, and therefore is convergent in  $|t|<R(s_0)$ (we would like to emphasize here that $s_0$ is a fixed value) \emph{for fixed $s$ away from the cut}. This then defines an analytic function of $t$ for fixed $s$ in the cut $s-$plane. 
		
		Next we show that for fixed $t$, $|t|<R(s_0)$, $\mm(s,t)$ is analytic in the cut $s-$plane. To prove this,  we will use the following result
		
		\emph{ Consider the infinite series 
			$$
			\sum_n \, f_n(z),
			$$ where $\{f_n(z)\}$ are analytic functions in the domain $\Omega'\subset \mathbb{C}$. If the series converges uniformly on arbitrary compact subsets of the domain $\Omega'$ then the function to which it converges, i.e. $F(z)\equiv\sum_n \, f_n(z)$ is an analytic function on the domain  $\Omega'$. }

		To use the theorem we first show that the   series \eqref{mtaylor} converges \emph{uniformly} in an arbitrary \emph{compact} subset $\md$ of the cut $s$ plane. Since $\md$ is compact, it is both closed and bounded, i.e. there exists a \emph{finite} $S_0\in \mathbb{R}^+$ such that 
		\be 
		|s|\le \Sigma_0,\quad s\in\md.\ee
		Next we have that $\m(s,s_0)$ is finite so long the domain is away from the cut. Further we have, for $s\in\md$,  
		\be 
		\m(s,s_0)= \sup_{4m^2<s'<\infty}\left|\frac{s'-s_0}{s'-s}\right|\le \sup_{4m^2<s'<\infty}\left|\frac{s'-s_0}{s'-|s|}\right|. 
		\ee 
		Since $|s|\le \Sigma_0$, $\m(s,s_0)$ should be \emph{uniformly bounded} over $\md$, i.e. there must exist a finite quantity $\widetilde{\m}(\Sigma_0, s_0)$ such that 
		\be
		\m(s,s_0)\le \widetilde{\m}(\Sigma_0, s_0).
		\ee The exact form of $\widetilde{\m}(\Sigma_0, s_0)$ is not important This then proves the uniform convergence of the series \eqref{mtaylor}, for fixed $t,\,|t|<R(s_0)$ , over \emph{arbitrary} compact subset of the cut $s-$plane, $\md$. Further, each term of the series is analytic in the cut-plane. This implies that $\mm(s,t)$ is analytic in the cut $s$-plane for fixed $t$, $|t|<R(s_0)$.
		
		Collecting everything, we have that $\mm(s,t)$ is analyitc  for $t\in R(s_0)$ and \emph{fixed} $s\in \mathbb{C}\backslash\, [4m^2,\infty)$, and is analytic for $s\in\mathbb{C}\backslash\, [4m^2,\infty)$ and \emph{fixed} $t\in R(s_0)$.  Therefore, applying Hartog's fundamental theorem,  we infer that $\mm(s,t)$ is analytic in the topological product domain 
		\be 
		{\md}_M(R(s_0))=\{(s,t)\in\mathbb{C}^2\,:\, s\in\subset \mathbb{C}\,\backslash\,[4m^2,\infty),\,|t|<R(s_0)\},
		\ee  $s_0$ is some \emph{fixed real} value of $s$ below the cut, $s_0<4m^2$.
	\end{proof}

\end{subappendices}

	\chapter{  Tools from Geometric Function Theory : A walk-through }\label{GFTRev}
\section{Introduction}
In this chapter we will take a small mathematical detour into certain tools from the mathematical field of geometric function theory. We will employ these tools to analyze certain aspects of scattering amplitudes in following chapter.  One way to understand the analytic functions is to consider them as mapping of the complex plane. A complex function $w=f(z)$ can be geometrically viewed to be mapping region in complex $z-$plane to some region in complex $w-$plane, defined by $u=u(x,y)$ and $v=v(x,y)$, where $z=x+iy$ and $w=u+iv$. A natural question then arises as to what are the various geometric properties of this mapping. Geometric function theory addresses this question. In particular, geometric considerations give rise to various interesting bounds on analytic functions. We will discuss two kinds of functions, univalent and typically real functions. 

\section{Univalent and schlicht functions}
A function $f$ is defined to be \emph{univalent} on a domain\footnote{A domain is defined to be a connected open subset of the complex plane $\mathbb{C}$.} $D\subset\mathbb{C}$ if it is holomorphic\footnote{The definition of univalent function can be extended to consider  meromorphic functions as well. A meromorphic univalent function on a domain $D$ can have at most one simple pole there. See \cite{goluzin} for a discussion. Since we are only concerned with holomorphic function in the present work, we decide to include the requirement of holomorphicity in the definition of the univalent function. } and injective.   
The function is said to be \emph{locally univalent} at a point $z_0\in D$ if it is univalent in some neighbourhood of $z_0$. The function $f$ is locally univalent at some $z_0\in D$ if and only if $f'(z_0)\neq0$. It is to be emphasized that even if a function is locally univalent at \emph{each point} of $D$, it may fail to be univalent \emph{globally} on $D $. For example, consider the function $f(z)=e^{z}$ in the domain $\mathbb{D}_r:=\{z: |z|<r\}$ with $r>\pi$. Since $f'(z)=e^z\ne0$ at each point of the open disc above, it is locally univalent at each point of the disc. However, one has that $f(i\p)=f(-i\p)$. Thus $f(z)=e^z$ fails to be univalent \emph{globally} on the domain $D_r$ above.   From now on, by univalent functions, we will always mean globally univalent functions.\par 

We will be primarily concerned with the class $\mathcal{S}$ of univalent functions on the unit disc $\Delta\equiv\mathbb{D}_1=\{z:|z|<1\}$, normalized so that $f(0)=0$ and $f'(0)=1$. These functions are also called \emph{schlicht}\footnote{{In literature often, the terms univalent and schlicht are used interchangeably. Conway \cite{conway} reserves the term schlicht for describing univalent functions with the specific normalization introduced, thus defining the class of schlicht functions, $\ms$, as a  subclass of all the univalent functions. We follow this custom in the present work.}}$^{,}$\footnote{The word \enquote{schlicht} is German and means ``simple"! Simple here is more like plain rather than being easy.} functions. Thus each $f\in\mathcal{S}$ has a Taylor series representation of the form
\be \label{TaylorS}
f(z)=z+\sum_{p=2}^{\infty}b_p z^p,\qquad\qquad |z|<1.
\ee  
Note that a schlicht function $f(z)$ can always be obtained from an arbitrary univalent function defined on $\D$, $g(z)$, by an affine transformation with the definition
\be 
f(z):=\frac{g(z)-g(0)}{g'(0)}.
\ee The usefulness of a schlicht function is that its characteristic normalization makes various numerical estimates pertaining to it simpler compared to an arbitrary univalent function.\par 
The class $\mathcal{S}$ is preserved under a number of transformations. We will mention two of these transformations.
\begin{enumerate}
	\item[(i)] \textbf{Conjugation:} If $f(z)$ belongs to $\mathcal{S}$, so does 
	\be 
	g(z)=g(z^{*})^{*}=z+\sum_{p=2}^{\infty}b_p^{*} z^p.
	\ee 
	\item[(ii)]\textbf{Rotation:} The \emph{rotation of a function $f$} is defined by 
	\be 
	f_{\theta}(z):=e^{-i\th}f(e^{i\th}z), \qquad \qquad \th\in\mathbb{R}
	\ee If $f\in\mathcal{S}$ then $f_\th\in\mathcal{S}$ as well for every $\th\in\mathbb{R}$.
\end{enumerate}
\paragraph{Koebe Function:} 
The leading example of a \emph{schlicht} function is the \emph{Koebe function}
\be \label{Koebe}
k(z):=\frac{z}{(1-z)^2}=z+\sum_{p=2}^{\infty}p\, z^p.
\ee  
Koebe function and its rotations are often solutions to various extremal problems pertaining to \emph{schlicht} functions. Koebe function will play a central role in our subsequent analysis of scattering amplitudes.\par 
Now that we have given a brief overview of univalent functions and the subclass of schlicht functions thereof, we will discuss a few crucial theorems and results on the schlicht functions, which will play critical roles in our analysis of the crossing-symmetric dispersive representation of scattering amplitudes.
\subsection{Conditions for univalence of a function}
In the previous section, we laid down basic notions of univalent and schlicht functions.  We saw that the condition of non-vanishing first derivative of the function over a domain is not sufficient for the function to be globally univalent on the domain. However, the condition is a necessary one. In this section, we will discuss two important sufficient conditions and two necessary conditions for univalence, or equivalently schlichtness (with the specific normalization), of a function on the unit disc. 
\subsubsection{Grunsky inequalities}
Grunsky inequalities \cite{goluzin} are necessary {\it and} sufficient inequalities satisfied by a function $f$ to be a schlicht on the unit disc. 

Consider a holomorphic function $f:\D\to\mathbb{C}$  with the power series representation given by \eqref{TaylorS}, and let
\be 
\ln\frac{f(t)-f(z)}{t-z}=\sum_{j,k=0}^{\infty}\omega_{j,k}\,t^jz^k,
\ee 
with constant coefficients $\{\omega_{j,k}\}$. These are called Grunsky coefficients. It is straightforward to observe that $\w_{j,k}=\w_{k,j}$. Let us note the following simple lemma on the Grunsky coefficients which we will use in a later chapter. 
\begin{lemma}\label{mobius}
	Let $f :\D\to\mathbb{C}$ be a holomorphic function $h:\D\to\mathbb{C}$  with the power series representation given by \eqref{TaylorS}. Let $h:\D\to\mathbb{C}$ be a composition of a Mobius transformation with $f$, i.e.
	\be 
	h(z):=\frac{af(z)+b}{cf(z)+d},\quad ad-bc\ne 0,
	\ee
	and  $\{\widetilde{\w}_{j,k}\}$ be  Grunsky coefficients for $h(z)$ then
	\be \label{grunskymob}
	\widetilde{\w}_{j,k}=\w_{j,k},\quad\forall\,\, j,k\ge1.
	\ee 
\end{lemma}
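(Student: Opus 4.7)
The approach is a direct algebraic manipulation exploiting the fact that a Möbius transformation of $f$ differs from $f$ only by a rational prefactor whose logarithm separates into a constant plus two single-variable pieces. The plan is to write $h(t)-h(z)$ in closed form, take the logarithm of the divided-difference quotient, and read off the Grunsky coefficients by comparing bivariate power series.

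First I would compute the elementary identity
\begin{equation}
h(t) - h(z) \;=\; \frac{af(t)+b}{cf(t)+d} - \frac{af(z)+b}{cf(z)+d} \;=\; \frac{(ad-bc)\bigl(f(t)-f(z)\bigr)}{\bigl(cf(t)+d\bigr)\bigl(cf(z)+d\bigr)}.
\end{equation}
This is the only non-obvious step and it is one line. Dividing by $t-z$ and taking logarithms (all well-defined in a neighbourhood of $(t,z)=(0,0)$ since $f(0)=0$, $f'(0)=1$ and the Möbius transformation is non-degenerate, so $d\neq 0$) yields
\begin{equation}
\ln \frac{h(t)-h(z)}{t-z} \;=\; \ln \frac{f(t)-f(z)}{t-z} \;+\; \ln(ad-bc) \;-\; \ln\bigl(cf(t)+d\bigr) \;-\; \ln\bigl(cf(z)+d\bigr).
\end{equation}

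Next I would simply read off the coefficients. The extra piece $\ln(ad-bc)$ is a constant, so it contributes only to $\widetilde{\omega}_{0,0}$. The piece $\ln(cf(t)+d)$ depends on $t$ alone, so its bivariate expansion contains only monomials $t^j z^0$ and hence only affects $\widetilde{\omega}_{j,0}$. Symmetrically, $\ln(cf(z)+d)$ only affects $\widetilde{\omega}_{0,k}$. Therefore, for every $j,k\geq 1$, the coefficient of $t^j z^k$ on the right-hand side comes purely from $\ln[(f(t)-f(z))/(t-z)]$, giving
\begin{equation}
\widetilde{\omega}_{j,k} \;=\; \omega_{j,k}, \qquad j,k \geq 1,
\end{equation}
which is the claim.

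There is no real obstacle here; the lemma is essentially a one-line computation packaged as a separation-of-variables observation. The only mildly delicate point is verifying that both logarithms admit convergent bivariate Taylor expansions near the origin, but this follows because $(f(t)-f(z))/(t-z)\to f'(0)=1$ and $(h(t)-h(z))/(t-z)\to h'(0)=(ad-bc)/d^2 \neq 0$ as $(t,z)\to(0,0)$, so both logarithms are holomorphic in a neighbourhood of the origin in $\mathbb{C}^2$ and their power series are uniquely determined.
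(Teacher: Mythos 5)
Your proof is correct, and it is the standard separation-of-variables argument that the paper evidently intended (the proof environment for this lemma is in fact left blank in the source), so it fills the gap exactly. One tiny quibble: $d\neq 0$ does not follow from $ad-bc\neq 0$ alone but from requiring $h$ to be holomorphic at the origin (since $f(0)=0$ gives $cf(0)+d=d$), which is anyway implicit in the hypothesis that $h$ has well-defined Grunsky coefficients.
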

\begin{proof}
	
\end{proof}

\begin{theorem}
	$f\in\ms $  if and only if the corresponding Grunsky coefficients satisfy the inequalities
	\be\label{eq:grunslyomega}
	\left|\sum_{j,k=1}^N \omega_{j,k}\lambda_j \lambda_k\right|\le\sum_{k=1}^N\frac{1}{k}\left|\lambda_k\right|^2
	\ee for every positive integer $N$ and all $\lambda_k$, $k=1,\dots,N$.\end{theorem}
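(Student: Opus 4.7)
The plan is to prove both implications using Gronwall's area theorem for functions univalent in the exterior of the closed unit disc, after reducing the interior class $\mathcal{S}$ to the exterior class $\Sigma$ via the odd square-root transform.

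First I would set up the reduction. For $f\in\mathcal{S}$, the function $\tilde f(z):=\sqrt{f(z^2)}$ with branch fixed by $\tilde f(0)=0,\ \tilde f'(0)=1$ is odd and univalent on $\Delta$ (since $f$ has no zero in $\Delta\setminus\{0\}$). Then $\psi(\zeta):=1/\tilde f(1/\zeta)$ belongs to the odd subclass of $\Sigma$, i.e.\ $\psi(\zeta)=\zeta+\sum_{n\ge 1}c_n\zeta^{-(2n-1)}$ is univalent for $|\zeta|>1$. A direct substitution in the identity defining $\omega_{j,k}$ shows that the Grunsky quadratic form $\sum\omega_{j,k}\lambda_j\lambda_k$ for $f$ reshuffles, through a linear orthogonal change of the test sequence $(\lambda_k)$, into the analogous Grunsky form for $\psi$, while preserving $\sum|\lambda_k|^2/k$. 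It therefore suffices to prove the inequality for $\psi\in\Sigma$.

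For the necessity direction I would apply the area theorem to a carefully chosen auxiliary function. Given a finite sequence $(\mu_k)_{k=1}^{N}$, set
\[
H(\zeta)\;:=\;\psi(\zeta)\exp\!\Bigl(-\sum_{k=1}^{N}\tfrac{\mu_k}{k}\,\psi(\zeta)^{-k}\Bigr),
\]
which is meromorphic in $|\zeta|>1$ with Laurent expansion $H(\zeta)=\zeta+\sum_{n\ge 0}\gamma_n\zeta^{-n}$. Because the image of $H$ omits an open set (it is a small perturbation of the univalent $\psi$), the Prawitz--Lebedev generalization of Gronwall's area theorem gives $\sum_{n\ge 1}n|\gamma_n|^2\le\sum_{k=1}^{N}|\mu_k|^2/k$. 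Computing $\gamma_n$ by expanding the exponential against the Grunsky series for $\psi$ and collecting powers of $\zeta^{-1}$, the coefficient $\gamma_n$ decomposes as $\mu_n+\sum_{j}\omega_{n,j}'\mu_j$ plus strictly higher-order cross-terms, so that $\sum n|\gamma_n|^2$ rearranges exactly into $\sum_k|\mu_k|^2/k-\bigl|\sum_{j,k}\omega_{j,k}'\mu_j\mu_k\bigr|^2$ plus a nonnegative remainder. This yields the desired Grunsky inequality for $\psi$, and hence for $f$.

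For sufficiency I would argue by contrapositive. Suppose $f$ is not univalent, so $f(z_1)=f(z_2)$ for distinct $z_1,z_2\in\Delta$. Then the double series $\sum_{j,k\ge 0}\omega_{j,k}t^jz^k=\log[(f(t)-f(z))/(t-z)]$ has a logarithmic singularity at $(t,z)=(z_1,z_2)$, so its partial sums must diverge along this point. Choosing $\lambda_k:=\overline{z_2}^{\,k}$ (suitably rescaled so the right-hand side of the Grunsky inequality is bounded by $-\log(1-|z_2|^2)$) and letting $N\to\infty$, the left-hand side of the inequality grows without bound while the right-hand side stays bounded, contradicting the hypothesised validity of all the Grunsky inequalities. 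The hardest step will be the necessity direction: producing the auxiliary function $H$ and performing the combinatorial bookkeeping that matches $\sum n|\gamma_n|^2$ to $\sum|\mu_k|^2/k-|\sum\omega_{j,k}'\mu_j\mu_k|^2$ term by term, together with justifying the extended area theorem for a meromorphic (not necessarily univalent) $H$.
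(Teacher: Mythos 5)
The paper itself gives no proof of this statement --- it is quoted as Grunsky's classical criterion (with Goluzin cited where the inequalities are introduced) --- so there is nothing in the text to compare yours against; I am judging the proposal on its own. Your overall architecture (pass to the exterior class $\Sigma$, derive necessity from an area-type theorem, derive sufficiency by showing that a coincidence $f(z_1)=f(z_2)$ contradicts the inequalities through the logarithmic singularity of the generating function) is the classically correct one, but the necessity direction, which you yourself flag as the hard step, has a genuine gap. The function $H(\zeta)=\psi(\zeta)\exp\bigl(-\sum_k \mu_k\,\psi(\zeta)^{-k}/k\bigr)$ is not a ``small perturbation'' of $\psi$ in any quantified sense (the $\mu_k$ are arbitrary), so there is no reason its image omits an open set; and for a map normalized as $H(\zeta)=\zeta+\sum_{n\ge 0}\gamma_n\zeta^{-n}$ the area theorem would in any case give $\sum n|\gamma_n|^2\le 1$, not $\le\sum_k|\mu_k|^2/k$. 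Worse, the claimed identity $\sum n|\gamma_n|^2=\sum_k|\mu_k|^2/k-\bigl|\sum_{j,k}\omega'_{j,k}\mu_j\mu_k\bigr|^2+(\text{nonnegative})$ cannot hold: once the exponential is expanded, the left side contains all degrees in $\mu$, while the right side mixes a quadratic term with a quartic one, so the bookkeeping cannot close; and even if it did, combined with an area bound it would give an inequality on the \emph{square} of the Grunsky form in the wrong direction. The standard repair is to drop the exponential and apply the Dirichlet-integral form of the area theorem to $\phi\circ\psi$ with $\phi=\sum_{n\le N}\lambda_n F_n/n$ a combination of the Faber polynomials of $\psi$: since $\phi$ is a polynomial and $\psi$ is univalent, $0\le\iint_E|\phi'|^2\,dA$ over the omitted set $E$ becomes, via Green's theorem and Parseval on $|\zeta|=\rho\to 1$, the strong Grunsky inequality $\sum_k k\bigl|\sum_j\omega_{j,k}\lambda_j\bigr|^2\le\sum_k|\lambda_k|^2/k$, and the stated weak form follows by Cauchy--Schwarz.

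Two smaller points. The odd square-root transform is an unnecessary detour --- plain inversion $g(\zeta)=1/f(1/\zeta)\in\Sigma$ already works, and the paper's own lemma on M\"obius invariance of $\omega_{j,k}$ for $j,k\ge 1$ disposes of the post-composition; moreover the passage between the Grunsky forms of $f$ and of the odd $\psi$ is an even-index restriction with compensating factors of $\tfrac12$ (since $\widetilde\omega_{2j,2k}=\tfrac12\omega_{j,k}$ while $|\lambda_{2k}|^2/(2k)=\tfrac12|\lambda_k|^2/k$), not an orthogonal change of test sequence, and it only transfers the inequality in the direction needed for necessity, because the odd-odd Grunsky coefficients of $\psi$ are extra data not determined by $f$. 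In the sufficiency step, the single sequence $\lambda_k=\overline{z_2}^{\,k}$ does not produce the two-point form $\sum\omega_{j,k}z_1^jz_2^k$, since the Grunsky form is $\lambda_j\lambda_k$ rather than $\lambda_j\bar\lambda_k$; you must polarize with $\lambda_k=z_1^k\pm z_2^k$, which bounds $\bigl|\sum_{j,k\ge 1}\omega_{j,k}z_1^jz_2^k\bigr|$ by $\tfrac12\log\bigl[(1-|z_1|^2)^{-1}(1-|z_2|^2)^{-1}\bigr]$ uniformly on compacts and then contradicts the blow-up of $\operatorname{Re}\log\bigl[(f(z_1)-f(z_2))/(z_1-z_2)\bigr]$ at a coincidence. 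With those corrections the sufficiency direction is sound.
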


As an example, the Grunsky coeffiecients of the Koebe function $k(z)$ are given by $\omega_{j,0}=\omega_{0,j}=2/j,\,\w_{j,k}=-\d_{j,k}/j$, with $\d_{j,k}$ being the usual Kronecker delta.

  From lemma \ref{mobius} above, it then readily follows that Mobius composition of a univalent function is again univalent. This is expected because Mobius transformations, being \emph{bi-holomorphisms} (bijective holomorphisms: both the Mobius transformation and its inverse are holomorphic mappings),  \emph{preserves the analytic structure} of any function on composition. 
{\paragraph{Logarithmic coefficients:} Note that the Grunsky coefficients $\{\w_{j,0}\}$ do not enter the inequality above. One wonders whether there exists any bounding relations satisfied by these Grunsky coefficients. Indeed they satisfy very important and interesting bounds. In the literature, these Grunsky coefficients are studied as \emph{logarithmic coefficients} because of the simple observation
	\be 
	\ln\frac{f(z)}{z}=\sum_{n=0}^\infty \w_{n,0}\,z^n,\quad f\in\mathcal{S}.
	\ee The \emph{logarithmic coefficients } $\{\g_n\}$ are defined by 
	\be 
	\g_n=\w_{n,0}/2.
	\ee These logarithmic coefficients satisfy interesting inequalities. They satisfy the celebrated de Branges's inequalities (previously Milin conjecture ) \cite{branges} which state that for   $f\in \mathcal{S}$ the corresponding logarithmic coefficients satisfy
	\be 
	\sum_{k=1}^n k(n-k+1)\,|\g_k|^2\le \sum_{k=1}^n \frac{n+1-k}{k},\qquad n=1, 2,\dots,
	\ee the equality is satisfied if and only if $f$ is a rotation of Koebe function. de Branges used this inequality in his  proof of the famed Bieberbach conjecture [see section \ref{debrangestheo} ]. There have been attempts  at obtaining sharp bounds on the individual coefficients $|\g_n|$. While the following sharp estimates have been obtained \cite{branges} 
	\begin{align}
		|\g_1|<1,\qquad |\g_2|\le \frac{1}{2}\left(1+2e^{-2}\right),
	\end{align} the problem of finding sharp upper bounds for $|\g_n|$ for $n\ge 3$ in general is still an open one. However, there are some sharp estimates for modulus of logarithmic coefficients in  some subclasses of $\ms$.}

\subsubsection{Nehari conditions}
In a  seminal work, Nehari \cite{Nehari} provided a necessary and a sufficient condition for the univalence of a function on the unit disc $\mathbb{D}$. The conditions are expressed in terms of Schwarzian derivative of the function. Schwarzian derivative of a function $f(z)$ w.r.t $z$ is defined by
\be\label{eq:sch}
\{f(z),z\}:=\left(\frac{f''(z)}{f'(z)}\right)'-\frac{1}{2}\left(\frac{f''(z)}{f'(z)}\right)^2.
\ee 
One advantage of these conditions are that these are independent of the normalization corresponding to the schlicht functions. Thus, these conditions work with univalent functions with generic power series 
\be 
g(z)=b_0+b_1z+b_2z^2+\dots\,.
\ee  This is to be contrasted with the Grunsky inequalities whose precise form requires the normalization of schlicht functions.\par 
The conditions can be stated as following theorems.
\begin{theorem}[{\bf Sufficient condition}]\label{NehariS}
	A function $g(z)$ holomorphic on the open disc $\mathbb{D}$ will be univalent if its Schwarzian derivative satisfies the inequality
	\be 
	\Abs{\{g(z);z\}}\le\frac{2}{(1-|z|^2)^2}.
	\ee 
\end{theorem}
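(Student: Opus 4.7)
The approach is to translate the univalence question into one about zeros of a second-order linear ODE associated with the Schwarzian derivative, and then rule out the existence of two such zeros by a Sturm--Picone type comparison against the borderline equation determined by the hypothesis.

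I would begin with the standard factorization: setting $2p(z) := \{g(z);z\}$, one has $g = u_1/u_2$ for any two linearly independent solutions of
\begin{equation}
u''(z) + p(z)\,u(z) = 0, \qquad z \in \mathbb{D}.
\end{equation}
An elementary calculation shows that $g(z_1) = g(z_2)$ for distinct $z_1, z_2 \in \mathbb{D}$ if and only if some nontrivial linear combination $u = \lambda u_1 + \mu u_2$ vanishes at \emph{both} $z_1$ and $z_2$. Hence it suffices to establish the following auxiliary claim: under the hypothesis $|p(z)| \le (1-|z|^2)^{-2}$, no nontrivial solution of $u''+pu=0$ admits two zeros in $\mathbb{D}$.

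Next I would exploit the conformal invariance of the Nehari bound. For any disc automorphism $T \in \mathrm{Aut}(\mathbb{D})$ the Schwarzian transforms as $\{g\circ T;z\} = T'(z)^2\,\{g;T(z)\}$, while the Poincar\'e factor obeys $|T'(z)|\,(1-|T(z)|^2)^{-1} = (1-|z|^2)^{-1}$; together these imply that the inequality $|\{g;\zeta\}|\,(1-|\zeta|^2)^2 \le 2$ is invariant under precomposition by elements of $\mathrm{Aut}(\mathbb{D})$. Thus if $u$ were a nontrivial solution of the auxiliary ODE with two zeros $z_1 \neq z_2$ in $\mathbb{D}$, a suitable $T$ would move them to the symmetric pair $\pm a$ on the real axis with $0<a<1$.

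The remaining and most delicate step is to rule out a nontrivial solution of $u''+pu=0$ that vanishes at $\pm a$, using only $|p(x)|\le(1-x^2)^{-2}$ on $(-1,1)$. The natural comparison function is $w(x):=\sqrt{1-x^2}$, strictly positive on $(-1,1)$ and solving $w''+w(1-x^2)^{-2}=0$; this is the associated linear ODE for the borderline case $p(z)=(1-z^2)^{-2}$, realized for example by $g(z)=\log\tfrac{1+z}{1-z}$. The obstacle is that the classical Sturm comparison presupposes a real coefficient, whereas here $p$ is genuinely complex. I would sidestep this by passing to the real-valued function $v(x):=|u(x)|^2$, which still vanishes at $\pm a$; the ODE yields
\begin{equation}
v''(x) \;=\; 2|u'(x)|^2 - 2\,\mathrm{Re}\,p(x)\,v(x) \;\ge\; 2|u'(x)|^2 - \frac{2\,v(x)}{(1-x^2)^2}.
\end{equation}
Integrating the identity $(v'w - vw')' = v''w - vw''$ over $[-a,a]$ against the positive weight $w$ and using the boundary conditions $v(\pm a)=0$ (together with the Poincar\'e-type inequality $v \le C\int |u'|^2$ that $v$ must obey) then forces either $u'\equiv 0$ on the segment or a sign inconsistency, either of which contradicts $u\not\equiv 0$. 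The positive term $2|u'|^2$ in the real differential inequality for $v$ is precisely what enables the reduction of the complex case to the real comparison, and its presence is closely related to the sharpness of the constant $2$ in the Nehari bound.
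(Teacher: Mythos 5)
The paper itself offers no proof of this theorem --- it is stated as a cited result of Nehari --- so your proposal has to stand on its own. Its skeleton is the classical one and is sound: the correspondence $g=u_1/u_2$ with $u''+pu=0$, $2p=\{g;z\}$; the equivalence of non-injectivity with a nontrivial solution having two zeros; the M\"obius invariance of $|\{g;z\}|(1-|z|^2)^2$, which lets you place the two zeros at $\pm a$ on the real axis; and the comparison function $w(x)=\sqrt{1-x^2}$ solving $w''+(1-x^2)^{-2}w=0$. All of that is correct and is exactly how the standard argument proceeds.

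The gap is in the last step, and it is not cosmetic. Passing to $v=|u|^2$ gives $v''=2|u'|^2-2\,\mathrm{Re}\,p\cdot v\ge 2|u'|^2-2(1-x^2)^{-2}v$: the potential appearing here is \emph{twice} the potential in the comparison equation for $w$. Your integrated identity therefore yields
\be
0\ \ge\ \bigl[v'w-vw'\bigr]_{-a}^{a}\ =\ \int_{-a}^{a}\!\bigl(v''w-vw''\bigr)\,dx\ \ge\ \int_{-a}^{a}\!\Bigl(2|u'|^2w-\frac{vw}{(1-x^2)^2}\Bigr)dx,
\ee
and to reach a contradiction you must prove the sharp weighted inequality $\int 2|u'|^2w\ \ge\ \int |u|^2 w(1-x^2)^{-2}$ for $u$ vanishing at $\pm a$. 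You only gesture at this as ``a Poincar\'e-type inequality $v\le C\int|u'|^2$,'' but the constant is precisely what is at stake (the constant $2$ in Nehari's bound is sharp), so this cannot be waved through; as written the argument does not close. The standard repair is to use $y=|u|$ rather than $|u|^2$: away from zeros of $u$ one computes $y''=\bigl[|u'|^2-(y')^2\bigr]/y+\mathrm{Re}(u''\bar u)/y\ \ge\ -|p|\,y$, with the \emph{correct} coefficient $|p|\le(1-x^2)^{-2}$ and no factor of $2$. Taking $\pm a$ to be consecutive zeros so that $y>0$ in between, $(y'w-yw')'=y''w-yw''\ge yw\bigl[(1-x^2)^{-2}-|p|\bigr]\ge 0$, so $y'w-yw'$ is nondecreasing; its value at $-a$ is $y'(-a)w(-a)\ge 0$ and at $a$ is $y'(a)w(a)\le 0$, forcing $y'(-a)=0$, i.e.\ $|u'(-a)|=0$, hence $u\equiv 0$ --- the contradiction you want. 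If you replace your $v=|u|^2$ step with this, the proof is complete.
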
 	
\begin{theorem}[{\bf Necessary condition}]\label{NehariN}
	If a holomorphic function $g(z)$ is univalent in the open disc $\mathbb{D}$ then
	\be 
	\Abs{\{g(z);z\}}\le\frac{6}{(1-|z|^2)^2}.
	\ee 
\end{theorem}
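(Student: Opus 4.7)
The plan is to exploit two key properties of the Schwarzian derivative together with the classical area theorem for the class $\Sigma$ of functions univalent outside the unit disc. Recall that $\{f,z\}$ is invariant under post-composition with a M\"obius transformation, i.e.\ $\{M\circ f,z\}=\{f,z\}$, and transforms under a change of variable $z=\psi(w)$ as
\begin{equation*}
\{f\circ\psi,w\}=\{f,\psi(w)\}\,\psi'(w)^2+\{\psi,w\},
\end{equation*}
so that the second term vanishes when $\psi$ is itself M\"obius. These two facts together will let us reduce the estimate of $|\{g,z\}|$ at an arbitrary $a\in\mathbb{D}$ to the estimate at the origin.

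First I would fix $a\in\mathbb{D}$ and set $\psi_a(w):=\frac{w+a}{1+\bar a w}$, the disc automorphism sending $0\mapsto a$ with $\psi_a'(0)=1-|a|^2$. Then $G(w):=g(\psi_a(w))$ is univalent on $\mathbb{D}$, and normalising it affinely gives the schlicht function
\begin{equation*}
h(w):=\frac{G(w)-G(0)}{G'(0)}=w+c_2 w^2+c_3 w^3+\cdots\in\mathcal{S}.
\end{equation*}
Because the passage $G\mapsto h$ is M\"obius (in fact affine) on the left, and $\psi_a$ is M\"obius so contributes no extra Schwarzian, the chain rule yields
\begin{equation*}
\{h,0\}=\{G,0\}=\{g,a\}\,\psi_a'(0)^2=(1-|a|^2)^2\,\{g,a\}.
\end{equation*}
So everything reduces to bounding $|\{h,0\}|$ for an arbitrary schlicht $h$.

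A direct Taylor expansion of $h''/h'$ gives $\{h,0\}=6(c_3-c_2^2)$, so the claim becomes $|c_3-c_2^2|\le 1$. To establish this I would invert and translate into the class $\Sigma$: define
\begin{equation*}
\tilde g(z):=\frac{1}{h(1/z)}=z-c_2+\frac{c_2^2-c_3}{z}+\cdots,\qquad |z|>1.
\end{equation*}
The univalence of $h$ on $\mathbb{D}$ forces $\tilde g$ to be univalent on $\{|z|>1\}$, so by the area theorem $\sum_{n\ge 1} n|b_n|^2\le 1$, and in particular $|b_1|=|c_3-c_2^2|\le 1$. Combining, $|\{h,0\}|\le 6$, hence $|\{g,a\}|\le 6/(1-|a|^2)^2$ as required.

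The proof is almost entirely bookkeeping given the right inputs; there is no single hard step, but the delicate points are (i) correctly tracking the M\"obius covariance so that the reduction to $a=0$ really kills the unwanted pieces in the chain rule, and (ii) arranging the inversion $h\mapsto 1/h(1/z)$ so that its Laurent coefficient $b_1$ at infinity is exactly $c_2^2-c_3$; both are routine calculations, but getting the algebraic signs and normalisations right is what makes the constant come out to be the sharp value $6$.
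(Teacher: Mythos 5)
Your proof is correct, and it is the classical Kraus--Nehari argument. The paper itself states Theorem \ref{NehariN} without proof, deferring to Nehari's original article, so there is no in-text derivation to compare against; your write-up supplies exactly the standard one. All the delicate points you flag check out: with the paper's normalization $\{f;z\}=(f''/f')'-\tfrac12(f''/f')^2=f'''/f'-\tfrac32(f''/f')^2$, one indeed gets $\{h;0\}=6(c_3-c_2^2)$ for $h(w)=w+c_2w^2+c_3w^3+\cdots$; the chain rule $\{f\circ\psi;w\}=\{f;\psi(w)\}\psi'(w)^2+\{\psi;w\}$ together with the vanishing of the Schwarzian of a M\"obius map and its invariance under M\"obius post-composition correctly reduces the problem to the origin via $\psi_a'(0)=1-|a|^2$; the inversion $\tilde g(z)=1/h(1/z)=z-c_2+(c_2^2-c_3)z^{-1}+\cdots$ lands in the class $\Sigma$ (using that $h$ vanishes only at $0$ by univalence), and Gronwall's area theorem $\sum_{n\ge1}n|b_n|^2\le1$ gives $|c_3-c_2^2|\le1$, hence the sharp constant $6$. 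The only cosmetic remark is that your intermediate function $\tilde g$ reuses the letter $g$ from the statement; renaming it would avoid confusion, but the argument is complete as written.
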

\subsection{Koebe growth theorem}
A very important theorem that shines in our analysis of scattering amplitude is the \emph{Koebe Growth Theorem}. This theorem, essentially, puts upper and lower bound on the magnitude of a schlicht function $f\in\mathcal{S}$. 
\begin{theorem}\label{Kgrowth}
	If $f\in\mathcal{S}$ and $|z|<1$, then
	
	\be 
	\frac{|z|}{(1+|z|)^2}\le |f(z)|\le \frac{|z|}{(1-|z|)^2}.
	\ee 
	
	One of the equalities holds at some point $z\ne0$ if and only if $f$ is a rotation of the Koebe function.\end{theorem}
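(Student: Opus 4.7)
My plan is to derive both bounds as corollaries of two intermediate results: the Koebe distortion theorem $\frac{1-r}{(1+r)^3}\le |f'(z)|\le\frac{1+r}{(1-r)^3}$ for $|z|=r$, and the Koebe one-quarter theorem $f(\mathbb{D})\supset\{|w|<1/4\}$. Both are accessible from the $p=2$ case of de Branges' theorem, i.e., the Bieberbach inequality $|b_2|\le 2$, applied to suitable Mobius composites of $f$.

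\textbf{Distortion via the Koebe transform.} For fixed $z_0\in\mathbb{D}$, I would form the Koebe transform
\begin{equation*}
F(\zeta)\,=\,\frac{f(\f_{z_0}(\zeta))-f(z_0)}{(1-|z_0|^2)\,f'(z_0)},\qquad \f_{z_0}(\zeta)\,=\,\frac{\zeta+z_0}{1+\bar z_0\,\zeta},
\end{equation*}
and check $F\in\mathcal{S}$ from the normalization conditions $F(0)=0$, $F'(0)=1$ together with univalence of the disc automorphism $\f_{z_0}$. Expanding $F(\zeta)=\zeta+B_2\zeta^2+\dots$ and applying $|B_2|\le 2$ from Bieberbach yields the pointwise pre-Schwarzian bound
\begin{equation*}
\left|(1-|z|^2)\,\frac{f''(z)}{f'(z)}-2\bar z\right|\,\le\,4,\qquad z\in\mathbb{D}.
\end{equation*}
Multiplying by $z/(1-|z|^2)$, writing $z=re^{i\th}$, and taking real parts gives $\frac{2r-4}{1-r^2}\le\partial_r\log|f'(re^{i\th})|\le\frac{2r+4}{1-r^2}$, after which radial integration from $r=0$ (with $|f'(0)|=1$) produces the distortion theorem.

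\textbf{From distortion to growth.} The upper bound follows at once by integrating $|f'|$ along the radial segment from $0$ to $z$:
\begin{equation*}
|f(z)|\,\le\,\int_0^r\frac{1+\rho}{(1-\rho)^3}\,d\rho\,=\,\frac{r}{(1-r)^2}.
\end{equation*}
For the lower bound, I would first deduce the one-quarter theorem by applying Bieberbach to $h(\zeta)=w_0\,f(\zeta)/(w_0-f(\zeta))\in\mathcal{S}$ for any $w_0\notin f(\mathbb{D})$, which forces $|w_0|\ge 1/4$. The lower growth estimate then splits into two cases: if $|f(z)|\ge 1/4$, the inequality is immediate from $|z|/(1+|z|)^2\le 1/4$; otherwise, the segment $[0,f(z)]$ lies in $f(\mathbb{D})$ by the quarter theorem, and pulling it back along $f$ gives a curve $\g(t)\subset\mathbb{D}$ with $f(\g(t))=tf(z)$, $\g(0)=0$, $\g(1)=z$. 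Then $|\g'(t)|=|f(z)|/|f'(\g(t))|$ and
\begin{equation*}
|z|\,\le\,\int_0^1|\g'(t)|\,dt\,=\,|f(z)|\int_0^1\frac{dt}{|f'(\g(t))|},
\end{equation*}
and the distortion lower bound $|f'(\g)|\ge (1-|\g|)/(1+|\g|)^3$ controls the integrand from above, closing the estimate once $|\g(t)|$ is suitably controlled.

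\textbf{Equality and main obstacle.} Equality in Bieberbach is achieved only by rotations of the Koebe function $k(z)=z/(1-z)^2$, and this extremality propagates through the radial integrations above to identify the Koebe rotations as the unique extremizers in both growth bounds. The hardest step will be the lower bound in the non-radial case: the preimage curve $\g$ is generically not radial, so obtaining a sharp bound on $\int_0^1 dt/|f'(\g(t))|$ requires either carefully bounding $|\g(t)|$ via a monotone reparametrization of the integral (e.g., by the modulus of $f\circ\g$) or bootstrapping from the already-proven upper growth bound applied along $\g$.
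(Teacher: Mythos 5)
The paper does not actually prove Theorem \ref{Kgrowth}: it is stated in the GFT review chapter as a classical result (it is the standard growth theorem found, e.g., in Duren or Goluzin), so there is no in-text argument to compare against. Your proposal is the standard textbook derivation and is essentially correct: the Koebe transform is normalized properly (one checks $\phi_{z_0}'(0)=1-|z_0|^2$, giving $F'(0)=1$), the coefficient computation yields $B_2=\tfrac{1}{2}\bigl[(1-|z_0|^2)f''(z_0)/f'(z_0)-2\bar z_0\bigr]$ so that $|B_2|\le 2$ is exactly the pre-Schwarzian bound you state, and radial integration gives the distortion theorem and then the upper growth bound. Note that only the elementary Bieberbach inequality $|b_2|\le 2$ (provable via the area theorem) is needed, not the full de Branges theorem, which is the right level of machinery.

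The one place your chain does not quite close as written is the lower bound in the case $|f(z)|<1/4$: the inequality $|z|\le |f(z)|\int_0^1 dt/|f'(\g(t))|$ runs the wrong way, since bounding $\int_0^1 dt/|f'(\g(t))|$ from above by $(1+r)^2/r$ requires control of $|\g(t)|$ that you do not have. The standard repair — which you essentially identify — is to avoid the reciprocal entirely and write
\be
|f(z)|=\int_\g |f'(\zeta)|\,|d\zeta|\;\ge\;\int_\g \frac{1-|\zeta|}{(1+|\zeta|)^3}\,|d\zeta|\;\ge\;\int_0^{|z|}\frac{1-\rho}{(1+\rho)^3}\,d\rho=\frac{|z|}{(1+|z|)^2},
\ee
where the last step uses $|d\zeta|\ge d|\zeta|$ together with the fundamental theorem of calculus applied to an antiderivative of $\rho\mapsto(1-\rho)/(1+\rho)^3$ evaluated along $t\mapsto|\g(t)|$; no monotonicity of $|\g(t)|$ is needed. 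With that substitution, and the observation that equality at any $z\ne0$ forces $|B_2|=2$ for the Koebe transforms along the relevant curve (hence $f$ is a rotation of $k$), the proof is complete.
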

We want to emphasize that the bounds are the consequence of $f$ being univalent. Thus, the converse of the theorem need not be true, i.e. a function defined on the unit disc $\mathbb{D}$ with the normalization same as that of a schlicht function satisfying any one of the four bounding relations above need \emph{not} be univalent. 

\subsection{Partial sums of univalent functions: Szeg\"{o} theorem}
Consider a schlicht function $f(z)$ on the unit disc with the power series representation \eqref{TaylorS}. The $n$th partial-sum, or $n$th section, of the function $f$, denoted by $f_n$, is defined by 
\be \label{section}
f_n(z):=z+\sum_{k=2}^{n}b_k z^k. 
\ee  Now, the important question is what is the domain of univalence for the partial sum $f_n$. While that is, in general, a difficult question to answer, one can still ask as to what is the largest domain over which \emph{any section of an arbitrary} $f\in\ms$ is univalent? Szeg\"{o} \cite{szego} proved the following theorem in this aspect. See \cite[\S 8.2, pp. 241-246]{duren} for a proof.	
\begin{theorem}[{\bf Szeg\"{o} theorem}]\label{szegothm}
	Define the numbers $\left\{r_n\in\rb^{+}\right\}$ such that the $m$th section of a schlicht function $f\in\ms$, $f_m$, is univalent in the disc $\mathbb{D}_{r_n}$ for all $m\ge n$. Then,
	\be 
	r_1=\frac{1}{4},
	\ee i.e., each section remains univalent in the disc $\mathbb{D}_{1/4}$, and the number $1/4$ can't be replaced by a higher one.
\end{theorem}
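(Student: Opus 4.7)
The assertion splits into sharpness ($r_1\le 1/4$) and the positive statement (every section of every $f\in\ms$ is univalent on $\mathbb{D}_{1/4}$). For sharpness, my plan is to use the second section of the Koebe function $k(z)=z/(1-z)^2=\sum_{k\ge 1} k z^k\in\ms$, namely $k_2(z)=z+2z^2=z(1+2z)$. Factoring
\be
k_2(z_1)-k_2(z_2)=(z_1-z_2)\bigl(1+2(z_1+z_2)\bigr)
\ee
shows that $k_2(z_1)=k_2(z_2)$ with $z_1\neq z_2$ forces $z_1+z_2=-1/2$. For any $r>1/4$, choose $\delta>0$ with $\sqrt{1/16+\delta^2}<r$ and set $z_\pm:=-1/4\pm i\delta$; then $z_\pm\in\mathbb{D}_r$, $z_+\neq z_-$, and a direct computation gives $k_2(z_+)=k_2(z_-)=-1/8-2\delta^2$. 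Hence $k_2$ is not univalent in any $\mathbb{D}_r$ with $r>1/4$, forcing $r_1\le 1/4$.

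\emph{Univalence in $\mathbb{D}_{1/4}$.} For the opposite inequality, I would represent the section by a Cauchy-type contour integral. Inserting the geometric sum $\sum_{k=1}^n(z/\zeta)^k$ into the Cauchy formulas $b_k=\frac{1}{2\pi i}\oint_{|\zeta|=\rho}f(\zeta)\zeta^{-k-1}\,d\zeta$ and interchanging sum and integral yields, for $|z|<\rho<1$,
\be
f_n(z)=\frac{1}{2\pi i}\oint_{|\zeta|=\rho}\frac{f(\zeta)\,z}{\zeta(\zeta-z)}\left[1-\left(\frac{z}{\zeta}\right)^{\!n}\right] d\zeta.
\ee
Combined with the Koebe growth bound $|f(\zeta)|\le \rho/(1-\rho)^2$ from Theorem~\ref{Kgrowth}, this expresses $f_n$ directly in terms of the schlicht $f$ rather than through its coefficients. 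Assuming for contradiction that $f_n(z_1)=f_n(z_2)$ with $z_1\neq z_2$ in $\mathbb{D}_{1/4}$, I would form the difference quotient $[f_n(z_1)-f_n(z_2)]/(z_1-z_2)$ from the representation above, and derive a strictly positive lower bound by optimizing the contour radius $\rho\in(0,1)$ in the resulting estimate uniformly in $n$.

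\emph{Main obstacle.} The numerical constant $1/4$ is delicate to extract. Using only the coefficient bound $|b_k|\le k$ (de Branges' theorem) termwise, the natural estimate
\be
|f_n'(z)-1|\le \sum_{k=2}^n k^2|z|^{k-1}
\ee
already exceeds $1$ at $|z|=1/4$, since $\sum_{k\ge 2}k^2(1/4)^{k-1}=53/27>1$; so local univalence at that radius is not even secured this way. The genuine argument must exploit the full kernel structure of the integral representation, not just the coefficients, and the optimal contour radius $\rho$ is dictated precisely by the Koebe function, which is also what saturates sharpness. Making these estimates uniform in $n$, so that the polynomial-degree dependence disappears from the lower bound on the difference quotient, is the technical crux of the proof.
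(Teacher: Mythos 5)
Your sharpness argument is complete and correct: $k_2(z)=z+2z^2$ satisfies $k_2(z_1)-k_2(z_2)=(z_1-z_2)\bigl(1+2(z_1+z_2)\bigr)$, so the pairs $z_\pm=-1/4\pm i\delta$ violate injectivity in every disc of radius exceeding $1/4$, giving $r_1\le 1/4$. This is exactly the extremal example the paper itself points to when it remarks that $\mathcal{D}_{k_2}=\mathbb{D}_{1/4}$. Your observation that the termwise de Branges bound cannot work is also correct: $\sum_{k\ge2}k^2(1/4)^{k-1}=80/27-1=53/27>1$, so even local univalence at radius $1/4$ is inaccessible by coefficient estimates alone.

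However, the substantive half of the theorem — that \emph{every} section of \emph{every} $f\in\ms$ is univalent on $\mathbb{D}_{1/4}$ — is not proved in your proposal. The integral representation
\be
f_n(z)=\frac{1}{2\pi i}\oint_{|\zeta|=\rho}\frac{f(\zeta)\,z}{\zeta(\zeta-z)}\left[1-\left(\frac{z}{\zeta}\right)^{n}\right]d\zeta
\ee
is correctly derived, but you never extract from it the promised strictly positive lower bound on $[f_n(z_1)-f_n(z_2)]/(z_1-z_2)$; you explicitly defer this as "the technical crux." That deferral is the whole theorem. Note also that the Koebe growth bound $|f(\zeta)|\le\rho/(1-\rho)^2$ applied crudely inside the integral cannot suffice: the kernel $z/[\zeta(\zeta-z)]$ has modulus of order $|z|/[\rho(\rho-|z|)]$, and the resulting upper bound on the \emph{deviation} of the difference quotient from $1$ is not obviously below $1$ for any choice of $\rho\in(0,1)$ when $|z_i|$ is close to $1/4$; the classical proof (Szeg\"{o}, reproduced in Duren \S 8.2, which is what this thesis cites in lieu of a proof) needs the full distortion theorem, a splitting into the cases $|z_1-z_2|$ small versus bounded away from zero, and a separate treatment of small $n$, occupying several pages. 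As it stands your argument establishes $r_1\le 1/4$ but not $r_1\ge 1/4$, so the proof is genuinely incomplete.
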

{The statement of the number $1/4$ not being replaceable by a higher number needs some explanation. Consider an \emph{arbitrary} $f\in \ms$, and let the domain over which the $n$th section $f_n$ is univalent be $\mathcal{D}_{f_n}$. Then the above theorem tells that 
	\be 
	\mathcal{D}_{f_n}\supseteq \mathbb{D}_{\frac{1}{4}}\qquad \forall n\ge 2.
	\ee  Equivalently, this can be expressed as 
	\be 
	\bigcap_{\substack{f\in\ms\\n\in\mathbb{Z}^+,\,n\ge2}}\mathcal{D}_{f_n}=\mathbb{D}_{\frac{1}{4}}.
	\ee  
	The number $1/4$ is the best estimate because the domain of univalence for the second section of the Koebe function $k(z)$ is exactly equal to the disc of radius $1/4$, i.e.
	\be 
	\mathcal{D}_{k_2}=\mathbb{D}_{\frac{1}{4}}.
	\ee  Note that this theorem does not tell anything about the exact domains of univalence of sections of arbitrary schlicht functions. All this theorem tells us that whatever  the domain of univalence of a section of a schlicht function be, it is at least large enough to contain the disc $\mathbb{D}_{\frac{1}{4}}$, or equivalently, every section of any schlicht function is univalent on $\mathbb{D}_{\frac{1}{4}}$.}

{The evaluation of exact domains of univalence of sections of arbitrary schlicht function is still an open problem.}

\subsection{de Branges's theorem }\label{debrangestheo}
One of the most important properties of univalent functions is that its Taylor coefficients are bounded. For a schlicht function $f\in\mathcal{S}$, Bieberbach proved in 1916 \cite{bieberbach} that the second coefficient $b_2$ in the Taylor series representation \eqref{TaylorS} is bounded as 
\be 
|b_2|\le 2,
\ee with equality holding \emph{if and only if} $f$ is a rotation of the Koebe function.  
In the same work, Bieberbach conjectured the following bound for the general coefficient $b_n$:
\be \label{Bieber}
|b_n|\le n,\qquad \forall \,n\ge 2;
\ee  with the equality holding \emph{if and only if} $f$ is a rotation of the Koebe function. This conjecture came to be known as the famed \emph{Bieberbach Conjecture} and resisted a rigorous proof for about seven decades until Louis de Branges proved it in 1985 \cite{branges}, and the result came to be known as \emph{de Branges's Theorem.} For completeness, let us note down the full statement of de Branges's theorem.\par 
\begin{theorem}\label{bieber}
	Let $f$ be an arbitrary schlicht function,  $f\in \mathcal{S}$, with the power series representation defined by \eqref{TaylorS}. Then the Bieberbach conjecture holds true, i.e.
	\be 
	|b_n|\le n, \qquad\qquad\forall\,n\ge2;
	\ee with the equality holding if and only if $f$ is a rotation of the Koebe function $k(z)$ defined in \eqref{Koebe}, i.e. if and only if
	\be 
	f(z)=e^{-i\th}k(e^{i\th}z),\qquad\qquad \forall\,\th\in\mathbb{R}.
	\ee \end{theorem}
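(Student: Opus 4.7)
The plan is to follow de~Branges's chain of reductions, which replaces the direct bound on $|b_n|$ by progressively more tractable inequalities on derived objects, culminating in a positivity statement about Jacobi polynomials. First I would pass from a schlicht $f\in\ms$ to the associated \emph{odd} schlicht function via the square-root transform. Since $f$ vanishes only at $0$ on $\D$, the function $h(z):=\sqrt{f(z^2)}=z+c_3 z^3+c_5 z^5+\dots$ is well defined and belongs to $\ms$. A direct comparison of Taylor coefficients gives $b_n=\sum_{k=0}^{n-1}c_{2k+1}c_{2(n-1-k)+1}$ (with $c_1:=1$), whence $|b_n|\le\sum_{k=0}^{n-1}|c_{2k+1}|^2$. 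Thus Robertson's conjecture $\sum_{k=0}^{n-1}|c_{2k+1}|^2\le n$ implies Bieberbach. A second reduction passes from $h$ to the logarithmic coefficients $\{\gamma_k\}$ of $f$ defined by $\tfrac12\log(f(z)/z)=\sum_{k\ge 1}\gamma_k z^k$: since $h(z)/z=\exp\!\bigl(\sum_{k\ge 1}\gamma_k z^{2k}\bigr)$, the classical Lebedev--Milin exponentiation inequality shows that the Milin inequality stated earlier in this chapter implies Robertson's. The problem therefore reduces to proving Milin's inequality for every $f\in\ms$.

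The next step is to bring in Loewner theory. Slit mappings are dense in $\ms$ (in the topology of uniform convergence on compacta), and the coefficient functionals $b_n,\gamma_k$ are continuous in this topology, so it suffices to prove Milin's inequality for single-slit mappings. Any such mapping embeds in a Loewner chain $\{f(z,t)\}_{t\ge 0}$ with $f(z,t)=e^{t}z+O(z^2)$ satisfying the Loewner PDE
\be
\frac{\partial f(z,t)}{\partial t}\,=\,z\,\frac{\partial f(z,t)}{\partial z}\cdot\frac{1+\kappa(t)z}{1-\kappa(t)z},\qquad |\kappa(t)|=1,
\ee
with $f(z,0)=f(z)$ and $e^{-t}f(z,t)\to z$ uniformly on compacta as $t\to\infty$. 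Expanding $\tfrac12\log\!\bigl(e^{-t}f(z,t)/z\bigr)=\sum_{k\ge 1}\gamma_k(t)\,z^k$ gives a coupled nonlinear ODE system for the time-dependent logarithmic coefficients $\gamma_k(t)$, driven by the unimodular control $\kappa(t)$; the boundary data are the logarithmic coefficients of $f$ at $t=0$ and exponential decay at $t=\infty$.

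The heart of the argument is the construction of a Lyapunov-type functional that is monotone along the flow. For each fixed $n\ge 2$ de~Branges introduces weights $\tau_1(t),\dots,\tau_n(t)$ solving an explicit linear first-order system, with terminal conditions $\tau_k(\infty)=0$ and initial conditions $\tau_k(0)$ chosen so that
\be
\Phi_n(t)\,:=\,\sum_{k=1}^{n}\tau_k(t)\!\left(k|\gamma_k(t)|^2-\frac{1}{k}\right)
\ee
reproduces the left-hand side of Milin's inequality at $t=0$ and satisfies $\Phi_n(\infty)=0$. A direct but delicate calculation of $\Phi_n'(t)$ along the Loewner flow shows that, after regrouping, it takes the form
\be
\Phi_n'(t)\,=\,-\sum_{k=1}^{n}w_k(t)\,\bigl|A_k(\gamma(t),\kappa(t))\bigr|^2,
\ee
where $A_k$ are certain polynomial expressions in the $\gamma_j(t)$ and $\kappa(t)$, and the weights $w_k(t)$ are constructed from the $\tau_k$ and their derivatives. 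Monotonicity $\Phi_n'(t)\ge 0$ --- equivalently $\Phi_n(0)\le\Phi_n(\infty)=0$, which is Milin --- therefore reduces to the pointwise non-negativity of the $w_k(t)$, and this in turn is equivalent to a classical positivity statement for partial sums of Jacobi polynomials.

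The hard part will be establishing this final positivity, namely the Askey--Gasper inequality for Jacobi polynomials $P_j^{(\alpha,0)}$ with the specific range of $\alpha$ and the specific linear combinations arising from de~Branges's system. This is a substantial theorem in the theory of orthogonal polynomials, proved via integral representations and hypergeometric identities; it is essentially the only non-self-contained analytic input of the whole argument, and without it the Lyapunov scheme collapses. A secondary obstacle is the combinatorial bookkeeping needed to bring $\Phi_n'(t)$ into its manifestly sign-definite form, which relies on delicate algebraic identities between the $\tau_k$ and the Jacobi-polynomial solutions of the auxiliary system. Once both ingredients are in place, $\Phi_n$ is nondecreasing, the chain Milin $\Rightarrow$ Robertson $\Rightarrow$ Bieberbach yields $|b_n|\le n$, and tracking equality through the reductions and the monotonicity argument forces $\kappa(t)$ to be a constant unimodular function, which by the Loewner representation makes $f$ a rotation of the Koebe function.
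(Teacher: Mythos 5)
The paper does not prove this theorem. It is quoted in the GFT review chapter as a classical result of de Branges (1985), with only the statement recorded (together with the historical remark that Dieudonn\'e, Rogosinski and Sz\'asz had earlier settled the typically real univalent case), so there is no in-paper proof to compare your attempt against.

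Taken on its own terms, your outline is a faithful summary of the standard de Branges argument: the square-root transform plus Cauchy--Schwarz reduces Bieberbach to Robertson, the second Lebedev--Milin exponentiation inequality reduces Robertson to Milin, Loewner chains and the density of single-slit maps reduce Milin to a monotonicity statement for the weighted functional $\Phi_n$, and the positivity of the weights is the Askey--Gasper inequality. Two caveats. First, this is a sketch rather than a proof: the two genuinely hard steps --- the algebraic regrouping of $\Phi_n'(t)$ into a sign-definite sum of squares, and the Askey--Gasper positivity itself --- are only named, not established, and you correctly flag that without them the scheme collapses. Second, there is an internal sign inconsistency: you display $\Phi_n'(t)=-\sum_k w_k(t)\,|A_k|^2$ with $w_k\ge 0$, which would make $\Phi_n$ non-increasing, yet you then invoke $\Phi_n'\ge 0$ to conclude $\Phi_n(0)\le\Phi_n(\infty)=0$. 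In de Branges's computation the weights multiplying the squares are $-\tau_k'(t)/k$, and the Askey--Gasper input is precisely $\tau_k'(t)\le 0$; so either the displayed derivative should carry a plus sign or the non-negativity claim should be about $-\tau_k'$. The logic of your conclusion (including the equality analysis forcing $\kappa$ constant, hence $f$ a rotation of the Koebe function) is the right one; the displayed formula as written contradicts it.
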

While de Branges proved the Bieberbach conjecture in its full generality only in 1985, various special cases have been proved earlier. During 1931-1933 by Dieudonn\'{e} \cite{dieudonne}, Rogosinski \cite{rogosinski}, and Sz\'{a}sz \cite{szasz} proved Bieberbach conjectures for univalent functions with \emph{all real} Taylor coefficients, also called typically real univalent functions, which we will give a brief description of in the next section.
\section{Typically real functions}\label{TRintro}
Another class of functions which fairs prominently in the field of geometric function theory, and will makes contact with scattering amplitudes as will be delineated in later chapter, is the so called \textbf{typically real functions}. These are also known as \textbf{Herglotz functions}. 

A function $f:\mathbb{C}\to\mathbb{C}$ is defined to be \emph{typically real} on a domain $\md\in\mathbb{C}$ containing segments of real axis, if it  is real on these segments and satisfies
\be 
\text{Im}\,[f(z)]\,\text{Im}[z]> 0,\quad \text{Im}[z]\ne 0,\,\,z\in\md.
\ee From the definition, it is clear that a typically real function can only be real on the real axis.

 In this thesis, we will be interested in a  particular class of typically real functions which we denote by $TR$. A function $f(z)\in TR$ is analytic and typically real in an open disc $\D$ and admits a Taylor series of the form 
 \be \label{TaylorTR}
 f(z)=z+\sum_{n=2}^\infty c_n \,z^n.
 \ee  It follows from the definition that the  coefficients $\{c_n\}$ are real.

The Taylor series representation bears some resemblance with that of the schlicht functions, \eqref{TaylorS}. However a $TR$ function is not univalent in general. As an example, consider the function 
\be 
g(z)=\frac{(1+z^2)z}{(1-z^2)^2}=\sum_{n=0}^\infty (2n+1) z^{2n+1},\qquad z\in \D.
\ee While the function is holomorphic in $\D$, it is not univalent there. This can be seen by simply noting that $f'(z)=0$ at $z=\pm i\left(-1+\sqrt{2}\right)\in \D$. However, a function $F\in \ms$ can be typically real. 
\begin{lemma}
	A function $f\in \ms$, holomorphic and injective in $\D$, with all \emph{real} Taylor coefficients is as TR function. We shall call the class of such functions $TR_U$.
\end{lemma}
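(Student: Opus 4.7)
The plan is to exploit three ingredients in succession: the reflection symmetry coming from the reality of the Taylor coefficients, the injectivity of $f$, and the normalization $f'(0)=1$.

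First, I would observe that since the Taylor series \eqref{TaylorS} has all real coefficients $\{b_p\}$, one immediately gets the Schwarz reflection identity
\begin{equation*}
f(\bar z) \;=\; \overline{f(z)}, \qquad z\in\D.
\end{equation*}
In particular, $f$ is real valued on the real segment $(-1,1)$. From this identity, $f(z)$ is real precisely when $f(z)=f(\bar z)$. Since $f\in\ms$ is injective on $\D$, the equality $f(z)=f(\bar z)$ forces $z=\bar z$, i.e.\ $z\in\R$. Consequently, $\mathrm{Im}\,f(z)\neq 0$ whenever $\mathrm{Im}\,z\neq 0$.

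Next, I would split $\D\setminus\R$ into its two connected components, the upper half disc $\D^{+}=\{z\in\D:\mathrm{Im}\,z>0\}$ and the lower half disc $\D^{-}=\{z\in\D:\mathrm{Im}\,z<0\}$. On each component $\mathrm{Im}\,f$ is a continuous (in fact harmonic) function which by the previous step is nowhere zero. A non-vanishing continuous real-valued function on a connected set has constant sign, so $\mathrm{Im}\,f$ has a definite sign on $\D^{+}$ and on $\D^{-}$.

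To pin down the sign, I would use the normalization at the origin: from \eqref{TaylorS}, $f(z)=z+O(z^{2})$, so for small $|z|$,
\begin{equation*}
\mathrm{Im}\,f(z) \;=\; \mathrm{Im}\,z + O(|z|^{2}).
\end{equation*}
Hence for $z$ close enough to $0$ in $\D^{+}$ we have $\mathrm{Im}\,f(z)>0$; by the constant-sign conclusion above, $\mathrm{Im}\,f(z)>0$ on all of $\D^{+}$. The reflection identity $f(\bar z)=\overline{f(z)}$ then gives $\mathrm{Im}\,f(z)<0$ on $\D^{-}$. Combining, $\mathrm{Im}\,z\cdot\mathrm{Im}\,f(z)>0$ for every $z\in\D$ with $\mathrm{Im}\,z\neq 0$, which is exactly the defining property of typically real functions from section \ref{TRintro}. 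Together with the fact that $f$ has a Taylor expansion of the form \eqref{TaylorTR} (guaranteed by the schlicht normalization and the reality of the $b_{p}$'s), this puts $f$ in the class $TR$, establishing the lemma.

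There is no real obstacle: the only mild subtlety is making sure the step ``$\mathrm{Im}\,f\neq 0$ off the real axis'' genuinely uses injectivity and not merely reality of coefficients, and checking that the real segment $(-1,1)$ is mapped into $\R$ so that $\D^{\pm}$ are indeed the two relevant components where the sign has to be determined. Everything else is a clean connectedness plus normalization argument.
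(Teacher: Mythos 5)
Your proof is correct and follows essentially the same route as the paper's: reflection $f(z^*)=f(z)^*$ from the real coefficients, injectivity to force $\mathrm{Im}\,f\neq 0$ off the real axis, the normalization $f(z)\sim z$ to fix the sign near the origin, and a connectedness/constant-sign argument to propagate it. You have merely spelled out the connectedness step that the paper leaves implicit.
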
 \begin{proof}
Consider a function $f\in\ms$ with Taylor series representation \eqref{TaylorS}
\be 
f(z)=z+\sum_{p=2}^\infty b_p\,z^p,\qquad b_p\in\mathbb{R}\,\forall p\ge2.
\ee 

Then, clearly, 
\be 
f(z)^*=z^*+\sum_{p=2}^\infty b_p^*\,(z^*)^p=z^*+\sum_{p=2}^\infty b_p\,(z^*)^p=f(z^*).
\ee Due to injectivity of $f$, $\text{Im.}[f(z)]:=(f(z)-f(z)^*)/2i$ can only vanish on the real axis, i.e. $f(z)$ can be real on $\mathbb{R}$. Further  $f(z)\sim z$ as $z\to 0$,\footnote{The statement $f(z)\sim g(z)$ as $z\to z_0$ is defined as 
\be \lim_{z\to z_0}\frac{f(z)}{g(z)}=1.\ee} and therefore there exists a neighborhood of $z=0$ where $\text{Im.}[f(z)]\text{Im.}[z]>0$. Coupled with the fact that $f(z)$ can be real only on real axis and holomorphicity of $f$ guarantees $\text{Im.}[f(z)]\text{Im.}[z]>0$ in $\D$.
\end{proof}

\subsection{Characterization of $TR$ functions} One of the crucial question is how can we identify a $TR$ function? Following is a characterization theorem which is quite useful in identifying a $TR$ function.
\begin{theorem}
	Let $f:\D\to \mathbb{C}$ be a holomorphic function with a convergent Taylor series expansion 
	\be 
	f(z)= z+\sum_{n=2}^{\infty} c_n z^n,\qquad z\in\D,\quad c_n\in\mathbb{R},\,\,\forall n\ge 2.
	\ee The following statements are equivalent
	\begin{enumerate} 
	\item[(i)] $f(z)\in TR$.
	\item [(ii)] \underline{ Carath\'{e}odory class:} The function 
	\be 
	\f(z):=(1-z^2)\frac{f(z)}{z}=1+a_2z+\sum _{n=2}^\infty (a_{n+1}-a_{n-1})
	\ee has positive real part in $\D$. The class of functions with positive real part in the open unit disc is called Carath\'{e}odory class.
	\item[(iii)] \underline{Robertson integral:} There exists a non-decreasing function $\m(x)$ on $\x\in[-1,1]$ satisfying $\m(1)-\m(-1)=1$ such that $f(z)$ admits the following Stieltjes integral representation 
	\be \label{RObertson1}
	f(z)=\int_{-1}^{1} d\m(\x)\,\frac{z}{1-2\x z+z^2}.
	\ee This integral representation is also known as Robertson integral .
	\end{enumerate}
\end{theorem}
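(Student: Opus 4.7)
The plan is to establish the cycle (iii)$\Rightarrow$(i), (iii)$\Rightarrow$(ii), (ii)$\Rightarrow$(iii), and (i)$\Rightarrow$(ii), which closes the logical loop.

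First I would handle the two implications out of (iii). Both reduce to direct calculations on the kernel $k_\xi(z) := z/(1-2\xi z + z^2)$. Rationalising the denominator gives
\[
\text{Im}\,k_\xi(z) \;=\; \frac{\text{Im}(z)\,(1-|z|^2)}{|1-2\xi z + z^2|^2},
\]
whose sign agrees with that of $\text{Im}\,z$ on $\D\setminus\mathbb{R}$, so integration against the positive measure $\mu$ delivers (iii)$\Rightarrow$(i). For (iii)$\Rightarrow$(ii) the key identity
\[
\frac{1-z^2}{1-2\xi z+z^2} \;=\; \frac{1}{2}\!\left[\frac{e^{it}+z}{e^{it}-z} + \frac{e^{-it}+z}{e^{-it}-z}\right], \qquad \xi = \cos t,
\]
(verified by clearing denominators) exhibits the left-hand side as an average of two M\"obius maps of $\D$ onto the right half-plane, so its real part is positive; integrating against $\mu$ then preserves positivity of $\text{Re}\,\varphi$.

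Next, for (ii)$\Rightarrow$(iii), I would invoke the Herglotz representation theorem for the Carath\'eodory class. Since $\varphi$ is analytic in $\D$ with $\text{Re}\,\varphi > 0$ and $\varphi(0) = 1$ (immediate from $f(z)\sim z$ as $z\to 0$), there is a probability measure $\sigma$ on $[0,2\pi]$ with
\[
\varphi(z) \;=\; \int_0^{2\pi} \frac{e^{it}+z}{e^{it}-z}\, d\sigma(t).
\]
Reality of the Taylor coefficients of $\varphi$, inherited from those of $f$, forces $\sigma$ to be invariant under $t \mapsto -t$. Folding the integral using the identity above then produces
\[
\varphi(z) \;=\; \int_{-1}^{1} \frac{1-z^2}{1-2\xi z + z^2}\, d\mu(\xi),
\]
with $\mu$ the pushforward (normalised to unit mass) of the symmetrised $\sigma$ under $t \mapsto \cos t$; multiplying through by $z/(1-z^2)$ yields the Robertson representation (iii).

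The heart of the argument is (i)$\Rightarrow$(ii). The motivation is a formal boundary calculation: on $|z|=1$, $(1-z^2)/z = -2i\sin\theta$, giving $\text{Re}\,\varphi(e^{i\theta}) = 2\sin\theta\,\text{Im}\,f(e^{i\theta}) \geq 0$ by hypothesis (i). The difficulty is that $f$ need not extend continuously to $\partial\D$, so this boundary value has no \emph{a priori} meaning. I would circumvent this by passing to the rescaling $f_\rho(z) := \rho^{-1} f(\rho z)$ for $\rho \in (0,1)$: a quick check shows $f_\rho \in TR$ with the correct normalisation, and $f_\rho$ is holomorphic on $|z|<1/\rho$, hence continuous on $\overline{\D}$. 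Applying the boundary identity to $\varphi_\rho := (1-z^2)f_\rho/z$ gives $\text{Re}\,\varphi_\rho \geq 0$ on $\partial\D$; the minimum principle for harmonic functions propagates this into $\D$, and, because $\varphi_\rho(0) = 1$, the strong minimum principle sharpens the inequality to $\text{Re}\,\varphi_\rho > 0$. Letting $\rho \to 1^-$, local uniform convergence $\varphi_\rho \to \varphi$ preserves $\text{Re}\,\varphi \geq 0$, and one last appeal to the strong minimum principle with $\varphi(0)=1$ upgrades this to $\text{Re}\,\varphi > 0$. The main obstacle is precisely this boundary-regularity issue, which is cleanly handled by the rescaling trick.
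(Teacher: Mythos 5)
Your proposal is correct and complete. Note first that the paper itself states this characterization without proof (it is the classical Rogosinski--Robertson theorem, quoted from the geometric-function-theory literature), so there is no in-text argument to compare against; your write-up would in fact supply the missing proof. All the individual steps check out: the kernel identity $\mathrm{Im}\,k_\xi(z)=\mathrm{Im}(z)(1-|z|^2)/|1-2\xi z+z^2|^2$ is right and gives (iii)$\Rightarrow$(i) with the strict inequality needed for $TR$ (the integrand is strictly positive off the real axis and $\mu$ has unit mass); the folding identity expressing $(1-z^2)/(1-2\xi z+z^2)$ as the average of two Herglotz kernels is correct and powers both (iii)$\Rightarrow$(ii) and, via uniqueness and the reflection symmetry of the Herglotz measure forced by the real coefficients, the converse (ii)$\Rightarrow$(iii); and your handling of (i)$\Rightarrow$(ii) via the dilations $f_\rho(z)=\rho^{-1}f(\rho z)$ is the standard and correct way to make the boundary computation $\mathrm{Re}\,\varphi(e^{i\theta})=2\sin\theta\,\mathrm{Im}\,f(e^{i\theta})$ legitimate, since $f_\rho$ is typically real on a strictly larger disc and hence continuous on $\overline{\D}$, after which the (strong) minimum principle and the normalization $\varphi_\rho(0)=1$ do the rest. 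Two cosmetic remarks: the implication (iii)$\Rightarrow$(ii) is logically redundant once you have the cycle (i)$\Rightarrow$(ii)$\Rightarrow$(iii)$\Rightarrow$(i), though it costs nothing since it reuses the same folding identity; and it is worth stating explicitly that $f_\rho\in TR$ with the same normalization $f_\rho(0)=0$, $f_\rho'(0)=1$, which you assert and which follows immediately from $\mathrm{Im}\,f_\rho(z)\,\mathrm{Im}\,z=\rho^{-2}\,\mathrm{Im}\,f(\rho z)\,\mathrm{Im}(\rho z)$.
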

 The Carath\'{e}odory functions are particularly useful in identifying whether a function is $TR$ or not. Therefore we state the following results for the Carath\'eodory class. Let $\f(z)$ be a Caratheodory function, i.e. $\f(z)$ is holomorphic in open unit disc $\D$ with $\text{Re.}[\f(z)]>0$ in $\D$, and it has a Taylor series representation 
 \be 
 \f(z)=1+\sum_{n=1}^{\infty}\nu_n\,z^n.
 \ee 
Then the following are true:
\begin{itemize}
	\item [(i)] $|\n_n|<2$ for all $n\ge 1$.
	\item[(ii)] $\f(z)$ has the integral representation called the \emph{Herglotz} representation
	\be 
	\f(z)=\int_0^{2\p} \frac{e^{it}+z}{e^{it}-z}\,d\m(t),\qquad z\in\D,
	\ee where $d\m(t)$ is a positive measure on $t\in[0,2\p]$.
\end{itemize}
The condition (i) above helps one to do a quick check if a function can be $TR$. For example, $f(z)=\frac{z}{(1-z^2)^3}\not\in TR$  because $\f(z)=\frac{(1-z^2)}{z}f(z)=1+2z^2+3z^4+\dots$ is not in the Carath\'{e}odory class.

\subsection{Biberbach-Rogosinski bounds}
The Bieberbach bounds were proved for $TR_U$ functions as noted in section \ref{debrangestheo}. One wonders whether such bounds exists for general typically real functions. The answer to this question is yes! Rogosinski proved such bounds for $TR$ functions. We will call these bounds as Bieberbach-Rogosinski bounds. 
\begin{theorem}
	Let $f(z)$ be a $TR$ function with Taylor series representation of  \eqref{TaylorTR} then 
	\be \label{BR bound}
	-\k_n\le c_n\le n,
	\ee with 
	\be 
	\kappa_n = n\quad \text{for even}~ n, \qquad \kappa_n = \frac{\sin n~\vartheta_n}{\sin \vartheta_n} \quad\text{for odd}~ n \,,
	\ee 
	where $\vartheta_n$ is the smallest solution of $\tan n \vartheta = n \tan \vartheta$ located in $( \frac{\pi}{n},~\frac{3 \pi}{2n} )$ for $n>3$ and $\kappa_3=1$.
\end{theorem}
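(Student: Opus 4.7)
The plan is to exploit the Robertson integral representation stated just above as an equivalent characterization of $TR$ functions, which reduces the problem to a moment problem over $[-1,1]$ with a probability measure. Writing
\begin{equation*}
f(z)=\int_{-1}^{1}\frac{z\,d\mu(\xi)}{1-2\xi z+z^2},\qquad \int_{-1}^{1}d\mu(\xi)=1,\quad d\mu\ge 0,
\end{equation*}
and recognizing the kernel as the generating function for the Chebyshev polynomials of the second kind,
\begin{equation*}
\frac{z}{1-2\xi z+z^2}=\sum_{n=1}^{\infty}U_{n-1}(\xi)\,z^n,
\end{equation*}
I would match Taylor coefficients to obtain $c_n=\int_{-1}^{1}U_{n-1}(\xi)\,d\mu(\xi)$. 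Since $\mu$ is a probability measure, this immediately gives
\begin{equation*}
\min_{\xi\in[-1,1]}U_{n-1}(\xi)\;\le\;c_n\;\le\;\max_{\xi\in[-1,1]}U_{n-1}(\xi),
\end{equation*}
with extremal measures supported at the extrema of $U_{n-1}$ producing equality. The problem is thus reduced to computing $\max$ and $\min$ of $U_{n-1}(\cos\vartheta)=\sin(n\vartheta)/\sin\vartheta$ on $\vartheta\in[0,\pi]$.

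For the upper bound, I would observe that $U_{n-1}(1)=n$ and $|U_{n-1}(\xi)|\le n$ for $\xi\in[-1,1]$, so $c_n\le n$. Equality occurs when $d\mu=\delta(\xi-1)\,d\xi$, in which case the Robertson integral collapses to $z/(1-z)^2$, i.e.\ the Koebe function (or its conjugate), recovering the extremal case of Bieberbach. For the lower bound I would split on parity. The symmetry $U_{n-1}(-\xi)=(-1)^{n-1}U_{n-1}(\xi)$ gives $U_{n-1}(-1)=-n$ for even $n$, so the minimum over $[-1,1]$ is $-n$, attained at the endpoint $\xi=-1$; this yields $c_n\ge -n=-\kappa_n$ for even $n$. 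For odd $n$, both endpoints give $+n$, so the minimum is attained at an \emph{interior} critical point.

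To locate this interior minimum I would differentiate $g(\vartheta):=\sin(n\vartheta)/\sin\vartheta$ with respect to $\vartheta$ and set the derivative to zero; after clearing $\sin^{2}\vartheta$ this yields
\begin{equation*}
n\cos(n\vartheta)\sin\vartheta-\sin(n\vartheta)\cos\vartheta=0\quad\Longleftrightarrow\quad \tan(n\vartheta)=n\tan\vartheta,
\end{equation*}
which is exactly the transcendental equation in the statement. I would then argue that on $(0,\pi)$ the function $g(\vartheta)$ starts at $n$, reaches its first zero at $\vartheta=\pi/n$, and descends into negative values on $(\pi/n,2\pi/n)$; the first interior critical point in this descending lobe is the global minimum for odd $n\ge 5$, and a monotonicity/interlacing argument (comparing slopes of $\tan(n\vartheta)$ and $n\tan\vartheta$ on the relevant branches) places this $\vartheta_n$ in $(\pi/n,\,3\pi/(2n))$. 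Evaluating $g$ at this $\vartheta_n$ gives $-\kappa_n$. The case $n=3$ has to be treated by hand: the identity $\sin(3\vartheta)/\sin\vartheta=4\cos^{2}\vartheta-1$ attains its minimum $-1$ at $\vartheta=\pi/2$, giving $\kappa_3=1$, which is precisely why the statement singles this case out.

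The main obstacle I anticipate is the odd-$n$ analysis: making rigorous the claim that the smallest root of $\tan(n\vartheta)=n\tan\vartheta$ in the indicated interval is actually the location of the \emph{global} minimum of $U_{n-1}$ on $[-1,1]$. One has to rule out deeper minima in the subsequent descending lobes of $g(\vartheta)$, which I would do by a direct estimate showing that $|g|$ on later lobes is controlled by a Dirichlet-kernel-type bound $|\sin(n\vartheta)/\sin\vartheta|\le 1/\sin\vartheta$, so the absolute minimum must occur in the first negative lobe where $\sin\vartheta$ is smallest among the lobes where $g$ is negative. Once this global-minimum claim is established, sharpness of the lower bound follows by taking $d\mu$ concentrated at $\xi_n=\cos\vartheta_n$, completing the proof.
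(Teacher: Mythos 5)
The paper itself offers no proof of this theorem --- it is quoted as Rogosinski's classical result --- so there is nothing internal to compare against; judged on its own, your proposal is correct and is in fact the canonical argument. You reduce the problem via the Robertson integral representation (which the paper states immediately above as an equivalent characterization of $TR$), identify the kernel as the generating function of the Chebyshev polynomials of the second kind, obtain $c_n=\int_{-1}^{1}U_{n-1}(\xi)\,d\mu(\xi)$ with $\mu$ a probability measure, and thereby convert the coefficient problem into locating the extrema of $U_{n-1}(\cos\vartheta)=\sin(n\vartheta)/\sin\vartheta$ on $[0,\pi]$. The upper bound $c_n\le n$, the even-$n$ lower bound via $U_{n-1}(-1)=-n$, the critical-point equation $\tan(n\vartheta)=n\tan\vartheta$, and the explicit $n=3$ computation are all right, and the point masses $\delta(\xi-\xi_0)$ do realize the extremal cases. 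The one step that genuinely needs care is the one you flagged: showing that for odd $n\ge 5$ the global minimum sits in the \emph{first} negative lobe $(\pi/n,2\pi/n)$. Your Dirichlet-kernel estimate $|g(\vartheta)|\le 1/\sin\vartheta$ does close this, but only after two additions: (i) for odd $n$ one has $g(\pi-\vartheta)=g(\vartheta)$, so it suffices to work on $[0,\pi/2]$ (otherwise the lobes near $\vartheta=\pi$, where $\sin\vartheta$ is again small, are not controlled); and (ii) one needs a quantitative lower bound on $|g(\vartheta_n)|$ in the first lobe (e.g.\ from $\tan(n\vartheta_n)=n\tan\vartheta_n\ge n\tan(\pi/n)>\pi$, giving $|\sin(n\vartheta_n)|\ge\pi/\sqrt{1+\pi^2}$) to beat $1/\sin(3\pi/n)$ on the later lobes, with $n=5$ checked by hand. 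Finally, note a sign convention: with $n\vartheta_n\in(\pi,\tfrac{3\pi}{2})$ the quantity $\sin(n\vartheta_n)/\sin\vartheta_n$ is negative, so $\kappa_n$ as printed in the statement should be read as its absolute value (consistent with $\kappa_3=1$); your reading $g(\vartheta_n)=-\kappa_n$ is the intended one.
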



\newpage
\vspace{2 cm}

\chapter{  Quantum field theory and the Bieberbach conjecture }\label{biberbach}
\section{Introduction}
In the previous chapter, while introducing the study of tools from geometric function theory, we had teased that these tools will be used to study scattering amplitudes. We take up the task in this chapter. A crossing symmetric representation of the 2-2 scattering of identical massive scalars, first used in the long-forgotten work by Auberson and Khuri \cite{AK} and resurrected in \cite{ASAZ, RGASAZ}, will be central in our venture. If we consider $\mt(s,t)$ and assume that there are no massless exchanges, then we expect a low energy expansion of the form
\be
\mt(s,t)=\sum_{p\geq 0,q\geq0} {\mathcal W}_{pq} x^p y^q\,,
\ee
where $x$ and $y$ are the quadratic and cubic crossing symmetric combinations of $s,t,u$ to be made precise below. 
Normally, the term dispersion relation for scattering amplitude $\mt(s,t)$ refers to the fixed-$t$ dispersion relation, lacking manifest crossing symmetry. As we will review below, in order to exhibit three-channel crossing symmetry in the dispersion relation, we should work with a different set of variables, $z$ and $a\equiv y/x$. For now, we note that both $x,y \propto z^3/(z^3-1)^2$. As such, the appropriate variable is $\tilde z=z^3$. We write a crossing symmetric dispersion relation in the variable $\zt$ keeping $a$ fixed. This dispersion relation, together with unitarity, leads to bounds on the Taylor coefficient of the $\tilde{z}$ expansion of the amplitude, akin to, and using, the Bieberbach bound \eqref{Bieber} discussed in the previous chapter. The expansion of $\mt(\zt,a)$ around $\zt=0$ is similar to a low energy expansion, and the mentioned bound on the Taylor coefficients of the amplitude relates to bounds on the Wilson coefficients $\mW_{pq}$. Further, we also obtain $2-$sided bounds on the amplitude itself, which follows from the Koebe growth theorem \ref{Kgrowth}.

Recently, the existence of upper and lower bounds on ratios of Wilson coefficients have been discovered \cite{tolley, Caron-Huot:2020cmc, ASAZ}. These bounds are remarkable since they convey that Wilson coefficients cannot be arbitrarily big or small and, in a sense, corroborate the efficacy of effective field theories. 
One of the interesting outcomes of our analysis is that 
\be
-\frac{9}{4\m+6\d_0}<\frac{\mathcal{W}_{0,1}}{{\mW_{1,0}}}< \frac{9}{2\m+3\d_0}\,,
\ee
where $\mu=4m^2$ with $m$ being the mass of the external scalar and $\d_0$ is some cutoff scale in the theory. Expanding $\d_0\gg \mu$ leads to the 2-sided, space-time dimension independent bound $-\frac{3}{2\delta_0}<\frac{\mathcal{W}_{0,1}}{W_{1,0}}< \frac{3}{\d_0}$. Compared to \cite{tolley, Caron-Huot:2020cmc}, the lower bound is identical but the upper bound we quote above is stronger. We have checked this inequality for several known examples. 

We further explore the Grunsky inequalities in the context of the scattering amplitudes  If these were to hold in QFT, they would imply non-linear constraints on $\mW_{pq}$. Unfortunately, proving univalence is a tough problem. The $\tilde z$ dependence in the crossing symmetric dispersion relation arises entirely from the crossing symmetric kernel. One can show that this kernel, for a range of real $a$ values, is indeed univalent! Therefore, one concludes that for unitary theories, the amplitude is a convex sum of univalent functions. However, a complete classification of circumstances as to when a convex sum of univalent functions leads to a univalent function does not appear to be known in the mathematics literature.
We will show that as an expansion around $a\sim 0$, the Grunsky inequalities hold as the resulting inequality on $\mW$ is known to hold using either fixed-$t$ or crossing symmetric dispersion relation. Therefore, at least around $a\sim 0$, it is indeed true that the amplitude, and not just the kernel in the crossing symmetric dispersion relation, is univalent. Our numerical checks for known S-matrices, such as 1-loop $\phi^4$, $\pi^0\pi^0\rightarrow \pi^0\pi^0$ arising from the S-matrix bootstrap, tree-level string theory, suggest that there is always a finite region near $a\sim 0$ where univalence holds. Thus we conjecture that we can impose univalence on the amplitude even beyond the leading order in $a$. This gives rise to a non-linear inequality for the Wilson coefficients; as a sanity check, this inequality is satisfied for all the cases studied in this paper.

The discussion above may seem to suggest that we may need to know the full amplitude in QFT to check for this seemingly magical property of univalence. Fortunately, this is not the case. In QFT, we would like to work in an effective field theory framework where we have access to certain derivative order in the low energy expansion. This can be tackled using Szego's theorem, \ref{szegothm}.  We will use this theorem to rule out situations where univalence fails.

Let us now lay out the organization of the chapter. We start with an introduction to the crossing symmetric dispersion relation and associated structures in section \ref{crossreview}. Following this, we discuss the bounds on the Taylor coefficients of the scattering amplitude in section \ref{abound} the physical implications of which for Wilson coefficients is discussed in section \ref{wilsonbound}. Next, we derive two-sided bounds on the scattering amplitude in section \ref{ampbound}. In section \ref{EFTunival}, we look for hints of univalence in EFT amplitudes with the aid of Szeg\"{o}'s theorem followed by an exploration of Grunsky inequalities for amplitudes in section \ref{Grunsky}. Finally, we summarize our findings in \ref{discuss}. Various explorations associated with the main text providing the analysis with wholesomeness have been placed in the appendices.
\section{An introduction to crossing symmetric dispersion relation (CSDR)}\label{crossreview}
We will begin our QFT discussion by reviewing key aspects of crossing symmetric dispersion relations.
We are considering scattering amplitudes of $2-2$ identical scalars. These are functions of Mandelstam invariants $s,~t,~u$, constrained by $s+t+u=4m^2=\m$. For our convenience, we will work with shifted variables $s_1=s-\frac{\m}{3},~s_2=t-\frac{\m}{3},~s_3=u-\frac{\m}{3}$. As delineated in \ref{crossingdef}, $\mt(s_1,\, s_2,\, s_3)$ is $S_3$ invariant which means that  $\mathcal{T}(s_1,s_2)=\mathcal{T}(s_2,s_3)=\mathcal{T}(s_3,s_1)\,.$ Scattering amplitudes have physical {branch} cuts for $s_k\geq \frac{2\m}{3}$. To write down a crossing symmetric dispersion relation the most useful trick is to parametrize the $s_1,s_2$ as \cite{AK,ASAZ}
\be\label{eq:skdef}
s_{k}=a-\frac{a\left(z-z_{k}\right)^{3}}{z^{3}-1}, \quad k=1,2,3
\ee
with $a$ being real,  and $z_k$'s are cube roots of unity. The $z,a$ are crossing symmetric variables. They are related to the crossing symmetric combinations of $s_1,s_2$, namely
$x=-\left(s_{1} s_{2}+s_{2} s_{3}+s_{3} s_{1}\right)=\frac{-27 a^2z^3}{(z^3-1)^2},~~ y=-s_{1} s_{2} s_{3}=\frac{-27 a^3z^3}{(z^3-1)^2},~~a=y/x\,.$ 

\paragraph{} Fully crossing symmetric amplitude can be expanded like 
\be\label{eq:Wpqdef}
\mathcal{T}(s_1,s_2)=\sum_{p=0,q=0}^{\infty}\mathcal{W}_{p,q} x^p y^q\,.
\ee
The parametrization in \eqref{eq:skdef}, maps the physical cuts $s_k\geq \frac{2\m}{3}$ in a unit circle in $z$-plane, see figure \eqref{fig:Fig1z} for $-\frac{2\m}{9}< a<0$. { If $a<-2\mu/9$, then as \cite{AK} show, there will be branch cuts on the real $z$ axis.} In the transformed variables the amplitude becomes a function of $z,a$, namely $\mathcal{M}(z,a)$. The most usefulness of \eqref{eq:skdef} is that it enables us to write a {dispersion relation} which is  manifestly crossing symmetric, see \cite{AK,ASAZ}.
\be\label{crossdisp}
\mathcal{T}(\tilde{z},a)=\alpha_{0}+\frac{1}{\pi} \int_{\frac{2 \mu}{3}}^{\infty} \frac{d s_{1}^{\prime}}{s_{1}^{\prime}} \mathcal{A}\left(s_{1}^{\prime} ; s_{2}^{(+)}\left(s_{1}^{\prime}, a\right)\right) H\left(s_{1}^{\prime} , \tilde{z}\right)
\ee
where $\mathcal{A}(s_1;s_2)$ is the s-channel discontinuity (discontinuity of the amplitude cross $s_1\geq \frac{2\m}{3}$), $\a_0=\mathcal{T}(z=0,a)$ is the subtraction constant independent of $a$, and 
\be\label{Hs2def}
\begin{split}
	&H(s_1',\tilde{z})=\frac{27 a^2 \tilde{z} \left(2 s_1'-3 a\right)}{27 a^3 \tilde{z}-27 a^2 \tilde{z} s_1'-\left(1-\zt\right)^2 \left(s_1'\right)^3},\\
	&s_{2}^{(+)}\left(s_{1}^{\prime}, a\right)=-\frac{s_{1}^{\prime}}{2}\left[1-\left(\frac{s_{1}^{\prime}+3 a}{s_{1}^{\prime}-a}\right)^{1 / 2}\right],
\end{split}
\ee where we have introduced the new variable $\zt:=z^3$. This is because all the manifestly crossing symmetric functions are functions of $\zt=z^3$. 

Let us expound a bit on the analyticity structure of the amplitude on the complex $\zt-$plane. The figure \eqref{fig:Fig1zcube} below shows the image of the physical cuts in $\tilde{z}=z^3$-plane (for $-\frac{2\m}{9}<a<0$).
Notice that in $\tilde{z}=z^3$ plane, the images of the physical cuts in all three channels are same.  We will focus on the situation where $a$ is real and note that $|\tilde z|=1$ if $s_1, s_2$ are real {(we will set $\m=4m^2=4$ here)}. In the $\tilde z$ plane, the forward limit ($s_2=-4/3, s_1\geq 8/3$) corresponds to arcs that start at $\tilde z=-1$ and approaching $\tilde z=1$ along $|\tilde z|=1$. If $s_2>-4/3$ then the full boundary of the disc is not traversed while if $s_2\leq -4/3$ then the full boundary is traversed\footnote{There are two trajectories corresponding to the two roots of $\tilde z$ which are obtained on starting with $x,y$ in terms of $\tilde z, a$ and solving for the latter in terms of $s_1, s_2$. If $s_2>-4/3$ then the starting point is on the circle away from $\tilde z=-1$. As $s_1$ increases from $8/3$ the trajectory reaches $\tilde z=-1$ and then retraces along the boundary till it reaches $\tilde z=1$.}. A further important point to keep in mind is that since real $s_1,s_2$ correspond to $|\tilde z|=1$, to access the inside of the disc we need to consider complex $s_1,s_2$. Since later on, we will keep $a$ real, a complex $s_1$ will give us a complex $s_2$ since $a=s_1 s_2(s_1+s_2)/(s_1^2+s_1 s_2 +s_2^2)$. Plugging back into $\tilde z$, we get two values, one which lies inside the disc and the other which lies outside. 

\begin{figure}[!htb]
	\centering
	\begin{subfigure}[b]{0.45\textwidth}
		\centering
		\includegraphics[width=\textwidth]{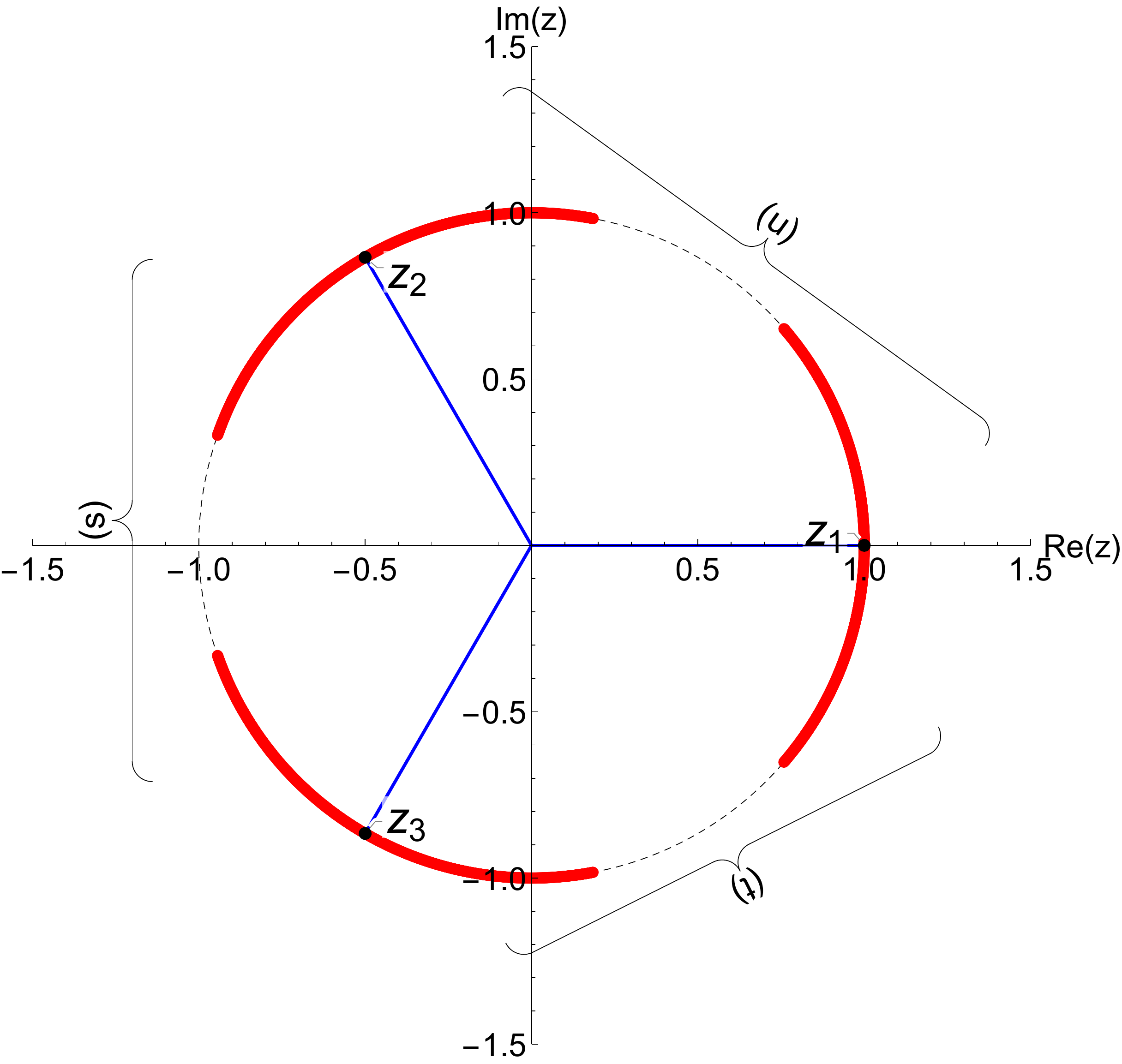}
		\caption{$z$ plane}
		\label{fig:Fig1z}
	\end{subfigure}
	\hfill
	\begin{subfigure}[b]{0.45\textwidth}
		\centering
		\includegraphics[width=\textwidth]{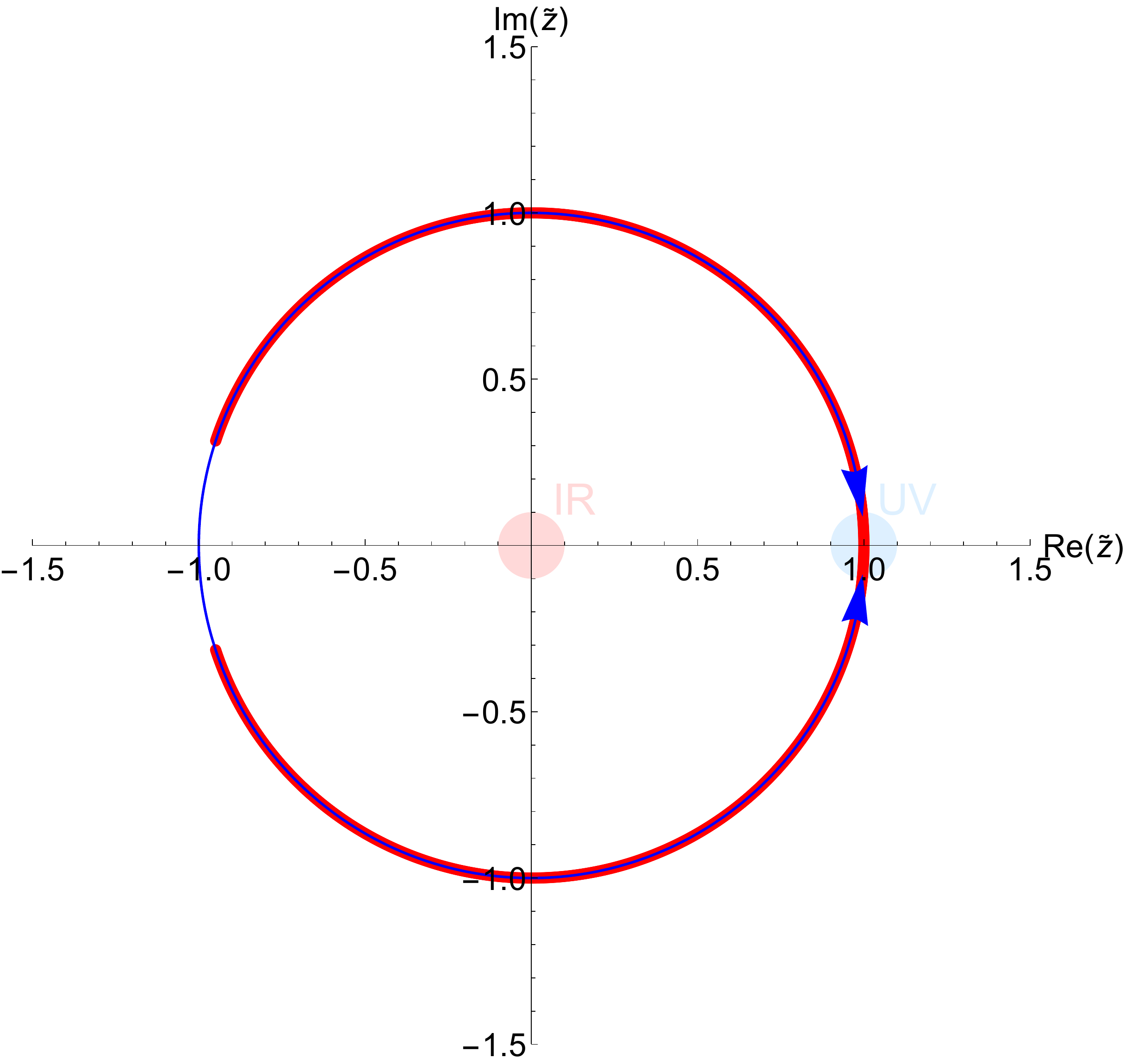}
		\caption{$\tilde{z}=z^3$ plane}
		\label{fig:Fig1zcube}
	\end{subfigure}
	\caption{Image of the physical cuts. The blue line on $\tilde z=1$ indicates the forward limit $s_2=-4/3, s_1\geq 8/3$. The two trajectories start from $\tilde z=-1$ and as $s_1$ increases they approach $\tilde z=1$.}
	\label{fig:Imageofcuts}
\end{figure}

The scattering amplitude $\mm$ admits a power series expansion about $\zt=0$ converging in the unit disc $|\zt|<1$,
\be\label{Mpower}
\mathcal{T}(\tilde{z},a)=\sum_{n=0}^{\infty} \a_n(a) a^{2n}\, \tilde{z}^n\,.
\ee
For a local theory\footnote{This follows from the expansion in \eqref{eq:Wpqdef}, see \cite{AK,ASAZ}
	\be
	\mathcal{T}(z,a)=\sum_{n=0}^{\infty} \bar{w}_n(a) x^n\,.
	\ee
}, $\a_n(a) a^{2n}$ can be  a polynomial in $a$ of order at most $3n$. It can be seen from the expression 
\be\label{eq:alphatoW}
\a_p(a)a^{2p}=\sum_{n=0}^{p}\sum_{m=0}^{n} \mathcal{W}_{n-m,m}a^m (-1)^{p-n}(-27)^n~a^{2n}\binom{-2 n}{p-n}.
\ee This expression also implies that $\a_n(a)$ is in general a Laurent polynomial\footnote{A Laurent polynomial $\ell(x)$ over a field $\mathbb{F}$ is an expression of the form
	\be 
	\ell(x)=\sum_{k\in\mathbb{Z}}\d_k\,x^k,\qquad \d_k\in\mathbb{F}
	\ee   where now $k$ need not be necessarily positive and only finitely many coefficients $\d_k$ are non-zero.  }.
Similarly, the crossing symmetric kernel $H(s_1',\zt)$ admits a power series expansion abut $\zt=0$:
\be\label{Hexp}
H(s_1',\tilde{z})=\sum_{n=0}^{\infty}\beta _n\left(a,s_1'\right) \tilde{z}^n\,,
\ee
with
\be
\begin{split}
	\beta _n\left(a,s_1'\right)&=\frac{3 \sqrt{3} a 2^{-n} \left(s_1'\right)^{-3 n} }{\sqrt{a-s_1'} \sqrt{3 a+s_1'}}\Bigg[\left(27 a^3-27 a^2 s_1'+3 \sqrt{3} a \sqrt{a-s_1'} \sqrt{3 a+s_1'} \left(3 a-2 s_1'\right)+2 \left(s_1'\right){}^3\right)^n\\
	&-\left(27 a^3-27 a^2 s_1'+3 \sqrt{3} a \sqrt{a-s_1'} \sqrt{3 a+s_1'} \left(2 s_1'-3 a\right)+2 \left(s_1'\right){}^3\right)^n\Bigg].
\end{split}
\ee
We can make two immediate and important observations from the above expression:
\begin{enumerate}
	\item[(I)] \be \b_0(a,s_1)=0\quad \text{identically}.\ee
	\item[(II)]
	\be 
	\b_1(a,s_1)=\frac{27a^2}{s_1^3}\left(3a-2s_1\right).
	\ee Now, recall that, the {analytic}\footnote{ {We are calling the above domain of $a$ to be ``analytic'' since for this domain of $a$, the branch cuts in the complex $\tilde{z}$ plane do not lie along the real line. }  } domains for $a$ and $s_1$ are given by $[-2\m/9,2\m/3)$ and $[2\m/3,\infty)$. Then, one can readily infer that $\b_1(a,s_1)$ non-vanishing for the entire physical range of $s_1$ \emph{if and only if}\footnote{The $a=0$ point is trivial since both $x,y=0$ and the amplitude is a constant. In what follows, if on occasion we imply $a\ne0$, this is to be kept in mind.}
	\be 
	a\in\left(-\frac{2\m}{9},0\right)\cup\left(0,\frac{4\m}{9}\right).
	\ee Further, in this domain of $a$, $\b_1<0$ for the entire physical domain of $s_1$. This sign of $\b_1$ will play a crucial role for various proofs in the following analysis. For the string amplitude, that we will frequently consider, $\mu=0$. We have subtracted the massless pole, and the lower limit of the dispersion integral starts at $s_1'=1$, which is the location of the first massive string pole. This effectively leads to the replacement $\mu\rightarrow 3/2$ in the above discussion given $a\in \left[-\frac{1}{3},0\right)\cup \left(0,\frac{2}{3}\right)$.
\end{enumerate}
The coefficients $\{\b_n(a,s_1)\}$ are of extreme importance because using the crossing symmetric dispersion  relation \eqref{crossdisp}  along with \eqref{Hexp}, we can write an inversion formula 
\be\label{eq:alphadisper}
a^{2n}\a_n(a)=\frac{1}{\pi }\int _{\frac{2 \mu }{3}}^{\infty }\frac{ds_1'}{s_1'}\,\mathcal{A}\left(s_1';s_2^{(+)}(s_1',a)\right)\,\beta _n\left(a,s_1'\right),\quad n>0.
\ee Thus, we see that the $\a_n$s are essentially integral transforms of $\b_n$s convoluted with the $s-$channel absorptive part $\imm$. 

Let us conclude this section with a significant result on the absorptive part, which will be crucial for our subsequent analysis in light of the inversion formula above.  
\begin{lemma}[{\bf Positivity lemma}]\label{Apos}
	For a \emph{unitary} theory, if $a\in\left(-\frac{2\m}{9},\frac{2\m}{3}\right)$ then the absorptive part of the amplitude, $\imm$, is {\emph{non-negative}} for $s_1\in\left[\frac{2\m}{3},\infty\right)$.
\end{lemma}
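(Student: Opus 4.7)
The plan is to reduce the claim to a purely algebraic inequality by combining the partial-wave decomposition of the absorptive part with elementary positivity properties of the Legendre polynomials. From Chapter~\ref{smatrev} one has
\[
\mathcal{A}\bigl(s_1\,;\,s_2\bigr)=\Phi(s_1+\mu/3)\lsumf(2\ell+1)\,a_\ell(s_1+\mu/3)\,P_\ell\!\bigl(\cos\theta(s_1,s_2)\bigr),
\]
where the kinematic prefactor $\Phi$ and the partial-wave densities $a_\ell$ are both non-negative on the physical $s$-channel cut by unitarity. In the shifted variables one has $\cos\theta=1+2(s_2+\mu/3)/(s_1-2\mu/3)$, so for $s_1\geq 2\mu/3$ the condition $\cos\theta\geq 1$ is simply $s_2\geq -\mu/3$. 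Because the Laplace integral representation of $P_\ell$ gives $P_\ell(w)\geq 1$ for every $w\geq 1$ and every $\ell\geq 0$, positivity of $\mathcal{A}$ will follow once I show
\[ s_2^{(+)}(s_1,a)\;\geq\;-\tfrac{\mu}{3},\qquad s_1\in[2\mu/3,\infty),\quad a\in\bigl(-\tfrac{2\mu}{9},\tfrac{2\mu}{3}\bigr). \]

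The latter inequality I would establish by a direct algebraic manipulation. Starting from $s_2^{(+)}=-(s_1/2)\bigl[1-\sqrt{(s_1+3a)/(s_1-a)}\bigr]$ in \eqref{Hs2def}, the inequality becomes $(s_1/2)\sqrt{(s_1+3a)/(s_1-a)}\geq s_1/2-\mu/3$, whose right-hand side is non-negative on $s_1\geq 2\mu/3$. Squaring and reorganising yields the clean factorisation
\[ s_1^{2}(s_1+3a)-(s_1-a)\bigl(s_1-\tfrac{2\mu}{3}\bigr)^{2}=4\bigl[\,a\,q(s_1)+\tfrac{\mu}{3}\,s_1\!\left(s_1-\tfrac{\mu}{3}\right)\bigr],\qquad q(s_1):=s_1^{2}-\tfrac{s_1\mu}{3}+\tfrac{\mu^{2}}{9}. \]
The quadratic $q(s_1)$ has discriminant $-3\mu^{2}/9<0$ and is therefore strictly positive, so the desired inequality is equivalent to $a\geq -(\mu/3)\,G(s_1)$ with $G(s_1):=s_1(s_1-\mu/3)/q(s_1)=1-(\mu^{2}/9)/q(s_1)$. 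Since $q(s_1)$ is monotonically increasing on $s_1\geq\mu/6$, $G$ is monotonically increasing on $[2\mu/3,\infty)$ with minimum $G(2\mu/3)=2/3$; the tightest constraint is therefore $a\geq -2\mu/9$, attained in the threshold limit. For $a\geq 0$ the inequality is automatic because the square root already exceeds unity, so $s_2^{(+)}\geq 0>-\mu/3$. The upper endpoint $a<2\mu/3$ enters independently so that $s_1-a>0$ throughout the integration range, i.e.\ so that $s_2^{(+)}$ is actually real; the same restriction also makes $s_1+3a\geq 0$ on $s_1\geq 2\mu/3$.

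The only non-routine step I anticipate is the justification of the partial-wave representation at kinematic points $t=s_2^{(+)}+\mu/3$ which need not lie in the elementary physical region. For this I would invoke Martin's extended analyticity domain reviewed in Chapter~\ref{smatrev}: the absorptive part is analytic in the $t$-disk $|t|<\mu$ uniformly in $s$, and for $s_1$ not too close to the threshold the asymptotic behaviour $s_2^{(+)}\to a$ as $s_1\to\infty$ together with monotonicity in $s_1$ keeps $t$ inside the large Lehmann ellipse where the partial-wave sum converges absolutely. The delicate region is $s_1\to 2\mu/3^{+}$ with $a\to 2\mu/3^{-}$, where $s_2^{(+)}$ can grow large; positivity there is obtained by analytic continuation in $s_2$ from the interior of the Lehmann ellipse, noting that the partial-wave sum with non-negative $a_\ell$ and $P_\ell(w)\geq 1$ defines a monotonic function of real $w\geq 1$. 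This step is therefore careful bookkeeping of classical analyticity results rather than a genuine obstacle, and the rest of the proof is a short algebraic verification.
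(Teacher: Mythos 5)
Your proof follows the same route as the paper's: expand the absorptive part in partial waves, use unitarity to get $a_\ell\ge 0$, and reduce positivity to showing that the argument of the angular polynomials is $\ge 1$ on the stated domain of $a$. Your explicit factorisation
\[
s_1^{2}(s_1+3a)-(s_1-a)\left(s_1-\tfrac{2\mu}{3}\right)^{2}=4\left[a\,q(s_1)+\tfrac{\mu}{3}\,s_1\left(s_1-\tfrac{\mu}{3}\right)\right]
\]
is correct and actually supplies the computation that the paper only asserts (``we find that $\sqrt{\xi}\ge 1$''), and your extraction of the threshold value $-2\mu/9$ from the monotonicity of $G$ is right. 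One minor point: the lemma is used in a chapter valid in general spacetime dimension, where the expansion is in Gegenbauer polynomials $C_\ell^{((d-3)/2)}$ rather than Legendre polynomials; the positivity you need ($C_\ell^{(\alpha)}(w)>0$ for $w\ge 1$, $\alpha>0$) holds just as for $P_\ell$, so nothing breaks, but as written your argument covers only $d=4$.

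The one genuine gap is the convergence of the partial-wave sum near threshold. You invoke a uniform Martin disk $|t|<\mu$, but that only covers $s_2^{(+)}+\mu/3<\mu$, i.e.\ $s_2^{(+)}<2\mu/3$, whereas (as you yourself note) $s_2^{(+)}\to+\infty$ when $s_1\to \tfrac{2\mu}{3}^{+}$ with $a\to \tfrac{2\mu}{3}^{-}$. Appealing to ``analytic continuation from the interior of the Lehmann ellipse'' does not by itself give positivity there: term-by-term positivity only controls the function where the series converges, and if the evaluation point lies outside the ellipse of convergence the representation says nothing. The fix --- and what the paper does --- is to use the $s_1$-dependent extended Martin ellipses $E(s_1)$ for the absorptive part, whose right extremity behaves like $4\mu+48\mu/(3s_1-2\mu)$ for $s_1$ near threshold and hence diverges exactly as $s_1\to 2\mu/3$, so that the large values of $s_2^{(+)}$ in your ``delicate region'' remain strictly inside the analyticity domain; the positive-coefficient (Vivanti--Pringsheim-type) argument then guarantees convergence of the partial-wave sum at those points. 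With that substitution your proof is complete.
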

\begin{proof}[{\bf Proof.}]
	The $s$-channel discontinuity has a partial wave expansion
	\be	
	\begin{aligned}
		\imm &=\Phi\left(s_{1} ; \alpha\right) \sum_{\ell=0}^{\infty}(2 \ell+2 \alpha) a_{\ell}\left(s_{1}\right) C_{\ell}^{(\alpha)}\left(\sqrt{\xi\left(s_{1}, a\right)}\right) \\
		\xi\left(s_{1}, a\right) &=\cos ^{2} \theta_{s}=\left(1+\frac{2 s_{2}^{+}\left(s_{1}, a\right)+\frac{2 \mu}{3}}{s_{1}-\frac{2 \mu}{3}}\right)^{2}=\xi_{0}+4 \xi_{0}\left(\frac{a}{s_{1}-a}\right)
	\end{aligned}
	\ee
	with $\xi_{0}=\frac{s_{1}^{2}}{\left(s_{1}-2 \mu / 3\right)^{2}}$ {and $\a=\frac{d-3}{2}$}. Over the domain of $s_1\in\left[\frac{2\m}{3},\infty\right)$, we find that, $\sqrt{\xi\left(s_{1}, a\right)}\geq 1$, which implies $C_{\ell}^{(\alpha)}\left(\sqrt{\xi\left(s_{1}, a\right)}\right)> 0$ if $a\in\left(-\frac{2\m}{9},\frac{2\m}{3}\right)$.
	{Next, for the given domains of $s_1$ and $a$ one has  
		$
		\Re\left[s_2^+(s_1,a)\right]\in\left[-\frac{\m}{3},\frac{2\m}{3}\right].
		$ 
		Now the analyticity domain $E(s_1)$ of $\ma(s_1,s_2^+)$ in $t$ has been determined \cite{martin-II} to be
		\be
		E(s_1)=\begin{cases}
			E\left(0,\frac{2\m}{3}-s_1\,\Bigg|\,4\m+\frac{48\m}{3s_1-2\m}\right),\qquad \frac{2\m}{3}< s_1< \frac{11\m}{3}\\
			\\
			E\left(0,\frac{2\m}{3}-s_1\,\Bigg|\,\frac{192\m}{3s_1+\m}\right),\qquad \frac{11\m}{3}< s_1< \frac{23\m}{3}\\
			\\
			E\left(0,\frac{2\m}{3}-s_1\,\Bigg|\,\m+\frac{48\m}{3s_1-11\m}\right),\qquad  s_1> \frac{23\m}{3}\\
		\end{cases}
		\ee 	where $E(f_1,f_2|d)$ stands for an ellipse with foci at $s_2^+=f_1,\,s_2^+=f_2$ and right extremity at $s_2^+=d$. It is straightforward to see that our $s_2^+$ values always lie in the interior of $E(s_1)$, i.e. the partial wave expansion for $\imm$ above converges for the given domains of $a$ and $s_1$.
	}  Next,  $ a_{\ell}(s_1)\geq0$ for $s_1\in\left[\frac{2\m}{3},\infty\right)$ as a consequence of unitarity. Therefore, if $a\in\left(-\frac{2\m}{9},\frac{2\m}{3}\right)$, $\imm$, is \emph{non-negative} for $s_1\in\left[\frac{2\m}{3},\infty\right)\,.$ If $\mu=0$,  where $\xi=1+4\frac{a}{a-s_1}$, for $\xi>1$, $a>0$ must hold\footnote{For the string case, however, we will find that for $a<0$ the bounds we will consider will still hold. We do not have a general explanation for this apart from observing that $\alpha_1<0$ for certain $-1/3<a<2/3$, which is the range of $a$ we will be interested in.}.
\end{proof}
\section{Bounds on $\{\a_n(a)\}$}\label{abound} 

We can bound the Taylor coefficients $\{\a_n(a)\}$ appearing in the power-series representation of the scattering amplitude $\mt$, \eqref{Mpower}. Towards that end, let us first prove a lemma that will be used repeatedly in our analysis that follows.
\begin{lemma}\label{Hschlicht}
	Consider the kernel $H(\tilde{z};s_1,a)$ of the dispersion relation given by \eqref{Hs2def},
	\be 
	H(\tilde{z};s_1,a)=\frac{27 a^2 \tilde{z} \left(2 s_1-3 a\right)}{27 a^3 \tilde{z}-27 a^2 \tilde{z} s_1-\left(\tilde{z}-1\right)^2 \left(s_1\right)^3}.
	\ee  Define the function
	\be \label{Fdef}
	F(\tilde{z};s_1,a):=\frac{H(\tilde{z};s_1,a)}{\b_1(a,s_1)}.
	\ee For $a\in\left(-\frac{2\m}{9},0\right)\cup\left(0,\frac{4\m}{9}\right)$ and $s_1\in\left[\frac{2\m}{3},\infty\right)$, $F(\tilde{z};s_1,a)$ is a schlicht function, or equivalently, $H(\tilde{z};s_1,a)$ is a \emph{univalent function on the unit disc $|\zt|<1$}.
\end{lemma}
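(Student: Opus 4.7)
The plan is to verify the three defining schlicht properties — normalization, holomorphy in $|\tilde z|<1$, and injectivity — directly for the rational function $F(\tilde z;s_1,a) = H(\tilde z;s_1,a)/\beta_1(a,s_1)$, exploiting that $H$ is linear over quadratic in $\tilde z$.

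First I would handle the normalization. From the power-series expansion in \eqref{Hexp} together with the definition $F = H/\beta_1$, one reads off $F(\tilde z) = \tilde z + O(\tilde z^2)$, so $F(0)=0$ and $F'(0)=1$. Observation (II) above already guarantees $\beta_1(a,s_1)\neq 0$ over the stated domain, so the division is legitimate.

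Next, for holomorphy on $|\tilde z|<1$, I would analyze the roots of $D(\tilde z):=-s_1^3\tilde z^2 + B\tilde z - s_1^3$ with $B = 2s_1^3 - 27 a^2(s_1-a)$. By Vieta the two roots $r_\pm$ of $D$ satisfy $r_+ r_- = 1$, so either they are complex conjugates lying on $|\tilde z|=1$, or they are real with one inside and one outside the unit disc. The latter alternative would spoil holomorphy in $|\tilde z|<1$, so the key step is to show that the discriminant satisfies $\Delta\leq 0$ throughout the stated ranges. A short computation gives
\begin{equation*}
\Delta \;=\; -27a^2(s_1-a)\bigl[4s_1^3 - 27a^2(s_1-a)\bigr],
\end{equation*}
whose first factor is manifestly negative since $s_1\geq 2\mu/3 > 4\mu/9 > a$. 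I would then check that the second factor is non-negative by studying $\phi(a):=27a^2(s_1-a)$ on $[-2\mu/9,4\mu/9]$ for fixed $s_1\geq 2\mu/3$: the critical point $a=2s_1/3$ lies outside this interval, so $\phi$ is maximized at the endpoints, and an elementary one-variable calculation shows that at both endpoints the inequality $4s_1^3\geq\phi(a)$ is saturated precisely at $s_1=2\mu/3$ and strict for $s_1>2\mu/3$. Hence $\Delta\leq 0$, the poles of $H$ lie on $|\tilde z|=1$, and $F$ is holomorphic on the open unit disc.

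Finally, for injectivity, I would compute $H(\tilde z_1)-H(\tilde z_2)$ directly. Writing the numerator of $H$ as $C\tilde z$ with $C=27 a^2(2s_1-3a)$ and clearing denominators, the non-trivial factor collapses to the clean polynomial identity
\begin{equation*}
\tilde z_1\,D(\tilde z_2) - \tilde z_2\,D(\tilde z_1) \;=\; s_1^3(\tilde z_1-\tilde z_2)(\tilde z_1\tilde z_2-1).
\end{equation*}
Thus $H(\tilde z_1)=H(\tilde z_2)$ forces $\tilde z_1=\tilde z_2$ or $\tilde z_1\tilde z_2=1$, and the latter is impossible when $|\tilde z_1|,|\tilde z_2|<1$, giving injectivity on the open disc. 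The main — and essentially only — non-routine step in the whole argument will be the discriminant bound in the second paragraph; the injectivity calculation reduces to a tidy polynomial identity that makes univalence transparent, and normalization is immediate.
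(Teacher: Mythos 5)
Your proof is correct, and for the univalence step it takes a genuinely different route from the paper. The paper also begins by rewriting $F(\tilde z;s_1,a)=\tilde z/(1+\gamma\tilde z+\tilde z^2)$ with $\gamma=27(a/s_1)^2(1-a/s_1)-2$ and demanding $|\gamma|<2$ to keep the poles off the open disc — this is exactly your discriminant condition $\Delta\le 0$, and in fact you carry out the optimization of $27a^2(s_1-a)$ over $a$ and $s_1$ that the paper only asserts ``translates to'' the stated range. Where the two arguments diverge is injectivity: the paper recognizes $F$ as a M\"obius composition with the Koebe function $k(\tilde z)$, invokes Lemma \ref{mobius} to conclude that $F$ inherits the Koebe Grunsky coefficients $\omega_{p,q}=-\delta_{p,q}/p$, and then appeals to the Grunsky inequalities as a necessary \emph{and sufficient} criterion for univalence. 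You instead prove injectivity from scratch via the identity $\tilde z_1 D(\tilde z_2)-\tilde z_2 D(\tilde z_1)=s_1^3(\tilde z_1-\tilde z_2)(\tilde z_1\tilde z_2-1)$, which I have checked and which, combined with $C=27a^2(2s_1-3a)\neq 0$ (equivalent to $\beta_1\neq 0$) and $|\tilde z_1\tilde z_2|<1$, immediately forces $\tilde z_1=\tilde z_2$. Your version is more elementary and self-contained — it needs neither the Grunsky machinery nor the M\"obius-invariance lemma (whose proof the paper in fact leaves empty) — at the cost of being special to this particular rational kernel, whereas the paper's Grunsky route is the one that generalizes and that the authors reuse later when discussing univalence of the full amplitude. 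One tiny point of care: in your Vieta dichotomy you should note that $\Delta=0$ gives a real double root at $\tilde z=\pm 1$, which sits on the boundary and therefore still does not obstruct holomorphy on the open disc; your endpoint analysis already covers this, since saturation occurs only at $s_1=2\mu/3$ with $a$ at the excluded endpoints of its interval.
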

\begin{proof}[{\bf Proof }]
	For $a\in\left(-\frac{2\m}{9},0\right)\cup\left(0,\frac{4\m}{9}\right)$ and $s_1\in\left[\frac{2\m}{3},\infty\right)$, we have already proved that  $\b_1(a,s_1)\ne 0$. Thus, the function $F$ is well-defined in these domains of $a$ and $s_1$. Further, since $\b_0=0$ identically, $F(\tilde{z})$ admits a power series expansion about $\tilde{z}=0 $:
	\be \label{Fpower}
	F(\tilde{z};s_1,a)=\tilde{z}+\sum_{n=2}^{\infty}\frac{\b_n(a,s_1)}{\b_1(a,s_1)}\tilde{z}^n.
	\ee 
	First note that $$F(\tilde z; s_1,a)=\frac{\tilde z}{1+\gamma \tilde z+\tilde z^2}$$ with $\gamma=27(\frac{a}{s_1})^2(1-\frac{a}{s_1})-2$. To avoid a singularity inside the unit disc we need $$|\gamma|<2\,,$$ which translates to $a\in\left(-\frac{2\m}{9},0\right)\cup\left(0,\frac{4\m}{9}\right)$ for real $a$, which is the same condition mentioned above\footnote{Of course we could consider complex $a$ as well at this stage. However, since we want to make use of the positivity of the absorptive part of the amplitude later on, we will restrict our attention to real $a$.}. 
	
	Next, observe that we can write
	\be 
	F(\zt;s_1,a)=k(\zt)\left[1-\frac{27a^2(a-s_1)}{s_1^3}k(\zt)\right]^{-1},
	\ee where $k(\zt)$ is the Koebe function defined in \eqref{Koebe}. It is straightforward to see that $F$ can be considered as a composition of a Moebius transformation with the Koebe function. Then, by \eqref{grunskymob}, $F$ has the Grunsky coefficients
	\be 
	\w_{p,q}=-\frac{\d_{p,q}}{p},\quad p,q\ge1,
	\ee and these satisfy the Grunsky inequalities of \eqref{eq:grunslyomega} for all $N\ge1$. Since Grunsky inequalities are necessary and sufficient for an analytic function inside the unit disc to be univalent, this completes the proof.
\end{proof}
Observe that, $F(\tilde{z};s_1,a)$ and $H(\tilde{z};s_1,a)$ are related by an affine transformation. Thus, the schlichtness of $F$ implies that $H$ \emph{is an univalent function on the unit disc $|\tilde{z}|<1$} for the same domains of $a$ and $s_1$. 
\begin{corollary}\label{bnbound}
	For $a\in\left(-\frac{2\m}{9},0\right)\cup\left(0,\frac{4\m}{9}\right)$ and $s_1\in\left[\frac{2\m}{3},\infty\right)$, the Taylor coefficients $\{\b_n(a,s_1)\}$ in the power-series expansion of $H$ are bounded by
	\be 
	\left|\frac{\beta_n(a,s_1)}{\beta_1(a,s_1)}\right|\le n,\quad n\ge 2
	\ee 
\end{corollary}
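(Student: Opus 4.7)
The plan is to apply de Branges's theorem (Theorem \ref{bieber}) directly to the function $F(\tilde{z}; s_1, a)$ introduced in Lemma \ref{Hschlicht}. Since the corollary is a statement about the ratios $\beta_n/\beta_1$, the key is to recognize that these ratios are precisely the Taylor coefficients of a schlicht function, thereby placing us squarely in the realm where the Bieberbach bound applies.

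First, I would invoke Lemma \ref{Hschlicht} to note that, for $a$ in the stated range and $s_1\in[\tfrac{2\mu}{3},\infty)$, the function $F(\tilde{z}; s_1, a)=H(\tilde{z}; s_1, a)/\beta_1(a,s_1)$ is schlicht on the unit disc $|\tilde{z}|<1$. Next, I would appeal to the power-series representation of $F$ around $\tilde{z}=0$, given in \eqref{Fpower}, namely
\begin{equation*}
F(\tilde{z}; s_1, a) = \tilde{z} + \sum_{n=2}^{\infty} \frac{\beta_n(a,s_1)}{\beta_1(a,s_1)}\,\tilde{z}^n,
\end{equation*}
which is valid because $\beta_0\equiv 0$ identically and $\beta_1\ne 0$ for $a$ in the stated domain. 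This identifies $\beta_n/\beta_1$ as the $n$th Taylor coefficient of the schlicht function $F$ in the normalization of \eqref{TaylorS}.

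Finally, applying de Branges's theorem (Theorem \ref{bieber}) to $F$ immediately yields $|\beta_n(a,s_1)/\beta_1(a,s_1)|\le n$ for every $n\ge 2$, which is exactly the claimed bound. Since all the heavy lifting has already been done in establishing univalence of $F$ in Lemma \ref{Hschlicht}, there is no serious obstacle remaining: the corollary is essentially a one-line consequence of de Branges's theorem together with the normalization $\beta_0=0,\ \beta_1\ne 0$ that allows us to write $F$ in schlicht form. The only point worth being careful about is noting that equality $|\beta_n/\beta_1|=n$ can only be saturated if $F$ is a rotation of the Koebe function, which would constrain $a$ and $s_1$ sharply; but this refinement is not needed for the stated bound.
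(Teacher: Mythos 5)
Your proposal is correct and follows essentially the same route as the paper: invoke Lemma \ref{Hschlicht} to establish that $F(\tilde{z};s_1,a)$ is schlicht with Taylor coefficients $\beta_n/\beta_1$ as in \eqref{Fpower}, then apply de Branges's theorem (Theorem \ref{bieber}) to obtain $|\beta_n/\beta_1|\le n$. Your additional remark about when equality can be saturated is a harmless refinement not needed for the stated bound.
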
	
\begin{proof} Since $ F(\tilde{z};s_1,a)$ is a Schllicht function on the unit disc $|\tilde{z}|<1$ in the given domains of $a$ and $s_1$, we can apply \emph{de Branges's theorem} to the same to obtain the bound.\end{proof}
Let us note down another corollary of the lemma \ref{Hschlicht} for future reference.
\begin{corollary}
	For $a\in\left(-\frac{2\m}{9},0\right)\cup\left(0,\frac{4\m}{9}\right)$ and $s_1\in\left[\frac{2\m}{3},\infty\right)$, the Taylor coefficients $\{\b_n(a,s_1)\}$ in the power-series expansion of $H$ are bounded by
	\be \label{FKoebe}
	\frac{|\tilde{z}|}{(1+|\tilde{z}|)^2}\le\left|F(\tilde{z};s_1,a)\right|\le \frac{|\tilde{z}|}{(1-|\tilde{z}|)^2},\quad |\tilde{z}|<1.
	\ee 
\end{corollary}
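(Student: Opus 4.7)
\medskip

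\noindent\textbf{Proof proposal.} The statement is an immediate application of the Koebe growth theorem (Theorem~\ref{Kgrowth}) to the function $F(\tilde z;s_1,a)$ defined in \eqref{Fdef}. The plan is therefore very short: invoke the previous lemma to certify that $F$ lies in the schlicht class $\mathcal{S}$, and then quote the Koebe growth theorem verbatim.

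\medskip

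\noindent More concretely, I would proceed as follows. First, observe that by Lemma~\ref{Hschlicht}, for $a\in\left(-\tfrac{2\mu}{9},0\right)\cup\left(0,\tfrac{4\mu}{9}\right)$ and $s_1\in\left[\tfrac{2\mu}{3},\infty\right)$, the function $F(\tilde z;s_1,a)$ is univalent on the open unit disc $|\tilde z|<1$. Next, note from the power-series representation \eqref{Fpower} that $F(0;s_1,a)=0$ and $F'(0;s_1,a)=1$, so that $F$ is properly normalised, i.e.\ $F(\,\cdot\,;s_1,a)\in\mathcal{S}$. Finally, apply Theorem~\ref{Kgrowth} to $F(\,\cdot\,;s_1,a)$ for each fixed admissible pair $(s_1,a)$; this yields exactly the inequality \eqref{FKoebe}.

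\medskip

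\noindent There is essentially no obstacle: the whole content has already been packaged into Lemma~\ref{Hschlicht} (the univalence of $H$, hence of $F$) and the normalisation of $F$ is visible by inspection of \eqref{Fpower}. The only minor caveat worth mentioning in the write-up is that although Theorem~\ref{Kgrowth} applies pointwise in $(s_1,a)$, the bounds on the right- and left-hand sides are independent of $s_1$ and $a$, so the inequality holds uniformly over the quoted domains. It may also be worth remarking, for later use, that the two bounds are saturated precisely by rotations of the Koebe function, and that $F$ itself fails to saturate them (it is a Möbius composition of $k(\tilde z)$ rather than a rotation), so the inequalities are strict except at $\tilde z=0$.
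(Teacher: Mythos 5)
Your proposal is correct and matches the paper's own proof, which likewise simply applies the Koebe growth theorem to the schlicht function $F(\tilde z;s_1,a)$ furnished by Lemma \ref{Hschlicht}. The additional remarks on normalisation, uniformity in $(s_1,a)$, and non-saturation are accurate but not needed for the argument.
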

\begin{proof}
	Applying Koebe growth theorem \ref{Kgrowth} to $F(\tilde{z};s_1,a)$, one obtains the bounds.
\end{proof}
Now that we have collected the necessary results, let us now turn to prove the following theorem. 
\begin{theorem}
	\label{alphabound}
	For {non-zero $\mt(\tilde{z},a)$} and $a\in\left(-\frac{2\m}{9},0\right)\cup\left(0,\frac{4\m}{9}\right)$, with $\m>0$,
	\be 
	\left|\frac{\a_n(a)a^{2n}}{\a_1(a)a^2}\right|\le n,\quad\forall\,n\ge2.
	\ee 
\end{theorem}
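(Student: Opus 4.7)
The plan is to deduce the theorem directly from the dispersive inversion formula \eqref{eq:alphadisper} by combining three ingredients that have already been established: (i) the positivity lemma \ref{Apos}, which gives $\mathcal{A}(s_1';s_2^{(+)}(s_1',a))\ge 0$ on the integration domain for the relevant range of $a$; (ii) the fact, noted just after \eqref{Hexp}, that $\beta_1(a,s_1')<0$ throughout $s_1'\in[2\mu/3,\infty)$ when $a\in(-2\mu/9,0)\cup(0,4\mu/9)$; and (iii) the Bieberbach-type bound \ref{bnbound}, $|\beta_n(a,s_1')/\beta_1(a,s_1')|\le n$ for $n\ge 2$, which follows from applying de Branges's theorem to the schlicht function $F(\tilde z;s_1',a)=H/\beta_1$ of lemma \ref{Hschlicht}.

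First, I would write \eqref{eq:alphadisper} for $n=1$ and observe that the integrand $\mathcal{A}\,\beta_1/s_1'$ is non-positive throughout, so that
\[
a^2\alpha_1(a) \;=\; \frac{1}{\pi}\int_{2\mu/3}^{\infty}\frac{ds_1'}{s_1'}\,\mathcal{A}\bigl(s_1';s_2^{(+)}\bigr)\,\beta_1(a,s_1')\;\le\;0,
\]
and in fact $a^2\alpha_1(a)<0$ strictly under the hypothesis that $\mt(\tilde z,a)$ is non-trivial (otherwise $\mathcal{A}$ would vanish identically on the support and the amplitude would be a constant; the theorem is vacuous in that case). Hence $|a^2\alpha_1(a)| = -a^2\alpha_1(a) = \frac{1}{\pi}\int\frac{ds_1'}{s_1'}\,\mathcal{A}\,|\beta_1(a,s_1')|$.

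Next I would estimate the numerator using the triangle inequality together with $\mathcal{A}\ge 0$:
\[
|a^{2n}\alpha_n(a)| \;\le\; \frac{1}{\pi}\int_{2\mu/3}^{\infty}\frac{ds_1'}{s_1'}\,\mathcal{A}\bigl(s_1';s_2^{(+)}\bigr)\,|\beta_n(a,s_1')|.
\]
Inserting the pointwise bound $|\beta_n(a,s_1')| \le n\,|\beta_1(a,s_1')|$ from corollary \ref{bnbound} inside the integrand and pulling the factor $n$ out, the remaining integral is precisely $\pi\,|a^2\alpha_1(a)|$, so that $|a^{2n}\alpha_n(a)|\le n\,|a^2\alpha_1(a)|$ and the theorem follows upon dividing.

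The only place that requires care, and what I would flag as the main subtlety rather than a genuine obstacle, is the compatibility of the three domain restrictions: positivity of $\mathcal{A}$ needs $a\in(-2\mu/9,2\mu/3)$, while schlichtness of the kernel needs $a\in(-2\mu/9,0)\cup(0,4\mu/9)$; since $4\mu/9<2\mu/3$ the intersection is the range stated in the theorem, so all three inputs (positivity, definite sign of $\beta_1$, and the Bieberbach bound on $\beta_n/\beta_1$) are simultaneously valid. A brief remark on non-vanishing of $\alpha_1(a)$ under the non-triviality assumption closes the argument.
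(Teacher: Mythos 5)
Your proposal is correct and follows essentially the same route as the paper's own proof: it establishes $\alpha_1(a)<0$ via the $n=1$ inversion formula with the positivity lemma and the sign of $\beta_1$, then bounds $|a^{2n}\alpha_n(a)|$ by the triangle inequality and the pointwise Bieberbach-type estimate $|\beta_n|\le n|\beta_1|$ from the schlicht kernel. The remark on intersecting the domains of validity of the three inputs is the same bookkeeping the paper performs implicitly.
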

\begin{proof}[{\bf Proof }]
	First, we make sure that the ratio is well-defined in $a\in\left(-\frac{2\m}{9},0\right)\cup\left(0,\frac{4\m}{9}\right)$. In particular, we need to make sure that $\a_1(a)\ne 0$ for $a\in\left(-\frac{2\m}{9},0\right)\cup\left(0,\frac{4\m}{9}\right)$ since $a^2$ is never zero due to $\m\ne 0$. To do so, let us start with the inversion formula, \eqref{eq:alphadisper}, to write 
	\be \label{a1inv}
	-\a_1(a)a^2=\frac{1}{\pi}\int _{\frac{2 \mu }{3}}^{\infty }\frac{ds_1'}{s_1'}\,\mathcal{A}\left(s_1';s_2^{(+)}(s_1',a)\right)\,\left[-\beta _1\left(a,s_1'\right)\right]\,.
	\ee 
	From the \emph{positivity lemma} \ref{Apos}, we have that $\mathcal{A}\left(s_1';s_2^{(+)}(s_1',a)\right)\ge 0$ for $a\in\left(-\frac{2\m}{9},0\right)\cup\left(0,\frac{4\m}{9}\right)$ and $s_1'\in\left[\frac{2\m}{3},\infty\right)$. Further, we have already seen that $\beta _1\left(a,s_1'\right)<0$ for the same domains of $a$ and $s_1'$. Thus, the integrand is \emph{non-negative}. {Further, since $\mt(\tilde{z},a)$ is non-zero, $\mathcal{A}\left(s_1';s_2^{(+)}(s_1',a)\right)$ does not vanish identically over $s_1'\in\left[\frac{2\m}{3},\infty\right)$.} Hence, the integral will be \emph{positive}   implying 
	\be \label{a1neg}
	\a_1(a)<0,\quad a\in 	\left(-\frac{2\mu}{9},\frac{4\m}{9}\right).
	\ee   
	Therefore, the ratio under consideration,
	\be 
	a^{2n-2}\,\frac{\a_n(a)}{\a_1(a)}
	\ee is well-defined for  $a\in\left(-\frac{2\m}{9},0\right)\cup\left(0,\frac{4\m}{9}\right)$.\par 
	Now, let us use the inversion formula \eqref{eq:alphadisper} once again, and taking the absolute values on both sides of the formula one obtains
	\begin{align} 
		\left|\a_n(a)a^{2n}\right|&=\frac{1}{\p}\left|\int _{\frac{2 \mu }{3}}^{\infty }\frac{ds_1'}{s_1'}\,\mathcal{A}\left(s_1';s_2^{(+)}(s_1',a)\right)\,\beta _n\left(a,s_1'\right)\right|,\nn\\
		&\le \frac{1}{\p}\int _{\frac{2 \mu }{3}}^{\infty }\frac{ds_1'}{s_1'}\,\left|\mathcal{A}\left(s_1';s_2^{(+)}(s_1',a)\right)\,\beta _n\left(a,s_1'\right)\right|\quad\left[\text{Applying triangle inequality}\right],\nn\\
		&=\frac{1}{\p}\int _{\frac{2 \mu }{3}}^{\infty }\frac{ds_1'}{s_1'}\,\mathcal{A}\left(s_1';s_2^{(+)}(s_1',a)\right)\,\left|\beta _n\left(a,s_1'\right)\right|\quad\left[\ma\left(s_1';s_2^{(+)}(s_1',a)\right)\ge0\,\, \text{by lemma \ref{Apos} }\right],\nn
	\end{align}
	\begin{align}
		&\le \frac{1}{\p}\int _{\frac{2 \mu }{3}}^{\infty }\frac{ds_1'}{s_1'}\,\mathcal{A}\left(s_1';s_2^{(+)}(s_1',a)\right)\,n\left|\beta _1\left(a,s_1'\right)\right|\quad\left[\text{Applying corollary}\,\, \ref{bnbound}\right],\nn\\
		&=\frac{n}{\p}\int _{\frac{2 \mu }{3}}^{\infty }\frac{ds_1'}{s_1'}\mathcal{A}\left(s_1';s_2^{(+)}(s_1',a)\right)\,\left[-\beta _1\left(a,s_1'\right)\right]\quad\left[\b_1<0\iff|\b_1|=-\b_1\right],\nn\\
		&=n\,\left(-\a_1(a)a^2\right)=n\left|\a_1(a)a^2\right|\quad \left[\a_1(a)<0\,\,\text{from}\,\eqref{a1neg}\right].
	\end{align}
	$\therefore\,$ For $a\in\left(-\frac{2\m}{9},0\right)\cup\left(0,\frac{4\m}{9}\right)$, 
	\be 
	\left|\frac{\a_n(a)a^{2n}}{\a_1(a)a^2}\right|\le n,\qquad\forall\,n\ge2.
	\ee 
\end{proof}	
For illustration, we show $ \left|\frac{\a_n(a)a^{2n}}{\a_1(a)a^2}\right|$ as a function of $a$ for 1-loop $\phi^4$-amplitude and tree level type II string amplitude in figure \eqref{fig:coef}. The presented proof assumes the range $a\in\left(-\frac{2\m}{9},0\right)\cup\left(0,\frac{4\m}{9}\right)$, even though the plots suggest that bound is still valid till $a\in\left(-\frac{2\m}{9},0\right)\cup\left(0,\frac{2\m}{3}\right)$, for some cases. For the string amplitude $a\in\left(-\frac{s_1^{(0)}}{3},0\right)\cup\left(0,\frac{2s_1^{(0)}}{3}\right)$, where $s_1^{(0)}$ is the starting point of the physical cuts in s-channel. We use the fact that first massive state is at $s_1^{(0)}=1$.

\begin{figure}[hbt!]
	\centering
	\begin{subfigure}[b]{0.48\textwidth}
		\centering
		\includegraphics[width=\textwidth]{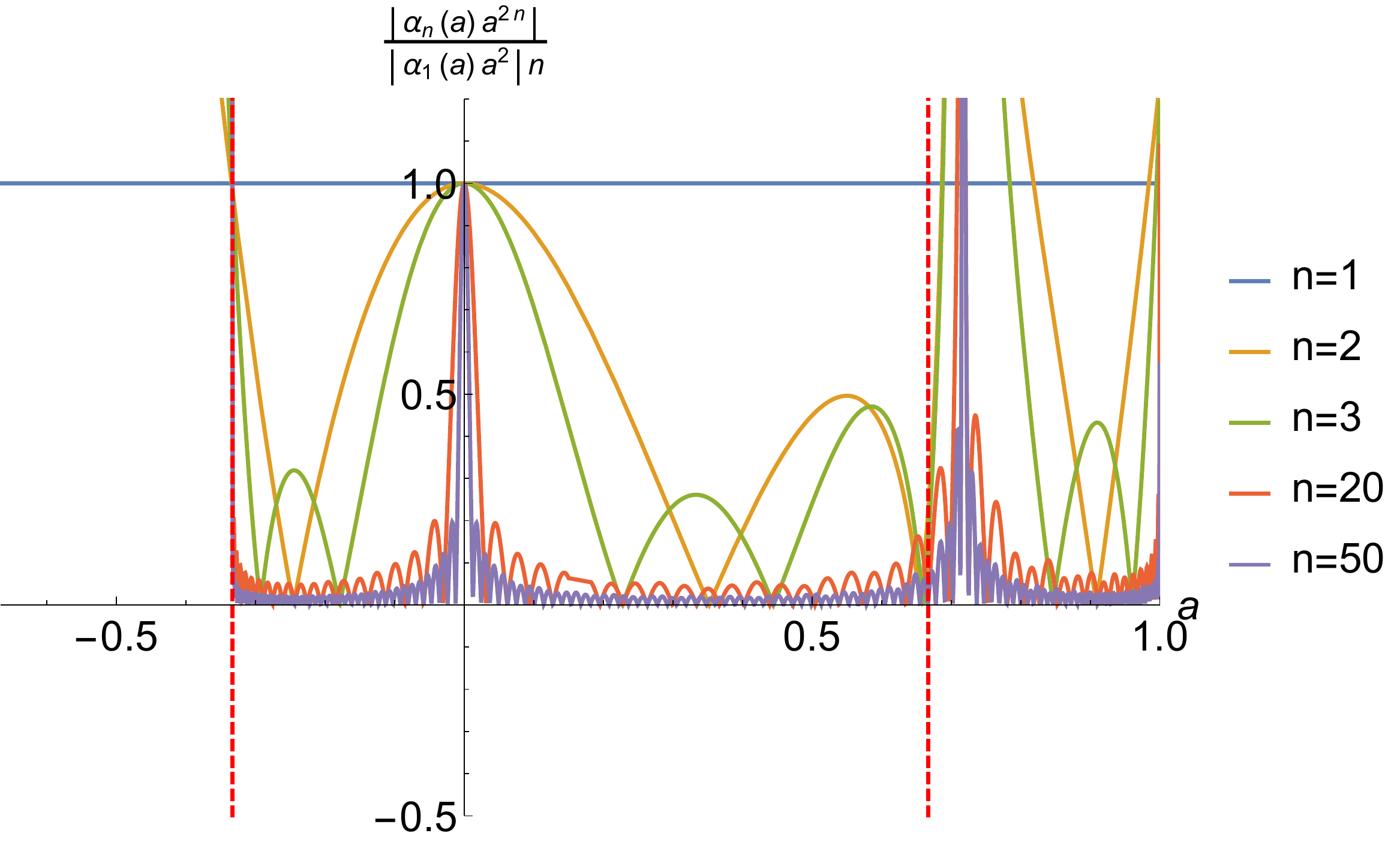}
		\caption{Tree level type II string amplitude. Red lines are $a=-\frac{1}{3}, \frac{2}{3}$}
		\label{fig:coef_string}
	\end{subfigure}
	\hfill
	\begin{subfigure}[b]{0.48\textwidth}
		\centering
		\includegraphics[width=\textwidth]{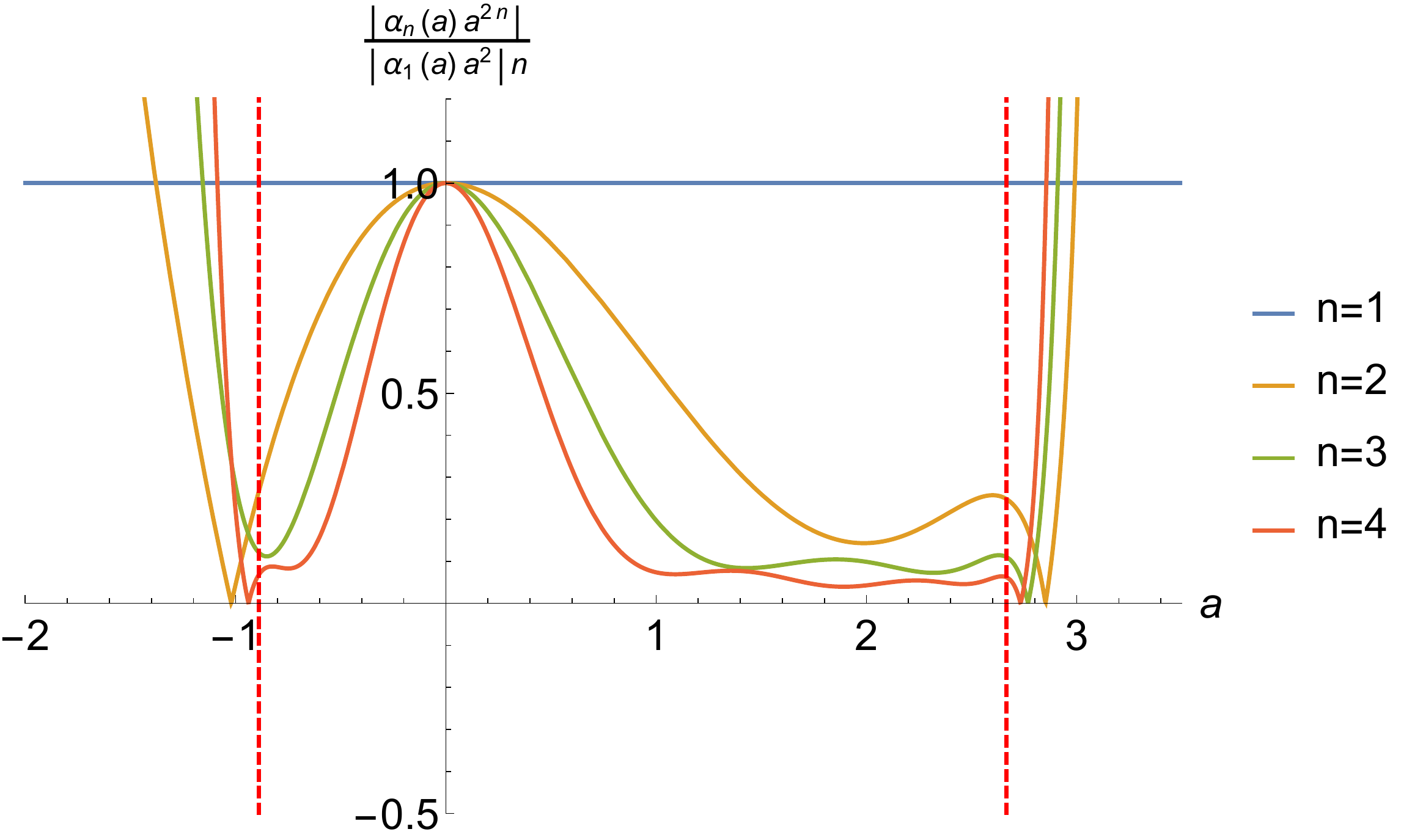}
		\caption{1-loop $\phi^4$-amplitude.  Red lines are $a=-\frac{8}{9}, \frac{8}{3}$}
		\label{fig:coef_phi4}
	\end{subfigure}
	\caption{Bounds on $ \left|\frac{\a_n(a)a^{2n}}{\a_1(a)a^2}\right|$ as a function of $a$ }
	\label{fig:coef}
\end{figure}

\section{Stronger bounds on the Wilson coefficients ${\mathcal W}_{p,q}$}\label{wilsonbound}
In order to derive bounds on ${\mathcal W}_{p,q}$, we first recall the formula of \eqref{eq:alphatoW}. 
{ We have already proved in equation (4.10) that $\alpha_1 < 0$ for $a\in \left(-\frac{2\mu}{9},0\right) \cup\left(0,\frac{4\m}{9}\right)$. Now, $\alpha_1=-\mathcal{W}_{1,0}\left(a\,\frac{\mathcal{W}_{0,1}}{\mathcal{W}_{1,0}}+1\right)$ which follows from  \eqref{eq:alphatoW}. Further, $\mathcal{W}_{1,0}$ is strictly positive, which was shown in \cite{ASAZ, tolley, nima}. This immediately implies that $\left( a\, \frac{\mathcal{W}_{0,1}}{\mathcal{W}_{10}}+1\right)>0$ for the given range of $a$ quoted above. From here, the bound in \eqref{w0110} follows. 
	\be\label{w0110}
	-\frac{9}{4\m}<\frac{\mathcal{W}_{0,1}}{\mathcal{W}_{1,0}}< \frac{9}{2\m}\,.
	\ee
	It is to be emphasized that if \eqref{w0110} is not satisfied, i.e., if $\mathcal{W}_{0,1}/\mathcal{W}_{1,0}$ is outside the range given by \eqref{w0110}, then dialing $a$ within the range $-\frac{2\m}{9}< a <\frac{4\m}{9}, \,a\ne0$, one can make the factor $\left( a \,\frac{\mathcal{W}_{0,1}}{\mathcal{W}_{1,0}}+1\right)$ \emph{change sign} contradicting $\alpha_1(a)<0$  in the said range of $a$.\footnote{{Note that if we were to find stronger bounds on the ratio, we would need a {\it wider} allowed range for $a$ and vice versa, contrary to the naive expectation.}}
	Let us present this derivation a little differently as well.   Using theorem \ref{alphabound} for $n=2$ and \eqref{eq:alphatoW}, we find
	\be\label{eq:n2alphaW}
	-2\leq 2-\frac{27 a^2 \left(a \left(a \mathcal{W}_{0,2}+\mathcal{W}_{1,1}\right)+\mathcal{W}_{2,0}\right)}{a \mathcal{W}_{0,1}+\mathcal{W}_{1,0}} \leq 2\,.
	\ee
	Now if the denominator $a \mW_{0,1}+\mW_{1,0}$ vanishes, then unless at the same point the numerator vanishes, we will contradict the inequality. Suppose for $a=a_0$, the denominator vanishes. Then we must have $a_0 (a_0 \mW_{0,2}+\mW_{1,1})+\mW_{2,0})=0$ giving a relation between three apparently independent Wilson coefficients. This appears unnatural to us and if we were to avoid this possibility we would again get eq.(\ref{w0110}). Using eq.(\ref{eq:n2alphaW}), it is also possible to deduce bounds \cite{RS} on the individual $\mW_{0,2}/\mW_{1,0}, \mW_{1,1}/\mW_{1,0}, \mW_{2,0}/\mW_{1,0}$ and these appear comparable to \cite{Caron-Huot:2020cmc}.}

The condition $-\frac{9}{4\m}<\frac{\mathcal{W}_{0,1}}{\mathcal{W}_{1,0}}$ was derived in \cite[eq (5)]{ASAZ}, while $\frac{\mathcal{W}_{0,1}}{\mathcal{W}_{1,0}}< \frac{9}{2\m}$ is a new finding. For illustration purpose, we show in figure \eqref{fig:W01W10pion} that pion S-matrices satisfy them in a very non-trivial way. We will comment more on the behaviour exhibited in figure (\ref{fig:W01W10pion}) below. 
\begin{figure}[hbt!]
	\centering
	\includegraphics[width=0.6\textwidth]{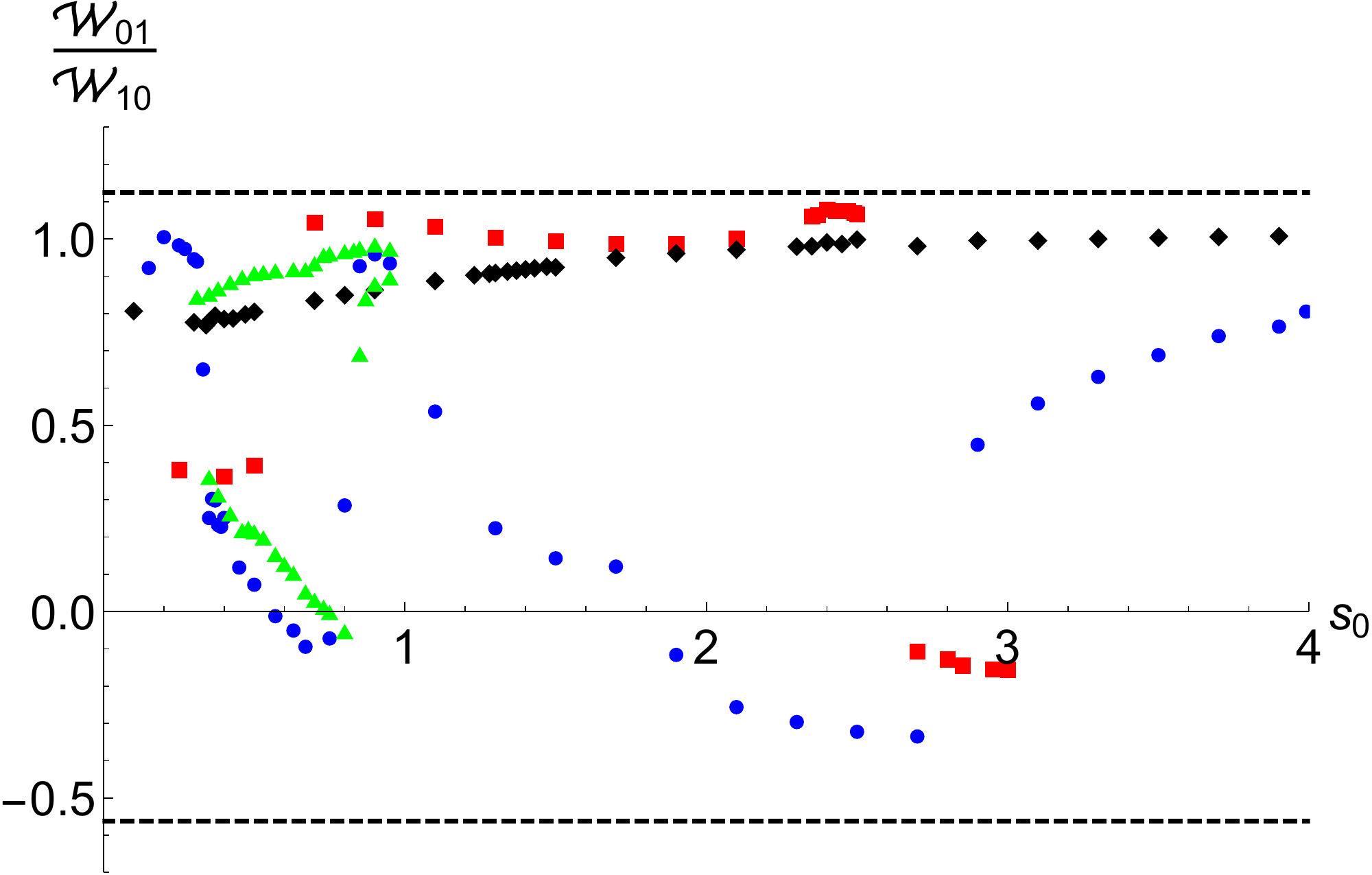}
	\caption{Ratio of $\frac{\mathcal{W}_{0,1}}{W_{1,0}}$ obtained from the S-matrix bootstrap. The horizontal axis is the Adler zero $s_0$. The green points are for the pion lake \cite{gpv}. The blue and red points are for the upper and lower river boundaries \cite{bhsst,bst} while the black points are for the line of minimum averaged total cross section S-matrices \cite{bst}. }
	\label{fig:W01W10pion}
\end{figure}

For the string example, we can ask the following question: Given ${\mathcal W}_{0,1},{\mathcal W}_{1,0},{\mathcal W}_{1,1}, {\mathcal W}_{2,0}$, in other words the Wilson coefficients till the eight-derivative order term $x^2$, how constraining is \eqref{eq:n2alphaW}? The situation is shown in fig(\ref{fig:a2string}). Quite remarkably, the range of $\mW_{02}$ is very limited; the inequality is very constraining indeed! 

\begin{figure}[hbt!]
	\centering
	\begin{subfigure}[b]{0.48\textwidth}
		\centering
		\includegraphics[width=\textwidth]{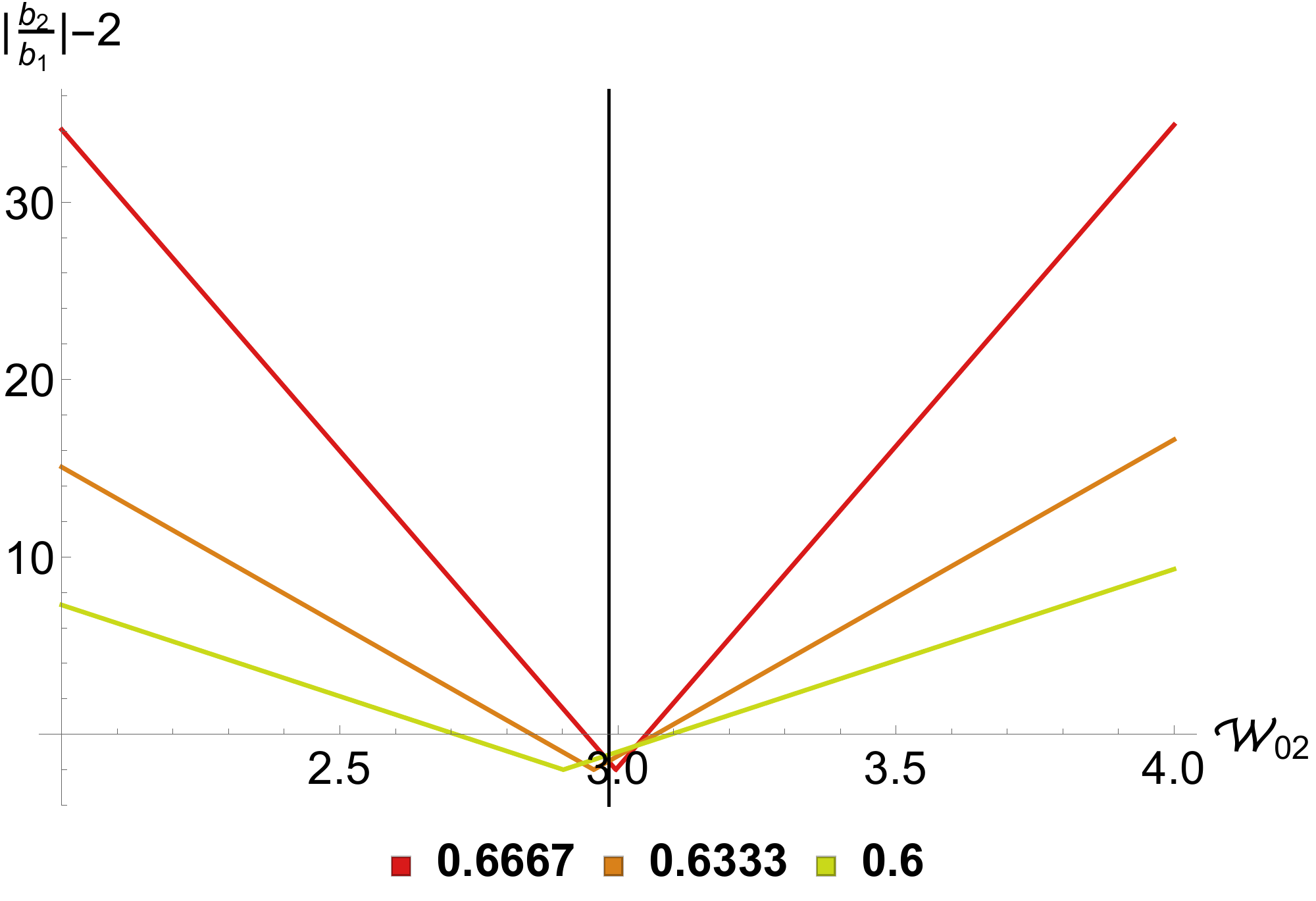}
		\caption{Tree level type II string amplitude}
		\label{fig:a2string}
	\end{subfigure}
	\hfill
	\begin{subfigure}[b]{0.48\textwidth}
		\centering
		\includegraphics[width=\textwidth]{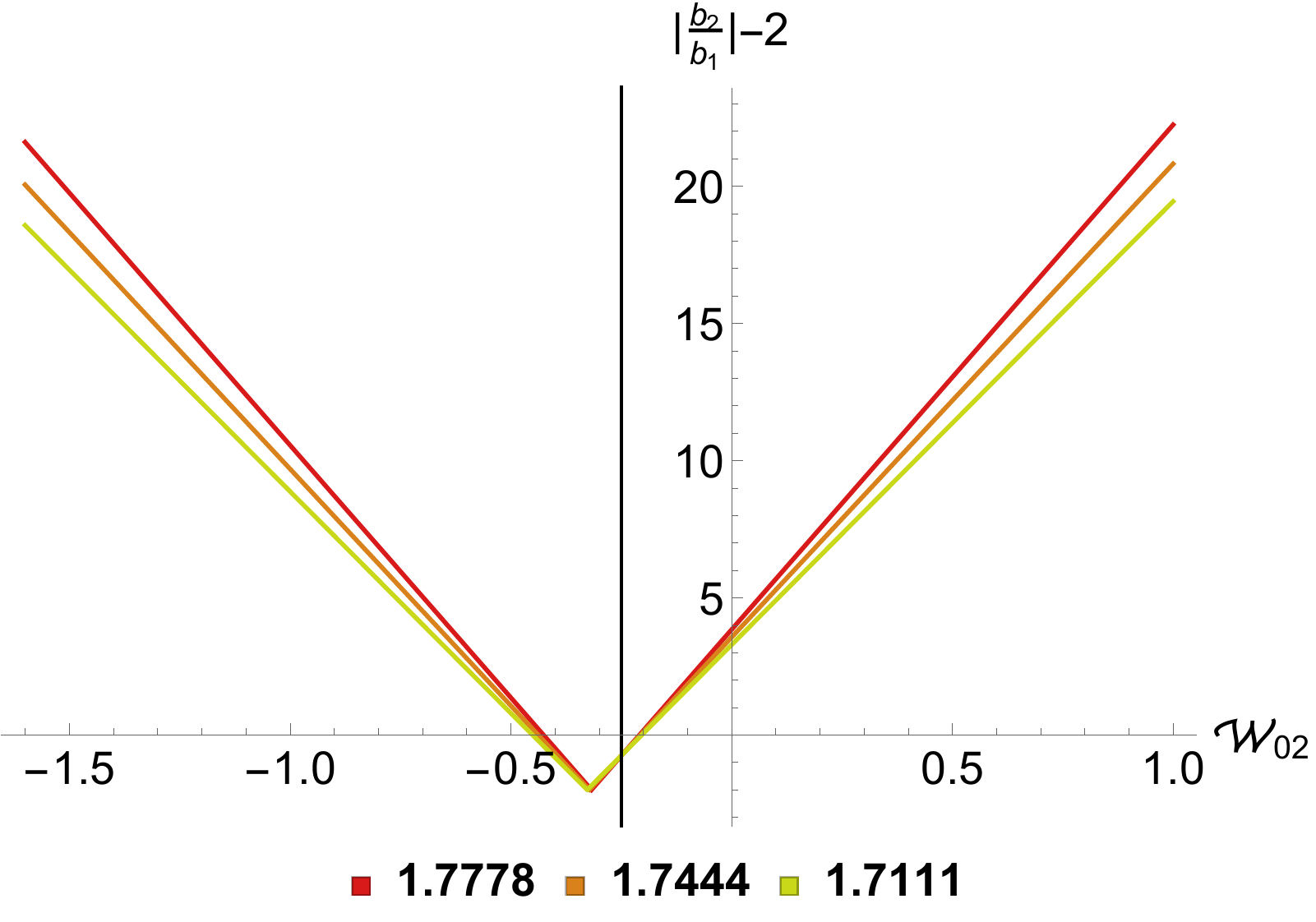}
		\caption{Pion scattering amplitude, $s_0=0.35$}
		\label{fig:a2upper4}
	\end{subfigure}
	\caption{Constraints on Wilson coefficients using \eqref{eq:n2alphaW}.  Given ${\mathcal W}_{0,1},{\mathcal W}_{1,0},{\mathcal W}_{1,1}, {\mathcal W}_{2,0}$ figure shows that bound on the $\mathcal{W}_{0,2}$. Since $\left|\frac{b_2}{b_1}\right|-2$ should be less than zero, $\mathcal{W}_{0,2}$ must lie inside the triangle. {Black} line is the exact answer. Different values of $a$ are indicated with different colours.}
	\label{fig:a2a3bound}
\end{figure}
We can further investigate the situation for $n=3$. We get
\be\label{eq:n3alphaW}
\begin{split}
	&-3\leq 3+\frac{27 a^2 \left(a \left(-4 a \mathcal{W}_{0,2}+27 a \left(a \left(a \left(a \mathcal{W}_{0,3}+\mathcal{W}_{1,2}\right)+\mathcal{W}_{2,1}\right)+\mathcal{W}_{3,0}\right)-4 \mathcal{W}_{1,1}\right)-4 \mathcal{W}_{2,0}\right)}{a \mathcal{W}_{0,1}+\mathcal{W}_{1,0}}\leq 3
\end{split}
\ee

We give in the appendix a demonstration of how to constrain $\mathcal{W}_{0,3}$ using this inequality.

\subsection{Bounds in case of EFTs:} 
In an EFT, usually, the Lagrangian is known up to some energy scale. From that information, one can calculate the amplitude up to that scale.  In such cases, we can subtract off the known part of the amplitude. These steps result in a shift in  the lower limit of the dispersion integral \eqref{crossdisp} by the scale $\d_0$, namely $\m\to \m+3\d_0/2$ (see \cite{ASAZ}). Therefore, making this replacement in \eqref{w0110}, we have
\be \label{mudelta}
-\frac{9}{4\m+6\d_0}<\frac{\mathcal{W}_{0,1}}{\mathcal{W}_{1,0}}< \frac{9}{2\m+3\d_0}\,.
\ee
Here $\mW$ are the Wilson coefficients of the amplitude with subtractions. Now notice that if we consider $\delta_0\gg \mu$ then we have
\be\label{strong}
-\frac{3}{2\d_0}<\frac{\mathcal{W}_{0,1}}{\mathcal{W}_{1,0}}< \frac{3}{\d_0}\,.
\ee
Let us compare this to \cite{Caron-Huot:2020cmc}. Converting their results to our conventions, we find that the lower bound above is identical to their findings--this is corroborated by the results of \cite{tolley} as well as what arises from crossing symmetric dispersion relations \cite{ASAZ}. The other side of the bound is more interesting. The strongest result in $d=4$ in \cite{Caron-Huot:2020cmc} places the upper bound at $\approx 5.35/\delta_0$ in our conventions. Their approach also makes the bound spacetime dimension dependent. Now remarkably, the bound we quote above and the $d\gg 1$ limit of \cite{Caron-Huot:2020cmc} are identical! In EFT approaches, one takes the so-called null constraints or locality constraints and expands in the limit $\delta_0\gg\mu$.  It is possible that a more exact approach building on \cite{Caron-Huot:2020cmc} will lead to a stronger bound as in \eqref{strong}.

Let us now comment on the behaviour in the figure (\ref{fig:W01W10pion}). First, notice that all S-matrices appear to respect the upper bound we have found above; for the S-matrix bootstrap results, we set $\delta_0=0$. For comparison, note that for 1-loop $\phi^4$, and 2-loop chiral perturbation theory, we have
\be
\left(\frac{\mW_{0,1}}{\mW_{1,0}}\right)_{\phi^4}\approx-0.315 \,,\quad \left(\frac{\mW_{0,1}}{\mW_{1,0}}\right)_{\chi-PT}\approx-0.135\,,
\ee
both in units where $m=1$. These numbers would be closer to the lower black dashed line in the figure (\ref{fig:W01W10pion}), which is the bound that is common in all approaches so far. The upper black dashed line is what we find in the current paper. For future work, it will be interesting to search for an interpolating bound as in \eqref{mudelta} which enables us to interpolate between $\delta_0=0$ and $\delta_0\gg \mu$.

\section{Two-sided bounds on the amplitudes}\label{ampbound}
The crossing symmetric dispersion relation can enable us to derive $2-$sided bounds on the scattering amplitude $\mt$. In this section, we will derive such bounds. The result that we will first prove is the following:
\begin{theorem}\label{th:ampbound}
	Let $\mt(\tilde{z},a)$ be a unitary and crossing symmetric scattering amplitude admitting the dispersive representation \eqref{crossdisp} and admits the power series expansion \eqref{Mpower} about $\tilde{z}=0$ which converges in the open disc $|\tilde{z}|<1$. Define the function 
	\be\label{eq:fdef} 
	f(\tilde{z},a):=\frac{\mt(\tilde{z},a)-\a_0}{\a_1(a)a^2},\qquad\a_0=\mt(\tilde{z}=0,a).
	\ee 
	Then for $a\in\left(-\frac{2\m}{9},0\right)\cup\left(0,\frac{4\m}{9}\right)$,
	\begin{enumerate}
		\item [(1)] 
		\be \label{ubound}
		\left|f(\tilde{z},a)\right|\le \frac{|\tilde{z}|}{(1-|\tilde{z}|)^2},\quad |\tilde{z}|<1
		\ee 
		\item[(2)] 
		\be 
		|f(\tilde{z},a)|\ge \frac{|\tilde{z}|}{(1+|\tilde{z}|)^2},\quad\tilde{z}\in\mathbb{R}\,\land\, |\tilde{z}|<1.
		\ee 
	\end{enumerate} 
\end{theorem}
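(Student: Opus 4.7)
My plan is to exhibit $f(\tilde z,a)$ as a convex combination of the schlicht functions $F(\tilde z;s_1',a)$ introduced in Lemma \ref{Hschlicht}, and then apply the Koebe growth theorem termwise inside the dispersive integral. Concretely, I will start from the crossing symmetric dispersion relation \eqref{crossdisp} together with the inversion formula \eqref{eq:alphadisper} at $n=1$, and write
\begin{equation*}
f(\tilde z,a)\;=\;\frac{\mt(\tilde z,a)-\alpha_0}{\alpha_1(a)a^2}\;=\;\frac{\displaystyle\int_{2\mu/3}^{\infty}\frac{ds_1'}{s_1'}\,\ma\bigl(s_1';s_2^{(+)}(s_1',a)\bigr)\,\beta_1(a,s_1')\,F(\tilde z;s_1',a)}{\displaystyle\int_{2\mu/3}^{\infty}\frac{ds_1'}{s_1'}\,\ma\bigl(s_1';s_2^{(+)}(s_1',a)\bigr)\,\beta_1(a,s_1')},
\end{equation*}
where I used $H=\beta_1 F$. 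For $a\in(-2\mu/9,0)\cup(0,4\mu/9)$, the positivity lemma \ref{Apos} gives $\ma\ge 0$ and the explicit form of $\beta_1$ shown earlier gives $\beta_1(a,s_1')<0$ throughout the integration range, so the product $(1/s_1')\,\ma\,\beta_1$ has a single, definite sign. Normalizing by the denominator therefore defines a genuine probability measure $d\nu(s_1';a)$ on $[2\mu/3,\infty)$, in terms of which $f(\tilde z,a)=\int F(\tilde z;s_1',a)\,d\nu(s_1';a)$.

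For the upper bound (1), I will apply the triangle inequality and then the Koebe upper bound $|F(\tilde z;s_1',a)|\le |\tilde z|/(1-|\tilde z|)^2$ (Theorem \ref{Kgrowth} applied to $F$, which is schlicht by Lemma \ref{Hschlicht}) pointwise in $s_1'$, obtaining
\begin{equation*}
|f(\tilde z,a)|\;\le\;\int\!|F(\tilde z;s_1',a)|\,d\nu\;\le\;\frac{|\tilde z|}{(1-|\tilde z|)^2}\int d\nu\;=\;\frac{|\tilde z|}{(1-|\tilde z|)^2}.
\end{equation*}

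For the lower bound (2), the triangle inequality goes the wrong way, so I will exploit the restriction $\tilde z\in\mathbb R$. Writing $F(\tilde z;s_1',a)=\tilde z/(1+\gamma\tilde z+\tilde z^2)$ with $\gamma=27(a/s_1')^2(1-a/s_1')-2$, the argument inside Lemma \ref{Hschlicht} shows $|\gamma|<2$ uniformly on the allowed ranges of $a$ and $s_1'$; hence the denominator $1+\gamma\tilde z+\tilde z^2$ is strictly positive on $(-1,1)$, so $F$ is real with the same sign as $\tilde z$, uniformly in $s_1'$. Combining this uniform sign with the Koebe lower bound $|F|\ge |\tilde z|/(1+|\tilde z|)^2$, integration against the positive measure $d\nu$ preserves the sign and the inequality, yielding $|f(\tilde z,a)|\ge |\tilde z|/(1+|\tilde z|)^2$.

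The two conceptual steps I expect to be delicate are (i) verifying carefully that the ratio defining $d\nu$ really is a probability measure (in particular that the denominator is non‑zero, which is the content of \eqref{a1neg} and uses that $\mt$ is non‑trivial), and (ii) justifying the uniform sign of $F$ on $(-1,1)\times[2\mu/3,\infty)$ for the stated $a$‑range, since this is what rescues the lower bound from the usual failure of the triangle inequality on convex combinations of complex‑valued univalent functions. The remainder is a straightforward pointwise application of the Koebe growth theorem under the integral.
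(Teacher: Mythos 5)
Your proposal is correct and follows essentially the same route as the paper: the upper bound is the identical chain of triangle inequality, positivity of the absorptive part, $\beta_1<0$, $\alpha_1<0$, and the Koebe upper bound applied to the schlicht kernel $F$ under the dispersive integral. For the lower bound you replace the paper's auxiliary factor $\rho(\tilde{z};s_1,a)$ by the equivalent observation that $1+\gamma\tilde{z}+\tilde{z}^2>0$ on $(-1,1)$ forces $F$ to carry the sign of $\tilde{z}$ uniformly in $s_1'$, after which the pointwise Koebe lower bound integrates against your (correctly identified) probability measure; this is the same mechanism as the paper's, slightly more cleanly packaged.
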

\begin{proof}[{\bf Proof.}]
	Let us first prove the upper bound. Starting with the dispersion relation \eqref{crossdisp}, we can obtains
	\begin{align}
		|\mt(\zt,a)-\alpha_{0}|&\le\imeas\left|\immp\right|\,\left|H(s_1',\zt)\right|\quad\left[\text{Applying triangle inequality}\right],\nn\\
		&=\frac{1}{\p}\imeas\immp\left|\b_1(a,s_1')\right|\left|F(\zt;s_1',a)\right|\quad\left[\text{Using lemma \ref{Apos} and \eqref{Fdef}}\right],\nn\\
		&\le \frac{1}{\p}\imeas\immp	\left|\b_1(a,s_1')\right| \frac{|\zt|}{(1-|\zt|)^2}\quad\left[\text{Using corollary \ref{FKoebe}}\right],\nn\\
		&=\frac{|\zt|}{(1-|\zt|)^2}\frac{1}{\p}\imeas\immp	\left[-\b_1(a,s_1')\right]\quad\left[\b_1<0\iff|\b_1|=-\b_1\right],\nn\\
		&=\frac{|\zt|}{(1-|\zt|)^2}\,\left[-\a_1(a)a^2\right]=\frac{|\zt|}{(1-|\zt|)^2}\,\left|\a_1(a)a^2\right|\quad\left[\a_1(a)<0\,\,\text{from}\,\eqref{a1neg}\right].
	\end{align}
	Thus, we have finally, 
	\be 
	\left|\frac{\mt(\zt,a)-\a_0}{\a_1(a)a^2}\right|\le \frac{|\zt|}{(1-|\zt|)^2},\quad |\zt|<1;
	\ee which proves part $(1)$ of the theorem.\par 
	Next, let us prove the lower bound. First, we write 
	\be \label{rintro}
	F(\zt;s_1,a)\equiv\frac{H(\zt;s_1,a)}{\b_1(a,s_1)}=\frac{\zt}{(1+\zt)^2}\times\rho(\zt;s_1,a),
	\ee where
	\be 
	\rho(\zt;s_1,a):=\left(\frac{1+\zt}{1-\zt}\right)^2\times \left[1-\frac{27a^2(a-s_1)}{s_1^3}k(\zt)\right]^{-1},
	\ee  $k(\zt)$ being the Koebe function of \eqref{Koebe}. Now it turns out that $\r>1$ for $\zt\in\rb^+$ and $\r<-1$ for $\zt\in\rb^-$. This immediately implies
	\begin{align} 
		\label{rhomod}	|\r|&>1,\\
		\label{rhosign}	\zt\r&=|\zt||\r|.
	\end{align} Next, we use the dispersion relation  and \eqref{Fdef} to get 
	\begin{align}
		\left|\mt(\zt,a)-\a_0\right|&=\frac{1}{\p}\left|\imeas\immp \b_1(a,s_1')\,F(\zt;s_1,a)\right|,\nn\\
		&=\frac{1}{\p}\left|\imeas\immp \b_1(a,s_1')\,\frac{\zt}{(1+\zt)^2}\times\rho(\zt;s_1,a)\right|\quad\left[\text{Using \eqref{rintro}}\right],\nn\\
		&=\frac{1}{\p}\left|\imeas\immp \b_1(a,s_1')\,\frac{|\zt|}{(1+\zt)^2}\times\left|\rho(\zt;s_1,a)\right|\right|\quad\left[\text{Using \eqref{rhosign}}\right],\nn\\
		&\ge \frac{1}{\p}\frac{|\zt|}{(1+|\zt|)^2}\left|\imeas\immp\b_1(a,s_1')\right|\quad\left[\text{Using triangle inequality and \eqref{rhomod}}\right],\nn\\
		&=\frac{1}{\p}\frac{|\zt|}{(1+|\zt|)^2}\left|\a_1(a)a^2\right|.
	\end{align}
	Therefore, we have finally,
	\be 
	\left|\frac{\mt(\zt,a)-\a_0}{\a_1(a)a^2}\right|\ge \frac{|\zt|}{(1+|\zt|)^2},\quad \zt\in\rb\,\land\,|\zt|<1;
	\ee proving the second part and, hence, the theorem.
\end{proof}
%
We emphasize that our derivation for upper bound considers $\tilde{z}$ as complex numbers, while we present the derivation for lower bound considering only $\tilde{z}$ real numbers. We worked with a range of $-\frac{2\m}{9}<a<\frac{4\m}{9}$. Nevertheless, both of the bounds are valid for complex $\tilde{z}$ as far as we have observed. These bounds are satisfied by the $1-$loop $\phi^4$ amplitude and close string amplitude even for complex $\tilde z$, demonstrated in figure \eqref{fig:bound}, \eqref{fig:boundcomplex}. 

Even though, we have presented our proof for the range $a\in\left(-\frac{2\m}{9},0\right)\cup\left(0,\frac{4\m}{9}\right)$, for certain cases, we observed that bound is still valid till $a\in\left(-\frac{2\m}{9},0\right)\cup\left(0,\frac{2\m}{3}\right)$. For massless amplitude its $a\in\left(-\frac{s_1^{(0)}}{3},0\right)\cup\left(0,\frac{2s_1^{(0)}}{3}\right)$, where $s_1^{(0)}$ is the starting point of the physical cuts in s-channel.
Let us now rewrite the two sided bounds of the familiar Mandelstam variables. 
Now first note that, for real $\tilde{z}>0$, in terms of $x,a$ variables, the bounds become
\be
\frac{a^2 |\a_1(a)||x|}{4|x|+27 a^2}\leq \left|\mathcal{T}(z,a)-\a_0\right| \leq\frac{ |\a_1(a)||x|}{27}\,.
\ee
Here since $x=-27 a^2 \tilde z/(\tilde z-1)^2$, $x$ is real and negative. Now let us examine these bounds in the Regge limit where $|s_1|\rightarrow \infty$ with $s_2$ fixed.  From \eqref{eq:skdef}, it is clear that in this case in the $z$-variable,  $z\rightarrow e^{2\pi i/3}$ which also takes $s_2\rightarrow a$.  Now writing $s_1=|s_1|e^{i\theta/2}$, so that $x\sim |s_1|^2 e^{i\theta}$ when $|s_1|\rightarrow\infty$, we find
\be
\frac{1}{4}-\frac{27 a^2 \sin^2\frac{\theta}{2}}{16 \left| x\right| }+O\left(\frac{1}{|x|^{3/2}}\right)\leq \left|\frac{\mt(\zt,a)-\a_0}{\a_1(a)a^2}\right| \leq\frac{\left| x\right|  }{27 a^2 \sin ^2 \frac{\theta }{2}}-\frac{1}{4}\cot^2 {\theta}+O\left(\frac{1}{|x|^{1/2}}\right)\,.
\ee
Let us comment on this form. First, since $|x|\sim |s_1|^2$, the upper bound for fixed $\theta$ is the $|s_1|^2$ bound on the amplitude. Next, note the the important $\sin^2\frac{\theta}{2}$ factor. If we took $s_1$ to be real and positive, then the upper bound would be trivial. However, the real $s_1$-axis gets mapped to boundary of the unit disc, which is not a part of the open disc. Thus to use these bounds, we have to keep $\theta\in(0,2\pi)$, with the end-points excluded. Next, note the more interesting lower bound which begins with a constant! Finally, and importantly, the bound involves $\alpha_1(a)a^2$. In terms of Wilson coefficients, this involves only $\mathcal{W}_{01}, \mathcal{W}_{10}$. Thus, in the Regge limit, we have the following interesting bound
\be
\frac{27a^2}{4} \left|a\mathcal{W}_{01}+\mW_{10}\right|\lesssim \left|\mt(\tilde z,a)-\mt(0,a)\right| \lesssim\frac{\left| s_1\right|^2  }{\sin ^2 \frac{\theta }{2}}\left|a\mathcal{W}_{01}+\mW_{10}\right|\,.
\ee


\begin{figure}[hbt!]
	\centering
	\begin{subfigure}[b]{0.48\textwidth}
		\centering
		\includegraphics[width=\textwidth]{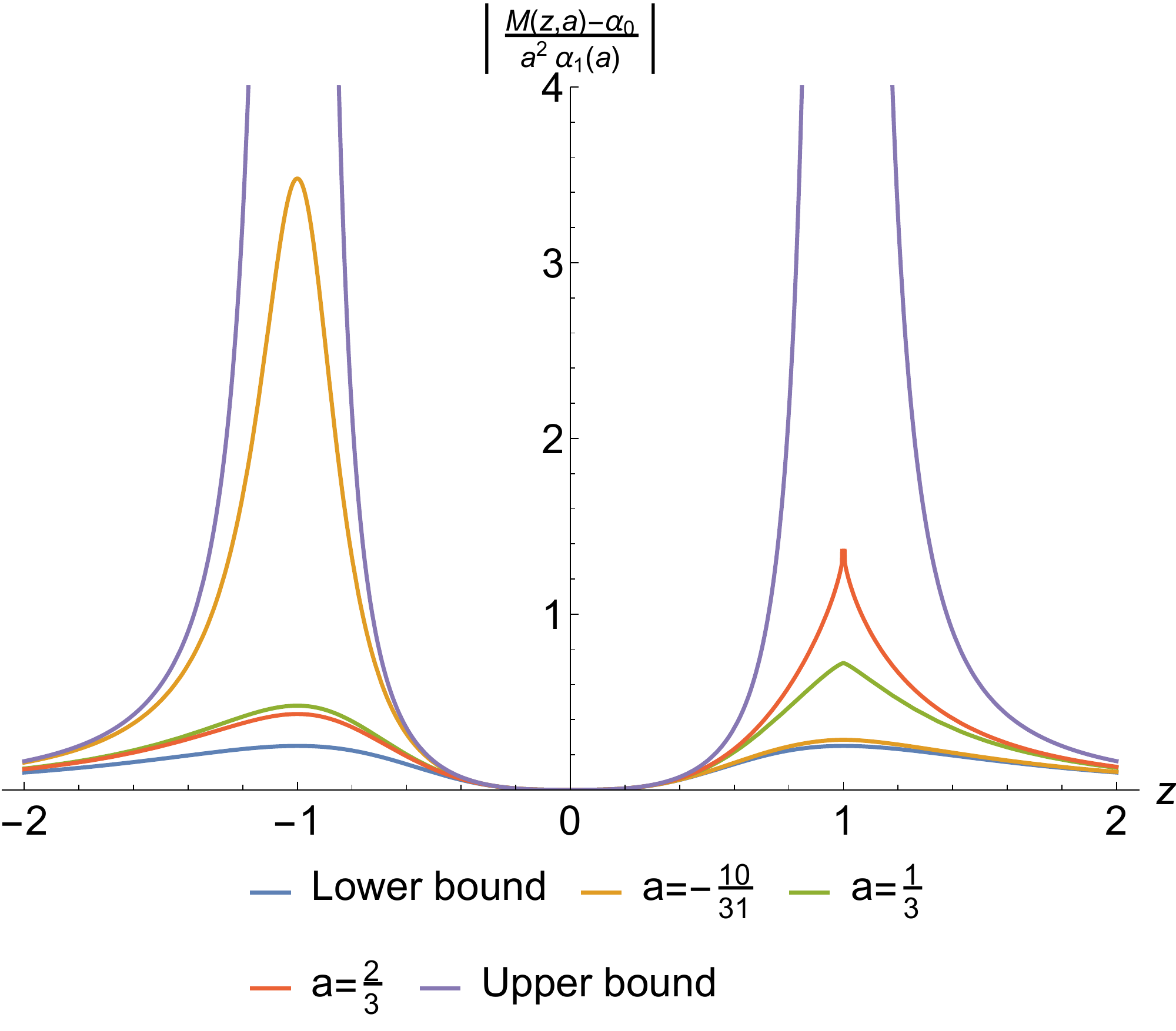}
		\caption{Tree level type II string amplitude}
		\label{fig:bound_string}
	\end{subfigure}
	\hfill
	\begin{subfigure}[b]{0.48\textwidth}
		\centering
		\includegraphics[width=\textwidth]{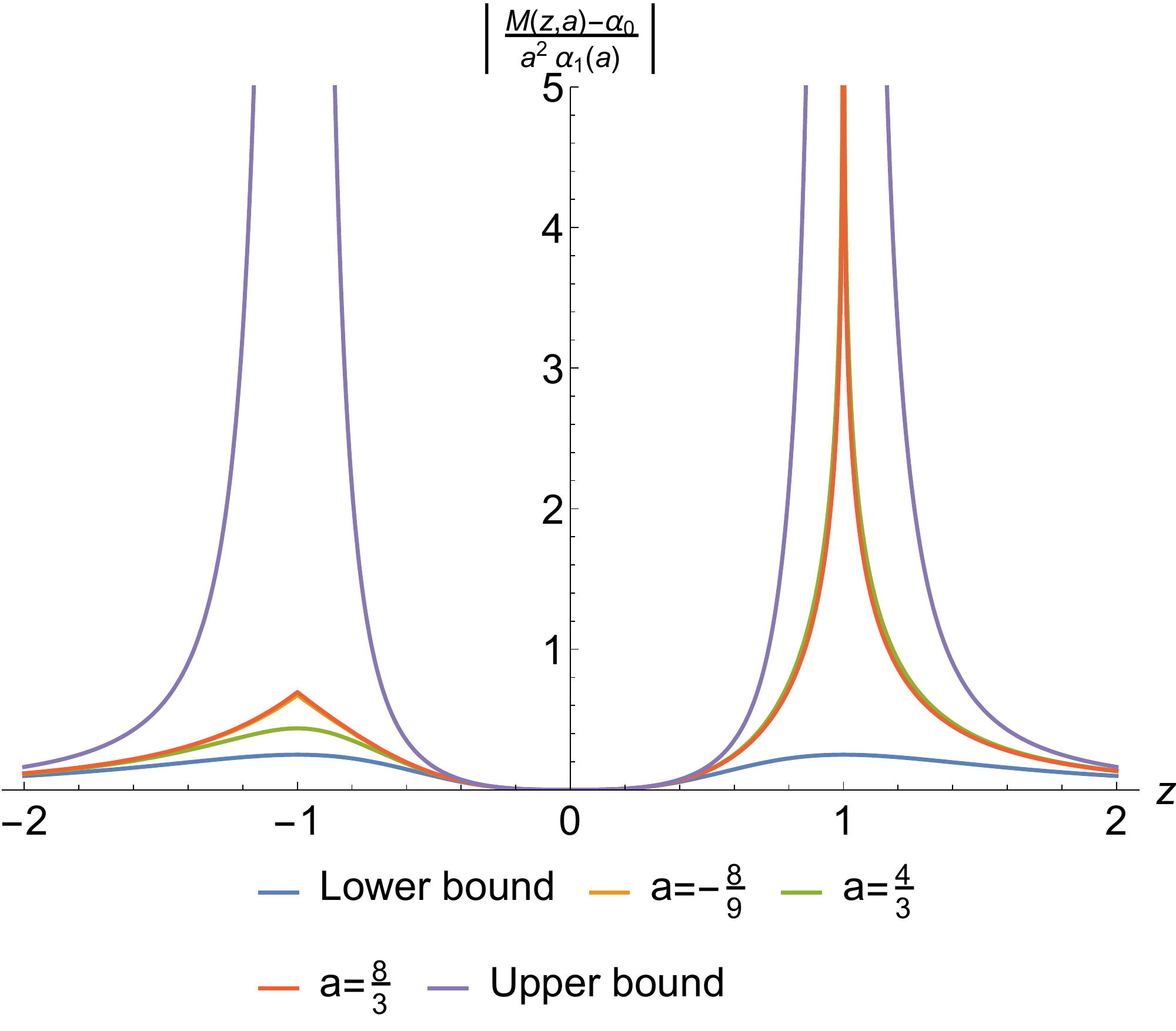}
		\caption{1-loop $\phi^4$-amplitude}
		\label{fig:bound_phi4}
	\end{subfigure}
	\caption{{Bounds on amplitude, as in theorem \eqref{th:ampbound}, are satisfied by Tree level type II string amplitude and 1-loop $\phi^4$-amplitude.}}
	\label{fig:bound}
\end{figure}

\begin{figure}[hbt!]
	\centering
	\begin{subfigure}[b]{0.48\textwidth}
		\centering
		\includegraphics[width=\textwidth]{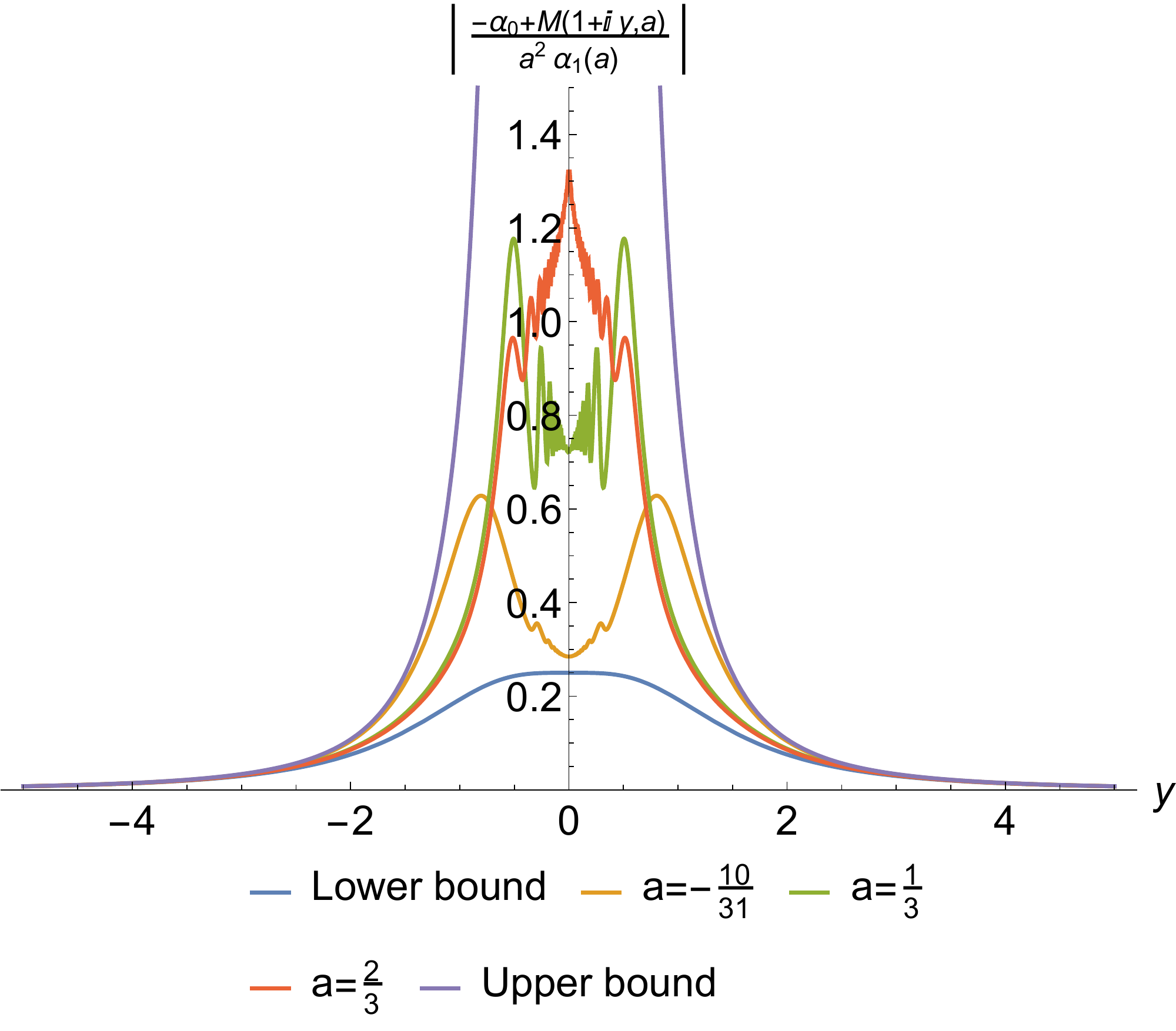}
		\caption{Tree level type II string amplitude}
		\label{fig:bound_stringc}
	\end{subfigure}
	\hfill
	\begin{subfigure}[b]{0.48\textwidth}
		\centering
		\includegraphics[width=\textwidth]{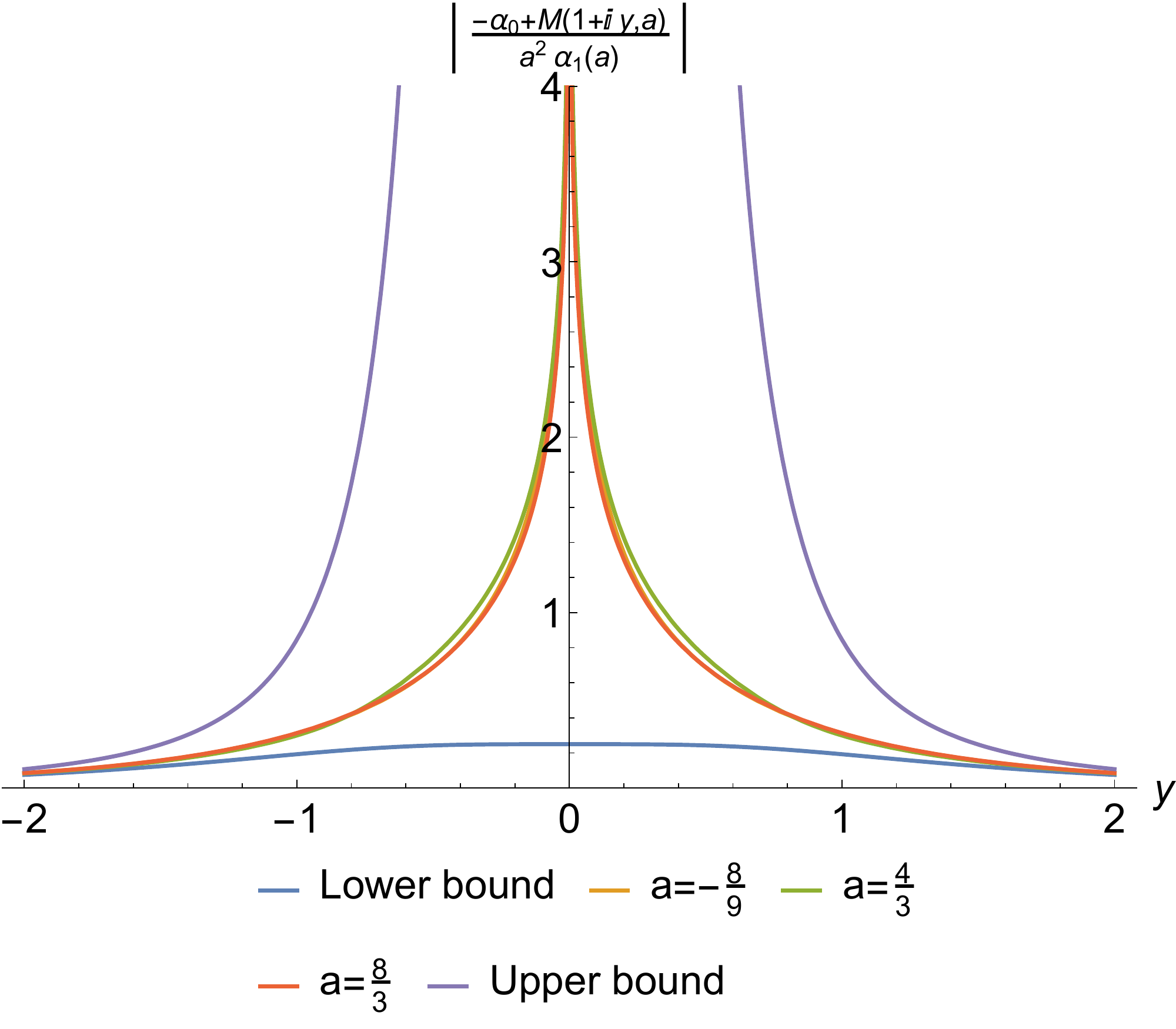}
		\caption{1-loop $\phi^4$-amplitude}
		\label{fig:bound_phi4c}
	\end{subfigure}
	\caption{{Bounds on amplitude, as in theorem \eqref{th:ampbound}, are satisfied by Tree level type II string amplitude and 1-loop $\phi^4$-amplitude. These bounds on amplitude valid for complex $\tilde{z}$.}}
	\label{fig:boundcomplex}
\end{figure}

\section{Univalence of the EFT expansion}\label{EFTunival}
So far, on the QFT side, we have focused on results motivated by the Bieberbach conjecture but which one can derive by using the crossing symmetric dispersion relation. All the QFT conditions we have derived so far hint at univalence. Establishing univalence in generality is a hard question and beyond the scope of our present work. Nevertheless, we can investigate scenarios where univalence is guaranteed to {\underline{not}} hold. We will begin our investigations using Szeg\"{o}'s theorem. For definiteness, consider the low energy expansion of the 2-2 dilaton scattering in type II string theory with the massless pole subtracted. We will consider
\be
f(\tilde z,a)= \frac{\mt(\tilde z,a)-\mt(0,a)}{\partial_{\tilde z}\mt(\tilde z,a)|_{\tilde z=0}}\,,
\ee
expanded around $\tilde z=0$. This corresponds to a low energy expansion of the amplitude since $\tilde z\sim 0$ corresponds to $x\sim 0, y \sim 0$. If $\mt(\tilde z,a)$ was univalent inside a disc $D$ of radius $R$, for a certain range of $a$, then $f(\tilde z,a)$ should be locally univalent inside $D$. This means that the absolute value of the smallest root ($\tilde z=\zeta_{min}$) of $\partial_{\tilde z} f(\tilde z,a)=0$ should be greater than $1/4$ to satisfy Szeg\"{o}'s theorem. This can be easily checked using Mathematica to some high power in the expansion in $\tilde z$.  The plot in fig.(\ref{fig:szego}) shows that univalence of the full amplitude is only possible in the range $a\in(-1/3,2/3)$ as one may have anticipated from our previous discussion. In fig.(\ref{fig:zetaminvsn}) we show monotonicity of $\zeta_{min}$ as a function of $n$ for small $a$. This is intuitive in the sense that for higher $n$'s, we are putting in more number of terms in the EFT expansion, as a result of which the radius of the disk, within which there is potential univalence, increases. We should point out, however, that for slightly larger values of $a$, for instance, $a>0.05$, there are other features that arise in the plot, which do not respect monotonicity---the physical implication of this finding is unclear to us. For 1-loop $\phi^4$, our findings are qualitatively similar. In appendix D, we comment on the Nehari conditions. 

\begin{figure}[hbt!]
	\centering
	\begin{subfigure}[b]{0.46\textwidth}
		\centering
		\includegraphics[width=\textwidth]{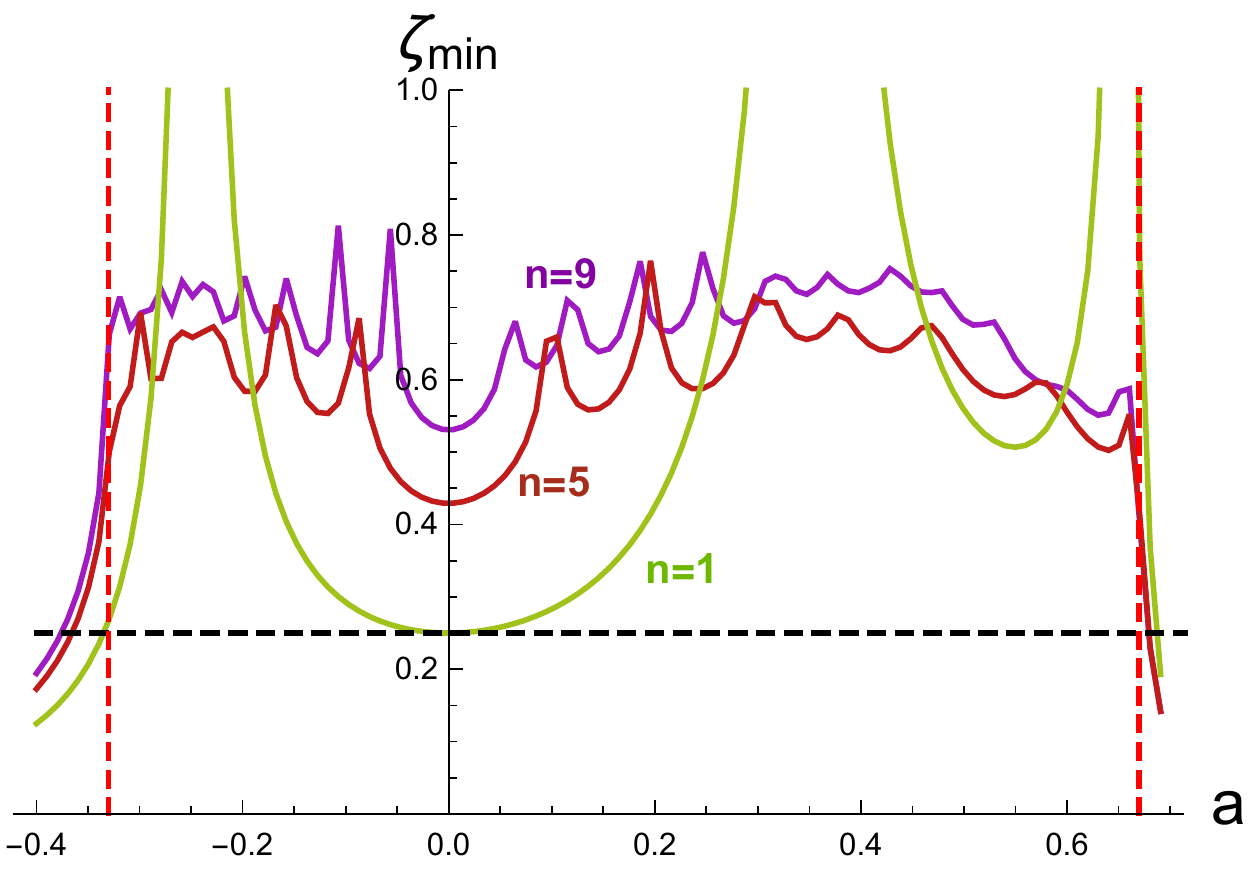}
		\caption{}
		\label{fig:szego}
	\end{subfigure}
	\hfill
	\begin{subfigure}[b]{0.46\textwidth}
		\centering
		\includegraphics[width=\textwidth]{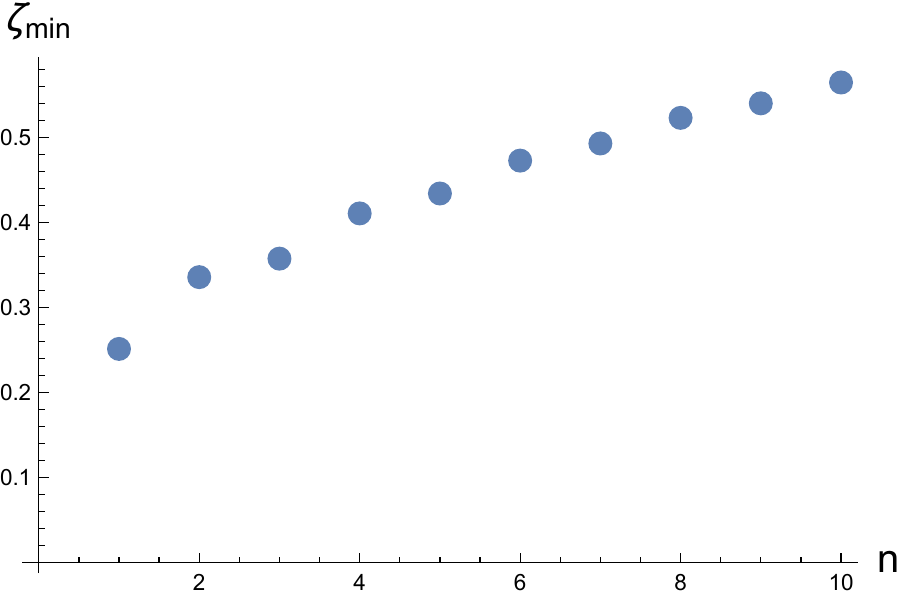}
		\caption{}
		\label{fig:zetaminvsn}
	\end{subfigure}
	\caption{Testing univalence using Szeg\"{o}'s theorem. (a) Plot of $\zeta_{min}$ vs $a$. This demonstrates that for the string amplitude to be potentially univalent, $-1/3<a<2/3$ must hold. (b) Plot of $\zeta_{min}$ vs $n$ for $a=0.02$.}
	\label{fig:szegotest}
\end{figure}

\subsection{Grunsky inequalities and EFT expansion}\label{Grunsky}
From the expansion in \eqref{eq:Wpqdef}, one can relate the $\mathcal{W}_{p,q}$ with the $\omega_{j,k}$ in \eqref{eq:grunslyomega}. One can easily check that for each $N$ in \eqref{eq:grunslyomega}, $p+q\leq 2N+1$ numbers of $W_{p,q}$ appears in \eqref{eq:grunslyomega}. Therefore, in order to hold the \eqref{eq:grunslyomega} till $N$,  for \eqref{eq:fdef}, one has to retain terms in EFT expansion \eqref{eq:Wpqdef} up to
\be
p+q\geq 2N+1\,.
\ee
Before delving into generalities, we begin by considering a toy problem.
\subsubsection{A toy example: scalar EFT approximation}
Since the general case of the univalence of the convex sum of univalent functions is not entirely clear (or more appropriately known to us), let us consider a toy problem below. This problem is enlightening for several reasons. For starters, the amplitudes we consider below are in two ``standard'' forms. Both of these were considered in \cite{Caron-Huot:2020cmc} to study scalar EFTs, and it was found that scalar EFTs could be approximated as a convex sum of the amplitudes below.
Thus consider the sum 
\be
\mt^{(toy)}(s_1,s_2)=\mu_1 \mt_1(s_1,s_2)+\mu_2 \mt_0(s_1,s_2)
\ee
where
\be
\begin{split}
	&\mt_0(s_1,s_2)=\frac{1}{M_1^2-s_1}+\frac{1}{M_1^2-s_2}+\frac{1}{M_1^2+s_1+s_2}-\frac{3}{M_1^2}=\frac{27 a^2 \tilde{z} \left(2 M_1^2-3 a\right)}{27 a^3 \tilde{z}-27 a^2 M_1^2 \tilde{z}-M_1^6 \left(\tilde{z}-1\right)^2}\\
	&\mt_1(s_1,s_2)=-\frac{1}{\left(M_2^2-s_1\right) \left(M_2^2-s_2\right) \left(M_2^2+s_1+s_2\right)}=\frac{\left(\tilde{z}-1\right)^2}{27 a^3 \tilde{z}-27 a^2 M_2^2 \tilde{z}-M_2^6 \left(\tilde{z}-1\right)^2}\,.
\end{split}
\ee
In the range (we are considering real $a$ only), $-\frac{M_{1,2}^2}{3}<a< \frac{2M_{1,2}^2}{3}$, individually $\mt_0(s_1,s_2),~ \mt_1(s_1,s_2)$ do not have any singularity inside the unit disk. A straightforward calculation shows  for both of them $\omega_{j,k}=-\frac{\d_{j,k}}{k}$ for $j>0, k>0$. Therefore, $\mt_0(s_1,s_2),~ \mt_1(s_1,s_2)$ are individually univalent inside the unit disk when the restriction on $a$ holds. Now the sum of $\mt_0(s_1,s_2),~ \mt_1(s_1,s_2)$, we denoted $\mt^{(toy)}(s_1,s_2)$ does not have any singularity inside the unit disk. We can check the univalence of the combination using the Grunsky inequalities again. Quite remarkably, if $M_1=M_2$,  we again find that $\omega_{j,k}=-\frac{\d_{j,k}}{k}$ for $j>0, k>0$. Therefore, for any $\l_1,\l_2$ for the range of $a$ above, the combination is univalent inside the disk! When $M_1\neq M_2$ we get nontrivial constraints for univalence. Already at $N=1$,  the Grunsky inequality \eqref{eq:grunslyomega} leads to
\be\label{bigdeal}
{\bigg |}1-\frac{729 a^4 \mu _1 \mu _2 \left(M_2^2-M_1^2\right){}^2 \left(a-M_1^2\right) \left(2 M_1^2-3a\right){}^3 }{M_1^{10} \left(\mu_1(a-M_1^2)+\mu_2 M_1^6(2M_1^2-3a)\right){}^2}{\bigg |}\leq 1\,,
\ee
where we have expanded around $M_1^2=M_2^2$ and retained only the leading term. This leads to a constraint on $\mu_1,\mu_2$ in terms of $M_1, M_2, a$. One interesting point to make note of is the following: for $a\sim 0$, for the above condition to hold, we will need $\mu_1\mu_2<0$. This is a consequence of unitarity and conforms with the signs in \cite{Caron-Huot:2020cmc}. We leave a detailed investigation of such constraints for the future\footnote{In light of such complications, it seems to us that it will be more useful in physics to think about an approximate notion of univalence, may be saying that a function is approximately univalent if the first few $N$'s in the Grunsky inequalities hold.}. Next, we will consider expanding the amplitude around $a\sim 0$ and will find that to leading order the Grunsky inequalities hold. 

\subsubsection{Proof of univalence of $f(\zt,a)$ for $|a|$ small}
Using  the expansion \eqref{eq:Wpqdef}, one can calculate the Grunsky coefficients, $\{\omega_{j,k}\}$, for $f(\zt,a)$ in leading order of small $a$ to obtain 
\be
\omega_{j,k}=-\frac{\delta _{j,k}}{k}+\frac{729 a^4 j k \left(\mathcal{W}_{1,0} \mathcal{W}_{3,0}-\mathcal{W}_{2,0}^2\right)}{\mathcal{W}_{1,0}^2}+O(a^5).
\ee
From positivity \cite[eq (6.4)]{tolley}, $\mathcal{W}_{1,0} \mathcal{W}_{3,0}\geq \mathcal{W}_{2,0}^2$. Therefore second term in the above equation is positive. 
Therefore, we get
\be
|\omega_{j,k}|\leq \frac{\delta _{j,k}}{k}\text{ , for j=k}\,.
\ee
The off-diagonal terms ($j\neq k$) starts at $O(a^4)$. From \eqref{eq:grunslyomega}, we can say that for small $a$, Grunsky inequalities are satisfied by $f(\tilde{z},a)$ in \eqref{eq:fdef}.


We can push the Grunsky inequalities further. If we assume that perturbatively around $a=0$, univalence should hold, then we can derive nonlinear inequalities by making clever choice for the complex parameters $\lambda_k$ in \eqref{eq:grunslyomega}. For instance, for $N=2$ in  \eqref{eq:grunslyomega}, choosing $\lambda_2=-\lambda_1/2$ and $\lambda_1,\lambda_2$ as real, we easily find the following complicated nonlinear inequality
\be\label{complic}
-3\mW_{2,0}^4+8 \mW_{1,0}\mW_{2,0}^2\mW_{3,0}-4\mW_{1,0}^2\mW_{2,0}\mW_{4,0}-3\mW_{1,0}^2\mW_{3,0}^2+2 \mW_{1,0}^3\mW_{5,0}\geq 0\,.
\ee
We have verified that the string amplitude, as well as the pion S-matrices, satisfy this inequality. \subsection{Bounds in case of EFTs:} 
In an EFT, usually, the Lagrangian is known up to some energy scale. From that information, one can calculate the amplitude up to that scale.  In such cases, we can subtract off the known part of the amplitude. These steps result in a shift in  the lower limit of the dispersion integral \eqref{crossdisp} by the scale $\d_0$, namely $\m\to \m+3\d_0/2$ (see \cite{ASAZ}). Therefore, making this replacement in \eqref{w0110}, we have
\be \label{mudelta}
-\frac{9}{4\m+6\d_0}<\frac{\mathcal{W}_{0,1}}{\mathcal{W}_{1,0}}< \frac{9}{2\m+3\d_0}\,.
\ee
Here $\mW$ are the Wilson coefficients of the amplitude with subtractions. Now notice that if we consider $\delta_0\gg \mu$ then we have
\be\label{strong}
-\frac{3}{2\d_0}<\frac{\mathcal{W}_{0,1}}{\mathcal{W}_{1,0}}< \frac{3}{\d_0}\,.
\ee
Let us compare this to \cite{Caron-Huot:2020cmc}. Converting their results to our conventions, we find that the lower bound above is identical to their findings--this is corroborated by the results of \cite{tolley} as well as what arises from crossing symmetric dispersion relations \cite{ASAZ}. The other side of the bound is more interesting. The strongest result in $d=4$ in \cite{Caron-Huot:2020cmc} places the upper bound at $\approx 5.35/\delta_0$ in our conventions. Their approach also makes the bound spacetime dimension dependent. Now remarkably, the bound we quote above and the $d\gg 1$ limit of \cite{Caron-Huot:2020cmc} are identical! In EFT approaches, one takes the so-called null constraints or locality constraints and expands in the limit $\delta_0\gg\mu$.  It is possible that a more exact approach building on \cite{Caron-Huot:2020cmc} will lead to a stronger bound as in \eqref{strong}.

Let us now comment on the behaviour in the figure (\ref{fig:W01W10pion}). First, notice that all S-matrices appear to respect the upper bound we have found above; for the S-matrix bootstrap results, we set $\delta_0=0$. For comparison, note that for 1-loop $\phi^4$, and 2-loop chiral perturbation theory, we have
\be
\left(\frac{\mW_{0,1}}{\mW_{1,0}}\right)_{\phi^4}\approx-0.315 \,,\quad \left(\frac{\mW_{0,1}}{\mW_{1,0}}\right)_{\chi-PT}\approx-0.135\,,
\ee
both in units where $m=1$. These numbers would be closer to the lower black dashed line in the figure (\ref{fig:W01W10pion}), which is the bound that is common in all approaches so far. The upper black dashed line is what we find in the current paper. For future work, it will be interesting to search for an interpolating bound as in \eqref{mudelta} which enables us to interpolate between $\delta_0=0$ and $\delta_0\gg \mu$. 

\section{Summary}\label{discuss}

This chapter has explored a remarkable correspondence between aspects of geometric function theory and quantum field theory. Let us summarize our accomplishments in this endeavour. 
\begin{itemize}
	\item We proved that the kernel of the crossing symmetric dispersion relation is univalent  inside the open disc $\D$ for a range of $a$. 
	
	\item The univalence of the kernel enabled us to derive upper bounds on the Taylor coefficients of the scattering amplitude via application of the de Branges's theorem (Bieberbach conjecture) to the kernel. These bounds can be used to obtain inequalities concerning the Wilson coefficients. We found strong bounds which are respected by all theories considered in this paper, which include 1-loop $\phi^4$, pion S-matrix bootstrap (which included a plethora of examples that respect unitarity, crossing symmetry and has information about the standard model $\rho$-meson mass) and even the massless pole subtracted string tree-level dilaton scattering.
	\item We further derived two-sided bounds on the scattering amplitude. In deriving the upper bound, we used the univalence of the kernel in the form of the Koebe growth theorem. The upper bound, expressed in terms of the usual Mandelstam variables, translates to, for large $|s|$, fixed $t$, $|\mm(s,t)|\lesssim s^2$.
	
	\item We proved that to leading order in $a$, around $a\sim 0$, the scattering amplitude is univalent as it respects the Grunsky inequalities.
\end{itemize}

\begin{subappendices} 
\section*{Appendix}
\section{Various Amplitudes}

\subsection{Tree level type II superstring theory amplitude }
%
The low energy expansion of the type II superstring amplitude is well known, see for example \cite{green} for a recent discussion. The amplitude after stripping off a kinematic factor and subtracting off the massless pole is given below. This is what we will use. In order to facilitate expansion, it is also useful to recast the Gamma function in terms of an exponential of sum of Zeta functions as in \cite{green}.
\be
\mt^{(cl)}(s_ 1, s _2)=-\frac{\Gamma\left(1-s_{1}\right) \Gamma\left(1-s_{2}\right) \Gamma\left(s_{1}+s_{2}+1\right)}{s_{1} s_{2}\left(s_{1}+s_{2}\right) \Gamma\left(s_{1}+1\right) \Gamma\left(-s_{1}-s_{2}+1\right) \Gamma\left(s_{2}+1\right)}+\frac{1}{s_{1} s_{2}\left(s_{1}+s_{2}\right)}
\ee

\begin{table}[hbt!]
	\centering
	\begin{tabular}{|c|c|c|c|c|c|c|}
		\hline
		$\mathcal{W}_{p,q}$& \text{q=0} & \text{q=1} & \text{q=2} & \text{q=3} & \text{q=4} & \text{q=5} \\ \hline
		p=0& 2.40411 & -2.88988 & 2.98387 & -2.99786 & 2.99973 & -2.99997 \\ \hline
		p=1& 2.07386 & -4.98578 & 7.99419 & -10.9987 & 13.9998 & -17. \\ \hline
		p=2& 2.0167 & -6.99881 & 14.9984 & -25.9995 & 39.9999 & -57. \\ \hline
		p=3& 2.00402 & -9.00023 & 23.9996 & -49.9998 & 89.9999 & -147. \\ \hline
		p=4 & 2.00099 & -11.0002 & 34.9999 & -84.9999 & 175. & -322. \\ \hline
		p=5& 2.00025 & -13.0001 & 48. & -133. & 308. & -630. \\ \hline
	\end{tabular}
	\caption{$\mathcal{W}_{p,q}$ for tree level type II superstring theory amplitude}
	\label{tab:stringWpq}
\end{table}
Note that we have stripped off the kinematic factor $x^2=(s_1 s_2 +s_2 s_3+s_1 s_3)^2$. Had we retained it then the graviton pole subtracted amplitude in the Regge limit would have behaved like $|s_1|^{2}/t$ so that the dispersion relation would need three subtractions.  Therefore, it is important that we remove this kinematic factor in what we do. The $a_\ell$'s with this factor removed continue to be positive--which is the main thing we used in our derivation.
\subsection{1-loop $\phi^4$ amplitude}
We just note the well-known standard result for the 1-loop $\phi^4$ amplitude. 
\begin{align}\label{eq:phi4amp}
	\begin{split}
\mt^{(\phi^4)}(s_ 1, s_ 2)&=-\left[\mathcal{F}(s_1)+\mathcal{F}(s_2)+\mathcal{F}(-s_1-s_2)\right],\\
\mathcal{F}(x)&:=\frac{1}{\sqrt{x+\frac{4}{3}}}\left[2 \sqrt{x-\frac{8}{3}} \tanh ^{-1}\left(\frac{\sqrt{x+\frac{4}{3}}}{\sqrt{x-\frac{8}{3}}}\right)\right].
\end{split}
\end{align}

\begin{table}[H]
	\centering
	\scalebox{0.80}{
	\begin{tabular}{|c|c|c|c|c|c|c|}
		\hline
		$\mathcal{W}_{p,q}$&  \text{q=0} & \text{q=1} & \text{q=2} & \text{q=3} & \text{q=4} & \text{q=5} \\\hline
		p=0& -5.22252 & -0.0209238 & 0.000401094 & -0.0000116118 & \text{3.9934$\times$ $10^{-7}$} & -\text{1.5104$\times$ $10^{-8}$} \\\hline
		p=1& 0.0663542 & -0.0023309 & 0.0000983248 & -\text{4.4442$\times$ $10^{-6}$} & \text{2.0832$\times$ $10^{-7}$} & - \\\hline
		p=2& 0.00344623 & -0.00027954 & 0.00001862 & -\text{1.1521$\times$ $10^{-6}$} &- & - \\\hline
		p=3& 0.000267396 & -0.0000348355 & \text{3.1948$\times$ $10^{-6}$} & -\text{2.5174$\times$ $10^{-7}$} &- & - \\\hline
		p=4& 0.0000245812 & -\text{4.4442$\times$ $10^{-6}$} & \text{5.2081$\times$ $10^{-7}$} & - &- & - \\\hline
		p=5& \text{2.4827$\times$ $10^{-6}$} & -\text{5.7605$\times$ $10^{-7}$} & - &- & - & - \\\hline
	\end{tabular}}
	\caption{$\mathcal{W}_{p,q}$ for 1-loop $\phi^4$ amplitude}
	\label{tab:stringWpq}
\end{table}

\subsection{ Amplitude for pion scattering from S-matrix bootstrap}
The S-matrix bootstrap puts constraints on pion scattering using unitarity and crossing symmetry. Some additional phenomenological inputs like $\rho$-meson mass or certain theoretical constraints like S/D wave scattering length inequalities are used. For more details, the reader is referred to \cite{gpv, bhsst, bst}. The allowed S-matrices are displayed as regions on the Adler-zeros $(s_0,s_2)$ plane. In \cite{bhsst}, a river like region of S-matrices on this plane was identified. The chiral perturbation theory appeared to lie close to a kink-like feature near $s_0=0.35$. As such this particular S-matrix is of interest to us. In the main text, we have considered a plethora of S-matrices like the lake in \cite{gpv}, the upper and lower boundaries of the river in \cite{bhsst} as well as the more interesting line of minimization (where the total scattering cross-section is minimized for a given $s_0$) in \cite{bst}. 
The amplitude for pion scattering from S-matrix bootstrap with $s_0=0.35$ is given\footnote{One can write these kind of expansion for a general $a$ upto desired order in $\tilde{z}$. To minimize numerical errors in our calculations, we had to  rationalize upto 20 decimal place.} below for $a=1/2$
\be
\begin{split}
	&\mt(\tilde{z},a)=-1.90562-55.586 \tilde{z}-75.7314 \tilde{z}^2-49.2812 \tilde{z}^3+3.43872 \tilde{z}^4+45.4445 \tilde{z}^5+O\left(\tilde{z}^6\right)
\end{split}
\ee
In table \eqref{tab:pionWpq}, we have listed various $\mathcal{W}_{p,q}$.
\begin{table}[hbt!]
	\centering
	\scalebox{0.85}{
	\begin{tabular}{|c|c|c|c|c|c|c|}
		\hline
		$\mathcal{W}_{p,q}$&\text{q=0} & \text{q=1} & \text{q=2} & \text{q=3} & \text{q=4} & \text{q=5} \\
		\hline
		\text{p=0}& -1.90562 & 5.02671 & -0.249527 & 0.0118008 & -0.000555517 & 0.0000262344 \\
		\hline
		\text{p=1}&  5.72161 & 0.395863 & -0.0520982 & 0.00402939 & -0.000264317 &- \\
		\hline
		\text{p=2}&  0.642298 & 0.0217519 & -0.00787377 & 0.000904172 & - & - \\
		\hline
		\text{p=3}&  0.0796397 & -0.000836409 & -0.000995454 & 0.000166504 &- &- \\
		\hline
		\text{p=4}&  0.0101505 & -0.000579411 & -0.000103708 &-& -&- \\
		\hline
		\text{p=5}&  0.0013093 & -0.000136893 &- &- & - & - \\
		\hline
	\end{tabular}}
	\caption{$\mathcal{W}_{p,q}$ for pion scattering from S-matrix bootstrap with $s_0=0.35$}
	\label{tab:pionWpq}
\end{table}

\section{Grunsky inequalities \eqref{eq:grunslyomega} and $s_0=0.35$ pion amplitude}
Using the table \eqref{tab:pionWpq}, one can check the Grunsky inequalities \eqref{eq:grunslyomega} for $N=2$, with some random $\l_1,\l_2$. Since we are truncating the sum over Wilson coefficient expansion, if this truncated sum comes from a univalent function (in the range of $-\frac{2\m}{9}<a<\frac{4\m}{9}$) and the truncated sum is itself univalent. Therefore, it may be expected that the radius of the disc where univalence holds should be smaller. This translates into the range of $a$, which should be now $-\frac{\m}{9}<a<\frac{2\m}{9}$ (or maybe a smaller range of $a$). This can be realized from  Szeg\"{o}'s theorem, since $a^{2n}$ always comes with $\tilde{z}^n$, reducing the radius to $1/4$ means reducing the range of $a$  by $1/2$ for unit disk in $\tilde{z}$-plane. One can see in figure \eqref{fig:Grunsky_pion_random} that our expectation matches\footnote{There can be some random $\lambda_1,\l_2$ for which the curves can be slightly below the line $a=8/9$} exactly. The main point of the above discussion is that, \textit{there exists a finite range of $a$ for which the $f(\tilde{z},a)$ is univalent.}
\begin{figure}[hbt!]
	\centering
	\includegraphics[width=0.50\textwidth]{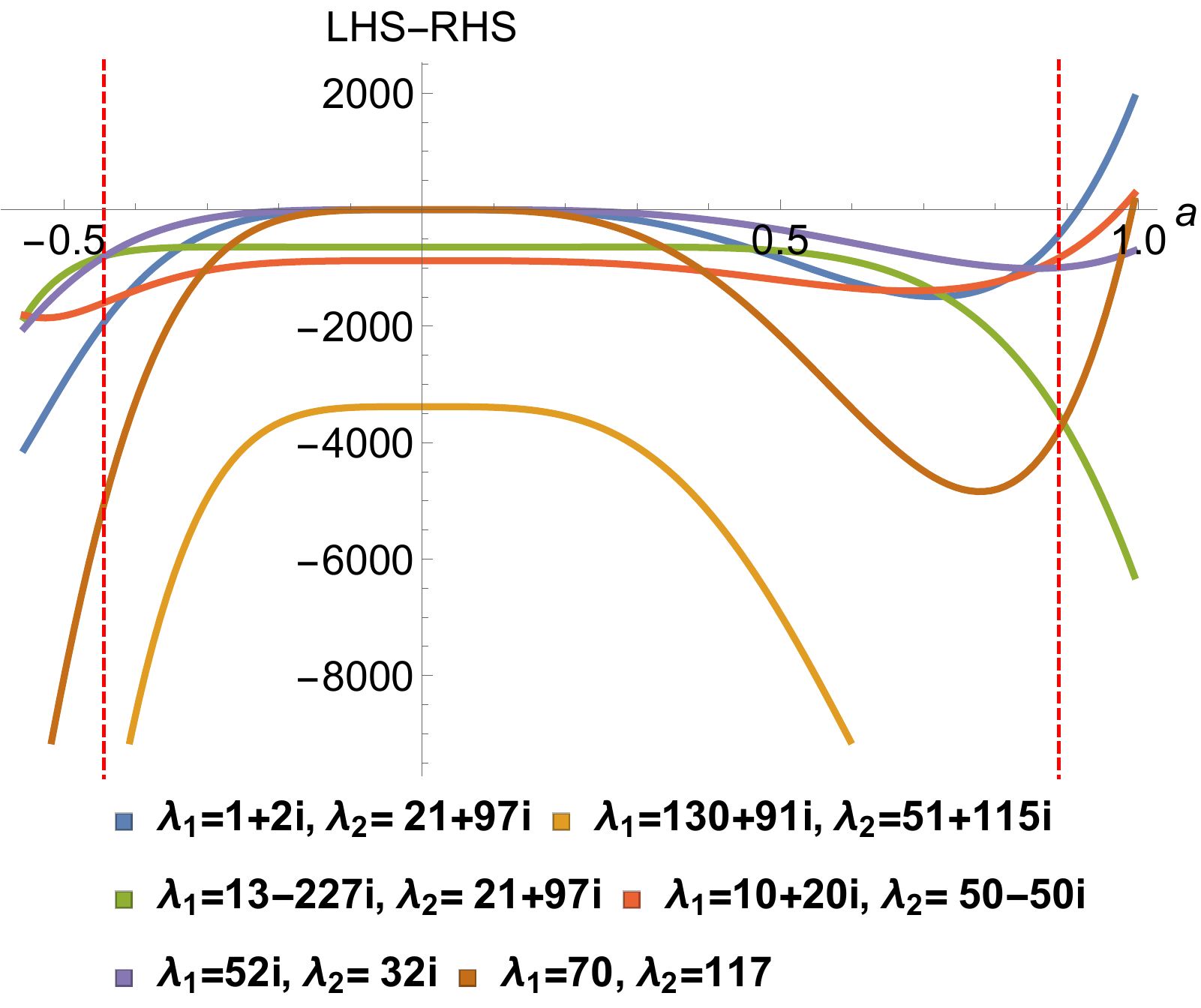}
	\caption{Red lines are the $a=-4/9,8/9$}
	\label{fig:Grunsky_pion_random}
\end{figure} 

\section{Constraints from \eqref{eq:n3alphaW}}
Suppose we consider $\mathcal{W}_{0,1},\mathcal{W}_{1,1},\mathcal{W}_{1,0},\mathcal{W}_{2,1},\mathcal{W}_{1,2}\mathcal{W}_{2,0},\mathcal{W}_{3,0}$ as given. We can constrain $\mathcal{W}_{0,3}$, using \eqref{eq:n3alphaW}. See figure \eqref{fig:a3bound}.
\begin{figure}[H]
	\centering
	\begin{subfigure}[b]{0.45\textwidth}
		\centering
		\includegraphics[width=\textwidth]{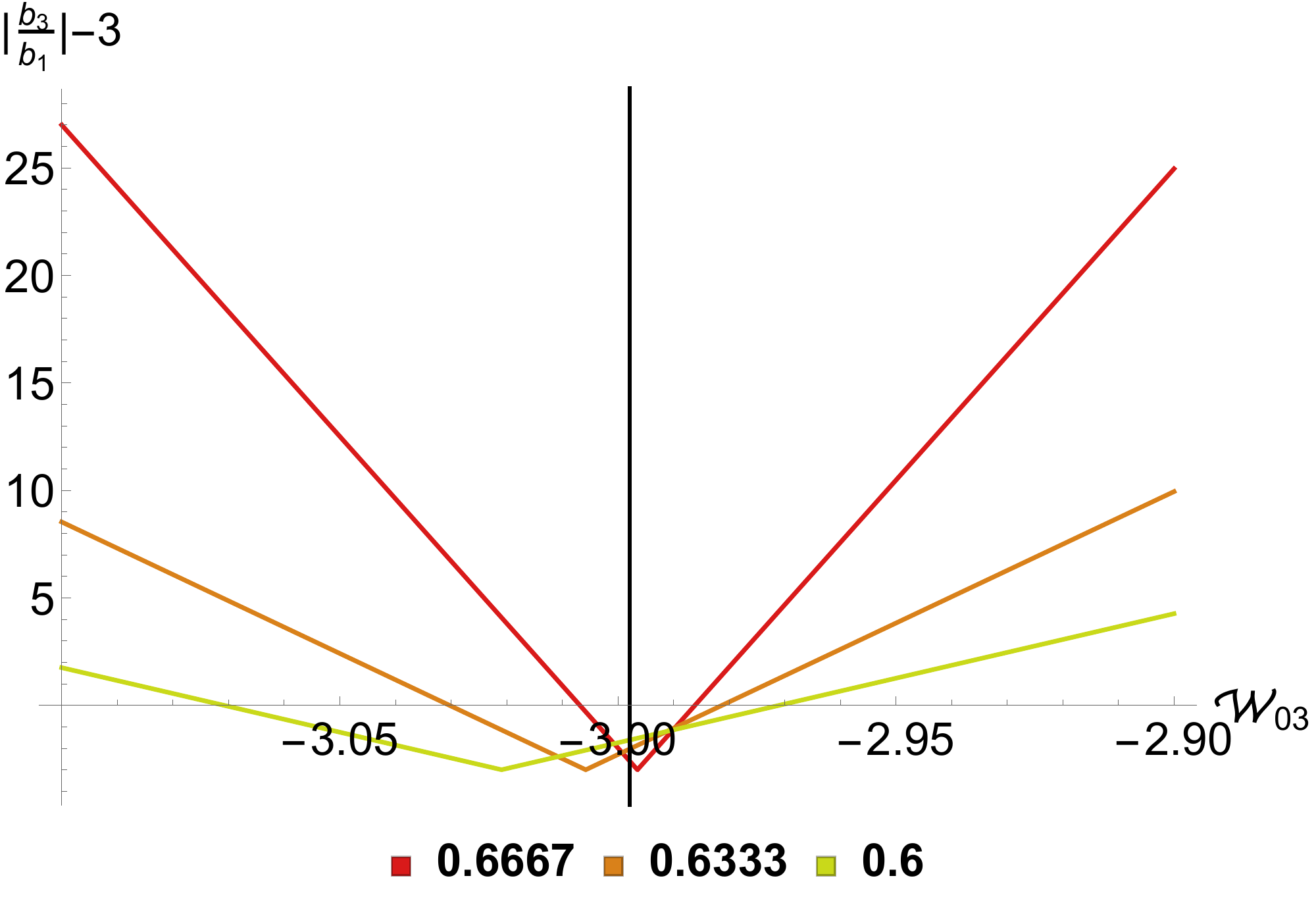}
		\caption{Tree level type II string amplitude}
		\label{fig:a3string}
	\end{subfigure}
	\hfill
	\begin{subfigure}[b]{0.45\textwidth}
		\centering
		\includegraphics[width=\textwidth]{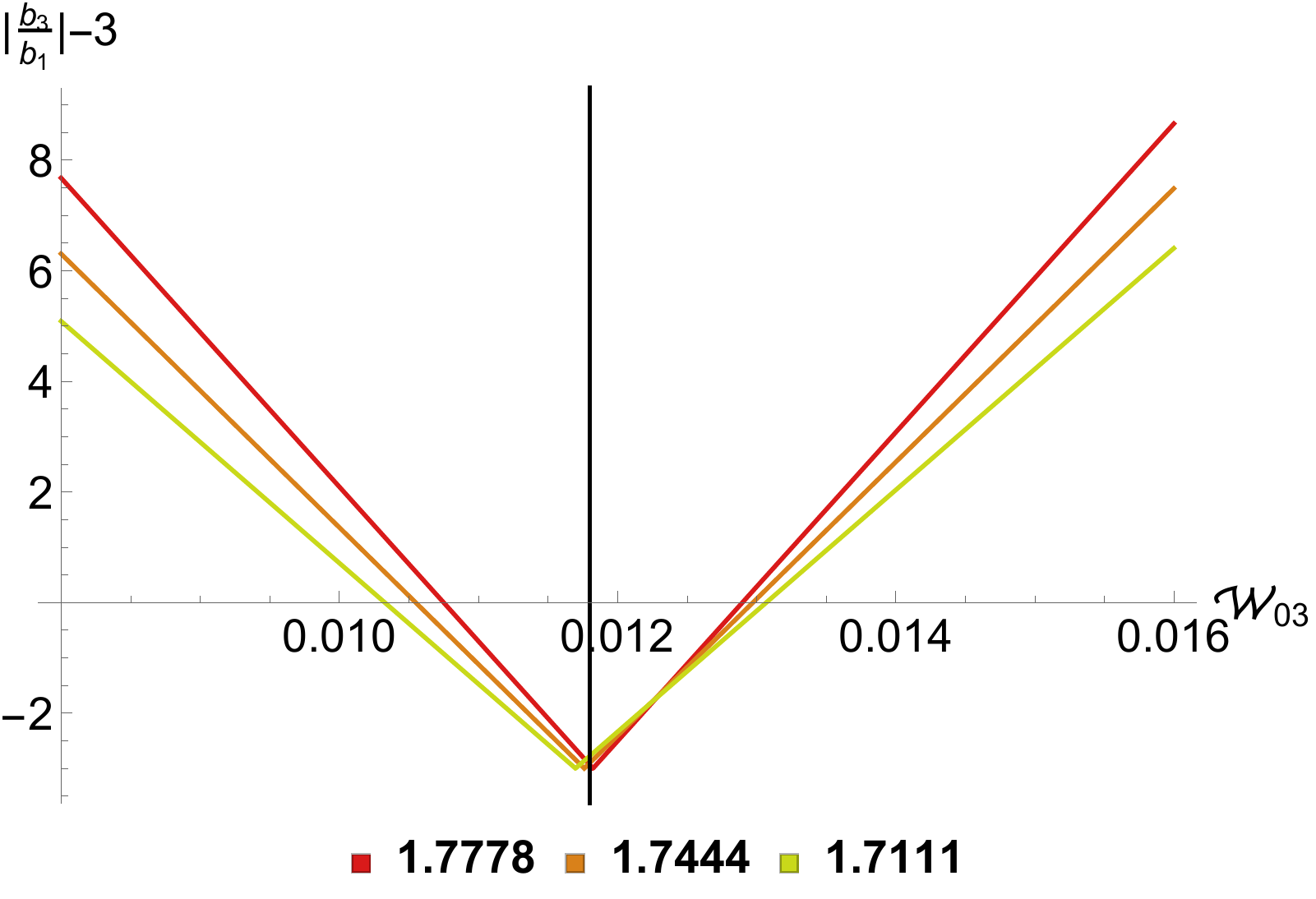}
		\caption{Pion scattering amplitude, $s_0=0.35$}
		\label{fig:a3upper4}
	\end{subfigure}
	\caption{Constraints on Wilson coefficient $W_{0,3}$ using \eqref{eq:n3alphaW}, where $\mathcal{W}_{0,1},\mathcal{W}_{1,1},\mathcal{W}_{1,0},\mathcal{W}_{2,1},\mathcal{W}_{1,2}\mathcal{W}_{2,0},\mathcal{W}_{3,0}$ are given. Figure shows that bound on the $\mathcal{W}_{0,3}$. Since $\left|\frac{b_3}{b_1}\right|-3$ should be less than zero, $\mathcal{W}_{0,3}$ must lie inside the triangle. Black dashed line is the exact answer. Different values of $a$ are indicated with different colours.}
	\label{fig:a3bound}
\end{figure}
\section{Nehari conditions in 1-loop $\phi^4$-theory. }
Using the 1-loop $\phi^4$-amplitude, we can check the Nehari conditions. For the range $-4/9<a<16/9$, we find that Nehari necessary condition \eqref{NehariN} always holds. Further, we find that Nehari sufficient condition \eqref{NehariS} does not always hold within the unit circle. Nevertheless, there are regions where 1-loop $\phi^4$-amplitude respects Nehari sufficient condition \eqref{NehariS}. For example within the radius\footnote{These can be realized replacing $z\to 2z/3$, and check the conditions for the given ranges of $a$.} of $\frac{2}{3}$ for the range $-4/9<a<16/9\,,$  the Nehari sufficient condition \eqref{NehariS} holds.

We can also check the Nehari conditions in $a\sim 0$ region. We can expand the amplitude around $a=0$, then calculate the Schwarzian derivative \eqref{eq:sch}. For example upto $a^4$, we find
\be\label{eq:nehariphi4a4}
\{f(z),z\}=-\frac{6 }{\left(z^2-1\right)^2}\left(1-\frac{0.971 a^4 (z+1)^4}{(z-1)^4}\right)
\ee
Of course, for the full range of $a$, the above \eqref{eq:nehariphi4a4} need not to satisfy the Nehari necessary condition \eqref{NehariN}, since this is an EFT type expansion, and by Szeg\"{o} theorem, we don't expect the univalent to hold in the same range of $-4/9<a<16/9$. Further, there always exists a smaller range of $a$, where it is univalent. For example if we consider the radius $1/2$, the above \eqref{eq:nehariphi4a4} satisfies Nehari necessary condition \eqref{NehariN} for $-0.301<a<0.301$. Qualitatively similar features hold for the string amplitude as well.
\end{subappendices}
\clearpage

    \chapter{  Crossing Symmetric Spinning S-Matrix Bootstrap : EFT Bounds}\label{spinboot}
\section{Introduction}

In the previous chapter, we explored the connection between crossing-symmetric dispersion relation (CSDR) and univalent functions to obtain two-sided bounds on Wilson coefficients of the low energy expansion of scattering amplitude. The dispersion variable $z$ and the parameter $a$ of CSDR, rather than the usual  Mandelstam variables $(s,\,t,\,u)$, led us to this fascinating connection. The amplitude for identical scalars then has nontrivial properties in terms of univalent function in the complex $z$ plane. The authors of \cite{Raman:2021pkf} unearthed yet another connection between scattering amplitudes and geometric function theory in the form of typically real functions. They showed that for a particular range of $a$, the amplitude is typically real, i.e. it satisfies the condition
\be
\text{Im.} [f(z)] \text{Im. }[z]>0\,
\ee
inside the unit disk $|z|<1$, which in turn imposes the Bieberbach-Rogosinski (BR) two-sided bounds on the Taylor expansion coefficients of $f(z)$, \eqref{BR bound}. In terms of Wilson coefficients, an argument based on the Markov brothers' inequality, as shown in \cite{Raman:2021pkf}, leads to two-sided bounds on the ratios of Wilson coefficients. 

The analysis in this chapter will extend the CSDR for identical, neutral external particles carrying spin. Our formalism is general, although, for concreteness, we will focus on the 2-2 scattering of photons and gravitons and neutral Majorana fermions. In the photon and graviton cases, we will be able to identify linear combinations of helicity amplitudes whose Taylor coefficients are two-sided bounded using GFT arguments. We will be able to write down a general expression for the locality constraints. Our formalism paves the way for a future systematic study of the S-matrix bootstrap for the 2-2 scattering of identical particles with spin. 
Let us recall that the crossing symmetric dispersion relation of a scalar amplitude $\mt_0(s_1,s_2)$ takes the following form,
\begin{equation}
	\mt_0(s_1,s_2)=\alpha_0+\frac{1}{\pi}\int_{M^2}^{\infty} \frac{d s_1'}{s_1'}\mathcal{A}(s_1',s_2^+(s_1',a))H(s_1';s_1,s_2,s_3),
\end{equation}
where $\mathcal{A}(s_1',s_2^+(s_1',a))$, called the absorptive part, is the $s$- channel discontinuity and $H(s_1';s_1,s_2,s_3)$ is a manifestly crossing symmetric kernel. The parameter $a=(s_1 s_2 s_3)/(s_1 s_2+s_2s_3+s_3s_1)\equiv y/x$ is kept fixed writing this dispersion relation and $s_2^+$ is one of the two roots obtained from this equation on using $s_1+s_2+s_3=0$. For a massive theory with a gap, as for pion scattering, the dispersive integral starts at $8 m^2/3$, where $m$ is the mass of the pion. In this case $s_1=s-4m^2/3, s_2=t-4m^2/3, s_3=u-4m^2/3$ with $s,t,u$ being the usual Mandelstam variables. For EFTs, the lower limit starts at some cut-off $M^2$ and all external particles are considered massless. The absorptive part can be expanded in partial waves involving Gegenbauer polynomials. Then Taylor expanding around $a=0$ leads to the conclusion that for each partial wave, there are in principle any arbitrary power of $a$, and hence of $x=s_1 s_2+s_2 s_3+s_3 s_1$ which are absent for a local theory. On demanding that such powers responsible for non-local terms cancel,  leads to what we call the ``locality" constraints.

The range of the parameter $a$ is crucial in this story. For theories with a gap, as for pion scattering, axiomatic arguments can be used to find this range of $a$ \cite{Haldar:2021rri}. However, when describing EFTs \cite{tolley1, tolley3,Caron-Huot:2020cmc, rastelli}, these axiomatic arguments do not work. In this chapter, taking a leaf out of \cite{Caron-Huot:2020cmc}, we will use the locality constraints and linear programming and establish the range of $a$ where the absorptive part of the amplitude is positive. This is crucial since, in our approach, it is vital that this range of $a$ satisfies $-a_{min}<a<a_{max}$ with both ends non-zero to have two-sided bounds. One surprising conclusion that emerges from our analysis is that this range is related to the weak Low Spin Dominance (wLSD) conjecture made in \cite{Bern:2002kj}. Our findings lead to the conclusion that a few low lying spins control the sign of the absorptive part. For non-unitary theories, generically, the 
imaginary part of the partial wave coefficient (often referred to as the spectral function) is not of a definite sign. However, if it is known that the spectral function for some low lying spins is positive, it is possible then to have two-sided bounds in a local but non-unitary theory using wLSD.

We will focus on light-by-light scattering and graviton scattering in weakly coupled EFTs \cite{Bern:2002kj,Henriksson:2021ymi, rastelli} and derive two-sided bounds. We consider the linearly independent helicity amplitudes, $T^{\lambda_3 \lambda_4}_{\lambda_1 \lambda_2}(s_1, s_2, s_3)$, for 2-2 scattering $(\lambda_1 \lambda_2\rightarrow \lambda_3 \lambda_4)$ of graviton, photon and massive Majorana fermions in four spacetime dimensions. Here $\lambda_i$ are helicity labels and these take values $-j$ and $+j$ for massless particle with spin $j$ while there are $2j+1$ independent helicities for a massive particle with spin $j$. Generically these  helicity amplitudes mix among themselves under crossing,
\begin{equation}
	\mt^{\lambda_3 \lambda_4}_{\lambda_1 \lambda_2}(s_1, s_2, s_3)\rightarrow \sum_{i j k l} C_{ijkl} \mt^{\lambda_{k}\lambda_{l}}_{\lambda_i \lambda_j}(P_{s_1},P_{s_2},P_{s_3}),
\end{equation}
where $(P_{s_1},P_{s_2},P_{s_3})$ represents some permutation of $(s_1,s_2,s_3)$. Using representation theory of $S_3$ (the permutation group relevant for Mandelstam invariants), we construct a basis of crossing symmetric amplitudes, $F(s_1,s_2,s_3)$\footnote{We don't put any helicity labels for  crossing symmetric amplitudes since they are often combinations of amplitudes with different helicity labels \eqref{f1}.}, using the helicity amplitudes $T^{\lambda_3 \lambda_4}_{\lambda_1 \lambda_2}(s_1,s_2,s_3)$. These crossing symmetric helicity amplitudes transform as a singlet under $S_3$,
\begin{equation}
	F(s_1, s_2, s_3)=F(s_2,s_1,s_3)=F(s_3,s_2,s_1).
\end{equation}
We then write down locality constraints associated with the crossing symmetric amplitudes $F^{\lambda_3 \lambda_4}_{\lambda_1 \lambda_2}(s_1,s_2,s_3)$. Explicit formulae for a subclass of the amplitudes in closed form can be found. Our method allows us to write the locality constraints for all the crossing symmetric amplitudes. To be precise, we consider the following crossing symmetric amplitudes for the photon case\footnote{The relevant amplitudes for graviton are a bit subtle and will be dealt with in section \ref{gravitonbound}. We will just quote the results here.}, 
\begin{equation}\label{fintro}
	F^\gamma_1(s_1,s_2,s_3) = \mt_2(s_1,s_2,s_3),~~ F^\gamma_2(s_1,s_2,s_3)= \m\mt_1(s_1,s_2,s_3)+\mt_3(s_1,s_2,s_3)+\mt_4(s_1,s_2,s_3)
\end{equation} 
where the helicity amplitudes $\mt_i$ are defined in \eqref{PTboseph}. These amplitudes have the low energy EFT expansion 
\bea
F^{\g,i} (s_1,s_2)&=& \sum_{p,q}\mw^i_{p,q} x^p y^q  \,.
\eea

The locality constraints for the amplitude $F^\gamma_2 (s_1,s_2,s_3) + x_1 F^\gamma_1(s_1,s_2,s_3)= \sum_{p,q}\mw^{(x_1)}_{p,q} x^p y^q$ for $x_1 \in [-1,1]$, are
$$\mw^{(x_1)}_{p,q}=0,~~\forall~~ p < 0.$$

We present the explicit expressions for $\mw^{(x_1)}_{p,q}$ using CSDR in the main text (eqn \eqref{photonbell}) and from those we obtain positivity conditions called $PB^\g_C$ (eqn \eqref{pbcphot}). We also show that the dispersive part of the amplitude can be written as a Typically Real function leading to bounds on the range of the variable $a$. In general, for massless theories the lower bound on $a=a_{min}$ is zero \cite{Raman:2021pkf}, which only leads to one-sided bounds. We observe that the Wigner-$d$ functions, $d^{\ell}_{m,n}(\sqrt{\xi(s_1, a)})$, are positive for all spins when its argument $\xi(s_1,a)$ is greater than 1. Adding a suitable linear combination of the locality constraints, we can show that the positivity of the absorptive part arises even when $\xi(s_1,a)<1$. This translates to $-a_{min}<a<a_{max}$. This indicates the dominance of low spin partial waves in EFTs and is called Low spin dominance (LSD). This behaviour was observed for gravitons in \cite{sasha, nima2}. We will show how this naturally emerges out of our analysis using the locality constraints. We will show that the lower range of $a$ tells us which spins dominate in the determination of the positivity of the absorptive part for $-a_{min}<a<a_{max}$. We demonstrate the same for the case of type-II string amplitude in appendix \ref{appG}.

After showing that the amplitude  is typically-real for a range of $a\in [-a_{min},a_{max}]$, we can directly find two-sided bounds on the ratio of Wilson coefficients $w_{p,q}=\frac{\mathcal{W}_{p,q}}{\mathcal{W}_{1,0}}$ from GFT. Below we show examples of bounds found for the scattering of scalars, photons and gravitons in Table \ref{tab1}. The detailed list of bounds for photon and graviton scattering are summarized in Table \ref{mintph} and \ref{mintgrav}.
\begin{table}[h!] 
	\centering 
	\scalebox{0.9}{
		\begin{tabular}{ |c| c|c|c|} 
			\hline
			Theory & EFT amplitudes & Range of $a$ and LSD & $w_{01}$ bound\\ 
			\hline
			Scalar & $F(s_1,s_2,s_3)= \mathcal{W}_{1,0}x+\mathcal{W}_{01} y+\cdots$& $-0.1933 M^2<a^{scalar}< \frac{2 M^2}{3}$& $\frac{-3}{2 M^2}< w_{0,1}<\frac{5.1733}{M^2}$\\ 
			& & (Spin-2 dominance) & \\
			\hline
			Photon & $F_2(s_1,s_2,s_3)= 2 g_2 x- 3 g_3 y+\cdots $  & $-0.1355 M^2<a^{\gamma}< \frac{2 M^2}{3}$ & $\frac{-4.902}{M^2}< \frac{g_3+ x_1 \frac{f_3}{3}}{g_2 + x_1 f_2}<\frac{1}{M^2}$\\
			& $F_1(s_1,s_2,s_3)= 2 f_2 x- f_3 y+\cdots$ & (Spin-3 dominance) & where $x_1 \in [-1,1]$\\
			\hline
			Graviton & $\tilde{F}_2^{h}= 2 x f_{0,0}+3 y f_{1,0}+\cdots$ & $-0.1933 M^2<a^{h}< \frac{2 M^2}{3}$ & \\ 
			& & ( Spin-2 dominance) &$-\frac{1}{M^2} < \frac{f_{1,0}}{f_{0,0}}<\frac{3.44}{ M^2}$\\
			\hline
	\end{tabular}}
	\caption{Example of two sided bounds we have found for scalars, photons and gravitons using GFT.}\label{tab1}
\end{table}

Apart from the conceptual clarity that the GFT techniques enable us, are there any technical advantages to using our approach? We wish to point out a couple of obvious ones. First, unlike the fixed-$t$ methods where one uses SDPB techniques and hence needs to worry about convergence in the spin, the dispersive variable as well as the number of null constraints, in our approach, once the range of $a$ has been determined, one only needs to check for convergence in the number of $BR$ inequalities we use. Second, we can write simple codes directly in Mathematica to study bounds. However, there are also some disadvantages. The main one is that while we do obtain two-sided bounds quite easily, these are not necessarily the sharpest ones possible since we do not use all the locality constraints. It is not clear to us if there is a way to get optimum bounds\footnote{A bound will be considered optimum if there is a consistent S-matrix saturating it.} using purely GFT techniques.

The chapter is organized as follows. In section \ref{crampconstr}, we describe the construction of fully crossing symmetric amplitude. Through multiple subsections of section \ref{csdrsec}, we describe the key formulas like CSDR, locality constraints, typical realness of the amplitude and then we introduce BR bounds as well. We also discuss in section \ref{csdrsec}  how low spin dominance emerge out of our analysis taking into account the locality constraints. Through section \ref{scalboundsec}, \ref{photonbound},  \ref{gravitonbound} we describe the bounds obtained for scalars, photons and gravitons respectively. We end our discussion with a summary of our findings in section \ref{summary}. Several technical details are relegated to multiple appendices at the end.

\section{Crossing symmetric amplitudes}\label{crampconstr}
In this section, we present a general construction for crossing symmetric amplitudes following \cite{Roskies}. Let us begin with a short review of scattering amplitudes of identical particles as irreducible representations (irreps) of $S_3$ \cite{SDC, Henning:2017fpj}. Consider the scattering of four identical particles (massive or massless, with or without spin) in $d=4$. The momenta of the particles satisfy, 

\begin{equation}\label{osmc}
	p_i^2=-m^2,\qquad \sum_{i=1}^{4}p_i^\mu=0,
\end{equation}
where $m$ is mass of each particle.
We use the mostly positive convention and define Mandelstam variables,
\begin{equation}\label{stu} \begin{split}
		s&:=-(p_1+p_2)^2=-(p_3+p_4)^2=2m^2-2 p_1.p_2 =2m^2-2 p_3.p_4,\\
		t&:=-(p_1+p_3)^2=-(p_2+p_4)^2=2m^2-2 p_1.p_3=2m^2-2 p_2.p_4,\\
		u&:=-(p_1+p_4)^2=-(p_2+p_3)^2=2m^2-2 p_1.p_4=2m^2-2 p_2.p_3.
	\end{split}
\end{equation}
Due to momentum conservation we have $s+t+u=4m^2$. For identical bosonic particles, the S-matrix is to be thought of as the function of Mandelstam invariants (and polarizations), which is $S_4$ invariant, the symmetry group of permutations of four particles. In the present context, $S_4$ acts on the momenta and the helicities of the particles. We usually impose the $S_4$ invariance in two steps. Recall that that $\Z_2 \times \Z_2$ is the normal subgroup of $S_4$ and the remnant symmetry is $\frac{S_4}{\Z_2 \times \Z_2}=S_3$. Action of $\Z_2 \times \Z_2$ on four objects $(1,2,3,4)$ is the simultaneous exchange of two particles- (12)(34), (13)(24) and (14)(23) while $S_3$ is the permutation of three objects $(1,2,3)$. Since the Mandelstam invariants $s, t$ and $u$ are invariant under the $\Z_2 \times \Z_2$, we first impose $\Z_2 \times \Z_2$ invariance, which leaves the Mandelstam invariants unchanged and we are left with the remnant $S_3$ symmetry which acts on $(s,t,u)$. Note that helicities (or equivalently tensor structures in higher dimensions) may not be $\Z_2 \times \Z_2$ invariant, and we might need to impose $\Z_2 \times \Z_2$ invariance. However, for most of the non-crossing symmetric helicity amplitudes that we consider in this work, the $\Z_2 \times \Z_2$ symmetry has already been taken care of \cite{spinpen}. The S-matrix, which is invariant under the $\Z_2 \times \Z_2$ invariance, is often referred to as ``Quasi-invariant" S-matrix. The ``Quasi-invariant" S-matrix can be decomposed into irreps of $S_3$ and the \emph{crossing equations are relations between the orbits of} $S_3$. 

To simplify the discussion, unless otherwise mentioned, we will work with the following shifted Mandelstam variables,
\begin{equation}
	s_1=s-\frac{4m^2}{3},\,\,\,\,\,\,\,\,\,\,\,\, s_2=t-\frac{4m^2}{3}\,\,\,\,\,\,\\,\,\,\,\,\, s_3=u-\frac{4m^2}{3},
\end{equation}
such that, $s_1+s_2+s_3=0$.
With the aid of the representation theory of $S_3$, which we  review in appendix  \ref{S3rep}, one can write that the most general Quasi-invariant S-matrix takes the form \cite{Roskies}
\begin{eqnarray}\label{gendecomp}
	F(s_1,s_2,s_3)&=&f(s_1,s_2,s_3) + (2s_1-s_2-s_3) g_1(s_1,s_2,s_3) + (s_2-s_3) g_2(s_1,s_2,s_3)\nonumber\\&&+ (2s_1^2-s_2^2-s_3^2) h_1(s_1,s_2,s_3) 
	+ (s_2^2-s_3^2) h_2(s_1,s_2,s_3)\nonumber\\&&+(s_1-s_2)(s_2-s_3)(s_3-s_1)j(s_1,s_2,s_3)\,,
\end{eqnarray}  
where $f(s_1,s_2,s_3), j(s_1,s_2,s_3), g_i(s_1,s_2,s_3)$ and $h_i(s_1,s_2,s_3)$ are crossing symmetric amplitudes. 
We can decompose $F(s_1,s_2,s_3)$ into irreps of $S_3$,
\bea\label{gendecomp1}
F(s_1,s_2,s_3)=f_{\rm Sym}(s_1,s_2,s_3)+f_{\rm Anti-sym}(s_1,s_2,s_3)+f_{\rm Mixed+}(s_1,s_2,s_3)+f_{\rm Mixed-}(s_1,s_2,s_3)\,.
\eea
From eqn \eqref{irrepfstu} and \eqref{gendecomp1}, we have the following set of equations
\begin{eqnarray} \label{relationcrosssym}
	f_{\rm Sym}(s_1,s_2,s_3)&=& f(s_1,s_2,s_3)\,,\nonumber\\
	f_{\rm Anti-sym}(s_1,s_2,s_3)&=& (s_1-s_2)(s_2-s_3)(s_3-s_1)j(s_1,s_2,s_3)\,,\nonumber\\
	f_{\rm Mixed+}(s_1,s_2,s_3)&=&(2s_1-s_2-s_3) g_1(s_1,s_2,s_3)+(2s_1^2-s_2^2-s_3^2) h_1(s_1,s_2,s_3)\,, \nonumber\\
	f_{\rm Mixed-}(s_1,s_2,s_3)&=&(s_2-s_3) g_2(s_1,s_2,s_3)+ (s_2^2-s_3^2) h_2(s_1,s_2,s_3)\,.
\end{eqnarray} 
This can be inverted to give the required crossing symmetric basis \cite{Roskies}.  
\begin{eqnarray}\label{crosssymmbasis}
	f(s_1,s_2,s_3)&=&f_{\rm Sym}(s_1,s_2,s_3)\,,\nonumber\\
	j(s_1,s_2,s_3)&=&\frac{f_{\rm Anti-sym}(s_1,s_2,s_3)}{(s_1-s_2)(s_2-s_3)(s_3-s_1)}\,,\nonumber\\
	g_1(s_1,s_2,s_3)&=&\frac{f_{\rm Mixed+}(s_1,s_2,s_3)(s_1^2+s_2^2-2s_3^2)-f_{\rm Mixed+}(s_3,s_1,s_2)(s_2^2+s_3^2-2s_1^2)}{3 (s_1-s_2)(s_2-s_3)(s_3-s_1)}\,,\nonumber\\
	h_1(s_1,s_2,s_3)&=&\frac{f_{\rm Mixed+}(s_3,s_1,s_2)(s_3+s_2-2 s_1)-f_{\rm Mixed+}(s_1,s_2,s_3)(s_1+s_2-2 s_3)}{3(s_1-s_2)(s_2-s_3)(s_3-s_1)}\,,\nonumber\\
	g_2(s_1,s_2,s_3)&=&\frac{f_{\rm Mixed-}(s_3,s_1,s_2)(s_2^2-s_3^2)-f_{\rm Mixed-}(s_1,s_2,s_3)(s_1^2-s_2^2)}{(s_1-s_2)(s_2-s_3)(s_3-s_1)}\,, \nonumber\\
	h_2(s_1,s_2,s_3)&=&\frac{f_{\rm Mixed-}(s_1,s_2,s_3)(s_1-s_2)-f_{\rm Mixed-}(s_3,s_1,s_2)(s_2-s_3)}{(s_1-s_2)(s_2-s_3)(s_3-s_1)}\,.
\end{eqnarray}
The following additional comments are in order:
\begin{itemize}
	\item Given a Quasi-invariant S-matrix the algorithm to construct the crossing symmetric basis, therefore, is straightforward. We construct the irreps $\{f_{\rm Sym},f_{\rm Anti-sym}, f_{\rm Mixed \pm}\}$ following \eqref{irrepfstu} and use \eqref{crosssymmbasis} to construct the crossing symmetric basis\footnote{See also \cite{Mahoux:1974ej, AZ21} for similar considerations for the massive pion case.}.
	\item The basis elements \emph{do not} have any spurious poles at $s_i=s_j$ and are analytic functions of $s_1,s_2,s_3$, which can be easily checked by plugging \eqref{irrepfstu} into  \eqref{crosssymmbasis}.
	\item The basis in eq.\eqref{crosssymmbasis} is not unique since the last two equations of \eqref{relationcrosssym} are two 2 equations for 4 unknowns $\{g_i(s_1,s_2,s_3),h_i(s_1,s_2,s_3)\}_{i=1,2}$. One can use any permutation of the arguments of $f_{\rm Mixed \pm}$ to get a system of full rank. We have used the permutations $s_i \rightarrow s_{i+1 {\rm {mod}}(3)}$ on $f_{\rm Mixed \pm}$ as these are best suited for our purposes\footnote{In \cite{Roskies} the permutation $s_1 \rightarrow s_3$ was used. We warn the reader referring to \cite{Roskies} that there is a minor typo in the analog of \eqref{gendecomp} where the sign of the $j(s_1,s_2,s_3)$ term is wrong.}.
	\item If $F(s_1,s_2,s_3)$ is symmetric or anti-symmetric then only $f(s_1,s_2,s_3)$ and $j(s_1,s_2,s_3)$ are non zero respectively. Furthermore $F(s_1,s_2,s_3)$ is $t-u$ symmetric then only $\{f(s_1,s_2,s_3),g_1(s_1,s_2,s_3),$ \\
	\noindent$h_1(s_1,s_2,s_3)\}$ are nonzero. 
\end{itemize}


\subsection{Photons and Gravitons}
Now we apply the formalism developed in the previous section to the case of parity even photon and graviton amplitudes. We will work with helicity amplitudes and show that they transform in irreps of $S_3$. Subsequently, we construct the crossing symmetric amplitudes from them using \eqref{crosssymmbasis}. As a consequence of the $CPT$ theorem and the fact that we will be considering particles on which charge conjugation acts trivially, our helicity amplitudes are $PT$ invariant. We will consider the subcases whether parity is preserved or not. We will be following the notations and conventions of \cite{spinpen}. 


\subsubsection{$\mathcal{P}$ invariant theories}

Massless photon and graviton theories in $d=4$ are characterised by their helicities which can take values  $(\pm 1)$ for photons and $(\pm 2)$ for gravitons respectively. This tells us that there are possibly 16 helicity amplitudes. Since the particles are identical, the helicity amplitudes enjoy a $\Z_2\times\Z_2$ symmetry. 
\begin{eqnarray}
	&&\mt_{\lambda_1, \lambda_2}^{\lambda_3, \lambda_4}(s_1,s_2,s_3)=\mt_{\lambda_2, \lambda_1}^{\lambda_4, \lambda_3}(s_1,s_2,s_3), \qquad \mt_{\lambda_1, \lambda_2}^{\lambda_3, \lambda_4}(s_1,s_2,s_3)=\mt_{-\lambda_4, -\lambda_3}^{-\lambda_2, -\lambda_1}(s_1,s_2,s_3),\nonumber\\
	&&\mt_{\lambda_1, \lambda_2}^{\lambda_3, \lambda_4}(s_1,s_2,s_3)=\mt_{-\lambda_1, -\lambda_2}^{-\lambda_3, -\lambda_4}(s_1,s_2,s_3). \nonumber\\
\end{eqnarray}
Additionally, since we are looking at parity invariant theories, we have the following constraints from parity, time-reversal respectively,
\begin{eqnarray}\label{PTboseph}
	\mt^{\lambda_3, \lambda_4}_{\lambda_1,\lambda_2}(s_1,s_2,s_3) &=& \eta_1^* \eta_2^*\eta_3\eta_4 (-1)^{j_1+j_2+j_3+j_4} (-1)^{\lambda_1-\lambda_2-\lambda_3+\lambda_4}\mt^{\lambda_1, \lambda_2}_{\lambda_3,\lambda_4}(s_1,s_2,s_3), \nonumber\\\nonumber\\
	\mt^{\lambda_3, \lambda_4}_{\lambda_1,\lambda_2}(s_1,s_2,s_3) &=& \varepsilon_1^* \varepsilon_2^*\varepsilon_3\varepsilon_4 \mt^{\lambda_1, \lambda_2}_{\lambda_3,\lambda_4}(s_1,s_2,s_3),
\end{eqnarray}
where $|\eta_i|^2=|\epsilon_i|^2=1$. Note that for scattering of four identical photons and gravitons, we have $\eta_1^* \eta_2^*\eta_3\eta_4= (|\eta|^2)^2 =1$ trivially. These conditions reduce the number of independent  parity preserving helicity amplitudes which are given by \cite{spinpen},  
\begin{equation}
	\begin{split}
		& \mt_1(s_1,s_2,s_3)=\mt^{++}_{++}(s_1,s_2,s_3), \qquad \mt_2(s_1,s_2,s_3)=\mt^{--}_{++}(s_1,s_2,s_3),\qquad \mt_3(s_1,s_2,s_3)=\mt^{+-}_{+-}(s_1,s_2,s_3),\\
		& \mt_4(s_1,s_2,s_3)=\mt^{-+}_{+-}(s_1,s_2,s_3),\qquad \mt_5(s_1,s_2,s_3)=\mt^{+-}_{++}(s_1,s_2,s_3).\\
	\end{split}
\end{equation}
\footnote{Note that due to \eqref{PTboseph}, the amplitudes $\mt_2$ and $\mt_5$ enjoy the additional symmetry 
	$$\mt^{--}_{++}(s_1,s_2,s_3) = \mt^{++}_{--}(s_1,s_2,s_3), \qquad \mt^{+-}_{++}(s_1,s_2,s_3) = \mt^{++}_{+-}(s_1,s_2,s_3)$$
}
These linearly independent set of five amplitudes are the basis of Quasi-invariant S-matrices defined in the previous section. They transform in irreps of $S_3$ which we determine from the following crossing equation \cite{spinpen, Henriksson:2021ymi}. 
\begin{equation}\label{crossphgv}
	\begin{split}
		& \m2{\lambda_1,\lambda_2}^{\lambda_3,\lambda_4}(s_1,s_2,s_3)=\epsilon_{23}' \mt_{\lambda_1,-\lambda_3}^{-\lambda_2,\lambda_4}(s_2,s_1,s_3),\\
		& \mt_{\lambda_1,\lambda_2}^{\lambda_3,\lambda_4}(s_1,s_2,s_3)=\epsilon_{24}' \mt_{\lambda_1,-\lambda_4}^{\lambda_3,-\lambda_2}(s_3,s_2,s_1),
	\end{split}
\end{equation}
where $\epsilon_{23}'$ and $\epsilon_{24}'$ are arbitrary phases which were left unfixed from the general considerations of crossing symmetry using which \eqref{crossphgv} were derived. We will fix them in this section using constraints from consistency of crossing equations and comparing against explicit helicity amplitudes in literature. Note that \eqref{crossphgv} differs from the equivalent equation of \cite{spinpen} (eqns 2.81 and 2.82). This is due to the fact that we assume the following assignment for the Wigner-$d$ angles \begin{eqnarray}\label{wignerd}
	\alpha_1=0,~~\alpha_2=\pi,~~\alpha_3=0,~~\alpha_4=\pi, \nonumber\\  
	\beta_1=0,~~\beta_2=\pi,~~\beta_3=0,~~\beta_4=\pi, \nonumber\\
\end{eqnarray}
in contrast with eqns 2.78 and 2.79 of \cite{spinpen}. Using \eqref{crossphgv}, we can determine the crossing matrices to be,  
\begin{equation}\label{cstcsuph}
	C^p_{st}=\epsilon'_{23}\begin{pmatrix}
		0 & 0 & 0 & 1 & 0\\
		0 & 1 & 0 & 0 & 0\\
		0 & 0 & 1 & 0 & 0\\
		1 & 0 & 0 & 0 & 0\\
		0 & 0 & 0 & 0 & 1
	\end{pmatrix},\qquad C^p_{su}=\epsilon'_{24}\begin{pmatrix}
		0 & 0 & 1 & 0 & 0\\
		0 & 1 & 0 & 0 & 0\\
		1 & 0 & 0 & 0 & 0\\
		0 & 0 & 0 & 1 & 0\\
		0 & 0 & 0 & 0 & 1
	\end{pmatrix}.
\end{equation}

At this stage we have two undetermined phases $\epsilon_{23}'$ and $\epsilon_{24}'$. In order to determine the $t-u$ crossing relation we use the following relation for identical scattering particles \cite{spinpen}
\begin{equation}
	\begin{split}
		& \mt_{\lambda_1,\lambda_2}^{\lambda_3,\lambda_4}(s_1,s_2,s_3)=(-1)^{\lambda_2-\lambda_1+\lambda_4-\lambda_3}\mt_{\lambda_2,\lambda_1}^{\lambda_3,\lambda_4}(s_1,s_3,s_2),\\
		& \mt_{\lambda_1,\lambda_2}^{\lambda_3,\lambda_4}(s_1,s_2,s_3)=(-1)^{-\lambda_2+\lambda_1+\lambda_4-\lambda_3}\mt_{\lambda_1,\lambda_2}^{\lambda_4,\lambda_3}(s_1,s_3,s_2).\\
	\end{split}
\end{equation}
We can independently try to derive the $C^p_{tu}$ crossing matrix by using the following composition for the generators of $S_3$. 
\begin{equation}
	C^p_{tu}=C^p_{st} C^p_{su} C^p_{st},\qquad C^p_{tu}=\epsilon_{23}'^2\epsilon_{24}'\begin{pmatrix}
		1 & 0 & 0 & 0 & 0\\
		0 & 1 & 0 & 0 & 0\\
		0 & 0 & 0 & 1 & 0\\
		0 & 0 & 1 & 0 & 0\\
		0 & 0 & 0 & 0 & 1\\
	\end{pmatrix}.
\end{equation}
This lets us to conclude $\epsilon_{24'}=1$ while the phase $\epsilon_{23}'$ is undetermined. We can try to fix $\epsilon_{23}'$ in the following way. We can compare against a known amplitude to check the phase. To be precise let us compare against the explicit helicity amplitudes computed in the Euler-Heisenberg EFT, from the last equality in eqn 2.9 of \cite{Alberte:2020bdz} and tree level graviton amplitude from eqn 17 of \cite{Bern:2002kj}, we see $\epsilon_{23}'=1$ for both photons and graviton amplitudes\footnote{Note the difference in convention in defining helicity amplitudes, $A^{{\rm us},\l_3,\l_4}_{\l_1,\l_2}=A^{{\rm them},\l_3,\l_4,-\l_1,-\l_2}$}. For convenience we write down the crossing matrices finally 
\begin{eqnarray}
	C^p_{st}=\begin{pmatrix}
		0 & 0 & 0 & 1 & 0\\
		0 & 1 & 0 & 0 & 0\\
		0 & 0 & 1 & 0 & 0\\
		1 & 0 & 0 & 0 & 0\\
		0 & 0 & 0 & 0 & 1
	\end{pmatrix},\qquad C^p_{su}=\begin{pmatrix}
		0 & 0 & 1 & 0 & 0\\
		0 & 1 & 0 & 0 & 0\\
		1 & 0 & 0 & 0 & 0\\
		0 & 0 & 0 & 1 & 0\\
		0 & 0 & 0 & 0 & 1
	\end{pmatrix}.
\end{eqnarray} 
One can immediately see from these crossing matrices that  $\mt_2(s_1,s_2,s_3)$ and $\mt_5(s_1,s_2,s_3)$ are crossing symmetric by themselves while it takes a little bit more effort to see that $( \mt_1(s_1,s_2,s_3), \mt_3(s_1,s_2,s_3),$ $\mt_4(s_1,s_2,s_3) )$ transforms in a $\bf{3_S}={\bf 1_S}+ {\bf 2_M}$ (a reducible representation of dimension 3). To see that $\mt_2(s_1,s_2,s_3)$ and $\mt_5(s_1,s_2,s_3)$ are crossing symmetric, note that under $(s_1,s_2)$ and $(s_1,s_3)$ they map to themselves and since the other orbits of $S_3$ are generated by products of this transposition, all the orbits will map to themselves.  However, to systematise the procedure we explain in detail the case of photons. Using the projector \eqref{projector} we see that 
\begin{eqnarray}
	P_{\bf{1_S}}(\mt_2(s_1,s_2,s_3))&=& \mt_2(s_1,s_2,s_3)\,,\nonumber\\
	P_{\bf{1_S}}(\mt_5(s_1,s_2,s_3))&=& \mt_5(s_1,s_2,s_3)\,,\nonumber\\
	P_{\bf{1_S}}(\mt_1(s_1,s_2,s_3))&=&P_{\bf{1_S}}(\mt_3(s_1,s_2,s_3))=P_{\bf{1_S}}(\mt_4(s_1,s_2,s_3))\,,\nonumber\\ &=&\frac{(\mt_1(s_1,s_2,s_3)+\mt_3(s_1,s_2,s_3)+\mt_4(s_1,s_2,s_3))}{3}\,, \nonumber\\
	P_{\bf{1_A}}(\mt_i(s_1,s_2,s_3))&=&0\,. \nonumber\\
\end{eqnarray} 

This tells us that the triplet $(\mt_1(s_1,s_2,s_3),\mt_3(s_1,s_2,s_3),\mt_4(s_1,s_2,s_3))$ has a ${\bf 1_S}$ part while $\mt_2(s_1,s_2,s_3)$ and $\mt_5(s_1,s_2,s_3)$ are crossing symmetric by themselves. We now want to check whether there is a ${\bf 2_M}$ also in $(\mt_1(s_1,s_2,s_3),\mt_3(s_1,s_2,s_3),\mt_4(s_1,s_2,s_3))$. From \eqref{projectorM+}, we get, 
\begin{eqnarray}
	P^{(1)}_{\bf{2_{M+}}}(\mt_1(s_1,s_2,s_3))&=&-2P^{(1)}_{\bf{2_{M+}}}(\mt_3(s_1,s_2,s_3))=-2P^{(1)}_{\bf{2_{M+}}}(\mt_4(s_1,s_2,s_3))\,,\nonumber\\
	&=&\frac{(2\mt_1(s_1,s_2,s_3)-\mt_3(s_1,s_2,s_3)-\mt_4(s_1,s_2,s_3))}{3}\,. \nonumber\\
	P^{(2)}_{\bf{2_{M+}}}(\mt_1(s_1,s_2,s_3))&=&-2P^{(2)}_{\bf{2_{M+}}}(\mt_3(s_1,s_2,s_3))=-2P^{(2)}_{\bf{2_{M+}}}(\mt_4(s_1,s_2,s_3))\,,\nonumber\\
	&=&\frac{(2\mt_3(s_1,s_2,s_3)-\mt_1(s_1,s_2,s_3)-\mt_4(s_1,s_2,s_3))}{3}\,. \nonumber\\
\end{eqnarray}

Therefore, we identify our crossing symmetric matrix by substituting the following sets of solutions in \eqref{crosssymmbasis}, 
\begin{eqnarray}
	f^{\a , 1}_{\rm Sym}(s_1,s_2,s_3)&=& \mt_2(s_1,s_2,s_3)\,,\nonumber\\
	f^{\a , 2}_{\rm Sym}(s_1,s_2,s_3)&=& \mt_5(s_1,s_2,s_3)\,,\nonumber\\
	f^{\a , 3}_{\rm Sym}(s_1,s_2,s_3)&=& \frac{\mt_1(s_1,s_2,s_3)+\mt_3(s_1,s_2,s_3)+\mt_4(s_1,s_2,s_3)}{3}\,,\nonumber\\
	f^{\a}_{\rm Mixed+}(s_1,s_2,s_3)&=&\frac{(2\mt_1(s_1,s_2,s_3)-\mt_3(s_1,s_2,s_3)-\mt_4(s_1,s_2,s_3))}{3}\,. \nonumber\\
\end{eqnarray}

\noindent where $\a \equiv \g, h$ for photons and gravitons, respectively. Explicitly written out, the crossing symmetric photon and graviton amplitudes  are,
\begin{equation}\label{f1}
	F^\alpha_1(s_1,s_2,s_3) = \mt_2(s_1,s_2,s_3)\,, \hspace*{280 pt}
\end{equation} 
\begin{equation}\label{f2}
	F^\alpha_2(s_1,s_2,s_3)= \mt_1(s_1,s_2,s_3)+\mt_3(s_1,s_2,s_3)+\mt_4(s_1,s_2,s_3) \,,\hspace*{130 pt}
\end{equation} 
\begin{equation}\label{f3}
	F^\alpha_3(s_1,s_2,s_3) = \mt_5(s_1,s_2,s_3)\,,\hspace*{270 pt}
\end{equation}
\begin{equation}\label{f4}
	F^\alpha_4(s_1,s_2,s_3) =\frac{f^\a_{\rm Mixed+}(s_3,s_1,s_2)(s_3+s_2-2 s_1)-f^\a_{\rm Mixed+}(s_1,s_2,s_3)(s_1+s_2-2 s_3)}{3(s_1-s_2)(s_2-s_3)(s_3-s_1)}\,,
\end{equation} 
\begin{equation}\label{f5}
	F^\alpha_5(s_1,s_2,s_3) = \frac{f^\a_{\rm Mixed+}(s_1,s_2,s_3)(s_1^2+s_2^2-2s_3^2)-f^\a_{\rm Mixed+}(s_3,s_1,s_2)(s_2^2+s_3^2-2s_1^2)}{3 (s_1-s_2)(s_2-s_3)(s_3-s_1)}\,.
\end{equation}

\subsubsection{$\mathcal{P}$ violating theories}

In this subsubsection we consider Parity violating (and hence Time-reversal violating theories) theories where we do not impose the condition \eqref{PTboseph}. As a result, the independent helicity amplitudes are, 
\begin{equation}\label{parviohelamp}
	\begin{split}
		& \mt_1(s_1,s_2,s_3)=T^{++}_{++}(s_1,s_2,s_3), \qquad \mt_2(s_1,s_2,s_3)=T^{--}_{++}(s_1,s_2,s_3),\qquad \mt_3(s_1,s_2,s_3)=T^{+-}_{+-}(s_1,s_2,s_3),\\
		& \mt_4(s_1,s_2,s_3)=T^{-+}_{+-}(s_1,s_2,s_3),\qquad \mt_5(s_1,s_2,s_3)=T^{+-}_{++}(s_1,s_2,s_3), \qquad T'_2(s_1,s_2,s_3)=T^{++}_{--}(s_1,s_2,s_3),\\
		&T'_5(s_1,s_2,s_3)=T^{++}_{+-}(s_1,s_2,s_3)\,.
	\end{split}
\end{equation}
The crossing matrices are modified to 
\begin{eqnarray}
	C^{pv}_{st}=\begin{pmatrix}
		0 & 0 & 0 & 1 & 0 & 0 & 0\\
		0 & 1 & 0 & 0 & 0 & 0 & 0\\
		0 & 0 & 1 & 0 & 0 & 0 & 0\\
		1 & 0 & 0 & 0 & 0 & 0 & 0\\
		0 & 0 & 0 & 0 & 1 & 0 & 0\\
		0 & 0 & 0 & 0 & 0 & 1 & 0\\
		0 & 0 & 0 & 0 & 0 & 0 & 1\\
	\end{pmatrix},\qquad C^{pv}_{su}=\begin{pmatrix}
		0 & 0 & 1 & 0 & 0 & 0 & 0\\
		0 & 1 & 0 & 0 & 0 & 0 & 0\\
		1 & 0 & 0 & 0 & 0 & 0 & 0\\
		0 & 0 & 0 & 1 & 0 & 0 & 0\\
		0 & 0 & 0 & 0 & 1 & 0 & 0\\
		0 & 0 & 0 & 0 & 0 & 1 & 0\\
		0 & 0 & 0 & 0 & 0 & 0 & 1\\
	\end{pmatrix}.
\end{eqnarray} 
The new objects  $T'_2$ and $T'_5$ are crossing symmetric by themselves. 
\begin{eqnarray}
	P_{\bf{1_S}}(T'_2(s_1,s_2,s_3))&=& T'_2(s_1,s_2,s_3)\,,\nonumber\\
	P_{\bf{1_S}}(T'_5(s_1,s_2,s_3))&=& T'_5(s_1,s_2,s_3)\,.\nonumber\\
\end{eqnarray} 

\noindent Therefore we identify our crossing symmetric matrix by substituting the following sets of solutions in \eqref{crosssymmbasis}, 
\begin{eqnarray}\label{pvcm}
	\tilde{f}^{\a , 1}_{\rm Sym}(s_1,s_2,s_3)&=& \mt_2(s_1,s_2,s_3)\,,\nonumber\\
	\tilde{f}^{\a , 2}_{\rm Sym}(s_1,s_2,s_3)&=& \mt_5(s_1,s_2,s_3)\,,\nonumber\\
	\tilde{f}^{\a , 3}_{\rm Sym}(s_1,s_2,s_3)&=& \frac{\mt_1(s_1,s_2,s_3)+\mt_3(s_1,s_2,s_3)+\mt_4(s_1,s_2,s_3)}{3}\,,\nonumber\\
	\tilde{f}^{\a , 4}_{\rm Sym}(s_1,s_2,s_3)&=& \mt'_2(s_1,s_2,s_3)\,,\nonumber\\
	\tilde{f}^{\a , 5}_{\rm Sym}(s_1,s_2,s_3)&=& \mt'_5(s_1,s_2,s_3)\,,\nonumber\\
	\tilde{f}^{\a}_{\rm Mixed+}(s_1,s_2,s_3)&=&\frac{(2\mt_1(s_1,s_2,s_3)-\mt_3(s_1,s_2,s_3)-\mt_4(s_1,s_2,s_3))}{3}\,. \nonumber\\
\end{eqnarray} 

\noindent where, $\a \equiv \g, g$ for photons and gravitons respectively. Explicitly written out, the crossing symmetric photon and graviton s-matrices  are,
\begin{equation}\label{f1pv}
	\tilde{F}^\alpha_1(s_1,s_2,s_3) = \mt_2(s_1,s_2,s_3)\,,\hspace*{370 pt}
\end{equation} 
\begin{equation}\label{f2pv}
	\tilde{F}^\alpha_2(s_1,s_2,s_3)= \mt_1(s_1,s_2,s_3)+\mt_3(s_1,s_2,s_3)+\mt_4(s_1,s_2,s_3)\,,\hspace*{370 pt}
\end{equation} 
\begin{equation}\label{f3pv}
	\tilde{F}^\alpha_3(s_1,s_2,s_3) = \mt_5(s_1,s_2,s_3)\,,\hspace*{370 pt}
\end{equation}
\begin{equation}\label{f4pv}
	\tilde{F}^\alpha_4(s_1,s_2,s_3) = \frac{f^\a_{\rm Mixed+}(s_3,s_1,s_2)(s_3+s_2-2 s_1)-f^\a_{\rm Mixed+}(s_1,s_2,s_3)(s_1+s_2-2 s_3)}{3(s_1-s_2)(s_2-s_3)(s_3-s_1)}\,,\hspace*{270 pt}
\end{equation} 
\begin{equation}\label{f5pv}
	\tilde{F}^\alpha_5(s_1,s_2,s_3) = \frac{f^\a_{\rm Mixed+}(s_1,s_2,s_3)(s_1^2+s_2^2-2s_3^2)-f^\a_{\rm Mixed+}(s_3,s_1,s_2)(s_2^2+s_3^2-2s_1^2)}{3 (s_1-s_2)(s_2-s_3)(s_3-s_1)}\,,\hspace*{370 pt}
\end{equation} 
\begin{equation}\label{f6pv}
	\tilde{F}^\alpha_6(s_1,s_2,s_3) = \mt'_2(s_1,s_2,s_3)\,,\hspace*{390 pt}
\end{equation} 
\begin{equation}\label{f7pv}
	\tilde{F}^\alpha_7(s_1,s_2,s_3) = \mt'_5(s_1,s_2,s_3)\,.\hspace*{390 pt}
\end{equation} 

We note that the crossing equations are consistent with the photon module classification done in \cite{Chowdhury:2020ddc}. In \cite{Chowdhury:2020ddc}, it was found that there is one parity even module transforming in a ${\bf 3}$, two parity even ${\bf 1_S}$ module and two parity odd ${\bf 1_S}$ module. It is satisfying to see that the degrees of freedom encoded in crossing in the two different approaches nicely match.


\subsection{Massive Majorana fermions}\label{mmf}
Let us now consider the scattering amplitude of four massive Majorana fermions in parity conserving theory. The five independent helicity structures are the following
\begin{equation}\label{MMamp}
	\begin{split}
		& \Phi_1(s_1,s_2,s_3)=\mt^{++}_{++}(s_1,s_2,s_3), \qquad \Phi_2(s_1,s_2,s_3)=\mt^{--}_{++}(s_1,s_2,s_3),\qquad \Phi_3(s_1,s_2,s_3)=\mt^{+-}_{+-}(s_1,s_2,s_3),\\
		& \Phi_4(s_1,s_2,s_3)=\mt^{-+}_{+-}(s_1,s_2,s_3),\qquad \Phi_5(s_1,s_2,s_3)=\mt^{+-}_{++}(s_1,s_2,s_3).\\
	\end{split}
\end{equation}
Further one can separate out the kinematical singularities and branch cuts to define the improved amplitudes $H_{I}(s_1,s_2,s_3)$ such that \cite{spinpen},
\begin{equation}
	\phi_I(s_1,s_2,s_3)=\sum_{J=1}^5 M^{-1}_{IJ} H_{J}(s_1,s_2,s_3),
\end{equation}
where $M$ matrix is defined as follows,
\begin{equation}
	M=\begin{psmallmatrix}
		\frac{4}{s_1-\frac{8m^2}{3}} & \frac{-4}{s_1-\frac{8m^2}{3}} & \frac{2(1-\frac{s_2+4m^2/3}{s_3+4m^2/3})}{s_1-8m^2/3} & \frac{2(1-\frac{s_3+4m^2/3}{s_2+4m^2/3})}{s_1-8m^2/3} & \frac{2(s_1+16m^2/3)(s_2-s_3)}{m(s_1-8m^2/3)\sqrt{(s_1+4m^2/3)(s_2+4m^2/3)(s_3+4m^2/3)}} \\
		0 & 0 & \frac{2}{s_3+4m^2/3} & \frac{-2}{s_2+4m^2/3} & -\frac{8m}{\sqrt{(s_1+4m^2/3)(s_2+4m^2/3)(s_3+4m^2/3)}}\\
		0 & 0 & \frac{2}{s_3+4m^2/3} & \frac{-2}{s_2+4m^2/3} & -\frac{2s}{m\sqrt{(s_1+4m^2/3)(s_2+4m^2/3)(s_3+4m^2/3)}}\\
		0 & 0 & \frac{2}{s_3+4m^2/3} & \frac{2}{s_2+4m^2/3} & 0\\
		-\frac{4}{s_1+4m^2/3} & -\frac{4}{s_1+4m^2/3} & \frac{2}{s_3+4m^2/3}+\frac{4}{s_1+4m^2/3} & \frac{2}{s_2+4m^2/3}+\frac{4}{s_1+4m^2/3} & \frac{2(s_2-s_3)}{m\sqrt{(s_1+4m^2/3)(s_2+4m^2/3)(s_3+4m^2/3)}}
	\end{psmallmatrix}.
\end{equation}
Crossing symmetry is imposed by the following two crossing matrices, 
\begin{equation}
	\tilde{C}^f_{st}=\begin{pmatrix}
		-\frac{1}{4} & -1 & \frac{3}{2} & 1 & -\frac{1}{4}\\
		-\frac{1}{4} & \frac{1}{2} & 0 & \frac{1}{2} & \frac{1}{4}\\
		\frac{1}{4} & 0 & \frac{1}{2} & 0 & \frac{1}{4}\\ 
		\frac{1}{4} & \frac{1}{2} & 0 & \frac{1}{2} & -\frac{1}{4}\\
		-\frac{1}{4} & 1 & \frac{3}{2} & -1 &-\frac{1}{4}
	\end{pmatrix},\qquad 
	\tilde{C}^f_{su}=\begin{pmatrix}
		-\frac{1}{4} & 1 & -\frac{3}{2} & 1 & -\frac{1}{4}\\
		\frac{1}{4} & \frac{1}{2} & 0 & -\frac{1}{2} & -\frac{1}{4}\\
		-\frac{1}{4} & 0 & \frac{1}{2} & 0 & -\frac{1}{4}\\ 
		\frac{1}{4} & -\frac{1}{2} & 0 & \frac{1}{2} & -\frac{1}{4}\\
		-\frac{1}{4} & -1 & -\frac{3}{2} & -1 &-\frac{1}{4}
	\end{pmatrix}.
\end{equation}

The analysis for massive fermions is a bit more involved. Using the projectors defined in \eqref{projector} we find,
\begin{equation}
	\begin{split}
		P_{\bf{1_S}}(H_1(s_1,s_2,s_3))&=P_{\bf{1_S}}(H_4(s_1,s_2,s_3))=P_{\bf{1_S}}(H_5(s_1,s_2,s_3))\,\\
		&=\frac{(H_1(s_1,s_2,s_3)+4H_4(s_1,s_2,s_3)-H_5(s_1,s_2,s_3))}{6}\,,\\
		P_{\bf{1_S}}(H_2(s_1,s_2,s_3))&=P_{\bf{1_S}}(H_3(s_1,s_2,s_3))=0\,,\\
		P_{\bf{1_A}}(H_i(s_1,s_2,s_3))&=0 \,.
	\end{split}
\end{equation}

This implies that the we have an irrep that transforms in an ${\bf 1_S}$ and none in ${\bf 1_A}$. We now use the projector for the mixed symmetry to evaluate
\begin{equation}
	\begin{split}
		& P^{(1)}_{\bf{2_{M+}}}(H_1(s_1,s_2,s_3))=\frac{1}{6} (5 H_1(s_1,s_2,s_3)-4 H_4(s_1,s_2,s_3)+ H_5(s_1,s_2,s_3))\,,\\
		&      P^{(2)}_{\bf{2_{M+}}}(H_1(s_1,s_2,s_3))=\frac{1}{12} (-5 H_1(s_1,s_2,s_3) +12 H_2 (s_1,s_2,s_3)-18 H_3 (s_1,s_2,s_3)+4 H_4 (s_1,s_2,s_3)\\&~~~~~~~~~~~~~~~~~~~~~~~~~~~~~~~~-H_5 (s_1,s_2,s_3))\,,\\
		& P^{(1)}_{\bf{2_{M+}}}(H_4(s_1,s_2,s_3))=\frac{1}{6} (-H_1(s_1,s_2,s_3)+2 H_4(s_1,s_2,s_3)+ H_5(s_1,s_2,s_3))\,,\\
		& P^{(2)}_{\bf{2_{M+}}}(H_4(s_1,s_2,s_3))=\frac{1}{12} ( H_1(s_1,s_2,s_3) -6 H_2 (s_1,s_2,s_3)-2 H_4 (s_1,s_2,s_3)-H_5 (s_1,s_2,s_3))\,.
	\end{split}
\end{equation}

Rest of the ${\bf 2_{M+/-}}$ projections are either zero or a linear combinations of these. Hence the independent data that can be used in \eqref{crosssymmbasis} are
\begin{eqnarray} 
	f^{\psi}_{\rm Sym}(s_1,s_2,s_3)&=& \frac{(H_1(s_1,s_2,s_3)+4H_4(s_1,s_2,s_3)-H_5(s_1,s_2,s_3))}{6},\,\nonumber\\
	f^{\psi, 1}_{\rm Mixed+}(s_1,s_2,s_3)&=&\frac{1}{6} (5 H_1(s_1,s_2,s_3)-4 H_4(s_1,s_2,s_3)+ H_5(s_1,s_2,s_3)),\, \nonumber\\
	f^{\psi, 2}_{\rm Mixed+}(s_1,s_2,s_3)&=&\frac{1}{6} (-H_1(s_1,s_2,s_3)+2 H_4(s_1,s_2,s_3)+ H_5(s_1,s_2,s_3)),\,. \nonumber\\
\end{eqnarray} 

Explicitly written out, the crossing symmetric fermion  S-matrices  are
\begin{equation}\label{psi1}
	\Psi_1(s_1,s_2,s_3) =(H_1(s_1,s_2,s_3)+4H_4(s_1,s_2,s_3)-H_5(s_1,s_2,s_3))\,,
\end{equation} 
\begin{equation}\label{psi2}
	\Psi_2(s_1,s_2,s_3) =\frac{f^{\psi,1}_{\rm Mixed+}(s_3,s_1,s_2)(s_3+s_2-2 s_1)-f^{\psi,1}_{\rm Mixed+}(s_1,s_2,s_3)(s_1+s_2-2 s_3)}{3(s_1-s_2)(s_2-s_3)(s_3-s_1)}\,,
\end{equation} 
\begin{equation}\label{psi3}
	\Psi_3(s_1,s_2,s_3) = \frac{f^{\psi,1}_{\rm Mixed+}(s_1,s_2,s_3)(s_1^2+s_2^2-2s_3^2)-f^{\psi,1}_{\rm Mixed+}(s_3,s_1,s_2)(s_2^2+s_3^2-2s_1^2)}{3 (s_1-s_2)(s_2-s_3)(s_3-s_1)}\,,
\end{equation} 
\begin{equation}\label{psi4}
	\Psi_4(s_1,s_2,s_3) =\frac{f^{\psi,2}_{\rm Mixed+}(s_3,s_1,s_2)(s_3+s_2-2 s_1)-f^{\psi,2}_{\rm Mixed+}(s_1,s_2,s_3)(s_1+s_2-2 s_3)}{3(s_1-s_2)(s_2-s_3)(s_3-s_1)}\,,
\end{equation} 
\begin{equation}\label{psi5}
	\Psi_5(s_1,s_2,s_3) = \frac{f^{\psi,2}_{\rm Mixed+}(s_1,s_2,s_3)(s_1^2+s_2^2-2s_3^2)-f^{\psi,2}_{\rm Mixed+}(s_3,s_1,s_2)(s_2^2+s_3^2-2s_1^2)}{3 (s_1-s_2)(s_2-s_3)(s_3-s_1)}\,.
\end{equation} 
\section{Crossing symmetric dispersion relation: Overview} \label{csdrsec}
In the previous section we constructed various fully crossing symmetric amplitudes, i.e., amplitudes invariant under $S_3$, the group of permutations of $(s_1, s_2, s_3)$. We will employ the crossing symmetric dispersion relation (CSDR) to these amplitudes. We have already expounded the principal rationale behind derivation of CSDR in section \ref{crossreview} of previous chapter.   But the discussion in the previous chapter was done for scattering amplitudes of massive particles. We will discuss how a similar construction can be obtained for EFT amplitudes for scattering of massless particles.    

Let us consider  a $S_3$-invariant amplitude associated\footnote{There can be multiple such amplitudes associated with a given scattering when the particles have extra quantum numbers such as spin, isospin e.t.c}  with scattering of identical particles $\mm(s,t,u)$, with $s+t+u=4m^2=\mu$, $m$ being mass of the scattering particles. The amplitude is known/assumed to satisfy following two crucial properties.
\begin{enumerate} \item[I.] We assume that the amplitude is analytic in some
	domain\footnote{We are considering both $s$ and $t$ as complex variables.} $\md\subset\mathbb{C}^2$, which includes the physical domains of all the three channels. For massive theories such domains (e.g. enlarged Martin domain \cite{Martin:1965jj})  have been established rigorously from axiomatic field theory considerations. For massless theories, even though such domains are not established within rigorous framework of axiomatic field theory, they can be argued physically in general. Thus we will assume the existence of such domains to begin with. 
	\item[II.] The amplitude is Regge bounded in all the three channels. While for massive theories this is established rigorously from axiomatic field theory, for massless theories this is a working assumption which we will make. Thus, for example, fixed $t$ Regge-boundedness reads
	\be 
	\mt(s,t)=o(s^2)\quad \text{for}\,\,\, |s|\to\infty,\quad t\,\,\text{fixed},\quad (s,t)\in\md.
	\ee  \end{enumerate} This is equivalent to the amplitude admitting a twice subtracted fixed-transfer (for example, fixed-$t$) dispersive representation. 

\subsection{Massive amplitudes}
We have already discussed this in the previous chapter. Nevertheless, we write key expressions for completeness. At the heart of CSDR lies the following rational parametrization \cite{ Auberson:1972prg, Sinha:2020win} for the Mandelstam variables $\{s_1=s-\m/3,\, s_2=t-\m/3, \,s_3=u-\m/3\}$:
\be \label{crosspar}
s_k(a,z)= a\left[1-\frac{ (z-z_k)^3}{z^3-1}\right],\quad a\in\mathbb{R},\quad k=1,2,3.
\ee Here $\{z_k\}$ are the cube-roots of unity.   $\tilde{z}:= z^3,a$ are crossing symmetric variables. We also introduce the following crossing symmetric combinations, $x:=-(s_1s_2+s_2s_3+s_3s_1)=\frac{-27a^2z^3}{(z^3-1)^2},\,y:=-(s_1s_2s_3)=\frac{-27a^3z^3}{(z^3-1)^2}$ such that $a=y/x$. With these parametrizations, as shown by \cite{Auberson:1972prg}, one can write the following dispersive representation of the amplitude $\mm$ with a \emph{manifestly crossing-symmetric kernel}: 
\bea 
\mathcal{T}(s_1, s_2)=\alpha_{0}+\frac{1}{\pi} \int_{M^2 }^{\infty} \frac{d s_1^{\prime}}{s_1^{\prime}} \mathcal{A}\left(s_1^{\prime} ; s_2^{(+)}\left(s_1^{\prime} ,a\right)\right) H\left(s_1^{\prime} ;s_1, s_2, s_3\right)\,,
\eea
where $\mathcal{A}\left(s_1; s_2\right)$ is the s-channel discontinuity, $s_{2}^{(+)}\left(s_1^{\prime}, a\right)=-\frac{s_1^{\prime}}{2}\left[1 - \left(\frac{s_1^{\prime}+3 a}{s_1^{\prime}-a}\right)^{1 / 2}\right]\,$ and the kernel is given by
\bea\label{kernel}
H\left(s_1^{\prime} ; s_1, s_2,s_3\right)&=&\left[\frac{s_1}{\left(s_1^{\prime}-s_1\right)}+\frac{s_2}{\left(s_1^{\prime}-s_2\right)}+\frac{s_3}{\left(s_1^{\prime}-s_3\right)}\right]\,\nonumber\\
&=&\frac{27a^2(3a-2s_1')}{(-27a^3+27a^2s_1'+s_1^3\left(\frac{-27a^2}{x}\right))}\,.
\eea
\subsection{Massless theories: EFT amplitudes}
For massless theories, we will consider crossing-symmetric dispersion relation for the amplitudes in the sense of effective field theories (EFT) as detailed below. One can write the following (twice subtracted) fixed $t$ dispersion relation for a massless amplitude \cite{Caron-Huot:2020cmc}
\be \label{Mdisp}
\frac{\mt(s,t)}{s (s+t)}=\int_{-\infty}^{\infty}\frac{ds'}{\pi (s'-s)}\text{Im}\left[\frac{\mt(s',t)}{s (s'+t)}\right],\qquad (t<0,\,\,s\notin\mathbb{R}),
\ee where the subtraction points are chosen to be $s=0$ and $s=-t,\,t<0.$
Now, the amplitude can be divided into two parts, the high energy amplitude $\mm_{\text{high}}$ and the low energy amplitude $\mm_{\text{low}}$. The high energy amplitude $\mm_{\text{high}}$  admits an `effective (fixed-transfer) dispersion relation' with two subtractions. For example, the fixed $t$ effective dispersion relation is of the form 
\be 
\frac{\mt_{\text{high}}(s,t)}{s (s+t)}=\int_{M^2}^{\infty}\frac{ds'}{\pi }\left(\frac{1}{s'-s}+\frac{1}{s'+s+t}\right)\text{Im}\left[\frac{\mt_{\text{high}}(s',t)}{s (s'+t)}\right].
\ee Here $M^2$ is some UV cut-off such that the physics beyond this scale is unknown a-priori to us. The low energy amplitude $\mm_{\text{low}}$ is to be understood in the sense of effective field theory amplitude. The dispersion relation for the full amplitude $\mm$, \eqref{Mdisp}, relates $\mm_{\text{high}}$ to this EFT amplitude.

Now, the amplitudes are all crossing symmetric. Therefore, we can write a crossing-symmetric dispersion relation for $\mm_{high}$ \cite{Sinha:2020win} similar to \eqref{crossdisp}
\be 
\mt_{\text{high}}(s_1,s_2)=\alpha_{0}+\frac{1}{\pi} \int_{M^2 }^{\infty} \frac{d s_1^{\prime}}{s_1^{\prime}} \mathcal{A}_{\text{high}}\left(s_1^{\prime} ; s_2^{(+)}\left(s_1^{\prime} ,a\right)\right) H\left(s_1^{\prime} ;s_1, s_2, s_3\right),
\ee 
with the kernel as in \eqref{kernel} and $\ma_{\text{high}}$ being the absorptive part of the high energy amplitude $\mm_{\text{high}}$.

In summary, we can write the crossing-symmetric dispersive representation for the amplitude ( in the sense of EFT when required ) as 
\be \label{crossdisp}
\mt(s_1,s_2)=\a_0+\frac{1}{\p}\int_{\Lambda_0}^{\infty}\frac{ds_1^{\prime}}{s_1^{\prime}}\ma\left(s_1^{\prime} ; s_2^{(+)}\left(s_1^{\prime} ,a\right)\right) H\left(s_1^{\prime} ;s_1, s_2, s_3\right),
\ee where $\Lambda_0=2\m/3$ for massive theories and $\Lambda_0=M^2$ (the UV cut-off) for massless amplitudes and the partial wave decomposition reads
\be\label{crossdispexp}
\ma\left(s_1^{\prime} ; s_2^{(+)}\left(s_1^{\prime} ,a\right)\right)=\Phi(s_1)\sum_{\ell=0}^\infty (2\ell+2\a)\,a_\ell(s_1)\,C_\ell^{(\frac{d-3}{2})}\left(\sqrt{\xi(s_1, a)}\right)
\ee  
where $\xi(s_1, a)=\xi_0 + 4 \xi_0 \left(\frac{a}{s_1-a}\right)$ and $\xi_0=\frac{s_1^2}{(s_1-\Lambda_0)^2}$ for massive theories while $\xi_0=1$ for massless theories and $a_\ell(s_1)$ is the spectral density which is defined as the imaginary part of the partial wave amplitude.    

\subsection{Wilson coefficients and locality constraints}
In this section we outline two central ingredients needed for  our bounds. Let us first review the case of massive scalar EFTs. The crossing symmetric amplitude, pole subtracted if required, admits a crossing symmetric double power series can be expanded in terms of crossing-symmetric variables $x:=-(s_1s_2+s_2s_3+s_3s_1),\,y:=-(s_1s_2s_3)$:
\be \label{crossexp}
\mt(s_1,s_2)=\sum_{p,q=0}^\infty \mw_{p,q}\,x^p y^q
\ee 
This is equivalent to a low-energy (EFT) expansion 
for the amplitude. The coefficients $\{\mw_{p,q}\}$ are themselves or  related to the Wilson coefficients appearing in the effective Lagrangian of a theory. Thus, these coefficients parametrize the space of EFTs. These coefficients can be obtained from the amplitude via the inversion formula \cite{Sinha:2020win}
\be \label{Winv}
\mw_{n-m,m}=\int_{\Lambda_0}^\infty \frac{ds_1}{2\pi s_1^{2n+m+1}}\Phi(s_1)\sum_{\ell=0}^\infty (2\ell+2\a)\,a_\ell(s_1)\,\mathcal{B}_{n,m}^{(\ell)}(s_1),\ee
with 
\bea \label{bellmassive}
\mathcal{B}_{n,m}^{\ell}(s_1)=2\sum_{j=0}^{m}\frac{(-1)^{1-j+m}p_{\ell}^{(j)}\left(\x_0\right) \left(4\xi_0 \right)^{j}(3 j-m-2n)\Gamma(n-j)}{ j!(m-j)!\Gamma(n-m+1)}\,\quad \x_0:=\frac{s_1^2}{(s_1-2\m/3)^2}
\eea 
Here  $\{a_\ell\}$ are the \emph{spectral functions} which appear as coefficients in partial wave expansion of the absorptive parts $\ma$. The functions $\{p_\ell^{(j)}(\x_0)\}$ are derivatives of Gegenbauer polynomials $C_\ell^{(\a)}$ for scalars
\be 
p_\ell^{(j)}(\x_0):= \left.\frac{\partial^j C_\ell^{(\a)}(\sqrt{z})}{\partial z^j}\right|_{z=\x_0}.
\ee 

The inversion formula leads to two kinds of constraints that play a central role in our subsequent analysis. Let us briefly review these. For more details, the readers are encouraged to look into  \cite{Sinha:2020win,Gopakumar:2021dvg}.

\subsubsection*{Locality constraints}

The first type of constraint that we will consider is related to the locality. In any local theory, a crossing symmetric amplitude admits a low energy expansion of the form \eqref{crossexp}. In particular, there should not be any negative power of $x$. However, this is not manifest at the level of the crossing symmetric dispersion relation since \eqref{Winv} is valid for both $n \le m$ and $n>m$, the latter leading to the negative powers of $x$. As explained in \cite{Sinha:2020win,Gopakumar:2021dvg} this is the price one has to pay for making crossing symmetry manifest. Fixed-transfer dispersion relations are manifestly local, but crossing symmetry has to be imposed as an additional constraint. In the crossing symmetric dispersion relation, by making crossing symmetry manifest, locality is compromised at the outset and has to be imposed. The equivalence of these two approaches was argued in \cite{Gopakumar:2021dvg}. 

Thus one needs to \emph{impose the locality constraint by demanding}

\be \label{Nulldef}
\mw_{n-m,m}=0,\quad\text{for}\quad n<m.
\ee
This gives rise to an infinite number of constraints on the partial wave coefficients $\{a_\ell\}$ known as \emph{locality constraints}, which are, in general, linear combinations of the null constraints \cite{Caron-Huot:2020cmc}. In principle, solving these infinite number of constraints can drastically restrict the space of allowed theories. However, even solving a finite number of such constraints give valuable information that we will see later.

\subsubsection*{Massless spinning particles}
For massless spinning particles, the decomposition of the amplitude into partial waves remains more or less unchanged with the technical difference due to the fact that instead of Gegenbauer polynomials, we have Wigner-$d$ functions in the expansion.
From section \ref{crampconstr}, we know how to construct the crossing symmetric amplitudes given the linearly independent basis of helicity amplitudes for various massless and massive spinning particles. The crossing symmetric decomposition \eqref{crossdisp} is therefore modified to be, 
\begin{eqnarray}\label{crossdipspin}
	\mt^{\text{spin}}(s_1,s_2)=\a_0+\frac{1}{\p}\int_{\Lambda_0}^{\infty}\frac{ds_1^{\prime}}{s_1^{\prime}}\ma^{\text{spin}}\left(s_1^{\prime} ; s_2^{(+)}\left(s_1^{\prime} ,a\right)\right) H\left(s_1^{\prime} ;s_1, s_2, s_3\right).
\end{eqnarray}
If the crossing symmetric amplitude is given by $\sum_i \b_i \mt_i$, where $\mt_i$ denote linearly independent helicity amplitudes and $\beta_i$ are some numbers, 
\be\label{maspin}
\ma^{\text{spin}}\left(s_1; s_2^{(+)}\left(s_1 ,a\right)\right) = \sum_{i} \b_i~\Phi(s_1)\sum_{\ell=0}^\infty (2\ell+2\a)\,a^i_\ell(s_1)\,f_i(d^{\ell}_{m,n}(\sqrt{\xi(s_1, a)}))   
\ee
where $f_i(d^{\ell}_{m,n}(\sqrt{\xi(s_1, a)}))$ denote the particular linear combinations of Wigner-$d$ functions that appear for $i$th helicity  amplitude and $a^i_\ell(s_1)$ now denotes the imaginary part of the partial waves of the particular helicity amplitude $\mt_i=\mt^{\l_3, \l_4}_{\l_1, \l_2}\left(s_1; s_2^{(+)}\left(s_1 ,a\right)\right)$. In our conventions, a helicity amplitude $T_i$ admits the following Wigner-$d$ matrix decomposition, 
\begin{eqnarray}\label{fidef}
	f_i\left(d^{\ell}_{m,n}(\sqrt{\xi(s_1, a)})\right)&=& d^\ell_{\l_1-\l_2, \l_3-\l_4}(\sqrt{\xi(s_1, a)}),\qquad \ell\geq |\l_1-\l_2| \,.
\end{eqnarray}
The restriction over the spin has been explained in Appendix \ref{uniconst}. Consequently, the inversion formula gets modified to,

\be \label{Winvspin}
\mw^{\text{spin}}_{n-m,m}=\int_{\Lambda_0}^\infty \frac{ds_1}{2\pi s_1^{2n+m+1}}\Phi(s_1)\sum_i \beta_i \sum_{\ell=0}^\infty (2\ell+2\a)\,a^i_\ell(s_1)\,\hat{\mathcal{B}}_{n,m}^{(\ell),i}(s_1),
\ee

with 
\bea \label{bellspin}
&&\hat{\mathcal{B}}_{n,m}^{\ell,i}(s_1)=2\sum_{j=0}^{m}\frac{(-1)^{1-j+m}p_{\ell}^{(j),i}\left(1\right) \left(4 \right)^{j}(3 j-m-2n)\Gamma(n-j)}{ j!(m-j)!\Gamma(n-m+1)}\, \qquad p_\ell^{(j),i}(1):=\left.\frac{\partial^j f_i\left(d^{\ell}_{m,n}(\sqrt{z})\right)}{\partial z^j}\right|_{z=1},\nn
\eea 

The locality constraints are now modified to be, 
\be \label{Nulldefspin}
\mw^{\text{spin}}_{n-m,m}=0,\quad\text{for}\quad n<m.
\ee

\subsection{$PB_C$ constraints}\label{positivity}

In order to get bounds, we would like to show that the expression on the RHS of \eqref{Winv} (and also \eqref{Winvspin}) or a linear combination of them must be of a definite sign. We identify the main characters of this analysis. 
\begin{itemize}
	\item Unitarity translates to positivity conditions on the spectral functions $\{a_\ell\}$ as reviewed in appendix \ref{uniconst}.
	
	\item We have $\x_0\ge 1$ for the entire range of $s_1$ integration in the inversion formula while $\x_0= 1$ identically for massless scalars (\eqref{bellmassive}). The functions $p_\ell^{(j)}(\x_0)$ are positive due to the fact that  the Gegenbauer polynomials, and its derivatives,  are positive for arguments larger than unity.
\end{itemize}

From these conditions, explicitly it follows\footnote{These conditions can also be shown to arise from the $TR_U$ inequalities discussed next, on Taylor expanding those conditions around $a=0$ \cite{Sinha:2020win}.} 
\be 
\mw_{n,0}\ge 0.
\ee 

More generally, this positivity property does not hold because  $\mb_{n,m}^{(\ell)}(s_1)$ (see \eqref{Winv}) control the sign of any term in $\ell$-expansion of the inversion formula. In that case one can ask whether taking suitable linear combinations of  $\mb_{n,m}^{(\ell)}(s_1)$s can restore the positivity. The answer turns out to be yes and one obtains \cite{Sinha:2020win}
\bea \label{Positivity}
\sum_{r=0}^m \chi_n^{(r,m)}(\Lambda_0^2)\mw_{n-r,r}&\ge& 0\nonumber\\
0\le \mw_{n,0}&\le& \frac{\mw_{n-1,0}}{(\Lambda_0)^2}\,,  ~~~~~~~~~~~~~~~~~~~~~~~~~~~~~~~n\ge2\,.
\eea The coefficient functions $\{\chi_{n}^{(r,m)}(\Lambda_0^2)\}$ satisfy the recursion relation:

\begin{equation}\label{chi}
	\begin{split}
		\chi_n^{(m,m)}(\Lambda_0^2)=1,\qquad
		\chi_n^{(r,m)}(\Lambda_0^2)=\sum_{j=r+1}^{m}(-1)^{j+r+1}\chi_n^{(j,m)}(\Lambda_0^2)\, \frac{\mathfrak{U}_{n,j,r}^{(\a)} (\Lambda_0^2)}{\mathfrak{U}_{n,r,r}^{(\a)}(\Lambda_0^2)}
	\end{split}
\end{equation} with \be\label{frakU}\mathfrak{U}_{n,m,k}^{(\a)}(s_1)= \sum_{k=0}^m \frac{\sqrt{16\x_0}^k (\a)_k (m+2n-3j)\Gamma(n-j)\Gamma(2j-k)}{s_1^{m+2n} \Gamma(k) j!(m-j)!(j-k)!(n-m)!}.\ee The conditions \eqref{Positivity}  are the so-called \emph{Positivity conditions}. In short we will call them $PB_C$.

\subsubsection*{Massless spinning particles}
For massless spinning particles, the conditions for positivity are modified as follows.

\begin{itemize}
	\item Recall that based on the construction outlined in \ref{crampconstr}, our crossing symmetric amplitudes are linear combinations of helicity amplitudes. Therefore, the spectral functions need to appear in the right combinations amenable to positivity constraints. Unlike the scalar case, spectral functions of all the helicity amplitudes do not obey positivity conditions. Instead, they are constrained by unitarity considerations in a way that certain specific linear combinations are positive (See appendix \ref{uniconst} for details). In subsequent sections, we will see how it is achieved in our crossing symmetric amplitudes.
	
	\item For helicity amplitudes, we obtain Wigner-$d$ functions in the partial wave decomposition with the precise form given by \eqref{maspin} and \eqref{fidef}. One can check that for the relevant helicity amplitude basis we have, and the linear combination in which they appear for our crossing symmetric amplitudes, relevant linear combinations of Wigner-$d$ function and its derivatives are positive for $\xi_0=1$.     
\end{itemize} 

Considering these points, just like the scalar case, we would like to construct a linear combination of $W^{\text{spin}}_{n-m,m}$, which is positive. It will turn out that the scalar ansatz suffices for the cases we consider. The structural reason which allows us to do this will also become apparent as we work out the relevant examples in subsections  \ref{photonboundp} and \ref{gravitonboundp}.

\subsection{Typically-realness and Low Spin Dominance: $TR_U$}\label{tr}
Now we turn our attention to exploring the connection between typically real functions and the scattering amplitudes that were first unearthed in \cite{Raman:2021pkf}. The key observation leading to this connection is that the crossing-symmetric dispersive integral can be cast into Roberston integral representation, \eqref{RObertson1}. This leads to the discovery of the typical realness of $\mt$. 

\subsubsection{Robertson form of dispersion integral} Defining 
\begin{equation}\label{xidef}
	\x:=1+\frac{27a^2}{2 (s_1^\prime)^3}(a-s_1^\prime),\qquad
	d\m(\x):=\frac{\ma(\x,s_2(\x,a))\,d\x}{\int_{-1}^{1}d\x \ma(\x,s_2(\x,a))},
\end{equation}

\noindent one can formally cast the dispersion integral into  
\be 
\widetilde{\mt}(\tilde{z},a):=\frac{\mt(\tilde{z},a)-\a_0}{\frac{2}{\p}\int_{-1}^{1}d\x \ma(\x,s_2(\x,a))}=\int_{-1}^{1}d\m(\x)\,\frac{\tilde z}{1-2\x \tilde z+\tilde z^2}
\ee which is of the Robertson form \eqref{RObertson1}. We also need to establish analyticity property of  $\widetilde{\mm}(\tilde{z},a)$ in $\tilde{z}$ and the desired non-decreasing property of $\m(\x)$ over $\x\in[-1,1]$. The latter is satisfied as long as $\ma$ is non-negative. These properties are explored with respect to free parameter $a$, and there exists an interval of $a$, $[a_{min},a_{max}]$, where the desired properties are satisfied. Then,  we can infer  
\be\widetilde{\mt}(\tilde{z},a)\in TR,\quad \forall a\in [a_{min},a_{max}]. \ee
Now one can exploit various bounding properties of $TM$ functions to derive bounds on $\{\mw_{n,m}\}$. From the expression of the dispersion relation for spinning particles, it is clear that the similar argument goes through for the spinning case as well. This enables us to use Bieberbach-Rogosinski bounds \eqref{BRbounds} \cite{Raman:2021pkf}. In particular, for massless theories, this implies  
\be\label{alimitstru}
a\in \left[-\frac{M^2}{3},0\right) \cup \left(0, \frac{2M^2}{3}\right].
\ee 
Moreover in deriving this, we had assumed the positivity of the amplitude $A(s_1, s_2^{(+)}(s_1,a))$ which also puts constraints on the range of $a$. The resulting range of $a$ is an intersection  
\begin{eqnarray}\label{postruscalar}
	\left(a\in \left[-\frac{M^2}{3},0\right) \cup \left(0, \frac{2M^2}{3}\right]\right)~~ \cap ~~ \left(a ~~\forall~~ A(s_1,; s_2^{(+)}(s_1,a) \geq 0\right).
\end{eqnarray}
We can demand a very strong constraint that each term in the partial wave decomposition \eqref{crossdispexp} be positive.  The Gegenbauer polynomial functions are positive for $\cos \theta = \sqrt{\frac{s_1+3a}{s_1-a}}\geq 1$. This leads to the constraint $a\in [0, M^2]$ with the upper limit coming from the fact that $a<s_1$ for real $\cos\theta$. This combined with \eqref{postruscalar} gives us the bound
$$0\leq a^{scalar}\leq \frac{2M^2}{3}\,.$$

\subsubsection{ Positivity and Low-Spin Dominance(LSD): Massless scalar EFT }
\label{plsdscalar}
 The analysis we have presented so far only requires   positivity of the absorptive part as a whole i.e $\mathcal{A}(s_1',s_2^{+}(s_1',a))\ge 0$. We imposed the positivity of each term in \eqref{crossdispexp}, spin by spin, which of course, guarantees the positivity of the total absorptive part. In particular, in this way of imposing positivity we demanded the positivity of Gegenbauer polynomials $C_J^{(\a)}(\sqrt{\xi(s'_1, \, a)})$, which led to the constraint $\sqrt{\xi(s'_1, \, a)}>1$ in the previous subsection, and the positivity of the partial wave amplitudes $a_J$ following from the unitarity. However, this is a rather weak condition on the partial wave amplitudes. The dynamical consequence of locality captured by the locality constraints \eqref{Nulldef} has not been considered. It is quite natural to expect that locality constraints result in relative magnitudes of the partial amplitudes such that the positivity of $\ma$ can still be satisfied for $ \sqrt{\x(s'_1,a)}<1$.

Let us illustrate this with the case of massless scalar EFT. We begin with the dispersion relation \eqref{crossdisp} and add 
\be N_c :=- \sum_{n<m\atop m\ge 2} c_{n,m} \mathcal{W}_{n-m,m} a^{2n+m-3} y \ee
Here  $\{c_{n,m}\}$ are arbitrary weights.   Using \eqref{Winv}, we obtain

\bea\label{scalarnc}
& &\mathcal{M}(s_i,a)+N_c \nonumber \\
& &=\a_0+\int_{M^2}^{\infty}\frac{d s_1'}{\pi s_1'} \sum_{\substack{J\ge 0\\ J\, \text{even} }}(2J+1)a_{J}(s_1')\left[C_J^{(\frac{d-3}{2})}(\sqrt{\xi(s_1', a)})- \sum_{{n<m}\atop{m\ge 2}} c_{n,m} \mathcal{\hat{B}}_{n,m}^{J} \frac{ a^{2n+m}  H(a;s_i)}{2(s_1^{'})^{2n+m} H(s_1',s_i)}\right]H(s_1',s_i) \,\nonumber\\
\eea
Where $\sqrt{\xi}= 1+\frac{2 s_2^{+}}{s_1^\prime} = \sqrt{\frac{s_1'+3 a}{s_1'-a}}$ and we have used the fact that $H(a;s_i)=-\frac{y}{a^3}$. We also have the following crucial difference from the massive scalar EFT expression in \eqref{bellmassive}
\begin{eqnarray}\label{bellmassless}
	\mathcal{\hat{B}}_{n,m}^{J}=2\sum_{J=0}^{m}\frac{(-1)^{1-J+m}p_{J}^{(J)}\left(1\right) \left(4 \right)^{J}(3 J-m-2n)\Gamma(n-J)}{ J!(m-J)!\Gamma(n-m+1)}\,
\end{eqnarray}
i.e $\xi_0=1$. The locality constraints \eqref{Nulldef} ensure that $N_c=0$. Thus adding this to $\mm(s_i, a)$ does not change the amplitude. But now we can analyze the consequence of the locality constraints inside the dispersive representation. In fact, we now have the equivalent dispersive representation 
\be 
\mm(s_i, a)=\a_0+\frac{1}{\p}\int_{M^2}^\infty \frac{d s'_1}{s'_1}\, {\ma_L}(s'_1, a)\, H(s'_1, \, s_i),
\ee 
with 
\be \label{ALdef}
{\ma_L}(s'_1, a):= \sum_{\substack{J\ge 0\\ J\, \text{even} }}(2J+1)a_{J}(s_1')\left[C_J^{(\frac{d-3}{2})}(\sqrt{\xi(s_1', a)})- \sum_{{n<m}\atop{m\ge 2}} c_{n,m} \mathcal{\hat{B}}_{n,m}^{J} \frac{ a^{2n+m}  H(a;s_i)}{2(s_1^{'})^{2n+m} H(s_1',s_i)}\right].
\ee Let us call $\ma_L$  \emph{local absorptive part}. For the purpose of our analysis, we now impose the \emph{local positivity} condition as 
\be \label{Lpos}
\ma_L(s'_1, a)\ge 0,\qquad \forall~~s'_1\ge M^2. 
\ee  This condition will result into a new range of validity for $a$. Let us analyze below how this range can be found out. It is worth emphasizing that this condition is \emph{equivalent} to the usual positivity condition $\ma(s'_1,\, a)\ge 0$ when restricted to the to the subspace $N_c=0$ in the space of partial wave amplitudes $\{a_J\}$. 

Since we will eventually study the Wilson coefficient expansion as a Laurent series about $x,y=0$, we analyze $\ma_L(s'_1, a)$  in a low energy expansion about $x=0$. In order to do so, we can replace $s_1'\rightarrow a\frac{\xi^2+3}{\xi^2-1}$ in $\ma_L$, \eqref{ALdef},  and write it as an expansion about $x=0$. Using
\be
\frac{H(a;s_i)}{H(s_1';s_i)}= \frac{(\xi^2+3)^3} {(\xi^2-9)(\xi^2-1)^2} + O(x)
\ee into \eqref{ALdef}, to leading order in $x$, the local positivity requirement becomes 
\be \label{locposlead}
{\ma_L}(s'_1, a):= \sum_{\substack{J\ge 0\\ J\, \text{even} }} (2J+1)a_{J}(s_1')\left[C_J^{(\frac{d-3}{2})}(\sqrt{\xi(s_1', a)})
- \sum_{{n<m}\atop{m\ge 2}} c_{n,m} \mathcal{\hat{B}}_{n,m}^{J} \frac{ (\xi^2-1)^{2n+m-2}}{2(\xi^2-9)(\xi^2+3)^{2n+m-3} }\right]\ge 0.
\ee 
Observe that this leading contribution does not depend \emph{explicitly} on $a$ and is purely a function\footnote{{{ We note that it seems like the denominator has a pole at $\x=3$, but this value is never attained since, from the analyticity requirement of $\mm$, $a<\frac{2M^2}{3}$, we have $\x^2<3$. }}} of $\xi$.  
We can then sum over $n,m$ to $2n+m\le k$, and find the smallest solution $\xi_{min}$ such that $0<\xi_{min}<\xi <1$ for which we can find a set of $c_{n,m}$'s such that \eqref{Lpos} is satisfied.  We can in turn use this value to determine an the new range of $a$ corresponding to the local positivity condition, \be \sqrt{\frac{s_1'+3a}{s_1'-a}}> \xi_{min} \, \implies \,  \frac{(\xi_{min})^2-1}{(\xi_{min})^2+3} M^2\le a.\ee Since, $0<\xi_{min}<1$ the lower bound is stronger than $0<a$ but also weaker than $-\frac{M^2}{3}<a$. We can now combine this with \eqref{alimitstru} to have
\be  \frac{(\xi_{min})^2-1}{(\xi_{min})^2+3} M^2\le a \le \frac{2M^2}{3}. \ee 


The above exercise leads us to $\xi_{min}=0.593$ for the scalar EFT when we consider all locality constraints up to $k=21$\footnote{{{In order to obtain this numerical coefficient, we have used linear programming in Mathematica with 1700 digits of precision to find solutions to the system of inequalities \eqref{locposlead} for $k\leq 21$ while varying $\xi$ in steps of $0.048$ from $\xi_{min}=0.59329$ to $\xi_{max}=2.99$ and spin $J$ from $J=0$ to $J_{max}=56$. The value of $\xi_{min}$ is determined by the lowest value of $\xi$ for which the system of inequalities \eqref{locposlead} have a solution such that not all $c_{n,m}= 0~ \&~ c_{n,m}>-\infty$.}}} This gives the following:
\bea \label{rangeofasc}
{\bf Scalar:} -0.1933 M^2 < a^{scalar} < \frac{2 M^2}{3}\, ,
\eea
Note that the lower bound of $a$ has been modified from the case where we demanded the positivity of each partial wave without considering the locality constraints. From a physical perspective, the dynamical constraints of UV consistency on scalar IR EFT are responsible for lowering the bound on $a$ from the previous subsection.

Our findings are also indicative of the well-known phenomenon of {\it low spin dominance}(LSD), i.e. the higher spin partial wave amplitudes are suppressed. To be precise, we can calculate the $\xi_{min}$ in an alternative way. 
\begin{itemize}
	\item Consider \eqref{scalarnc} without the locality constraints, and instead, we truncate the sum over spin to some $J=J_c$ try to find $\xi_{min}$ demanding this finite sum be positive. This means we assume that the sign of the absorptive part in \eqref{scalarnc} does not change beyond a certain critical spin $J=J_c$ because of the smallness of $a_{J>J_c}(s_1)$. Therefore truncating the partial wave sum and doing the positivity analysis is justified. Formally,  the positivity condition for the truncated expression reads
	\be
	\sum_{\substack{J= 0\\ J\, \text{even} }}^{J_c}(2J+1)a_{J}(s_1')\, C_J^{(\frac{d-3}{2})}(\sqrt{\xi(s_1', a)})> 0,\qquad \forall~~\x> \xi_{min},\,\, \, s'_1> M^2. 
	\ee  
	We are also assuming that for $a_J\ne 0$ at least for one $J\in \{0,2,4,\cdots, J_c\}$. 
	\item Therefore, we look for the smallest simultaneous root $\x_{J_c}$ of the set 
	$$\left\{C_J^{(\frac{d-3}{2})}\left(\sqrt{\xi(s_1', a)}\right)\, \Big|\,J\in \{0,2,4,\cdots,J_c\}\right\}$$ such that 
	\be 
	C_J^{(\frac{d-3}{2})}\left(\sqrt{\xi(s_1', a)}\right)>0,\quad \forall~~ \xi>\xi_{J_c}\,,\, J\in \{0,2,4,\cdots,J_c\}.
	\ee 
	For a given $J_c$, this $\x_{J_c}$ is the $\xi_{min}$ that we considered before. In particular, we have that 
	$\x_{J_c}\to 1$ as $J_c\to \infty$, which is expected. Therefore, for the truncated set we consider, $1 \ge \xi > \xi_{J_c}$ ensures that the LHS is positive since we have assumed that the sign of the absorptive part doesn't change after $J_c$. 
	\item Combining this with $PB_C$, this constrains the range of $a$ to \be  \frac{(\xi^{(J_c)})^2-1}{(\xi^{(J_c)})^2+3} M^2\le a \le \frac{2 M^2}{3} .\ee The first few values after rationalising to agree with 2 significant digits are:
	\begin{center}
		\begin{tabular}{ |c| c| }
			\hline
			$J_c$ & Scalar \\ 
			\hline
			2 & $-0.2 M^2 <a< \frac{2 M^2}{3}$ \\  
			\hline
			3 &$-0.69 M^2 <a< \frac{2 M^2}{3}$  \\
			\hline
			4 &$-0.034 M^2 <a< \frac{2 M^2}{3}$ \\ 
			\hline
		\end{tabular}
	\end{center}
	\item We can see that the argument with locality constraints combined with the above analysis clearly indicates Spin-$2$ dominance for the scalar case. More precisely, we input locality constraints in estimating the range of $a$ in first part of this subsection (i.e in the analysis leading up to \eqref{rangeofasc}). Locality constraints can also be interpreted as constraints on allowed $a_J(s)$ for scalar EFTs. Thus the range of $a$, after including the locality constraints (i.e., considering the allowed space of scalar theories), approximately coincides with the range that we get from a completely different analysis without using the null constraints {\it and assuming} that higher scalar partial waves do not change the sign of the absorptive part after spin 2 (see 1st entry of the table above). This implies that UV consistency of scalar EFTs leads to spin 2 dominance.
\end{itemize}    

\subsubsection{Massive scalars}

For a massive theory, we can repeat the analysis of the previous subsection. We shall consider the case of the massive scalar with mass $m$ and $\mu=4 m^2$. This was already considered in \cite{Raman:2021pkf} where it was argued that the range of $a$ was $\frac{-M^2}{3}<a<\frac{2 M^2}{3}$ and bounds were obtained for various Wilson coefficients. We revisit this using our new method using the locality constraints.
The key changes are in the relation between $\xi$ and $s_1^{'},a$ which is given by $\xi = \xi_0 \sqrt{\frac{s_1^{'}+3 a}{s_1^{'}-a}}$ with $\xi_0=\frac{s_1^{'}}{s_1^{'}-\mu}>1$ and the locality constraints \eqref{bellmassive}:
\bea
\mathcal{B}^{\ell,i}_{n,m}(s_1)=2\sum_{j=0}^{m}\frac{(-1)^{1-j+m} p_{\ell}^{(j,i)}\left(\xi_{0}\right) \left(4 \xi_{0}\right)^{j}(3 j-m-2n)\Gamma(n-j)}{ j!(m-j)!\Gamma(n-m+1)}\,.
\eea

\noindent It can be easily checked that this gives 
\bea
\frac{a^{2n+m} H(a;s_i)}{ (s_1^{'})^{2n+m} H(s_1';s_i)}= \frac{(\xi^2-\xi_0^2)} {(\xi^2-9 \xi_0^2)(\xi^2+3 \xi_0^2)} +o(x) \, .\nonumber 
\eea
Proceeding with the analysis, it turns out that there are no solutions for any $\xi<1$. However since $\xi=1$ was used to obtain the previous range of $a$ namely $\frac{-M^2}{3}<a<\frac{2 M^2}{3}$ these do not give us a stronger range of $a$. Thus, we conclude that 

\begin{center}
\emph{there is no low spin dominance for the massive case.}
\end{center}
This justifies the results in \cite{Raman:2021pkf} and highlights a key difference between the massive and massless cases. We have carried out explicit checks using the pion S-matrices from the S-matrix bootstrap \cite{joaopion, aspion1, aspion2} which verifies this claim. 
\subsection{Bieberbach-Rogosinski bounds}\label{BRbounds}
We can expand $\widetilde{\mt}(\tilde z ,a)$ about $\tilde z=0$ by expanding the kernel $H(s_1,\tilde z)$
\bea
H(s_1,\tilde z)= \frac{27 a^2 \tilde z (2 s_1-3a)}{27 a^3 \tilde z-27 a^2 \tilde z s_1-(\tilde z-1)^2 s_1^3}=\sum_{n=0}^\infty \beta_n(a,s_1)\tilde z^n\,,
\eea
Comparing this with the low energy expansion of the amplitude  
\bea
\widetilde{\mathcal{T}}_0(\tilde z ,a)=\sum_{p,q=0}^\infty \mathcal{W}_{p,q} x^p y^q =\sum_{n=0}^\infty  a^{2n} \alpha_n(a) \tilde z^n\nonumber 
\eea
after rewriting in-terms of $\tilde z$ using $x=-\frac{-27 a^3 \tilde z}{(1-\tilde z)^2 }$ and $y=-\frac{-27 a^2 \tilde z}{(1-\tilde z)^2 }$ gives: 
\bea
a^{2n}\alpha_n(a)&=&\frac{1}{\pi}\int_{M^2}^\infty \frac{ds_1'}{s_1'}                                          {\mathcal A}(s_1';s_2^+(s_1',a))\beta_n(a,s_1')\,, \nonumber \\
\text{with} ~ \alpha_p(a)&=&\sum_{n=0}^p \sum_{m=0}^n \mathcal{W}_{n-m,m} a^{2n+m-2p}(-27)^n\frac{\Gamma(n+p)}{\Gamma(2n)(p-n)!}\,,\quad p\geq 1\,.
\eea
In particular we have $\mathcal{W}_{0,0}= \alpha_0$ and 
$ a^{2}\alpha_1(a)=\frac{1}{\pi}\int_{M^2}^\infty \frac{ds_1'}{s_1'} \mathcal {A}(s_1';s_2^+(s_1',a))\beta_1 (a,s_1')\,.$
Note that since $\beta_1(a,s_1)=\frac{27 a^2}{s_1^3}(3a-2 s_1)$ and $a< \frac{2 M^2}{3}< \frac{2 s_1}{3}$ we have $\beta_1 <0$. Thus,
\be \label{w01}
\boxed{
	\alpha_1(a) <0}
\ee
\noindent We can apply the Bieberbach-Rogosinski inequalities on the coefficients of any typically-real function $f(z)=z+ a_2 z^2+a_3 z^3\cdots$ inside the unit disk following \cite{Raman:2021pkf}:
\bea \label{tru}
&& -\kappa_n \le \frac{\alpha_n(a) a^{2n}}{\alpha_1(a)a^2}
\le n 
\eea
with 
\bea
\kappa_n = n~ \text{for even}~ n, \qquad \kappa_n = \frac{\sin n~\vartheta_n}{\sin \vartheta_n} ~\text{for odd}~ n \,,
\eea 
where $\vartheta_n$ is the smallest solution of $\tan n \vartheta = n \tan \vartheta$ located in $( \frac{\pi}{n},~\frac{3 \pi}{2n} )$ for $n>3$ and $\kappa_3=1$, 
to constrain the Wilson coefficients in a low-energy expansion of the amplitude. We call these conditions \eqref{tru} collectively as $TR_U$.


\subsection{Summary of algorithm}
In this section, we summarise our algorithm.  The central characters of the story are the Wilson Coefficients, the  partial wave decomposition of the amplitude and the crossing symmetric kernel. Firstly, unitarity of the partial wave amplitude decomposition and positivity of the spherical harmonics and their derivatives for unphysical region of scattering, translate to positivity relations of the Wilson coefficients ( also known as the $PB_C$ conditions \cite{Raman:2021pkf}). To be more precise, unitarity demands that the imaginary part of the partial wave coefficients is positive. The Gegenbauer polynomials (or the relevant linear combination of the Wigner-$d$ functions for the spinning case) or its derivatives which appear in the partial wave expansion of the amplitude are positive in the unphysical region of scattering ($\cos \theta >1$). The Wilson coefficients themselves however might contain positive or negative sum of both the manifestly positive quantities. The $PB_C$ conditions are then linear combination of the Wilson coefficient expressions such that it is manifestly positive. Secondly, the fact that the amplitude is typically real for a range of the parameter $a$ then allows us to systematically obtain two-sided bounds on the Wilson coefficients (also known as the $TR_U$ conditions \cite{Raman:2021pkf}). This is because the typically real amplitude, as an expansion in $\tilde{z}$, has Bieberbach-Rogosinski bounds on the expansion coefficients \cite{Raman:2021pkf}. Thirdly, we use locality, which modifies the lower range of $a$ as obtained from $\cos \theta>1$ and $TR_U$. In the following sections, we systematically implement this algorithm to first review bounds on the scalar and then obtain the same for graviton and photon EFTs. { These steps are summarised in the flow chart below:}     

\begin{figure}[H]
	\begin{centering}
		\includegraphics[width=0.9\textwidth]{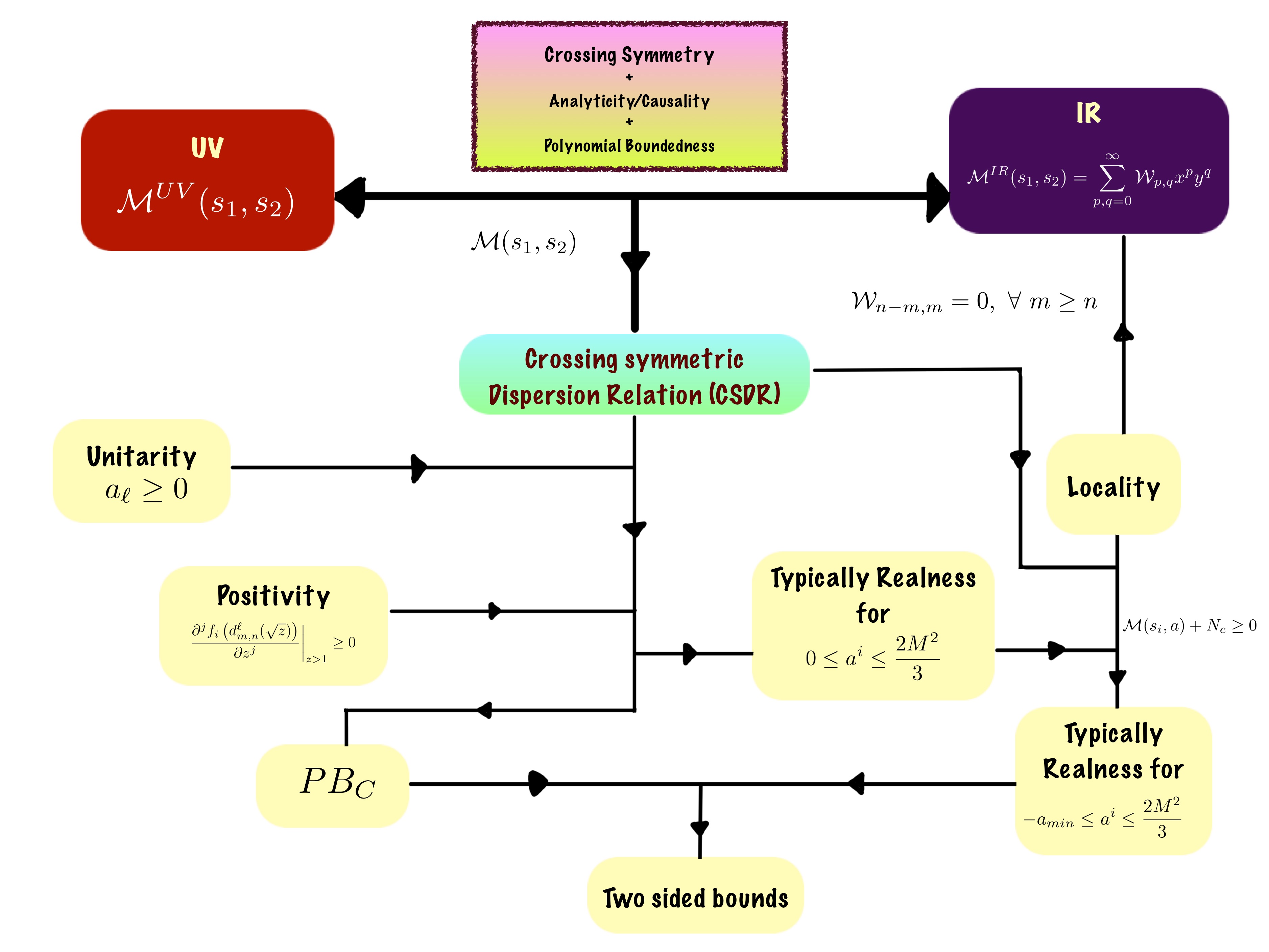}
		\caption{The above flowchart shows the steps involved in the GFT approach.}
	\end{centering}
\end{figure}

\section{Scalar bounds} \label{scalboundsec}
We now present the applications of formalism developed in the previous sections for various EFTs starting with the massless scalar case. The massive scalar was already addressed in \cite{Raman:2021pkf}.  Recall that the low energy EFT expansion of the amplitude takes the following form in terms of crossing-symmetric variables $x,\,y$
\bea \label{leexp}
F(s_1,s_2,s_3)&=& \sum_{p,q=0} \mathcal{W}_{p,q} x^p y^q .\, \nonumber
\eea
Starting with the dispersion relation is given by \eqref{crossdisp} and \eqref{crossdispexp} we can systematically derive the positivity bounds. We linearise the steps in the following manner which will serve as a guideline for us when evaluating EFTs with spinning particles.  
\begin{itemize}
	\item Unitarity implies that in the dispersion relation, the spectral functions $a_\ell$ are positive. 
	
	\item From the positivity of the Gegenbauer polynomials in the dispersion relation and the typical realness of the amplitude, we have the following range for $a$ \eqref{rangeofasc}, 
	
	$$-0.1933 M^2 <a^{scalar}< \frac{2 M^2}{3}\,.$$
\end{itemize}

\noindent From \eqref{Positivity}, \eqref{chi} and \eqref{frakU}, we get positivity conditions on linear combinations \footnote{This is well defined as $\mathcal{W}_{1,0} >0$ as argued in \eqref{wn0}.} of $w_{p,q}=\frac{\mathcal{W}_{p,q}}{\mathcal{W}_{1,0}}$. These set of conditions have been referred to in literature \cite{Raman:2021pkf} as $PB_C$ conditions. Since the amplitude is typically real, we also obtain Bieberbach Rogosinski bounds on $w_{p,q}$ from \eqref{tru} (also known in literature as $TR_U$). The algorithm will generically follow \cite{Raman:2021pkf} and we refer the interested reader to the details there. We begin by first noting that \eqref{w01} implies
\bea
\alpha_1(a)&=&-27 a^2(1+ a~ w_{01})\leq 0 \, \qquad\forall-0.1933 M^2 <a< \frac{2 M^2}{3}, \nonumber \\
&\implies& \frac{-3}{2M^2} \leq w_{01}\leq \frac{5.1733}{M^2}.
\eea
We note that this precisely agrees with the result in \cite{Caron-Huot:2020cmc} when we account for the difference in the definitions of $x$ due to the conventions. We have used $x= -s_1 s_2-s_2 s_3-s_3 s_1$ while \cite{Caron-Huot:2020cmc} used $s_1^2+s_2^2+s_3^2$ which gives $w_{01}=\frac{-\tilde{g}_3}{2}$ (since $s_1+s_2+s_3=0$) which translates to $-10.34<\tilde{g}_3<3$. In the second step, we solve for the set of inequalities derived from \eqref{Positivity}, \eqref{chi}, \eqref{frakU} and \eqref{tru} upto a certain value of $n=n_{max}$. Note that the conditions derived from \eqref{tru} are $a$ dependent. In order to efficiently solve the inequalities, we discretize the variable $a$ over the range specified in \eqref{rangeofasc} in steps of $\delta a$ and then solve for the resulting larger set of inequalities. We present our results for $n_{max}=5$ and $\delta a= \frac{1}{101}$ in the table below. We shall follow the convention of \cite{Raman:2021pkf} namely $M^2=\frac{8}{3}$ and re-write the results of \cite{Caron-Huot:2020cmc} in this convention for ease of comparison.

\begin{table}[h!]
	\centering
	$\begin{array}{|c|c|c|c|c|}
		\hline
		w_{p,q}=\frac{\mathcal{W}_{p,q}}{\mathcal{W}_{1,0}} & TR_U^{min} &SDPB^{min} & TR_U^{max} & SDPB^{max}\\
		\hline
		w_{01} & -0.5625 &-0.5625 & 1.939 &1.939\\
		\hline
		w_{11} & -0.1318 & -0.1318 & 0.219 &0.216\\
		\hline
		w_{02} & -0.1533 &-0.1268  &0.063 & 0.0296\\
		\hline
		w_{20} & 0 &  0 &0.140625 & 0.140625\\
		\hline
		w_{21} &-0.02595 &  -0.02595  &0.0513& 0.023\\
		\hline
		w_{12}& -0.061& -0.02789& 0.0275& 0.0111\\
		\hline
		w_{30} & 0 &  0 & 0.01977 & 0.01977\\
		\hline
		w_{03} & -0.011& -0.00156 & 0.017 & 0.0071 \\
		\hline
		w_{31}& -0.0047 &-0.0047 &0.0022 & 0.0022\\
		\hline
		w_{40}& 0& 0 &0.00278& 0.00278\\
		\hline
		w_{50}& 0&0& 0.00039 &0.00039\\
		\hline
	\end{array} $
	\caption{A comparison of the values obtained using our results $TR_U$ up to $n=5$ and $SDPB$ in \cite{Caron-Huot:2020cmc}}\label{mint}
\end{table}
We note that we get an excellent agreement with \cite{Caron-Huot:2020cmc}. In \cite{Raman:2021pkf} a comparison was done with the massive case and the results of \cite{Caron-Huot:2020cmc}, where it was noted that some of the results $TR_U$ were stronger. However, since \cite{Caron-Huot:2020cmc} considered only the massless case, so the above is more appropriate comparison as the results show. We attribute the discrepancy in the $w_{02},w_{21},w_{12},w_{03}$ values to the following:\\
\noindent We have not completely solved the Locality constraints $N_c$ but we have implicitly assumed that they are zero when we consider a low energy expansion \eqref{leexp}. However, each Wilson coefficient actually involves an infinite sum of locality constraints for instance $$\alpha_1(a) = -27 a^2 \mathcal{W}_{1,0}\left((1+ a w_{01})+\color{red}{ \sum_{n=1}^{\infty}{ w_{-n, n+1} ~a^{n-1}}}\right)\,.$$
\noindent Strictly speaking, what we have are bounds on these combinations and not the $w_{p,q}$'s themselves. In practice however one would have expected that since we obtained the range of $a$ by using some of the locality constraints this should have resolved the issue. However, we remind the reader that to get the bounds listed in the table we used $PB_c$ conditions in addition to the $TR_U$'s. The $PB_c$'s are linear conditions in the $w_{p,q}$'s which are {\it independent} of $a$.

\section{Photon bounds}\label{photonbound}

In this section we will constrain parity preserving photon EFTs. To be precise let us consider the following crossing symmetric helicity amplitudes from \eqref{f1} and \eqref{f2}. This set up applies almost identically to the graviton case also so we present here the general amplitude that we will be considering for later use. 

\begin{eqnarray}\label{photamp}
	M^{\alpha}(s_1,s_2, s_3)&=&F^{\alpha}_2(s_1,s_2,s_3)+ x F^{\alpha}_1(s_1,s_2,s_3) \nonumber\\
	&=& (\mt^{\alpha}_1(s_1,s_2,s_3)+ \mt^{\alpha}_3(s_1,s_2,s_3)+ \mt^{\alpha}_4(s_1,s_2,s_3)) + x_1 \mt^{\alpha}_2(s_1,s_2,s_3),
\end{eqnarray}
where $x_1 \in [-1,1]$ and $\alpha= \gamma, h$ for photons and gravitons respectively. The partial wave expansion of this amplitude is given by, 
\bea \label{abs}
&&(F^\alpha_2(s_1,s_2)+x F^\alpha_1(s_1,s_2)) = \sum_{J=0,2,4,\cdots} 16 \pi (2 J+1) (\rho^{1, \alpha}_J+x \rho^{2, \alpha}_J ) d^{J}_{0,0}(\theta) + \nonumber\\
&&\sum_{J=2,4,\cdots} 16 \pi (2 J+1) \rho^{3,\alpha}_J (d^J_{2,2}(\theta)+d^J_{2,-2}(\theta))+\sum_{J=3,5,\cdots} 16 \pi (2 J+1) \rho^{3,\alpha}_J (d^J_{2,2}(\theta)-d^J_{2,-2}(\theta)), \nonumber\\ 
\eea
$d^J_{m,m'}$ being the Wigner $d$-matrix defined in appendix (\ref{repsofwigd}). From the positivity of the spectral functions in these cases( see appendix (\ref{A.1})), the reader can understand that this combination is positive\footnote{We leave the analysis of $F_4,F_5$ which have denominators involving $s_i$ for later analysis. For $F_3$ since unitarity does not fix the sign of $\r^{\alpha,5}_J$ our methods are not applicable.}- since $\rho^{1, \alpha}_J \pm \rho^{2,  \alpha}_J \ge 0$ we have 
\bea
\rho^{1, \alpha}_J + x_1 \rho^{2, \alpha}_J = \underbrace{\frac{(1+x_1)}{2} (\rho^{1, \alpha}_J +  \rho^{2, \alpha}_J)}_{ \ge 0}+\underbrace{\frac{(1-x_1)}{2}  (\rho^{1, \alpha}_J -\rho^{2, \alpha}_J) }_{\ge 0} \ge 0,
\eea 
and $\rho^{3, \alpha}_J \geq 0$ from the analysis in appendix (\ref{A.1}). The crossing symmetric dispersion relation for the photon amplitude is given by
\bea\label{dispphoton} 
(F^\g_2(s_1,s_2)+x F^\g_1(s_1,s_2)) =\alpha^\g_0+\frac{1}{\pi}\int_{M^2}^{\infty}\frac{d s_1}{s'_1} \ma^\g\left(s_1^{\prime} ; s_2^{(+)}\left(s_1^{\prime} ,a\right)\right) H\left(s_1^{\prime} ;s_1, s_2, s_3\right),\nonumber\\
\eea
where $H\left(s_1^{\prime} ;s_1, s_2, s_3\right)$ is defined in \eqref{kernel} and the partial wave decomposition reads
\bea\label{crossdispexpphoton}
\ma^\g\left(s_1^{\prime} ; s_2^{(+)}\left(s_1^{\prime} ,a\right)\right)&=& \sum_{J=0,2,4,\cdots} 16 \pi (2 J+1) (\rho^{1, \g}_J+x_1 \rho^{2, \g}_J ) d^{J}_{0,0}(\theta) + \nonumber\\
&&\sum_{J=2, 4,\cdots} 16 \pi (2 J+1) \rho^{3,\g}_J (d^J_{2,2}(\theta)+d^J_{2,-2}(\theta)) + \nonumber\\
&&+\sum_{J=3,5,\cdots} 16 \pi (2 J+1) \rho^{3,\g}_J (d^J_{2,2}(\theta)-d^J_{2,-2}(\theta))\,, 
\eea  
where $\cos^2\theta=\xi(s'_1, a)=1+ 4 \left(\frac{a}{s'_1-a}\right)$. Note that due to the fact that we have written down crossing symmetric combination of helicity amplitudes, the crossing symmetric dispersion relation is essentially of the same structure as the scalar one. In writing the dispersion relations \eqref{crossdisp} and \eqref{abs}, we have used \eqref{maspin} and \eqref{fidef}. The low energy EFT expansion of the amplitude reads, 
\bea
F^\g_1(s_1,s_2)&=& \sum_{p,q}\mw^1_{p,q} x^p y^q,\qquad F^\g_2(s_1,s_2)= \sum_{p,q}\mw^2_{p,q} x^p y^q  \,.
\eea
For our analysis, we will be considering the most general Euler-Heisenberg type EFT for the photon
\bea
\mathcal{L}= -\frac{1}{4} F_{\mu \nu} F^{\mu \nu}+a_1 \left(F_{\mu \nu}F^{\mu \nu} \right)^2+a_2 (F_{\mu \nu}\tilde{F}^{\mu \nu})^2+\cdots
\eea
obtained starting with a UV complete theory such as QED and integrating out the other massive particles in the theory such as say the electron. To compare against the corresponding low energy EFT expansion coefficients of \cite{vichi} we can rewrite our EFT expansion in the form (see \eqref{EFTexpappphoton}), 
\bea
F_2(s_1,s_2,s_3)&=& 2 g_2 x- 3 g_3 y+ 2 (g_{4,1}+2g_{4,2})x^2+\cdots \\
F_1(s_1,s_2,s_3)&=& 2 f_2 x- f_3 y+ 4  f_4 x^2+\cdots
\eea
where the Wilson coefficients can be related to the EFT couplings such as $a_1 =\frac{f_2+g_2}{16},~a_2= \frac{f_2-g_2}{16}$ etc. 

\subsection{Wilson coefficients and Locality constraints: $PB^\g_C$}\label{photonboundp}

The local low energy expansion of the amplitude \eqref{photamp} can be written as 
\bea \label{lowexpa}
F^\g_2(s_1,s_2)+x_1 F^\g_1(s_1,s_2) &=& \sum_{p,q=0}^{\infty} \mathcal{W}^{(x)}_{p,q} x^p y^q \qquad= \sum_{p,q=0}^{\infty} \mathcal{W}^{(x)}_{p,q} x^{p+q} a^q, \eea
where we have used $a=y/x$ and $\mathcal{W}^{(x)}_{p,q}= \mathcal{W}^{2}_{p,q}+x_1 \mathcal{W}^{1}_{p,q}$. Just like in the scalar case, we would like to expand both the sides of the dispersion relation \eqref{dispphoton} to derive an expression for the locality constraints- recall that by incorporating crossing symmetry we have compromised on locality which serves as constraints in our formalism. In order to do so, we expand the kernel \eqref{kernel} and the partial wave  Wigner-$d$ functions in \eqref{dispphoton} about $a=0$ and compare powers on both sides. Note that for $a=0$, the Wigner-$d$ functions are Taylor expanded about $\xi_0=1$ (since the argument of the Wigner-$d$ functions are $\xi(s_1, a)=1+ 4 \left(\frac{a}{s_1-a}\right)$). We obtain 
\bea \label{photonbell}
\mathcal{W}^{(x_1)}_{n-m,m}&=&\int_{M^2}^{\infty}\frac{d s_1}{2\pi s_1^{2n+m+1}} \sum_{J=0,2,4,\cdots}(2J+1)a^{(1)}_J(s_1)\mathcal{G}_{n,m}^{J,1}\,\nonumber\\
&+&\int_{M^2}^{\infty}\frac{d s_1}{2\pi s_1^{2n+m+1}} \sum_{J=2,4,\cdots}(2J+1)a^{(2)}_J(s_1)\hat{\mathcal{G}}_{n,m}^{J,2}\,\nonumber\\
&+& \int_{M^2}^{\infty}\frac{d s_1}{2\pi s_1^{2n+m+1}}, \sum_{J=3,5,7,\cdots}(2J+1)a^{(3)}_J(s_1)\hat{\mathcal{G}}_{n,m}^{J,3}\,,\nonumber\\
\hat{\mathcal{G}}_{n,m}^{J,i}&=&2\sum_{j=0}^{m}\frac{(-1)^{1-j+m}q_{J}^{(j,i)}\left(1\right) \left(4 \right)^{j}(3 j-m-2n)\Gamma(n-j)}{ j!(m-j)!\Gamma(n-m+1)}\,.
\eea 
Here $a^{(1)}_J =\rho^{1, \g}_J+x \rho^{2, \g}_J$, $a^{(2)}_J=a^{(3)}_J=\rho^{3, \g}_J$ and $q_{J}^{(j,i)}(1)= \frac{\partial^j f^{(i)}(\sqrt{\xi})}{\partial \xi^j}\bigg|_{\xi=\xi_0=1}$ with $f^{(1)}=d^J_{0,0},f^{(2)}=d^J_{2,2} + d^J_{2,-2},~f^{(3)}=d^J_{2,2} -d^J_{2,-2}$. For convenience, in order to compute the partial derivatives $q_{J}^{(j,i)}(1)$, we use the representation of the Wigner-$d$ functions in terms of Hypergeometric functions, given in \eqref{wignerphoton}. The locality constraints for this case are therefore given by 

\bea\label{nullphoton}
\mw^{(x_1)}_{n-m,m}=0~~~ \forall n<m \,.
\eea

We would also like to construct the spinning equivalent of $PB_C$ as done for scalars. To this end, we note that spectral functions $a_J^{(i)} \ge 0$ by unitarity  and $\{d^J_{0,0}(\theta), d^J_{m,m}(\theta) \pm d^J_{m,-m}(\theta)\}$ are positive for all $J$ whenever cosine of the argument is bigger than or equal to $1$ (see \ref{repsofwigd}) i.e., $q_{J}^{(j,i)}(1)>0$ for all $J,j =0,1,2,\cdots$ and $i=1,2,3$. In particular we have
\bea \label{wn0}
\mathcal{W}^{(x_1)}_{n,0}&=& \int_{M^2}^{\infty}\frac{d s_1}{2\pi s_1^{2n+m+1}} \sum_{J=0,2,4,\cdots}(2J+1)a^{(1)}_J(s_1)q_{J}^{(0,1)}(1)\,\nonumber\\
&+&\int_{M^2}^{\infty}\frac{d s_1}{2\pi s_1^{2n+m+1}} \sum_{J=2,4,\cdots}(2J+1)a^{(2)}_J(s_1) q_{J}^{(0,2)}(1)\,\nonumber\\
&+& \int_{M^2}^{\infty}\frac{d s_1}{2\pi s_1^{2n+m+1}} \sum_{J=3,5,7,\cdots}(2J+1)a^{(3)}_J(s_1) q_{J}^{(0,3)}(1)  \ge 0 \, .
\eea
More generally in \eqref{photonbell} the sign of any term in $J$ expansion is controlled by $\mathcal{G}_{n,m}^{J,i}(s_1)$ alone. We can thus take linear combinations of various $\mathcal{W}^{(x)}_{p,q}$'s which is a positive sum of $\{d^J_{0,0}(\theta), d^J_{m,m}(\theta) \pm d^J_{m,-m}(\theta)\}$ and their derivatives and hence is manifestly positive. This gives us the {\it Positivity conditions}:
\bea \label{pbcphot}
\sum_{r=0}^m \chi_n^{(r,m)} (M^2) \mathcal{W}^{(x)}_{n-r,r} &\ge& 0,\qquad 
0\le \mathcal{W}^{(x)}_{n,0}\le \frac{1}{\left(M^2\right)^2} \mathcal{W}^{x}_{n-1,0}\,,  ~~~~~~~~~~~~~~n\ge2\,.
\eea
The $\chi_n^{(r,m)} (M^2)$ satisfy the recursion relation:
\bea
\chi_n^{(m,m)} (M^2) &=&1,\nonumber\\
\chi_n^{(r,m)} (M^2) &=& \sum_{j=r+1}^m (-1)^{j+r+1} \chi_n^{(j,m)} \frac{{\mathscr U}_{n,j,r} (M^2)}{{\mathscr U}_{n,r,r}(M^2)},
\eea
with 
 \be 
 \scalebox{0.95}{$
 {\mathscr U}_{n,m,k}= -\frac{4^k \Gamma \left(\frac{1}{2} (2 k+1)\right) (3 k-m-2 n) \Gamma (n-k) \text{s1}^{-m-2 n} \, _4F_3\left(\frac{k}{2}+\frac{1}{2},\frac{k}{2},k-m,k-\frac{m}{3}-\frac{2 n}{3}+1;k+1,k-n+1,k-\frac{m}{3}-\frac{2 n}{3};4\right)}{\sqrt{\pi } \Gamma (k+1) \Gamma (-k+m+1) \Gamma (-m+n+1)}.$}
 \ee 
  We call the conditions \eqref{pbcphot} collectively as $PB^\g_C$\footnote{ We have used the closed form  expression for ${\mathscr U}_{n,m,k}$ in \cite{Sinha:2020win} }. We note here that the positivity conditions $PB^\g_C$ in this case are identical to the ones for massive scalar in \cite{Sinha:2020win,Raman:2021pkf}.  This is simply a consequence of the fact that \eqref{photonbell} is the sum of three scalar like terms each of which has an identical structure except for the functions $q_{J}^{(j,i)}(1)$ in $\mathcal{G}_{n,m}^{J,i}(s_1)$. Since we do not use the explicit form of the function $q_{J}^{(j,i)}(1)$ anywhere in the argument above but just the fact that its positive, the result simply follows. Note that these positive combinations are certainly not unique and one can definitely find different linear combinations which may result in a stronger constraint however we will not pursue that here. \\

\subsection{Typical Realness and Low Spin Dominance: $TR^\g_U$}\label{photonboundTRLSD}

In this section we try to get $a$ range of a using positivity of the amplitude coupled with locality constraints and typical realness of the amplitude. 
The analysis for typical realness is straightforward. From, section \ref{tr} and the discussion regarding the Robertson form of the integral (see the discussion around \eqref{alimitstru}), the limit of $a$ is given by, 

\begin{eqnarray}\label{postruphoton}
	\left(a\in \left[-\frac{M^2}{3},0\right) \cup \left(0, \frac{2M^2}{3}\right]\right)~~ \cap ~~ \left(a ~~\forall~~ A(s_1,; s_2^{(+)}(s_1,a) \geq 0\right). \nonumber\\
\end{eqnarray}

We can assume the positivity of the absorptive part as a whole i.e $\mathcal{A}(s_1',s_2^{+}(s_1',a))\ge 0$ in \eqref{dispphoton} which gives us the range of $a$ as $a \in (0, \frac{2M^2}{3}]$ . This is obtained by considering the positivity of each term in the spin sum which of course guarantees the positivity of the full absorptive part though it maybe too strong (similar to the massless scalar case). A more careful analysis requires us to use the locality constraints $N^\g_c =- \sum_{n<m\atop m\ge 2} c_{n,m} W^{x_1}_{n-m,m} a^{2n+m-3} y $ to it with arbitrary weights $c_{n,m}$'s \eqref{nullphoton}. 
\bea\label{ineqphoton}
& &\mathcal{M}(s_i,a)+N_c = \nonumber \\
& &\int_{M^2}^{\infty}\frac{d s_1}{2\pi s_1} \sum_{J=0,2,4,\cdots}(2J+1)a^{(1)}_J(s_1)\left[f^{(1)}_J(\xi)- \sum_{{n<m}\atop{m\ge 2}} c_{n,m} \mathcal{\hat{G}}_{n,m}^{J,1} \frac{ a^{2n+m}  H(a;s_i)}{(s_1^{'})^{2n+m} H(s_1',s_i)}\right]H(s_1',s_i) \,\nonumber\\
&+&\int_{M^2}^{\infty}\frac{d s_1}{2\pi s_1} \sum_{J=2,4,\cdots}(2J+1)a^{(2)}_J(s_1)\left[f^{(2)}_J(\xi)- \sum_{{n<m}\atop{m\ge 2}} c_{n,m} \mathcal{\hat{G}}_{n,m}^{J,2} \frac{ a^{2n+m}  H(a;s_i)}{(s_1^{'})^{2n+m} H(s_1',s_i)}\right]H(s_1',s_i)\,\nonumber\\
&+& \int_{M^2}^{\infty}\frac{d s_1}{2\pi s_1} \sum_{J=3,5,7,\cdots}(2J+1)a^{(3)}_J(s_1)\left[f^{(3)}_J(\xi)- \sum_{{n<m}\atop{m\ge 2}} c_{n,m} \mathcal{\hat{G}}_{n,m}^{J,3} \frac{ a^{2n+m}  H(a;s_i)}{(s_1^{'})^{2n+m} H(s_1',s_i)}\right]H(s_1',s_i)\,\geq 0,\nonumber\\
\eea
where $\mathcal{\hat{G}}_{n,m}^{J,i}$ has been defined in \eqref{photonbell}. Note that this is similar to the equation we had for the scalar case \eqref{ineqscalar} and therefore the analysis is also similar. The algorithm is very similar with the only difference is that the $\xi_{min}$ is determined by the maximum lower bound obtained by considering the positivity of three different classes of inequalities- the coefficients of $a_J^{(i)}$ for $i=1,2,3$. This exercise, outlined in detail in subsubsection \ref{plsdscalar}, leads us to $\xi^\g_{min}=0.723$ for the photon EFT when we consider all locality constraints up to $2n+m\leq 12$ and $J_{max}\leq 20$. Using the relation $\frac{\xi_{min}^2-1}{\xi_{min}^2+3}M^2<a<M^2$ and \eqref{postruphoton}, we obtain,  

\bea \label{rangeofaph}
{\bf Photon:} -0.1355 M^2< a^\g < \frac{2 M^2}{3}\,.\nonumber \\ \eea

\noindent Similar to the scalar case, for the photon also we discover the phenomenon of {\it Low Spin Dominance} (LSD). Consider the set of equations \eqref{ineqphoton} without the locality constraints but with a maximal spin cut-off $J=J_c$. If we assume that the absorptive part is unaffected by the contributions from partial waves after $J>J_c$, the positivity of this finite sum of partial waves leads us to an independent derivation of $\xi^\g_{min}$. It suffices to choose the largest root $\xi^\g(J) \le 1$ of the set of polynomials $\{ d^J_{0,0}, d^J_{2,2}\pm d^J_{2,-2}\}$ for a fixed $J\leq J_c$ to ensure the positivity of the corresponding term in \eqref{abs}. We observe the following table. 

\begin{center}\label{LSDphoton}
	\begin{tabular}{ |c| c|}
		\hline
		$J_c$ & Photon \\ 
		\hline
		2  & $-0.2 M^2 <a^\g< \frac{2 M^2}{3}$ \\  
		\hline
		3   &$-0.143 M^2 <a^\g< \frac{2 M^2}{3}$ \\
		\hline
		4 & $-0.069 M^2 <a^\g< \frac{2 M^2}{3}$\\ 
		\hline
	\end{tabular}
\end{center}
We can see that the argument with Locality constraints combined with the above clearly indicates spin-$3$ dominance for the photon case. Therefore for this range of $a$, we can impose the Bieberbach Rogosinski bounds of subsection \ref{BRbounds} on the Wilson coefficients $\mw^{(x_1)}_{n-m,m}$ -these constraints are called $TR^\g_U$. In the next section we present the bounds obtained from $PB^\g_C$, $TR^\g_U$ and the corresponding range of $a$ \eqref{rangeofaph}.


\subsection{Bounds}
We now apply our formalism to bound Wilson coefficients in the Euler-Heisenberg type EFT for the photon. Recall that the low energy EFT expansion has the form, 
\bea
\mathcal{L}= \frac{-1}{4} F_{\mu \nu} F^{\mu \nu}+a_1 \left(F_{\mu \nu}F^{\mu \nu} \right)^2+a_2 (F_{\mu \nu}\tilde{F}^{\mu \nu})^2+\cdots
\eea
For such an EFT we have the following crossing symmetric S-matrices (see appendix \ref{eecbs}) , 
\begin{align}
F_1(s_1,s_2,s_3)&= 2 f_2 x-f_3 y+4 f_4 x^2-2f_5 xy +f_{6,1} y^2+8f_{6,2}x^3+\cdots \,\nonumber\\
F_2(s_1,s_2,s_3)&= 2 g_2 x -3 g_3 y +2 (g_{41}+2 g_{42})x^2+(-5 g_{5,1}-3 g_{5,2}) xy \\
&\hspace{4 cm} + 3 \left(g_{6,1}-g_{6,2}+g_{6,3}\right) y^2 + 2 g_{6,1}x^3 + \cdots  \,, \nonumber\\
\end{align}
where the Wilson coefficients can be related to the EFT couplings such as $a_1 =\frac{f_2+g_2}{16},~a_2= \frac{f_2-g_2}{16}$ etc. We begin by listing out the $PB^\g_c$ and $TR^\g_U$ conditions for $n=3$ (see \eqref{tru},  \eqref{pbcphot} and \eqref{rangeofaph}). The  $PB^\g_c$ conditions are, 
\bea \label{PBCphn3}
&&\frac{9 w^{(x_1)}_{20}}{4 M^4}+\frac{3 w^{(x_1)}_{11}}{2 M^2}+w^{(x_1)}_{02}\geq 0,~ \frac{5 w^{(x_1)}_{20}}{2 M^2}+w^{(x_1)}_{11}\geq 0,~ 0\leq w^{(x_1)}_{20}\leq \frac{1}{M^4},\nonumber\\
&&\nonumber\\
&&8 w^{(x_1)}_{03}+3 (4 w^{(x_1)}_{12}+6 w^{(x_1)}_{21}+9 w^{(x_1)}_{30})\geq 0,~4 w^{(x_1)}_{12}+14 w^{(x_1)}_{21}+33 w^{(x_1)}_{30}\geq 0, \nonumber\\
&&\nonumber\\
&&2 w^{(x_1)}_{21}+7 w^{(x_1)}_{30}\geq 0,\,0\leq w^{(x_1)}_{30}\leq w^{(x_1)}_{20}
\eea
while the $TR^\g_U$ conditions are,

\bea 
&&-2\leq\frac{a (2 w^{(x_1)}_{01}-27 a (a (a w^{(x_1)}_{02}+w^{(x_1)}_{11})+w^{(x_1)}_{20}))+2 w^{(x_1)}_{10}}{a w^{(x_1)}_{01}+1}\leq 2,\nonumber\\
&&-1\leq \frac{3 (a(9 a (a (a (27 a (a (w^{(x_1)}_{03} a+w^{(x_1)}_{12})+w^{(x_1)}_{21})-4 w^{(x_1)}_{02}+27 w^{(x_1)}_{30}-4 w^{(x_1)}_{11})-4 w^{(x_1)}_{20})+w^{(x_1)}_{01})+1)}{a w^{(x_1)}_{01}+1} \leq 3, \nonumber\\
\eea 
where as before we have used the notation $\frac{\mw^{(x_1)}_{p,q}}{\mw^{(x_1)}_{1,0}}=w^{(x_1)}_{pq}$ and the range of $a$ has been specified in \eqref{rangeofaph}. The coefficients $w^{(x_1)}_{ij}$ are related to the EFT expansion as follows , 
\bea 
w^{(x_1)}_{01}&=&\frac{-3 g_3 -x_1 f_3}{2g_2+2 x_1f_2},\qquad~~~~~~~~~~~~ w^{(x_1)}_{02}=\frac{3(g_{6,1}-g_{6,2}+g_{6,3}) +x_1 f_{6,1}}{2g_2+ 2x_1f_2} \nonumber\\
w^{(x_1)}_{20}&=& \frac{2(g_{4,1}+2g_{4,2})+x_1 4f_4}{2g_2+ 2x_1f_2},\qquad w^{(x_1)}_{11}=\frac{(-5 g_{5,1}-3 g_{5,2})-x_1 2f_5}{2g_2+ 2x_1f_2} \,,
\eea 
where, $\mw^{(x_1)}_{1,0}=2g_2+ 2x_1f_2$. Before solving these constraints and getting bounds, we want to point some salient features of our inequalities. The positivity of $\mathcal{W}_{1,0}$ \eqref{wn0} gives us:
\bea
g_2+x_1 f_2 \ge 0 \, , 
\eea
In particular this translates to $g_2 \pm f_2 \ge 0$ in other words $a_1, a_2 \ge 0$. After expanding $F_2+ x_1 F_1$ in $\tilde{z},a$ we can use relation \eqref{w01} which translates to the following:
\bea \label{w01p}
-27 a^2 (-2 g_2 -2 x_1 f_2+ 3 a g_3 + a x_1 f_3 ) <0 \,. \nonumber
\eea
Firstly we note that if $f_2=\pm g_2$ then, since the above relation has to hold for all $x_1 \in [-1,1]$ and all $-\frac{5 M^2}{37}<a< \frac{2 M^2}{3}$, we get $f_3 =\pm 3 g_3$, the reasoning is as follows. Suppose $f_2=\pm g_2$ then by looking at $x_1=\mp 1$ we get
\bea
a(3 g_3 \mp f_3) >0, ~ \forall -\frac{5 M^2}{37}<a< \frac{2 M^2}{3} \,. \nonumber
\eea
which gives us the result. In particular for the $f_2=g_2$ case we note that if we truncate to 6-derivatives there is no difference between the massless scalar case and this one since $F_1=F_2$. This gives us 
\be
\frac{-3.44}{M^2}< \frac{f_3}{f_2}<\frac{1}{M^2}
\ee
Secondly if $g_2 +x_1 f_2 \neq 0 $ then from \eqref{w01p} we have 
\be
\frac{-4.902}{M^2}< \frac{g_3+ x_1 \frac{f_3}{3}}{g_2 + x_1 f_2}<\frac{1}{M^2}
\ee

These are in good agreement with the results of \cite{vichi}. We can in fact use the above relations to get region plots as shown in the figure below. We have benchmarked where different theories lie in this allowed space of EFT's.
\begin{figure}[H]
	\begin{centering}
		\includegraphics[width=0.9\textwidth]{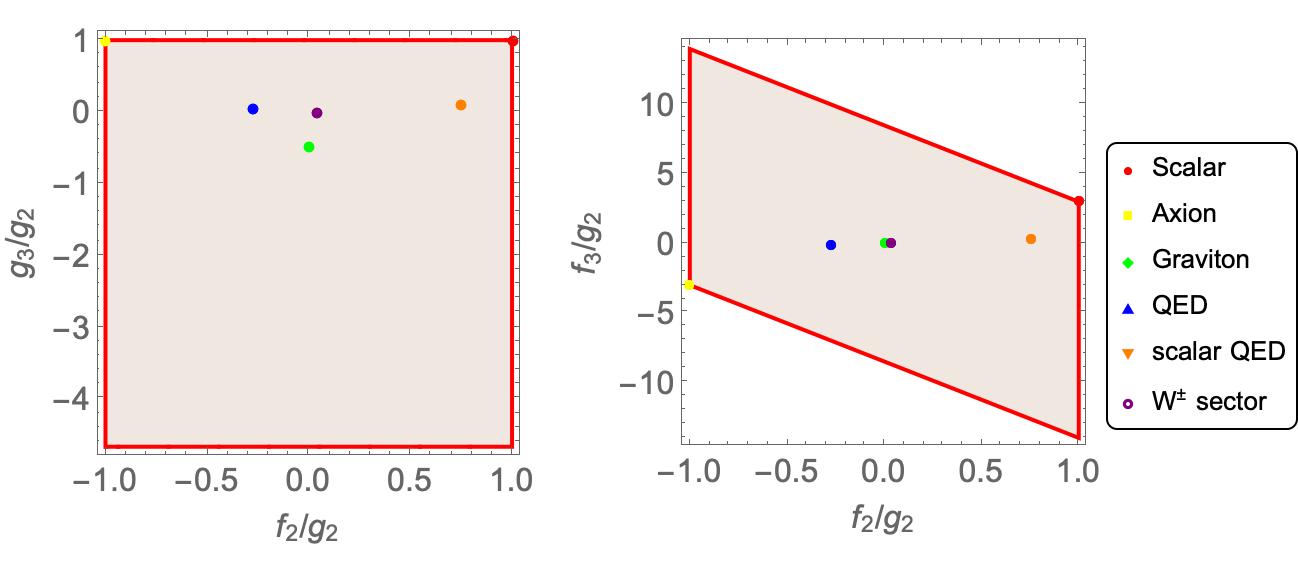}
		\vspace*{-30 pt}
		\caption{The allowed regions in the $\left(\frac{g_3}{g_2},\frac{f_3}{f_2}\right)$ vs $\frac{f_2}{g_2}$ space with scalar,axion,graviton,QED,scalar QED, $W^{\pm}$ sector benchmarked.}
	\end{centering}
\end{figure}
\noindent Furthermore, whenever we have $f_2=k g_2$ with $k \in [0,1]$ we can see the space of allowed theories as in this case by choosing a suitable $x_1$ one can make $g_2+x_1 f_2=0$. The plot is shown below. 
\begin{figure}[h!]
	\begin{centering}
		\includegraphics[width=0.5\textwidth]{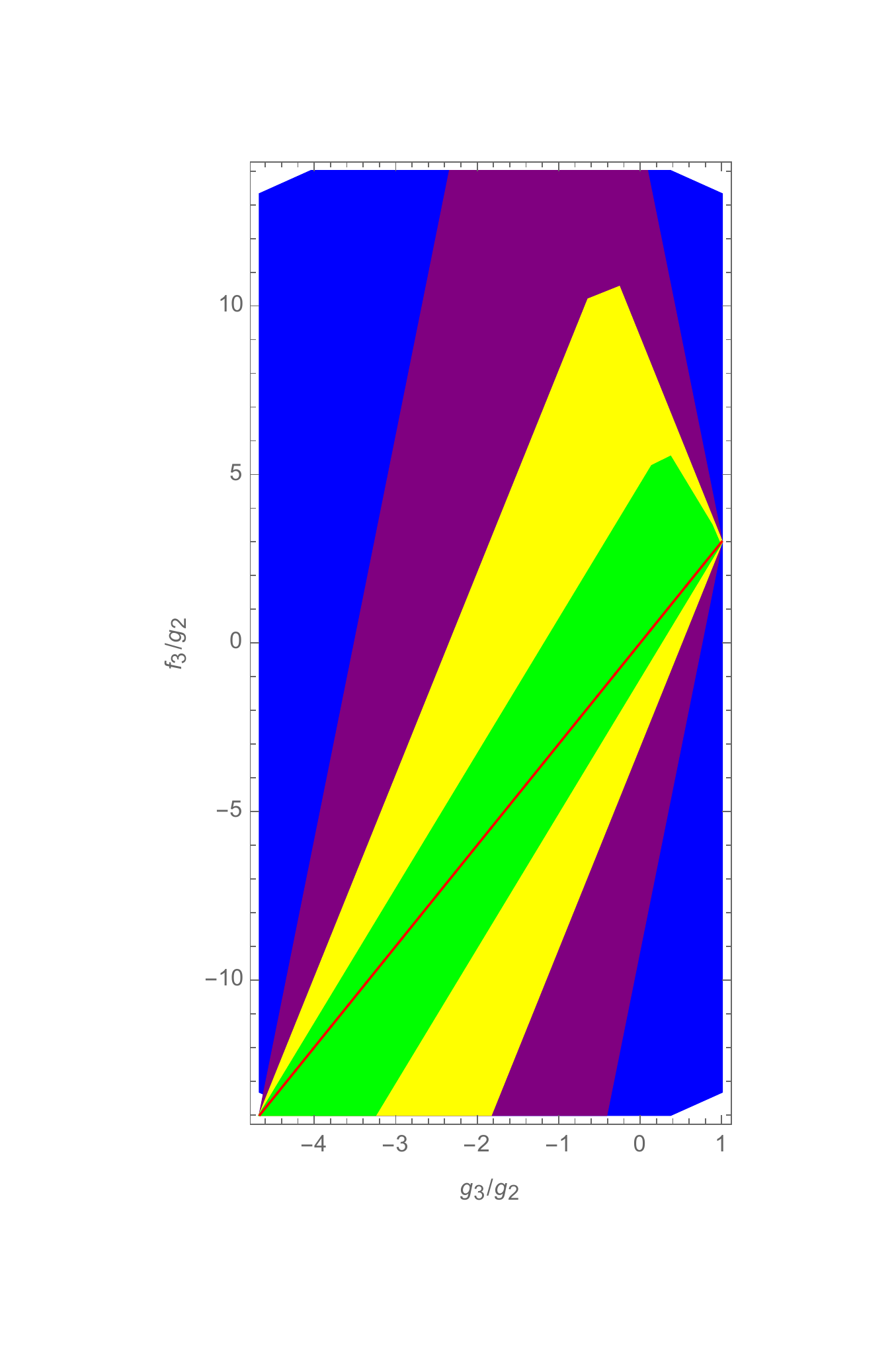}
		\vspace*{-20 pt}
		\caption{The space of allowed regions in the $\left(\frac{f_3}{f_2},\frac{g_3}{f_2}\right)$for $f_2=k g_2$ with $k=0,0.25.0.5,1$ corresponding to blue, purple,yellow,green and red respectively.}
	\end{centering}
\end{figure}
By working out the $n=3$ $PB^\g_C$ and $TR^\g_U$ conditions explicitly we obtain the following values for $w^{(x_1)}_{pq}$ listed in table \ref{mintph}. A comparative plot for the the first few higher derivative coefficients is given in figure \ref{w11w02}. As before in the Wilson coefficients $w^{x_1}_{pq}$, the region $f_2 = \pm g_2$ is special and must be treated with caution. From \eqref{PBCphn3}, it immediately follows that consistency of the equations for all $x_1 \in [-1,1]$, enforces the relations of the form $f_i = k \sum_j g_{i,j}$ for $i,J>2$ whenever $f_2=\pm k g_2$.     

\begin{table}[h!]
	\centering
	$\begin{array}{|c|c|c|}
		\hline
		w^{(x_1)}_{p,q}=\frac{\mathcal{W}^{(x_1)}_{p,q}}{\mathcal{W}^{(x_1)}_{1,0}} & TR_U^{min}  & TR_U^{max}\\
		\hline
		w^{(x_1)}_{01} &-1.5 & 7.353\\
		\hline
		w^{(x_1)}_{20} &0 & 1\\
		\hline
		w^{(x_1)}_{02} & -11.029 & 4.368\\
		\hline
		w^{(x_1)}_{11} &-2.5 & 6.353\\
		\hline
		w^{(x_1)}_{03} &-18.479 & 64.601\\
		\hline
		w^{(x_1)}_{12} &-84.255 & 15.980\\
		\hline
		w^{(x_1)}_{21} &-3.5 & 28.1121\\
		\hline
		w^{(x_1)}_{30} &0 & 1\\
		\hline
	\end{array} $
	\caption{A list of bounds obtained using our results $TR_U$ up to $n=3$ in the normalisation $M^2=1$. }\label{mintph}
\end{table}

\begin{figure}[t]
	\centering
\begin{subfigure} [b]{0.40\textwidth}
	\includegraphics[width=\textwidth]{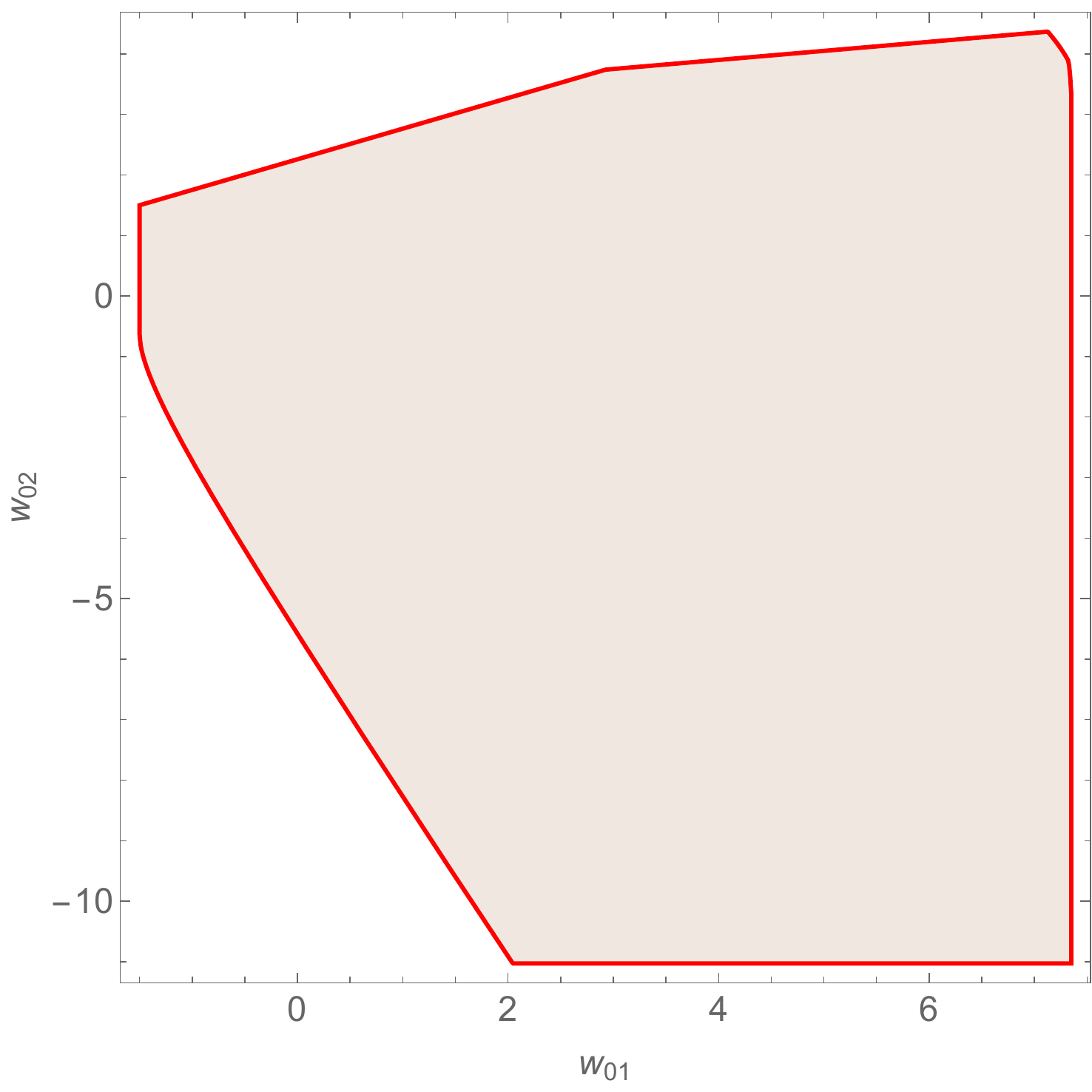}
	\caption{Allowed regions in $(w_{02}, w_{01})$ space.}
\end{subfigure}
\hfill
\begin{subfigure}[b]{0.40\textwidth}
\includegraphics[width=\textwidth]{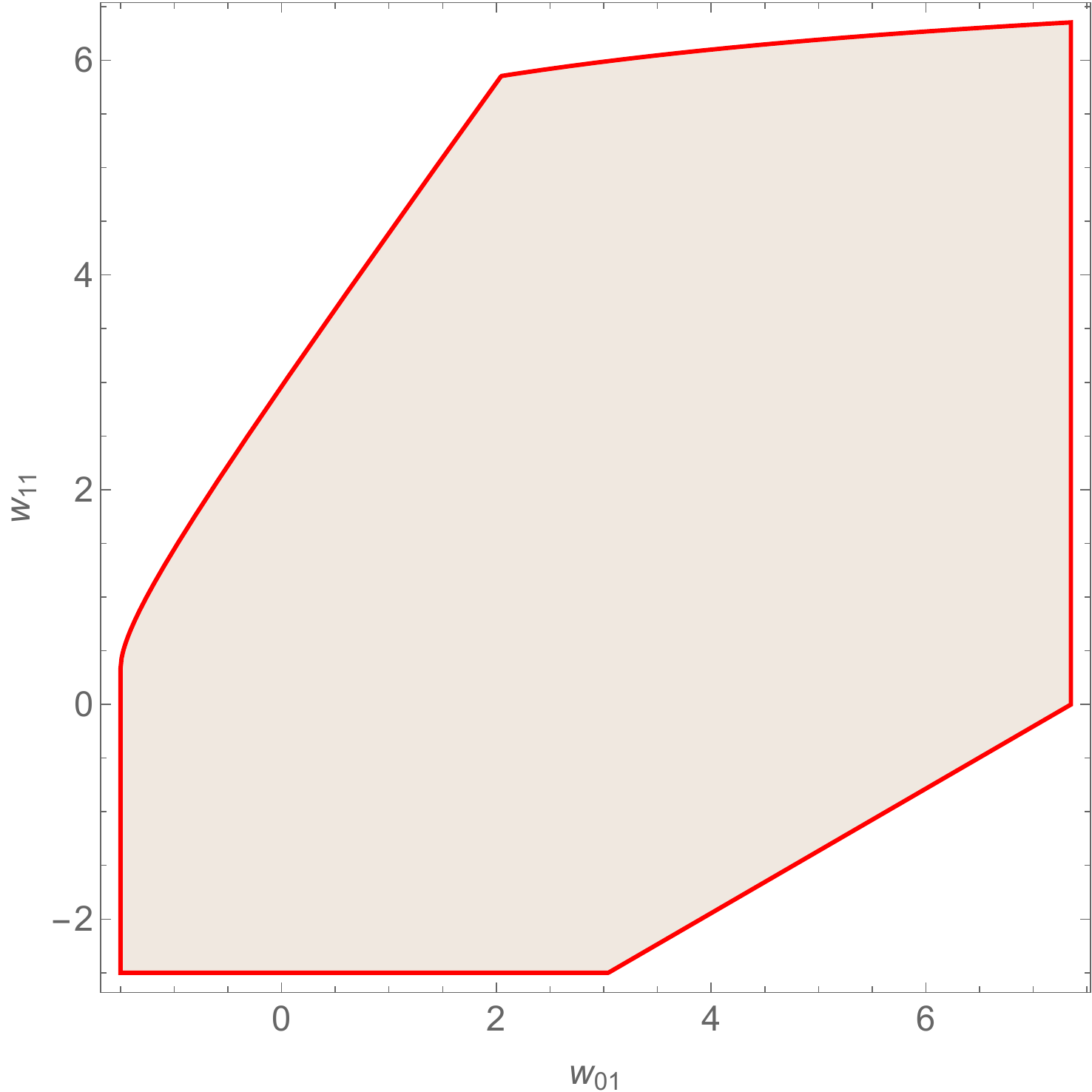}
\caption{ Allowed regions in $(w_{11}, w_{01})$ space.}
\end{subfigure}
\caption{Plots for $(w_{11}$ and $w_{02})$ vs $w_{01}$ for $TR^\g_U$ upto $n=3$ .}%
\label{w11w02}
\end{figure}



\section{Graviton bounds}\label{gravitonbound}
In this section we will be considering parity preserving graviton amplitudes. We would like to consider the same combination of amplitudes as in the photon case since unitarity guarantees the positivity of these combinations. However, the low energy expansion of  $F^{h}_2(s_1,s_2,s_3) =\sum_{i=1,3,4}T_i(s_1,s_2,s_3)$ starts only at 8-derivatives (the first regular term is the one from $R^4$) which translates to the low energy expansion in $\tilde {z}$ starting from $\tilde {z}^2$ order. Such a function cannot be typically real as can be seen using the following simple argument. Suppose we have a typically-real function $f(z)$ which has a Taylor expansion $f(z) =z^2+ a_3 z^3+\cdots$ around the origin. In a small  neighbourhood of $z=0$  the leading term is the dominant one and we have $\Im f(z) \Im z >0 \implies r^2 \sin 2\theta \sin \theta >0$ for all $z= r e^{I \theta}$ and $\theta \in (0,\pi)\cup (\pi,2 \pi)$, this however is not possible as $\sin 2 \theta$ changes sign in the upper/lower half plane but $\sin \theta$ does not. Thus our hypothesis that $f(z)$ is typically real is incorrect. 

Thus our methods will not directly apply to these combinations. For our purposes we will considering the modified combination:
\begin{eqnarray}\label{gravamp}
	M^{h}(s_1,s_2, s_3)&=& \tilde F^{h}_2(s_1,s_2,s_3) \nonumber\\
	&=& \left(\frac{T^{h}_1(s_1,s_2,s_3)}{s_1^2}+ \frac{T^{h}_3(s_1,s_2,s_3)}{s_3^2}+ \frac{T^{h}_4(s_1,s_2,s_3)}{s_2^2}\right)
\end{eqnarray}
As can be readily checked the above combination $\tilde F^{h }_2(s_1,s_2,s_3)$ does not have any additional low energy spurious poles, is fully crossing symmetric and obeys the same $o(s^2)$ Regge growth we demand for $F^{h }_i(s_1,s_2,s_3)$. Thus $\tilde F^{h }_2(s_1,s_2,s_3)$ also satisfies the  crossing symmetric dispersion \eqref{crossdisp} \footnote{We have verified that this is indeed true for all the 4-graviton string amplitudes for details see \ref{appB}.}.

\noindent Furthermore it has the low energy expansion given by
\begin{equation}
	\begin{split}
		& \tilde{F}^{h}_2(s_1,s_2,s_3)=\mathcal{W}^{(f)}_{1,0} x+\mathcal{W}^{(f)}_{0,1} y+\mathcal{W}^{(f)}_{1,1} x y +\mathcal{W}^{(f)}_{2,0} x^2+\cdots \,.
	\end{split}
\end{equation}
We can also consider  $F^{h}_1(s_1,s_2,s_3)$ which\footnote{As for the photon case we could have considered $\tilde F^{h}_2(s_1,s_2,s_3)+ x_1 F^{h }_1(s_1,s_2,s_3)$ for $x_1 \in [-1,1]$however this leads to a spectral coefficient $ \frac{\rho_1}{s1'^2}+\rho_2$ which doesn't seem to have a fixed sign from unitarity alone $\rho_1 \ge 0, \rho_1\pm \rho_2 \ge 0$. We shall use a different method to bound $F^{h }_1(s_1,s_2,s_3)$. } has an expansion 
\begin{equation}\label{F1grav}
	F^{h}_1(s_1,s_2,s_3)=\mathcal{W}^{(g)}_{0,1} y +\mathcal{W}^{(g)}_{1,1} x y +\mathcal{W}^{(g)}_{2,0} x^2+\cdots \,.
\end{equation}
We shall not explore this case in the current work. When we write the above expansions we have a low-energy gravitational EFT in mind 
\be
\mathcal{L}= \frac{-2}{\kappa^2} \sqrt{-g} R+8 \frac{\beta_{R^3}}{\kappa^3} R^3+ 2 \frac{\beta_{R^4}}{\kappa^4} C^2 +  \frac{2 \tilde{\beta}_{R^4}}{\kappa^4} \tilde{C}^2 +\cdots \,,
\ee
where $R$ is the Ricci scalar, $\kappa^2=32 \pi G$ and $C=R^{\mu \nu \kappa \lambda} R_{\mu \nu\kappa\lambda}$, $\tilde{C}=\frac{1}{2}R^{\mu \nu \alpha \beta} \epsilon_{\alpha \beta}^{\gamma \delta} R_{\gamma \delta \mu \nu}$ and the metric $g_{\mu \nu}= \eta_{\mu \nu}+h_{\mu \nu}$ is given in-terms of the gravitational field $h_{\mu \nu}$. We subtract out the poles corresponding to the $R$ and $R^3$ terms and look at the low energy expansion of the rest of the amplitude. The Wilson coefficients of the low-energy expansion of the amplitudes are related to the parameters in the gravitational EFT Lagrangian such as
\bea
\mathcal{W}^{(f)}_{1,0}= \frac{\beta_{R^4}+\tilde{\beta}_{R^4}}{{\kappa^4}} \,.
\eea

\subsection{Wilson coefficients and Locality constraints: $PB_C^h$}\label{gravitonboundp}
The local low energy expansion of the amplitude \eqref{gravamp} can be written as 
\bea 
\tilde F^{h}_2(s_1,s_2) &=& \sum_{p,q=0}^{\infty} \mathcal{W}^{(f)}_{p,q} x^p y^q \qquad= \sum_{p,q=0}^{\infty} \mathcal{W}^{(f)}_{p,q} x^{p+q} a^q \eea
where we have used $a=y/x$ and $\mathcal{W}^{(h)}_{0,0}=0$. We can solve for the $\mathcal{W}^{(h)}_{p,q}$ by expanding around $a=0$ and comparing powers of $x,a$. We obtain, 
\bea \label{gravitonbell}
\mathcal{W}^{(f)}_{n-m,m}&=&\int_{M^2}^{\infty}\frac{d s_1}{2\pi s_1^{2n+m+1}} \sum_{J=0,2,4,\cdots}(2J+1)\tilde a^{(1)}_J(s_1)\mathcal{K}_{n,m}^{J,1}\,\nonumber\\
&+&\int_{M^2}^{\infty}\frac{d s_1}{2\pi s_1^{2n+m+1}} \sum_{J=4,6,\cdots}(2J+1)\tilde a^{(2)}_J(s_1)\hat{\mathcal{K}}_{n,m}^{J,2}\,\nonumber\\
&+& \int_{M^2}^{\infty}\frac{d s_1}{2\pi s_1^{2n+m+1}} \sum_{J=5,7,\cdots}(2J+1)\tilde a^{(3)}_J(s_1)\hat{\mathcal{K}}_{n,m}^{J,3}\,,\nonumber\\
\hat{\mathcal{K}}_{n,m}^{J,i}&=&2\sum_{j=0}^{m}\frac{(-1)^{1-j+m}q_{J}^{(j,i)}\left(1\right) \left(4 \right)^{j}(3 j-m-2n)\Gamma(n-j)}{ j!(m-j)!\Gamma(n-m+1)}\,.
\eea 
where $\tilde a^{(1)}_J =\frac{\rho^{1, h}_J}{s_1^2}$, $\tilde a^{(2)}_J=\tilde a^{(3)}_J=\frac{\rho^{3, h}_J}{s_1^2}$ and $q_{J}^{(j,i)}(1)= \frac{\partial^j f^{(i)}(\sqrt{\xi})}{\partial \xi^j}\bigg|_{\xi=\xi_0=1}$ with $f^{(1)}=d^J_{0,0},f^{(2)}=\frac{d^J_{4,4}\left(\cos^{-1}\left(\sqrt \xi \right)\right)}{(1+\sqrt \xi)^2} + \frac{d^J_{4,-4}\left(\cos^{-1}\left(\sqrt \xi\right) \right)}{(1-\sqrt \xi)^2},~f^{(3)}=\frac{d^J_{4,4}\left(\cos^{-1}\left(\sqrt \xi \right)\right)}{(1+\sqrt \xi)^2} - \frac{d^J_{4,-4}\left(\cos^{-1}\left(\sqrt \xi\right) \right)}{(1-\sqrt \xi)^2}$.\\

\noindent \emph {A key difference between the scalar/photon case and the graviton case we are considering now is that the combinations $f^{(i)}$ are no longer positive even for $\xi >1$}. 

However for $\xi=1$ we can check that $f^{(i)}=1$ and since the spectral functions $\tilde a_J^{(i)} \ge 0$ by unitarity namely $a_J^{(i)} \ge 0$ so this in particular implies 
\bea \label{whn0}
\mathcal{W}^{(h)}_{n,0}&=& \int_{M^2}^{\infty}\frac{d s_1}{2\pi s_1^{2n+m+1}} \sum_{J=0,2,4,\cdots}(2J+1)\tilde a^{(1)}_J(s_1)\,\nonumber\\
&+&\int_{M^2}^{\infty}\frac{d s_1}{2\pi s_1^{2n+m+1}} \sum_{J=4,6,\cdots}(2J+1)\tilde a^{(2)}_J(s_1) \,\nonumber\\
&+& \int_{M^2}^{\infty}\frac{d s_1}{2\pi s_1^{2n+m+1}} \sum_{J=5,7,\cdots}(2J+1) \tilde a^{(3)}_J(s_1)  \ge 0 \, .
\eea
We  can see straightforwardly that the above implies $$ 0\leq \mathcal{W}^{(f)}_{n,0}\leq \frac{1}{M^4} \mathcal{W}^{(f)}_{n-1,0}\,.$$
As alluded to before, the non-positivity of $f^{(i)}$ in \eqref{gravitonbell} implies the sign of any term in $J$ expansion is no longer controlled by $\mathcal{K}_{n,m}^{J,i}(s_1)$ alone. So this makes obtaining a closed form for $PB_C^h$ much harder in this case. We can however do this case by case. For $n=2$ these read:
\bea
\frac{9 w^{(f)}_{20}}{4 M^4}+\frac{3 w^{(f)}_{11}}{2 M^2}+w^{(f)}_{02}\geq 0,~ \frac{5 w^{(f)}_{20}}{2 M^2}+w^{(f)}_{11}\geq 0,~ 0\leq w^{(f)}_{20}\leq \frac{1}{M^4},\nonumber\\
\eea
where $w^{(f)}_{p,q}=\frac{\mathcal{W}^{(f)}_{p,q}}{\mathcal{W}^{(f)}_{1,0}}$.
As before the locality constraints for this case are therefore given by 

\bea\label{nullgraviton}
\mw^{(f)}_{n-m,m}=0~~~ \forall n<m \,.
\eea

\subsection{Typically-Realness and Low spin dominance: $TR_U^h$}\label{gravitonboundTR}
In this section we try to get $a$ range of a using positivity of the amplitude coupled with locality constraints and typically-realness of the amplitude. 
The analysis in this case has key differences due to the non-positivity of the $f^{(i)}\left(\sqrt{\xi}\right)$ even for $\xi>1$. We know the typically-realness of the amplitude followed from two crucial ingredients namely the regularity of the kernel inside the unit disk and the positivity of the absorptive part. The former remains unchanged the latter however crucially needs the locality constraints to justify now, since $\xi>1$ is no longer sufficient to guarantee 
positivity.
\begin{eqnarray}\label{postrugraviton}
	\left(a\in \left[-\frac{M^2}{3},0\right) \cup \left(0, \frac{2M^2}{3}\right]\right)~~ \cap ~~ \left(a ~~\forall~~ A(s_1,; s_2^{(+)}(s_1,a) \geq 0\right)_{LSD}. \nonumber\\
\end{eqnarray}

We can proceed with the LSD analysis as before by including the locality constraints $N^h_c =- \sum_{n<m\atop m\ge 2} c_{n,m} \mw^{(f)}_{n-m,m} a^{2n+m-3} y $ to it with arbitrary weights $c_{n,m}$'s \eqref{nullgraviton}. 
\bea\label{ineqgraviton}
& &\mathcal{M}(s_i,a)+N_c = \nonumber \\
& &\int_{M^2}^{\infty}\frac{d s_1}{2\pi s_1} \sum_{J=0,2,4,\cdots}(2J+1)\tilde a^{(1)}_J(s_1)\left[f^{(1)}_J(\xi)- \sum_{{n<m}\atop{m\ge 2}} c_{n,m} \mathcal{\hat{K}}_{n,m}^{J,1} \frac{ a^{2n+m}  H(a;s_i)}{(s_1^{'})^{2n+m} H(s_1',s_i)}\right]H(s_1',s_i) \,\nonumber\\
&+&\int_{M^2}^{\infty}\frac{d s_1}{2\pi s_1} \sum_{J=4,6,\cdots}(2J+1)\tilde a^{(2)}_J(s_1)\left[f^{(2)}_J(\xi)- \sum_{{n<m}\atop{m\ge 2}} c_{n,m} \mathcal{\hat{K}}_{n,m}^{J,2} \frac{ a^{2n+m}  H(a;s_i)}{(s_1^{'})^{2n+m} H(s_1',s_i)}\right]H(s_1',s_i)\,\nonumber\\
&+& \int_{M^2}^{\infty}\frac{d s_1}{2\pi s_1} \sum_{J=5,7,\cdots}(2J+1)\tilde a^{(3)}_J(s_1)\left[f^{(3)}_J(\xi)- \sum_{{n<m}\atop{m\ge 2}} c_{n,m} \mathcal{\hat{K}}_{n,m}^{J,3} \frac{ a^{2n+m}  H(a;s_i)}{(s_1^{'})^{2n+m} H(s_1',s_i)}\right]H(s_1',s_i)\,\geq 0,\nonumber\\
\eea
where $\mathcal{\hat{K}}_{n,m}^{J,i}$ has been defined in \eqref{gravitonbell}. As before $\xi_{min}$ is determined by the maximum lower bound obtained by considering the positivity of three different classes of inequalities namely corresponding to the coefficients of $\tilde a_J^{(i)}$ for $i=1,2,3$. We can also determine $\xi_{max}$ now which is determined by the minimum upper bound obtained by considering the positivity of the same three classes of inequalities. This exercise leads us to $\xi^h_{min}=0.593$ and $\xi^h_{max}=3$ for the graviton EFT when we consider all locality constraints up to $2n+m\leq 12$ and $J_{max}\leq 20$. Using the relation $\frac{\xi_{min}^2-1}{\xi_{min}^2+3}M^2<a<\frac{\xi_{max}^2-1}{\xi_{max}^2+3}M^2$ and \eqref{postrugraviton}, we obtain,  

\bea \label{rangeofagr}
{\bf Graviton:} -0.1933 M^2< a^h < \frac{2 M^2}{3}\, ,\nonumber \\ \eea

We would now like to show that this is indicative of Spin-2 dominance for the graviton case.  Since in the set of polynomials $\{ d^J_{0,0}, \frac{d^J_{4,4}(cos^{-1} x)}{(1+x)^2}\pm \frac{d^J_{4,-4}(cos^{-1} x)}{(1-x)^2}\}$ the latter two elements do not have straightforward positivity properties for general $J$. The identification of the critical spin $J_c$ is more complicated and needs more detailed consideration in this case. A key difference between the scalar/photon cases and the graviton case we are looking at now is that both the upper and lower bound of $a$ can change. Let us recall how that happens- the condition $TR_U$ tells us that the allowed range of $\xi \in [0,3]$. The overlap of this region with the positive part of the absorptive part gave us the required range of $\xi$ to be used in our analysis.
In the analogous exercise for the photons and scalars, we had truncated the partial wave sum to a finite cut-off in spin and so then the range of $\xi$ was determined by what range for which these polynomials were positive. We had determined the lower range of $\xi$ to be given by the largest root of the Wigner-$d$ combinations that appear with respective spectral coefficients- this was usually such that $\xi_{min}<1$. The upper range of $\xi$ was automatically determined by the $TR_U$ conditions since the relevant Wigner-$d$ matrices were manifestly positive for $\xi>1$--- in other words there were no restrictions on the upper limit of $\xi$ from the Wigner-$d$ polynomials. The story for gravitons remains the same for the lower bound for $\xi$ but we note the following changes for the upper bound.

To illustrate this point, notice  that the Wigner-$d$ combination $f^{(3)}=\frac{d^J_{4,4}\left(\cos^{-1}\left(\sqrt \xi \right)\right)}{(1+\sqrt \xi)^2} - \frac{d^J_{4,-4}\left(\cos^{-1}\left(\sqrt \xi\right) \right)}{(1-\sqrt \xi)^2}$ is not always positive for $\xi>1$ for $J\geq9$. Therefore if we assume $J_c=9$, the upper limit for $\xi$ (and hence $a$) also changes along with the lower limit. As an example we present shortening of the positive regions for $f^{(3)}$ for $J=9, 11$ in figure \ref{f3j911}. We present the allowed range of $a$ as a function of $J_c$ in the form of a table below. 


\begin{center}\label{LSDgraviton}
	\begin{tabular}{ |c| c|}
		\hline
		$J_c$ & Graviton \\ 
		\hline
		2  & $-0.2 M^2  <a^h< \frac{2 M^2}{3}$ \\  
		\hline
		4   &$-0.069M^2 <a^h< \frac{2 M^2}{3}$ \\
		\hline
		6 & $-0.034M^2 <a^h< \frac{2 M^2}{3}$\\ 
		\hline
		9 & $-0.014M^2 <a^h < \frac{19641M^2}{140000}$\\ 
		\hline
	\end{tabular}
\end{center}
We can see the spin-$2$ dominance clearly for the graviton case from the above table. We have also illustrated this for the case of the type-II string amplitude in appendix(\ref{appG}).Note that the locality constraints play an important role in maintaining the positivity of the amplitude despite the Wigner-$d$ combination $f^{(3)}$ not having nice positivity properties. This is not a surprise since the locality constraints encode the details of the theory and put constraints on allowed spectral densities that appear in each sector. It would be interesting to explore the detailed implications of locality constraints in future.  


\begin{figure}[hbt!]
	\centering
	\begin{subfigure}[b]{0.46\textwidth}
		\centering
		\includegraphics[width=\textwidth]{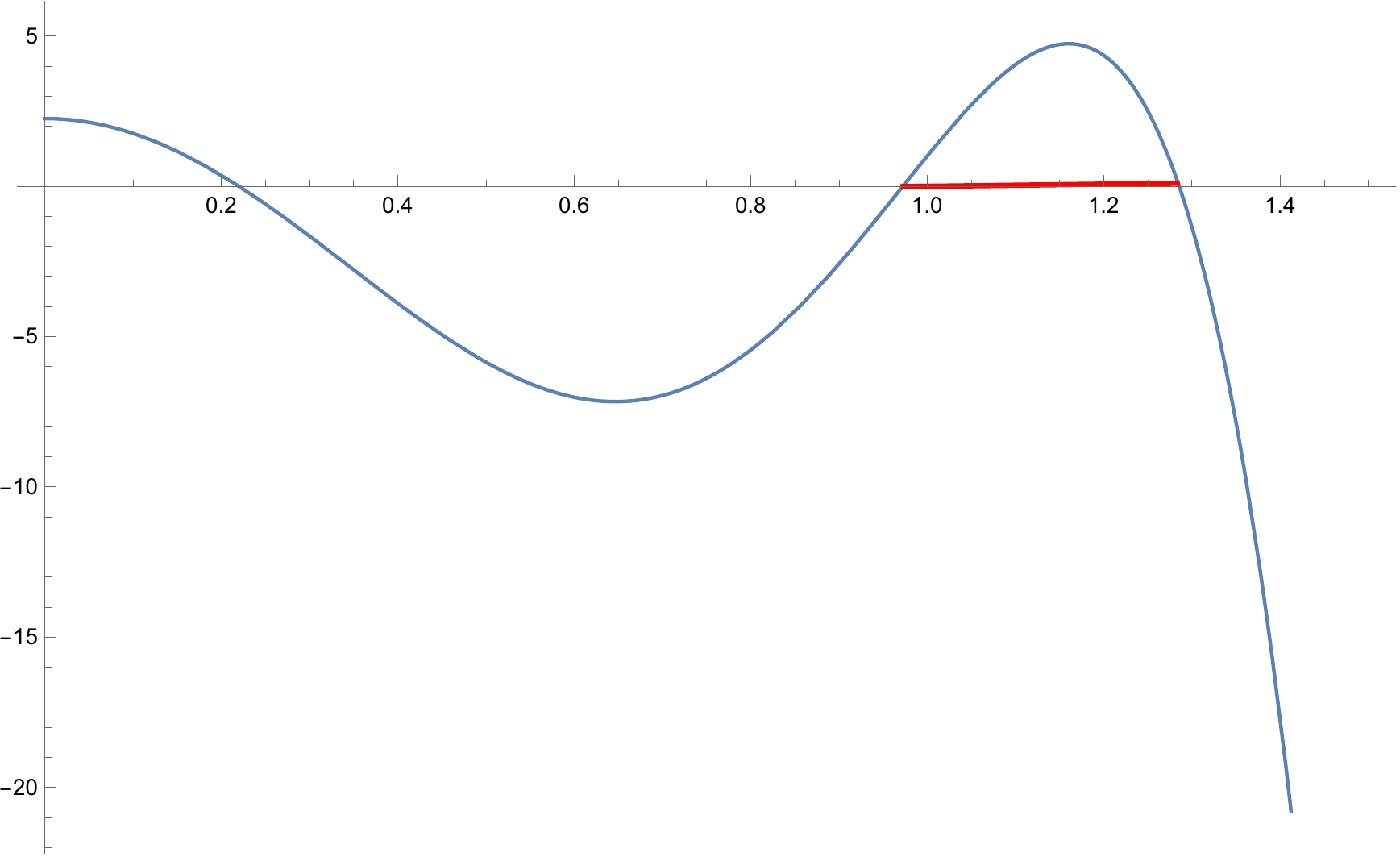}
		\caption{\centering Positive  regions for $f^{(3)}$ for $J=9$ marked in red.}
		\label{fig:szego}
	\end{subfigure}
	\hfill
	\begin{subfigure}[b]{0.46\textwidth}
		\centering
		\includegraphics[width=\textwidth]{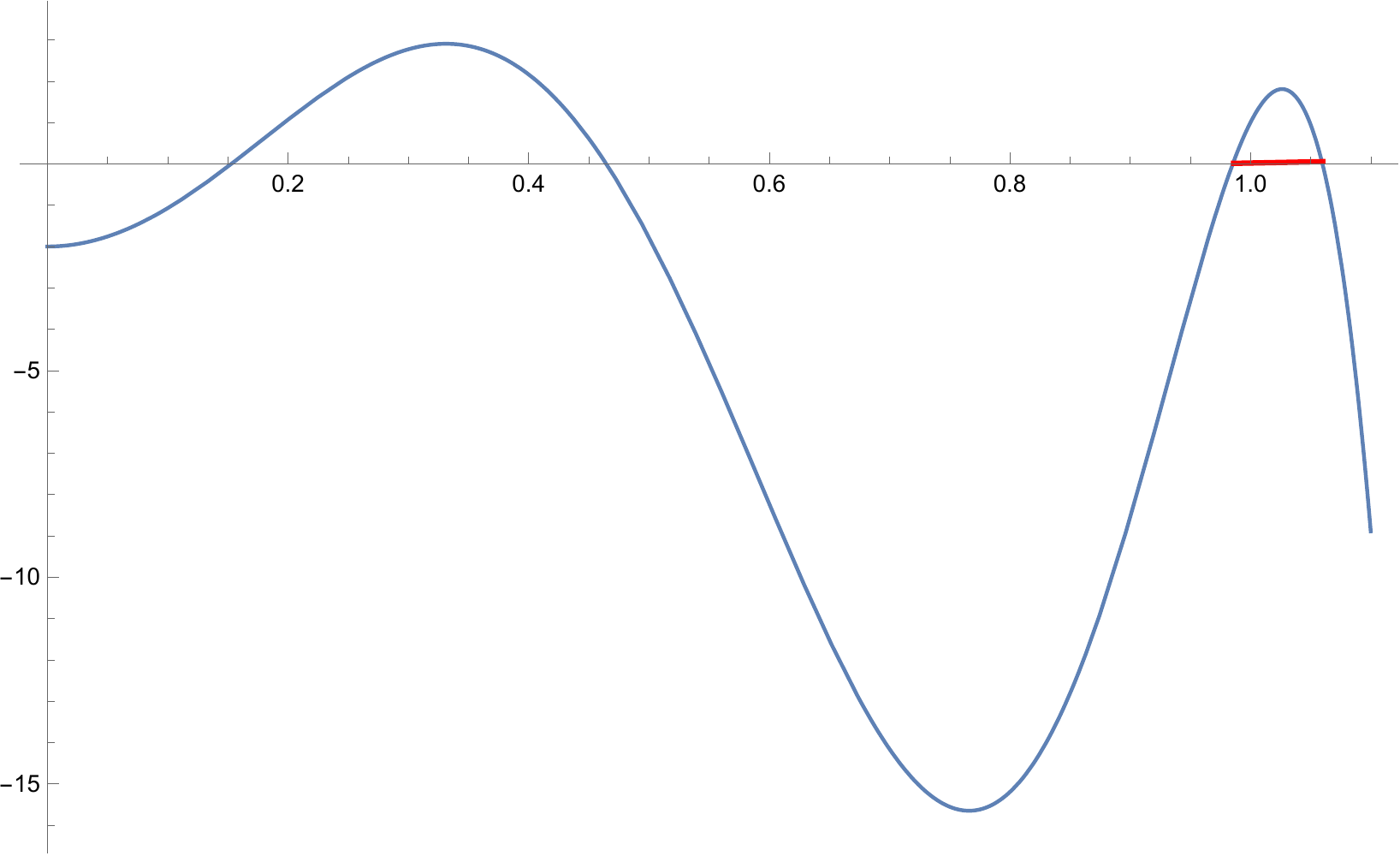}
		\caption{\centering Positive $\xi$ regions for $f^{(3)}$ for $J=11$ marked in red.}
		\label{fig:zetaminvsn}
	\end{subfigure}
	\caption{A comparative plot of changing regions of positivity in $\xi$ with spin for $f^{(3)}$.}
	\label{f3j911}
\end{figure}

We can also check this for known examples such the the 4-graviton amplitude in superstring theory.
\begin{figure}[H]
\begin{centering}
 \includegraphics[width=0.7\textwidth]{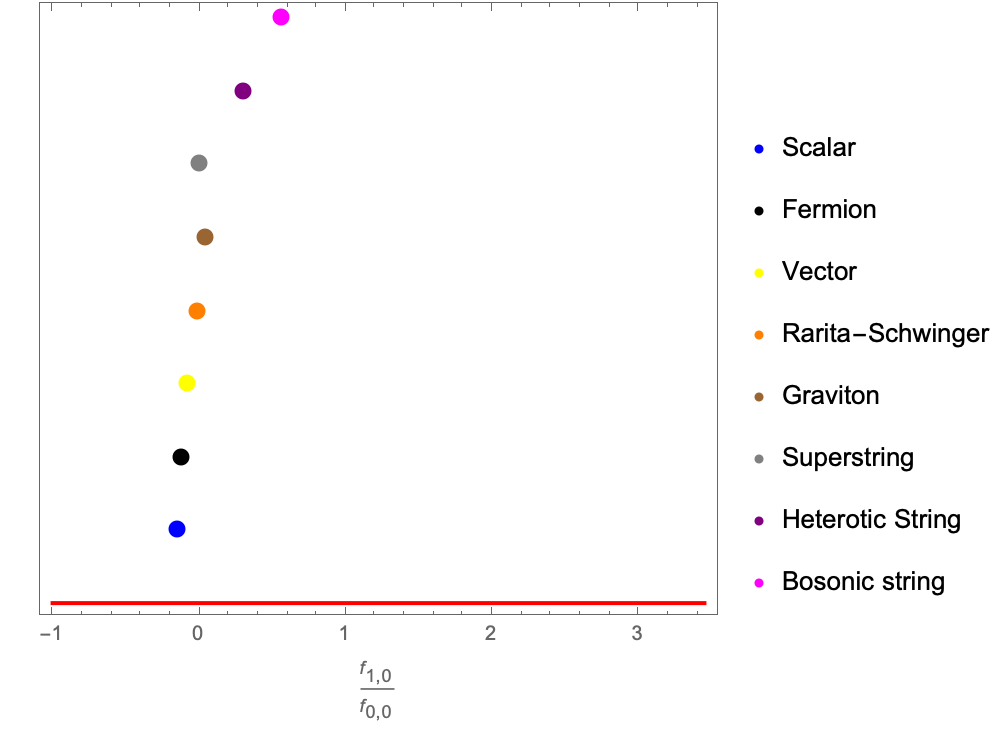}
  \vspace*{-10 pt}
 \caption{The Plot of $\rho_3^J(s_1)$ vs $J$ for $s_1=20$ seems to support the result that $J_c=2$.}
 \end{centering}
\end{figure}

Therefore for this range of $a$ in \eqref{rangeofagr}, we can impose the Bieberbach-Rogosinski bounds of subsection \ref{BRbounds} on the Wilson coefficients $\mw^{(f)}_{n-m,m}$ ---these constraints are called $TR^f_U$.

\subsection{Bounds}

In this section we put the bounds on the low energy EFT expansion which is parametrized by,
\be
\begin{split}
	\tilde{F}_2^{h}= 2 x f_{0,0}+3 y f_{1,0}+2 x^2 f_{2,0}+x y \left(2 f_{3,1}-f_{3,0}\right)+y^2 \left(-3 f_{4,0}-3 f_{4,1}+9 f_{4,2}\right)+2 x^3 f_{4,0}+\cdots\\
\end{split}
\ee 
where in terms of parametrization of \cite{sasha}, 
\be
T^h_3(s_1,s_2,s_3)= s_3^4 \left(\sum_{i=0}^\infty f_{2i,i} s_2^i s_1^i + \sum_{i=1}^\infty \sum_{j=0}^{\lfloor\frac{i}{2}\rfloor}f_{i,j}( s_2^{i-j}s_1^j+ s_1^{i-j}s_2^j) \right)\,.
\ee 
We have explicitly, 

\be
\begin{split}
	& w^{(f)}_{1,0}= 2f_{0,0},~~w^{(f)}_{0,1}=3f_{1,0},~~w^{(f)}_{2,0}=2f_{2,0},~~w^{(f)}_{1,1}= \left(2 f_{3,1}-f_{3,0}\right)\\
	& w^{(f)}_{0,2}=\left(-3 f_{4,0}-3 f_{4,1}+9 f_{4,2}\right),~~w^{(f)}_{3,0}=2f_{4,0}\,.
\end{split}
\ee 

We demonstrated in the previous subsection that due to positivity and typical realness of the amplitudes, we can put two sided bounds on Wilson coefficients. Using \eqref{w01} we have, \be
-1.5\leq w^f_{01}\leq 5.17331
\ee 
where $w^f_{01}=\frac{w^{(f)}_{0,1}}{w^{(f)}_{1,0}}$, which implies $-1\le \frac{f_{1,0}}{f_{0,0}}\le 3.44$. 

For $n=2$ $TR^h_U$ and $PB^h_C$, we obtain table \ref{mintgrav} (in units of $M^2=1$),
\begin{table}[H]
	\centering
	$\begin{array}{|c|c|c|}
		\hline
		w^f_{pq}=\frac{w^{(f)}_{p,q}}{w^{(f)}_{1,0}} & TR_U^{min}  & TR_U^{max}\\
		\hline
		w^f_{01} &-1.5 & 5.1733\\
		\hline
		w^f_{02} &-7.7600 & 3.8273\\
		\hline
		w^f_{20} & 0 & 1\\
		\hline
		w^f_{11} &-2.5 & 4.1734\\
		\hline
	\end{array} $
	\caption{A list of graviton bounds obtained using our results $TR_U$ up to $n=2$ in the normalisation $M^2=1$. }\label{mintgrav}
\end{table}
We note that terms such as $f_{2,1}$ or $f_{1,1}$ vanish when we consider fully crossing symmetric combinations as these are proportional to $s_1+s_2+s_3=0$ thus we will not be able to bound these using the current combinations\footnote{However by looking at $F_4(s_1,s_2,s_3)$ and $F_5(s_1,s_2,s_3)$ these terms do appear so we can bound them in principle. We do not attempt to do this in our current work.}we are looking at. However using our method we can bound coefficients like $\frac{f_{1,0}}{f_{0,0}}$ for which no non-trivial bounds were found using the fixed-$t$ dispersion relation, to the best of our knowledge.  
The region carved out by the Wilson coefficients with their respective data points for various theories are given. The data has been obtained from \cite{sasha}. 

\begin{figure}[H]
	\begin{centering}
		\includegraphics[width=0.9\textwidth]{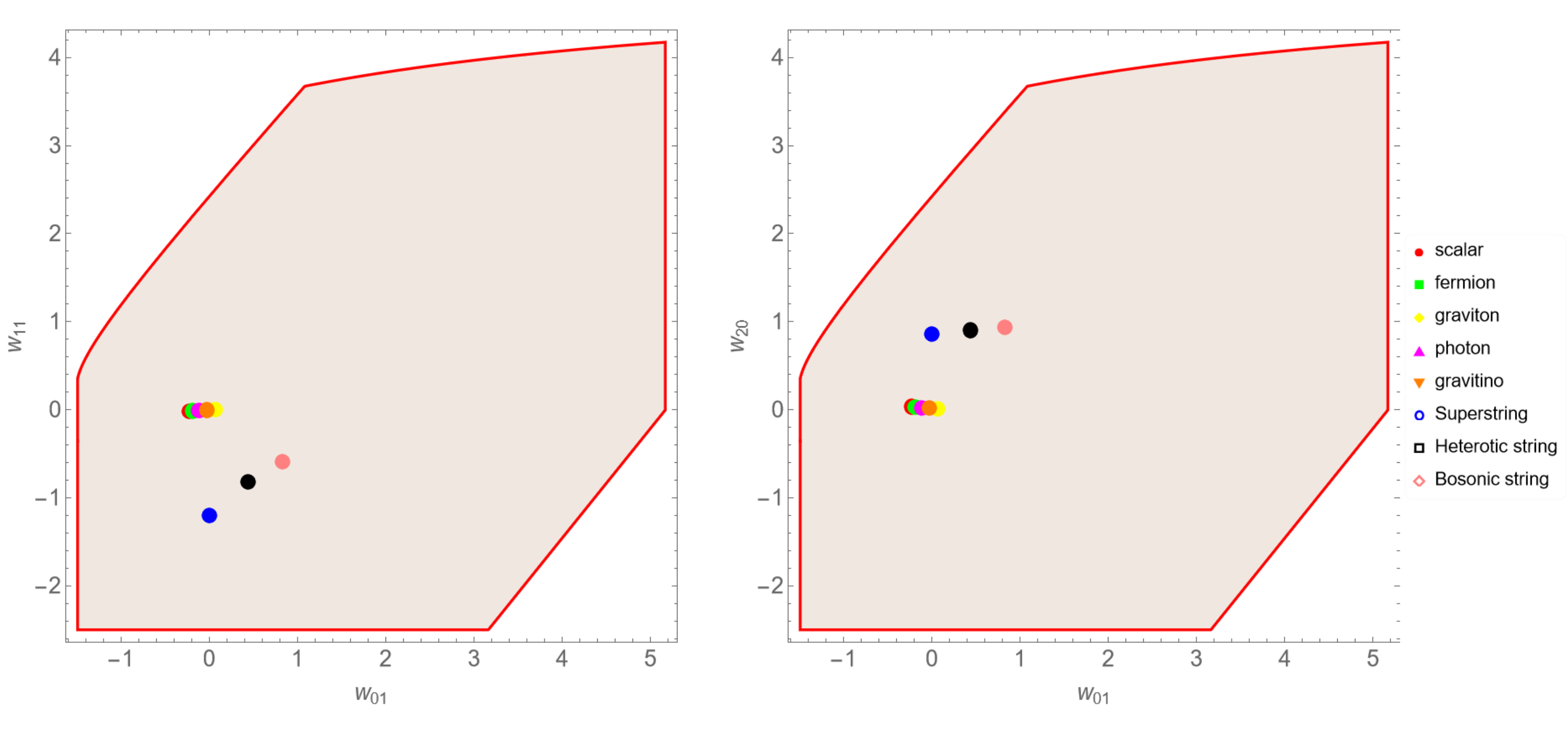}
		\vspace*{-10 pt}
		\caption{The allowed regions in the $\left(w_{11},w_{02}\right)$ vs $w_{01}$ space with scalar, fermion, photon, gravitino, graviton, superstring, Heterotic string and bosonic string sector benchmarked.}
	\end{centering}
\end{figure}

\section{Summary}\label{summary}
Let us summarize the principle findings of this chapter. 
\begin{itemize}
	\item Given a basis of amplitudes, which transform under crossing, we give a general prescription to construct crossing symmetric amplitudes relevant for CSDR from them. We demonstrated this construction explicitly for massless photons, gravitons and massive Majorana fermion helicity amplitudes in $d=4$. 
	\item Next, we used the CSDR for certain photons and graviton crossing symmetric amplitudes and put bounds on low energy Wilson coefficients. Our analysis suggested that the positivity of the absorptive part is dominated by partial waves of low lying spins---we found indications of spin-3 LSD for photons and spin-2 for gravitons (see \cite{sasha, nima2}). Using the typically-realness property of the amplitude, the Wilson coefficients satisfied the Bieberbach-Rogosinski ($BR$) bounds. We supplemented the $BR$ bounds using certain additional positivity conditions to get tighter bounds in some cases.  The photon bounds are in good agreement with existing results in literature. We dealt with the graviton amplitude separately since the low energy EFT expansion starts from eighth order  in derivatives for the crossing symmetric amplitude we consider. In order for the low energy expansion to be typically real, we considered a modified amplitude which then had the requisite properties. Similar to the photon case, we wrote down the locality constraints in closed form and analyzed certain bounds.\end{itemize}

\begin{subappendices}
	\section*{Appendix}
	\section{Representation theory of $S_3$: A crash course}\label{S3rep}
	In this appendix, we present a short self contained review of $S_3$ representations following \cite{SDC}. We can represent the three irreps of $S_3$ by the following young diagrams. 
	\begin{equation} \label{yts}
		{\bf 1_S}={\tyng(3)}\qquad\qquad {\bf 1_A}={\tyng(1,1,1)}\qquad \qquad {\bf 2_M}={\tyng(2,1)}.
	\end{equation}
	where ${\bf 1_S}$ is the one dimensional totally symmetric representation, ${\bf 1_A}$ is the one dimensional totally anti-symmetric representation and ${\bf 2_M}$ is the mixed symmetry two dimensional representation. Given an representation of $S_3$, we can easily decompose it to the irreducible sub spaces of ${\bf 1_S}$, ${\bf 1_A}$ and ${\bf 2_M}$ representations using the respective projectors. Denoting the generators for $S_3$ by $P_{12}$ and $P_{23}$ (where $P_{ij}$ denotes interchange of particles in $i$ and $j$ th position in a set $(123)$ ), the projectors for the totally symmetric and anti-symmetric subspaces are given by
	
	\begin{eqnarray}\label{projector}
		P_{{\bf 1_S}} = \frac{\left(1+P_{12}+P_{23}+P_{13}+P_{23}P_{12}+P_{12}P_{23}\right)}{6}\nonumber\\
		P_{{\bf 1_A}} = \frac{\left(1-P_{12}-P_{23}-P_{13}+P_{23}P_{12}+P_{12}P_{23}\right)}{6}
	\end{eqnarray}
	
	where $P_{13}= P_{23}P_{12}P_{23}$. The formulae \eqref{projector} make it clear that complete symmetrization  and anti symmetrization leads to projection onto the ${\bf 1_S}$ and ${\bf 1_A}$ subspace respectively, while the part that transforms in the ${\bf 2_M}$ representation is annihilated by both the symmetric and anti-symmetric projectors. The group theory for action of $S_3$ on the mandelstam invariants is given by the left action of $S_3$ on itself. The  ${\bf 6}_{\rm left}$ generated by the left action of $S_3$ onto itself can be decomposed as.  
	\begin{equation} \label{adjact}
		{\bf 6}_{\rm left}= {\bf 1_S}+ 2.{\bf 2_M} + {\bf 1_A}.
	\end{equation} 
	
	Note the appearance of two ${\bf 2_M}$ subspaces, which differ from one another in the sense that they have different $\Z_2$ charges. The explicit projectors for these two (two-dimensional) sub-spaces can be constructed as follows. The projectors for the two dimensional subspace of positive $\Z_2$ charge are  
	
	\begin{eqnarray}\label{projectorM+}
		P^{(1)}_{{\bf 2_{M+}}} &=& \frac{1+P_{23}}{2}-\frac{\left(1+P_{12}+P_{23}+P_{13}+P_{23}P_{12}+P_{12}P_{23}\right)}{6}\nonumber\\
		P^{(2)}_{{\bf 2_{M+}}} &=& \frac{P_{23}P_{12}+P_{13}}{2}-\frac{\left(1+P_{12}+P_{23}+P_{13}+P_{23}P_{12}+P_{12}P_{23}\right)}{6}\nonumber\\
	\end{eqnarray}
	
	Note that the above two projectors are respectively symmetric under the action of $\Z_2$ generator $P_{23}$ and $P_{13}$ and hence having a positive $\Z_2$ charge. We note that the projector $ P^{(1)}_{{\bf 2_{M+}}}$ projects to a subspace which is symmetric under $P_{23}$ while $P^{(2)}_{{\bf 2_{M+}}}$ projects to a subspace which is symmetric under $P_{12}$. The projectors for the two dimensional subspace for the negative $\Z_2$ charge (anti-symmetric under $P_{23}$ and $P_{13}$ respectively) are 
	\begin{eqnarray}\label{projectorM-}
		P^{(1)}_{{\bf 2_{M-}}} &=& \frac{1-P_{23}}{2}-\frac{\left(1-P_{12}-P_{23}-P_{13}+P_{23}P_{12}+P_{12}P_{23}\right)}{6}\nonumber\\ 
		P^{(2)}_{{\bf 2_{M-}}} &=& \frac{P_{23}P_{12}-P_{13}}{2}-\frac{\left(1-P_{12}-P_{23}-P_{13}+P_{23}P_{12}+P_{12}P_{23}\right)}{6}\nonumber\\ 
	\end{eqnarray} 
	To explicitly see the formalism in action, consider an arbitrary function of the Mandelstam invariants $F(s,t,u)$. The various irreducible subspaces are given by\footnote{We note that $f_{\rm Mixed +}(s,t,u)+ f_{\rm Mixed +}(t,u,s)+ f_{\rm Mixed +}(u,s,t)=f_{\rm Mixed -}(s,t,u)+ f_{\rm Mixed -}(t,u,s)+ f_{\rm Mixed -}(u,s,t)=0$ denoting that they form two dimensional subspaces.}, 
	\begin{eqnarray}\label{irrepfstu}
		f_{\rm Sym}(s,t,u) &=& \frac{1}{6}\left(F(s,t,u)+F(t,s,u)+F(s,u,t)+F(u,t,s)+F(t,u,s)+F(u,s,t)\right)\nonumber\\
		f_{\rm Anti-sym}(s,t,u) &=& \frac{1}{6}\left(F(s,t,u)-F(t,s,u)-F(s,u,t)-F(u,t,s)+F(t,u,s)+F(u,s,t)\right)\nonumber\\ 
		f_{\rm Mixed+}(s,t,u) &=& \frac{1}{6}\left(2F(s,t,u)-F(t,s,u)-F(u,s,t)+2F(s,u,t)-F(u,t,s)-F(t,u,s)\right)\nonumber\\
		f_{\rm Mixed-}(s,t,u) &=& \frac{1}{6}\left(2F(s,t,u)-F(t,s,u)-F(u,s,t)-2F(s,u,t)+F(u,t,s)+F(t,u,s)\right)\nonumber\\
	\end{eqnarray}
	
	We can easily write down examples of such functions built out of polynomials of Mandelstam invariants \cite{SDC, Chowdhury:2020ddc, Chowdhury:2021qfm}.
	\begin{eqnarray}
		f_{\rm Sym}(s,t,u) &=&(s^2+t^2+u^2)^m(s t u)^n\nonumber\\
		f_{\rm Anti-sym}(s,t,u)&=&(s^2t-t^2s-s^2u+s u^2-u^2t+t^2u) f_{\rm Sym}(s,t,u) \nonumber\\
		f_{\rm Mixed+}(s,t,u)&=&\{(2s-t-u) f_{\rm Sym}(s,t,u),~ (2s^2-t^2-u^2) f_{\rm Sym}(s,t,u)\} \nonumber\\
		f_{\rm Mixed+}(s,t,u)&=&\{(s-u) f_{\rm Sym}(s,t,u),~ (s^2-u^2) f_{\rm Sym}(s,t,u)\} \nonumber\\
	\end{eqnarray}

\section{Massless amplitudes: Examples} \label{appB}
\noindent We expect that the combinations $F_I$ also obey \eqref{crossdisp} since they satisfy all the necessary conditions. We can do some sanity checks by considering a couple of examples. In particular we look at $F_4$, $F_5$ since $F_I$ for $I=1,2,3$ the dispersion relation \eqref{crossdisp} is identical to the scalar case considered in \cite{Sinha:2020win}.\\

\noindent We first consider the Photon amplitude in superstring theory with a kinematic pre-factor being stripped off for appropriate Regge growth namely:
\bea\label{photont}
& \mt_1(s_1,s_2,s_3)=\frac{\Gamma \left[\frac{-s_2}{2}\right]~\Gamma \left[\frac{-s_3}{2}\right]}{\Gamma \left[1+\frac{s_1}{2}\right]}\,, \nonumber\\
& \mt_3(s_1,s_2,s_3)=\frac{\Gamma \left[\frac{-s_1}{2}\right]~\Gamma \left[\frac{-s_2}{2}\right]}{\Gamma \left[1+\frac{s_3}{2}\right]}\,,\nonumber\\
& \mt_4(s_1,s_2,s_3)=\frac{\Gamma \left[\frac{-s_1}{2}\right]~\Gamma \left[\frac{-s_3}{2}\right]}{\Gamma \left[1+\frac{s_2}{2}\right]}\,.
\eea\
We can construct \eqref{f4},\eqref{f5} from the above, we need to subtract out the massless poles and this is done by multipliying $F_4$ as defined in \eqref{f4} by an $s_1s_2s_3$ factor.
We can then check if \eqref{crossdisp} is satisfied by comparing the exact answer with the result obtained from \eqref{crossdisp} by computing the absorbtive part. Since \eqref{photont} has infinitely many poles at $s_1^{\prime}= k$ with $k=2,4,\cdots$, and each pole $p$ contributes a $-\pi \delta(s_1^{\prime} -p)$ factor in the absorbtive part thus \eqref{crossdisp} reduces to an infinite sum over all the poles $k$, $k \in 2 \mathbb{Z_{+}}$ which we call $G_I$. We can then compare the results by truncating this sum to some $k_{max}$ (say $k_{max}=100$) and the results are shown in first row of the plots below. 
\begin{figure}[H]
	\centering
	\includegraphics[scale=0.4]{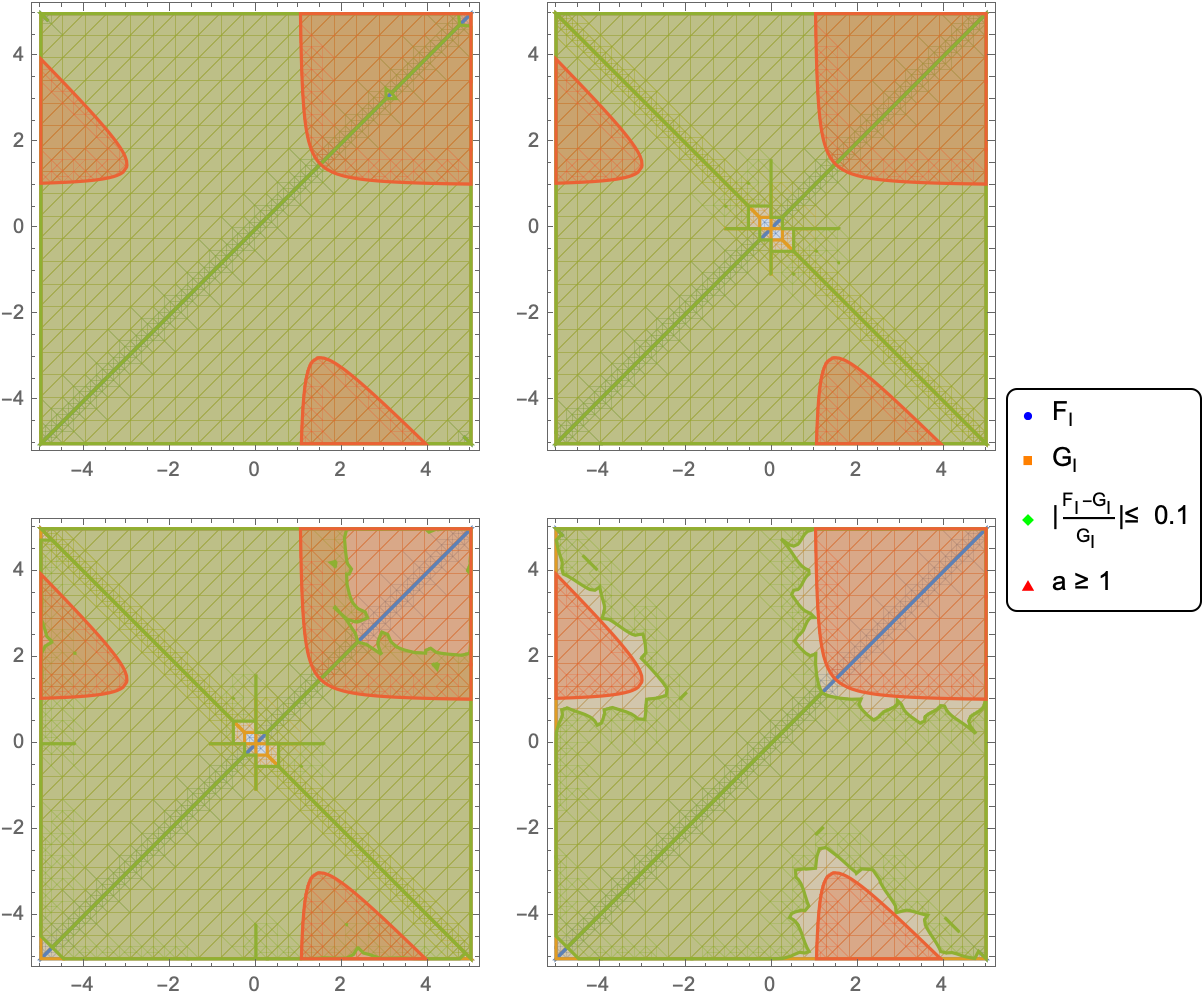}
	\caption{A comparison of the crossing symmetric dispersion for the Photon and Graviton cases which are shown in the first and second rows respectively. We have indicated the regions where $F_4,F_5$ differ from their dispersive analogues $G_4,G_5$ by less than $10 \% $ in green.}
\end{figure}

\noindent We can also consider the Graviton amplitude from superstring theory (again with appropriate kinematic pre-factors stripped off ) 
\bea \label{graviton}
& \mt_1(s_1,s_2,s_3)=\frac{\Gamma \left[-s_1\right]~\Gamma \left[-s_2\right] ~\Gamma\left[-s_3\right]}{\Gamma \left[1+s_1\right]~\Gamma \left[1+s_2\right] ~\Gamma\left[1+s_3\right]} \left(1-\frac{s_2 s_3}{s_1+1}\right)\,, \nonumber\\
& \mt_3(s_1,s_2,s_3)=\frac{\Gamma \left[-s_1\right]~\Gamma \left[-s_2\right] ~\Gamma\left[-s_3\right]}{\Gamma \left[1+s_1\right]~\Gamma \left[1+s_2\right] ~\Gamma\left[1+s_3\right]} \left(1-\frac{s_1 s_2}{s_3+1}\right)\,, \nonumber\\
& \mt_4(s_1,s_2,s_3)=\frac{\Gamma \left[-s_1\right]~\Gamma \left[-s_2\right] ~\Gamma\left[-s_3\right]}{\Gamma \left[1+s_1\right]~\Gamma \left[1+s_2\right] ~\Gamma\left[1+s_3\right]} \left(1-\frac{s_1 s_3}{s_2+1}\right)\,.
\eea
To remove the massless poles we now need to multiply $F_5$ as defined in \eqref{f5} by the factor $s_1s_2 s_3$ and we can follow the same procedure as the photon case with the only change being that we now have poles at $s_1^{\prime}=k$ with $k\in \mathbb{Z_{+}}$. \\
The results are shown in the second row of the figure above. We note that since \eqref{crossdisp} was written down assuming $o(s_1^2)$ behaviour for large $|s_1|$ and fixed $s_2$, examining the growth of $F_I$ restricts  $s_2$ to a region where we should trust the results, for e.g., $F_4$ in the photon case has a growth $s_1^{s_2}$ for large $s_1$ which implies we can strictly expect an agreement for only $s_2 <2$, though we have considered a bigger region in the figure we see that there is an excellent agreement between the dispersion relation and the exact answer. 

\noindent We have also verified that the crossing symmetric combination in eqn.\eqref{gravamp} namely $$ \tilde F^{h}_2(s_1,s_2,s_3) 
= \left(\frac{T^{h}_1(s_1,s_2,s_3)}{s_1^2}+ \frac{T^{h}_3(s_1,s_2,s_3)}{s_3^2}+ \frac{T^{h}_4(s_1,s_2,s_3)}{s_2^2}\right)$$ for tree-level 4-graviton scattering amplitudes in superstring, Heterotic string and bosonic string theories all obey that the crossing symmetric dispersion relation after subtracting out the massless poles. 

\section{Unitarity constraints}\label{uniconst}
In this section we review the unitarity constraints on partial wave amplitudes following \cite{spinpen, Henriksson:2021ymi}. Unitarity constraints can be summarized as positivity of norm of a state $\langle \psi |\psi \rangle \geq 0$. If we have multiple states (say of number $N$), this translates to \emph{positive semi-definiteness} of a $N\times N$ hermitian matrix. In order to see the relation of this statement in context of S-matrices, consider the incoming and outgoing particles as decomposed into irreps of the poincare group. To be more precise (eq (2.21) of \cite{spinpen}), 

\begin{eqnarray}\label{decom}
	| \k_1, \k_2 \rangle = \int \frac{d^4p}{(2\pi)^4} \theta(p^0) \theta(-p^2) \sum_{i,j} \sum_{\ell,\lambda} |c, \vec{p}; \ell, \lambda; \lambda_i, \lambda_j\rangle \langle c, \vec{p}; \ell, \lambda; \lambda_i, \lambda_j | \k_1,\k_2 \rangle.
\end{eqnarray}

$\ket{\k_1,\k_2}$ is generic $2-$particle momentum state \be\ket{\k_1,\k_2}:=\ket{m_1,\vec{p}_1;j_1,\l_1}\otimes \ket{m_2,\vec{p}_2;j_2,\l_2},\ee
where $\vec{p}_i, m_i$ are corresponding $3-$momentum, mass respectively, $p_i^\m p_{i\m}=-m_i^2$. $j_i, \l_i$ are spin and helicity respectively. For massive particles helicity takes $2j_i+1$ values, $\l_i\in\{-j_i,-j_i+1,\dots,j-1,j\},\,m_i\ne 0$, while for massless particles it takes two values, $\l_i=\pm j_i,\,m_i=0$. The Poincare $2-$particle irreps $\{\ket{c, \vec{p}; \ell, \lambda; \lambda_i, \lambda_j}\}$ are states of definite total momenta and total angular momenta, $\vec{p}$ being the total $3-$momentum and $\ell$ being the total angular momentum with $\l$ corresponding $3-$component taking $2\ell+1$ values, $\l\in\{-\ell,-\ell+1,\dots,\ell-1,\ell\}$. In particular,
\be \label{CG} 
\Braket{c, \vec{p}; \ell, \lambda; \lambda_i, \lambda_j | \k_1,\k_2}\propto (2\pi)^4\d^4(p^\m-p_1^\m-p_2^\m).
\ee  
Further, these states are normalized by 
\be \label{norm1}
\Braket{ c', \vec{p}';\ell',\lambda';\lambda_1', \lambda_2'| c, \vec{p};\ell,\lambda;\lambda_1, \lambda_2} = (2\pi)^4 \delta^4(p'^\mu-p^\mu)~ \delta_{l' l} \delta_{\lambda' \lambda}\delta_{\lambda_1' \lambda_1}\delta_{\lambda_2' \lambda_2}.
\ee For our purpose, we will work in CoM frame. Thus the states of interest to us are $\{\ket{c, \vec{0};\ell,\lambda;\lambda_1, \lambda_2}\}$. Under the action of parity operator $\mathcal{P}$, these states transform as \eqref{parityaction}
\be \label{parityaction}
\mathcal{P} \ket{ c, \vec{0};\ell,\lambda;\lambda_1, \lambda_2} = \eta_1\eta_2 (-1)^{\ell-j_1+j_2} \ket{ c, \vec{0};\ell,\lambda; -\lambda_1, -\lambda_2}.\ee Here  $\eta_1$ and $\eta_2$ are pure phases, also called \emph{intrinsic parity} associated with the particle, obey the constraint $\eta_i^2= \pm 1$ (with the negative sign only possible for fermions).

For identical particles we need to take care of the exchange symmetry. This prompts us to define the following states which we will use in our subsequent analysis
\begin{eqnarray} \label{iden}
	\ket{ c, \vec{0};\ell,\lambda;\lambda_1, \lambda_2}_{\text{id}} = \frac{1}{2}\left[\ket{ c, \vec{0};\ell,\lambda;\lambda_1, \lambda_2} + (-1)^{\ell+\lambda_1-\lambda_2} \ket{ c, \vec{0};\ell,\lambda;\lambda_1, \lambda_2}\right]
\end{eqnarray}  

We also note the relation between the 2-particle reducible state $\ket{\k_1, \k_2 }_{\text{COM}}$ in the COM frame and $\ket{ c, \vec{0};\ell,\lambda;\lambda_1, \lambda_2}_{\text{id}}$. This is essential in determining the range of $\ell$ for the irreducible 2-particle reps. Following \cite{spinpen} we can define the 2-particle reducible state in COM frame as product of eigenstates of the $J_z$ operator. 
\begin{eqnarray}\label{kcom}
	\ket{\k_1, \k_2 }_{\text{COM}} &\equiv& \ket{\left(\vec{p}, \theta, \phi\right); \lambda_1, \lambda_2 }\equiv \ket{m_1, \vec{p}; j_1, \lambda_1} \otimes \ket{m_2, -\vec{p}; j_2, \lambda_2}\nonumber\\
\end{eqnarray}
where $J_z=L_z + j^1_z + j^2_z$ ($L_z$ is the orbital angulam momentum, $j^i_z$ are the intrinsic spins). In the COM frame, therefore, \eqref{decom} can be expressed as follows (see (C.18) of \cite{spinpen}),
\begin{eqnarray}\label{pcom}
	\ket{\left(\vec{p}, \theta, \phi\right); \lambda_1, \lambda_2 }_{\text{id}}&=&\sqrt{2}\sum_{\ell}\sum^{\ell}_{ \lambda=-\ell} C_{\ell}(\vec{p}) e^{i \phi(\lambda_1+\lambda_2-\lambda)} d^{(l)}_{\lambda \lambda_{12}}(\theta)\ket{c,0; \ell, \lambda;\lambda_1,\lambda_2}_{\text{id}}
\end{eqnarray}  
The sum over $\ell$ is not unbounded and can be fixed as follows. Let us consider the case where the $\vec{p}$ is alligned along the $z$-axis (i.e $\theta=\phi=0$). 
The LHS is an eigenstate of $J_z= \lambda_1-\lambda_2$, since the orbital angular momentum is zero and the projection of intrinsic spin onto the direction of momenta now becomes the helicity itself. The RHS sum over $\lambda$ therefore must be therefore over only those states for which $\lambda=\lambda_1-\lambda_2$, and hence $\ell\geq |\lambda_1-\lambda_2|$.

\begin{eqnarray}
	\ket{\left(\vec{p},0, 0\right); \lambda_1, \lambda_2 }_{\text{id}}&=&\sqrt{2}\sum^{\infty}_{\ell=|\lambda_1-\lambda_2|} C_{\ell}(\vec{p}) \ket{c,0; \ell, \lambda=|\lambda_1-\lambda_2|;\lambda_1,\lambda_2}_{\text{id}}
\end{eqnarray}  

We can now apply the rotation matrix on both sides to bring it to the form \eqref{pcom}. Note that the rotation matrix does not change the casimir $J^2$ (and hence the $\ell$ sum). 

\begin{eqnarray}\label{finaldecom}
	\ket{\left(\vec{p},\theta, \phi\right); \lambda_1, \lambda_2 }_{\text{id}}&=&\sqrt{2}\sum^{\infty}_{\ell=|\lambda_1-\lambda_2|}\sum^{\ell}_{ \lambda=-\ell} C_{\ell}(\vec{p}) e^{i \phi(\lambda_1+\lambda_2-\lambda)} d^{(l)}_{\lambda \lambda_{12}}(\theta)\ket{c,0; \ell, \lambda;\lambda_1,\lambda_2}_{\text{id}} \nonumber\\
\end{eqnarray}  
To summarise, the additional symmetry due to identical nature of 2 particle states decides even or odd spins (or both) appear in the partial wave expansion. The cut off for the spin is decided by the helicities of the constituent states.

Corresponding to the incoming and the outgoing states, therefore, we can write down a basis of irreducible 2 particle states for each spin $l$ that appears in the decomposition \eqref{decom}. Generically they are denoted as, 
\begin{eqnarray}
	|1\rangle_{\text{in}}&=& |c, \vec{p}; \ell, \lambda; j_1, j_2 \rangle_{\text{in}}, \qquad |1\rangle_{\text{out}}= |c, \vec{p}; \ell, \lambda; j_3, j_4 \rangle_{\text{out}} \nonumber\\
	|2\rangle_{\text{in}}&=& |c, \vec{p}; \ell, \lambda; j_1-1, j_2 \rangle_{\text{in}}, \qquad |2\rangle_{\text{out}}= |c, \vec{p}; \ell, \lambda; j_3-1, j_4 \rangle_{\text{out}} \nonumber\\ 
	&&.\nonumber\\
	&&.\nonumber\\
	|N_{\text{in}}\rangle_{\text{in}}&=& |c, \vec{p}; \ell, \lambda; -j_1, -j_2 \rangle_{\text{in}}, \qquad |N_{\text{out}}\rangle_{\text{out}}= |c, \vec{p}; \ell, \lambda; -j_3, -j_4 \rangle_{\text{out}} \nonumber\\
\end{eqnarray}

Imposing the positivity constraints of hermitian matrix therefore translates to positivity of the following matrix (eq 2.118 of \cite{spinpen}) 

\begin{eqnarray}
	\mathcal{H}_\ell(s)\times (2\pi)^4 \delta^4 \delta_{\ell'\ell} \delta_{\lambda'\lambda}= \begin{pmatrix}
		{}_{\text{in}}\langle a'|b\rangle_{\text{in}}& {}_{\text{in}}\langle a'|b\rangle_{\text{out}} \\
		{}_{\text{out}}\langle a'|b\rangle_{\text{in}}& {}_{\text{out}}\langle a'|b\rangle_{\text{out}} \\
	\end{pmatrix}
\end{eqnarray} 

where $s=-p^2$. Using the normalisation conditions \eqref{norm1}, and 

\begin{eqnarray}
	{}_{\text{out}}\Braket{c', \vec{p}';\ell',\lambda';\lambda_1', \lambda_2'| c, \vec{p};\ell,\lambda;\lambda_1, \lambda_2}_{\text{in}} &=& (2\pi)^4 \delta^4(p'^\mu-p^\mu)~ \delta_{\ell' \ell} \delta_{\lambda' \lambda} ~{S_\ell}^{\lambda_1', \lambda_2'}_{\lambda_1, \lambda_2}(s)\nonumber\\
\end{eqnarray} 
where ${\spart}^{\lambda_1', \lambda_2'}_{\lambda_1, \lambda_2}(s)= (\delta_{\lambda_1'\lambda_1}\delta_{\lambda_2'\lambda_2} + (-1)^{\ell- (\lambda'_1-\lambda'_2)}\delta_{\lambda_2'\lambda_1}\delta_{\lambda_1'\lambda_2})+ i {\tpart}^{\lambda_1', \lambda_2'}_{\lambda_1, \lambda_2}(s)$ is the partial amplitude with spin $\ell$, we get, 

\begin{eqnarray}\label{unitarity}
	\begin{pmatrix}
		\delta_{a'b}& {\spart}^{*}_{a'b} \\
		{\spart}_{a'b}& \delta_{a'b} \\
	\end{pmatrix}\succeq 0
\end{eqnarray} 

\subsection{Massless bosons: Photons and gravitons} \label{A.1}
The two particle irreducible states can be labelled by the helicities ($\lambda_i=\pm m$ with $m=1, 2$ for photons and gravitons respectively ) as 
\begin{eqnarray}\label{photon}
	| c, \vec{p};\ell,\lambda;\lambda_1, \lambda_2\rangle &\equiv& |\lambda_1, \lambda_2 \rangle \nonumber\\
	|1 \rangle &=& \frac{1}{\sqrt{2}}\left( |++\rangle + |--\rangle \right), \qquad \ell= 0,2,4,\cdots \nonumber\\
	|2 \rangle &=& \sqrt{2}\left( |+-\rangle \right), \qquad \ell= 2m,2m+1, 2m+2, 2m+3,\cdots \nonumber\\
	|3 \rangle &=& \frac{1}{\sqrt{2}}\left( |++\rangle - |--\rangle \right), \qquad \ell= 0,2,4,\cdots \nonumber\\
\end{eqnarray}
We note that the states $|1 \rangle$ and $|3 \rangle$ only contain even spin. This is evident from the symmetry of the states ( see \eqref{iden}) and the discussion around \eqref{finaldecom}. In the following subsections, we try to impose the unitarity conditions assuming parity invariance and non-invariance respectively. For gravitons $\lambda_i=\pm 2$, so spins will change as $|\lambda_1-\lambda_2|$.

\subsubsection{ Parity invariant theories}

We note that the parity of these states: states $|1\rangle$ and $|2\rangle$ are parity even states while $|3\rangle$ is a parity odd state. This is due to the following\footnote{We work in the convention $\eta_i=1$ for photons.} 

\begin{eqnarray}
	\mathcal{P}| ++ \rangle = | -- \rangle,~~ \mathcal{P}| -- \rangle = | ++ \rangle,~~ \mathcal{P}| +- \rangle = (-1)^\ell | -+ \rangle = (-1)^\ell(-1)^{\ell-2} | +- \rangle= | +- \rangle \nonumber\\
\end{eqnarray} 

We are now in a position to evaluate the matrix \eqref{unitarity} for the set of states \eqref{photon}. Furthermore we assume parity invariance which implies that the states of definite parity do not mix. The following conditions are obtained for the parity even sector.

\begin{eqnarray}\label{photonpe}
	\begin{pmatrix}
		{}_{\text{in}}\langle 1'|1\rangle_{\text{in}}& {}_{\text{in}}\langle 1'|1\rangle_{\text{out}} \\
		{}_{\text{out}}\langle 1'|1\rangle_{\text{in}}& {}_{\text{out}}\langle 1'|1\rangle_{\text{out}} \\
	\end{pmatrix}&\succeq & 0,\qquad \ell=0, \nonumber\\
	\begin{pmatrix}
		{}_{\text{in}}\langle 2'|2\rangle_{\text{in}}& {}_{\text{in}}\langle 2'|2\rangle_{\text{out}} \\
		{}_{\text{out}}\langle 2'|2\rangle_{\text{in}}& {}_{\text{out}}\langle 2'|2\rangle_{\text{out}} \\
	\end{pmatrix}&\succeq & 0,\qquad \ell=2m+1, 2m+3,\dots \nonumber\\
	\begin{pmatrix}
		{}_{\text{in}}\langle 1'|1\rangle_{\text{in}}& {}_{\text{in}}\langle 1'|2\rangle_{\text{in}} & {}_{\text{in}}\langle 1'|1\rangle_{\text{out}}& {}_{\text{in}}\langle 1'|2\rangle_{\text{out}}\\
		{}_{\text{in}}\langle 2'|1\rangle_{\text{in}}& {}_{\text{in}}\langle 2'|2\rangle_{\text{in}} & {}_{\text{in}}\langle 2'|1\rangle_{\text{out}}& {}_{\text{in}}\langle 2'|2\rangle_{\text{out}}\\
		{}_{\text{out}}\langle 1'|1\rangle_{\text{in}}& {}_{\text{out}}\langle 1'|2\rangle_{\text{in}} & {}_{\text{out}}\langle 1'|1\rangle_{\text{out}}& {}_{\text{out}}\langle 1'|2\rangle_{\text{out}}\\
		{}_{\text{out}}\langle 2'|1\rangle_{\text{in}}& {}_{\text{out}}\langle 2'|2\rangle_{\text{in}} & {}_{\text{out}}\langle 2'|1\rangle_{\text{out}}& {}_{\text{out}}\langle 2'|2\rangle_{\text{out}}\\
	\end{pmatrix}&\succeq & 0,\qquad \ell=2m, 2m+2,\dots \nonumber\\
\end{eqnarray}

and for the parity odd sector, 
\begin{eqnarray}\label{photonpo}
	\begin{pmatrix}
		{}_{\text{in}}\langle 3'|3\rangle_{\text{in}}& {}_{\text{in}}\langle 3'|3\rangle_{\text{out}} \\
		{}_{\text{out}}\langle 3'|3\rangle_{\text{in}}& {}_{\text{out}}\langle 3'|3\rangle_{\text{out}} \\
	\end{pmatrix}&\succeq & 0,\qquad \ell=0,2,4,\dots\nonumber\\
\end{eqnarray}

Let us work out one of the conditions in detail: 1st matrix of \eqref{photonpe} gives us the following 

\begin{eqnarray}
	\begin{pmatrix}
		1& 1-i\left({\tpart}^{*~--}_{++}+{\tpart}^{*~++}_{++}\right) \\
		1+i\left({\tpart}^{~--}_{++}+{\tpart}^{~++}_{++}\right) & 1\\
	\end{pmatrix}&\succeq & 0,\qquad \ell=0, \nonumber\\
\end{eqnarray}  
Noting that the trace is positive trivially, the condition of positivity translates to the determinant of the matrix being positive:

\begin{eqnarray}
	2~\text{Im}(\mt_1^{\ell=0}+\mt_2^{\ell=0})\geq |\mt_1^{\ell=0}+\mt_2^{\ell=0}|^2\geq 0 \nonumber\\
\end{eqnarray}

Similarly, 2nd matrix of \eqref{photonpe} and \eqref{photonpo} gives us the following 
\begin{eqnarray}
	&\text{Im}T^\ell_3\geq |T^\ell_3|^2 \geq 0, \qquad \ell=2m+1, 2m+3,\dots.\nonumber\\
	&2\text{Im}(T^\ell_1-T^\ell_2)\geq |T^\ell_1-T^\ell_2|^2 \geq 0, \qquad \ell=0,2,4,\dots. 
\end{eqnarray}


Now let us consider the conditions coming from the third matrix in \eqref{photonpe}. We find that analysing  the $2 \times 2$ principal minors is sufficient for our purposes and we obtain,  
\begin{eqnarray}
	&\text{Im}T^\ell_3\geq |T^\ell_3|^2 \geq 0, \qquad \ell=2m, 2m+2,\dots,\nonumber\\
	&2\text{Im}(T^\ell_1+T^\ell_2)\geq |T^\ell_1+T^\ell_2|^2 \geq 0, \qquad \ell=2, 4,\dots,\nonumber\\
	&1\geq 4|T^\ell_5|^2\geq 0, \qquad \ell=2m, 2m+2,\dots\nonumber\\
\end{eqnarray}

\subsubsection{Parity violating theories}
For this case, the assumption that parity even and odd states do not mix no longer holds true. This leads to the modification of the unitarity equations, 

\begin{eqnarray}\label{oddphoton}
	\begin{pmatrix}
		{}_{\text{in}}\langle 1'|1\rangle_{\text{in}}& {}_{\text{in}}\langle 1'|3\rangle_{\text{in}} & {}_{\text{in}}\langle 1'|1\rangle_{\text{out}}& {}_{\text{in}}\langle 1'|3\rangle_{\text{out}}\\
		{}_{\text{in}}\langle 3'|1\rangle_{\text{in}}& {}_{\text{in}}\langle 3'|3\rangle_{\text{in}} & {}_{\text{in}}\langle 3'|1\rangle_{\text{out}}& {}_{\text{in}}\langle 3'|3\rangle_{\text{out}}\\
		{}_{\text{out}}\langle 1'|1\rangle_{\text{in}}& {}_{\text{out}}\langle 1'|3\rangle_{\text{in}} & {}_{\text{out}}\langle 1'|1\rangle_{\text{out}}& {}_{\text{out}}\langle 1'|3\rangle_{\text{out}}\\
		{}_{\text{out}}\langle 3'|1\rangle_{\text{in}}& {}_{\text{out}}\langle 3'|3\rangle_{\text{in}} & {}_{\text{out}}\langle 3'|1\rangle_{\text{out}}& {}_{\text{out}}\langle 3'|3\rangle_{\text{out}}
	\end{pmatrix}&\succeq & 0,\qquad \ell=0 \nonumber\\
	\begin{pmatrix}
		{}_{\text{in}}\langle 2'|2\rangle_{\text{in}}& {}_{\text{in}}\langle 2'|2\rangle_{\text{out}} \\
		{}_{\text{out}}\langle 2'|2\rangle_{\text{in}}& {}_{\text{out}}\langle 2'|2\rangle_{\text{out}} \\
	\end{pmatrix}&\succeq & 0,\qquad \ell=2m+1, 2m+3,\dots \nonumber\\
	\begin{pmatrix}
		{}_{\text{in}}\langle 1'|1\rangle_{\text{in}}& {}_{\text{in}}\langle 1'|2\rangle_{\text{in}} & {}_{\text{in}}\langle 1'|3\rangle_{\text{in}} & {}_{\text{in}}\langle 1'|1\rangle_{\text{out}} & {}_{\text{in}}\langle 1'|2\rangle_{\text{out}}& {}_{\text{in}}\langle 1'|3\rangle_{\text{out}}\\
		{}_{\text{in}}\langle 2'|1\rangle_{\text{in}}& {}_{\text{in}}\langle 2'|2\rangle_{\text{in}} & {}_{\text{in}}\langle 2'|3\rangle_{\text{in}} & {}_{\text{in}}\langle 2'|1\rangle_{\text{out}} & {}_{\text{in}}\langle 2'|2\rangle_{\text{out}}& {}_{\text{in}}\langle 2'|3\rangle_{\text{out}}\\
		{}_{\text{in}}\langle 3'|1\rangle_{\text{in}}& {}_{\text{in}}\langle 3'|2\rangle_{\text{in}} & {}_{\text{in}}\langle 3'|3\rangle_{\text{in}} & {}_{\text{in}}\langle 3'|1\rangle_{\text{out}} & {}_{\text{in}}\langle 3'|2\rangle_{\text{out}}& {}_{\text{in}}\langle 3'|3\rangle_{\text{out}}\\
		{}_{\text{out}}\langle 1'|1\rangle_{\text{in}}& {}_{\text{out}}\langle 1'|2\rangle_{\text{in}} & {}_{\text{out}}\langle 1'|3\rangle_{\text{in}} & {}_{\text{out}}\langle 1'|1\rangle_{\text{out}} & {}_{\text{out}}\langle 1'|2\rangle_{\text{out}}& {}_{\text{out}}\langle 1'|3\rangle_{\text{out}}\\
		{}_{\text{out}}\langle 2'|1\rangle_{\text{in}}& {}_{\text{out}}\langle 2'|2\rangle_{\text{in}} & {}_{\text{out}}\langle 2'|3\rangle_{\text{in}} & {}_{\text{out}}\langle 2'|1\rangle_{\text{out}} & {}_{\text{out}}\langle 2'|2\rangle_{\text{out}}& {}_{\text{out}}\langle 2'|3\rangle_{\text{out}}\\
		{}_{\text{out}}\langle 3'|1\rangle_{\text{in}}& {}_{\text{out}}\langle 3'|2\rangle_{\text{in}} & {}_{\text{out}}\langle 3'|3\rangle_{\text{in}} & {}_{\text{out}}\langle 3'|1\rangle_{\text{out}} & {}_{\text{out}}\langle 3'|2\rangle_{\text{out}}& {}_{\text{out}}\langle 3'|3\rangle_{\text{out}}\\
	\end{pmatrix}&\succeq & 0,\qquad \ell= 2,4,6\cdots \nonumber\\
\end{eqnarray}

The analysis of these matrices is tedious and we find the following constraints
\begin{eqnarray}
	\text{Im}T^\ell_3\geq |T^\ell_3|^2 \geq & 0, \qquad \ell=2m,2m+1\dots,\nonumber\\
	2\text{Im}(T^\ell_1+\frac{1}{2}(T^\ell_2+ T'^\ell_2))\geq |T^\ell_1+\frac{1}{2}(T^\ell_2+ T'^\ell_2)|^2 \geq & 0, \qquad \ell=0,2,4,6,\dots,\nonumber\\
	2\text{Im}(T^\ell_1-\frac{1}{2}(T^\ell_2+ T'^\ell_2))\geq |T^\ell_1-\frac{1}{2}(T^\ell_2+  T'^\ell_2)|^2 \geq & 0, \qquad \ell=0,2,4,6,\dots,\nonumber\\
	|T^\ell_2- T'^\ell_2|^2\leq & 4, \qquad \ell=0,1,2,\dots\nonumber\\
	|T^\ell_5- T'^\ell_5|\leq & 1, \qquad \ell=2m, 2m+1,\dots\nonumber\\
	|T^\ell_5+T'^\ell_5|\leq & 1 \qquad \ell=2m, 2m+1,\dots\nonumber\\
\end{eqnarray}

From the last three conditions listed above, it seems that linear unitarity analysis doesn't fix the sign of $\rho^\ell_2-\rho'^\ell_2, \rho^\ell_5 \pm \rho'^\ell_5$ and perhaps a more thorough investigation is required \cite{sasha}. Hence, in this work, we do not attempt to bound the parity violating amplitudes.

\subsection{Massive Majorana fermions}
The unitarity conditions for massive Majorana fermions were spelt out in \cite{spinpen}. Let us quickly review them for completeness. Recalling the fermion amplitudes $\{\Phi_i\}$ defined in \eqref{MMamp}, the corresponding partial amplitudes are denoted as $\{\Phi_i^\ell\}$. Then, following the similar arguements as in the previous subsection, one arrives at the  unitarity conditions as follows:

\begin{enumerate}
	\item 
	\be 
	\begin{pmatrix}
		1& 1-i\left[\Phi_1^{0*}(s)-\Phi_2^{0*}(s)\right]\\
		1+i\left[\Phi_1^{0}(s)-\Phi_2^{0}(s)\right]& 1 \\
	\end{pmatrix}\succeq  0,. \ee
	\\
	The positivity of the determinant of the matrix then gives 
	\be 
	2\,\text{Im.}\,\left[\Phi_1^{0}(s)-\Phi_2^{0}(s)\right]\ge \left|\Phi_1^{0}(s)-\Phi_2^{0}(s)\right|^2.
	\ee 
	\item 
	\be 
	\begin{pmatrix}
		1& 1-i\left[\Phi_1^{\ell*}(s)+\Phi_2^{\ell*}(s)\right]\\
		1+i\left[\Phi_1^{\ell}(s)+\Phi_2^{\ell}(s)\right]& 1 \\
	\end{pmatrix}\succeq  0,\qquad \ell\ge0\,\, (\text{even}), \ee \\
	implying 
	\be 
	2\,\text{Im.}\,\left[\Phi_1^{\ell}(s)+\Phi_2^{\ell}(s)\right]\ge \left|\Phi_1^{\ell}(s)+\Phi_2^{\ell}(s)\right|^2,\qquad \ell\ge0\,\, (\text{even}).
	\ee 
	\item 
	\be 
	\begin{pmatrix}
		1& 1-2i\,\Phi_3^{\ell*} \\
		1+2i\,\Phi_3^{\ell}& 1 \\
	\end{pmatrix}\succeq  0,\qquad \ell\ge 1\,\,(\text{odd}), \ee\\ with straightforward consequence 
	\be 
	\text{Im.}\,\Phi_3^\ell(s)\ge |\Phi_3^\ell(s)|^2\ge 0\qquad \ell\ge 1\,\,(\text{odd}).
	\ee 
	\item 
	\be 
	\begin{pmatrix}
		\mathbb{I}_{2\times 2} & \mathbb{S}_{2\times 2}^{\ell\dagger}\\
		\mathbb{S}_{2\times 2}^{\ell} & \mathbb{I}_{2\times 2}\\
	\end{pmatrix}\succeq  0,\qquad \ell=2,4,6,\dots 
	\ee with
	\be
	\mathbb{I}_{2\times 2}:=\begin{pmatrix}
		1& 0 \\
		0& 1 \\
	\end{pmatrix},\qquad \mathbb{S}_{2\times 2}^{\ell}(s):=\begin{pmatrix}
		1+i\left[\Phi_1^\ell(s)-\Phi_2^\ell(s)\right]& 2i\,\Phi_5^{\ell*} \\
		2i\,\Phi_5^\ell(s)& 1+2i\,\Phi_3^\ell(s) \\
	\end{pmatrix}
	\ee 
	We get, 
	
	$$\text{Det}[\mathbb{S}_{2\times 2}^{\ell}(s)]\geq 0$$
\end{enumerate}

\section{Representations of Wigner-$d$ functions}\label{repsofwigd} 
In this appendix, we give some convenient representations of the Wigner-$d$ functions that we used in the main text  for computational ease. For the photons we use, 

\begin{eqnarray}\label{wignerphoton}
	d^J_{0,0}\left(\cos^{-1}{\sqrt{\xi(s'_1,a)}}\right)&=&\, _2F_1\left(-J,J+1;1;\frac{1}{2} \left(1-\sqrt{\xi }\right)\right), \nonumber\\
	d^J_{2,2}\left(\cos^{-1}{\sqrt{\xi(s'_1,a)}}\right)&=&\frac{1}{24} \left(6 \left(\sqrt{\xi }+1\right)^2 \, _2F_1\left(2-J,J+3;1;\frac{1}{2} \left(1-\sqrt{\xi }\right)\right)\right.\nonumber\\
	d^J_{2,-2}\left(\cos^{-1}{\sqrt{\xi(s'_1,a)}}\right)&=&\frac{\left(1-\sqrt{\xi }\right)^2 \Gamma (J+3) \, _2F_1\left(2-J,J+3;5;\frac{1}{2} \left(1-\sqrt{\xi }\right)\right)}{96 \Gamma (J-1)} \nonumber\\
	d^J_{4,4}\left(\cos^{-1}{\sqrt{\xi(s'_1,a)}}\right)&=&\frac{1}{16} (\sqrt{\xi} +1)^4 \, _2F_1\left(4-J,J+5;1;\frac{1-\sqrt{\xi} }{2}\right)\nonumber\\
	d^J_{4,-4}\left(\cos^{-1}{\sqrt{\xi(s'_1,a)}}\right)&=&\frac{(1-\sqrt{\xi} )^4 \Gamma (J+5) \, _2F_1\left(4-J,J+5;9;\frac{1-\sqrt{\xi} }{2}\right)}{645120 \Gamma (J-3)}
\end{eqnarray}

\section{ Massive Majorana fermions: Locality constraints}
We would like to use our techniques to constrain the EFTs involving Majorana fermions. However in this work we will only spell out the locality constraints and leave a careful analysis for future work. Now we will list the locality constraint for the amplitude listed in \eqref{psi1}. Let us assume a low energy EFT expansion of the form 
\bea 
\Psi_1(s_1,s_2,s_3)=\sum_{n,m} \mw^\psi_{p,q} x^p y^q
\eea 
Tha partial wave decomposition reads, 
\bea\label{dispfermion} 
(\Psi_1(s_1,s_2)) =\alpha^\psi_0+\frac{1}{\pi}\int_{M^2}^{\infty}\frac{d s_1}{s'_1} \ma^\psi\left(s_1^{\prime} ; s_2^{(+)}\left(s_1^{\prime} ,a\right)\right) H\left(s_1^{\prime} ;s_1, s_2, s_3\right),\nonumber\\
\eea
where $H\left(s_1^{\prime} ;s_1, s_2, s_3\right)$ is defined in \eqref{kernel} and the partial wave decomposition reads
\bea\label{crossdispexpfermion}
\ma^\psi\left(s_1^{\prime} ; s_2^{(+)}\left(s_1^{\prime} ,a\right)\right)&=& \sum_{J=0,2,4,\cdots} 16 \pi (2 J+1) \rho^{1, \psi}_J d^{J}_{0,0}(\theta) +\sum_{J=1,2,3,\cdots} 64 \pi (2 J+1) (-1)^{J+1} \rho^{3, \psi}_J d^{J}_{1,-1}(\theta) \nonumber\\
&&-\sum_{J=0,2,4,\cdots} 16 \pi (2 J+1) \rho^{5, \psi}_J d^{J}_{0,1}(\theta)
\eea  
where $(\cos\theta)^2=\xi(s'_1, a)=\xi_0+ 4 \xi_0\left(\frac{a}{s'_1-a}\right)$ and $\xi_0=\frac{s_1^2}{s_1-M^2}$ while $\rho^{i, \psi}_J$ are the respective spectral functions which appear as coefficients in partial wave expansion of the absorptive parts $\ma^\psi$. We have also used that $\rho^{4, \psi}_J=(-1)^{J+1}\rho^{3, \psi}_J$\cite{spinpen}. The coefficients $\mw^\psi_{p,q}$ can be obtained from the amplitude via the inversion formula \cite{Sinha:2020win}
\begin{align} \label{Winvf}
\mw^\psi_{n-m,m}&=\int_{M^2}^\infty \frac{ds_1}{2\pi s_1^{2n+m+1}}16\pi\left[\sum_{J=0,2,4\cdots} (2J+1)\,\rho^{1, \psi}_J(s_1)\,\mathcal{\Psi}_{n,m}^{(1,J)}(s_1)\right. \nonumber\\
&\hspace{2.5 cm}\left.+\sum_{J=1,2,3\cdots} (2J+1)\,\rho^{3, \psi}_J(s_1)\,\mathcal{\Psi}_{n,m}^{(3,J)}(s_1) \sum_{J=0,2,4\cdots} (2J+1)\,\rho^{5, \psi}_J(s_1)\,\mathcal{\Psi}_{n,m}^{(5,J)}(s_1)\right],
\end{align}
with 
\be \label{bellmassivef}
\mathcal{\Psi}_{n,m}^{(i,J)}(s_1)=2\sum_{j=0}^{m}\frac{(-1)^{1-j+m}r_{J}^{(i,j)}\left(\x_0\right) \left(4\xi_0 \right)^{j}(3 j-m-2n)\Gamma(n-j)}{ j!(m-j)!\Gamma(n-m+1)},\qquad \x_0:=\frac{s_1^2}{(s_1-2\m/3)^2}.
\ee 
The functions $\{p_\ell^{(j)}(\x_0)\}$ are derivatives of respective Wigner-$d$ functions
\bea 
r_J^{(1, j)}(\x_0)&:=& \left.\frac{\partial^j d^J_{0,0}(\sqrt{z})}{\partial z^j}\right|_{z=\x_0}, \qquad r_J^{(3, j)}(\x_0):= (-1)^{J+1}\left.\frac{\partial^j d^J_{1,-1}(\sqrt{z})}{\partial z^j}\right|_{z=\x_0} \nonumber\\
&& r_J^{(5, j)}(\x_0):= \left.\frac{\partial^j d^J_{0,1}(\sqrt{z})}{\partial z^j}\right|_{z=\x_0}
\eea 
The locality constraints then are simply,
\bea
\mw^\psi_{n-m,m}=0,\qquad \forall~~ n<m \,.
\eea

\section{EFT expansion of the crossing basis elements}\label{eecbs}
In this section we give the low energy EFT expansion for the crossing basis elements \eqref{crosssymmbasis} for some special case used in the main text. 
For the photon case we have the following:
\begin{align}\label{EFTexpappphoton}
F_1(s_1,s_2,s_3)&= 2 f_2 x-f_3 y+4 f_4 x^2-2f_5 xy +f_{6,1} y^2+8f_{6,2}x^3+\cdots \,\nonumber\\
F_2(s_1,s_2,s_3)&= 2 g_2 x -3 g_3 y +2 (g_{41}+2 g_{42})x^2-(5 g_{5,1}+3 g_{5,2}) xy \nn\\
&\hspace{3.5 cm}+ 3 \left(g_{6,1}-g_{6,2}+g_{6,3}\right) y^2 + 2 g_{6,1}x^3 + \cdots  \,, \nonumber\\
F_3(s_1,s_2,s_3)&= 2 h_2 x-h_3 y+4 h_4 x^2-2h_5 xy +h_{6,1} y^2+8h_{6,2}x^3+\cdots \,\nonumber\\
F_4(s_1,s_2,s_3)&=\frac{1}{3}(g_2 +(g_{41}+2g_{42})x+ (g_{5,2}-g_{5,1})y + g_{6,1} x^2 + \cdots)\,,\nonumber\\
F_5(s_1,s_2,s_3)&=\frac{1}{3}(g_3 x - g_{4,1} y + g_{5,1}x^2-(2 g_{6,1}+g_{6,2})xy)+\cdots\,.
\end{align}
As alluded to in the main text in the discussion below eq.\eqref{crosssymmbasis} if an amplitude is $t-u$ symmetric then the crossing basis has only 3 elements $\{f(s_1,s_2,s_3),g_1(s_1,s_2,s_3),h_1(s_1,s_2,s_3)\}$ and $F_2,F_4,F_5$ above correspond to these for the $t-u$ symmetric amplitude, $\mt_1 (s_1, s_2, s_3)$, while $F_1$ and $F_3$ correspond to the fully crossing symmetric helicity amplitudes $\mt_2 (s_1, s_2, s_3)$ and $\mt_5 (s_1, s_2, s_3)$ respectively
\bea
\mt_1(s_1,s_2,s_3)&=&g_2 s_1^2+ g_3 s_1^3 +s_1^4 g_{4,1}+\left(s_1^2\right)(s_1^2+s_2^2+s_3^2) g_{4,2}+s_1^5 g_{5,1}+g_{5,2} \left(s_2^2 s_3+s_2 s_3^2\right) (s_2+s_3)^2 \nonumber\\
&&+ s_1^6 g_{6,1}+g_{6,2} \left(\left(s_2^3 s_3+s_2 s_3^3\right) (s_2+s_3)^2\right)+g_{6,3} \left(\left(s_2^2 s_3^2\right) (s_2+s_3)^2\right)\nonumber\\
\mt_2(s_1,s_2,s_3)&=&f_2 (s_1^2+s_2^2+s_3^2) + f_3 (s_1 s_2 s_3) + f_4 (s_1^2+s_2^2+s_3^2)^2 + f_5  (s_1^2+s_2^2+s_3^2)(s_1 s_2 s_3)+ f_{6,1}(s_1 s_2 s_3)^2\nonumber\\
&&f_{6,2}(s_1^2+s_2^2+s_3^2)^3\nonumber\\
\mt_5(s_1,s_2,s_3)&=&h_2 (s_1^2+s_2^2+s_3^2) + h_3 (s_1 s_2 s_3) + h_4 (s_1^2+s_2^2+s_3^2)^2 + h_5  (s_1^2+s_2^2+s_3^2)(s_1 s_2 s_3)+ h_{6,1}(s_1 s_2 s_3)^2\nonumber\\
&&h_{6,2}(s_1^2+s_2^2+s_3^2)^3
\eea 

For the parity violating case we have two additional elements corresponding to the crossing symmetric amplitudes $T_{2'}$ and $T_{5'}$. We note that $F_i$ for $i=1,2,3$ are the same ones considered in  \cite{vichi}, in this chapter additionally we also use $F_4$ and $F_5$. 

\section{Low spin dominance and Graviton scattering in String theory} \label{appG}
In this appendix we will show that the range of $a$, that we had obtained from our Locality constraint analysis in subsection \ref{gravitonboundTR}, is satisfied for type II string theory amplitude. For convenience we write the explicit amplitude
\begin{eqnarray}
	\mt_1 &=&\frac{s_1^3 \left(\frac{\Gamma \left(1-\frac{\alpha  s_1}{2}\right) \Gamma \left(1-\frac{\alpha  s_2}{2}\right) \Gamma \left(1-\frac{\alpha  s_3}{2}\right)}{\Gamma \left(\frac{s_1 \alpha }{2}+1\right) \Gamma \left(\frac{s_2 \alpha }{2}+1\right) \Gamma \left(\frac{s_3 \alpha }{2}+1\right)}-1\right)}{s_2 s_3} \nonumber\\
	\mt_3&=&\frac{s_3^3 \left(\frac{\Gamma \left(1-\frac{\alpha  s_1}{2}\right) \Gamma \left(1-\frac{\alpha  s_2}{2}\right) \Gamma \left(1-\frac{\alpha  s_3}{2}\right)}{\Gamma \left(\frac{s_1 \alpha }{2}+1\right) \Gamma \left(\frac{s_2 \alpha }{2}+1\right) \Gamma \left(\frac{s_3 \alpha }{2}+1\right)}-1\right)}{s_1 s_2}\nonumber\\
	\mt_4&=&\frac{s_2^3 \left(\frac{\Gamma \left(1-\frac{\alpha  s_1}{2}\right) \Gamma \left(1-\frac{\alpha  s_2}{2}\right) \Gamma \left(1-\frac{\alpha  s_3}{2}\right)}{\Gamma \left(\frac{s_1 \alpha }{2}+1\right) \Gamma \left(\frac{s_2 \alpha }{2}+1\right) \Gamma \left(\frac{s_3 \alpha }{2}+1\right)}-1\right)}{s_1 s_3}
\end{eqnarray}
Note that we have subtracted out the massless graviton pole. We obtain the $a_i^\ell(s_1)$ by considering the string amplitude as an infinite sum over poles at $s_1=\frac{2(n+1)}{\alpha^{\prime}}$. We want to verify that the constraints \eqref{ineqgraviton} are satisfied for our range of a: $-0.1933M^2\leq a\leq \frac{2}{3}M^2$.  
To be precise, we want to verify,
\bea
& &\mathcal{M}(s_i,a) = \nonumber \\
& &\int_{M^2}^{\infty}\frac{d s_1}{2\pi s_1} \sum_{J=0,2,4,\cdots}(2J+1)\tilde a^{(1)}_J(s_1)f^{(1)}_J(\xi)H(s_1',s_i) +\int_{M^2}^{\infty}\frac{d s_1}{2\pi s_1} \sum_{J=4,6,\cdots}(2J+1)\tilde a^{(2)}_J(s_1)f^{(2)}_J(\xi)H(s_1',s_i)\,\nonumber\\
& & + \int_{M^2}^{\infty}\frac{d s_1}{2\pi s_1} \sum_{J=5,7,\cdots}(2J+1)\tilde a^{(3)}_J(s_1)f^{(3)}_J(\xi)H(s_1',s_i)\,\geq 0,\nonumber\\
& &= \mathcal{M}^{\ell \le 2}(s_i,a)+\mathcal{M}^{\ell >2}(s_i,a) \ge 0
\eea
for the range of a which is where $\mathcal{M}^{\ell \le 2}(s_i,a) \ge 0$. This is what we called weak LSD in the main text. 


\begin{figure}[H]
	\centering
	\begin{subfigure}[b]{0.32\textwidth}
		\centering
		\includegraphics[width=\textwidth]{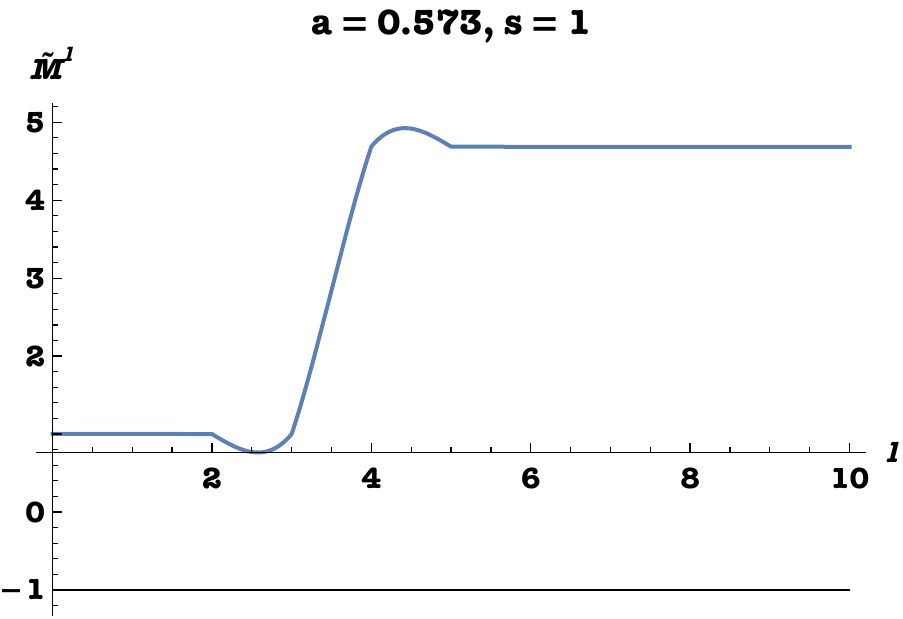}
		\caption{}
	\end{subfigure}
	\hfill
	\begin{subfigure}[b]{0.32\textwidth}
		\centering
		\includegraphics[width=\textwidth]{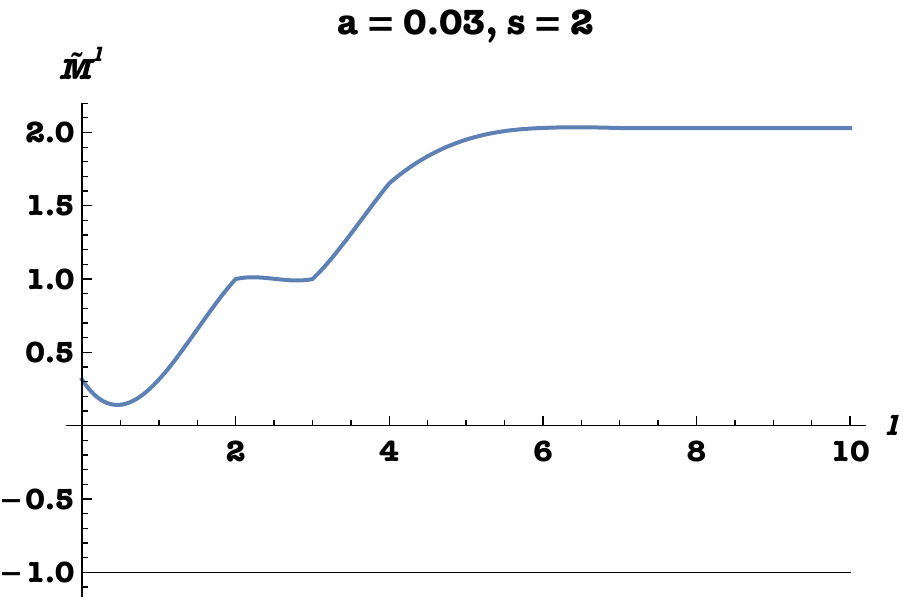}
		\caption{}
	\end{subfigure}
    \hfill
    \begin{subfigure}[b]{0.32\textwidth}
    \centering
    \includegraphics[width=\textwidth]{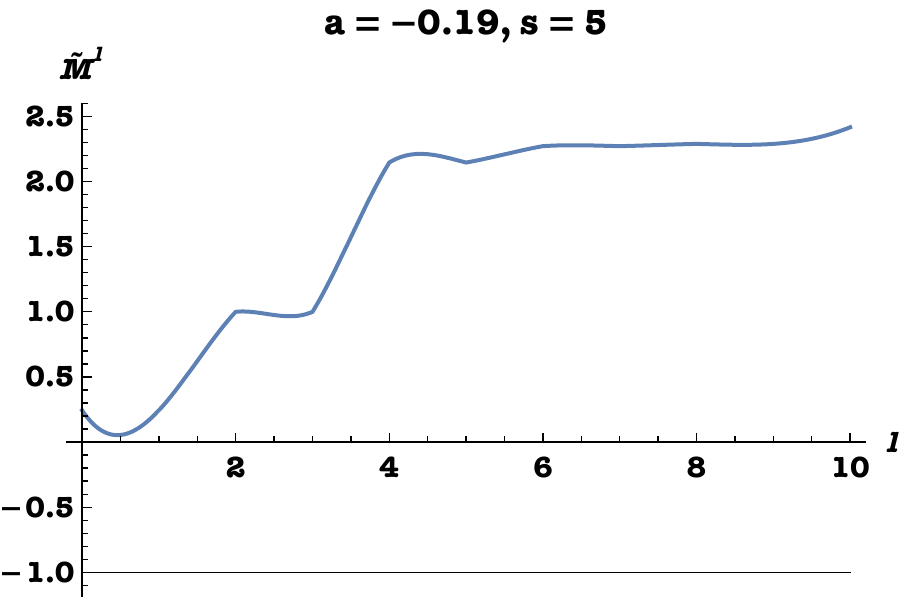}
    \caption{}
    \end{subfigure}
	\caption{$\tilde{\mathcal{M}}^{\ell}(s_i,a)$ vs $\ell$~for various values of $a$.}
	\label{LSDstring}
\end{figure}
\noindent We consider, where $\tilde a^{(i)}_J(s_1)=\frac{a^{(i)}_J(s_1)}{s_1^2}$ and $f^{(i)}_J(\xi)$ relevant for spin 2 have been defined around eq \eqref{gravitonbell}.Note that we have dropped the null constraints since a physical amplitude satisfies that by default. We have found that the positivity condition is satisfied for our range ``a". We present our analysis in the figure \ref{LSDstring} above for different values of $a$. The plots show $\tilde{\mathcal{M}}^{\ell}(s_i,a)=\frac{\mathcal{M}^{\ell}(s_i,a)}{\mathcal{M}^{\ell \le 2}(s_i,a)}$ as a function of $\ell$ for various values of $a$. We note that firstly, as advertised, our amplitude is positive for this region of $a$ since $\tilde{\mathcal{M}}^{\ell}(s_i,a)>-1$. Secondly, we note that the maximal contribution to the amplitude occurs between spins 2 and 4 which is consistent with our observation that our analysis in subsection \ref{gravitonboundTR}, indicated a spin 2 dominance. 

\end{subappendices}
\newpage

	\chapter{{Froissart-Martin Bound on Holographic S-Matrix} }\label{mellinbound}	
\section{Introduction}
\begin{figure}[!htb]
	\centering
	\begin{tikzpicture}[scale=0.65]
		\filldraw[
		color=Tomato4,
		fill=OliveDrab1!30,
		line width=1mm,
		] (-5,0)circle[radius=3 cm];
		\shade[shading=radial, outer color=Wheat1, inner color=Salmon1] (-5,0) circle (1.3 cm);
		\draw
		[
		line width=0.3mm,
		decoration={markings, mark=at position 0.40 with {\arrowreversed[line width=1 mm]{stealth}}},
		postaction={decorate}
		] (-4.081,0.919)--(-2.879,2.121);
		\draw
		[
		line width=0.3mm,
		decoration={markings, mark=at position 0.40 with {\arrowreversed[line width=1 mm]{stealth}}},
		postaction={decorate}
		] (-4.081,-0.919)--(-2.879,-2.121);
		\draw
		[
		line width=0.3mm,
		decoration={markings, mark=at position 0.40 with {\arrowreversed[line width=1 mm]{stealth}}},
		postaction={decorate}
		] (-5.919,0.919)--(-7.121,2.121);
		\draw
		[
		line width=0.3mm,
		decoration={markings, mark=at position 0.40 with {\arrowreversed[line width=1 mm]{stealth}}},
		postaction={decorate}
		] (-5.919,-0.919)--(-7.121,-2.121);
		\draw (-7.562,2.562) node{1};
		\draw (-7.562,-2.562) node{2};
		\draw (-2.638,2.562) node{3};
		\draw (-2.638,-2.462) node{4};
		\draw (2.738,2.562) node{1};
		\draw (2.738,-2.562) node{2};
		\draw (7.562,2.562) node{3};
		\draw (7.362,-2.362) node{4};
		\draw (-4.8,-4) node{Scattering in AdS};
		\draw (0.5,0.7) node{$\frac{R}{\ell_P}\to \infty$};
		\draw [-stealth,color=DodgerBlue3,{line width=0.7mm}] (-1.3,0)--(2.3,0);
		\shade[shading=radial, outer color=Wheat1, inner color=Salmon1] (5,0) circle (1.3 cm);
		\draw
		[
		line width=0.3mm,
		decoration={markings, mark=at position 0.40 with {\arrowreversed[line width=1 mm]{stealth}}},
		postaction={decorate}
		] (5.919,0.919)--(7.121,2.121);
		\draw
		[
		line width=0.3mm,
		decoration={markings, mark=at position 0.40 with {\arrowreversed[line width=1 mm]{stealth}}},
		postaction={decorate}
		] (5.919,-0.919)--(7.121,-2.121);
		\draw
		[
		line width=0.3mm,
		decoration={markings, mark=at position 0.40 with {\arrowreversed[line width=1 mm]{stealth}}},
		postaction={decorate}
		] (4.081,0.919)--(2.879,2.121);
		\draw
		[
		line width=0.3mm,
		decoration={markings, mark=at position 0.40 with {\arrowreversed[line width=1 mm]{stealth}}},
		postaction={decorate}
		] (4.081,-0.919)--(2.879,-2.121);
		\draw (5,-4) node{Scattering in Flat space};
	\end{tikzpicture}
	\caption{Transition from AdS to Flat Space}\label{fllim}
\end{figure}
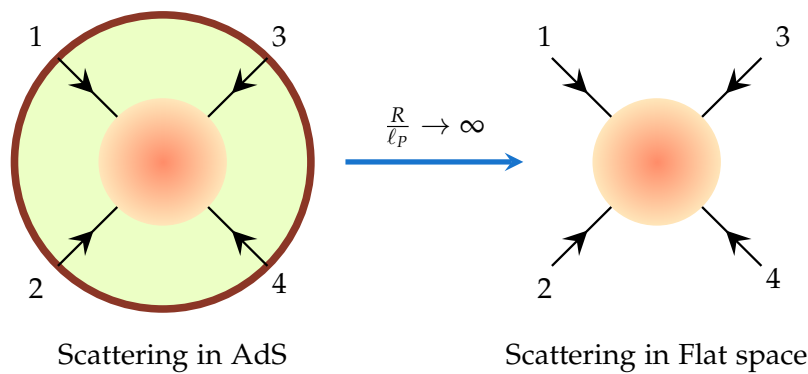
In the previous two chapters, we explored the space of S-matrices with the aid of various physical requirements. Now is it possible to give a concrete construction of such an abstract S-matrix? Lagrangian quantum field theory is obviously the most familiar way to achieve this, but usually, Lagrangian quantum field theory will be perturbative. Is there a non-perturbative way to construct S-matrix? One such answer to this question is provided by the conjectural $AdS/CFT$ holographic duality. In $AdS/CFT$ duality

Penedones \cite{Penedones:2010ue} conjectured that the Mellin representation of CFT correlation functions, also known as \emph{Mellin amplitude}, defines dual bulk flat space S-matrix via the vanishing curvature, or equivalently the large $AdS$ radius, limit of the $AdS/CFT$ correspondence. This limit is dubbed as \emph{flat space limit}. Penedones's conjectured construction gave a prescription for constructing flat space amplitudes for the scattering of massless particles. A similar prescription was conjectured by Paulos \emph{et al.} \cite{Paulos:2016fap} for obtaining amplitudes for the scattering of massive particles from the appropriate Mellin amplitudes. These conjectures were proved over multiple works \cite{Paulos:2011ie, Fitzpatrick:2011dm, Fitzpatrick:2011hu, Liflat, KomatsuSmat}. 
The key strength of this connection is that the analyticity properties of the S-matrix emerge from the analytic properties of the CFT Mellin amplitudes. Further, the unitarity is encoded in the OPE structure of the CFT. Due to the presence of extra symmetries, CFT is much more constrained than the flat space quantum field theories. CFT Mellin amplitudes are better understood non-perturbatively due to the impressive success of the conformal bootstrap programme, see \cite{slavarev, ASrev} for extensive reviews of this rather large body of work,  which skillfully exploits the extra constraints of CFTs. An S-matrix constructed out of flat space limits of CFT Mellin amplitudes will be called a \emph{holographic S-matrix}. Note that in constructing an S-matrix holographically, we have traded off one abstract formulation with another abstract formulation, abstract CFTs,  which we have better control over due to the powerful techniques of the conformal bootstrap programme. 

We now discuss the key points of this connection between  Mellin amplitudes and flat space scattering amplitudes. We will review the details in a while. The Mellin amplitude can be thought of as a representation of scattering in anti-de Sitter space. The CFT lives in $d$ spacetime dimensions while the scattering happens in $d+1$ dimensional AdS space. If the radius of curvature of the AdS space $R$ is much bigger than the Planck length $\ell_P$, then effectively, the cosmological constant is zero, and the scattering can be thought to be occurring in flat spacetime--see fig.\ref{fllim}. Note that the flat space scattering amplitude lives in one spacetime dimension higher than the CFT. In particular, S-matrix elements in Minkowski space $\mathbb{M}^{d,1}$ are obtained as the flat space limit of Mellin amplitudes in CFTs on $d$-dimensional Euclidean space, $\mathbb{E}^d$. We will be interested in $2-2$ flat space scattering of identical massive scalar particles. Such an amplitude can be obtained as the flat space limit of Mellin amplitude for $4-$point correlator of identical scalar primaries $\f$ having dimension $\D_\f$. The flat space limit corresponds to taking $AdS$ radius  $R_{AdS}$ much bigger than the Planck length $\ell_P$. Now the  $AdS/CFT$ dictionary gives the relation between the mass $m$ of the flat space particle and $\D_\f$ of the corresponding scalar primary: $m R_{AdS}\sim \Delta_\phi$. Them the flat space limit amounts to taking $\D_\f$ \emph{parametrically large} in the above relation with  $R_{AdS}/\ell_P\gg 1$ so as to keep $m$ fixed and finite.  

With the aid of this dictionary, various analyticity properties of flat space S-matrix can be understood in terms of analytic structures of the CFT \cite{Fitzpatrick:2011hu}. A natural question then arises as to how the various analytic structures of flat space S-matrix are encoded in a corresponding CFT Mellin amplitude. In this chapter, we attempt to answer this question for the Froissart-Martin bound satisfied by the total scattering cross-section for elastic $2-2$ scattering of massive particles. We prove a Froissart-Martin bound for the holographic S-matrix.  Let us tabulate key rationales of the derivation.

%

\begin{enumerate}
	\item As in the Froissart bound derivation, we start with the absorptive (imaginary) part of the amplitude. However, unlike in flat space where there is a cut in the complex $S$ plane, in the Mellin variable $s$, we have an infinite set of poles in the Mellin amplitude. In the imaginary part, these poles become a sum of delta functions \cite{Fitzpatrick:2011hu}.
	\item The flat space amplitude is expanded in terms of the Gegenbauer polynomials, which are the generalizations of the Legendre polynomials. The polynomials are indexed by the spin quantum number $\ell$. The Mellin amplitude is expanded in terms of the so-called Mack polynomials which are indexed by a spin quantum number $\ell$, as well as the dimension $\Delta$ of the exchanged conformal primary. In the flat space limit to be described below, however, the Mack polynomials go over to the Gegenbauer polynomials.
	\item In the flat space derivation, an assumption is made about the polynomial boundedness of the amplitude inside the Martin ellipse. We will make a similar assumption about the Mellin amplitude\footnote{We made explicit checks using the mean field theory OPE coefficients for the validity of this assumption.}. This assumption effectively leads to the sum over $\ell$ being cut-off. Typically the cut-off $L$ takes the form
	$$
	L\propto \sqrt{s}\ln \frac{s}{s_0}\,,
	$$
	with the proportionality constant depending on the power assumed in the polynomial boundedness.
	\item The key difference will be that unlike the partial wave unitarity bounds that are assumed in the flat space derivation, we will have to contend with the sum over $\Delta$. Here, we will make use of the fact that in order to reproduce the identity exchange in the crossed channel, the operator product expansion coefficients governing the sum over $\Delta$ are controlled by the so-called (complex) Tauberian theorems\cite{Mukhametzhanov:2018zja,Pappadopulo:2012jk}. 
	A final point to mention is that we will be dealing with averaged bounds, following, for example, \cite{Yndurain:1970mz}. This is because we will be dealing with distributions, and it makes more sense to talk about integrated quantities. This will turn out to be essential in making the range of twists in the $\Delta$ sum on the CFT side to be finite.
\end{enumerate}

The last point creates a very important difference in the form of the Froissart$_{AdS}$ bounds we will find. Summarizing our results:
\begin{itemize}
	\item We find that the coefficient in front of the bound, i.e., $\pi/\m^2$ is {\it exactly} this for 4 flat spacetime dimensions {\it except} that $\m$ here can also be the mass of the external particle while the mass parameter present  in the original Froissart bound formula 
	\be 
	\sigma_{tot}<\frac{\pi}{\m^2} \log^2 \frac{s}{s_0}
	\ee 
	is the mass of the lightest exchanged particle in $t$ channel, usually taken to be the pion mass.
	\item For $CFT_{d}$ with $d\leq 4$, the form of the bound is the same as flat space higher dimensional generalizations. However, for $d=2$ the coefficient in front is {\it lower} than the flat space derivation, for $d=3$ it is {\it identical} as mentioned above, while for $d=4$ it is {\it bigger}. 
	\item For $d>4$ the form of the bound is different, as we will discuss in our derivation below. This has important implications for the form of the polynomial boundedness. What happens is that first one assumes that the amplitude is $|\mm(s,t)|<s^n$ bounded for some unspecified $n$ for $t$ inside the Martin ellipse. Then, this leads to the Froissart bound (the $n$ enters in the coefficient in front). Suppose that the result for the absorptive part is $\ma_M(s,0)<c s^a \ln^{d-1} s/s_0$. At this stage, one can argue using the Phragmen-Lindeloff theorem \cite{Martin:1967zzd} that $n\leq \floor{a}+1$, where $\floor{\,\,}$ denotes the usual floor function. In the flat space derivation $\floor{a}=1$. However, we will find $\floor{a}> 1$, and hence $n>2$ for $d>6$.
\end{itemize}

The chapter is organized as follows. We give a flashing recapitulation of flat space results in section \ref{flatrev}. In section \ref{Mellinrev}, we review Mellin amplitudes in CFTs, including the flat space limit reviewing essential results from \cite{Paulos:2016fap}. In section \ref{bounder}, we turn to bound the absorptive part of the Mellin amplitude. We derive dispersion relations in section \ref{Mellindisp} and constrain the number of subtractions needed. There are appendices supplementing the calculations in the main text.
\vspace{ 0.5 cm}

{\it Warning: We will use the convention $h=d/2$ in many places. This unfortunate convention is somewhat standard in the CFT literature.}

\section{Flat space Froissart-Martin bound: A quick look back}\label{flatrev}
We have already discussed Froissart-Martin bound for $2-2$ elastic scattering in $4$ spacetime dimensions. The analysis can be generalized to  arbitrary spacetime dimensions in starightforward manner. In fact the analyticity properties, namely the polynomial boundedness and the existence of the Martin ellipses, remain unaltered in general spacetime dimensions \cite{Maha1, Maha2}. We write the general dimensional expressions for the Froissart-Martin bound. We will use a particular normalization. As before, we will be considering scattering of identical massive bosons of mass $m$ in $d+1$ spacetime dimensions.   We will, however, use a different normalization than that in chapter \ref{smatrev}. We will stick to the normalization of \cite{Chaichian:1987zt}. 

\subsection{Partial wave expansion} The $d+1$ dimensional flat space scattering amplitude admits a partial wave expansion in terms of generalized spherical functions spanning the representation space of $SO(d,1)$ corresponding to the unitary irreducible representations of maximally compact subgroup $SO(d)$.  The $2-2$ scattering amplitude $\mt(s,t)$ admits the following $S-$channel partial wave expansion
in a basis of Gegenbauer polynomials:
\be \label{partialF1}
\mt(s,t)=\f(s)\sum_{\substack{\ell=0\\\ell~\text{even}}}^\infty \widetilde{f}_\ell(s)~
\frac{C_\ell^{(h-1)}(1)}{N_\ell^{(h-1)}}
~C_\ell^{(h-1)}\left(Z_{s}=1+\frac{2s}{s-4m^2}\right),
\ee  
with $C_\ell^{(h-1)}$ being the Gegnbauer polynomial and $\{\widetilde{f}_\ell(s)\}$ are the \emph{partial wave coefficients}. Here,
\begin{align}\label{flatnorm}
	\begin{split}
		h&=\frac{d}{2},\\
		N_\ell^{(h-1)}&=\frac{2^{3-2h}\p\G(\ell+2h-2)}{\ell!(h-1+\ell)\G^2(h-1)},\\
		\f(s)&=\G\left(h-\frac{1}{2}\right)(16\p)^{\frac{2h-1}{2}}s^{\frac{3-2h}{2}},\\
		C_\ell^{(h-1)}(1)&=\frac{(2h-2)_\ell}{\G(\ell+1)},
	\end{split}
\end{align}  
where $(a)_b$ denotes the Pochhammer symbol.

The partial wave coefficients $\{f_\ell(s)\}$ satisfy the unitarity bound 
\be 
0\le |\widetilde{f}_\ell(s)|^2\le \text{Im}[\widetilde{f}_\ell(s)]\le 2\,.
\ee
\subsection{Martin analyticity and polynomial boundedness } 
 For $s$ in the cut plane, the absorptive part of the scattering amplitude, $\ma_s(s,t)$, is analytic inside Martin ellipse, $\me_M$ in complex $t$ plane. The ellipse has foci at $t=0,\, 4m^2-s$ and the right extremity at $t=R>0$, where $R$ is \emph{independent} of $s$. It can be proved that $R=4m^2$ \cite{Martin1}.
 
 We further have the polynomial boundedness satisfied by the scatterign amplitude. The $s-$channel polynomial boundedness reads 
 that there is a positive integer $N$ such that 
 \be 
 |\ma_s(s, t=t_0)|<c\, s^N,
 \ee where $t_0$ lies in the interior of the Martin ellipse $\me_M$. However, at the final stage we can take $t_0\to R$.  A more rigorous way to implement the polynomial boundedness is in the form of the following integral convergence criterion 
 \be 
 \int_{4m^2}^\infty \frac{d\bar{s}}{\bar{s}^{N+1}}\, \ma_s(s, t=t_0)<\infty.
 \ee
 \subsection{The Froissart-Martin bound}
 We can carry out the the same analysis as done in \ref{Yndurain} to obtain the general spacetime dimensional high energy bound on the absorptive part of the forward amplitude 
 \be \label{Imbound}
 \ma_s(s,t=0)\le 2^{4h-3}\p^{h-2}\frac{\G(h-1)\G^2(h-1)}{(2h-1)\G^2(2h-2)}\left(\frac{N-1}{\sqrt{R}\cos \varphi_0}\right)^{2h-1} s(\ln s)^{2h-1},\qquad |s|\to\infty. 
 \ee  Finally, one can obtain general dimensional Froissart-Martin bound on total scattering cross-section by making use of optical theorem.  
\section{Mellin amplitudes in CFTs and Holographic S-matrix}\label{Mellinrev}

\subsection{Definitions and conventions}
Mellin amplitudes for CFT correlators were introduced by Mack \cite{Mack:2009mi, Mack:2009gy}. In this section, we will review the analogy between conformal correlation function and scattering amplitude \cite{Penedones:2010ue,Paulos:2011ie,Costa:2012cb} via the AdS/CFT correspondence. In particular, one can consider the Mellin amplitude as the \enquote{scattering amplitude in AdS}. 

The Mellin amplitude associated with connected part of the $n$-point function of scalar primary operators is defined by, 
\be \label{Meldef1}
G(x_i)=\lan \mo_1(x_1)\dots\mo_n(x_n) \ran_c=\int[d\d] \mm(\d_{ij})~\prod_{1\le i<j\le n}\frac{\G(\d_{ij})}{(x_{ij}^2)^{\d_{ij}}},
\ee where the integral runs parallel to the imaginary axis and is to be understood in the sense of Mellin-Barnes contour integral. Conformal invariance constrains the integration variables $\{\d_{ij}\}$ to satisfy, 
\be \label{melconstr}
\d_{ij}=\d_{ji},~~~\d_{ii}=-\D_i,~~~\sum_{j=1}^n\d_{ij}=0,
\ee $\D_i$ being the scaling dimension of the operator $\mo_i$. Due to these constraints, there are $n(n-3)/2$ independent variables upon which the Mellin amplitude $\mm$ depends. Clearly, for four-point function i.e., $n=4$, the number of independent variables is $2$. 

Now let us focus on the problem at hand for which we will consider $4-$point correlator of identical scalar primaries $\f$ with dimension $\D_\f$. The reduced correlator $\mg(u,v)$ is defined by, 
\be \label{guvdef}
\lan\f(x_1)\f(x_2)\f(x_3)\f(x_4)\ran=\frac{1}{x_{12}^{2\D_\f}x_{34}^{2\D_\f}}\mg(u,v)\,,
\ee 
where $u, v$ are the conformal cross-ratios given by
$
u=\frac{x_{12}^2x_{34}^2}{x_{13}^2x_{24}^2},~v=\frac{x_{14}^2x_{23}^2}{x_{13}^2x_{24}^2}.
$
Now we define the \enquote{Mellin variables} $(s_M,t_M)$ as: 
\begin{align}\label{stdef}
	\begin{split}
		&\d_{12}=\d_{34}=\frac{\D_\f}{2}-s_M,\\
		&\d_{14}=\d_{23}=\frac{\D_\f}{2}-t_M,\\
		&\d_{13}=\d_{24}=s_M+t_M.
	\end{split}
\end{align} Note that this definition differs from the ones in \cite{Gopakumar:2018xqi} by a shift of $\D_\f/2$. The reduced correlator now has the Mellin space represenatation, 
\be \label{MElldef}
\mg(u,v)=\int_{-i\infty}^{i\infty}\frac{ds_M}{2\p i}\frac{dt_M}{2\p i} u^{s_M+\D_\f/2}v^{t_M-\D_\f/2}\m(s_M,t_M)\mm(s_M,t_M),
\ee
with 
\be \label{mellmeas}
\m(s_M,t_M)=\G^2\left(\frac{\D_\f}{2}-s_M\right)\G^2\left(\frac{\D_\f}{2}-t_M\right)\G^2(s_M+t_M),
\ee  
being a standard measure factor which has information about the double trace operators in the $N\rightarrow \infty$ limit in the context of the AdS/CFT correspondence.
\subsection{Conformal Partial Wave expansion}
Just as the flat space scattering amplitude admits a partial wave expansion in terms of Gegenbauer polynomials, the Mellin amplitude $\mm(s,t)$ admits the conformal partial wave expansion \cite{Mack:2009mi, Dolan:2011dv}. Our starting point is the Mellin space representation of the standard position space direct channel expansion \cite{Dolan:2011dv}. We are interested in the imaginary part of this which arises from the physical poles. We can either work directly with \cite{Dolan:2011dv} or a bit more conveniently, to make the pole structure manifest, following \cite{Gopakumar:2018xqi}, we can write an $s-$channel conformal partial wave expansion for the  Mellin amplitude, 
\be \label{melpart}
\mm(s_M,t_M)=\sum_{\substack{\t,\ell\\ \ell~\text{even}}}C_{\t,\ell}~f_{\t,\ell}(s_M)~\widehat{P
}_{\t,\ell}(s_M,t_M),
\ee 
where the equality is modulo some \emph{regular terms} --see appendix \ref{melpartdrv} for a derivation of how to go from the form in \cite{Dolan:2011dv} to the form in \cite{Gopakumar:2018xqi}. Here, we have defined $\t=\frac{\D-\ell}{2}$ and $\widehat{P}_{\t,\ell}(s_M,t_M)$ are the  Mack polynomials whose details we provide in appendix \ref{Mackasym1}. 
$C_{\t,\ell}$ are the squared OPE coefficients and 
\begin{align}\label{fdef}
	\begin{split}
		\scalebox{1}{
			$
		f_{\t,\ell}(s_M)=\frac{\mn_{\t,\ell}~\G^2(\t+\ell+\D_\f-h)}{\left(\t-s_M-\frac{\D_\f}{2  }\right)\G(2\t+\ell-h+1)}
		\frac{\sin^2\p\left(\frac{\D_\f}{2}-s\right)}{\sin^2\p\left(\D_\f-\t-\frac{\ell}{2}\right)}
		~_3F_2
		\left[
		\begin{matrix}
			\t-s_M-\frac{\D_\f}{2},1+\t-\D_\f,1+\t-\D_\f\\
			1+\t-s_M-\frac{\D_\f}{2},2\t+\ell-h+1
		\end{matrix}~\Big|1
		\right],$}
	\end{split}
\end{align}
with 
\be\label{Ndef}
\mn_{\t,\ell}:=2^{\ell}\frac{(2\t+2\ell-1)\G^2(2\t+2\ell-1)\G(2\t+\ell-h+1)}{\G(2\t+\ell-1)\G^4(\t+\ell)\G^2(\D_\f-\t)\G^2(\D_\f-h+\t+\ell)}.
\ee 
${}_3F_2$ is a generalized hypergeometric function. There are poles at $s_M=\tau-\frac{\D_\phi}{2}+q$ for $q\in {\mathbb Z}\geq 0$. This representation is suitable for the $s_M$ channel Witten diagram and the residues at the physical poles are identical to other standard ones used in the literature eg.\cite{Dolan:2011dv}. Note that the {\it full} Mellin amplitude includes the measure factor, which provides the $u_M$-channel poles. However, in what follows, we will be interested in averaging over positive values of $s_M$ so that these poles will not alter any of our conclusions.

\subsection{Flat space limit of the Mellin amplitude }
Now we will review the connection between the Mellin amplitude and the flat space scattering  via what is called the \enquote{flat space limit}. That such a connection exists was first conjectured in \cite{Penedones:2010ue} and was developed extensively in \cite{Fitzpatrick:2011hu,Fitzpatrick:2011dm}. In these papers, the Mellin amplitude was related to scattering amplitude of \emph{massless} particles. However, the limit  in which we are interested is the so called \enquote{massive flat space limit} and was first proposed in \cite{Paulos:2016fap}.  In this limit, the Mellin amplitude for the conformal correlator is related to the scattering amplitude for massive particles by taking the dimensions of the external operators to be parametrically large. Then via the AdS/CFT correspondence, the Mellin amplitude of conformal correlator is related to flat space scattering amplitude with external \emph{massive} particles in one higher space-time dimesnion i.e., the Mellin amplitude of a conformal correlator in $CFT_d$ is related to scattering amplitude $d+1$ dimensional flat space quantum field theory--to emphasise, the flat space QFT is not conformal. 

To understand better what we mean by parametrically large dimension, recall that in the AdS/CFT correspondence the scaling dimension of the boundary conformal operators in $CFT_{d}$, $\D_\phi$, and the mass of the corresponding dual bulk field in $AdS_{d+1}$, $m$,  are related by 
\be 
m^2R^2=\D_\phi(\D_\phi-d),
\ee $R$ being the AdS radius. Now in the flat space limit the conformal dimension $\D_\phi$ is taken to infinity along with $R$ so that $m$ remains finite i.e., 
\be \label{flat}
\lim_{\substack{\D_\phi\to\infty \\ R\to\infty}}\frac{\D_\phi^2}{R^2}=m^2.
\ee 
Since $R$ is dimensionful, we mean $R/\ell_P\gg 1$ where $\ell_P$ is the Planck length. Further, since we are taking the flat space limit to a massive theory, we also require $R/\ell_{s}$ with $\ell_s$ being the string length characterizing the string theory energy scale.
Now taking $R/\ell_P\to\infty$ takes the $AdS_{d+1}$ to $\mathbb{M}^{d,1}$. Thus, in this limit, we relate the Mellin amplitude for $CFT_d$ correlator to scattering amplitude of massive particles in flat spacetime $\mathbb{M}^{d,1}$. Now we turn to the explicit formulae relating the flat spacetime scattering amplitude and the Mellin amplitude.

The  $n-$point conformal correlator and $n-$particle scattering amplitude are related by:
\begin{align} \label{fl2}
	\begin{split}
		\left(m_{1}\right)^{a} \mathcal{T}\left(\{p_i^\n\}\right)
		=\lim _{\substack{\Delta_{i} \rightarrow \infty\\ R\rightarrow\infty}} \frac{\left(\Delta_{1}\right)^{a}}
		{\mathcal{N}} \mm\left(\d_{i j}=\frac{\Delta_{i} \Delta_{j}+R^{2}~ p_i^\n  p_{j\n}}{\Delta_{1}+\cdots+\Delta_{n}}+O\left(\Delta_{1}^0\right)\right),
\end{split}\end{align}
with
\be  
\mathcal{N}:=\frac{1}{2} \pi^{\frac{d}{2}} \Gamma\left(\frac{\sum \Delta_{i}-d}{2}\right) \prod_{i=1}^{n} \frac{\sqrt{\mathcal{C}_{\Delta_{i}}}}{\Gamma\left(\Delta_{i}\right)},\qquad\mathcal{C}_{\Delta} := \frac{\Gamma(\Delta)}{2 \pi^{\frac{d}{2}} \Gamma\left(\Delta-\frac{d}{2}+1\right)},\qquad a:=\frac{n(d-1)}{2} -d-1.
\ee
Here $\mathcal{T}\left(\{p_i^\n\}\right)$ is the $n$-particle $\mathbb{M}^{d,1}$ scattering amplitude with external Minkowski momenta $\lbrace p_i^\n\rbrace$. Note however, these $\lbrace p_i^\n\rbrace$ have momenta interpretation  after going to flat space amplitude only. On the Mellin amplitude side they are just $n$  vectors in $\mathbb{M}^{d,1}$ with the restriction,
\be\label{momvec}
\sum_{i=1}^{n}p_i^\n=0,\qquad\qquad p_i^\n p_{i\n}=-\frac{\D_i (\D_i-d)}{R^2}.
\ee
These restrictions are there for consistency with the momentum interpretation of $\lbrace p_i^\n\rbrace$ in the flat space limit. Note that the vector norm and inner product are usual $\mathbb{M}^{d,1}$ norm and inner product respectively. The parameterization holds for $\d_{i j} $ with $i \neq j $. $\d_{i i}$ should still be set to $-\D_i$ explicitly. Now the consistency with the third constraint in eq.\eqref{melconstr} can be met by adding following finite term
\be  \label{finite}
\frac{d}{n-2}\left[\frac{\Delta_{i}+\Delta_{j}}{\Delta_{1}+\cdots+\Delta_{n}}-\frac{1}{n-1}\right].
\ee

For the case of the 4-point conformal correlator of identical scalar primaries $\f$ with scaling dimensions $\D_\f$ and the corresponding flat space mass $m$ we have
\be \label{fl3}
{m^{a}~\mt(s,t)=\lim_{\dphi\to\infty}\frac{(\dphi)^a}{\mn}~\mm(s_M,t_M).}
\ee
with
\be  \label{Nexp}
\mathcal{N}
:=
\frac{\G(2\dphi-h)}{ 8\p^h\G^2(\dphi)\G^2(\dphi-h+1)},\qquad a:=2h-3.
\ee
Here  $(s_M,t_M)$ are defined as in eq.\eqref{stdef} and the flat space Mandelstam variables $(s,t)$ are those defined by $s:= -(p_1+p_2)^2,\, t=-(p_1+p_3)^2,\, \sum_{i}p_i=0$. The precise relation between  the flat space Mandelstam variables $(s,t)$ and $(s_M,t_M)$ is given by: 
\be\label{sSrel1}
s_M(t_M)=\frac{R^2}{8\dphi}s(t)+\frac{d}{6}.
\ee The term $d/6$ is the finite term eq.\eqref{finite} which we can ignore for all practical purpose in the flat space limit and thus we are going to use for all practical purposes, 
\be \label{sSrel2}
s_M(t_M)=\frac{R^2}{8\dphi}s(t)\,.
\ee On the same footing we use  
\be 
\widehat{u}_M=\frac{R^2}{8\dphi}u
\ee so that we can consider, 
\be \label{melsum}
s_M+t_M+\widehat{u}_M=\frac{\dphi}{2}.
\ee Note the consistency of the constraints eq.\eqref{melsum} and eq.\eqref{Mandelsum}. Then drawing parallels with the flat space understanding, the \enquote{physical domain} for $s_M-$channel of the Mellin amplitude in the current perspective is defined to be 
\be
s_M>\dphi/2;~~t_M,\widehat{u}_M<0.
\ee 
Since the flat space limit is $R\to\infty$, $R$ being the AdS radius, we would like to have a $1/R$ expansion around the flat space.  On dimensional grounds, in terms of the flat space $s$, we expect the dimensionless quantities $s/(m^4 R^2)$ and $1/(s R^2)$ to be small in order to allow a $1/R$ expansion. Here we are assuming only even powers of $m$ entering such expansion. In terms of $s_M$, this gives $(2\D_\phi)^3\gg s_M\gg 2\D_\phi\gg 1$ which is what is going to be used below.

\subsection{Absorptive part of Mellin amplitude}
The main goal in this work is to extract information about Mellin amplitude for conformal field theory by exploiting the structural analogy  between the former and the flat space scattering amplitude.  The absorptive part of scattering amplitude is the imaginary part of scattering amplitude.  In this spirit, we define the \emph{absorptive part of the Mellin amplitude} as,

\be \label{absdef}
\ma_M(s_M,t_M)=\text{Im}_{s_M}~\mm(s_M,t_M)=\sum_{\substack{\t,\ell\\ \ell~\text{even}}}C_{\t,\ell}~\text{Im}_{s_M}[f_{\t,\ell}(s_M)]~\widehat{P}_{\t,\ell}(s_M,t_M),
\ee where we have defined  
\be 
\text{Im}_{s_M}[g(s_M)]:=\lim_{\ve\to0}\frac{g(s_M+i\ve)-g(s_M-i\ve)}{2i}.
\ee 
Observe that the imaginary part comes only from the fucntion $f_{\t,\ell}$ because for unitary theories $C_{\t,\ell}\in\mathbb{R}^+$ and $\widehat{P}_{\t,\ell}(s_M,t_M)$ does not have poles. The imaginary part of the function $f_{\t,\ell}$ comes in a distributional sense at the pole locations which we will see in a while. Since  $\ma_M(s_M,t_M)$ is  a distribution,  we should handle quanitites involving integrals over $\ma_M(s_M,t_M)$. Towards that end, we define the following quantity \cite{Yndurain:1970mz} 

\be \label{fwddef}
\bar{\ma}_M(s_M,t_M)\equiv \frac{1}{s_M-\frac{\D_\f}{2}}\int_{\frac{\D_\f}{2}}^{s_M}ds'_M\,\ma_M(s'_M,t_M).
\ee 	
This can be viewed as an averaged absorptive Mellin amplitude.  For $d=3$, which leads to Froissart bounds for 4d flat space, the choice of the lower limit makes no difference.  We will introduce the quantity $x=1+2t_M/(s_M-\D_\phi/2)$. We will consider the problem of obtaining the asymptotic upper bound on this quantity in the limit $s_M\to\infty$ for two different scenarios: one is the \enquote{forward} limit i.e.,  $x\rightarrow 1$ and the other one is the \enquote{non-forward} limit i.e., with $x\neq 1$. 

\subsection{Polynomial boundedness of Mellin amplitude}\label{polymellin}
Now to proceed further, we need to assume something more about the analytic structure of the Mellin amplitude. Recall that the assumption of polynomial boundedness of the flat space scattering amplitude is extremely crucial in deriving the Froissart-Martin bound. In fact, it will be no exaggeration to say that the Froissart-Martin bound would not have existed without this additional boundedness property of the scattering amplitude. We will assume a similar polynomial boundedness for Mellin amplitudes as well. In close analogy with flat space case we assume the following polynomial boundedness condition upon $\ma_M(s_M,t_M)$: there exists at least an $n\in\mathbb{Z}^+$ such that the integral,
\be 
\mathfrak{a}_{n,\rho}:=\int_{\frac{\D_\f}{2}}^\infty\frac{d\bar{s}_M}{\bar{s}_M^{n+1}} \ma_M(\bar{s}_M,\rho \frac{\dphi}{2})
\ee 
exists. For our purpose we can assume  $\r\in\mathbb{R}^+$. In the flat space limit, this corresponds to $t=4\r m^2$ with $m$ being the mass of the external particle. In the flat space Froissart bound, one typically chooses $\r=\mu^2/m^2$ with $\mu\leq m$ being the mass of the lightest exchange in the crossed channel.
\subsection{The structure of $\ma_M(s_M,t_M)$}
We will now bound the quantity $\bar{\ma}_M(s_M)$. To do so, we will need to know the structure of $\ma_M(s_M,t_M)$. The most non-trivial component of the  same is $\text{Im}[f_{\t,\ell}]$.  From eq.\eqref{fdef} one has that  
\begin{align}\label{3F2pole}
	\begin{split}
		_3F_2
		\left[
		\begin{matrix}
			\t-s_M-\frac{\D_\f}{2},1+\t-\D_\f,1+\t-\D_\f\\
			1+\t-s_M-\frac{\D_\f}{2},2\t+\ell-h+1
		\end{matrix}~\Big|1
		\right]
		=\mathlarger{\sum}_{q=0}^{\infty}\,\frac{(1+\t-\D_\f)_q^2}{q!(2\t+\ell-h+1)_q}\frac{\t-s_M-\D_\f/2}{q+\t-s_M-\D_\f/2}\,.
	\end{split}
\end{align}
Clearly, we see that we have poles in the $s_M-$plane at the locations $s_M=\t-\D_\f/2+q$ for $q\ge 0$. Now we know that at the poles the imaginary part comes as a Dirac-delta distribution, i.e.
\be 
\text{Im.} \frac{1}{x-a}=\lim_{\ve\to0}\frac{1}{2i}\left(\frac{1}{x-a+i\ve}-\frac{1}{x-a-i\ve}\right)=-\p\d(x-a)\,.
\ee 
Note that this is a distributional statement and, hence, this equality \emph{holds under the integrals}. Specifically if  $f(x)$ is a continuous test function over $\mathbb{R}$ then we have 
\be 
\int_{-\infty}^{\infty}dx~f(x) ~\text{Im.}\left(\frac{1}{x-a}\right)=-\p f(a).
\ee 
So in this sense we can write
\begin{align}
&\text{Im.}~
	_3F_2
	\left[
	\begin{matrix}
		\t-s_M-\frac{\D_\f}{2},1+\t-\D_\f,1+\t-\D_\f\\
		1+\t-s_M-\frac{\D_\f}{2},2\t+\ell-h+1
	\end{matrix}~\Big|1
	\right]\nn\\
=&-\p\,\left(s_M+\frac{\D_\f}{2}-\t\right)
\mathlarger{\mathlarger{\sum}}_{q=0}^{\infty}\,\,\,
\frac{(1+\t-\D_\f)_q^2}{q!(2\t+\ell-h+1)_q}\d\left(-q-\t+s_M+\frac{\D_\f}{2}\right).
\end{align} 
Collecting everything together we have, 
\begin{align} 
\begin{split} \label{Imf}
\text{Im.}[f_{\t,\ell}(s_M)]&=\p\mn_{\t,\ell}\frac{\G^2(\t+\ell+\D_\f-h)}{\G(2\t+\ell-h+1)}\frac{\sin^2\p\left(\frac{\D_\f}{2}-s_M\right)}{\sin^2\p\left(\D_\f-\t-\frac{\ell}{2}\right)}\\
&\hspace{5 cm}\times\mathlarger{\mathlarger{\sum}}_{q=0}^{\infty}\,\,
\frac{(1+\t-\D_\f)_q^2}{q!(2\t+\ell-h+1)_q}\d\left(-q-\t+s_M+\frac{\dphi}{2}\right).
\end{split}
\end{align} 
We would like to note further that for $q=0$ corresponds to the contributions from the primary while the $q\neq 0$ corresponds to that coming from the descendants. In appendix \ref{primary}, we consider bounds on the primary contribution separately. This exercise is instructive, although the bounds thus obtained are exponentially smaller for large $\D_\phi$ compared to the full consideration in the next section.
\section{Bounds}\label{bounder}
\subsection{Obtaining the Froissart$_{AdS}$ bound: \enquote{Forward Limit} }
We start with the expression for the conformal partial wave expansion of the Mellin amplitude as defined in eq.\eqref{melpart}.  Further making use of eq.\eqref{3F2pole}, we can write the meromorphic structure of the Mellin amplitude as following, 
\be 
\mm(s_M,t_M)=-\sum_{\t,\ell}C_{\t,\ell}~\mn_{\t,\ell}
\G(2\dphi+\ell-h)
\frac{\sin^2\p\left[\frac{\dphi}{2}-s_M\right]}{\sin^2\p\left[\dphi-\t-\frac{\ell}{2}\right]}\Mack_{\t,\ell}(s_M,t_M)\left(\sum_{q=0}^{\infty}\frac{W_q}{s_M+\frac{\dphi}{2}-\t-q}\right),
\ee
with 
\be \label{wq}
W_q:=\frac{\G^2(\t+\ell+\dphi-h)}{\G(2\dphi+\ell-h)}\frac{(1+\t-\dphi)_q^2}{\G(q+1)\G(2\t+\ell-h+1+q)}\,.
\ee 
Now we will investigate a very specific limit. We will particularly look into the limit when $\t\gg1,~\dphi\gg1$. 
The flat space limit makes it necessary to consider $\dphi\gg 1$. Why we are  considering $\t\gg 1$ will become clear in a moment. 
We will also consider $\ell\gg 1$. The last assumption is for now a working assumption which will be justified in due course\footnote{In particular, the working assumption that, $\ell\gg1$ has really \emph{nothing} to do with flat space limit.}.

Now in the limit that $\dphi\gg1,~ \t\gg1$ the residue function $W_q$ is peaked around $q=q_{\star}\sim O(\t)$ Such an observation was first made in \cite{Paulos:2016fap}. In fact, in this limit we can approximate the residue by a Gaussian function\footnote{Here, we would like to mention that, this approximated expression is obtained by implicitly considering $\dphi\sim\t$ along with $\dphi\gg1,\,\,\t\gg1$.\label{dphitau} }, 
\be\label{flatpeak} 
W_q\approx ~\frac{1}{\sqrt{2\p}(\ell+2\dphi)\d q}~e^{-\frac{(q-q_\star)^2}{2\d q^2}},
\ee with 
\begin{align}\label{qdq}
	\begin{split}
		q_\star&=\frac{(\t-\dphi)^2}{\ell+2\dphi},\\
		\d q^2&=\frac{(\t-\dphi)^2(\ell+\t+\dphi)^2}{(\ell+2\dphi)^3}.
	\end{split}
\end{align}
From this above expression, note that while $q_\star\sim O(\t)$ in the limit of large $\t$ one has $\d q\sim O(\sqrt{\t})$ in the same limit. This suggests that in the limit $\t\to\infty$ we can in fact consider the above Gaussian as a Dirac Delta function to leading order. To see this explicitly, we introduce the \enquote{normalized variable}, 
\be 
\bar{q}=\frac{q}{q_\star}.\ee Now we define 
\be 
\e:=\left(\frac{\d q}{q_\star}\right)^2.
\ee 
In these new variables $(\bar{q}, \e)$ we have 
\be 
W_q\approx \frac{1}{q_\star(\ell+2\dphi)}~\frac{e^{-\frac{(\bar{q}-1)^2}{2\e}}}{\sqrt{2\p\e}}.
\ee Further note that  
\be 
\frac{\d q}{q_\star}\sim O(\t^{-1/2}),~~~~\t\to \infty,
\ee which further implies the equivalence of the limits $\t\to\infty$ and $\e\to0$. Thus, in this limit, 
\be 
\lim_{\e\to0}W_q\approx \frac{1}{q_\star(\ell+2\dphi)}~\lim_{\e\to0}\frac{e^{-\frac{(\bar{q}-1)^2}{2\e}}}{\sqrt{2\p\e}}=\frac{1}{q_\star(\ell+2\dphi)}~\d(\bar{q}-1)=\frac{1}{\ell+2\dphi}~\d(q-q_\star).
\ee Now we can use this to write the $q-$sum as
\be 
\sum_{q=0}^{\infty} \frac{W_q}{s_M+\frac{\dphi}{2}-\t-q}\approx \int dq\,\frac{W_q}{s_M+\frac{\dphi}{2}-\t-q}\approx \frac{1}{(\ell+2\dphi)\left(s_M+\frac{\dphi}{2}-\t-q_\star\right)}.
\ee Since we are considering the limit $\t\gg1$ and $s\gg\t\gg 1$ we can now use the Gegenbauer asymptotic of the Mack polynomials which is worked out in appendix \ref{Mackasym1}. Using this, we have
\begin{align}
	\begin{split} \label{mst}
\mm(s_M,t_M)\approx& -\mathlarger{\mathlarger{\sum}}_{\t,\ell}\left[ C_{\t,\ell}\mn_{\t,\ell} 
\frac{\G(2\dphi+\ell-h)}{2\dphi+\ell}
\frac{\sin^2\p\left[\frac{\dphi}{2}-s_M\right]}{\sin^2\p\left[\dphi-\t-\frac{\ell}{2}\right]}\right.\\
&\left.\hspace{4 cm}\times\left(\frac{s_M}{8}\right)^\ell \frac{\G(\ell+1)}{(h-1)_\ell}C_\ell^{(h-1)}(x)\left(\frac{1}{s_M+\frac{\dphi}{2}-\t-q_\star}\right)\right]
\end{split}\,,
\end{align} with 
\be 
x=1+\frac{2t}{s_M-\dphi/2}.
\ee Recalling
\be 
\ma_M(s_M,t)=\text{Im}_{s_M}\mm(s_M,t_M),
\ee we have 
\begin{align}
\begin{split} \label{Amst}
\ma_M(s_M,t_M)&\approx \p\mathlarger{\mathlarger{\sum}}_{\t,\ell}\,\,\left[ C_{\t,\ell}\mn_{\t,\ell}  \frac{\G(2\dphi+\ell-h)}{2\dphi+\ell}\frac{\sin^2\p\left[\frac{\dphi}{2}-s_M\right]}{\sin^2\p\left[\dphi-\t-\frac{\ell}{2}\right]}\right.\\
&\left.\hspace{4 cm}\times\left(\frac{s_M}{8}\right)^\ell \frac{\G(\ell+1)}{(h-1)_\ell}C_\ell^{(h-1)}(x)~\d\left(s_M+\frac{\dphi}{2}-\t-q_\star\right)\right].
\end{split}
\end{align}
Ultimately we are interested in quantities which are integrals of $\mathfrak{a}_{n,\dphi}$ and $\bar{\ma}_M(s)$. This integral over $s$ effectively truncates the $\t-$ sum due to presence of the Dirac delta function. As a consequence of this we have the following  expression, obtained in the forward limit $x\rightarrow 1$\footnote{In obtaining eq.\eqref{amb1}, we have made explicit use of the fact that, operators with  even spin ($\ell$), only, gets exchanged in the OPE channels of identical scalars. We have also used the fact that, $q_\star$ is an integer. Using these, one finds that the term $\sin^2\p[\dphi/2-s]/\sin^2\p[\dphi-\t-\ell/2]$ becomes unity on doing the $s$ integral in obtaining eq.\eqref{amb1}.}
\be\label{amb1}
	\bar{\ma}_M(s)\approx \frac{2\p}{2s_M-\dphi}\sum_{\substack{\ell\\\ell~\text{even}}} \frac{(2h-2)_\ell}{(h-1)_\ell}\frac{\G(2\dphi+\ell-h)}{2\dphi+\ell}\sum_{\t=\dphi}^{\t_\star}\left(\frac{1}{8}\right)^\ell C_{\t,\ell}~\mn_{\t,\ell}
	\left(\t+q_\star-\frac{\dphi}{2}\right)^\ell.
\ee

where $\t_\star$ satisfies, 
\be 
\t_\star+q_\star(\t_\star)=s_M+\frac{\dphi}{2}.
\ee Solving the equation and choosing the positive root for $\t_\star$,
\be 
\t_\star=\frac{1}{2} \left(\sqrt{( 2\Delta_\phi +\ell)(\ell+4 s_M)} -\ell\right).
\ee Now assuming $s\gg\ell$ we can approximate, 
\be 
\t_\star\approx \sqrt{(2\dphi+\ell) s_M}~.
\ee 
Thus, we will consider\footnote{In footnote \ref{dphitau} it was mentioned that, the analysis so far was carried out by implicitly considering $\dphi\sim\t$ along with $\dphi\gg1,\,\t\gg1$. However, observe that, the upper limit of the $\t-$sum really does not conform to $\t\sim \dphi$. In the upper limit one has, in fact, $\t\gg\dphi$. But, this does not cause any issue because, the center of our subsequent analysis, \eq{INEQimp}, is really independent of this. } 
\be\label{sigmam} 
\bar{\ma}_M(s_M)\approx \frac{2\p}{2s-\dphi}\sum_{\substack{\ell\\\ell~\text{even}}} \frac{(2h-2)_\ell}{(h-1)_\ell}\frac{\G(2\dphi+\ell-h)}{2\dphi+\ell}
\sum_{\t=\dphi}^{\sqrt{(2\dphi+\ell) s_M}}
\left(\frac{1}{8}\right)^\ell C_{\t,\ell}~\mn_{\t,\ell}~
\left(\t+q_\star-\frac{\dphi}{2}\right)^\ell \,.
\ee
Now observe that, 
\be
q_\star=\frac{(\t-\dphi)^2}{\ell+2\dphi}\le \frac{(\t-\dphi)^2}{2\dphi}.
\ee Further using this we can write, 
\be \label{INEQimp}
\left(\t+q_\star-\frac{\dphi}{2}\right)^\ell\le 
\left[\t+ \frac{(\t-\dphi)^2}{2\dphi}-\frac{\dphi}{2}\right]^\ell=\left(\frac{\t^2}{2\dphi}\right)^\ell
\ee because we have $\ell\ge 0$. 
Then using this we can write
\be\label{amtrunc} 
\bar{\ma}_M(s_M)\le \frac{2\p}{2s_M-\dphi}\sum_{\substack{\ell\\\ell~\text{even}}} \frac{(2h-2)_\ell}{(h-1)_\ell}\frac{\G(2\dphi+\ell-h)}{2\dphi+\ell}
\sum_{\t=\dphi}^{\sqrt{(2\dphi+\ell) s_M}}
\left(\frac{\t^2}{16\dphi}\right)^\ell C_{\t,\ell}~\mn_{\t,\ell}\,.
\ee
\subsubsection{Determining the $\ell-$cutoff}
Now we move on to the determination of the cutoff for the $\ell-$sum  in the expression for $\bar{\ma}_M(s_M).$ To do so we will take help of the \enquote{polynomial boundedness} condition that is expressed through the finiteness of the integral quantity $\mathfrak{a}_{n,\rho}$ for some positive integer $n$. Now since $\ma_M(s_M,t_M)$ is a positive distribution for unitary theories, we can write the following the chain of inequalities, 
\begin{align}
	\begin{split} 
\mathfrak{a}_{n,\rho}&=\int_{\frac{\dphi}{2}}^\infty\frac{d\bar{s}_M}{\bar{s}_M^{n+1}}\,\ma_M\left(\bar{s}_M,t=\frac{\r \D_\phi}{2}\right)\\
& > \int_{\frac{\dphi}{2}}^{s_M}\frac{d\bar{s}_M}{\bar{s}_M^{n+1}}\,\ma_M\left(\bar{s}_M,t=\frac{\r \D_\phi}{2}\right)\\
 &\ge s_M^{-(n+1)}\int_{\frac{\dphi}{2}}^\infty d\bar{s}_M\,\ma_M\left(\bar{s}_M,t=\frac{\r \D_\phi}{2}\right)
\end{split}
\end{align}

where the last inequality was possible because $n\ge 0$.  Thus we have the following inequality, 
\begin{align}\label{mpolyineq}
	\begin{split}
		\mathfrak{a}_{n,\rho}
		&\ge
		\p s_M^{-(n+1)}  
		\sum_{\substack{\ell\\ \ell~\text{even}}}\frac{\G(\ell+1)}{(h-1)_\ell} \frac{\G(2\dphi+\ell-h)}{2\dphi+\ell}C_{\ell}^{(h-1)}\left(1+\frac{\r\dphi}{s_M-\dphi/2}\right)
		\sum_{\t=\dphi}^{\t_\star}
		\left(\frac{1}{8}\right)^\ell 
		C_{\t,\ell} \mn_{\t,\tell} 
		\left(\t+q_\star-\frac{\dphi}{2}\right)^\ell\\
		&\ge 
		\p s_M^{-(n+1)} 
		\sum_{\substack{\ell=L+2\\ \ell~\text{even}}}\frac{\G(\ell+1)}{(h-1)_\ell} \frac{\G(2\dphi+\ell-h)}{2\dphi+\ell}C_{\ell}^{(h-1)}\left(1+\frac{\r\dphi}{s_M-\dphi/2}\right)
		\sum_{\t=\dphi}^{\t_\star}
		\left(\frac{1}{8}\right)^\ell 
		C_{\t,\ell} \mn_{\t,\tell} 
		\left(\t+q_\star-\frac{\dphi}{2}\right)^\ell\\
		&\ge 
		\p s_M^{-(n+1)} 
		\mathbf{C}_{L+2}^{(h-1)}\left(1+\frac{\r\dphi}{s_M-\dphi/2}\right)
		\sum_{\substack{\ell=L+2\\ \ell~\text{even}}}\frac{(2h-2)_\ell}{(h-1)_\ell} \frac{\G(2\dphi+\ell-h)}{2\dphi+\ell}
		\sum_{\t=\dphi}^{\t_\star}
		\left(\frac{1}{8}\right)^\ell 
		C_{\t,\ell} \mn_{\t,\tell} 
		\left(\t+q_\star-\frac{\dphi}{2}\right)^\ell
	\end{split}
\end{align}
where $\mathbf{C}_{\ell}^{(\a)}$ is the normalized Gegenbauer polynomial
\be \label{gegennorm}
{\Large \mathbf{C}_{\ell}^{(\a)}(x)=\frac{C_{\ell}^{(\a)}(x)}{C_\ell^{(\a)} (1)}=\frac{\G(\ell+1)}{(2\a)_\ell}C_{\ell}^{(\a)}(x)\,.}
\ee 
$L$ is some value of $\ell$ which is to be determined later and the last inequality is obtained using the fact that for $C_{\ell}^{(h-1)}(x)$ is an increasing function of $\ell$ for $x>1$ and also accounting for the correct normalization of the Gegenbauer polynomial.

Now we can split the sum in eq.\eqref{amb1} in the following manner, 
\be \label{trunc}
\bar{\ma}_M(s_M)\approx \frac{2\p}{2s_M-\dphi}\sum_{\substack{\ell\\\ell~\text{even}}}^{L} \frac{(2h-2)_\ell}{(h-1)_\ell}\frac{\G(2\dphi+\ell-h)}{2\dphi+\ell}\sum_{\t=\dphi}^{\t_\star}\left(\frac{1}{8}\right)^\ell C_{\t,\ell}~\mn_{\t,\ell}
\left(\t+q_\star-\frac{\dphi}{2}\right)^\ell+~ \mr(s)
\ee where 
\be 
\mr(s_M)=  \frac{2\p}{2s_M-\dphi}\sum_{\substack{\ell=L+2\\\ell~\text{even}}}^{\infty} \frac{(2h-2)_\ell}{(h-1)_\ell}\frac{\G(2\dphi+\ell-h)}{2\dphi+\ell}\sum_{\t=\dphi}^{\t_\star}\left(\frac{1}{8}\right)^\ell C_{\t,\ell}~\mn_{\t,\ell}
\left(\t+q_\star-\frac{\dphi}{2}\right)^\ell.
\ee 
Quite obviously, then, we can write 
\be 
\mr(s_M)\le  \frac{\mathfrak{a}_{n,\rho}~s^{n+1}}{(2s_M-\dphi)~\mathbf{C}_{L}^{(h-1)}\left(1+\frac{\r\dphi}{s_M-\dphi/2}\right)}.
\ee For $s_M\gg\dphi$ the above inequality effectively is, 
\be 
\mr(s_M)\le \frac{\mathfrak{a}_{n,\rho}s_M^n}{\mathbf{C}_{L}^{(h-1)}\left(1+\frac{\r\dphi}{s_M-\dphi/2}\right)}.
\ee 
Next will make use of  the following bounding relation satisfied by the Gegenbauer polynomials (see appendix \ref{bndderive} for a derivation),  
\be \label{Gegenbound}
\mathbf{C}_\ell^{(\a)}(z)\ge 2^{1-2\a}\frac{\G(2\a)}{\G^2(\a)}K(\varphi_0)\left(z+\sqrt{z^2-1}\cos \varphi_0\right)^\ell
\ee with
\begin{equation*}
	K(\varphi_0)=\int_0^{\varphi_0}(\sin\varphi)^{2\a-1} d\varphi     
\end{equation*}
for any $\varphi_0,~0<\varphi_0<\p,~x>1,~\a>0$.
Employing this we can constrain $\mr(s)$ as,  
\be 
\mr(s)\le 2^{3-2 h}~\mathfrak{a}_{n,\r}s^n \frac{\G(2h-2)}{\G^2(h-1)} \left(\tilde{x}+\sqrt{\tilde{x}^2-1}\cos\varphi_0\right)^{-L-2}
\ee with
\be 
\tilde{x}:=1+\frac{\r\dphi}{s_M-\dphi/2}.
\ee Now for $s_M\gg \dphi$ we have to leading order, 
\be 
\left(\tilde{x}+\cos\varphi_0\sqrt{\tilde{x}^2-1}\right)\sim 1+\cos\varphi_0\sqrt{\frac{2\r\dphi}{s_M}}.
\ee 
Thus we can write , 
\be \label{rineq1}
\mr(s)\le 2^{3-2 h}~\mathfrak{a}_{n,\rho}s_M^n \frac{\G(2h-2)}{\G^2(h-1)} \left(1+\cos\varphi_0\sqrt{\frac{2\r\dphi}{s_M}}\right)^{-L-2}.
\ee 
Now the optimal value for $L$ can be obtained by demanding that the remainder term be of exponentially suppressed magnitude. However there is a subtlety in this requirement. The important thing to keep in mind is that we need to have the remainder exponentially suppressed compared to the truncated sum eq.\eqref{amtrunc}. What this means is that we are keeping the possibility of certain overall growth behaviour (that of the truncated sum) for the remainder term but still sticking to the requirement that the growth be multiplied by a strong exponential suppression. Thus we are making the requirement a bit weaker than eq.\eqref{rineq1}. Assume a polynomial behaviour for the truncated sum $\sim s_M^{a}$ (here logarthmic terms may be present which we are ignoring  because they are in general much weaker than a polynomial behaviour). Then the optimal $L$ is given by the rather weaker constraint , 
\be \label{rineq2}
\mr(s_M)\le 2^{3-2 h}~\mathfrak{a}_{n,\rho}~s_M^{n-a} \frac{\G(2h-2)}{\G^2(h-1)} \left(1+\cos\varphi_0\sqrt{\frac{2\r\dphi}{s_M}}\right)^{-L-2}.
\ee 
The optimal $L$ is thus given to leading order by, 
\be \label{Lopt}
{	L=\frac{(n-a)}{\cos\varphi_0}\sqrt{\frac{s_M}{2\r\dphi}}\ln s_M.
}	\ee
We will truncate the $\ell-$sum in eq.\eqref{amtrunc} at $\ell=L$ as determined above to obtain the asymptotic bound,  
\be 
\bar{\ma}_M(s_M)\le \frac{2\p}{2s_M-\dphi}\sum_{\substack{\ell\\\ell~\text{even}}}^L \frac{(2h-2)_\ell}{(h-1)_\ell}\frac{\G(2\dphi+\ell-h)}{2\dphi+\ell}\sum_{\t=\dphi}^{\sqrt{(2\dphi+\ell) s_M}}\left(\frac{\t^2}{16\dphi}\right)^\ell C_{\t,\ell}~\mn_{\t,\ell}
\ee 
But to achieve the main goal of bounding $\bar{\ma}_M(s_M)$, we will need to have some information about the $\t-$sum appearing as in the above expression. This is what we turn to next.

\subsubsection{The final bounds}
In order to obtain the final bounding expression,  we need to have an estimate of the sum over $\t$ of  $C_{\t,\ell}\mn_{\t,\ell}$. We are concerned with the large $s$ asymptotic of the sum, 
\be \label{cnsum}
\sum_{\t=\dphi}^{\sqrt{(2\dphi+\ell) s_M}}
\left(\frac{\t^2}{16\dphi}\right)^\ell C_{\t,\ell}\mn_{\t,\ell}\,.
\ee 
It should be possible to do an analysis of this sum using the complex Tauberian theorem arguments used in \cite{Mukhametzhanov:2018zja}. However, we will content ourselves using a weaker result for now. To obtain the leading term in the asymptotic, 
we consider the generalized mean field theory (MFT) value for $C_{\t,\ell}$ and consider the large $\t$ limit  of the same.  The reason behind this is that the MFT operators are needed to reproduce the identity exchange in the crossed channel. This result is valid for spins greater than 2 and is a general result derived in \cite{caronhuot}. The large $\D_\phi$ limit that we consider does not affect the conclusions. Thus our results should be valid in any CFT with the identity operator.

So we consider the large $\t-$limit of the product, 
\be \label{summand}
\left(\frac{\t^2}{16\dphi}\right)^\ell C_{\t,\ell}\mn_{\t,\ell}
\sim 
\frac{2^{2 h+1}  \tau ^{4-2h} ~(2\dphi)^{-\ell}~\Gamma (\ell+h)}{\pi ^2 \Gamma^2 (\Delta_\phi ) \Gamma (\ell+1) \Gamma^2 (-h+\Delta_\phi +1)}\sin ^2\p[  \Delta_\phi -  \tau ]\,.
\ee 
Now at this point we have two separate cases at hand. As shown in appendix \ref{tausum} the sum eq.\eqref{cnsum} above has different asymptotes depending upon whether $h$ is greater, equal  or less than $5/2$. We have 
\be \label{tsasymp}
\frac
{\pi ^2 \Gamma^2 (\Delta_\phi ) \Gamma (\ell+1) \Gamma^2 (-h+\Delta_\phi +1)}
{2^{2 h+1}   ~(2\dphi)^{-\ell}~\Gamma (\ell+h)}
\sum_{\t=\dphi}^{\sqrt{(2\dphi+\ell) s_M}}\left(\frac{\t^2}{16\dphi}\right)^\ell C_{\t,\ell}\mn_{\t,\ell} \sim
\begin{cases}
	\frac{\left[s_M(2\dphi +\ell)\right]^{\frac{5}{2}-h}}{10-4h},\hspace{1 cm}h<\frac{5}{2};\\
	\\
	\frac{1}{4}\log s_M,\hspace{1.8 cm}h=\frac{5}{2};\\
	\\
	\frac{\dphi^{5-2 h}}{4h-10},\hspace{2.3 cm}h>\frac{5}{2}.
\end{cases}
\ee

With the aid of this expression we turn to the final step of obtaining the Froissart bound for the Mellin amplitude.
\subsubsection*{Case I. $h<\frac{5}{2}$}
First we start with the case $h<5/2$. Taking the large $\ell$ asymptotic of eq.\eqref{tsasymp} for $h<5/2$ we have ,
\be	
\sum_{\t=\dphi}^{\sqrt{(2\dphi+\ell) s_M}}
\left(\frac{\t^2}{16\dphi}\right)^\ell
C_{\t,\ell}\mn_{\t,\ell}
\sim~  
s_M^{\frac{5}{2}-h}~
\ell^{h-1}
\frac
{
	2^{2 h+1} 
	(2\dphi+\ell) ^{\frac{5}{2}- h}~(2\dphi)^{-\ell}
}
{
	\pi ^2 
	(10-4h)
	~	\Gamma^2 (\Delta_\phi ) 
	\Gamma^2 (-h+\Delta_\phi +1)
}\,.
\ee 
Next  putting this into eq.\eqref{amtrunc}, 
\be 
\bar{\ma}_M(s_M)\le 
\frac{2\p}{2s_M-\dphi}~s_M^{\frac{5}{2}-h}
\sum_{\substack{\ell=0\\ \ell~ \text{even}}}^{L}
\frac
{
	2^{ 2h+1} 
	(2\dphi+\ell) ^{\frac{3}{2}- h}~
	\G(2\dphi-h)(2\dphi-h)_\ell 
}
{
	\pi ^2 
	(10-4h)
	\Gamma^2 (\Delta_\phi )
	\Gamma^2 (-h+\Delta_\phi +1)
	(2\dphi)^\ell
}
\frac{(2h-2)_\ell}{(h-1)_\ell}
\ell^{h-1}
\ee 
where we have  used $\G(2\dphi-h+\ell)=(2\dphi-h)_\ell\G(2\dphi-h)$. 
Next, we will consider the large $\ell$ asymptotic\footnote{The dominant contribution to the $\ell$-sum comes from  the upper limit $\ell=L$ and since, $L$ is large we have used large $\ell$ approximation of for the $\ell$-summand. We observe that, this works because the $\ell$-summand behaves as power law with positive exponent in the large $\ell$ limit.  }
\be 
\frac
{
	(2h-2)_\ell
}
{
	(h-1)_\ell 
} \sim \ell^{h-1}\frac{ \Gamma (h-1)}{\Gamma (2 h-2)}.
\ee 
Now we  consider that $\dphi\gg h,~\dphi\gg 1$. At this point, to make progress (for $d=3$ we do not have to make this choice), we approximate $2\D_\phi+\ell\sim 2\D_\phi$ by assuming\footnote{This follows from the discussion in section 3.3. The (very interesting) case where $s\gg (2\D_\phi)^3$ and which will make a difference for $d\neq 3$ is beyond the scope of this work.} $\dphi\gg L$. Then one obtains, 

\be \label{ambsum}
\bar{\ma}_M(s_M)\le s^{\frac{3}{2}-h}
\frac
{
	2^{ 2h}    \G(2\dphi-h)
}
{
	\pi (5-2 h)  \Gamma^2 (\Delta_\phi ) \Gamma^2 (-h+\Delta_\phi +1)
}
\frac{\G(h-1)}{\G(2h-2)}
\sum_{\substack{\ell=0\\ \ell~ \text{even}}}^{L}\ell^{2h-2}~(2\dphi) ^{\frac{3}{2}- h}
\ee
where we have used  $s_M\gg\dphi/2$.\\
We can now use  eq.\eqref{ellsum3}
to obtain
\be
\bar{\ma}_M(s_M)\le 
\frac
{
	2^{ 2h}  (2\dphi) ^{\frac{3}{2}- h}  \G(2\dphi-h)
}
{
	\pi  \Gamma^2 (\Delta_\phi ) \Gamma^2 (-h+\Delta_\phi +1)
}
\frac{\G(h-1)}{(5-2h)\G(2h-2)}~s_M^{\frac{3}{2}-h}~\frac{L^{2h-1}}{4h-2}. 
\ee 
Now using eq.\eqref{Lopt} we have, 
\be \label{CFTFroissart1}
	\bar{\ma}_M(s_M)\le 
	\mb_1 s_M\ln^{2h-1}s_M.
\ee

with,
\begin{align} \label{CoeffB1}
	\begin{split}
		\mb_1=
		2^{ 2h-1}  (2\dphi) ^{2- 2h}~  
		\frac{8\p^{h-1}\mn~\G(h-1)}{(5-2h)(2h-1)\G(2h-2)}
		\left(\frac{n-1}{\sqrt{\r}\cos\varphi_0}\right)^{2h-1}.
	\end{split}
\end{align}
where $\mn$ is same as in eq.\eqref{Nexp} and we have put $a=1$ by observing that the leading power law dependency of bound is $\sim s_M$. 

$\mathbf{\underline{d=2}}:$ At this point we would like to comment upon the case of $d=2$, or equivalently $h=1$. Note that in this case the Gegenbauer polynomial $C_\ell^{(h-1)}$ is undefined. But this case can still be tackled following the analysis of \cite{CHAICHIAN1992151}. In fact on following the method one obtains  the bound in this case coincident with eq.\eqref{CFTFroissart1} if we put $h=1$ and $\cos\varphi_0=1$ formally into the same. Note that, while formally putting $h=1$ into eq.\eqref{CFTFroissart1} one has to consider doing so in the limiting sense if required.
\subsubsection*{Case II: $h=\frac{5}{2}$}
Next we turn to the case $h=\frac{5}{2}$. Considering the large $\ell$ limit as before one readily obtains from  eq.\eqref{tsasymp} 
\be 
\sum_{\t=\dphi}^{\sqrt{(2\dphi+\ell) s_M}}
\left(\frac{\t^2}{16\dphi}\right)^\ell
C_{\t,\ell}\mn_{\t,\ell}
\sim~
\frac
{
	16 ~  \log (s_M)~\ell^{\frac{3}{2}}
}
{
	\pi ^2\Gamma^2 \left(\Delta _{\phi }-\frac{3}{2}\right){} \Gamma^2 \left(\Delta _{\phi }\right)
}~(2\dphi)^{-\ell}.
\ee 
Thus we have, 
\be 
\bar{\ma}_M(s_M)\le \frac{\log s_M}{s_M}~\frac{16 \G(2\dphi-5/2)}{2\p\dphi\G^2(\dphi-3/2)\G^2(\dphi)}\sum_{\ell=0}^{L}\ell^3\sim \frac{\log s_M}{s_M}~\frac{16 \G(2\dphi-5/2)}{2\p\dphi\G^2(\dphi-3/2)\G^2(\dphi)}~\frac{L^4}{8}
\ee where the last equality follows by the large $L$ asymptotic of the $\ell-$sum. Using eq.\eqref{Lopt}, 

\be \label{CFTFroissart2}
	\bar{\ma}_M(s_M)\le \mb_2\, s_M\log^5 s_M,
\ee with 
\be \label{CoeffB2}
\mb_2:= 8\p^{\frac{3}{2}}\mn (2\dphi)^{-3} ~\left(\frac{(n-1)}{\sqrt{\r}\cos\varphi_0}\right)^4
\ee 
where
$a=1$ has been put in the last stage by the same logic as in the previous case. 
\subsubsection*{Case III: $h>\frac{5}{2}$}
Now we turn to the case when $h>5/2$. This is rather curious case. As shown in appendix \ref{tausum}, for this case the lower limit of the sum eq.\eqref{cnsum} dominates rather than the upper limit. As a consequence, we now have, 

\be 
\sum_{\t=\dphi}^{\sqrt{(2\dphi+\ell) s_M}}
\left(\frac{\t^2}{16\dphi}\right)^\ell
C_{\t,\ell}\mn_{\t,\ell}
\sim~  
\frac
{
	2^{4(h-1)} 
	(2\dphi) ^{5-2h-\ell}
	\Gamma (\ell+h)
}
{
	\pi ^2 
	(4h-10)
	\Gamma^2 (\Delta_\phi ) 
	\Gamma (\ell+1) 
	\Gamma^2 (-h+\Delta_\phi +1)
} ,~~~s\to \infty.
\ee 
Further considering large $\ell$ limit , 
\be 
\sum_{\t=\dphi}^{\sqrt{(2\dphi+\ell) s_M}}
\left(\frac{\t^2}{16\dphi}\right)^\ell
C_{\t,\ell}\mn_{\t,\ell}
\sim~  
\ell^{h-1} \frac
{
	2^{4h-4} 
	(2\dphi) ^{5-2h-\ell}
}
{
	\pi ^2 
	(4h-10)
	\Gamma^2  (\Delta_\phi )
	\Gamma^2 (-h+\Delta_\phi +1)
} 
\ee 
Further
putting this into eq.\eqref{sigmam} and following through the same steps as before we get the asymptotic bound,

\be 
\bar{\ma}_M(s_M)\le
\frac{1}{s_M}
\sum_{\substack{\ell=0\\ \ell~ \text{even}}}^{L}
\frac
{
	2^{ 4h-4} (2\dphi) ^{4-2h}  \G(2\dphi-h)\G(h-1)
}
{
	\pi  
	(4h-10)\G(2h-2)
	\Gamma^2 (\Delta_\phi )
	\Gamma^2 (-h+\Delta_\phi +1)
}
\ell^{2h-2}
\ee 
Now using eq.\eqref{ellsum2}, 
\be \label{CFTFroissart3}
	\bar{\ma}_M(s_M)\le\mb_3~ s_M^{h-\frac{3}{2}}\ln^{2h-1}s_M 
\ee
with, 
\be \label{CoeffB3}
\mb_3:=
2^{ 4h-6} (2\dphi) ^{\frac{9}{2}-3h} 
\frac{8\p^{h-1}\mn~\G(h-1)}{(2h-5)(2h-1)\G(2h-2)}
\left(\frac{2(n-h)+3}{\sqrt{\r}\cos\varphi_0}\right)^{2h-1}
\ee 
\subsubsection{On the number of subtractions of Mellin Amplitude dispersion relation}
The polynomial boundedness assumption for the Mellin amplitude is closely tied to the question of writing a dispersion relation for the Mellin amplitude. The key point in this regard is how many subtractions are sufficient to write such a dispersion relation. The assumption of finiteness of $\mathfrak{a}_n$ naively suggests the possibility of writing a dispersion relation for Mellin amplitude with $n-$subtractions. Then the question is what can be the value of $n$. In the above analysis we have kept $n$ arbitrary. $n$ will be determined by the leading power law behaviour of the bound. What we mean by this is that the we have already seen that the generic structure of the Froissart$_{AdS}$ bound for $\bar{\ma}_M(s_M)$ is of the form $\bar{\ma}_M(s_M)\le C s_M^a\ln^bs_M $. Now it turns out that \emph{the value of  $n$ is controlled by $a$}.  This control happens in two ways. 

First observe the expression for the optimal value of the $\ell-$cutoff in eq.\eqref{Lopt}. There sits a factor of $(n-a)$. Now, in our analysis  we have extensively used the assumption $L\gg 1$. Then for the consistency of this assumption we require necessarily $n>a$. 

While this simple consideration puts a lower bound on the magnitude of $n$, it is also possible to obtain an upper bound on the same. The way to have so is by using a theorem from complex analysis called Phragmen-Lindeloff theorem (see, for example, \cite{Rademacher:1973}). The general logic goes as follows: assuming the polynomial boundedness condition as in section \ref{polymellin} and using the  Froissart$_{AdS}$ bound it is possible to show by the use of Phragmen-Lindeloff theorem that $n\le \floor{a}+1$. This thus puts an upper bound on $n$. Now we analyse the individual cases of different $h$ values.
\begin{itemize}
	\item[I.] \underline{$h< 5/2$}: 
	For $h<5/2$ we have from  eq.\eqref{CFTFroissart1} $a=1$.
	Then following logic chalked out above we have clearly $n=2$.
	\item[II.] \underline{$h=5/2$}: For this case as well we have $a=1$ from eq.\eqref{CFTFroissart2}. Thus again we will have   $n=2$.
	\item[III.] \underline{$h>5/2$}: This case is rather interesting. From eq.\eqref{CFTFroissart3} we have $a=(2h-3)/2$ thus leading immediately to, 
	\be 
	\frac{2h-3}{2}<n\le \lfloor{\frac{2h-3}{2}}\rfloor+1
	\ee  
	What this implies is that, while for $h=3$( equivalently $d=6$) one has to have $n=2$, \emph{one must have \underline{$n\ge 3$}}  for $h>3$ i.e., $d>6$. The number goes to infinity as $d$ goes to infinity. In section 5, we will provide an alternative derivation of these results without invoking the Phragmen-Lindeloff theorem.
\end{itemize}
\subsection{Connection to flat space Froissart bound} 
\subsubsection{$d=3$}
Now that we have bounds on Mellin amplitude  we would like to consider the flat space limit of the above bound. 
It is quite straightforward that  $\ma_M(s,t)$ is related to the absorptive part of the flat spacetime scattering amplitude $\mt(S,T)$ as in eq.\eqref{fl3} and similar relations follow for all averaged quantities. 

Now for the flat space limit we will focus our attention upon $h=3/2$ i.e., $3d$ CFT which in the flat space limit connects to $(3+1)d$ flat spacetime quantum field theory where the original Froissart bound was proved. We start with the bounding relation eq.\eqref{CFTFroissart1}. Considering  $n=2$ and $\cos\varphi_0=1$ in a limiting sense for strongest bound in eq.\eqref{CoeffB1} the Froissart$_{Ads}$ bound, eq.\eqref{CFTFroissart1}, becomes for $h=3/2$, 
\be 
\bar{\ma}_M(s_M)\le 4\p{\mn} \frac{s_M}{\r\dphi}\ln^2s_M.
\ee 
Next taking the flat space limit and making use of eq.\eqref{sSrel2}, one obtains\footnote{$s_0$ here has been put in on dimensional grounds.}, 
\be 
\bar{\ma}(s)\le \frac{\p}{2\r m^2}\,s\ln^2 \frac{s}{s_0}.
\ee
The known bound on $\bar{\ma}$ from literature is (recall we are averaging so there is an extra $1/2$), 
\be 
\bar{\ma}(s)\le \frac{\p}{2\m^2}\,s\ln^2 \frac{s}{s_0}.
\ee 
Thus what we find is an exact match provided we identify $\rho=\mu^2/m^2$, identifying $\mu$ as the lightest exchange in the t-channel. However here comes a crucial difference. One can check that using the MFT asymptotics, the sum over conformal partial waves converges\footnote{In the large $\ell$ limit, one can show that the summand in the $\ell$-sum goes like $\ell^{-2\sqrt{s}\ell} s^\ell/\ell!$ for any $\rho$. We need $\rho\leq 1$ since the measure factor $\m(s_M,t_M)$ in the Mellin representation has $\G^2(\D_\phi/2-t_M)$ so there is a double pole at $t_M=\D_\phi/2$.} even for $\rho=1$. This means we can set $\mu=m$, which is the mass of the external particle. If we do this, then in fact we will have better agreement with the existing numerical fits of the proton-proton data. This needs to be checked carefully of course, which we will leave for future work.

\subsubsection{$d\neq 3$}
Now we consider the case of $d>3$. Here we would like to make a comparison of the flat space limit of the Mellin amplitude bounding relation with standard result of bounds on flat space scattering amplitude in general spacetime dimensions \cite{Chaichian:1987zt} given in eq.(\ref{Imbound}). Upon comparison one finds that ratio of the the  frontal coefficient that is obtained on taking the flat space limit of eq.\eqref{CFTFroissart1}  to the frontal coefficient that appears in eq.(\ref{Imbound}) is,
\be 
\frac{2}{(5-2h)}.
\ee This coefficient is unity for $d=3$. Further for $d=4$ we find a weaker flat space bound by taking flat space limit of Mellin amplitude. For $d=2$ it is stronger.

\subsubsection{On  flat space limit of eq.\eqref{CFTFroissart2} and eq.\eqref{CFTFroissart3}}
We can consider taking  flat space limit of the Froissart$_{AdS}$ bound for $h\ge 5/2$, eq.\eqref{CFTFroissart2} and eq.\eqref{CFTFroissart3}, using the dictionary eq.\eqref{fl3}. 
\begin{itemize}
	\item[I.] \underline{$h=5/2$} : Upon applying the flat space limit translation on eq.\eqref{CFTFroissart2} one obtains, 
	\be \label{flat5}
	\bar{\ma}(s)\le \frac{\p^{\frac{3}{2}}}{8 m^2}~\left(\frac{n-1}{\sqrt{\r}\cos\varphi_0}\right)^4\left(\frac{s}{m^2}\right)\ln^5 \frac{s}{s_0}.
	\ee Recall that this supposed to be corresponding to flat space scattering in 6 spacetime dimensions. Now if we compare the above with standard Froissart-Martin bound in 6 spacetime dimensions (c.f. eqn (24)  of \cite{Chaichian:1987zt}) for the dependency upon the Mandelstam variable $s$ then we realize that the bound eq.\eqref{flat5} above is a weaker one due to the presence of one extra power of $\ln S$.
	\item[II.] \underline{$h>5/2$} : Taking the flat space limit of eq.\eqref{CFTFroissart3} yields , 
	\be \label{flat6}
	\bar{\ma}(s)\le 
	\frac{2^{h+\frac{3}{2}}~ \p^{h-1}~\G(h-1)}{(2h-5)(2h-1)\G(2h-2)}
	\left(\frac{2(n-h)+3}{\sqrt{\r}\cos\varphi_0}\right)^{2h-1}~\left(\frac{s}{m^2}\right)^{h-\frac{3}{2}}\ln^{2h-1}\frac{s}{s_0}
	\ee 
	Now if one  to compare the $s$ dependency of this bound with that of the standard Froissart-Martin bound one readily observes that  the bound eq.\eqref{flat6} becomes weaker with increasing $h$. 
\end{itemize}

\subsubsection{Is the difference in form for $d>4$ expected?}
We can give a heuristic reason to justify, that a crossover at some value of $d$ is expected, in the behaviour of the Froissart bound. Froissart in his original paper and Feynman independently \cite{Khuri:1969vlg} had a heuristic argument for the $\ln^2$ behaviour. The argument goes as follows. Imagine that the interaction is well approximated by a Yukawa type potential
$$ V\sim g\frac{e^{-\mu r}}{r}\,. $$
Now the maximum interaction happens when $g e^{-\mu r_*}\sim 1$, giving $r_*\sim \frac{\ln g}{\mu}$. Now assuming that the coupling $g$ depends on the energy $E$ polynomially, i.e., $g\propto E^N$ and also assuming that $\mu$ does not depend on $E$, we will find $r_*\sim \frac{\ln E}{\mu}$. Thus, the scattering cross-section in $d+1$ dimensions is $$\sigma\sim r_*^{d-1}\propto \ln^{d-1} E\,.$$
Now let us assume that this $\ln^{d-1}$ behaviour is to be expected (which is what flat space calculations give). In our calculation, since $L\sim \sqrt{s}\ln s$, this can happen from a factor of $L^{d-1}$. Now in Mellin space considerations, the extra powers of $s$ are given by the twist sum. If we assume that the asymptotic growth of twist is of the form $\tau^a$ (so that no extra powers of $\ln s$ can come from here), then from the upper limit of the $\tau$ integral we will get $s^{a/2+1/2}/(a+1)$. So the overall power of $s$ (taking into account the $1/{s_M}$ in the definition) in $\bar A_M$ is then 
$$
\frac{s_M^{\frac{a+d-2}{2}}}{a+1}\,,
$$
so that to match with the existing flat space answers in the literature we must have $a=4-d$. This makes the denominator $5-d$ so that for $d\geq 5$ there is a change in behaviour than what is expected since here the dominant contribution comes from the lower limit of the twist integral, which is independent of $s_M$. This is essentially what we find. 

\subsection{Obtaining the Froissart$_{AdS}$ bound: \enquote{Non-forward limit}} 
In the previous section the we tackled the problem of obtaining an asymptotic upper bound to $\bar{\ma}_M(s_M,t_M)$ for $s_M$ large and $t_M=0$ i.e., the forward limit. Now we turn to the same problem for $t\ne 0$. In fact, the main task i.e., that of obtaining an $\ell-$ cutoff, is already done. The new piece of information that we need now is an upper bound for the Gegenbauer polynomial $C_\ell^{(\l)}(x)$ for $x\in [-1,1]$. At this point it is worth of mentioning that we are considering \enquote{physical} values of $t_M$ i.e., $t_M<0$.
\paragraph{} Starting with eq.\eqref{Amst} we have, 
\begin{align}
	\begin{split} \label{ambst}
\bar{\ma}_M(s_M,t_M)\equiv \bar{\ma}_M(s_M,x)\approx  \frac{2\p}{2s_M-\dphi}&\mathlarger{\mathlarger{\sum}}_{\substack{\ell\\\ell~\text{even}}} \,\left[\,\frac{\G(\ell+1)}{(h-1)_\ell}
\frac{\G(2\dphi+\ell-h)}{2\dphi+\ell}
C_\ell^{(h-1)}(x)\right.\\
&\hspace{ 2.5 cm}\times
\left.\sum_{\t=\dphi}^{\t_\star}\left(\frac{1}{8}\right)^\ell C_{\t,\ell}~\mn_{\t,\ell}
\left(\t+q_\star-\frac{\dphi}{2}\right)^\ell\right].
\end{split}
\end{align} 
Below we will write $\bar{\ma}_M(s_M,t_M)$ and $\bar{\ma}_M(s_M,x)$ interchangebly. However  we will later explain that there is a subtle difference between holding $t_M$ fixed and holding $x$ fixed while considering $s_M\to\infty$.   

\paragraph{} Start with the following inequlity for Jacobi polynomial \cite{sz75-1}, 
\be \label{jacless}
P_\ell^{(\a,\b)}(\cos \theta)<\frac{K}{\theta^{\a+1/2}~\ell^{1/2}},~~~\a\ge -\frac{1}{2}
\ee where $K$ is a constant. 
Now using the definition of the Gegenbauer polynomial in terms of Jacobi polynomials, 
\be 
C_\ell^{(\l)}(x)=\frac{(2\l)_\ell}{(\l+1/2)_\ell}P_\ell^{(\l-1/2,\l-1/2)}(x).
\ee we obtain by eq.\eqref{jacless} above, 
\be 
C_\ell^{(\l)}(\cos \th)<\frac{(2\l)_\ell}{(\l+1/2)_\ell}\frac{K}{\theta^{\l}~\ell^{1/2}}
\ee Further considering the large $\ell$ limit, 
\be \label{gegnless}
C_\ell^{(\l)}(\cos \th)< \widehat{K} \ell^{\l-1}\th^{-\l}. 
\ee with, 
\be 
\widehat{K}=\frac{ \Gamma \left(\lambda +\frac{1}{2}\right)}{\Gamma (2 \lambda )} K.
\ee Now we can use this inequality eq.\eqref{gegnless} into eq.\eqref{ambst} to obtain [putting $x=\cos\th$]\footnote{We have pulled out the $\theta^{1-h}$ factor outside the $\tau$ sum since for $d\geq 2$, it behaves like $s^a$ with $a>0$ so we can replace the $s'$ dependence by $s$ at the level of the $s'$ integral.}, 
\begin{align}
	\begin{split}
		\bar{\ma}_M(s_M,\cos\th)&\le  \widehat{K}\frac{2\p}{2s_M-\dphi}
		\th^{1-h}
		\sum_{\substack{\ell\\\ell~\text{even}}}
		\frac{\Gamma (\ell+1)}{(h-1)_\ell}~\ell^{h-2}  \frac{(2\dphi-h)_\ell}{2\dphi+\ell}~\G(2\dphi-h)\sum_{\t=\dphi}^{\t_\star}\left(\frac{1}{8}\right)^\ell C_{\t,\ell}\mn_{\t,\ell}\left(\t+q_\star-\frac{\dphi}{2}\right)^\ell\\
		&\approx \widehat{K}\frac{2\p}{2s_M-\dphi}
		\th^{1-h}
		\sum_{\substack{\ell\\\ell~\text{even}}}
		\frac{\Gamma (\ell+1)}{(h-1)_\ell}~\ell^{h-2}  \frac{(2\dphi)^\ell}{2\dphi+\ell}~\G(2\dphi-h)\sum_{\t=\dphi}^{\t_\star}\left(\frac{1}{8}\right)^\ell C_{\t,\ell}\mn_{\t,\ell}\left(\t+q_\star-\frac{\dphi}{2}\right)^\ell
	\end{split}
\end{align} where we have considered the large  $\dphi$ limit. Next mimicking the same steps as in forward limit we can use eq.\eqref{tsasymp} for the $\t$ sum in the above. Thus again as before we have three cases depending upon the value of $h$. Further the optimal value of $L$ where the $\ell-$sum will be truncated is same as before i.e that given by eq.\eqref{Lopt}. 
\begin{enumerate}
	\item[1)] \textbf{Case I:}~$h<\frac{5}{2}$\\
	\be\label{ellnf}
	\bar{\ma}_M(s_M,\cos\th)\le K_1~ s_M^{\frac{3}{2}-h}~\th^{1-h}\lsumt\ell^{h-1}(2\dphi+\ell)^{\frac{3-2h}{2}}
	\ee with, 
	\be 
	K_1=8\p^h \mn
	\frac{2^{2h}~\G(h-1)}{\p(5-2h)\G(2h-2)}\widehat{K} .
	\ee 
	Now using the result eq.\eqref{ellsum4} we obtain, 
	\be 
	\bar{\ma}_M(s_M,\cos\th)\le 
	K_1~(2\dphi)^{\frac{3-2h}{2}} s^{\frac{3}{2}-h}\th^{1-h}~\frac{L^{h}}{2 h},~~~~2\dphi\gg L\gg 1.
	\ee Now using the  optimal value of L eq.\eqref{Lopt}, 
	
	\be
	\bar{\ma}_M(s_M,\cos\th)\le 
	\mc~s^{\frac{3-h}{2}}~ \ln^{h}s_M~\th^{1-h}\ee
	with, 
	\begin{align}
		\begin{split} 
			\mc=\frac{K_1}{4h}~(2\dphi)^{\frac{3}{2}(1-h)}\left(\frac{n-3(1-h)/2}{\sqrt{\r}\cos\varphi_0}\right)^{h}
		\end{split}
	\end{align}
	where we have put the apt values of $a$ following the same logic as in the forward case. Now for fixed $t_M$ with $s_M\gg\dphi\gg 1$ we can rewrite the bounding expression above in terms of $t$ by using
	\be 
	\th\approx 2\sqrt{\frac{|t_M|}{s_M}}.
	\ee  Thus we have the bound, 
	\be\label{astbnd}
		\bar{\ma}_M(s_M,t_M)\le 
		\mc_1~s\ln^{h}s~|t|^{\frac{1-h}{2}}~ 
	\ee where $\mc_1=2^{1-h} \mc$. Note that $|t|$ takes care of the fact we are considering $t<0$.
	
	Using the flat space limit dictionary eq.\eqref{fl3} obtain the following bound, 
	\be 
	\ma(s,t)\le \mk_1~{m^{3-2h}} \left(\frac{s}{m^2}\right)\ln^h \frac{s}{s_0}\left(\frac{|T|}{m^2}\right)^{\frac{1-h}{2}}
	\ee where $\mk_1$ is constant.
	\item[2)] \textbf{Case II:} $h=\frac{5}{2}$\\
	\be\label{ellsum5}
	\bar{\ma}_(s_M,\cos\theta)\le K_2~\frac{\ln s}{s}~\th^{-\frac{3}{2}}
	\lsumt  \ell^{3/2}
	\ee
	with, 
	\be 
	K_2=32\p^2\frac{\mn}{\dphi} \widehat{K}
	\ee Now doing the $\ell-$sum, 
	\be 
	\lsumt  \ell^{3/2}= 2 \sqrt{2} H_{\frac{L}{2}}^{\left(-\frac{3}{2}\right)}\sim \frac{L^{5/2}}{5},~~~L\to\infty.
	\ee Putting this into eq.\eqref{ellsum5} we obtain, 
	\be 
	\bar{\ma}_M(s_M,\cos\theta)\le \frac{K_2}{5}\left(\frac{n-1/4}{\sqrt{\r}\cos\varphi_0}\right)^{5/2}(2\dphi)^{-\frac{5}{4}}~s^{\frac{1}{4}}\ln^{\frac{7}{2}}s~\th^{-\frac{3}{2}}\ee
	Again we can express $\th$ in terms of $t$ to obtain, 
	\be
		\bar{\ma}_M(s_M,t_M)\le \mc_2~s_M\ln^{\frac{7}{2}}s_M~|t_M|^{-\frac{3}{4}}
	\ee
	with 
	\be 
	\mc_2=\frac{\p^2\mn}{5}\left(\frac{2}{\dphi}\right)^{\frac{9}{4}}\left(\frac{n-1/4}{\sqrt{\r}\cos\varphi_0}\right)^{5/2}\widehat{K}.
	\ee If we now consider taking the flat space limit of the above bound using eq.\eqref{fl3} then we get, 
	\be 
	\bar{\ma}(s,t)\le \frac{\mk_2}{m^2}\left(\frac{s}{m^2}\right)\ln^{\frac{7}{2}}s\left(\frac{|t|}{m^2}\right)^{-\frac{3}{4}}
	\ee  where, 
	\be 
	\mk_2=\frac{{2}^{\frac{9}{4}}\p^2}{5}\left(\frac{n-1/4}{\sqrt{\r}\cos\varphi_0}\right)^{5/2}\widehat{K}.
	\ee 
	\item[3)]\textbf{Case III:} $h>\frac{5}{2}$\\
	Finally we come to to case of $h>5/2$. Going thorugh the same steps as before we reach the following bounding relation, 
\be
		\bar{\ma}_M(s_M,t_M)\le K_3 ~s^{h-\frac{3}{2}}\ln^hs_M~|t_M|^{\frac{1-h}{2}}\ee
where $K_3$ is a constant.
\end{enumerate} 	

\section{Dispersion relations}\label{Mellindisp}

In this section, we will follow \cite{Khuri:1969vlg} and write down dispersion relations for the Mellin amplitudes. The bounds derived in the previous section for the non-forward limit will prove useful here. We begin by writing an $N$-subtracted dispersion relation
\be\label{disp1}
{\mathcal M}(s,t)=\sum_{m=0}^{N-1}C_m(t)s^m+\frac{s^N}{\pi}\int_{\frac{\D_\phi}{2}}^\infty \, ds' \frac{\ma_M^{(s)}(s',t)}{s'{}^N(s'-s)}+\frac{u^N}{\pi}\int_{\frac{\D_\phi}{2}}^\infty \, du' \frac{\ma_M^{(u)}(u',t)}{u'{}^N(u'-u)}\,,
\ee
where $s+t+u=\D_\phi/2$ and $C_n(t)$'s are analytic in $t$ for $t<\D_\phi/2$. The number of subtractions is related to the number of  $C_m(t)$'s that one will need to take as input. For the identical scalar case that we have been considering so far, $\ma^{(s)}_M(u,t)=\ma^{(u)}_M(u,t)$, but we will keep the discussion more general. The bounds in the previous section, although derived for $t<0$ will continue to hold for $t>0$ for sufficiently small\footnote{This can be explicitly checked using the expressions in \cite{Khuri:1969vlg} where an alternative derivation can be found.} $t$. The bounds are of the form $\bar\ma(s,t)\leq K s^a \ln^b s |t|^{\frac{1-h}{2}}$. This suggests that there exists a $t=t_0$ with  $0<t_0\ll \D_\phi/2$ such that  $\ma(s,t)\leq c s^{n-1+\epsilon}$ can be used inside an integral with $\epsilon<1$. For instance, for $h\leq 5/2$ we have $n=2$ while for $h>5/2$ we have $n=\lfloor{\frac{2h-3}{2}}\rfloor+1.$
This means that for $0\leq t\leq t_0$ we can write the dispersion relation
\be\label{disp2}
{\mathcal M}(s,t)=\sum_{m=0}^{n-1}C_m(t)s^m+\frac{s^n}{\pi}\int_{\frac{\D_\phi}{2}}^\infty \, ds' \frac{\ma_M^{(s)}(s',t)}{s'{}^n(s'-s)}+\frac{u^n}{\pi}\int_{\frac{\D_\phi}{2}}^\infty \, du' \frac{\ma_M^{(u)}(u',t)}{u'{}^n(u'-u)}\,.
\ee
Comparing eq.\eqref{disp1} and eq.\eqref{disp2}) (assuming $N>n$) we get an equation
\be
\sum_{m=0}^{N-1}C_m(t)s^m=\sum_{m=0}^{n-1}C_m(t)s^m+\sum_{m=n}^{N-1}\left(\frac{s^m}{\pi}\int_{\frac{\D_\phi}{2}}^\infty \, ds' \frac{\ma_M^{(s)}(s',t)}{s'{}^{m+1}}+\frac{u^m}{\pi}\int_{\frac{\D_\phi}{2}}^\infty \, du' \frac{\ma_M^{(u)}(u',t)}{u'{}^{m+1}}\right)\,.
\ee
Comparing the highest power of $s$ for large $s$ (assuming $N$ is even), we have
\be \label{Cint}
C_{N-1}(t)=\frac{1}{\pi}\int_{\frac{\D_\phi}{2}}^\infty \, ds' \frac{\ma_M^{(s)}(s',t)}{s'{}^{N}}+\frac{1}{\pi}\int_{\frac{\D_\phi}{2}}^\infty \, du' \frac{\ma_M^{(u)}(u',t)}{u'{}^{N}}\,.
\ee
We can Taylor expand the integrand around $t=0$ writing $\ma_M^{(s)}(s',t)=\sum_{k=0}^\infty A^{(s)}_{n}(s') t^n$ and $\ma_M^{(u)}(u',t)=\sum_{k=0}^\infty A^{(u)}_{n}(u') t^n$ where it can be shown that the coefficients are all positive which follows from $\frac{d^n C_\ell^{(\lambda)}(x)}{d x^n}\geq 0$. Using this and the fact that $C_{N-1}(t)$ was analytic for $t<\D_\phi/2$, it follows that each integral on the rhs of eq.\eqref{Cint} is finite for $t<\D_\phi/2$. This in turn 
implies that 
\be
\frac{|{\mathcal M}(s,t)|}{s^N}\rightarrow 0\,,
\ee 
as $s\rightarrow \infty$ for $t<\D_\phi/2$. As a result we can consider one less subtraction in eq.\eqref{disp1} than what we started off with. For $N$ odd, the situation is similar with the number of subtractions going down by two. This can be repeated until we reach the conclusion that 
$$
\int_{\frac{\D_\phi}{2}}^\infty \, ds' \frac{\ma_M^{(s)}(s',t)}{s'{}^{n+1}}\,,
$$ 
and analogously the $u$-channel integral is finite for $t<\D_\phi/2$, for $n$ specified above. This essentially leads to eq.(\ref{disp2}) being the appropriate dispersion relation for $t<\D_\phi/2$. This is another way of deriving our conclusions stated in section 4.1.3. 

\begin{subappendices}
	\section*{Appendix}
	\section{ Mack polynomials: Conventions and properties}\label{Mackasym1}
	In this appendix, we will show explictly that in the \enquote{flat space limit} the leading asymptotic of  Mack polynomial is  Gegegnabuer polynomial. For that we need the explicit form of the Mack polynomial. There exists varied representation of Mack polynomials \cite{Mack:2009mi,Costa:2012cb,Dolan:2011dv}. The normalization that we deploy  for our cause is given by , 
	\begin{align}\label{MAckdef1}
		\begin{split}
			\widehat{P}_{\t,\ell}(s,t)&=\sum_{n=0}^{\ell}\sum_{m=0}^{\ell-n}\m_{m,n}^{(\ell)}\left(\t-s-\frac{\D_\f}{2}\right)_m\left(-t+\frac{\D_\f}{2}\right)_n
		\end{split}
	\end{align} 
	with
	\begin{align}\label{mackmudef}
		\begin{split}
			\m_{m,n}^{(\ell)}=2^{-\ell}(-1)^{m+n}
			&\left(
			\begin{matrix}
				\ell\\
				m,n
			\end{matrix}\right)(\t+\ell-m)_m(\t+n)_{\ell-n}(\t+m+n)_{\ell-m-n}(\ell+h-1)_{-m}(2\t+2\ell-1)_{n-\ell}\\
			& 
			~\hspace{4.5 cm}\times~_4F_3
			\left[
			\begin{matrix}
				-m,~1-h+\t,~1-h+\t,~n-1+2\t+\ell\\
				\t+\ell-m,~\t+n,~2-2h+2\t
			\end{matrix}
			~\Big|1\right].
		\end{split}
	\end{align}
	Now we introduce the variable, 
	\be \label{Gegvar}
	x=1+\frac{2t}{s-\D_\f/2}.
	\ee 
	Using this variable we rewrite the Mack polynomial as a function of $(s,x)$,
	\be \label{mnewdef}
	\widehat{P}_{\t,\ell}(s,x)=\sum_{n=0}^{\ell}\sum_{m=0}^{\ell-n}\m_{m,n}^{(\ell)}\left(\t-\left(s+\frac{\D_\f}{2}\right)\right)_m\left(\frac{1-x}{2}(s-\D_\f/2)+\frac{\D_\f}{2}\right)_n.
	\ee 
	Next we move on to giving the prescription for flat space limit. In the \enquote{flat space limit} we will consider, 
	\be 
	s\gg \t\gg 1
	\ee 
	The reason for this is that in our analysis, the twist sum lies between $\D_\phi$ and $\sqrt{2\D_\phi s}$ and since $s\gg \D_\phi/2$, the above consideration follows. 
	In this limit we have for the leading asymptotic, 
	\be 
	\left(\t-s-\frac{\D_\f}{2}\right)_m\left(\frac{1-x}{2}(s-\D_\f/2)+\frac{\D_\f}{2}\right)_n\sim (-1)^m s ^{m+n}\left(\frac{1-x}{2}+\frac{\D_\f}{(2s-\D_\f)}\right)^n
	\ee
	Clearly in the limit $s\gg1$ the leading contribution in eq.\eqref{mnewdef} comes from $m=\ell-n$ so that we have, 
	\be \label{limitform}
	\Mack_{\t,\ell}(s,x)\sim s^{\ell}\sum_{n=0}^{\ell}(-1)^{\ell-n}\m_{\ell-n,n}^{(\ell)}\left(\frac{1-x}{2}+\frac{\D_\f}{(2s-\D_\f)}\right)^n
	\ee 
	Now we will focus upon the $\t\to \infty$ asymptotic of $(-1)^{\ell-n}\m_{\ell-n,n}^{(\ell)}$. To start with, the leading large $\t$ asymptotic of the factor premultiplying the hypergeometric function in eq.\eqref{mackmudef} is given by 
	\be \label{prefactorasymp}
	2^{-2\ell+n}\t^{\ell-n}\frac{(-\ell)_n}{n!}(\ell+h-1)_{n-\ell}
	\ee 
	where we have used the relation 
	\be 
	(-1)^n \left(\begin{matrix}
		\ell\\
		n
	\end{matrix}\right)=\frac{(-\ell)_n}{n!}.
	\ee 
	Next we focus upon the hypergeometric function above. Note that the $_4F_3$ above is balanced. Therefore we  can use the following transformation due to Whipple to convert one balanced $_4F_3$ into another balanced $_4F_3$, 
	\be \label{whipple1}
	_4F_3\left[
	\begin{matrix}
		-p,~a,~b,~c\\
		d,~e,f
	\end{matrix}
	~\Big|1\right]=\frac{(e-a)_p(f-a)_p}{(e)_p(f)_p}
	~_4F_3\left[
	\begin{matrix}
		-p,~a,~d-b,~d-c\\
		d,~a+1-p-e,a+1-p-f
	\end{matrix}
	~\Big|1\right].
	\ee 
	Using this we convert the $_4F_3$ in eq.\eqref{mackmudef} into , 
	\be 
	\frac{(h+n-1)_{\ell-n} (-h+\tau +1)_{\ell-n}}{(n+\tau )_{\ell-n} (2 \tau-2 h +2)_{\ell-n}}
	~_4F_3
	\left[
	\begin{matrix}
		-(\ell-n),~h+n-1,~1-\ell-\t,~1-h+\t\\
		2-\ell-h,~n+\t,n+h-\tau-\ell
	\end{matrix}
	~\Big|1\right].
	\ee 
	Next we consider the limit $\t\to\infty$ keeping $\ell$ fixed. The leading asymptotic is given by, 
	\be \label{hypgeomasymp}
	2^{n-\ell}\frac{(h+n-1)_{\ell-n}}{\t^{\ell-n}}~
	_2F_1\left[
	\begin{matrix}
		-(\ell-n),~h+n-1\\
		2-\ell-h
	\end{matrix}
	~\Big|1\right].
	\ee 
	Thus clubbing together eq.\eqref{hypgeomasymp} and eq.\eqref{prefactorasymp} we have, 
	\be \label{muinter}
	(-1)^{\ell-n}\m_{\ell-n,n}^{(\ell)}\sim 8^{-\ell} 2^{2 n}\frac{(-\ell)_n}{n!}~_2F_1\left[
	\begin{matrix}
		-(\ell-n),~h+n-1\\
		2-\ell-h
	\end{matrix}
	~\Big|1\right].
	\ee 
	Next using Chu-Vandermonde identity
	\be 
	_2F_1\left[
	\begin{matrix}
		-p,~a\\
		c
	\end{matrix}
	~\Big|1\right]=~\frac{(c-a)_p}{(c)_p},~~~~p\in\mathbb{Z}\backslash\mathbb{Z}^-.
	\ee 
	to obtain further 
	\begin{align}
		\begin{split} 
			_2F_1\left[
			\begin{matrix}
				-(\ell-n),~h+n-1\\
				2-\ell-h
			\end{matrix}
			~\Big|1\right]=&~\frac{(3-\ell-2h-n)_{\ell-n}}{(2-\ell-h)_{\ell-n}}\\
			=&~\frac{\G(2h-2+\ell+n)}{\G(2h-2+2n)}\times\frac{\G(h-1+n)}{\G(h-1+\ell)}\\
			=&~\frac{(2h-2)_\ell}{(h-1)_\ell}\times\frac{(h-1)_n(2h-2+\ell)_n}{(2h-2)_{2n}}
		\end{split}
	\end{align} 
	Further using the identity, 
	\be 
	\left(x+\frac{1}{2}\right)_n=~2^{2n}\frac{(2x)_{2n}}{(x)_n},~~~n\in\mathbb{Z}\backslash\mathbb{Z}^-
	\ee 
	we reach at, 
	\be 
	2^{2n}~_2F_1\left[
	\begin{matrix}
		-(\ell-n),~h+n-1\\
		2-\ell-h
	\end{matrix}
	~\Big|1\right]=~\frac{(2h-2)_\ell}{(h-1)_\ell}\times \frac{(2h-2+\ell)_n}{\left(h-\frac{1}{2}\right)_n}
	\ee 
	Thus collecting everything, 
	\be \label{mufinal}
	(-1)^{\ell-n}\m_{\ell-n,n}^{(\ell)}\sim \frac{8^{-\ell}}{(h-1)_\ell}(2h-2)_\ell~\frac{(-\ell)_n(\ell+2 h-2)_n}{n!\left(h-\frac{1}{2}\right)_n}
	\ee 
	Putting this into eq.\eqref{limitform} we have in the limit $s\gg\t\gg 1,~ s\gg\D_\f$, with $x$ fixed, 
	\begin{align}
		\begin{split} 
			\Mack_{\t,\ell}(s,x)&\sim \frac{s^\ell}{8^\ell(h-1)_\ell}(2h-2)_\ell\sum_{n=0}^{\ell}\frac{(-\ell)_n(\ell+2 h-2)_n}{n!\left(h-\frac{1}{2}\right)_n}\left(\frac{1-x}{2}+\frac{\D_\f}{(2s-\D_\f)}\right)^n\\
			&=\frac{s^\ell}{8^\ell(h-1)_\ell}(2h-2)_\ell~
			_2F_1\left[
			\begin{matrix}
				-\ell,~2(h-1)+\ell\\
				(h-1)+\frac{1}{2}
			\end{matrix}~\Big|\frac{1-x}{2}+\frac{\D_\f}{(2s-\D_\f)}\right]\\
			&=\frac{s^\ell}{n_\ell}C_\ell^{(h-1)}(x)+O(s^{\ell-1})\,
		\end{split}
	\end{align} 
	where 
	\be 
	n_\ell=8^\ell \frac{(h-1)_\ell}{\ell!}.
	\ee 
	Note that even for $x=1$ the above expression holds true in the large $s$ limit.
	Thus we have the final asymptotic equivalence, 
	\be \label{finalasymp}
	{\Mack_{\t,\ell}(s,x)\sim \left(\frac{s}{8}\right)^\ell\frac{\ell!}{(h-1)_\ell}C_{\ell}^{(h-1)}(x),~~~s\gg\t\gg1.}
	\ee 
	
	\section{Bounds on Gegenbauer polynomial}\label{bndderive}
	In this appendix we provide with a proof of the bounding relation eq.\eqref{Gegenbound}. The proof follows that given in \cite{Chaichian:1987zt}. We start with the following integral representation of the Gegenbauer polynomial $C_\ell^{(\a)}(x)$ for $x>1$ (see for example \cite{andrews_askey_roy_1999}), 
	\begin{align} 
		C_{\ell}^{(\a)}(x)&=\frac{\G(2\a+\ell)}{2^{2\a-1}\G(\ell+1)\G^2(\a)}\int_0^\p~\left[x+\sqrt{x^2-1}\cos\varphi\right]^\ell~\sin^{2\a-1}\varphi~d\varphi\\
		&=C_\ell^{(\a)}(1)\frac{\G(2\a)}{2^{2\a-1}\G^2(\a)}\int_0^\p~\left[x+\sqrt{x^2-1}\cos\varphi\right]^\ell~\sin^{2\a-1}\varphi~d\varphi
	\end{align}
	Then  we have the following integral representation for the normalized Gegenbauer polynomial [c.f. eq.\eqref{gegennorm}], 
	\be 
	\mathbf{C}_\ell^{(\a)}(x)=\frac{\G(2\a)}{2^{2\a-1}\G^2(\a)}\int_0^\p~\left[x+\sqrt{x^2-1}\cos\varphi\right]^\ell~\sin^{2\a-1}\varphi~d\varphi
	\ee 
	Introducing the variable, 
	\be 
	y=\frac{\sqrt{x^2-1}}{x},~~~x>1
	\ee define, 
	\begin{align}
		\begin{split}
			H_\ell(y,\varphi_0,\varphi)&:=\frac{(1+y\cos\varphi)^\ell}{(1+y\cos\varphi_0)^\ell},\\
			G_{\ell}(y,\varphi_0)&:=\int_{0}^\p  H_\ell(y,\varphi_0,\varphi) (\sin\varphi)^{2\a-1} d\varphi.
		\end{split}
	\end{align}
	Since $H_\ell(y,\varphi_0,\varphi)$ is an increasing function of $y$ for $0<\varphi<\varphi_0<\p$ and decreasing function of $y$ for $0<\varphi_0<\varphi<\p$, we can obtain the following inequality quite easily, 
	\be \label{kinq}
	G_{\ell}(y,\varphi_0)\ge K(\varphi_0)
	\ee 
	where 
	\be 
	K(\varphi_0)=\int_{0}^{\varphi_0}d\varphi~(\sin\varphi)^{2\a-1}.
	\ee Thus using eq.\eqref{kinq} we can obtain quite straightforwardly, 
	\be \label{gegeninq}
	\mathbf{C}_\ell^{(\a)}(x)\ge \frac{\G(2\a)}{2^{2\a-1}\G^2(\a)}K(\varphi_0) \left[x+\sqrt{x^2-1}\cos\varphi_0\right]^\ell
	\ee 
	\section{The Conformal Partial Wave Expansion in Mellin Space} \label{melpartdrv}
	In this appendix, we give a short account of the conformal partial wave expansion in Mellin space, leading to eq.\eqref{melpart}. 
	First, consider the usual conformal partial wave expansion in position space. For a $4-$point correlator of identical scalar primaries of conformal dimension $\dphi$, the $s-$channel conformal partial wave expansion is given by, 
	\be \label{cblockexp}
	\mg(u,v)=\sum_{\t,\ell} C_{\t,\ell}\,G_{\t,\ell}(u,v)
	\ee where $u, v, \mg(u,v)$ are defined in eq.\eqref{guvdef} and $\t=(\D-\ell)/2$ with $\D, \ell$ being respectively the scaling dimension and spin of the exchanged primary and $C_{\t,\ell} $ is the corresponding OPE coefficient squared. We are considering unitary theories where $C_{\t,\ell}\ge 0$. Here, $G_{\t,\ell}(u,v)$ is the $s-$channel conformal block  with the normalization that, in the limit $v\ll u\ll 1$ \cite{Alday:2015ewa} the conformal block has the asymptotic form \ 
	\be \label{cblock}
	G_{\t,\ell}(u,v)\sim u^{\t} f_{\t,\ell}(v).
	\ee 
	\paragraph{}  With our definition for the Mellin amplitude, eq.\eqref{MElldef}, the $\mm(s_M,t_M)$ admits a partial wave expansion \cite{Dolan:2011dv}
	\be \label{mell1}
	\mm(s_M,t_M)=\sum_{\t,\ell}C_{\t,\ell}\,
	\widehat{\mn}_{\t,\ell}\,
	\frac{\sin^2\p\left(\frac{\D_\f}{2}-s_M\right)}{\sin^2\p\left(\D_\f-\t-\frac{\ell}{2}\right)}\,\,
	\frac
	{
		\G\left(
		\t-s_M-\frac{\dphi}{2}
		\right)
		\G\left(
		h-\t-\ell-\frac{\dphi}{2}-s_M
		\right)
	}
	{
		\G^2\left(\frac{\dphi}{2}-s_M\right)
	}
	\,
	\Mack_{\t,\ell}(s_M,t_M)
	\ee with 
	\be 
	\widehat{\mn}_{\t,\ell}=2^{\ell}\frac{(2\t+2\ell-1)\G^2(2\t+2\ell-1)}{\G(2\t+\ell-1)\G(h-2\t-\ell)\G^4(\t+\ell)}.
	\ee    This expansion above is just the Mellin space version of the $s-$channel conformal block expansion eq.\eqref{cblockexp} above  with the normalization eq.\eqref{cblock}. Now we will massage this expression into eq.\eqref{MElldef}.
	\paragraph{}
	The starting point is the observation that
	the Euler-beta function $B(x,y)=\frac{\G(x)\G(y)}{\G(x+y)}$ admits the following expansion, 
	\be \label{Bexp}
	B(x,y)=\sum_{n=0}^{\infty}\frac{(-1)^n(y-n)_n}{n!(x+n)}=\sum_{n=0}^{\infty}\frac{(-1)^n(x-n)_n}{n!(y+n)}.
	\ee 
	In other words, the Euler-beta function can be expanded in terms of the poles of either of the Gamma
	functions. This is not true for the usual Gamma function which needs, in addition, a regular piece for
	the expansion to be valid. Using this, we will massage the $s-$dependent combination of Gamma functions appearing in eq.\eqref{mell1} into the form,
	\be 
	\frac
	{
		\G\left(
		\t-s_M-\frac{\dphi}{2}
		\right)
		\G\left(
		h-\t-\ell-\frac{\dphi}{2}-s_M
		\right)
	}
	{
		\G^2\left(\frac{\dphi}{2}-s_M\right)
	}
	=
	B
	\left(
	\t-s_M\frac{\dphi}{2}, \dphi-\t
	\right) 
	\frac{
		\G\left(h-\t-\ell-s_M-\frac{\dphi}{2}\right)
	}
	{
		\G\left(\frac{\dphi}{2}-s_M\right)\G(\dphi-\t)
	}
	\ee Next, we use eq.\eqref{Bexp} to expand the beta function leading to\footnote{Here we have defined $\mf_{\t,\ell}(s):=C_{\t,\ell}\,
		\widehat{\mn}_{\t,\ell}\,
		\frac{\sin^2\p\left(\frac{\D_\f}{2}-s\right)}{\sin^2\p\left(\D_\f-\t-\frac{\ell}{2}\right)}$ to avoid cumbersome notation.}
	\begin{align}
		\begin{split}
			\mm(s_M,t_M)=&\sum_{\t,\ell} \mf_{\t,\ell}(s_M) 
			\sum_{n=0}^{\infty}
			\frac{(-1)^n(\dphi-\t-n)_n}{n!(\t-s_M-\dphi/2+n)}\,
			\frac{\G(h-\t-\ell-s_M-\dphi/2)}{\G(\dphi/2-s_M)\G(\dphi-\t)}
			\Mack_{\t,\ell}(s_M,t_M)\\
			=&\sum_{\t,\ell} \mf_{\t,\ell}(s_M) 
			\sum_{n=0}^{\infty}
			\frac{(-1)^n(\dphi-\t-n)_n}{n!(\t-s_M-\dphi/2+n)}\,
			\frac{\G(h-2\t-\ell-n)}{\G(\dphi-\t-n)\G(\dphi-\t)}
			\Mack_{\t,\ell}(s_M,t_M)+\dots\\
			=& \sum_{\t,\ell} \mf_{\t,\ell}(s_M)
			\frac{\G(h-2\t-\ell)\Mack_{\t,\ell}}{\G^2(\dphi-\t)\left(\t-s_M-\frac{\dphi}{2}\right)}
			~_3F_2
			\left[
			\begin{matrix}
				\t-s_M-\frac{\D_\f}{2},1+\t-\D_\f,1+\t-\D_\f\\
				1+\t-s_M-\frac{\D_\f}{2},2\t+\ell-h+1
			\end{matrix}~\Big|1
			\right]+\dots
		\end{split}
	\end{align}
	where in the second line we have Taylor expanded the ratio of Gamma functions around the $s-$pole $s_M=\t-\frac{\dphi}{2}+n$. The dots represent regular terms and we assume that,
	this final form will give the same residues at the location of the physical poles as eq.\eqref{mell1}. However, notice that $_3F_2$ form now is devoid of the zeros coming from the inverse $\G^2\left(\frac{\dphi}{2}-s_M\right)$ factor and
	differs from the form in eq.\eqref{mell1} by regular terms hidden in the dots. We assume that, these regular terms converge so that they will not contribute to the absorptive part of $\mm(s_M,t_M)$. Thus we can consider this form of $\mm(s_M,t_M)$ modulo the regular terms. Finally restoring all the factors we reach the desired form eq.\eqref{melpart}. Again to emphasise, we could have started with the Dolan-Osborn form in eq.(\ref{mell1}) and obtained the imaginary part directly from there as well--the results would be identical.
	
	\section{Asymptotic analysis of the sum eq.\eqref{cnsum}}\label{tausum}
	The summand of the $\t-$ sum is given by eq.\eqref{summand}. Barring all the prefactors, we will concentrate upon the following sum, 
	\be 
	\sum_{\t=\dphi}^{\t_\star}\t^{4-2 h} \sin^2[\p(\t-\dphi)],~~~\t_\star=\sqrt{(2\dphi+\ell)s_M}.
	\ee 
	We need to be able to this sum. Generally we can resort to integrals assuming that the $\t-$spectrum can be considered to be continuous so that we can resort to the integral. However this is actually not true. But for the purpose of the asymptotic evaluations we can resort to integral. Thus we are interested in the integral, 
	\be \label{tint}
	\int_{\dphi}^{\t_\star}d\t~\t^{4-2 h}\sin^2[\p(\t-\dphi)]
	\ee We can start with the  indefinite version of the above integral,
	\begin{align}
		\begin{split}
			F(\t):=&\int d\t~\t^{4-2 h}\sin^2[\p(\t-\dphi)]\\
			&=
			\frac{e^{-2 i \pi  \Delta _{\phi }} }{4 (4 (h-3) h+5)}
			\left[(2 h-1) \tau ^{5-2h} \left[-2 e^{2 i \pi  \Delta _{\phi }}+(2 h-5) \left\lbrace e^{4 i \pi  \Delta _{\phi }} E_{2 h-4}(2 i \pi  \tau )+ E_{2 h-4}(-2 i \pi  \tau )\right\rbrace\right]\right.\\
			& \left. \hspace{10.5 cm}+8 e^{2 i \pi  \Delta _{\phi }} \Delta _{\phi }^{5-2h} \right]   
		\end{split}
	\end{align} 
	Now we will consider two asymptotic limits. First we will consider the asymptotic of $F(\dphi)$ in the limit of large $\dphi$. This limit is given by, 
	\be 
	F(\dphi)\sim \frac{\Delta _{\phi }^{5-2 h}}{10-4 h}
	\ee 
	Also in the limit $\t_\star\gg 1$, 
	\be 
	F(\t_\star)\sim \frac{\t_\star^{5-2 h}}{10-4 h}.
	\ee 
	Note that since we are considering $s\gg\dphi$ hence for $h>\frac{5}{2}$,  $F(\dphi)$ dominates over $F(\t_\star)$.
	Thus for $h<5/2$ we can write, 
	\be \label{hless}
	\int_{\dphi}^{\t_\star}d\t~\t^{4-2h}\sin^2[\p(\t-\dphi)]\sim \frac{\t_\star^{5-2 h}}{10-4 h}=\frac{1}{10-4h}\left[s(2\dphi +\ell)\right]^{\frac{5}{2}-h},~~~\t_\star\to\infty.
	\ee 
	and for $h>5/2$ we can use, 
	\be \label{hgrt}
	\int_{\dphi}^{\t_\star}d\t~\t^{4-2h}\sin^2[\p(\t-\dphi)]\sim \frac{\dphi^{5-2 h}}{4h-10},~~~\dphi\to\infty
	\ee   
	
	Now we would like to consider the case $h=5/2$. Note that we can not put $h=5/2$ directly into the either asymptotic expressions that we have obtained so far, eq.\eqref{hless} or eq.\eqref{hgrt}, because the denominator vanishes identically thus resulting into a non-removable singular structure. This is however expected because on putting $h=5/2$ into  eq.\eqref{tint} we see that the integrand being $\sim\frac{1}{\t}$ has a logarithmic singularity. Thus to tackle this case we will start with the integral eq.\eqref{tint} and put $h=5/2$ into it so that the integral we need to do is, 
	\be 
	\int_{\dphi}^{\t_\star}\frac{d\t}{\t}\sin^2[\p(\t-\dphi)].
	\ee 
	Therefore, 
	\be 
	F(\t)\bigg|_{h=\frac{5}{2}}=\frac{1}{2} \left[\log (\pi  \tau)-\text{Ci}(2 \pi  \tau) (-\cos (2 \pi  \dphi))-\sin (2 \pi  \dphi) \text{Si}(2 \pi  \tau)\right]
	\ee Clearly we can write the asymptotic expression, 
	\be 
	F(\t)\bigg|_{h=\frac{5}{2}}\sim \frac{1}{2}\log(\t),~~~\t\to\infty.
	\ee Thus in the limit $s_M\gg\dphi\gg1$ and also $s\gg\ell$, we can write the leading order asymptotic expression, 
	\be \label{5dint}
	\int_{\dphi}^{\t_\star}\frac{d\t}{\t}\sin^2[\p(\t-\dphi)]\sim \frac{1}{4}\log s_M. 
	\ee 
	
	\section{Asymptotic evaluations of various sums}
	In this appendix we provide certain asymptotes of summation expressions.
	\begin{itemize}
		\item[1)] We come across the following sum in various occasions of our analysis  , 
		\be 
		\lsumt \ell^{2h-2}.
		\ee We are mostly interested in the large $L$ asymptotic of the above sum. To obtain so, first we we have
		\be \label{ellsum1}
		\lsumt \ell^\a=2^{\alpha } H_{\frac{L}{2}}^{(-\alpha )}.
		\ee 
		Next, considering the asymptotic of the $r^{th}$ order Harmonic number
		\be 
		H_{x}^{(r)}\sim \frac{x^{1-r}}{1-r},~~~x\to\infty\ee
		we can write, 
		\be 
		\lsumt \ell^\a \sim \frac{L^{1+\a}}{2+2\a}.
		\ee Thus we have finally, 
		\be \label{ellsum2}
		\lsumt \ell^{2h-2}\sim \frac{L^{2h-1}}{4h-2}.
		\ee  
		\item[2)] Next, we consider the $\ell-$sum appearing in the eq.\eqref{ambsum} and eq.\eqref{ellnf}. The sum is of the generic form, 
		\be 
		\lsumt \ell^{b}~(a+\ell)^{c},~~~~a>0;~b,c\in\mathbb{R}.
		\ee Now for our purpose, we are generally interested in large $a$, large $L$ asymptotic of the sum. The case that interests us is the one where we consider $a\gg L$. To tackle the sum we can take help of the Euler-Maclaurin formula and to \emph{leading order in $L$} we can replace the sum by integral, 
		\be \label{lsum2}
		\lsumt \ell^{b}~(a+\ell)^{c}\sim 2^b \int_{0}^{L/2} dx~x^b~(a+2 x)^{c}
		\ee  This integral can be expressed in terms of incomplete Beta function as, 
		\be 
		2^b \int_{0}^{L/2} dx~x^b~(a+2 x)^{c}=
		\frac{1}{2} a^c L^{b+1} \Gamma (b+1) \, _2\tilde{F}_1\left(b+1,-c;b+2;-\frac{L}{a}\right)
		\ee where, 
		\be 
		_2\tilde{F}_1\left(a,b,c;z\right)=\frac{_2{F}_1\left(a,b,c;z\right)}{\G(c)}.
		\ee Now we will consider further $a\gg L\gg 1$. The desired asymptotic in this limit is given by,
		\be 
		2^b \int_{0}^{L/2} dx~x^b~(a+2 x)^{c}\sim a^c\frac{ L^{b+1}}{2 b+2}
		\ee 
		so that we can write finally wrapping up everything, 
		\be 
		\lsumt \ell^b~(a+\ell)^{c}\sim a^c\frac{ L^{b+1}}{2 b+2}.
		\ee 
		As for the $\ell-$sum appearing in eq.\eqref{ambsum} we put the values $a=2\dphi,~b=2 h-2$ and $c=(3-2h)/2$ to obtain, 
		\be \label{ellsum3}
		\lsumt \ell^{2h-2}~(2\dphi+\ell)^{\frac{3-2h}{2}}\sim 
		(2\dphi)^{\frac{3-2 h}{2}}~\frac{ L^{2 h-1}}{4h-2},~~~~2\dphi\gg L\gg 1.
		\ee 
		Similarly for the $\ell-$sum appearing in eq.\eqref{ellnf}, one puts $a=2\dphi,~b=h-1,~c=(3-2h)/2$ to obtain, 
		\be \label{ellsum4}
		\lsumt \ell^{h-1}~(2\dphi+\ell)^{\frac{3-2h}{2}}\sim 
		(2\dphi)^{\frac{3-2 h}{2}}~\frac{ L^{ h}}{2h},~~~~2\dphi\gg L\gg 1.
		\ee 
	\end{itemize}
	
	\section{Considering the primaries only}\label{primary}
	We have seen that, in the flat space limit, the contribution towards the Mellin amplitude of a conformal family corresponding to a primary operator with dimension $\D$ is peaked not at the primary, rather at a descendant as dictated by eq.\eqref{flatpeak}, eq.\eqref{qdq}. Thus the natural expectation is that, if we consider just the contributions of  the primaries then it should have vanishingly small contribution towards the full result in the flat space limit. It is a worthwhile exercise to look into this explicitly. In this section, we take up this job and  and bound the primary contribution towards the Mellin amplitude. 
	
	We start  with the following definition which isolate the contribution of the primaries towards $\ma_M(s_M,t_M)$, 
	\be 
	\ma_M^{(p)}(s_M,t_M):=\sum_{\substack{\t,\ell\\ \ell~\text{even}}}C_{\t,\ell}~\text{Im}[f_{\t,\ell}(s_M)]^{(p)}~\widehat{P}_{\t,\ell}(s_M,t_M)
	\ee  with
	\be \label{impm}
	\text{Im}[f_{\t,\ell}(s_M)]^{(p)}:=\p\mn_{\t,\ell}\frac{\G^2(\t+\ell+\D_\f-h)}{\G(2\t+\ell-h+1)}\frac{\sin^2\p\left(\frac{\D_\f}{2}-s_M\right)}{\sin^2\p\left(\D_\f-\t-\frac{\ell}{2}\right)}~\d(s_M+\D_\f/2-\t).
	\ee 
	With the help of this expression we can define $\bar{\ma}_M^{(p)}(s)$ and $\mathfrak{a}_{n,\dphi/2}^{(p)}.$
	We have
	\begin{align}\label{sigp}
		\begin{split}
			\bar{\ma}_M^{(p)}(s_M)&=\frac{1}{s_M-\frac{\D_\f}{2}}\int_{\D_\f/2}^{s_M}ds'\ma_M^{(p)}(s',t=0),\\
			&=\frac{\p}{s-\frac{\D_\f}{2}}\int_{\D_\f/2}^{s_M}ds'\sum_{\substack{\ell\\ \ell~\text{even}}}\sum_{\t=h-1}^{\infty}\left[C_{\t,\ell}\mn_{\t,\ell}\frac{\G^2(\t+\ell+\D_\f-h)}{\G(2\t+\ell-h+1)}\right.\\
			&\left.\hspace{6 cm}\times\frac{\sin^2\p\left(\frac{\D_\f}{2}-s\right)}{\sin^2\p\left(\D_\f-\t-\frac{\ell}{2}\right)}~\d(s+\D_\f/2-\t)\Mack_{\t,\ell}(s',0)\right]
		\end{split}
	\end{align}
	Next we will do a small trick to handle the above quantity. We introduce an integration over a sum of delta functions in $x$ to rewrite,
	\be\label{taudelta}
	\int_{\D_\f/2}^{s}ds'\sum_{\substack{\ell\\ \ell~\text{even}}}
	\int_{0}^{\infty}dx\sum_{\t=h-1}^{\infty}\d(x-\t)C_{\ell}(x)\mn_{\ell}(x)\frac{\G^2(x+\ell+\D_\f-h)}{\G(2x+\ell-h+1)}\frac{\sin^2\p\left(\frac{\D_\f}{2}-s'\right)}{\sin^2\p\left(\D_\f-x-\frac{\ell}{2}\right)}~\d\left(s'+\frac{\D_\f}{2}-x\right).
	\ee
	where, 
	\be 
	C_\ell(\t)\equiv C_{\t,\ell},~~\mn_\ell(\t)\equiv\mn_{\t,\ell}.
	\ee 
	Now we will change the order of integration and do the $s'$ integral first. Note that the $s'$ dependent delta function will give nonzero contribution when, 
	\be 
	s'=x-\frac{\D_\f}{2}
	\ee 
	But further we have $\D_\f<s'<s_M$ which gives us a condition on $x$, 
	\be \label{res1}
	\D_\f\le x\le s_M+\frac{\D_\f}{2}
	\ee 
	This condition effectively truncates the $\t$ sum above and produces, 
	\be 
	\sum_{\substack{\ell\\ \ell~\text{even}}}\sum_{\t=\D_\f}^{s+\frac{\D_\f}{2}}C_{\t,\ell}~\mn_{\t,\ell}\frac{\G^2(\t+\ell+\D_\f-h)}{\G(2\t+\ell-h+1)}\frac{\sin^2\left[\p\left(\D_\f-\t\right)\right]}{\sin^2\left[\p\left(\D_\f-\t-\frac{\ell}{2}\right)\right]}\Mack_{\t,\ell}\left(\t-\frac{\D_\f}{2},0\right)\,.
	\ee 
	Next we would like to investigate the Mack polynomial $\Mack_{\t,\ell}(\t-\D_\f/2,t)$. We have, 
	\be 
	\Mack_{\t,\ell}\left(\t-\frac{\D_\f}{2},t\right)=\sum_{n=0}^\ell \m_{0,n}^{(\ell)}\left(\frac{\D_\f}{2}-t\right)_n
	\ee 
	Further, 
	\begin{align}
		\begin{split} 
			\m_{0,n}^{(\ell)}&=2^{-\ell}(-1)^n\left(
			\begin{matrix}
				\ell\\
				n
			\end{matrix}\right) (\t+n)_{\ell-n}^2~(2\t+2\ell-1)_{n-\ell}\\
			&=2^{-\ell}\frac{(-\ell)_n}{n!}\frac{\G^2(\t+\ell)}{\G^2(\t+n)}\times (2\t+\ell-1)_n\times\frac{\G(2\t+\ell-1)}{\G(2\t+2\ell-1)},\\
			&=2^{-\ell}\frac{(\t)_\ell^2}{(2\t+\ell-1)_\ell}\times \frac{(-\ell)_n~(2\t+\ell-1)_n}{n!~(\t)_n^2}
		\end{split}
	\end{align}
	Thus we have,
	
	\be \label{macktau}
	\Mack_{\t,\ell}\left(\t-\frac{\D_\f}{2},t\right)=2^{-\ell}\frac{(\t)_\ell^2}{(2\t+\ell-1)_\ell}~_3F_2\left[
	\begin{matrix}
		-\ell,~2\t+\ell-1,~\frac{\D_\f}{2}-t\\
		\t,~\t
	\end{matrix}~\Big|1\right].
	\ee 
	Also we have, 
	\be \label{Atauell}
	A_{\t,\ell}:=\mn_{\t,\ell}\frac{\G^2(\t+\ell+\D_\f-h)}{\G(2\t+\ell-h+1)}=2^{\ell}\frac{(2\t+2\ell-1)\G^2(2\t+2\ell-1)}{\G(2\t+\ell-1)\G^4(\t+\ell)\G^2(\D_\f-\t)}.
	\ee 
	Putting everything together we have, 
	
	\be \label{sigmap}
	\bar{\ma}_M^{(p)}(s_M)=\frac{\p}{s_M-\frac{\D_\f}{2}}\sum_{\substack{\ell\\ \ell~\text{even}}}\sum_{\t=\D_\f}^{s+\frac{\D_\f}{2}}C_{\t,\ell}~\frac{\G(2\t+2\ell)}{\G^2(\t)\G^2(\t+\ell)\G^2(\D_\f-\t)}~_3F_2\left[
	\begin{matrix}
		-\ell,~2\t+\ell-1,~\frac{\D_\f}{2}\\
		\t,~\t
	\end{matrix}~\Big|1\right]
	\ee 
	where we have used the fact of $\ell$ being even to get rid of the ratio of the $\sin$ squares. To bound this, we will have to \enquote{effectively cut} the $\ell$ sum to some finite summation. To determine that \enquote{$\ell$ cutoff} we will exploit the information of the polynomial boundedness of the Mellin amplitude that we have assumed. To do so we will take help of $\mathfrak{a}_{n,\D_\f/2}$ as follows
	\be
	\mathfrak{a}_{n,\dphi/2}^{(p)}=\int_{\frac{\dphi}{2}}^\infty \frac{d\bar{s}}{\bar{s}^{n+1}}\,\ma_M^{(p)}(\bar{s},t=\dphi/2)>\int_{\frac{\dphi}{2}}^{s_M} \frac{d\bar{s}}{\bar{s}^{n+1}}\,\ma_M^{(p)}(\bar{s},t=\dphi/2)	
	>s_M^{-(n+1)}\int_{\frac{\dphi}{2}}^{s_M} d\bar{s}\,\ma_M^{(p)}(\bar{s},t=\dphi/2)
	\ee
	The first  inequality follows from the integrand being  always positive\footnote{the positivity is in the sense of distribution i.e, on integrating against a Schwartz function the sign of the function remains unaltered.} for unitary theories because for unitary theories $C_{\t,\ell}\in\mathbb{R}^{\ge}$. The second inequality follows because $s^{-(n+1)}$ is a monotonically decreasing function of $s$ for $n>0$. Next ploughing through the same steps as before we have, 
	\be  \label{tailbound}
	\mathfrak{a}_{n,\D_\f/2}^{(p)}~s_M^{n+1}>\p\sum_{\substack{\ell\\ \ell~\text{even}}}\sum_{\t=\D_\f}^{s_M+\frac{\D_\f}{2}}C_{\t,\ell}\frac{\G(2\t+2\ell)}{\G^2(\t)\G^2(\t+\ell)\G^2(\D_\f-\t)}>\p\sum_{\substack{\ell=L+2\\ \ell~\text{even}}}^{\infty}\sum_{\t=\D_\f}^{s_M+\frac{\D_\f}{2}}C_{\t,\ell}\frac{\G(2\t+2\ell)}{\G^2(\t)\G^2(\t+\ell)\G^2(\D_\f-\t)}.
	\ee 
	where the last inequality follows on using the positivity of the summand.
	Here $L$ is some $\ell$ value which is presumably large. This  basically defines the \emph{tail} of the series. 
	\subsection{Determining the $\ell$-cutoff}To make use of this above inequality eq.\eqref{tailbound} in order to find out the \enquote{$\ell$ cutoff} as mentioned above we turn our attention to once again to eq.\eqref{sigmap} and write the same as follows, 
	\be 
	\bar{\ma}_M^{(p)}(s)=\frac{\p}{s-\frac{\D_\f}{2}}\sum_{\substack{\ell=0\\\ell~\text{even}}}^{L}\sum_{\t=\D_\f}^{s+\frac{\D_\f}{2}}C_{\t,\ell}\frac{\G(2\t+2\ell)}{\G^2(\t)\G^2(\t+\ell)\G^2(\D_\f-\t)}~_3F_2\left[
	\begin{matrix}
		-\ell,~2\t+\ell-1,~\frac{\D_\f}{2}\\
		\t,~\t
	\end{matrix}~\Big|1\right]+\mathfrak{R}(s)
	\ee 
	with the \enquote{remainder} term being, 
	\be \label{rem}
	\mathfrak{R}(s)=\frac{\p}{s-\frac{\D_\f}{2}}~\sum_{\substack{\ell=L+2\\\ell~\text{even}}}^{\infty}~\sum_{\t=\D_\f}^{s+\frac{\D_\f}{2}}C_{\t,\ell}~\frac{\G(2\t+2\ell)}{\G^2(\t)\G^2(\t+\ell)\G^2(\D_\f-\t)}~_3F_2\left[
	\begin{matrix}
		-\ell,~2\t+\ell-1,~\frac{\D_\f}{2}\\
		\t,~\t\end{matrix}~\Big|1\right]
	\ee 
	Next we will use certain properties of the $_3F_2$ polynomial appearing above. For future reference let us introduce the following defining notation
	\be 
	\widehat{Q}_{\ell}(\t)=~_3F_2\left[
	\begin{matrix}
		-\ell,~2\t+\ell-1,~\frac{\D_\f}{2}\\
		\t,~\t\end{matrix}~\Big|1\right]\,.
	\ee 
	This polynomial has two crucial properties that will come to our use to a great extent. These are the following, 
	\begin{itemize}
		\item[I.] The first useful property that we have is that $\widehat{Q}_\ell(\t)$ is a decreasing function of $\ell$. The most general \enquote{observation}\footnote{This has been checked numerically on Mathematica.} is that this is true \emph{separately} for even spins and odd spins.
		Using this property therefore we can write, 
		\be \label{rest1}
		\mathfrak{R}(s)<\frac{\p}{s-\frac{\D_\f}{2}}~\sum_{\substack{\ell=L+2\\\ell~\text{even}}}^{\infty}~\sum_{\t=\D_\f}^{s+\frac{\D_\f}{2}}C_{\t,\ell}~\frac{\G(2\t+2\ell)}{\G^2(\t)\G^2(\t+\ell)\G^2(\D_\f-\t)}~\widehat{Q}_{L+2}(\t)\,.
		\ee 
		\item[II.] The second property that we will make use of is that generally for large enough $\t$ one has $\widehat{Q}_{\ell}(\t)$ an increasing function of $\t$. Now the important part of this statement is \emph{ large enough $\t$}. For practical reasons this is synonymous with $\t\gg\D_\f$ for our case. The reason for emphasizing this is that in general vary near to $\t=\D_\f$ the polynomial $\widehat{Q}_{\ell}(\t)$  decreases for some time reaching a minimum and then once again starts increasing and maintains the increasing trend with increasing $\t$. Since we are ultimately interested in $s\gg \D_\f/2$ we can safely use this property of $\widehat{Q}_\ell(\t)$ to write, 
		\be\label{rest2} 
		\mathfrak{R}(s)<\frac{\p}{s-\frac{\D_\f}{2}} \widehat{Q}_{L+2}\left(s+\frac{\D_\f}{2}\right)\sum_{\substack{\ell=L+2\\\ell~\text{even}}}^{\infty}~\sum_{\t=\D_\f}^{s+\frac{\D_\f}{2}}C_{\t,\ell}~\frac{\G(2\t+2\ell)}{\G^2(\t)\G^2(\t+\ell)\G^2(\D_\f-\t)}
		\ee 
	\end{itemize} 
	Now using eq.\eqref{tailbound} in the above equation, we obtain, 
	\be 
	\mathfrak{R}(s)\le\mathfrak{a}_{n,\D_\f/2}^{(p)}~\frac{s^{n+1}}{s-\frac{\D_\f}{2}}\widehat{Q}_{L+2}\left(s+\frac{\D_\f}{2}\right)\,.
	\ee 
	Next we will analyze $\widehat{Q}_{L+2}\left(s+\D_\f/2 \right)$. We will analyze this in the limit of large $L$ first. For this purpose we will look into large $L$ asymptotic of the $\widehat{Q}_{L+2}(s+\D_\f/2)$. This asymptotic was worked out in \cite{Dey:2017fab} and is given by the equation (A.23) therein. Using the formula we have, 
	\be 
	\widehat{Q}_{L+2}(s+\D_\f/2)\sim (s)_{\frac{\D_\f}{2}}^2 \left[\left(L+2+s+\frac{\D_\f}{2}\right)\left(L+1+s+\frac{\D_\f}{2}\right)\right]^{-\frac{\D_\f}{2}}
	\ee 
	Thus can write asymptotically, 
	\be \label{rembn}
	\mathfrak{R}(s)\le ~\mathfrak{a}_{n,\D_\f/2}^{(p)}~s^{n+\D_\f}(L+s)^{-\D_\f} \,.
	\ee
	We can use this inequality to find the optimal value of $L$. The idea is that the remainder term is exponentially small. Explicitly, first we cast the RHS of the inequality above in the following form, 
	\be \label{exponen}
	e^{\ln\mathfrak{a}_{n,\D_\f/2}^{(p)}+(n+\D_\f)\ln s-\D_\f\ln(L+s)}
	\ee  
	which leads to
	\be \label{Lsol}
	L=s\left[(s^n\mathfrak{a}_{n,\D_\f/2}^{(p)})^{\frac{1}{\D_\f}}-1\right]
	\ee  
	We note that if $\D_\f\gg1$ then we have essentially the leading asymptotic for $L$, 
	\be\label{dphilarge} 
	\boxed{	L\approx \frac{n}{\D_\f}s\ln s}\,.
	\ee 
	Interestingly, this $s\ln s$ behavior was also found in \cite{Greenberg:1961zz} giving rise to the so called Greenberg-Low bound, which is weaker than the Froissart bound.

	\subsection{Summing over twists}
	With this, next we move on to bounding $\bar{\ma}_M^{(p)}(s)$. Now if we assume that $L$ is such that in the large $s$ limit the remainder term $\mathfrak{R}(s)$ is vanishingly small then we can effectively cut the $\ell$ sum at $\ell=L$. Thus we have, 
	\be
	\bar{\ma}_M^{(p)}(s)=\frac{\p}{s-\frac{\D_\f}{2}}\sum_{\substack{\ell=0\\\ell~\text{even}}}^{L}\sum_{\t=\D_\f}^{s+\frac{\D_\f}{2}}C_{\t,\ell}\frac{\G(2\t+2\ell)}{\G^2(\t)\G^2(\t+\ell)\G^2(\D_\f-\t)}~\widehat{Q}_{\ell}(\t)
	\ee 
	Next using the fact $\widehat{Q}_{\ell}(\t)\le1$ we can write, 
	\be \label{sb1}
	\bar{\ma}_M^{(p)}(s)\le \frac{2\p}{2s-\D_\f}\sum_{\substack{\ell=0\\\ell~\text{even}}}^{L}\sum_{\t=\D_\f}^{s+\frac{\D_\f}{2}}C_{\t,\ell}\frac{\G(2\t+2\ell)}{\G^2(\t)\G^2(\t+\ell)\G^2(\D_\f-\t)}
	\ee 
	We follow the same strategy as the one in the main text. What we will do is to put for the conformal block coefficient its MFT value
	\begin{align}\label{CMFT}
		\begin{split}
			C_{\t,\ell}^{MFT}=&\frac{
				2 
				\Gamma (\ell+h) 
				\Gamma^2 (\ell+\tau ) 
				\Gamma (\ell+2 \tau -1) 
				\Gamma^2 (-h+\tau +1) 
			}
			{
				\Gamma (\ell+1) 
				\Gamma (2 \ell+2 \tau -1) 
				\Gamma (-2 h+2 \tau +1) 
				\Gamma (\ell-h+2 \tau ) 
			}\\
			&\hspace{5 cm}\times 
			\frac 
			{
				\Gamma \left(-2 h+\tau +\Delta _{\phi }+1\right) 
				\Gamma \left(\ell-h+\tau +\Delta _{\phi }\right)
			}
			{
				\Gamma^2 \left(\Delta _{\phi }\right)
				\Gamma \left(\tau -\Delta _{\phi }+1\right) 
				\Gamma^2 \left(-h+\Delta _{\phi }+1\right) 
				\Gamma \left(\ell+h+\tau -\Delta _{\phi }\right)
			}
		\end{split}
	\end{align} 
	and do the analysis. In the limit $\t\gg \ell$ while also considering $\t\gg 1$ the $\t$ summand asymptotes to
	\be 
	C_{\t,\ell}^{MFT}\frac{\G(2\t+2\ell)}{\G^2(\t)\G^2(\t+\ell)\G^2(\dphi-\t)}\sim \frac{2^{3 h-2 \tau +1} \Gamma (\ell+h)  \tau ^{2 \Delta _{\phi }-3 h+\frac{5}{2}}}{\pi ^{3/2} \Gamma (\ell+1) \Gamma \left(\Delta _{\phi }\right){}^2 \Gamma^2 \left(-h+\Delta _{\phi }+1\right)}\sin ^2\left[\pi  \left(\Delta _{\phi }-\tau \right)\right].
	\ee Now as before we will replace the sum over twist by an integral so that basically we are left with,
	\be 
	\int_{\dphi}^{s+\frac{\dphi}{2}}d\t ~e^{-(\ln 4)\t}\t^{2\dphi-3h+\frac{5}{2}}\sin^2\left[\p(\t-\dphi)\right] ~\le \int_{\dphi}^{s+\frac{\dphi}{2}}d\t ~e^{-(\ln 4)\t}\t^{2\dphi-3h+\frac{5}{2}}
	\ee where we have used $\sin^2\left[\p(\t-\dphi)\right]\le 1$ in the last step. Now considering $s\gg \dphi$ we introduce the rescaled variable $\hat{\t}$ defined by
	\be 
	\hat{\t}:=\frac{\t}{s}.
	\ee In terms of this variable the integral above translates into, 
	\be 
	s^{\frac{7}{2}+2\dphi-3h}\int_{0}^{1} d\hat{\t}~ e^{-s\ln4\hat{\t}}~\hat{\t}^{\frac{5}{2}+2\dphi-3h}.
	\ee Now  the large $s$ asymptotic of the integral is, 
	\be 
	s^{\frac{7}{2}+2\dphi-3h}\int_{0}^{1} d\hat{\t}~ e^{-s\ln4\hat{t}}~\hat{\t}^{\frac{5}{2}+2\dphi-3h}\sim 
	(\log4)^{-2 \Delta _{\phi }+3 h-\frac{7}{2}} \Gamma \left(-3 h+2 \Delta _{\phi }+\frac{7}{2}\right)-\frac{4^{-s} s^{2 \Delta _{\phi }-3 h+\frac{5}{2}}}{\log (4)}.
	\ee Now clearly the first term dominates over the second above in the limit $s\gg \dphi\gg 1$ so that we can finally use
	\be \label{sb3}
	\sum_{\t=\dphi}^{s+\frac{\dphi}{2}} C_{\t,\ell}\frac{\G(2\t+2\ell)}{\G^2(\t)\G^2(\t+\ell)\G^2(\dphi-\t)}
	\sim 
	(\log4)^{-2 \Delta _{\phi }+3 h-\frac{7}{2}} \Gamma \left(-3 h+2 \Delta _{\phi }+\frac{7}{2}\right)
	\ee 
	\subsection{Finally the bound}
	Putting this crucial piece of information into eq.\eqref{sb1}  we obtain, 
	\be 
	\bar{\ma}_M^{(p)}\le \frac{\p^{-1}}{2s-\D_\f}(\ln4) ^{3 h-2 \Delta _\phi -\frac{7}{2}}~ \Gamma \left(2 \Delta _\phi +\frac{7}{2}-3 h\right)\frac{2^{2h-3}\sqrt{\p}}{\G^2(\D_\f)\G^2(1-h+\D_\f)}
	\sum_{\substack{\ell=0\\\ell~\text{even}}}^{L}(\ell+1)_{h-1}
	\ee  Now we do the sum, 
	\be 
	\sum_{\substack{\ell=0\\\ell~\text{even}}}^{L}(\ell+1)_{h-1}=\frac{(L+2) \Gamma \left(\frac{2 h+L+2}{2}\right)}{2 h \Gamma \left(\frac{L+4}{2}\right)}
	\ee 
	Now since in general $L$ is large hence we can consider the large $L$ asymptotic of the above sum and thus we can write to the leading order in the large $L$ asymptotic, 
	\be 
	\sum_{\substack{\ell=0\\\ell~\text{even}}}^{L}(\ell+1)_{h-1}\sim \frac{2^{-h} L^h}{h}
	\ee  Note that we have made extensive use of  the assumption $L\ll s$ in the previous section  to reach upto this point. As explained before this is possible when $\dphi\gg 1$. Thus we will now put this constraint into its place.  Thus using  eq.\eqref{dphilarge} and considering the limit $s\gg\D_\f/2$ one has the following asymptotic bound on $\bar{\ma}_M^{(p)}$, 
	\be
		\bar{\ma}_M^{(p)}\le \ma_0~s^{h-1}\ln^hs,
\ee
	with
	\be 
	\ma_0=(\ln4) ^{-2 \Delta _\phi }~ \Gamma \left(-3 h+2 \Delta _{\phi }+\frac{7}{2}\right)\frac{2^{h-4}\p^{-1/2}h^{-1}}{\G^2(\D_\f)\G^2(1-h+\D_\f)}\left(\frac{n}{\dphi}\right)^{h}.
	\ee 
	
	Now at this point we would like to comment on the main purpose of this exercise. To do so we compare $\ma_0$ above with $\mb_1,\mb_2,\mb_3$ from eq.\eqref{CoeffB1}, eq.\eqref{CoeffB2}, eq.\eqref{CoeffB3} respectively. In each case we observe that $\ma_0$ is exponentially suppressed in the limit $\dphi\to\infty$ i.e., the flat space limit under consideration. Thus, this matches with our expectation as described at the beginning of this appendix.
	
\end{subappendices}

\bibliographystyle{utphys}


	\chapter{Conclusion}\label{conclusion}
\section*{Summary of results}
In this thesis, we have explored various mathematical properties of the S-matrix that follow as consequences of physical principles of quantum mechanics and relativistic dynamics. Since S-matrix can be used to define an abstract space of theory, our exploration is also understood as an exploration of the theory space. We took two lines of endeavours towards that end.

\paragraph{} We consider amplitudes $\mt(s,t)$ for $2-2$ elastic scattering of identical particles. Scattering amplitudes are known to satisfy various analyticity properties, which follow from various considerations like unitarity, causality, Poincare invariance etc. Our analysis offers new insights into these properties and unearths new and exciting connections with the mathematical field of geometric function theory. In particular, the scattering amplitudes are found to have structures related to those of univalent and typically real functions, two kinds which fare prominently in the study of geometric function theory. A novel dispersive representation of $\mt(s,t)$, called the crossing-symmetric dispersion relation, first derived in \cite{AK} and recently revived in \cite{ASAZ},  helps us find these connections. The key findings are that the dispersion kernel, which is a crossing-symmetric function, when expressed in terms of complex variables $\tilde{z},\, a$ each of which is crossing-symmetric function of the Mandelstam invariants $s,\, t,\, u$, turns out to be univalent and typically real in the unit disc $|\tilde{z}|<1$ for a certain range of real values $a$. Unitarity, used in the form of positivity of the imaginary parts of the partial wave amplitude, further implies that $\mt(s,t)$ itself is a typically real function in the unit disc for the same range of real values of $a$. 

The Taylor coefficients of the Univalent and typically real functions are bounded. We applied these bounds to put bounds on Taylor coefficients of $\mt$ when expanded about $\tilde{z}=0$. But such an expansion is a low energy expansion and can be considered as effective field theory amplitudes. Thus the bounds obtained translate to $2-$sided bounds on ratios of Wilson coefficients which parametrize an effective field theory. We would like to emphasize that our analysis gives a robust mathematical foothold to these $2-$sided bounds, which have been obtained before primarily by numerical means. 

For specific cases, we apply our formalism to pion scattering, photon scattering and graviton scattering. For photon and graviton scattering, we worked with Jacob-Wick helicity amplitudes. Helicity amplitudes are not crossing symmetric by themselves and, therefore, do not admit the crossing symmetric dispersive representations automatically. We constructed fully crossing symmetric amplitudes out of the helicity amplitudes using the representation theory of $S_3$, the group implementing crossing symmetry. This, however, affects the positivity properties that follow from the unitarity. Nevertheless, we applied our GFT tools to the appropriate crossing-symmetric combinations. We further explored the consequences of the locality. Locality demands the absence of negative powers of Mandelstam invariants. However, the crossing symmetric dispersion relation, by itself, does not forbid such terms, and therefore, locality has to be imposed as an external requirement. By analyzing the constraints that result from the requirement of locality, the \emph{locality constraints}, we provided a proof of the principle of \emph{weak low spin dominance} (wLSD) \cite{SashaBern}, which tells that contributions from partial waves with $\ell>\ell_0$ are strongly suppressed. For photon scattering, we find that $\ell_0=3$ and for gravitons, $\ell_0=2$. We also find no such low spin dominance in the case of the scattering of massive particles.  

\paragraph{} While the previous exploration is centred on working with scattering amplitudes directly, in the second part of the thesis, we focused on holographic S-matrix. A holographic S-matrix is one which can be obtained from the correlation functions of conformal field theories (CFT)  via $AdS/CFT$ holographic duality in what is called the flat space limit. This limit correspond to taking $R_{AdS}/\ell_P\gg1$, where $R_{AdS}$ is the $AdS$ radius and $\ell_P$ is the Planck length. We work with CFT Mellin amplitudes to analyze this limit. Our main result is a Froissart-Martin bound for the holographic S-matrix that we derive starting with CFT Mellin amplitude in the flat space limit. We work with Mellin amplitude for conformal correlator of identical scalar primaries with dimension $\D_\f$. The flat space limit we consider involves taking $\D_\f$ to be parametrically large \cite{Sboot1}. In this limit, the Mellin amplitude is mapped to flat space scattering amplitude of identical massive particles. We obtained Froissart-Martin bound in both forward and non-forward limits. One of the important observations we make is that beyond $6$ spacetime dimensions, the holographic Froissart-Martin bound is weaker than the usual Froissart-Martin bound. 

\paragraph{} Now that we have summarized the work in this thesis, below we outline multiple paths for future explorations.

\section*{More mathematical explorations}
The novel connections with the mathematical field of geometric function theory provided robust mathematical foundations to various $2-$sided bounds on the Wilson coefficients for effective field theories. However, this is only the tip of the iceberg. This connection crucially depended on a particular \emph{real} domain of the parameter $a$ of the scattering amplitude, which followed from the unitarity in the form of positivity of the imaginary part of the partial wave amplitudes, $\{\text{Im.}\,[f_\ell]\}$. Two imminent questions follow. The first one is the extension of the analysis to complex domains of $a$, which is natural in the context that there exist analyticity domains of the scattering amplitude in $(s,t)\in \mathbb{C}^2$. When translated to our variables $(z, a)$, these domains correspond to complex domains of $a$. It is worthwhile to explore whether the same mathematical structures survive in the complex domain of $a$. One natural hurdle in this direction is how to employ unitarity because the straightforward use of the positivity argument that we presented breaks down for complex $a$ in general. Therefore, it warrants an investigation into whether going to complex domains of $a$ unearths new mathematical structures that give back the already established properties when restricted to real domains of $a$. 

Secondly, we have used unitarity only in the form of positivity of the partial wave amplitudes for physical $s$, a linear condition, 
$$ \text{Im.}\,[f_\ell](s) \ge0.$$ However there exists a stronger non-linear condition that follows from  unitarity 
$$ 
2 \, \text{Im.}\,[f_\ell](s) \ge |f_\ell(s)|^2.
$$  A natural line of investigation considers finding out what kind of mathematical properties of scattering amplitude follow when this non-linear unitarity condition is considered. 

Finally, a rather ambitious question would be whether we can use these mathematical properties to gain momentum into a possible mathematically rigorous proof of Mandelstam analyticity of the scattering amplitudes, which is still an open problem. 

\section*{Connections with positive geometry}
Positive geometry is the study of convex hulls of geometric spaces. The principal objects of study are polytopes, the higher dimensional generalizations of the planar polygons, and generalizations like Grassmannians. In the past decade, fascinating connections between positive geometries and scattering amplitudes have been unearthed in various forms. The first such connection was obtained in the form of the so-called Amplituhedron \cite{Amplituhedron} in the context of amplitudes for $\mn=4$ SYM  theory. Subsequently, more such connections were obtained in varied contexts, starting from conformal field theories \cite{CFThedron} to cosmological correlation functions \cite{cosmopoly}. Recently, such a positive geometry construct was obtained for low energy amplitudes in effective field theories. This geometric structure is called the \emph{EFT-hedrons} \cite{EFThedron}. EFT-hedron is a positive geometric structure in the space of EFTs, which is constructed employing unitarity in the form of positivity, analyticity properties of the scattering amplitude and the crossing symmetry. This space is parametrized by the Wilson coefficients. The EFTs consistent with basic requirements of unitarity and analytic structure of the S-matrix lie inside the EFT-hedron. More concretely, this translates to \emph{$2-$sided bounds} on Wilson coefficients. Thus EFT-hedron gives a geometric understanding of $2-$sided bounds on Wilson coefficients. Since our analysis employing the tools from geometric function theory also yields $2-$sided bounds on Wilson coefficients, an imminent question is what is a connection between our analysis and the EFT-hedron. In particular, can we derive EFT-hedron using the geometric function theoretic analysis?

\section*{Spinning amplitudes: Further explorations}
Our analysis of spinning S-matrices offer some natural future goals to chase for. Let us first delineate a few technical problems worth exploring. Our analysis is anchored on constructing fully-crossing symmetric combinations out of helicity amplitudes and then applying the crossing-symmetric dispersion formalism to these fully crossing-symmetric functions. We had to resort to this way because helicity amplitudes are \emph{not} invariant under crossing-symmetry. Instead, crossing symmetry acts as a non-trivial automorphism of the set of helicity amplitudes. One can write the generic crossing equation as follows: 
$$ 
\mathbf{T}(s,t,u)=\mathbf{C}_{\p}\cdot \mathbf{T}(\p(s),\p(t),\p(u))
$$ where $\mathbf{T}(s,t,u)$ is a column vector constructed out of the (regularized)helicity amplitudes, $\textbf{C}_{\p}$ is the crossing matrix corresponding to the permutation $\p$ of the Mandelstam invariants. 
A natural goal would be to generalize the crossing-symmetric dispersion relation to this amplitude vector $\mathbf{T}$ directly to obtain a  \emph{matrix dispersion relation} of the form 
$$ 
\mathbf{T}(s,t,u)=\int ds' \, \mathbf{H}(s'\,;\,s,t,u) \cdot \mathbf{A}(s'),
$$ where  $\mathbf{A}$ is a column vector constructed of the corresponding absorptive parts and $\mathbf{H}$ is now a \emph{matrix-valued dispersion kernel} which respects the crossing-properties of the amplitude vector.   This will pave way for mathematical explorations along the lines presented in this thesis.     

\paragraph{}Another technical point requiring attention is the possible application of crossing-symmetric dispersion relation to other formulations of spinning amplitudes. One such is transversity amplitudes \cite{Kotanski}. In
transversity formalism, the spin is quantized normal to the plane of scattering. In this formalism,
the crossing equations are diagonalized. This, however, comes at the price that the unitarity is
now straightforward. However, one can still work out the consequences of the unitarity by exploiting the relation between
the transversity amplitude and the helicity amplitude, the former being a linear combination
of the latter. The unitarity consideration, along with fixed transfer dispersion relations, was
employed to obtain positivity bounds for transversity amplitudes for EFTs in \cite{deRham:2017zjm}. However,
it is unclear how to translate these positivity bounds to constraints on EFT parameters like
Wilson coefficients. Therefore, it is worth investigating how these positivity bounds can be used
to constrain the EFT parametric space. Further, it will be interesting to consider applying the
crossing-symmetric dispersive techniques to these transversity amplitudes.

\section*{EFTs beyond tree level}
An important assumption of our work is that we are only analyzing low energy effective field
theories at the tree level. This is justifiable for EFTs having weakly coupled UV completion. Even
in this situation, it will be interesting to know how these bounds get modified, including massless
loops \cite{Bellazzini:2020cot}. The low energy expansion of effective amplitude 
about $s, t = 0$ is not ideal for tackling the loops. However, in crossing symmetric dispersion relation it is not natural
to expand in this forward limit. So our set up might be better suited to address this issue, and
this exciting possibility necessitates future exploration.

\section*{Regge boundeness of holographic S-Matrix}

One of the key findings of our derivation of the holographic Froissart-Martin bound is that in the Regge limit of $|s|\to\infty$ with $t$ fixed, for spacetime dimensions greater than $6$, the holographic $2-2$ scattering amplitude of massive particles is polynomially bounded with an order that increases with spacetime dimensions. However, from axiomatic quantum field theory, it is known that the $2-2$ scattering amplitude of massive particles is quadratically bounded independent of spacetime dimensions. The reason why the holographic S-matrix shows this weaker behaviour is not clear to us. A careful analysis is warranted to understand this issue. Also, the physical significance of dimension dependent polynomial boundedness needs to be analysis 

Further, we note that the Regge boundedness of the holographic scattering amplitude is tied to the Regge bound on the corresponding CFT Mellin amplitude. Now it was shown in \cite{sashajoao} that Mellin amplitude is also quadratically Regge bounded independent of dimension. It is then natural to ask why we obtain dimension dependent polynomial boundedness in the flat space limit.

\section*{Flat space limit for spinning amplitude}
So far all the flat space limit has been considered for scalar Mellin amplitudes. A natural extension should be in the realm of spinning amplitudes. One of the subtlety to be tackled in establishing flat space limit of spinning Mellin amplitudes is the mapping between tensor structures of the Mellin amplitude and the corresponding flat space scattering amplitude \footnote{Recently an attempt in this direction was initiated in \cite{Liflat} where the recently introduced \cite{CaronHelicity} helicity formalism CFTs  was used.}. 

 Moreover, there are multiple formalism to study flat space scattering amplitudes  of spinning particles like helicity, transeversality, invariant amplitudes etc. One question then arises naturally how these different frameworks are tied to  corresponding spinning Mellin amplitude. Finally,  obtaining flat space limit of spinning Mellin amplitudes will pave way for analyzing analyticity and unitarity structures of the holographic spinning S-matrix in terms of corresponding CFT structures, in the spirit of \cite{FitzKaplan1, FitzKaplan2}.  In particular, such an analysis will have potential to offer insights into flat space graviton scattering from stress tensor amplitude in CFT.

\bibliographystyle{utphys}

\end{document}